\newtheorem{theorem}{Theorem}[section]
\newtheorem*{conjecture*}{Conjecture}
\newtheorem*{theorem*}{Theorem}
\newtheorem{proposition}{Proposition}[subsection]
\newtheorem{corollary}[proposition]{Corollary}
\newtheorem*{corollary*}{Corollary}
\newtheorem{lemma}[proposition]{Lemma}
\newtheorem{remark}[proposition]{Remark}
\newtheorem{assumption}[proposition]{Assumption}
\numberwithin{equation}{section}
\newcommand\lesslessk{\overset{\scalebox{.5}{\mbox{\small $\ll$}} } {\mbox{\tiny $k$}} }
\newcommand{\musicchi}{
 \accentset{\scalebox{.45}{\mbox{\tiny \text{\Acht}}}}\chi}
\newcommand{\musicrho}{
 \accentset{\scalebox{.45}{\mbox{\tiny \text{\Acht}}}}\rho}
\newcommand{\Dk}{\mathfrak{D}^{\mathbf{k}}}
\newcommand{\Jpk}{\accentset{\scalebox{.6}{\mbox{\tiny $(p)$}}}{\underaccent{\scalebox{.8}{\mbox{\tiny $k$}}}{J}}}
\newcommand{\Jp}{\accentset{\scalebox{.6}{\mbox{\tiny $(p)$}}}{J}}
\newcommand{\Jzero}{\accentset{\scalebox{.6}{\mbox{\tiny $(0)$}}}{J}}
\newcommand{\Kzero}{\accentset{\scalebox{.6}{\mbox{\tiny $(0)$}}}{K}}
\newcommand{\Kpk}{\accentset{\scalebox{.6}{\mbox{\tiny $(p)$}}}{\underaccent{\scalebox{.8}{\mbox{\tiny $k$}}}{K}}}
\newcommand{\Kp}{\accentset{\scalebox{.6}{\mbox{\tiny $(p)$}}}{K}}
\newcommand{\Hap}{\accentset{\scalebox{.6}{\mbox{\tiny $(p)$}}}{H}}
\newcommand{\Fzerok}{\accentset{\scalebox{.6}{\mbox{\tiny $(0)$}}}{\underaccent{\scalebox{.8}{\mbox{\tiny $k$}}}{\mathcal{F}}}}
\newcommand{\Fpk}{\accentset{\scalebox{.6}{\mbox{\tiny $(p)$}}}{\underaccent{\scalebox{.8}{\mbox{\tiny $k$}}}{\mathcal{F}}}}
\newcommand{\Fpprimek}{\accentset{\scalebox{.6}{\mbox{\tiny $(p')$}}}{\underaccent{\scalebox{.8}{\mbox{\tiny $k$}}}{\mathcal{F}}}}
\newcommand{\Epk}{\accentset{\scalebox{.6}{\mbox{\tiny $(p)$}}}{\underaccent{\scalebox{.8}{\mbox{\tiny $k$}}}{\mathcal{E}}}}
\newcommand{\Ezerokprime}{\accentset{\scalebox{.6}{\mbox{\tiny $(0)$}}}{\underaccent{\scalebox{.8}{\mbox{\tiny $k'$}}}{\mathcal{E}}}}
\newcommand{\Xpk}{\accentset{\scalebox{.6}{\mbox{\tiny $(p)$}}}{\underaccent{\scalebox{.8}{\mbox{\tiny $k$}}}{\mathcal{X}}}}
\newcommand{\Epprimekprime}{\accentset{\scalebox{.6}{\mbox{\tiny $(p')$}}}{\underaccent{\scalebox{.8}{\mbox{\tiny {$k'$}}}}{\mathcal{E}}}}
\newcommand{\Epprimek}{\accentset{\scalebox{.6}{\mbox{\tiny $(p')$}}}{\underaccent{\scalebox{.8}{\mbox{\tiny {$k$}}}}{\mathcal{E}}}}
\newcommand{\Xpprimekprime}{\accentset{\scalebox{.6}{\mbox{\tiny $(p')$}}}{\underaccent{\scalebox{.8}{\mbox{\tiny {$k'$}}}}{\mathcal{X}}}}
\newcommand{\Efancypk}{\accentset{\scalebox{.6}{\mbox{\tiny $(p)$}}}{\underaccent{\scalebox{.8}{\mbox{\tiny $k$}}}{\mathfrak{E}}}}
\newcommand{\Etwominusdelk}{\accentset{\scalebox{.6}{\mbox{\tiny $(2-\delta)$}}}{\underaccent{\scalebox{.8}{\mbox{\tiny $k$}}}{\mathcal{E}}}}
\newcommand{\Xtwominusdelk}{\accentset{\scalebox{.6}{\mbox{\tiny $(2-\delta)$}}}{\underaccent{\scalebox{.8}{\mbox{\tiny $k$}}}{\mathcal{X}}}}
\newcommand{\Xtwominusdelkminusone}{\accentset{\scalebox{.6}{\mbox{\tiny $(2-\delta)$}}}{\underaccent{\scalebox{.8}{\mbox{\tiny $k\mkern-6mu -\mkern-6mu 1$}}}{\mathcal{X}}}}
\newcommand{\Xtwominusdelkminustwo}{\accentset{\scalebox{.6}{\mbox{\tiny $(2-\delta)$}}}{\underaccent{\scalebox{.8}{\mbox{\tiny $k\mkern-6mu -\mkern-6mu 2$}}}{\mathcal{X}}}}
\newcommand{\Xtwominusdelkminusthree}{\accentset{\scalebox{.6}{\mbox{\tiny $(2-\delta)$}}}{\underaccent{\scalebox{.8}{\mbox{\tiny $k\mkern-6mu -\mkern-6mu 3$}}}{\mathcal{X}}}}
\newcommand{\Edeltaminusonek}{\accentset{\scalebox{.6}{\mbox{\tiny $(\delta-1)$}}}{\underaccent{\scalebox{.8}{\mbox{\tiny $k$}}}{\mathcal{E}}}}
\newcommand{\Eminusone}{\underaccent{\scalebox{.8}{\mbox{\tiny $ -\mkern-3mu 1$}}}{\mathcal{E}}}
\newcommand{\Epminusonek}{\accentset{\scalebox{.6}{\mbox{\tiny $(p\mkern-6mu-\mkern-6mu1)$}}}{\underaccent{\scalebox{.8}{\mbox{\tiny $k$}}}{\mathcal{E}}}}
\newcommand{\Xtwominusdeltalesslessk}{\accentset{\scalebox{.6}{\mbox{\tiny $(2\mkern-6mu-\mkern-6mu\delta)$}}}{\underaccent{\scalebox{.8}{\mbox{\tiny $\ll \mkern-6mu k$}}}{\mathcal{X}}}}
\newcommand{\Eoneminusdelk}{\accentset{\scalebox{.6}{\mbox{\tiny $(1-\delta)$}}}{\underaccent{\scalebox{.8}{\mbox{\tiny $k$}}}{\mathcal{E}}}}
\newcommand{\Eoneminusdelkminusfour}{\accentset{\scalebox{.6}{\mbox{\tiny $(1-\delta)$}}}{\underaccent{\scalebox{.8}{\mbox{\tiny $k\mkern-6mu -\mkern-6mu 4$}}}{\mathcal{E}}}}
\newcommand{\Eonek}{\accentset{\scalebox{.6}{\mbox{\tiny $(1)$}}}{\underaccent{\scalebox{.8}{\mbox{\tiny $k$}}}{\mathcal{E}}}}
\newcommand{\Eonekprimeprime}{\accentset{\scalebox{.6}{\mbox{\tiny $(1)$}}}{\underaccent{\scalebox{.8}{\mbox{\tiny $k''$}}}{\mathcal{E}}}}
\newcommand{\Eonelesslessk}{\accentset{\scalebox{.6}{\mbox{\tiny $(1)$}}}{\underaccent{\scalebox{.8}{\mbox{\tiny $\ll \mkern-6mu k$}}}{\mathcal{E}}}}
\newcommand{\Xonek}{\accentset{\scalebox{.6}{\mbox{\tiny $(1)$}}}{\underaccent{\scalebox{.8}{\mbox{\tiny $k$}}}{\mathcal{X}}}}
\newcommand{\Epminusonekminusone}{\accentset{\scalebox{.6}{\mbox{\tiny $(p\mkern-6mu-\mkern-6mu1)$}}}{\underaccent{\scalebox{.8}{\mbox{\tiny $k\mkern-6mu -\mkern-6mu 1$}}}{\mathcal{E}}}}
\newcommand{\Eonekminusone}{\accentset{\scalebox{.6}{\mbox{\tiny $(1)$}}}{\underaccent{\scalebox{.8}{\mbox{\tiny $k\mkern-6mu -\mkern-6mu 1$}}}{\mathcal{E}}}}
\newcommand{\Edeltaminusonekminusone}{\accentset{\scalebox{.6}{\mbox{\tiny $(\delta-1)$}}}{\underaccent{\scalebox{.8}{\mbox{\tiny $k\mkern-6mu -\mkern-6mu 1$}}}{\mathcal{E}}}}
\newcommand{\Eonekminusfour}{\accentset{\scalebox{.6}{\mbox{\tiny $(1)$}}}{\underaccent{\scalebox{.8}{\mbox{\tiny $k\mkern-6mu -\mkern-6mu 4$}}}{\mathcal{E}}}}
\newcommand{\Xk}{\underaccent{\scalebox{.8}{\mbox{\tiny $k$}}}{\mathcal{X}}}
\newcommand{\Xkminusone}{\underaccent{\scalebox{.8}{\mbox{\tiny $k\mkern-6mu -\mkern-6mu 1$}}}{\mathcal{X}}}
\newcommand{\Xonekminusone}{\accentset{\scalebox{.6}{\mbox{\tiny $(1)$}}}{\underaccent{\scalebox{.8}{\mbox{\tiny $k\mkern-6mu -\mkern-6mu 1$}}}{\mathcal{X}}}}
\newcommand{\Xonekminusfour}{\accentset{\scalebox{.6}{\mbox{\tiny $(1)$}}}{\underaccent{\scalebox{.8}{\mbox{\tiny $k\mkern-6mu -\mkern-6mu 4$}}}{\mathcal{X}}}}
\newcommand{\Xonekminusfive}{\accentset{\scalebox{.6}{\mbox{\tiny $(1)$}}}{\underaccent{\scalebox{.8}{\mbox{\tiny $k\mkern-6mu -\mkern-6mu 5$}}}{\mathcal{X}}}}
\newcommand{\Epminusonekminustwo}{\accentset{\scalebox{.6}{\mbox{\tiny $(p\mkern-6mu-\mkern-6mu1)$}}}{\underaccent{\scalebox{.8}{\mbox{\tiny $k\mkern-6mu -\mkern-6mu 2$}}}{\mathcal{E}}}}
\newcommand{\Epkminusone}{\accentset{\scalebox{.6}{\mbox{\tiny $(p)$}}}{\underaccent{\scalebox{.8}{\mbox{\tiny $k\mkern-6mu -\mkern-6mu 1$}}}{\mathcal{E}}}}
\newcommand{\Xpkminusone}{\accentset{\scalebox{.6}{\mbox{\tiny $(p)$}}}{\underaccent{\scalebox{.8}{\mbox{\tiny $k\mkern-6mu -\mkern-6mu 1$}}}{\mathcal{X}}}}
\newcommand{\Xpkminustwo}{\accentset{\scalebox{.6}{\mbox{\tiny $(p)$}}}{\underaccent{\scalebox{.8}{\mbox{\tiny $k\mkern-6mu -\mkern-6mu 2$}}}{\mathcal{X}}}}
\newcommand{\Etwominusdelkminusone}{\accentset{\scalebox{.6}{\mbox{\tiny $(2-\delta)$}}}{\underaccent{\scalebox{.8}{\mbox{\tiny $k\mkern-6mu -\mkern-6mu 1$}}}{\mathcal{E}}}}
\newcommand{\Etwominusdelkminustwo}{\accentset{\scalebox{.6}{\mbox{\tiny $(2-\delta)$}}}{\underaccent{\scalebox{.8}{\mbox{\tiny $k\mkern-6mu -\mkern-6mu 2$}}}{\mathcal{E}}}}
\newcommand{\Etwominusdelkminusfour}{\accentset{\scalebox{.6}{\mbox{\tiny $(2-\delta)$}}}{\underaccent{\scalebox{.8}{\mbox{\tiny $k\mkern-6mu -\mkern-6mu 4$}}}{\mathcal{E}}}}
\newcommand{\Epkplusone}{\accentset{\scalebox{.6}{\mbox{\tiny $(p)$}}}{\underaccent{\scalebox{.8}{\mbox{\tiny $k\mkern-6mu +\mkern-6mu 1$}}}{\mathcal{E}}}}
\newcommand{\Ezerok}{\accentset{\scalebox{.6}{\mbox{\tiny $(0)$}}}{\underaccent{\scalebox{.8}{\mbox{\tiny $k$}}}{\mathcal{E}}}}
\newcommand{\Xzerok}{\accentset{\scalebox{.6}{\mbox{\tiny $(0)$}}}{\underaccent{\scalebox{.8}{\mbox{\tiny $k$}}}{\mathcal{X}}}}
\newcommand{\Xzeroplusk}{\accentset{\scalebox{.6}{\mbox{\tiny $(0+)$}}}{\underaccent{\scalebox{.8}{\mbox{\tiny $k$}}}{\mathcal{X}}}}
\newcommand{\Ezerokminusone}{\accentset{\scalebox{.6}{\mbox{\tiny $(0)$}}}
{\underaccent{\scalebox{.8}{\mbox{\tiny $k\mkern-6mu -\mkern-6mu 1$}}}{\mathcal{E}}}}
\newcommand{\Ezerokminustwo}{\accentset{\scalebox{.6}{\mbox{\tiny $(0)$}}}
{\underaccent{\scalebox{.8}{\mbox{\tiny $k\mkern-6mu -\mkern-6mu 2$}}}{\mathcal{E}}}}
\newcommand{\Ezerokminussix}{\accentset{\scalebox{.6}{\mbox{\tiny $(0)$}}}
{\underaccent{\scalebox{.8}{\mbox{\tiny $k\mkern-6mu -\mkern-6mu 6$}}}{\mathcal{E}}}}
\newcommand{\Xzerokminusone}{\accentset{\scalebox{.6}{\mbox{\tiny $(0)$}}}
{\underaccent{\scalebox{.8}{\mbox{\tiny $k\mkern-6mu -\mkern-6mu 1$}}}{\mathcal{X}}}}
\newcommand{\Xzerokminustwo}{\accentset{\scalebox{.6}{\mbox{\tiny $(0)$}}}
{\underaccent{\scalebox{.8}{\mbox{\tiny $k\mkern-6mu -\mkern-6mu 2$}}}{\mathcal{X}}}}
\newcommand{\Xzerokminusthree}{\accentset{\scalebox{.6}{\mbox{\tiny $(0)$}}}
{\underaccent{\scalebox{.8}{\mbox{\tiny $k\mkern-6mu -\mkern-6mu 3$}}}{\mathcal{X}}}}
\newcommand{\Xzerokminussix}{\accentset{\scalebox{.6}{\mbox{\tiny $(0)$}}}
{\underaccent{\scalebox{.8}{\mbox{\tiny $k\mkern-6mu -\mkern-6mu 6$}}}{\mathcal{X}}}}
\newcommand{\Xzerokminusseven}{\accentset{\scalebox{.6}{\mbox{\tiny $(0)$}}}
{\underaccent{\scalebox{.8}{\mbox{\tiny $k\mkern-6mu -\mkern-6mu 7$}}}{\mathcal{X}}}}
\newcommand{\Ezerolesslessk}{\accentset{\scalebox{.6}{\mbox{\tiny $(0)$}}}{\underaccent{\scalebox{.8}{\mbox{\tiny 
$\ll\mkern-6mu k$}}}{\mathcal{E}}}}
\newcommand{\Xzerolesslessk}{\accentset{\scalebox{.6}{\mbox{\tiny $(0)$}}}{\underaccent{\scalebox{.8}{\mbox{\tiny 
$\ll\mkern-6mu k$}}}{\mathcal{X}}}}
\newcommand{\Xzeropluslesslessk}{\accentset{\scalebox{.6}{\mbox{\tiny $(0+)$}}}{\underaccent{\scalebox{.8}{\mbox{\tiny 
$\ll\mkern-6mu k$}}}{\mathcal{X}}}}
\newcommand{\Xonelesslessk}{\accentset{\scalebox{.6}{\mbox{\tiny $(1)$}}}{\underaccent{\scalebox{.8}{\mbox{\tiny 
$\ll\mkern-6mu k$}}}{\mathcal{X}}}}
\newcommand{\Xplesslessk}{\accentset{\scalebox{.6}{\mbox{\tiny $(p)$}}}{\underaccent{\scalebox{.8}{\mbox{\tiny 
$\ll\mkern-6mu k$}}}{\mathcal{X}}}}
\newcommand{\Ezerominusoneminusdeltak}{\accentset{\scalebox{.6}{\mbox{\tiny $(-1\mkern-6mu-\mkern-6mu\delta)$}}}{\underaccent{\scalebox{.8}{\mbox{\tiny $k$}}}{\mathcal{E}}}}
\newcommand{\Ezerominusthreeminusdeltakminusone}{\accentset{\scalebox{.6}{\mbox{\tiny $(-3\mkern-6mu-\mkern-6mu\delta)$}}}{\underaccent{\scalebox{.8}{\mbox{\tiny $k\mkern-6mu -\mkern-6mu 1$}}}{\mathcal{E}}}}
\newcommand{\Ezerominusoneminusdeltakminusone}{\accentset{\scalebox{.6}{\mbox{\tiny $(-1\mkern-6mu-\mkern-6mu\delta)$}}}{\underaccent{\scalebox{.8}{\mbox{\tiny $k\mkern-6mu -\mkern-6mu 1$}}}{\mathcal{E}}}}
\newcommand{\Ezerominusoneminusdeltakminustwo}{\accentset{\scalebox{.6}{\mbox{\tiny $(-1\mkern-6mu-\mkern-6mu\delta)$}}}{\underaccent{\scalebox{.8}{\mbox{\tiny $k\mkern-6mu -\mkern-6mu 2$}}}{\mathcal{E}}}}
\newcommand{\Ezerominusoneminusdeltalesslessk}{\accentset{\scalebox{.6}{\mbox{\tiny $(-1\mkern-6mu-\mkern-6mu\delta)$}}}{\underaccent{\scalebox{.8}{\mbox{\tiny $\ll\mkern-6mu k$}}}{\mathcal{E}}}}
\newcommand{\nablaslash}{\slashed{\nabla}}
\DeclareMathAlphabet\mathbfcal{OMS}{cmsy}{b}{n}
\makeatletter \@addtoreset{equation}{section}  \makeatother
\title{Quasilinear wave equations on  Kerr  black holes\\  in the full subextremal range $|a|<M$}
\author[$\ddag$ $\S$]{Mihalis Dafermos}
\author[$\dag$ $*$]{Gustav Holzegel}
\author[$\ddag$]{Igor Rodnianski}
\author[$\dag$]{Martin Taylor}
\affil[$\dag$]{\small Imperial College London,
Department of Mathematics,
South~Kensington~Campus,~London~SW7~2AZ,~United~Kingdom\vskip.2pc \ }
\affil[$\ddag$]{\small Princeton University, Department of Mathematics, Fine~Hall,~Washington~Road,~Princeton,~NJ~08544,~United~States~of~America\vskip.2pc \ }
\affil[$\S$]{\small University of Cambridge, Department of Pure Mathematics and Mathematical
Statistics, Wilberforce~Road,~Cambridge~CB3~0WA,~United~Kingdom\vskip.2pc \ }
\affil[$*$]{\small Universit\"at M\"unster,~Mathematisches~Institut, ~{\ \ \ \ \ \ \ \ \ }~Einsteinstrasse~62~48149~M\"unster,~Bundesrepublik~Deutschland~{\ \ \ }}
\date{4 October 2024}
\begin{document}

\maketitle

\begin{abstract}
 We prove global existence, 
 boundedness and decay for small data solutions $\psi$ to a general class of
quasilinear wave equations on 
  Kerr black hole backgrounds in the full sub-extremal range ${|a|<M}$. The method extends our previous~\cite{DHRT22}, which considered such equations
on a wide class of background spacetimes, 
including  Kerr, but restricted in that case to the very slowly rotating regime $|a|\ll M$ (which
may  be treated simply as a perturbation of Schwarzschild $a=0$).
 To handle the general $|a|<M$ case,
 our present proof is based on two ingredients:
 (i) the linear inhomogeneous
  estimates on Kerr backgrounds proven in~\cite{partiii}, further refined however 
 in order to gain a derivative in elliptic frequency regimes,
  and (ii) the existence of
   appropriate physical space currents satisfying degenerate coercivity properties,
  but which now must be tailored to a finite number of wave packets
  defined by suitable frequency projection.
  The above ingredients can be thought of as adaptations of the two basic 
  ingredients of~\cite{DHRT22}, exploiting however the peculiarities of the
  Kerr geometry.
 The novel frequency decomposition in (ii),
 inspired by the boundedness arguments 
of~\cite{DafRodKerr, partiii},
  is defined using only azimuthal and stationary frequencies, and serves both to separate the superradiant and non-superradiant parts of
the solution and to localise trapping to small regions of spacetime. 
 The strengthened (i), on the other hand, allows us to relax
  the  required coercivity properties of our wave-packet dependent  currents, so as in particular to accept top order
  errors provided that they are 
localised to the elliptic frequency regime. These error terms are 
  analysed with the help of the full Carter separation.
  \end{abstract}

  \tableofcontents

\section{Introduction}
Consider the $4$-dimensional Kerr manifold $(\mathcal{M},g_{a,M})$, for general
subextremal
parameters $|a|<M$~\cite{Kerr}.
In this paper we study quasilinear wave equations of the form
\begin{equation}
\label{theequationzero}
\Box_{g(\psi,x)}\psi = N(\partial\psi, \psi, x)
\end{equation}
 where $g(\psi,x)$ and $N(\partial\psi,\psi,x)$ are
appropriate nonlinear terms and 
where 
\begin{equation}
\label{linearisestokerr}
g(0,x)=g_{a,M}(x).
\end{equation}
We will consider the extended exterior region of Kerr (bounded by a future spacelike
hypersurface $\mathcal{S}$ slightly inside the event horizon $\mathcal{H}^+$) and define initial 
data $(\uppsi,\uppsi')$ for~\eqref{theequationzero}
on a suitable hypersurface $\Sigma_0$ which coincides with an outgoing null cone for $r\ge R$ asymptoting to null infinity $\mathcal{I}^+$.
With respect to the Kerr metric, $\Sigma_0$ is a Cauchy hypersurface for its future in this
extended region,
which is foliated by the stationary-time translates $\Sigma(\tau)$ of $\Sigma_0$, $\tau\in [0,\infty)$.
Refer to Figure~\ref{introdiagfigurelargeathe}.
\begin{figure}
\centering{
\def\svgwidth{15pc}
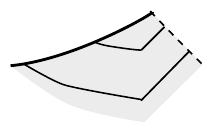}
\caption{The  extended Kerr spacetime $(\mathcal{M},g_{a,M})$ and the foliation $\Sigma(\tau)$}\label{introdiagfigurelargeathe}
\end{figure}

The main result of the present paper is a proof of global existence, together with orbital and asymptotic stability results,  for
 small-data solutions of~\eqref{theequationzero}:
%  \filbreak 
\begin{theorem*}
For equations~\eqref{theequationzero}  on  
the extended exterior region of an arbitrary subextremal ($|a|<M$) Kerr spacetime
$(\mathcal{M}, g_{a,M}(x))$ satisfying~\eqref{linearisestokerr} and under appropriate   
assumptions
on the quasilinear term~$g(\psi,x)$ and semilinear term $N(\partial\psi,\psi,x)$,
we have 
\begin{itemize}
\item
{\bf Global existence of small data solutions.} Solutions $\psi$ arising from smooth
 initial data  $(\uppsi,\uppsi')$ 
on $\Sigma_0$, sufficiently small 
as measured in a suitable $r^p$-weighted Sobolev energy, exist globally in $\cup_{\tau\in[0,\infty)}\Sigma(\tau)$.
\item
{\bf Orbital stability.} Under the above assumptions,
the above-mentioned $r^p$-weighted energy flux through the time translates $\Sigma(\tau)$ 
is uniformly bounded by a constant
times its initial value on $\Sigma_0$.
\item
{\bf Asymptotic stability.} Under the above assumptions, a suitable  lower order unweighted
energy flux through $\Sigma(\tau)$ decays inverse polynomially in $\tau$
(implying also pointwise inverse polynomial decay for~$\psi$).
\end{itemize}
\end{theorem*}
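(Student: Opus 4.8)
The plan is to run a continuity (bootstrap) argument in the time parameter $\tau$. One assumes, on an interval $[0,\tau_{\mathrm{final}})$ and governed by a smallness parameter $\varepsilon$, a bootstrap bound for a hierarchy of $r^p$-weighted higher-order energy fluxes of $\psi$ through the hypersurfaces $\Sigma(\tau)$ together with the associated degenerate spacetime (Morawetz-type) norms, and one then strictly improves this bound. Global existence of small-data solutions follows by combining a standard local existence theorem with the improved bound; the uniform $r^p$-weighted energy bound that emerges is precisely the orbital stability statement; and the asymptotic stability statement follows by feeding boundedness into the $r^p$-weighted hierarchy in the manner of Dafermos--Rodnianski to extract inverse-polynomial decay of a lower-order unweighted flux, whence pointwise decay by Sobolev embedding. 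Throughout, the quasilinear equation~\eqref{theequationzero} is rewritten as $\Box_{g_{a,M}}\psi = F[\psi]$, with $F[\psi] := N(\partial\psi,\psi,x) + (\Box_{g_{a,M}} - \Box_{g(\psi,x)})\psi$, so that the \emph{exact} Kerr operator sits on the left and the entire nonlinear burden is to estimate $F[\psi]$ in the norms dual to those controlled by the linear theory.

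The linear input is the inhomogeneous estimate of~\cite{partiii} for $\Box_{g_{a,M}}\psi = F$ on the full subextremal range, refined as announced so that it gains one derivative in elliptic frequency regimes: this provides $r^p$-weighted energy boundedness and decay together with an integrated local energy estimate whose left-hand side degenerates only at the horizon and at trapping. The central device for handling the nonlinearity is the frequency decomposition built from the \emph{exact} Killing symmetries $\partial_t$ and $\partial_\phi$ of $g_{a,M}$: projecting onto finitely many ranges of the stationary frequency $\omega$ and the azimuthal frequency $m$ decomposes $\psi$ into a finite family of wave packets, separating the superradiant from the non-superradiant part and confining the trapped frequencies to small spacetime regions. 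For each wave packet one constructs a bespoke physical-space multiplier/commutator current whose bulk term is coercive modulo the horizon and trapping degeneracies \emph{and} modulo a top-order error supported in the elliptic frequency regime; it is exactly the refined estimate (i) that permits absorbing such a top-order elliptic-regime error. Those remaining errors are then handled directly using the full Carter separation — separating also the radial ODE — in which the radial operator is coercive in the elliptic regime, so that the claimed gain of a derivative is genuine.

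The steps, in order, are: (1) fix the functional framework (the $r^p$-weighted Sobolev fluxes on $\Sigma(\tau)$, the degenerate spacetime norms) and state the bootstrap assumptions; (2) commute~\eqref{theequationzero} with $\partial_t$, $\partial_\phi$, the redshift and $r^p$-weighted vector fields, and the elliptic operator used to recover the missing derivatives, controlling the resulting commutator algebra; (3) apply the $(\omega,m)$ frequency projection, which commutes with $\Box_{g_{a,M}}$, $\partial_t$ and $\partial_\phi$ and has controllable commutators with the $r$-weights and with the quasilinear perturbation; (4) on each wave packet combine its tailored current with the refined estimates of~\cite{partiii} to obtain a priori bounds whose left-hand sides carry the horizon/trapping degeneracy and whose right-hand sides carry only elliptic-regime top-order errors; (5) dispose of those via the Carter-separated radial ODE; (6) sum over the finitely many packets and reassemble a global a priori estimate; (7) bound $F[\psi]$ — both the semilinear $N$ and the quasilinear difference — in the dual norms using the bootstrap bounds, product and Sobolev inequalities adapted to the $r^p$-hierarchy, and the structural assumptions on $g$ and $N$ (a null/weak-null-type condition, as in~\cite{DHRT22}); (8) close the bootstrap for $\varepsilon$ small; (9) run the $r^p$-hierarchy to obtain decay, completing the orbital and asymptotic stability statements.

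The main obstacle — and the point where the argument genuinely goes beyond~\cite{DHRT22} — is steps (4)--(7) at top order. For $|a|<M$ trapping cannot be treated as a perturbation of Schwarzschild, so the Morawetz current must be constructed on Kerr itself, where the trapped set is not cut out by a single clean frequency condition; the finite $(\omega,m)$ partition handles this, but then the currents for different packets have mutually incompatible degeneracies, and their bulk errors are controlled only at lower order \emph{except} in the elliptic regime, where one must accept a top-order loss and rely on (i) being strong enough to absorb it. Calibrating the degeneracy of the wave-packet currents against the strength of the refined linear estimate, while ensuring that the quasilinear perturbation of the metric spoils neither the exactness of the $(\omega,m)$ symmetries used for the projection nor the coercivity of the Carter radial ODE in the elliptic regime, is the crux; a subsidiary difficulty is propagating the $r^p$-weighted norms through the $\omega$-projection near $\mathcal{I}^+$, since that projection is nonlocal in $t$ and must be reconciled with the $r^p$ weights.
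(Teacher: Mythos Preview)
Your outline captures many of the correct ingredients---the $(\omega,m)$ wave-packet decomposition, the $n$-dependent physical-space currents, the elliptic-regime derivative gain, and the use of Carter separation to analyse the residual errors---but there is a genuine gap at the top-order step, and it lies precisely in your framing of the nonlinearity.

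You propose throughout to write $\Box_{g_{a,M}}\psi = F[\psi]$ with $F[\psi]=N+(\Box_{g_{a,M}}-\Box_{g(\psi,x)})\psi$ and then, in step~(7), to ``bound $F[\psi]$---both the semilinear $N$ and the quasilinear difference---in the dual norms''. This loses a derivative. After $k$ commutations the inhomogeneous term contributed by the quasilinear difference looks schematically like $\partial\mathfrak{D}^{\mathbf k}\psi\cdot(g-g_{a,M})\cdot\partial^2\mathfrak{D}^{\mathbf k}\psi$, which is of order $k{+}2$, while your left-hand side controls only order $k{+}1$. Crucially, this loss is \emph{not} localised to the elliptic frequency regime: it occurs wherever your multiplier is nonzero, in particular in the trapping region of each non-superradiant packet, where no derivative gain is available. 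The refined estimate~(i) buys you exactly one derivative, and only in the elliptic region; it cannot absorb a top-order error that lives at trapping. This is why the semilinear case (where $g\equiv g_{a,M}$) already closes in~\cite{DHRT22} for all $|a|<M$, but the quasilinear case does not by this route.

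The paper's resolution is the one already used in~\cite{DHRT22}: the wave-packet currents $J^{{\rm main},n}$, $K^{{\rm main},n}$ are defined covariantly in the metric, and at top order one applies the divergence identity with $g=g(\psi,x)$ rather than $g_{a,M}$. The inhomogeneous term is then $H^{{\rm main},n}_{g}\cdot\Box_{g}(\mathfrak D^{\mathbf k}\psi_n)$, and since $\Box_g\psi=N$ directly, the principal contribution is $H\cdot\mathfrak D^{\mathbf k}N$, which is order $k{+}1$ with no loss. The remaining errors are (a) the commutator $[\mathfrak D^{\mathbf k},\Box_g-\Box_{g_{a,M}}]$, which is genuinely lower order because $\mathfrak D^{\mathbf k}$ commutes with $\Box_{g_{a,M}}$; (b) the pseudodifferential commutator $[P_n,\Box_g-\Box_{g_{a,M}}]$, handled by the calculus; and (c) the failure of coercivity of $K^{{\rm main},n}_{g}$ versus $K^{{\rm main},n}_{g_{a,M}}$, which is a small (order $\varepsilon$) perturbation of the coercivity already established for $g_{a,M}$. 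The top-order elliptic-regime error you correctly identify arises from~(c) for the \emph{linear} current, not from the quasilinear inhomogeneity.

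Two further points. First, the scheme is not a continuous bootstrap in $\tau$ but an iteration on consecutive dyadic spacetime slabs $\mathcal R(\tau_i,\tau_{i+1})$, with a pigeonhole step to propagate the hierarchy; this matters because the top-order identity, once written with the covariant currents and the projections $P_n$, produces a future boundary term $\mathfrak E_{\rm bulk}(\tilde\tau_1)$ on the \emph{right}-hand side that cannot be dropped by sign. The paper handles this by extending $\psi$ to a $\widehat\psi$ solving the \emph{linear} equation to the future, invoking the (non-quantitative) linear boundedness to justify a late-time pigeonhole that eliminates this term. Second, the $r^p$ part of the argument is done entirely in physical space in $r\ge R_{\rm freq}$, where the currents are $n$-independent; the frequency projection is applied only in a bounded-$r$ region, so the reconciliation with $r^p$ weights you flag as a difficulty does not actually arise.
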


The precise statement corresponding to the above 
is given as Theorem~\ref{largeamaintheorem} of Section~\ref{largeamaintheoremsection}.
We note already that the precise assumptions  on the nonlinear terms $g(\psi,x)$
and $N(\psi,x)$ in our main theorem above will be exactly as in~\cite{DHRT22} (see 
the discussion immediately below),
with $g(0,x)$ specialised however to the Kerr metric~\eqref{linearisestokerr}.

\paragraph{The general framework of~\cite{DHRT22} and the $|a|\ll M$ case.}
We recall that our previous~\cite{DHRT22} provided stability results analogous to the above
theorem
for equations~\eqref{theequationzero} under a wide class of assumptions on $g(\psi,x)$ and
$N(\psi,x)$.  In that work, $g_0:=g(0,x)$ is always assumed to be a background  asymptotically
flat stationary metric satisfying some basic geometric restrictions (allowing in particular for the
presence of event horizons, trapped null geodesics and ergoregions) and
the semilinear term $N(\partial\psi,\psi,x)$ is assumed to satisfy  $N(0,0,x)=0$ together with a generalisation of the null condition~\cite{KlNull},
while the quasilinear term is assumed for convenience to satisfy $g(\psi,x)=g_0$ for $x\ge R/2$ for sufficiently large~$R$. 

Our theorem above is not included in~\cite{DHRT22} for general $|a|<M$, however, because the results of~\cite{DHRT22} depended in addition on being able to show two ``black box''
assumptions on the stationary metric~$g_0$:
\begin{enumerate}
\item
a quantitative boundedness and (possibly degenerate) integrated local energy decay statement
for the inhomogeneous equation $\Box_{g_0}\psi =F$
\item
the existence of an auxiliary physical-space top order energy current for $\Box_{g_0}$, with bulk and boundary  coercivity properties 
up to specific allowed error terms.
\end{enumerate}
The significance of having a physical space energy current is that simply by re-expressing it in covariant form it
can then be applied
\emph{directly} to the quasilinear wave operator $\Box_{g(\psi,x)}$, the current's coercivity properties
immediately carrying over, up to allowed error terms. It is this fact, in conjunction with the decay provided by~1., which allows to prove top order energy boundedness estimates
for solutions of~\eqref{theequationzero} without loss of derivatives and indeed 
infer global existence
and stability. See already Section~\ref{reviewsection} for a review of the proof of~\cite{DHRT22},
in particular also the novel consecutive spacetime slab iteration method introduced there, which allows  the core of the analysis to be reduced to time-translation invariant estimates and identities above, rendering
the entire argument in the spirit of the $r^p$ method~\cite{DafRodnew}.

Turning to the particular Kerr case~\eqref{linearisestokerr} considered in the present paper, the black box assumption~1.~indeed holds~\cite{partiii} for the
full subextremal range $|a|<M$. The auxiliary identity 2., however, has only
been obtained for the very slowly rotating case $|a|\ll M$ (treating it  essentially as an arbitrary stationary perturbation of
Schwarzschild $a=0$), and thus, only in this regime do the
stability
results of~\cite{DHRT22} apply to the general quasilinear equation~\eqref{theequationzero} 
with~\eqref{linearisestokerr}. 
  (For the simpler  \emph{semilinear} case of~\eqref{theequationzero}, where 
$g(\psi,x)=g_{a,M}(x)$ \emph{identically}, the auxiliary 2.~is not in fact necessary and the results of~\cite{DHRT22}
already apply for the full subextremal range $|a|<M$.)

\paragraph{The general subextremal case $|a|<M$: wave packet decomposition distinguishing superradiance and
localising trapping.}
To handle the general quasilinear~\eqref{theequationzero} with~\eqref{linearisestokerr}  for
the full subextremal Kerr range $|a|<M$, the present paper 
circumvents the difficulty of showing the required 
general physical space identity 2.~by applying    a mild
frequency decomposition on~$\psi$, partitioning the solution 
into finitely many ``wave packets'' $\psi_n$, $n=0,\ldots, N$, for each of which
one can obtain a suitable physical space divergence identity
 with weak coercivity properties sufficient
for our argument. 
This ``frequency localised'' version of~2.~is inspired by the original energy boundedness arguments of~\cite{DafRodKerr, partiii} and by the fundamental property that superradiance and trapping are disjoint in frequency space, i.e.~that wave packets localised to the superradiant regime of frequencies
\begin{equation}
\label{superradianthere}
 0< \frac{\omega}{am} <\frac{1}{2Mr_+}
\end{equation}
are not subject to trapping.  (Here $r_+$ denotes the Boyer--Lindquist coordinate value
of the event horizon $\mathcal{H}^+$.) For the non-superradiant wave packets on the
other hand, by making the decomposition suitably fine we can ensure that trapping
for each such wave packet $\psi_n$ is localised to an $n$-dependent $r$-interval which 
is sufficiently small to admit a timelike Killing field $\partial_{t^*}+\alpha_n \partial_{\phi^*}$.
Our $n$-dependent currents realising the analogue of property 2.~exploit these properties
to satisfy coercivity up to allowed error terms.
Let us note already, however, that these error terms  still cannot be controlled by the integrated decay statement~1.~of~\cite{partiii} as stated there, but require a further  improved linear
homogeneous estimate
which follows by the methods
of~\cite{partiii} but which we must also obtain in the present paper.
We give more details about the nature of the decomposition, the coercive 
properties of the  $n$-dependent currents and the necessary improved linear estimate in Section~\ref{freqloccoercurintro} below.

We emphasise that our 
mild frequency decomposition \emph{only uses the frequencies $\omega$
and $m$ associated to the Fourier transform
with respect to the Kerr star coordinates $t^*$ and $\phi^*$}, and \underline{not}  Carter's full separation~\cite{carter1968hamilton} 
which was used 
in~\cite{partiii} to prove the integrated local energy statement~1.~itself.
Thus, nonlinear commutation errors introduced by applying this decomposition to $\psi$ 
are completely elementary to handle using the usual pseudodifferential
calculus. 
Carter's separation will be used however in the present paper 
both to prove the necessary improved version of the estimate
of~\cite{partiii} and to infer the weak coercivity  property
of our currents.

\paragraph{Background and outlook.}
We refer the reader to~\cite{DHRT22} for further discussion of the stability problem for 
equations of the form~\eqref{theequationzero} 
and the modern history of its study without symmetry, starting from the classical works~\cite{KlNull,christonull}. Let us, however, explicitly point
the reader to 
various previous approaches to quasilinear problems in the absence of symmetry in the Schwarzschild $a=0$ and very slowly rotating $|a|\ll M$ Kerr case~\cite{MR3082240, lindblad2016global, lindtoh, lindblad2022weak, Pasqualotto:2017rkh, giorgietal} (all based on previous linear work) discussed in~\cite{DHRT22} in more detail. We also refer to~\cite{DHRT22} for
comparison with the $\Lambda>0$ case, but specifically point the reader 
to~\cite{hintzvasyglobal} as well as the more recent~\cite{mavrogiannis, fang}.
(In brief, as should be clear already from the discussion in~\cite{mavrogiannis}, 
the requisite top order estimate here arising from 2.~can be shown in the $\Lambda>0$ case 
using Cauchy stability alone, provided that black box estimate 1.~is improved to a ``relatively'' non-degenerate estimate of the form proven in~\cite{holzegel2020note, mavrogiannis2023morawetz}. Thus, the difficulties addressed in 
the present paper are in fact entirely absent from the $\Lambda>0$ case.)
Finally, we note that in view of~\cite{civinthesis}, we expect
that the constructions of the present paper carry over essentially unchanged 
to the study of~\eqref{theequationzero} on 
Kerr--Newman backgrounds $g_{a, M, Q}$ (the charged analogue of the Kerr metric)
for general subextremal parameters  $a^2+Q^2<M^2$.
For more on Kerr--Newman, we refer the reader to~\cite{giorgi2021carter}.
For a very recent study of the linear scalar wave equation on  backgrounds settling
down to Kerr sufficiently fast, see~\cite{ma2024energymorawetzestimateswaveequation}.

As described in~\cite{DHRT22}, 
one of the motivations of the present work is to develop a setup allowing one 
to \emph{directly} promote linear stability results  on fixed Kerr black hole
spacetimes, independently of how exactly these are proven, 
to the statement of the nonlinear stability 
of the  spacetimes themselves as solutions to the Einstein vacuum equations. 
In that context, our approach to the linear theory, initiated in~\cite{holzstabofschw}, centres on the extremal components 
of the Newman--Penrose formalism, which decouple from the full system and
satisfy a generalisation of the linear wave equation called the Teukolsky equation
and completely parametrise the gauge invariant degrees of freedom. In our approach,
the Teukolsky equation is in turn estimated by first passing to a new
quantity satisfying the better behaved Regge--Wheeler 
equation, inspired by~\cite{Chandraschw}. 
In the nonlinear
setting (cf.~\cite{dhrtplus}), the analogous quantities
satisfy ``almost decoupled'' wave equations which can be handled by similar techniques
to~\eqref{theequationzero}. Indeed, the analogue of the black box estimate 
1.~for the Teukolsky equation on Kerr has been
shown recently for the full subextremal range $|a|<M$
in the breakthrough result~\cite{RitaShlap, RitaShlap2}.  Moreover, 
not only does~\cite{RitaShlap, RitaShlap2}  provide the black box assumption~1.,
but, just as the current constructions of~\cite{partiii} do in the present paper, the currents
introduced in~\cite{RitaShlap, RitaShlap2}   would appear to provide precisely the key to constructing
the analogue of the necessary auxiliary $n$-dependent currents giving our new version of 2.
It is  reasonable thus to  expect that the characteristic ``quasilinear'' difficulties of the stability problem for
the Einstein equations  can indeed  be entirely addressed by combining our method with the results of~\cite{RitaShlap, RitaShlap2}.
This will appear elsewhere.

To complete our understanding of quasilinear wave equations~\eqref{theequationzero} on Kerr black hole exterior backgrounds, it remains only to elucidate the extremal case $|a|=M$. Here, even the linear theory is not completely understood, and it is subject to the Aretakis instability~\cite{Aretakis2} already in axisymmetry, and to worse instabilities for higher azimuthal modes~\cite{gajicazimuthal}.
Mode stability, on the other hand, has indeed been shown in~\cite{ritamode}. 
In the simpler extremal  Reissner--Nordstr\"om case, semilinear equations have been handled 
in~\cite{nonlinearextreme}, under an additional structural assumption on $N$ at the horizon which appears to be necessary. For conjectures concerning the nonlinear stability of extremal black
holes under the Einstein equations, see~\cite{dhrtplus} (Conjecture IV.2) and 
Section~1.4 of~\cite{KU23}.

\paragraph{Outline of the introduction.}
In the rest of this introduction, we shall flesh out the above brief discussion of our proof with some more details.
We begin in Section~\ref{reviewsection} with a  review of~\cite{DHRT22}. In Section~\ref{freqloccoercurintro},
we then introduce the main new elements of our paper (the refinement of the integrated
local energy decay statement of~\cite{partiii} and the wave-packet localised currents which replace assumption 
2.~of~\cite{DHRT22} in $|a|<M$) and how to obtain from these the fundamental top-order estimate. Finally, in Section~\ref{outlinesection} we shall give an outline of the body of the paper.

\subsection{Review of~\cite{DHRT22}: the two black-box assumptions, the fundamental estimates on ${\rm L}$-slabs and the dyadic iteration scheme}
\label{reviewsection}

We review in this section the general scheme of~\cite{DHRT22} applied to the Kerr metric $g_{a,M}$. 
(The reader may wish to consult also the introduction of~\cite{DHRT22} for a more detailed overview.)

As discussed already above, the work~\cite{DHRT22} was based
on two ``black box'' assumptions for the linear operator $\Box_{g_0}$. These will be reviewed in Sections~\ref{reviewoffirstassumption} 
and~\ref{reviewofsecondassumption} below, including their extensions to higher order,
$r^p$-weighted estimates and coercive identities. Under the assumption of appropriate null structure for
the semilinear terms $N(\partial\psi, \psi,x)$, this  led to a hierarchy of time-translation invariant estimates,
reviewed in Section~\ref{estimatehierarchintroold}, 
for solutions now of the nonlinear equation~\eqref{theequationzero} on a spacetime slab of 
time-length ${\rm L}$, estimates which in particular
``close'' both with respect to their order  and with respect to their $r^p$ weights.
Finally, this estimate hierarchy leads immediately to  global existence and stability 
by a pigeonhole argument and suitable iteration in consecutive spacetime slabs, briefly discussed
in Section~\ref{dyadicintro}.

\subsubsection{Assumption~1.: integrated local energy decay}
\label{reviewoffirstassumption}
We give here a simplified discussion of the black box assumption~1.~of~\cite{DHRT22},
specialised to Kerr (in which case the assumption was indeed proven in~\cite{partiii} for the
entire subextremal regime $|a|<M$).

Let us first  introduce some notation for energies: Given a suitably regular function $\psi$, 
we define
\begin{equation}
\label{tocontroltheintegrand}
\mathcal{E}(\tau)[\psi]: = \int_{\Sigma(\tau)} |L\psi|^2+\iota_{r\le R} |\underline L\psi|^2 + |\nablaslash \psi|^2  + r^{-2}|\psi|^2.
\end{equation}
Here  $L$ and $\underline{L}$ are  null vector fields
in the region $r\ge R$ tangent to and transverse to $\Sigma(\tau)\cap \{r\ge R\}$, respectively,
while $|\nablaslash\psi|^2$ is comparable to the induced norm squared of the 
induced gradient on spheres
$\{r=c\}\cap \Sigma(\tau)$. The definition of $L$, $\underline L$ and $|\nablaslash \psi|$ is then extended to $r\le R$ in such a way so as
for  $|L\psi|^2+\iota_{r\le R} |\underline L\psi|^2 + |\nablaslash \psi|^2 $ to be 
a coercive expression
in first derivatives. (See already Section~\ref{commutationcommutation} for details.) 

The expression~\eqref{tocontroltheintegrand} 
should be thought of as the flux of energy through $\Sigma(\tau)$. 
(Note that because
$\Sigma(\tau)$ is outgoing null when restricted to $r\ge R$, the ingoing null derivative $\underline{L}\psi$
does not appear in the flux in this region, 
hence the presence of the indicator function $\iota_{r \le R}$.)

We also define the following energy quantities appearing naturally in bulk integrals:
We define the non-degenerate
\begin{equation}
\label{alsocomparetonondeg}
\mathcal{E}'(\tau)[\psi]: = \int_{\Sigma(\tau)}   r^{-1-\delta} ( |L\psi|^2+ |\underline L\psi|^2 + |\nablaslash \psi|^2)  
\end{equation}
and the 
 degenerate 
 \begin{equation}
\label{alsocompareto}
{}^\chi \mathcal{E}'(\tau)[\psi]: = \int_{\Sigma(\tau)} \chi   r^{-1-\delta} ( |L\psi|^2+ |\underline L\psi|^2 + |\nablaslash \psi|^2)  ,
\end{equation}
where $\chi(r)\ge 0$ is a nonnegative  ``degeneration function'' which in the Kerr case vanishes identically
in some spacetime region $\mathcal{D}$,   
while $\chi=1$ in $r_0 \le r\le r_2$ and $r\ge R/4$, for an $r_2>r_+$. The region $\mathcal{D}$ 
is such that all future trapped null geodesics eventually enter $\mathcal{D}$. 
In particular, $\chi$ vanishes at all $r$-values to which trapped null geodesics asymptote.
Finally, we define  the lower-order non-degenerate quantity:
\begin{equation}
\label{noteherelowerlargea}
\Eminusone'(\tau)[\psi] :=\int_{\Sigma(\tau)} r^{-3-\delta}\psi^2.
\end{equation}

The first ``black box'' assumption of~\cite{DHRT22} on the operator $\Box_{g_0}$, proved
in~\cite{partiii} for $g_0=g_{a,M}$, 
was  the validity of an integrated local energy estimate
\begin{equation}
\label{simplifiedestimate}
\mathcal{E}(\tau_1)[\psi] +
 \int_{\tau_0}^{\tau_1}( \,{}^\chi\mathcal{E}'(\tau') [\psi] + \Eminusone'(\tau')[\psi]  )d\tau' 
\lesssim \mathcal{E}(\tau_0)[\psi] + \int_{\mathcal{R}(\tau_0,\tau_1)}
| \partial \psi  \cdot \Box_{g_0}\psi| +| \Box_{g_0}\psi|^2,
\end{equation}
for arbitrary $\tau_0\le \tau_1$.
The final integral on the right hand side of~\eqref{simplifiedestimate} is a spacetime integral in the region $\mathcal{R}(\tau_0,\tau_1)$
bounded by $\Sigma(\tau_0)$ and $\Sigma(\tau_1)$. In the above, we have  left the 
inhomogeneous term $\partial \psi\cdot \Box_{g_0}\psi$ in schematic form (there are in fact $r$-weights associated to this, and moreover we may allow for a slightly weaker expression in the near region $r\le R$: see already the inhomogeneous terms in Theorem~\ref{blackboxoneforkerr} (specialised to the
case  $k=0$) for the precise form of the right hand side of the inhomogeneous estimate required).  

It was further shown in~\cite{DHRT22} that, given~\eqref{simplifiedestimate} and general assumptions on $g_0$ (which obtain in the Kerr case in the full subextremal range $|a|<M$),
one may extend~\eqref{simplifiedestimate} to an estimate for a $k+1$'th order $r^p$-weighted
energy~$\Epk(\tau)[\psi]$ 
(defined by applying an $r^p$-weighted version of~\eqref{tocontroltheintegrand} 
to $\mathfrak{D}^{\bf k}\psi$ in place of $\psi$, where 
 $\mathfrak{D}^{\bf k}\psi$  is a string of commutation operators of order $k=|{\bf k}|$; see already Section~\ref{commutationcommutation})
\begin{equation}
\label{simplifiedestimatepweighthigherorderlargea}
 \Epk(\tau_1)[\psi] + \int_{\tau_0}^{\tau_1} (\, {}^\chi\Epminusonek'(\tau') [\psi] + \Epminusonekminusone'(\tau') [\psi] )d\tau' \lesssim \Epk(\tau_0)[\psi]+\sum_{|{\bf k}|\le k} \int_{\mathcal{R}(\tau_0,\tau_1)}
|\accentset{\scalebox{.6}{\mbox{\tiny $(p)$}}} \partial \mathfrak{D}^{\bf k}\psi  \cdot \mathfrak{D}^{\bf k}  \Box_{g_0}\psi| +\ldots .
\end{equation}
The quantities $ {}^\chi\Epminusonek'(\tau') [\psi] $ and $\Epminusonekminusone'(\tau') [\psi] )$
represent
higher order $r^p$-weighted versions of~\eqref{alsocompareto} and~\eqref{alsocomparetonondeg}, respectively, where $k$ labels the number of commutations.
On the right hand side above, we have denoted only one of the resulting homogeneous terms---and this schematically,
with the notation $\accentset{\scalebox{.6}{\mbox{\tiny $(p)$}}} \partial $ invoking the presence
of appropriate $r^p$ weights!---to indicate the order
of what will turn out to be the most dangerous term later. See already Section~\ref{reviewedenergynotations} for the definition of all energies and 
 Theorem~\ref{blackboxoneforkerr} for the precise full estimate.

For a more detailed discussion of the $r^p$ weights appearing in the definitions, see
the introduction of~\cite{DHRT22} 
or see already Section~\ref{reviewedenergynotations} of the present paper. Here we note only
the fundamental
hierarchical relation of the $r^p$ method between bulk and boundary energies, namely
\begin{equation}
\label{downwithhierarchy}
  \Epminusonek' \gtrsim \Epminusonek  {\rm\ for\ } p\ge 1+\delta, \qquad   \Epminusonek' \gtrsim \Ezerok  {\rm\ for\ }p\ge 1 .
\end{equation}
It is this which allows one to extract decay in $\tau$ for the lower order unweighted
energy~\eqref{tocontroltheintegrand} in the case of the homogeneous
wave equation $\Box_{g_0}\psi=0$ via application of the pigeonhole principle and dyadic iteration 
(see the original~\cite{DafRodnew}).

We note already  that if the estimate~\eqref{simplifiedestimatepweighthigherorderlargea} is to be
applied directly to solutions $\psi$ of the quasilinear equation~\eqref{theequationzero},
we must express $\Box_{g_0}\psi= (\Box_{g_0}-\Box_g ) \psi +N(\partial\psi,\psi,x)$.
One sees then that the final displayed term  on the right hand side of~\eqref{simplifiedestimatepweighthigherorderlargea} 
is indeed nonlinear (thus ``small'') but of higher order than
the left hand side (order
$k+2$ in $\psi$ as opposed to order $k+1$). It follows that the estimate~\eqref{simplifiedestimatepweighthigherorderlargea} ``loses
derivatives'' (see already equation~\eqref{secondhigh}) 
and cannot be used on its own to control solutions of~\eqref{theequationzero}. (In contrast, in the semilinear case of~\eqref{theequationzero} (where $g(\psi,x):=g_{a,M}(x)$ identically), estimate~\eqref{simplifiedestimatepweighthigherorderlargea} suffices
for global stability results,
and this is why the results of~\cite{DHRT22} already apply in that case
to the full subextremal range
$|a|<M$.) 
It is overcoming this
 loss of derivative which motivates the second assumption, described immediately below.

\subsubsection{Assumption~2.: an auxiliary top order energy current}
\label{reviewofsecondassumption}

The key to an estimate which does not lose derivatives in the quasilinear setting of
equation~\eqref{theequationzero} is
to exploit covariant energy identities.
For a given metric $g$, the general covariant physical space energy identity takes the form
\begin{equation}
\label{genidentityintro}
\nabla^\mu {J_g}_\mu [\psi] = K_g [\psi] +{\rm Re} (H_g[\psi] \overline{\Box_{g}\psi}),
\end{equation}
where 
\begin{equation}
\label{labelthecurrents}
{J_g}_\mu[\psi],\, \,  K_g[\psi] ,  \, \, H_g[\psi]
\end{equation}
are energy 
currents depending covariantly on the metric $g$ and on the $1$-jet of $\psi$ (see already Section~\ref{Covariantenergyidentitiessec}).

The black box assumption 2.~of~\cite{DHRT22} (which we recall
was shown there  for $g_0=g_{a,M}$ in the very slowly rotating regime $|a|\ll M$ but is 
\underline{not} known 
to hold in the general $|a|<M$ case!)~was the existence of a set of time-translation invariant currents~\eqref{labelthecurrents} 
satisfying~\eqref{genidentityintro} with $g=g_0$ 
and enjoying the following coercivity assumptions:
\begin{itemize}
\item
The integrand ${J_{g_0}}[\psi] \cdot {\rm n}_{\mathcal{S}}$ arising in the boundary terms
on $\mathcal{S}$  from integration of~\eqref{genidentityintro} in a spacetime domain
\begin{figure}
\centering{
\def\svgwidth{15pc}
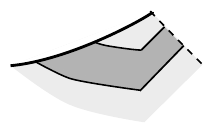}
\caption{The region $\mathcal{R}(\tau_0,\tau_1)$}\label{introdiagfigurelargea}
\end{figure}
(see Figure~\ref{introdiagfigurelargea})
is pointwise coercive.
The asymptotic integrand ${J_{g_0}}[\psi] \cdot {\rm n}_{\mathcal{I}^+}$ is likewise
nonnegative. 
\item 
The integrand ${J_{g_0}}[\psi] \cdot {\rm n}_{\Sigma(\tau)}$ arising in the boundary terms
on $\Sigma(\tau)$ is pointwise coercive
and controls the integrand of~\eqref{tocontroltheintegrand}.
Setting
\begin{equation}
\label{fromexactfluxes}
\mathfrak{E}(\tau)[\psi]:= \int_{\Sigma(\tau)} {J_{g_0}}_\mu[\psi]  \cdot {\rm n}_{\Sigma(\tau)},
\end{equation}
we have thus $\mathfrak{E}(\tau)[\psi] \sim \mathcal{E}(\tau)[\psi]$.
\item
The bulk integrand $K_{g_0}[\psi]$ is pointwise coercive near $\mathcal{S}$ and for large $r$.
It is not necessarily globally nonnegative, however, but satisfies
\begin{equation}
\label{partialcoercivit}
-K_{g_0}[\psi]  \lesssim \iota_{\{\chi= 1\}} |\psi|^2,
\end{equation}
where $\chi$ is the degeneration function appearing in assumption~1.,  and $\iota_S$ denotes
the indicator function of the set $S$. We thus may write
\begin{equation}
\label{wemaythuswrite}
\int_{\mathcal{R}(\tau_0,\tau_1)} K_{g_0}[\psi] \gtrsim \int_{\tau_0}^{\tau_1} {}^\rho \mathcal{E}' (\tau')[\psi]
d\tau' - C \int_{\mathcal{R}(\tau_0,\tau_1)}  \iota_{\{\chi= 1\}} |\psi|^2
\end{equation}
where $ {}^\rho \mathcal{E}' (\tau')$ is defined by the expression~\eqref{alsocompareto} 
with $\rho$ replacing $\chi$,
where $0\le \rho\le \chi$, where again  $\rho=1$ for $r\le r_2$, for some $r_+<r_1<r_2$ and say $r\ge  R/4$.
\end{itemize}

Let us note that the above three assumptions are in fact \emph{stable}
in the class of stationary metrics satisfying the basic geometric assumptions of~\cite{DHRT22}. Thus, they can be inferred to hold in the very slowly rotating Kerr case $|a|\ll M$ as a simple corollary of
the fact that they hold in the Schwarzschild case $a=0$.

As in Section~\ref{reviewoffirstassumption}, 
we may now extend the above generalised coercivity statements to appropriately apply
to $k+1$'th order, $r^p$ weighted currents 
\begin{equation}
\label{rpweightedcurrentsintro}
{\Jpk}_{g_0}[\psi],\,\, {\Kpk}_{g_0}[\psi], 
\end{equation}
which 
involve summing weighted versions of~\eqref{labelthecurrents} 
applied to $\mathfrak{D}^{\bf k} \psi$ over all $|{\bf k}|\le k$.
One obtains an energy estimate 
\begin{equation}
\label{topenergyidentityofthatpaper}
\Efancypk (\tau_1)[\psi] 
+ c \int_{\tau_0}^{\tau_1}{}^\rho\Epminusonek'(\tau')[\psi] d\tau' 
\leq \Efancypk (\tau_0)[\psi]
+C \int_{\tau_0}^{\tau_1}(\, {}^\chi\Epminusonekminusone'(\tau')[\psi] + \Epminusonekminustwo'(\tau')[\psi]) d\tau' 
+C \int  |\accentset{\scalebox{.6}{\mbox{\tiny $(p)$}}} \partial \mathfrak{D}^{\bf k} \psi  \cdot \mathfrak{D}^{\bf k}  \Box_{g_0}\psi|  +\ldots,
\end{equation}
where again
${}^\rho\Epminusonek'(\tau')[\psi]$ is a higher order $r^p$-weighted
version of ${}^\rho\mathcal{E}'(\tau')[\psi]$ appearing in~\eqref{wemaythuswrite}, and again
 we have only displayed one of the arising inhomogeneous terms schematically (and we have
 moreover
dropped some additional good flux terms from the left hand side).

As applied to solutions of $\Box_{g_0}\psi =F$, the
estimate~\eqref{topenergyidentityofthatpaper} appears to be strictly worse than~\eqref{simplifiedestimatepweighthigherorderlargea}. The point, however, is that
because it arises from a covariant energy identity~\eqref{genidentityintro}, 
the  identity can also be applied  to solutions $\psi$ of~\eqref{theequationzero} \underline{replacing
$g_0$ with $g(\psi,x)$ in the definitions of~\eqref{labelthecurrents}}. 
This will generate various additional error terms 
in the arising estimate analogous to~\eqref{topenergyidentityofthatpaper}, but
for the final  displayed term  of~\eqref{topenergyidentityofthatpaper} we may now replace $\Box_g\psi$ with $N(\partial\psi, \psi,x)$,
which is now only $k+1$'th order in $\psi$ (and not $k+2$'th order), 
and there is thus no loss of derivatives.
We describe this in more detail in Section~\ref{estimatehierarchintroold} immediately below.

Before proceeding to the quasilinear~\eqref{theequationzero}, 
let us remark that there is another sense in which
the estimate of~\eqref{topenergyidentityofthatpaper} is better than~\eqref{simplifiedestimatepweighthigherorderlargea}, namely one has the exact constant $1$
in front of the term  $\Efancypk (\tau_0)[\psi]$ on the right hand side. (Note 
that~\eqref{topenergyidentityofthatpaper} has $\leq$ and not $\lesssim$.)
This is because  these energies  are defined as exact energy fluxes  (cf.~\eqref{fromexactfluxes})
arising from identities. This fact was exploited in the scheme 
of~\cite{DHRT22} to obtain the most precise results, at the expense, however,  of having to always carry forward the distinction
between the quantities $\Efancypk$ and $\Epk$.  Since we shall not be able to make use of
this in the present paper, we will describe a simplified scheme where we do not carry this 
information further. In what follows then, we may thus apply the relation
\begin{equation}
\label{referbutbecareful}
\Efancypk (\tau)[\psi] \sim \Epk(\tau)[\psi],
\end{equation}
which follows from appropriate elliptic estimates, to already substitute 
$\Epk(\tau)$ in~\eqref{topenergyidentityofthatpaper}, replacing $\leq$ by $\lesssim$.

\subsubsection{The estimate hierarchy on ${\rm L}$-slabs for the quasilinear equation~\eqref{theequationzero}}
\label{estimatehierarchintroold}

To describe the resulting estimate when~\eqref{genidentityintro} is applied with $g=g(\psi,x)$ to solutions
$\psi$ of the quasilinear equation~\eqref{theequationzero}, 
it is useful to introduce the additional ``master'' energies:
\begin{eqnarray}
\label{defforintronondeg}
\Xpk(\tau_0,\tau_1)					&:=&		 \sup_{\tau_0\le\tau\le \tau_1} \Epk (\tau')
			+ \int_{\tau_0}^{\tau_1}
			 \,\, \Epminusonek'(\tau') 
			 d\tau'  + \ldots, \\
\label{defforintrozero}
{}^\chi \Xpk(\tau_0,\tau_1)				&:=&  \sup_{\tau_0\le\tau\le \tau_1}\Epk (\tau)
+\int_{\tau_0}^{\tau_1}\left( \, \,   {}^{\chi} \Epminusonek' (\tau') + \Epminusonekminusone' (\tau') \right)d\tau' +\cdots, \\
\label{defforintroone}
{}^\rho \Xpk(\tau_0,\tau_1)&:= &
\sup_{\tau_0\le\tau\le \tau_1}\Epk(\tau)+\int_{\tau_0}^{\tau_1}
 {}^\rho \Epminusonek'(\tau) d\tau
+\cdots , \,\, 
\end{eqnarray}
for all
$\delta\le p\le 2-\delta$, where we have omitted various additional terms (including null fluxes). The
master energies~\eqref{defforintrozero} and~\eqref{defforintroone} 
combine
all the terms controlled on the left hand side of estimates~\eqref{simplifiedestimatepweighthigherorderlargea} and~\eqref{topenergyidentityofthatpaper}, respectively, while the master
energy~\eqref{defforintronondeg} satisfies
\begin{equation}
\label{fundrelhere}
\Xpkminusone(\tau_0,\tau_1)		\leq {}^\chi \Xpk(\tau_0,\tau_1)	.
\end{equation}
The case $p=0$ is slightly anomalous and we shall not give the formulae here. (See the introduction
of~\cite{DHRT22} or already Section~\ref{masterenergiessec} for the full definitions
of the master energies.)

We consider then a solution $\psi$ 
of~\eqref{theequationzero} defined on a 
 spacetime slab $\mathcal{R}(\tau_0,\tau_1:=\tau_0+{\rm L})$ of time-length ${\rm L}\ge 1$ satisfying an additional 
 lower order smallness estimate
\begin{equation}
\label{contingent}
\Xzerolesslessk(\tau_0,\tau_1) \lesssim \varepsilon
\end{equation}
which should be viewed as a bootstrap assumption.  One obtains from assumptions 1.~and 
2.~and our basic assumptions on the 
quasilinearity $g(\psi,x)$ and the semilinearity $N(\partial \psi, \psi,x)$
the following hierarchy of estimates
for $p=2-\delta$ and $p=1$, and for $\delta>0$ sufficiently small:
\begin{align}
%\Efancypk(\tau),\qquad
%c\,
{}^\rho \Xpk(\tau_0,\tau_1)
%&\leq 
&\lesssim
%\Efancypk(\tau_0) 
\Epk(\tau_0)
+{}^\chi \Xpkminusone(\tau_0,\tau_1)	
&+\sqrt{{}^\rho \Xpk (\tau_0,\tau_1)}\sqrt{{}^\rho\Xzerok(\tau_0,\tau_1)}\sqrt{\Xplesslessk(\tau_0,\tau_1)}  +\cdots\,\,
%\\
\label{highes}
%&\qquad
%C
&+{}^\rho \Xpk(\tau_0,\tau_1)
\sqrt{\Xzerolesslessk(\tau_0,\tau_1)} \sqrt{{\rm L}},\\
\label{secondhigh}
{}^\chi \Xpkminusone (\tau_0,\tau_1) &\lesssim \Epkminusone(\tau_0) 
&+  
\sqrt{{}^\rho \Xpkminusone (\tau_0,\tau_1)}\sqrt{{}^\rho\Xzerokminusone(\tau_0,\tau_1)}\sqrt{\Xplesslessk(\tau_0,\tau_1)} +\cdots\,\,
&+{}^\rho \Xpk(\tau_0,\tau_1)
\sqrt{\Xzerolesslessk(\tau_0,\tau_1)} \sqrt{{\rm L}}.
\end{align}
The hierarchy extends to $p=0$ in a suitable form, which is anomalous and we shall omit here.

We explain briefly the origin and significance of the various terms in~\eqref{highes}--\eqref{secondhigh}.

Note that the left hand side of~\eqref{highes} is an energy with $k$ commutations whereas the left
hand side of~\eqref{secondhigh} is with $k-1$, i.e.~is one order lower.  
Equation~\eqref{secondhigh} arises from 
applying the linear inhomogeneous estimate~\eqref{simplifiedestimatepweighthigherorderlargea} to $\psi$, viewing $\psi$ as a solution to the inhomogeneous equation $\Box_{g_0}\psi =F:= (\Box_g-\Box_{g_0})\psi
+N(\partial\psi,\psi, x)$, whereas equation~\eqref{highes}  arises from integrating the 
divergence identity corresponding to the currents~\eqref{rpweightedcurrentsintro} 
where we replace $g_0$
with $g(\psi,x)$.

The term
${}^\chi \Xpkminusone(\tau_0,\tau_1)$ on the right hand side of~\eqref{highes}
arises precisely from the linear
term on the right hand side of~\eqref{topenergyidentityofthatpaper} in view of the 
definition~\eqref{defforintrozero} and the relation~\eqref{referbutbecareful} (which we note
also relies on~\eqref{contingent} in the nonlinear setting). The key to the scheme
is that this term corresponds precisely to the term on the left hand side of~\eqref{secondhigh}.
Thus, examining the terms on the right hand side of~\eqref{secondhigh}, 
one may already replace this term by $\Epkminusone(\tau_0)$ which in turn may be absorbed
into the term $\Epk(\tau_0)$ already present.

The nonlinear  term $\sqrt{{}^\rho \Xpk (\tau_0,\tau_1)}\sqrt{{}^\rho\Xzerok(\tau_0,\tau_1)}\sqrt{\Xplesslessk(\tau_0,\tau_1)}$ on the
right hand side of~\eqref{highes} (together with similary omitted terms) arises from the
 semilinearity $N(\partial\psi,\psi,x)$
in the region $r\ge R$.  The fundamental point is that the appropriately measured total $r^p$ weight in these terms 
is no higher than
the $r^p$ weight  of the energy on the left hand side. 
The role of the null structure
assumption (see already Assumption~\ref{largeanullcondassumption}) 
on $N(\partial\psi,\psi,x)$ is precisely to ensure this property. 

Identical considerations apply to the term
$\sqrt{{}^\rho \Xpkminusone (\tau_0,\tau_1)}\sqrt{{}^\rho\Xzerokminusone(\tau_0,\tau_1)}\sqrt{\Xplesslessk(\tau_0,\tau_1)} $ on the right
hand side of~\eqref{secondhigh}, which is one order lower.

Finally, the nonlinear term ${}^\rho \Xpk(\tau_0,\tau_1)
\sqrt{\Xzerolesslessk(\tau_0,\tau_1)} \sqrt{{\rm L}}$ on the right hand side 
of~\eqref{highes} and~\eqref{secondhigh} arises in both cases from both quasilinear error
terms and semilinear terms in $r\le R$ (for the semilinear terms, however,
as remarked earlier,
for~\eqref{secondhigh} we could here replace $k$ by $k-1$ in the last term on the right
hand side).
Note that in the case of~\eqref{highes}, this term is of the same order as the left hand side, 
while it is of one order more than the left hand side
in the case of~\eqref{secondhigh}, hence the ``loss of derivatives'' in~\eqref{secondhigh}.
Note that in this estimate we bound a spacetime integral 
using the supremum term in the definition~\eqref{defforintroone}
and Sobolev estimates generating $\sqrt{\Xzerolesslessk(\tau_0,\tau_1)}$, at the expense of
an estimate that depends explicitly on the length ${\rm L}$ of the slab.  Exploiting the presence
of the spacetime integrated terms in~\eqref{defforintronondeg}, 
however, the ${\rm L}$-dependence is only $\sqrt{{\rm L}}$.

\subsubsection{The pigeonhole argument and iteration in consecutive spacetime slabs}
\label{dyadicintro}
By our above comments, we may add the two estimates~\eqref{highes} 
and~\eqref{secondhigh} and obtain (under our bootstrap
assumption~\eqref{contingent})
\begin{equation}
\label{addedhereletssee}
{}^\rho \Xpk(\tau_0,\tau_1)
+{}^\chi \Xpkminusone(\tau_0,\tau_1)
\lesssim
\Epk(\tau_0)
+
\sqrt{{}^\rho \Xpk (\tau_0,\tau_1)}\sqrt{{}^\rho\Xzerok(\tau_0,\tau_1)}\sqrt{\Xplesslessk(\tau_0,\tau_1)}
+{}^\rho \Xpk(\tau_0,\tau_1)
\sqrt{\Xzerolesslessk(\tau_0,\tau_1)} \sqrt{{\rm L}},
\end{equation}
for $p=2-\delta, 1$. (Again, we omit  the precise form of the inequality analogous to~\eqref{addedhereletssee} for the $p=0$ case.)
In view of~\eqref{fundrelhere},  this allows
us to obtain global existence and estimates on an ${\rm L}$-slab under the assumption
\begin{equation}
\label{justthis}
\Eonekprimeprime(\tau_0) \lesssim \varepsilon, \qquad
\Ezerokprime(\tau_0)\lesssim \varepsilon {\rm L}^{-1} 
\end{equation}
for any $\lesslessk \le k'-1 < k' \le k'' \le k$. (Note we indeed need to assume estimates for some $p>0$ as in the first inequality of~\eqref{justthis} because
the anomalous $p=0$ case requires an estimate on a higher $p$ weighted norm in order for the estimates to close globally
in the slab.)

The above assumption~\eqref{justthis} on its own is insufficient to obtain 
existence beyond the fixed ${\rm L}$-slab $\mathcal{R}(\tau_0,\tau_1)$. If we embed~\eqref{justthis}, however, in a suitable hierarchy of ${\rm L}$-dependent estimates,
then not only may we propagate estimates in an ${\rm L}$-slab using~\eqref{addedhereletssee} and its $p=0$
analogue, but, we may derive---with an appeal to the pigeonhole principle in view of
the fundamental relation~\eqref{downwithhierarchy}---new 
estimates at the ``top'' $\Sigma(\tau_1)$ of the slab which allow one to iterate the estimate in suitable consecutive
slabs, to infer \emph{global} existence
and stability.

An example of such an ``iterating'' hierarchy is 
\begin{equation}
\label{iteratinghierarchy}
\Eonek(\tau_i) \lesssim \varepsilon  {\rm L}_i^{\beta}, \qquad 
\Etwominusdelkminustwo(\tau_i) \lesssim \varepsilon {\rm L}_i^{\beta}, \qquad 
\Ezerokminustwo(\tau_i)\lesssim\varepsilon {\rm L}_i^{-1+\beta}, \qquad
 \Eonekminusfour(\tau_i) \lesssim\varepsilon {\rm L}_i^{-1+\delta+\beta},  \qquad
 \Ezerokminussix(\tau_i) \lesssim\varepsilon  {\rm L}_i^{-2+\delta+ \beta},
\end{equation} 
where $\beta>0$ is suitably small and $\tau_i:= \alpha^i$ where $\alpha\gg 1$ is suitably large,  ${\rm L}_i=\tau_{i+1}-\tau_i$
and the constants implicit
in $\lesssim$ are chosen accordingly.

This yields the global existence statement of the main theorem
and the asymptotic stability statement, for instance the decay
\begin{equation}
\label{andthedecayandthedecay}
 \Ezerokminussix(\tau) \lesssim\varepsilon  \tau^{-2+\delta+ \beta}
\end{equation}
for the lower order unweighted energy.

 Let us note that the above hierarchy~\eqref{iteratinghierarchy}  upon iteration appears to yield
 $\tau^\beta$ growth for the highest order norm $\Eonek(\tau)$. Revisiting 
 however~\eqref{addedhereletssee} in the global (non-dyadic!)~slab $\mathcal{R}(1,\tau)$, one may estimate the nonlinear 
 terms on the right hand side by summing the estimates on the dyadic subintervals already obtained
 and thus obtain the uniform boundedness statement
 \begin{equation}
 \label{uniformboundednessintro}
 \Eonek(\tau)\lesssim \varepsilon.
 \end{equation} 
 Thus we obtain also the top order orbital stability statement of the main theorem.

Note that the iterating hierarchy~\eqref{iteratinghierarchy} presented in this section 
was not the one actually 
used in Section~6.3.4 of~\cite{DHRT22} 
but is related to that
of Section 6.2.5 of that paper (which was there used however only in the context of the restricted case (ii) considered in that section).
In contrast, the original hierarchy of Section~6.3.4 of~\cite{DHRT22}
exploited the distinction between
the two energies~\eqref{referbutbecareful}   to obtain slightly more precise results, in particular
to obtain the uniform boundedness~\eqref{uniformboundednessintro} 
without having to revisit the estimates in the global slab $\mathcal{R}(1,\tau)$. 
As we shall not be able to do this
in the present paper, we have described in this section
exactly the simplified scheme which we shall actually
use for the general $|a|<M$ case. (See already Section~\ref{proofofmainthelargea}.)

\subsection{The wave-packet localised coercive currents and the top order energy estimate
for $|a|<M$}
\label{freqloccoercurintro}

We describe in this section in more detail the main new ideas of the present paper.

We first review in Section~\ref{reviewoffreq}  the basic frequency analysis
associated to $\Box_{g_{a,M}}$ on Kerr as used in~\cite{partiii}, distinguishing
the $(t^*,\phi^*)$ analysis (for which we will introduce here a rudimentary pseudo-differential 
calculus) from Carter's full separation.
 We then proceed in Section~\ref{reviewofILEDandimprove} with a discussion of 
an improvement to the local integrated decay estimate
(assumption~1.~of~\cite{DHRT22} reviewed
in Section~\ref{reviewoffirstassumption}) which
can easily be shown from~\cite{partiii} for the full subextremal range $|a|<M$ of Kerr and will play
an important role in our argument. With this, we shall describe our wave-packet localised constructions in 
Section~\ref{wavepacketlocintrosec}, replacing
the role of 2.~above. We will then describe in Section~\ref{toporderenerintro} how in conjunction with the improved integrated decay estimate strengthening 1., these result in a suitable top order estimate. Finally, we will briefly remark how to complete the proof in Section~\ref{finishit}.

\subsubsection{Frequency analysis of $\Box_{g_{a,M}}$: The $(t^*,\phi^*)$ frequency analysis and pseudo-differential calculus and Carter's full separation}
\label{reviewoffreq}

The Killing symmetries $\partial_{t^*}$ and $\partial_{\phi^*}$ of the Kerr metric (here
$(t^*,\phi^*, r,\theta)$ denote Kerr star coordinates---see already Section~\ref{coordsheresec}) 
allow
us to define frequency projections via the Fourier transform:
\begin{equation}
\label{Fourierintro}
\mathfrak{F}[\Psi] (r,\theta, \omega,m) := \int_{-\infty}^{\infty} \int_0^{2\pi} e^{i\omega t^*} e^{-im \phi^*} 
\Psi (r,\theta,t^*, \phi^*)dt^* d\phi^*,
\end{equation}
where $\omega\in \mathbb R$, $m\in \mathbb Z$.
It follows then that given a suitable smooth function  $0\le \musicchi(\omega, m)\le 1$, the
``projection'' operator 
\begin{equation}
\label{definitionofPintro}
P = \mathfrak{F}^{-1}( \musicchi \, \mathfrak{F}),
\end{equation}
defined on an appropriate functional domain,
commutes with $\Box_{g_{a,M}}$. Under the usual decay assumptions on $\omega$-derivatives
and $m$-differences of~$\musicchi(\omega, m)$, operators of the form~\eqref{definitionofPintro} will be the simplest examples
of what we will term \emph{$(t^*,\phi^*)$ pseudodifferential operators}.  Versions of
the standard pseudodifferential calculus apply for operators of this type and operators that
arise by commuting~\eqref{definitionofPintro} with suitable spacetime differential operators. 
See already Section~\ref{elementarycalculus}. In practice, this will allow us
to commute $\Box_{g(\psi,x)}$ with appropriate zeroth order pseudodifferential projection operators like~\eqref{definitionofPintro} 
above and estimate without loss of derivatives, in an entirely analogous
way as commutation with usual differential operators.

The Fourier transform~\eqref{Fourierintro} already allows us to define the superradiant frequencies
$(\omega, m)$ to be those 
satisfying~\eqref{superradianthere}.
(Note that such frequencies exist only when $a\ne 0$. For convenience, let us assume $a\ne 0$ in what follows.)
 All other frequencies $(\omega, m)$ will be known as \emph{nonsuperradiant}. Let us note that if $\musicchi$ in~\eqref{definitionofPnintro} is supported
in the nonsuperradiant frequencies, then
\begin{equation}
\label{trytolocalise}
\int_{\mathcal{H}^+}  J^T[ P\psi] \cdot{\rm n}_{\mathcal{H}^+}\ge 0
\end{equation}
where $J^T_{\mu}$ denotes the standard energy current corresponding to 
the stationary Killing vector field $T=\partial_{t^*}$. 
(The separation of the superradiant and nonsuperradiant part of the solution
via projections as above to ensure in particular~\eqref{trytolocalise} was already exploited
in~\cite{DafRodKerr}.)

Let us note already that another source of error terms which will arise
in this paper from applying pseudodifferential operators concern when we try to localise
integrated coercivity and nonnegativity statements like~\eqref{trytolocalise} to smaller domains. 
These error terms will  be ``linear'' but will typically---but not always!---be lower order. 

On top of the Fourier transform~\eqref{Fourierintro}, we shall also use 
 Carter's full separation~\cite{carter1968hamilton} 
 to analyse functions $\Psi$, i.e.~we will apply to $U=\mathfrak{F}[\Psi]$
 the further transformation
\begin{equation}
\label{carterfortheintro}
\mathfrak{C}[U] (r,\ell, \omega,m ) =    \int_{0}^{\pi} U(r,\theta,\omega,m)  \, \overline{S_{m\ell}^{(a\omega)}}(\theta)  \, d\theta       .
\end{equation}
Here $S_{m\ell}^{(a\omega)}$ 
label the so-called oblate spheroidal harmonics. A rescaled and adjusted version of
$\mathfrak{C}[U] $, denoted $u(r, \ell, \omega, m)$,
 satisfies an ordinary differential equation
\begin{equation}
\label{odesgoodpdesbad}
u''+\omega^2 u - V(\omega, m, \ell)u  = G
\end{equation}
where the inhomogeneity $G$ arises from $\Box_{g_{a,M}}\Psi$ and prime-notated
differentiation $'$ is to be understood with respect
to a rescaled version of $r$ denoted $r^*$, which satisfies $r^*\to -\infty$ as $r\to r_+$. We shall only apply~\eqref{carterfortheintro} 
in some  region $r_+<r\le R_{\rm freq} <R$
outside the Kerr event horizon~$\mathcal{H}^+$.

The additional frequency parameter $\ell$ allows one to localise the
trapping phenomenon to the frequency range $\omega^2\sim \Lambda(\omega, m, \ell)\gg 1$, where $\Lambda$ is related to the eigenvalues associated to the oblate spheroidal harmonics $S_{m\ell}^{(a\omega)}$.
The proof of~\eqref{simplifiedestimatepweighthigherorderlargea} 
in~\cite{partiii} was based on fixed $(\omega, m, \ell)$-frequency 
energy currents  which in particular capture   the unstable property of trapping. This in turn
is related to the fact that the potential $V$ in~\eqref{odesgoodpdesbad} 
has a unique non-degenerate maximum $\hat{V}_{\rm max}$ which
in the true trapping frequency range satisfies $\omega^2\approx \hat{V}_{\rm max}$.
See already Section~\ref{translationsection} for a review of these currents in the 
formalism of~\cite{partiii}
and their relation with  physical space currents~\eqref{labelthecurrents}.

As noted at the beginning of this paper, we emphasise that we shall not attempt to define pseudodifferential operators using projections defined by~\eqref{carterfortheintro} or commute such operators with 
$\Box_{g(\psi,x)}$.   The role of Carter's separation~\eqref{carterfortheintro} will be to revisit~\cite{partiii} in order to prove a stronger
version of~\eqref{simplifiedestimatepweighthigherorderlargea} (see Section~\ref{reviewofILEDandimprove} immediately below) and to infer coercivity of our currents modulo error terms precisely
related to this stronger estimate (see already Section~\ref{wavepacketlocintrosec}).

\subsubsection{An improved integrated local decay estimate}
\label{reviewofILEDandimprove}
Before discussing the replacement of assumption 2.~of~\cite{DHRT22}, let us
describe how a stronger version of the integrated local decay assumption~\eqref{simplifiedestimate}
may be obtained in the Kerr case in the full subextremal regime $|a|<M$. For details, see already Section~\ref{refinedlinearhere}.

We first note that in~\cite{partiii}, 
a stronger estimate than~\eqref{simplifiedestimate} is in fact already shown, 
which can only however be expressed in terms of the frequency
analysis with respect to Carter's full separation reviewed in Section~\ref{reviewoffreq}. 
In particular, restricted to superradiant
frequencies~\eqref{superradianthere}, the bulk integral does not in fact degenerate, 
whereas, for the non-superradiant frequencies, the degeneration may be localised
to an $(\omega, m, \ell)$-frequency dependent fixed $r$-value related to the maximum 
$\hat{V}_{\rm max}$
of the potential $V$ in~\eqref{odesgoodpdesbad} when this maximum  moreover satisfies
$\hat{V}_{\rm max}\approx \omega^2$. 
See already
Section~\ref{capturingprecise}.

What is more however, by exploiting appropriate fixed-$(\omega, m, \ell)$ currents for~\eqref{odesgoodpdesbad}, it turns out that the frequency localised estimate of~\cite{partiii} can further be improved
in the frequency range
\begin{equation}
\label{ellipticrangeintheintro}
V-\omega^2 \gtrsim b_{\rm elliptic} \omega^2
\end{equation}
 in order to gain suitable powers of $\omega$ and
$m$, rendering the estimate effectively elliptic.
Thus, after commutation, we in fact control a stronger quantity than~${}^\chi \Xpkminusone (\tau_0,\tau_1)$ in~\eqref{secondhigh}, a quantity we shall here
denote as
\begin{equation}
\label{controlcontrol}
{}^{\chi\natural}_{\scalebox{.6}{\mbox{\tiny{\boxed{+1}}}}}\,\Xpkminusone(\tau_0,\tau_1), 
\end{equation}
which in fact contains one order  more control of $\psi$ restricted to certain 
$(\omega, m, \ell)$-frequencies and
regions of spacetime (hence, the ${\scalebox{.6}{\mbox{\tiny{\boxed{+1}}}}}$). See already Sections~\ref{ellipticimprovement}--\ref{finalimprovedsec}. 

Controlling~\eqref{controlcontrol} will allow us
to in fact accept certain \emph{top order} errors in our coercivity, provided that these are localised in
both physical and Carter frequency space, to this ``elliptic''
region.
 We will see this in practice in Section~\ref{wavepacketlocintrosec} below (see 
 already~\eqref{globalstatementfirstintroline}).

\subsubsection{The frequency projections $P_n$, wave-packet localised currents and their coercivity}
\label{wavepacketlocintrosec}

With the above preliminaries, we are ready to describe in more detail the novel replacement of 
assumption~2.~of~\cite{DHRT22}
and how it leads to a weaker version of~\eqref{topenergyidentityofthatpaper}---still sufficient for
our purposes however!

Fix a spacetime slab $\mathcal{R}(\tau_0,\tau_1)$  and let 
 $\psi$ be a smooth function defined on that slab.

We first introduce the frequency projections which will allow
us to partition $\psi$ into wave packets. 

We
define $\musicchi_n(\omega,m)$ 
to be appropriate smooth functions, $n=0,\ldots, N$,
which will make the projection
\begin{equation}
\label{definitionofPnintro}
P_n = \mathfrak{F}^{-1}( \musicchi_n\, \mathfrak{F})
\end{equation}
into a zeroth order $(t^*,\phi^*)$-pseudodifferential operator,
and decompose $\psi$ into wavepackets 
\begin{equation}
\label{decompintopack}
\psi_n := P_n (\chi^2_{\tau_0,\tau_1}\psi)
\end{equation}
(defined with the help of a spacetime cutoff function $\chi_{\tau_0,\tau_1}$ which localises
$\psi$ to a spacetime slab $\mathcal{R}(\tau_0,\tau_1)$).

As mentioned already, 
our decomposition into wave packets  $\psi_n$ given by~\eqref{decompintopack}
will in particular separate out the non-superradiant
part of $\psi$ from the superradiant part supported on frequencies~\eqref{superradianthere}. It  will further
localise the original $\psi$, however, in finer frequency ranges based on the value of $\omega/am$.
Thus we will have in fact a set $\psi_0, \ldots, \psi_{N_s}$ of  superradiant wave
packets and a remaining set $\psi_{N_s+1}, \ldots, \psi_N$ of nonsuperradiant wave packets.
(See already Sections~\ref{projectiondef} and~\ref{fixingthecovering}.)
The fundamental  goal of the decomposition into~\eqref{decompintopack}  is the following:
\begin{enumerate}
\item 
For non-superradiant packets $\psi_n$, by requiring that the decomposition
be sufficiently fine,
one can ensure that  trapping is localised in physical space 
to an (arbitrary) small $r$-range $\mathcal{D}_n$, depending only on $n$, 
on which there is a fixed timelike Killing field 
\begin{equation}
\label{fixedtimelikevf}
T+\alpha_n \Omega_1
\end{equation}
in the span of $T=\partial_t$ and $\Omega_1=\partial_\phi$. We note that such a vector field
and a related decomposition already played a role in showing the boundedness 
part of statement~\eqref{simplifiedestimate}
in Section 13 of~\cite{partiii}. 
\item
For superradiant packets $\psi_n$, on the other hand, which by the above  remarks are not  subject to trapping, again
by requiring that the decomposition be sufficiently fine, one can ensure that   there is an
$r$-range depending only on $n$ where the wave operator $\Box_{g_{a,M}}$ 
is effectively elliptic and the derivative  improvement
of Section~\ref{reviewofILEDandimprove} applies.
\end{enumerate}

Property~1.~above is to be compared with the discussion of 
Section~\ref{reviewofILEDandimprove}, where we note that for fixed $(\omega, m, \ell)$, the degeneration associated to trapping is associated to a unique $r(\omega,m,\ell)$ related
to the maximum $V_{\rm max}(\omega, m, \ell)$. As remarked in Section~\ref{reviewofILEDandimprove} above, however,
true trapping requires also this maximum
to satisfy  $V_{\rm max}(\omega, m, \ell)\approx \omega^2$ and this only happens at an $r$-value which is uniquely
determined by (and depends continuously on) $\omega/am$. See already 
Section~\ref{propsnonsuper}.

Property~2.~above is in fact related to the original proof of $(\omega, m, \ell)$-localised
energy estimates for superradiant frequencies in~\cite{partiii}, where for fixed $\ell$,
the existence of an $r$-interval satisfying~\eqref{ellipticrangeintheintro} was exploited.
The fact that one may choose such an interval depending only on $\omega/am$  follows
from the monotonicity properties of $V$ in $\ell$. The existence then of a suitable interval
\underline{depending now only on $n$} follows by continuity for a sufficiently fine decomposition. See already 
Section~\ref{propssuper}.

The above properties of the decomposition allow for the construction of $n$-dependent currents 
\begin{equation}
\label{reffertothecurrent}
{J^{{\rm main},n}_{g_0}}, 
\qquad K^{{\rm main}, n}_{g_0}
\end{equation}
related by~\eqref{genidentityintro}, which in turn  have the following coercivity properties:
\begin{itemize}
\item
In the region $r\ge R_{\rm freq}$ for large enough $R_{\rm freq}\le R/4$, the currents are $n$-independent and have
the good boundary and bulk coercivity properties identical to those of their analogues in~\cite{DHRT22} described in Section~\ref{reviewofsecondassumption} above.
\item
For all $n$, 
 the boundary terms   $\int {J_{g_0}^{{\rm main}, n}}[\psi_n]\cdot{\rm n}_{\mathcal{S}}$, integrated
 on general subsets $\mathcal{S}(\bar\tau_1,\bar\tau_2)\subset \mathcal{S}$, are nonnegative
 up to allowed errors arising from pseudodifferential localisation. See already 
 Proposition~\ref{asintheproofofthisprop}.
For superradiant $n$, we in fact have the pointwise coercivity 
\begin{equation}
\label{pointwisesuperyes}
{J_{g_0}^{{\rm main}, n}}[\psi]   \cdot {\rm n}_{\mathcal{S}} \ge 
c( |L\psi|^2 +|\underline{L}\psi|^2
+|\nablaslash\psi|^2)  - C |\psi|^2 \qquad \text{\ on\ }\mathcal{S}, 
\end{equation}
(see already Proposition~\ref{bcgsc}).
\item 
For non-superradiant frequency ranges, $\int K^{{\rm main},n}_{g_0}[\psi_n]$
is  coercive   after integration in spacetime \emph{in the region
where} ${J_{g_0}^{{\rm main}, n}}\cdot {\rm n}_{\Sigma(\tau)}$ \emph{fails to be coercive},
up to lower order terms, but also  top order terms in the elliptic frequency range discussed in Section~\ref{reviewofILEDandimprove}, and errors
arising from pseudodifferential localisation.
More precisely, we may define for each  $n$ a sufficiently regular
degeneration function $\chi_n(r)$, vanishing identically in the region $\mathcal{D}_n$ 
discussed above,
and with 
\begin{equation}
\label{doesntdegenerate}
\chi_n=1\text{\ on\ }\mathcal{M}\setminus \widetilde{\mathcal{D}}_n,
\end{equation} 
where
$\widetilde{\mathcal{D}}_n\supset \mathcal{D}_n$ is a slightly larger region,
such that
\begin{eqnarray}
\label{wemaythuswritenewhere}
\int  K^{{\rm main}, n}_{g_0} 
 [\mathring{\mathfrak{D}}^{\bf k} \psi_{n}] 
%E^{\alpha\beta}_{g_0,n}(x) \partial_\alpha\psi_{n,{\bf k}} \partial_\beta\psi_{n,{\bf k}}
&\gtrsim&  
\int  \chi_ n r^{-2}  \left( | L\mathring{\mathfrak{D}}^{\bf k} \psi_{n} |^2 + | \underline{L}\mathring{\mathfrak{D}}^{\bf k} \psi_{n} | ^2 
+|\slashed\nabla \mathring{\mathfrak{D}}^{\bf k} \psi_{n} |^2   \right)   \\
\label{globalstatementfirstintroline}
&&\qquad   -C\, \, {}^{\chi\natural}_{\scalebox{.6}{\mbox{\tiny{\boxed{+1}}}}} \, \Xpkminusone(\tau_0,\tau_1)\\
\nonumber
&&\qquad - \ldots 
\end{eqnarray}
where integration is over
 $\mathcal{R}(\bar\tau_1,\bar\tau_2)\cap \{r_+ \le r \le R_{\rm fixed}\} $,
where  $\mathcal{R}(\bar\tau_1,\bar\tau_2)\subset \mathcal{R}(\tau_1,\tau_2)$ is
a general subregion, and where $\mathring{\mathfrak{D}}^{\bf k}$ is a string of commutation
operators involving only $T$ and $\Omega_1$ (and thus commutes with $P_n$).
Inequality~\eqref{wemaythuswritenewhere} 
should be compared with~\eqref{wemaythuswrite} (after commutation with
$\mathring{\mathfrak{D}}^{\bf k}$). Note the presence of the
stronger quantity~\eqref{controlcontrol} on line~\eqref{globalstatementfirstintroline} which
now includes also top order terms. 
The omitted error terms arise
from pseudodifferential localisation. 

On the region $\widetilde{\mathcal{D}}_n$ on the other
hand, we have pointwise  coercivity of the boundary current, i.e.
\begin{equation}
\label{coercofboundintrohere}
{J_{g_0}^{{\rm main},n}} [\psi] \cdot {\rm n}_{\Sigma(\tau)} \gtrsim |L\psi|^2 +|\underline{L}\psi|^2
+|\nablaslash\psi|^2 +|\psi|^2 \text{\rm\ on\ } \widetilde{\mathcal{D}}_n \cap \Sigma(\tau).
\end{equation}
\item
For superradiant frequency ranges, $\int K_{g_0}^{{\rm main}, n}[\psi_n]$ is  coercive
 after integration in spacetime \emph{without degeneration}, up to lower order terms
and top order terms in the elliptic frequency range, and errors arising from
pseudodifferential localisation. That is to say, we have~\eqref{wemaythuswritenewhere} 
without the presence
of the degeneration function $\chi_n$.  For consistency, it will be convenient
to set $\widetilde{\mathcal{D}}_n:=\emptyset$ and define $\chi_n:=1$ identically. 
Then~\eqref{doesntdegenerate} and~\eqref{wemaythuswritenewhere} hold as stated for  both superradiant and nonnsuperradiant
frequency ranges $n$.
\end{itemize}

To prove our bulk and boundary coercivity 
properties~\eqref{wemaythuswritenewhere},~\eqref{coercofboundintrohere}  and~\eqref{pointwisesuperyes}
we build our current from
various components (see already Section~\ref{introducingthecurrents}).

The most delicate component of the current ensuring the bulk coercivity~\eqref{wemaythuswritenewhere}
will be a current of the form $\tilde{J}^{y_n}$ associated to a vector field $2y_n(r)\partial_{r^*}$
(see already~\eqref{introducingthecurrentyinappendix}), where $y_n(r)$ is a non-decreasing function 
of $r$ which \underline{vanishes identically} in $\mathcal{D}_n$.
The associated bulk current  $\tilde{K}^{y_n}$ then also vanishes identically in
$\mathcal{D}_n$. 
For superradiant ranges $n$, in which case $\mathcal{D}_n=\emptyset$,
  we choose $y_n$ to simply change sign at some value
of $r$ in the elliptic region provided by 2.

It is in order to show 
nonnegativity of  $\tilde{K}^{y_n}$ modulo our ``allowed'' error terms
that
 we must use 
Carter's separation and the current formalism of~\cite{partiii}.
This is because the allowed error term~\eqref{globalstatementfirstintroline} itself is characterised
by the elliptic improvement discussed in Section~\ref{reviewofILEDandimprove}, 
which in turn depends on
Carter's separation.
To show then the requisite nonnegativity, we 
frequency decompose $\tilde{K}^{y_n}[\mathring{\mathfrak{D}}^{\bf k} \psi]$
and examine its action on fixed frequencies $(\omega, m, \ell)$.
The required coercivity can then be reduced to a fixed $(\omega, m,\ell)$-frequency coercivity statement with error terms which are allowed to be $(\omega, m, \ell)$-dependent but
must be controlled by~\eqref{globalstatementfirstintroline}. 
In practice, this means that at the fixed $(\omega, m, \ell)$ level,
one can even accept top order failure of positivity  in the
region~\eqref{ellipticrangeintheintro},
as this corresponds precisely
to  the elliptic region discussed in Section~\ref{reviewofILEDandimprove}, 
and thus suitable control is indeed
provided by~\eqref{globalstatementfirstintroline}.
It thus suffices to  design $y_n$ so that the fixed $(\omega, m, \ell)$-frequency
bulk has nonnegativity properties outside of~\eqref{ellipticrangeintheintro}. 
It is precisely the properties of the potential $V$ discussed above
which ensure that this is indeed possible.
See already Propositions~\ref{bulkfornonsuperduper} and~\ref{superrangepropfreq}.

The
 boundary coercivity property~\eqref{coercofboundintrohere} 
  on~$\Sigma(\tau)\cap\widetilde{\mathcal{D}}_n$ 
 for nonsuperradiant frequency regimes
  is ensured by adding a large multiple times the current
  \begin{equation}
  \label{addingthisofcourse}
  J^{T+\alpha_{n}\Omega_1}
  \end{equation}
  associated to~\eqref{fixedtimelikevf}, which we recall is timelike in $\mathcal{D}_n$
(and thus also in some $\widetilde{\mathcal{D}}_n\supset \mathcal{D}_n$), suitably cut off for large $r$.
  The boundary coercivity property~\eqref{pointwisesuperyes} on the other hand for superradiant
  frequency regimes is ensured by
  adding a large multiple of~\eqref{addingthisofcourse} with the choice
  $\alpha_n:= \frac{a}{2Mr_+}$ in which case $T+\alpha_{n}\Omega_1$ coincides with the Hawking
  vector field~\eqref{Hawkingdef} which is future directed, null on the event horizon $\mathcal{H}^+$. Requiring
  now $r_0$ to be sufficiently near $r_+$, and adding a small multiple times the
  redshift current of~\cite{mihalisnotes}, one indeed ensures~\eqref{pointwisesuperyes}.
Adding the redshift  current also ensures non-degeneracy of bulk terms near $\mathcal{H}^+$.

Finally, the integrated boundary coercivity on $\mathcal{S}$ 
property valid for non superradiant frequencies essentially
follows from the analogue of~\eqref{trytolocalise} applied to the current~\eqref{addingthisofcourse}.
See already Section~\ref{fundcoercivityboundapp}.

The remaining coercivity properties referred to above and the $n$-independence of the currents
in the region $r\ge R_{\rm freq}$ are ensured
by adding additional well-chosen current components. As mentioned above, one must cut off
the component~\eqref{addingthisofcourse}
so as to coincide with an $n$-independent multiple of $J^T$ for large $r$. This is somewhat delicate
in the case of non-superradiant $n$ (see already~\eqref{seehere}). 
For superradiant $n$ on the other hand, the cutoff can be chosen to be supported in the $r$-range (depending only on $n$) for which the $\Box_{g_{a,M}}$ is elliptic (recall property~2.~above),
and thus all error terms generated by the cutoff can be easily absorbed by~\eqref{globalstatementfirstintroline}.

\subsubsection{The new top order energy estimate on ${\rm L}$-slabs}
\label{toporderenerintro}
We finally turn to the quasilinear equation~\eqref{theequationzero}. As before,
we fix a slab $\mathcal{R}(\tau_0,\tau_1)$ and suppose
we have a smooth solution $\psi$ of~\eqref{theequationzero} defined
on the slab. We define the wave packets $\psi_n$ by~\eqref{decompintopack} as before.

We now define energies   ${\Efancypk}_{\rm bdry}$  and ${\Efancypk}_{\rm bulk}$ 
corresponding to quantities which are indeed controlled coercively
 by~\eqref{coercofboundintrohere} and~\eqref{wemaythuswritenewhere} in 
the boundary
and bulk respectively,
summed over wave packets:
\begin{align*}
\Efancypk{}_{{\rm bdry}}(\tau) &:=\sum_n \sum_{|{\bf k}|=k}
 \int_{\widetilde{\mathcal{D}}_n \cap \Sigma(\tau)} |L\mathring{\mathfrak{D}}^{\bf k} \psi_{n}|^2
+|\underline{L}\mathring{\mathfrak{D}}^{\bf k}\psi_{n}|^2  +|\nablaslash \mathring{\mathfrak{D}}^{\bf k}\psi_{n}|^2  \\
&\qquad \qquad + r^p\text{-weighted terms in\ }\{r\ge R_{\rm freq}\} \\
&\qquad\qquad + \text{improved quantities near $\mathcal{S}$},\\
\Efancypk{}_{{\rm bulk}}(\tau) &: =\sum_n \sum_{|{\bf k}|=k} \int_{(\{r_0\le r\le R_{\rm freq}\} \setminus \widetilde{\mathcal{D}}_n )\cap  \Sigma(\tau)}  |L\mathring{\mathfrak{D}}^{\bf k} \psi_{n}|^2
+|\underline{L}\mathring{\mathfrak{D}}^{\bf k}\psi_{n}|^2  +|\nablaslash \mathring{\mathfrak{D}}^{\bf k}\psi_{n}|^2 \\
&\qquad \qquad + r^p\text{-weighted terms in\ }\{r\ge R_{\rm freq}\} \\
&\qquad\qquad + \text{improved quantities near $\mathcal{S}$},
\end{align*}
where we recall the convention $\widetilde{\mathcal{D}}_n:=\emptyset$ for superradiant ranges $n$.
The additional terms added on the right hand side are $n$-independent.
See already Section~\ref{toporderergiescomp} for precise definitions.

Defining
\[
\Efancypk(\tau):= \Efancypk{}_{{\rm bdry}}(\tau) + \Efancypk{}_{{\rm bulk}}(\tau)
\]
then using appropriate elliptic estimates  (see already Section~\ref{ellipticestimatessection}) 
and in view also of the extra terms added in the  second
and third lines of the above definitions, we have 
\begin{equation}
\label{morerelationshere}
\Epk(\tau)
 \lesssim 
\Efancypk (\tau) +\ldots, \qquad
{}^\chi \Epminusonek'(\tau) \lesssim \Efancypk{}_{{\rm bulk}}(\tau) +\ldots,
\end{equation}
while, using also the properties of the pseudodifferential calculus and local existence, we have
\begin{equation}
\label{morerelationshereintegrated}
\int_{\tau}^{\tau+1}
\Efancypk (\tau') d\tau'  \lesssim 
\Epk(\tau) +\ldots ,
\end{equation}
where the omitted terms are all error terms which are ``allowed''. (See already Proposition~\ref{reversecomparison}.)

Integrating in a general subregion
 $\mathcal{R}(\bar\tau_0,\bar\tau_1)
\subset \mathcal{R}(\tau_0,\tau_1)$
 the divergence identities of the currents~\eqref{reffertothecurrent} applied to $\psi_n$,
where we covariantly now replace $g_0$ with $g(\psi,x)$ in their definitions,
and summing over $n$ and $|{\bf k}|=k$ we obtain
\begin{align}
\label{fundidentity}
{\Efancypk}_{\rm bdry}(\bar\tau_1) + \int_{\bar\tau_0}^{\bar\tau_1} 
{\Efancypk}_{\rm bulk}(\tau) d\tau  \lesssim {\Efancypk}_{\rm bdry}(\bar\tau_0)
+  {\Efancypk}_{\rm bulk}(\bar\tau_0) + {\Efancypk}_{\rm bulk}(\bar\tau_1)\\
\label{refertothelinehere}
+ {}^{\chi\natural}_{\scalebox{.6}{\mbox{\tiny{\boxed{+1}}}}}\,\Xpkminusone(\tau_0,\tau_1)\\
\label{nonlinearagain}
+\sqrt{{}^\rho \Xpk (\tau_0,\tau_1)}\sqrt{{}^\rho\Xzerok(\tau_0,\tau_1)}\sqrt{\Xplesslessk(\tau_0,\tau_1)}+\cdots
+{}^\rho \Xpk(\tau_0,\tau_1)
\sqrt{\Xzerolesslessk(\tau_0,\tau_1)} \sqrt{{\rm L}}\\
\nonumber
+\cdots.
\end{align}
(In obtaining the above, we have also separately added in an $n$-independent divergence identity in the far away region
and another such identity  close to $\mathcal{S}$.)
We note the term~${}^{\chi\natural}_{\scalebox{.6}{\mbox{\tiny{\boxed{+1}}}}}\,\Xpkminusone(\tau_0,\tau_1)$ on line~\eqref{refertothelinehere}
which arises from~\eqref{globalstatementfirstintroline}, 
and the nonlinear terms~\eqref{nonlinearagain} which arise
as before. There are additional error terms, linear and nonlinear, arising
from pseudodifferential commutation, but these can finally be absorbed in the other
errors terms or finally in the main terms estimated. We shall not discuss the omitted
terms further here. See already~\eqref{mainestimate}--\eqref{cancellingfluxes} for the full 
estimate.

At first glance, the presence of the future boundary  term 
\begin{equation}
\label{firstglance}
 {\Efancypk}_{\rm bulk}(\bar\tau_1)
 \end{equation}
on the right hand
side may
seem an obstacle to infer boundedness estimates at top order from~\eqref{fundidentity}.
It turns out that given a solution $\psi$ of~\eqref{theequationzero} existing globally in a slab $\mathcal{R}(\tau_0,\tau_1)$, the inequality~\eqref{fundidentity} would be sufficient to obtain quantitative top order boundedness 
if only we had some a priori nonquantitative control of~\eqref{firstglance} for all times $\bar\tau_1$
and we could apply the inequality  in regions 
$\mathcal{R}(\bar\tau_0,\bar\tau_1)$ where $\bar\tau_1$ is possibly far in the future of
$\tau_1$. While  at this point of the argument
we do not even know that $\psi$ exists in the future of
$\Sigma(\tau_1)$, let alone that it be
well behaved to the future, we may, by cutting off the nonlinearities near $\Sigma(\tau_1)$ (instead of cutting off $\psi$) and then re-solving the equation, extend $\psi$ to a $\widehat\psi$ which for later $\tau\ge \tau_1$  is a solution to the linear equation:
\begin{equation}
\label{satisfies}
\Box_{g_{a,M}}\widehat\psi =0, \qquad \tau\ge \tau_1.
\end{equation}
See already Section~\ref{auxsec}.
All the above estimates and relations 
in fact then apply with $\widehat\psi$ in place of $\psi$. 
In view of the uniform boundedness estimate for solutions
of~\eqref{satisfies} (which follows from~\eqref{simplifiedestimatepweighthigherorderlargea}), 
we have the nonquantitative statement:
\begin{equation}
\label{nonquantitativeintheintro}
\lim_{\bar\tau_1\to \infty}  {\Efancypk}_{\rm bulk}(\bar\tau_1)[\widehat\psi]<\infty.
\end{equation}
It follows that applying~\eqref{fundidentity} with a  large enough $\bar\tau_1$, we may obtain a quantitative  estimate on 
 ${\Efancypk}_{\rm bulk}(\tau_{\rm late})[\widehat\psi] $ for some $\tau_{\rm late}\ge \tau_1$ by
applying the pigeonhole principle to the bulk integral on the left hand side. 
We may now plug this estimate back into~\eqref{fundidentity} 
to remove any dependence on a future boundary
term~\eqref{firstglance}.   See already Lemma~\ref{lemmaformainestimate}.

In view of the relations~\eqref{morerelationshere}--\eqref{morerelationshereintegrated} 
and the above comments, we finally obtain from~\eqref{fundidentity} 
the estimate
\begin{equation}
{}^\rho \Xpk(\tau_0,\tau_1)
\lesssim
\Epk(\tau_0)
+ {}^{\chi\natural}_{\scalebox{.6}{\mbox{\tiny{\boxed{+1}}}}}\,\Xpkminusone(\tau_0,\tau_1)
+\sqrt{{}^\rho \Xpk (\tau_0,\tau_1)}\sqrt{{}^\rho\Xzerok(\tau_0,\tau_1)}\sqrt{\Xplesslessk(\tau_0,\tau_1)} +\cdots
\label{obtainpreciselynow}
+{}^\rho \Xpk(\tau_0,\tau_1)
\sqrt{\Xzerolesslessk(\tau_0,\tau_1)} \sqrt{{\rm L}},
\end{equation}
which we recognise to be~\eqref{highes} 
with $ {}^{\chi\natural}_{\scalebox{.6}{\mbox{\tiny{\boxed{+1}}}}}\, \Xpkminusone(\tau_0,\tau_1)$ in
place of ${}^{\chi}\Xpkminusone(\tau_0,\tau_1)$.

\subsubsection{Finishing the proof}
\label{finishit}

In view of the fact that by our comments in Section~\ref{reviewofILEDandimprove}, 
we have that a stronger version of~\eqref{secondhigh} holds with 
 $ {}^{\chi\natural}_{\scalebox{.6}{\mbox{\tiny{\boxed{+1}}}}}\,\Xpkminusone(\tau_0,\tau_1)$ replacing
 ${}^{\chi}\Xpkminusone(\tau_0,\tau_1)$ on the left hand side, then this together 
 with~\eqref{obtainpreciselynow} may again be combined and used essentially
 unchanged in the scheme outlined
 previously in Section~\ref{dyadicintro} involving the pigeonhole principle and iteration
 in consecutive spacetime slabs. 
 We obtain in particular global existence, the  top order boundedness 
 statement~\eqref{uniformboundednessintro} 
 and the lower order decay~\eqref{andthedecayandthedecay}.
 See already Section~\ref{proofofmainthelargea}
  for the details. 
 This will then complete the proof.

\subsection{Outline of the paper}
\label{outlinesection}

The remainder of this paper is structured as follows:
In Section~\ref{notationsetc}, we shall review the Kerr manifold and
basic notations from~\cite{DHRT22}. We shall then state in Section~\ref{replacement} 
the black box assumption 1.~as proven already in~\cite{partiii} (and deduce the refinement discussed
already in Section~\ref{reviewofILEDandimprove} above), and introduce the new set of weakly coercive currents applied to projections discussed above in Section~\ref{wavepacketlocintrosec}, which will replace assumption 2. We will turn to the study of the quasilinear 
equation~\eqref{theequationzero} itself
in Section~\ref{quasilinearsectionlargea} and deduce in particular the main top order estimate
discussed above in Section~\ref{toporderenerintro}.
We give a precise statement of the main theorem in Section~\ref{largeamaintheoremsection} 
and
carry out its proof in Section~\ref{proofofmainthelargea}, using the main estimate above
and the scheme outlined in Section~\ref{dyadicintro}.  All arguments making use of
Carter's separation are deferred to Appendix~\ref{carterestimatesappend}, which
reviews in detail the fixed frequency current formalism of~\cite{partiii}.

The reader will notice that we have followed closely the structure of~\cite{DHRT22}  wherever
possible so as to facilitate comparison.

\paragraph{Acknowledgements.}
MD acknowledges support through NSF grant DMS-2005464. GH acknowledges support by the Alexander von Humboldt Foundation in the framework of the Alexander von Humboldt Professorship endowed by the Federal Ministry of Education and Research and ERC Consolidator Grant 772249. 
IR acknowledges support through NSF grants DMS-2005464 and a Simons Investigator Award.
MT acknowledges support through Royal Society Tata University Research Fellowship URF\textbackslash R1\textbackslash 191409.

\section{The subextremal Kerr manifold and basic notations}
\label{notationsetc}
In this section we review 
the Kerr manifold and the framework of~\cite{DHRT22}, recalling in particular
all notations to be used in the paper.

We begin in Section~\ref{subextremalkerrsec} reviewing how Kerr is put in the setting of our general 
assumptions of~\cite{DHRT22}. 
We shall review the formalism of covariant energy identities for wave operators in 
Section~\ref{Covariantenergyidentitiessec}
and finally, review our basic energy notations from~\cite{DHRT22} in Section~\ref{reviewedenergynotations}.

\subsection{The subextremal Kerr manifold in the framework of~\cite{DHRT22}}
\label{subextremalkerrsec}

We have already described in~\cite{DHRT22} (see Section~2.7.3 and Appendices A and C.2 
therein) how
the Kerr metric  fits into the framework of that paper for the full subextremal range $|a|<M$. 
We review briefly here.

\subsubsection{Boyer--Lindquist and Kerr star coordinates, differentiable structure and the function $r$}
\label{coordsheresec}

Fixing subextremal parameters $|a|<M$, the Kerr metric is given in   Boyer--Lindquist coordinates $(t,r_{BL},\theta, \phi)$ by the expression:
\begin{eqnarray}
\nonumber
g_{a,M} &=& -\frac{\Delta}{r_{BL}^2+a^2\cos^2\theta} (dt-a\sin^2\theta d\phi)^2 +
\Delta^{-1}(r_{BL}^2+a^2\cos^2\theta)dr_{BL}^2\\
\label{Kerrmetric}
&&\qquad
 +(r_{BL}^2+a^2\cos^2\theta) d\theta^2
+\sin^2\theta (r^2_{BL}+a^2\cos^2\theta)^{-1}(adt - (r_{BL}^2+a^2)d\phi)^2,
\end{eqnarray}
where 
\[
\Delta:  = r_{BL}^2 -2Mar+a^2 =(r_{BL}-r_-)(r_{BL}-r_+), \qquad r_\pm: =M\pm \sqrt{M^2-a^2}.
\]
The metric expression~\eqref{Kerrmetric} 
extends to be defined on a larger region which can be put in the form described in Section~2.7.3 of~\cite{DHRT22}.

The extension can be defined with the help of so-called Kerr star coordinates
$(r_{BL}, t^*, \phi^*, \theta)$ which in $r_+>r_{BL}$ are related to Boyer--Lindquist
coordinates by the transformations
\[
t^*=  t+ \bar{t}(r_{BL}), \qquad \phi^* = \phi+\bar\phi(r_{BL}) \mod 2\pi,
\]
where for the explicit form of the functions
$\bar{t}$, $\bar{\phi}$, see~\cite{partiii}. (These functions  may be chosen
so as to be supported in $r_{BL}\le 9M/4$ for instance.)
Given $r_0$ with $r_-<r_0<r_+$,  the expression would define a smooth metric on the ambient 
manifold-with-boundary
 $\mathbb R^4_{(x^0,x^1,x^2,x^3)} \setminus \{ r_{BL} < r_0\} $ where $r_{BL}$ is identified with the
 usual radial function of the spatial Cartesian coordinates, 
 and $t^*$ is identified with $x^0$. We do a slightly different 
 identification, that is,  given $r_0$ as above and $R$ sufficiently large, we define  the metric on
  $\mathcal{M}:= \mathbb R^4_{(x^0,x^1,x^2,x^3)} \setminus \{ r < r_0\} $ 
where $r$ is the usual radial function, $\theta$ and $\phi^*$ are the standard spherical coordinates
defined by the relations~$x^1=r\sin\theta \cos\phi^*$, $x^2= r\sin\theta \sin \phi^*$, $x^3=r\cos\theta$, and we relate $r= r(r_{BL}, \theta, \phi^*)$ 
where this function in particular has the property that  for
$r_{BL}\le R/2$ we have $r=r_{BL}$. For a suitable choice
of $r$ (see~\cite{DafLuk1}) 
the fixed-$r$ spheres for $r\ge R$ are then the spheres of a double null foliation.
In particular, there are two smooth functions $u(r,t^*)$ and $v(r,t^*)$ whose level sets are null
hypersurfaces in $r\ge R$ and such that $u(R,0)=0$.

We have defined thus our Lorentzian manifold $(\mathcal{M},g_{a, M})$ with boundary.
The manifold can be seen to be time-orientable and we choose the unique time orientation
making future pointing
the Kerr star coordinate vector $\partial_{t^*}$ at some point of sufficiently large $r$, say
$\partial_{t^*}(r=4M,t=0,\theta=0,\phi=0)$ (for which the vector is indeed timelike).

Note that $\mathcal{S}:=\{r=r_0\} = \partial\mathcal{M}$ is a spacelike
hypersurface with respect to $g_{a,M}$, and under our choice of time orientation it constitutes
a future boundary of $\mathcal{M}$. The null hypersurface $r=r_+$ will be denoted
by $\mathcal{H}^+$ and represents the event horizon.

We define finally the  part spacelike, part null hypersurface 
\[
\Sigma_0=( \{ t^*=0\} \cap \{r \le R\} )  \cup ( \{ u=0 \}\cap \{r\ge R\}).
\]

\subsubsection{The Killing fields $T$ and $\Omega_1$ and the ergoregion}
The stationary and axisymmetric Killing fields will be denoted $T$ and $\Omega_1$ and correspond to 
$\partial_t$ and $\partial_\phi$ of Boyer--Lindquist coordinates (wherever these are defined) and
to $\partial_{t^*}$ and $\partial_{\phi^*}$ of Kerr star coordinates.
We recall that while the span of $T$ and $\Omega_1$ is timelike in $r>r_+$, there is no single
Killing
vector field in the span of $T$ and $\Omega_1$ which is globally timelike in $r>r_+$ 
(unless $a=0$ in which case $T$ is such a vector field).

We define the \emph{(extended) ergoregion} to be the subset of $\mathcal{M}$ defined by
\begin{equation}
\label{ergoregion}
\{x\in \mathcal{M}:  g(T , T)\ge  0 \}  =\{ r\le M+\sqrt{M^2 -a^2 \cos^2\theta} \}  .
\end{equation}
In particular, if $r>2M+\epsilon$, then $-g(T,T) \ge b(\epsilon)>0$. 
Note that by our choice of time orientation, $T$ is future directed outside the ergoregion.
(Note that in the $a=0$ case, this ``extended ergoregion''~\eqref{ergoregion} 
coincides exactly with the
part of $\mathcal{M}$ in the black hole interior $r_0\le r\le 2M$, including the horizon $\mathcal{H}^+$.)

For points outside the horizon but lying in  the ergoregion~\eqref{ergoregion}, 
it is clear that we may make the following statement:
Fixing  any
$r_{\rm fixed}>r _+$ and restricted to $\{ r_{\rm fixed}-\epsilon\le r\le r_{\rm fixed}+\epsilon\}$ for small enough $\epsilon>0$,  
we may  indeed find a timelike vector field in the span of 
$T$ and $\Omega_1$. This fact will play an important role in Section~\ref{generalisedcoerc}.

Finally, we note that on the event horizon $\mathcal{H}^+$ (where $r=r_+$) the so-called Hawking vector field
\begin{equation}
\label{Hawkingdef}
Z:= T+\frac{a}{2Mr_+}\Omega_1
\end{equation}
lies in the direction of the future null generator. 

\subsubsection{Other vector fields and the commutation operators $\mathfrak{D}_i$,
$\mathring{\mathfrak{D}}_i$
and $\widetilde{\mathfrak{D}}_i$}
\label{commutationcommutation}

All vector fields to be described below are translation invariant with respect to both $T$ and $\Omega_1$.

Given $r_0<r_+<r_1<r_2$ all sufficiently close to $r_+$, recall from~\cite{DHRT22} that we may define an additional vector field $Y$,  supported in $r\le r_1+(r_2-r_1)/2$ and
capturing the redshift properties in a neighbourhood
of $\mathcal{H}^+$.  The vectors $Y$, $Z$ (defined by~\eqref{Hawkingdef}) 
and a standard basis of rotation vector fields $\Omega_i$, $i=1,\ldots, 3$
(defined
by the usual formulas with respect to $(\theta, \phi^*)$ coordinates), span the tangent space
in $r_0\le r\le r_1+ (r_2-r_1)/4$.

We recall  also from~\cite{DHRT22}
the globally defined spanning set of vector fields $L$, $\underline{L}$,
$\Omega_1$, $\Omega_2$, $\Omega_3$,  
the notation $|\nablaslash\psi|^2 := \sum_i r^{-2} |\Omega_i\psi|^2$,
and the fact that  the expression 
\[
|L\psi|^2+|\underline L\psi|^2 +|\nablaslash \psi|^2
\] 
is a reference coercive expression on first derivatives of $\psi$.

We also recall our two sets of commutation operators from~\cite{DHRT22}, 
the set  $\mathfrak{D}_i$:
\[
\mathfrak{D}_1= T, \qquad \mathfrak{D}_2=\Omega_1,\qquad  \mathfrak{D}_3=Y, \qquad
\mathfrak{D}_4 =\zeta L, \qquad \mathfrak{D}_5 =\zeta \underline L, \qquad
\mathfrak{D}_{5+i} =(\zeta+\hat\zeta)\Omega_i, i=1,\ldots, 3
\]
and the set
$\widetilde{\mathfrak{D}}_i$:
\[
\widetilde{\mathfrak{D}}_1= L, \qquad \widetilde{\mathfrak{D}}_2 = \underline L,
\qquad \widetilde{\mathfrak{D}}_{2+i}=\Omega_i,    \qquad i=1,\ldots, 3.
\]
Here $\zeta$ is a smooth cutoff such that $\zeta=1$ for $r\ge 3R/4$ and $\zeta=0$ for $r\le R/2$
while $\hat\zeta$ is a smooth cutoff such that $\hat\zeta=1$ for $r_0\le r\le  r_1+(r_2-r_1)/4$
and $\hat\zeta=0$ for $r\ge r_1+(r_2-r_1)/2$.
We define the multi-index notation
\[
\mathfrak{D}^{\bf k} = \mathfrak{D}_1^{k_1} \cdots \mathfrak{D}_8^{k_8},
\qquad \widetilde{\mathfrak{D}}^{\bf k} =
\widetilde{\mathfrak{D}}_1^{k_1} \cdots \widetilde{\mathfrak{D}}_5^{k_5}
\]
and denote $|{\bf k}|=\sum k_i$.

We will here introduce also the notation
\begin{equation}
\label{recallnewcommutation}
\mathring{\mathfrak{D}}_1=T, \qquad \mathring{\mathfrak{D}}_2=\Omega_1,
\qquad  \mathring{\mathfrak{D}}^{\bf k} =\mathring{\mathfrak{D}}_1^{k_1}\mathring{\mathfrak{D}}_2^{k_2}
\end{equation}
for a set of operators involving commutation only with the Killing directions.

The significance of the untilded $\mathring{\mathfrak{D}}_i$ and 
 $\mathfrak{D}_i$  is that they have better commutation 
properties with $\Box_{g_{a,M}}$ (the former in fact commute exactly), while
the tilded $\widetilde{\mathfrak{D}}_i$ represent 
a complete spanning set of the tangent
space at every point.
Thus, it is typically the 
tilded $\widetilde{\mathfrak{D}}_i$ which will appear in our basic energies (see below), but
in the course of our proofs we will commute equation~\eqref{theequationzero} only
with  the untilded $\mathring{\mathfrak{D}}_i$  or  $\mathfrak{D}_i$,
relying on additional elliptic estimates to recover estimates for derivatives in the remaining
directions in the span of~$\widetilde{\mathfrak{D}}_i$.

\subsubsection{Hypersurfaces, spacetime regions and volume forms}
\label{hypersurfacesetalrecalled}

We may foliate our manifold $\mathcal{M}$  by the hypersurfaces $\Sigma(\tau)$
\[
\Sigma(\tau):= ( \{ t^*=\tau \} \cap \{r \le R\})  \cup ( \{ u=\tau  \}\cap \{r\ge R\})
\]
which
 intersect
the boundary $\mathcal{S}$ transversally.
Note that    $\Sigma(\tau)\cap \{r\le R\}$ is spacelike while $\Sigma(\tau)\cap \{r\ge R\}$ is null
and these are translates of $\Sigma_0$ by the one-parameter family of diffeomorphisms
generated by $T$.

In addition, the region  $r\ge R$ is foliated by translation invariant ``ingoing'' null cones
$\underline{C}_v$ 
parametrised by the smooth function $v$ defined on $r\ge R$ referred
to already in Section~\ref{coordsheresec}
and the vector fields $\underline L$, $\Omega_i$ mentioned above 
are tangent to $\underline{C}_v$
and $\underline{L}$ is in the direction of the null generator  and satisfies 
$g(\underline{L},T)\sim-1$, $g(L,\underline{L})=-1$ in $r\ge R$.
We define $\tau(v)$ by the relation $\underline{C}_v\cap \{r=R\}= \Sigma(\tau(v))\cap \{r=R\}$.

We introduce the notations
\[
\mathcal{R}(\tau_0,\tau_1):= \cup_{\tau_0\le\tau\le\tau_1} \Sigma(\tau),
\]
\[
\mathcal{S}(\tau_0,\tau_0) := \mathcal{S}\cap \mathcal{R}(\tau_0,\tau_1),
\]
and,  for
$\tau_0\le \tau_1 \le \tau(v)$,  
\[
\mathcal{R}(\tau_0,\tau_1,v) :=\mathcal{R}(\tau_0,\tau_1)\setminus\bigcup_{\tilde{v}> v} \underline{C}_{\tilde{v}} ,
\]
\[
\Sigma(\tau,v):=\Sigma(\tau)\setminus\bigcup_{\tilde{v}> v} \underline{C}_{\tilde{v}} .
\]

The spacetime region $\mathcal{R}(\tau_0,\tau_1,v)$ is a compact subset of spacetime
with past boundary  $\Sigma(\tau_0,v)$ 
and future boundary $\mathcal{S}(\tau_0,\tau_1)\cup \Sigma(\tau_1,v) \cup 
 \underline{C}_{v}$.

The volume form of $(\mathcal{M},g_{a,M})$ may be expressed as
\[
dV_{\mathcal{M}} =\upsilon(r, \theta,\phi^*) d\tau\,  r^2 dr  \sin\theta\, d\theta \, d\phi^*
\]
and, for $\Sigma(\tau)\cap \{r \ge R \}$, with the choice $L$ for the null normal, as
\[
dV_{\Sigma(\tau)\cap \{r\ge R\} } := \tilde\upsilon(r,\theta,\phi^*) r^2 dr \sin\theta\, d\theta  \, d\phi^* 
\] 
and for  $\underline{C}_v$, with the choice $\underline{L}$ for the null normal, as
\[
dV_{ \underline{C}_v } := \tilde{\tilde\upsilon}(r,\theta,\phi^*) r^2 dr \sin\theta\, d\theta \, d\phi^*  ,
\]
where  
\[
\upsilon \sim \tilde\upsilon \sim \tilde{\tilde\upsilon} \sim 1.
\]
The volume
form of $(\mathcal{M},g_{a.M})$ is related to the volume form of $\Sigma(\tau)$
\[
dV_{\mathcal{M}} \sim d\tau \, dV_{\Sigma(\tau)},
\]
where $\sim$ is interpreted for $4$-forms in the obvious sense.

Let us note that in the region $r_0\le r\le  R/2$, where in particular $r=r_{BL}$, the spacetime volume form may be reexpressed 
in Kerr-star coordinates as
\begin{equation}
\label{volumeform}
dV_{\mathcal{M}} = (r^2+a^2\cos^2\theta)  dt^* \, dr \,  \sin\theta\,  d\theta\,  d\phi^*,
\end{equation}
or alternatively, in the region $r_+< r \le R/2$, 
in Boyer--Lindquist coordinates as
\[
dV_{\mathcal{M}} = (r^2+a^2\cos^2\theta)  dt \, dr \,  \sin\theta\,  d\theta\,  d\phi.
\]

Note that when volume forms are omitted from integrals, the above induced volume forms from the metric $g_{a,M}$ will be 
understood, unless otherwise noted.

\subsubsection{Note on parameters and constants}
\label{noteonconstants}

The above constructions already referred to some $r$-parameters 
\[
r_0<r_+<r_1<r_2 <R,
\]
where $r_0$, $r_1$, $r_2$ may be taken arbitrarily close to $r_+$
and $R$ may be taken arbitrarily large.
We will in fact only fix these parameters later in the paper as they will be constrained
by certain additional parameters to be introduced in the course of our arguments.
We give the full list of $r$-parameters  we will need (in increasing order) in Table~\ref{firsttable},
including comments about their significance. 
The role of most parameters has been discussed in~\cite{DHRT22}. 
We note the important new parameters $r_{\rm pot} < R_{\rm pot}< R_{\rm freq}$ which will
be constrained in the process of our frequency analysis in Appendix~\ref{carterestimatesappend}. 
We will then require $r_2<r_{\rm pot}$ and $R_{\rm freq}\le R/4$, 
so that in particular~\eqref{volumeform} holds
in $r\le R_{\rm freq}$.

\begin{table}
\caption{Table of $r$-parameters adapted from~\cite{DHRT22}}
\begin{center}
\label{firsttable}
\begin{tabular}{ |c|c| } 
\hline
$r_0$ & $\mathcal{S}=\{r=r_0\}$ \\
\hline
$r_+$ & $\mathcal{H}^+=\{r=r_+\}$ is the Kerr Killing horizon with positive surface gravity;\\
&  span of $T$ and $\Omega_1$ is timelike for $r> r_+$ \\
\hline
$r_1$  & parameter related to the vector field $Y$  \\
\hline
$r_1+(r_2-r_1)/4$ &  $Z$, $Y$, $\Omega_i$ span the tangent space for $r_0\le r_1+(r_2-r_1)/4$\\
\hline
$r_1+(r_2-r_1)/2$ &  commutation vector fields $\mathfrak{D}$  all Killing for $r_1+(r_2-r_1)/2\le r\le R/2$\\
\hline
$r_2$ &
			see above\\
\hline
$r_{\rm pot}$ &  
			$r$-value connected to properties of $V_0$; $\chi=1$ for $r_0\le r\le r_{\rm pot}$\\			
\hline
$2.01M$ &   $T$ timelike for $r>2M$, strictly for $r\ge 2.01M$ \\
\hline
$5M$ & high frequency Carter potential  satisfies $V'_0<0$ for $r\ge 5M$ \\
\hline
$R_{\rm pot}$ & $r$-value connected to properties of $V_0$; $\chi= 1$, $\tilde\rho=1$ for $r\ge R_{\rm pot}$  \\
\hline
$R_{\rm freq}$ & delimits region $r_0\le r\le R_{\rm freq}$ where frequency analysis is applied \\
\hline
$R/2$ &  commutation vector fields $\mathfrak{D}$  all Killing for $r_1+(r_2-r_1)/2\le r\le R/2$\, ; \\
& $g=g_{a,M}$ for $r\ge R/2$ \, ; $r=r_{BL}$ for $r\le R/2$ \\
\hline
$8R/9$ &   the generalised null structure assumption concerns $r\ge 8R/9$  \\
\hline
$R$ &   $\Sigma(\tau)\cap \{r\ge R\}$ is null,  $\underline{C}_v\subset \{r\ge R\}$ \\
\hline
\end{tabular}
\end{center}
\end{table}

As in~\cite{DHRT22}, let us also fix a parameter
\begin{equation}
\label{largeafixeddelta}
0<\delta<\frac1{10}.
\end{equation}
This will be related to the degeneration of the $r^p$ method~\cite{DafRodnew} 
as one approaches $p=0$
and $p=2$, i.e.~we will constrain $p$ to be $p=0$ or $\delta\le p \le 2-\delta$.

We will use $k$ to denote integers which will parametrise number of derivatives.

We will follow the conventions for constants from~\cite{DHRT22}: 
Namely, in inequalities, we denote by $C$ and $c$ generic positive constants, which may be chosen so as to \emph{eventually} (see remarks below!)~depend only on (a) the Kerr parameters $a$, $M$ 
(b) the choice~\eqref{largeafixeddelta} of $\delta$ (when $\delta$-dependent
quantities or energies are invoked), and (c)
if there is $k$-dependence
in the relevant statement, also on~$k$.
(We use $C$ for large constants and $c$ for small constants.)
The notation $[\ldots ]\lesssim [\ldots] $ will be used to denote $[\ldots ]\le C[\ldots]$,
for a $C$ as above.

In addition to the above $r$-parameters and the parameter $\delta$, 
in the course of our proofs we will need to introduce a large number of additional fixed parameters,
which again, can eventually be chosen to depend only on $a$ and $M$, but will often
only be fixed later in the paper, for instance:
\begin{equation}
\label{forinstancepar}
b_{\rm trap}, b_{\rm elliptic}, \gamma_{\rm elliptic}
\end{equation}
which are used in characterising some frequency ranges or
\begin{equation}
\label{currentparameters}
E, e_{\rm red}, B_{\rm indep}
\end{equation} 
which appear as parameters in the construction of currents. 

 We shall explicitly describe the order in which parameters such
 as~\eqref{forinstancepar}  or~\eqref{currentparameters}
must be chosen in the text. \emph{We warn the reader that before~\eqref{forinstancepar} are fixed, some generic constants $C$ may depend
also on these parameters.}
Moreover, when we write that a generic constant $C$ depends on other quantities,  $C=C(x,y, \ldots)$, we mean that this dependence
on $x$, $y$, etc.~is \emph{in addition} to $a$ and $M$ (and $k$, $\delta$ as appropriate) and potentially any parameters as above not yet chosen.

Let us make an explicit remark already about two particular additional classes of such parameters:
Parameters which are small \underline{and which often multiply other constants} on the right hand
side of estimates will typically 
be denoted 
by variants of $\epsilon$ and 
$\varepsilon$. The parameters $\epsilon$ are related to showing coercivity and other properties
of our current constructions while the parameters $\varepsilon$ appear in smallness
assumptions for our norms.

Regarding $\epsilon$-parameters (for instance $\epsilon_{\rm cutoff}$ from~\eqref{announcingepsiloncutoff}), until these are fixed, then according to our conventions above,
generic constants $c$, $C$ may depend
(unfavourably)
also on these as $\epsilon\to 0$. We will have to track this dependence carefully.
To assist the reader, we will always  adhere to the following additional convention for $\epsilon$-parameters: (1) Small constants $c$ may always be chosen independent of  $\epsilon$ and
(2) large constants $C$ (explicitly displayed or implicit in a $\lesssim$) \emph{multiplying} such an $\epsilon$ can always be chosen
independent of $\epsilon$.  (When appealing to this convention, we will in particular explain
why the constants of our inequalities can indeed be chosen as claimed.)
This will allow us to absorb such $C\epsilon$ terms (for small enough $\epsilon$), thus allowing
us to eventually fix $\epsilon$ depending only on $a$, $M$ (and $k$, $\delta$ as appropriate), eliminating thus
the $\epsilon$-dependence  of other generic constants $C$. 

For $\varepsilon$-parameters, on the other hand, there will never be unfavourable dependence
of generic constants $C$ in the limit $\varepsilon\to0$.

\subsection{Covariant energy identities}
\label{Covariantenergyidentitiessec}
As in~\cite{DHRT22}, covariant energy identities will play a fundamental role in the present
paper as these will be applied at highest order. We review briefly the formalism to set the notation.
For a much more general discussion of such identities, see~\cite{christodoulou2016action}.

Our physical space identities are associated
to a quadruple $(V,w,q,\varpi)$. Here, $V^\mu$ is a vector field on $\mathcal{M}$, $w$ is a scalar function, and $q_\mu$, $\varpi_{\mu\nu}$ are $1$-forms and $2$-forms, respectively.
Given  $(V,w,q,\varpi)$, a general Lorentzian metric $g$ on $\mathcal{M}$ and a suitably regular 
(complex-valued) 
function $\psi$,
define
\begin{align}
\label{generalJdef}
J^{V,w,q,\varpi}_\mu [g, \psi] &:= T_{\mu\nu} [g,\psi] V^\nu + w {\rm Re} ( \psi \overline{\partial_\mu \psi} )+ q_\mu |\psi|^2  
+ * d \big( |\psi|^2 \varpi \big)_{\mu}\, ,
\\
\label{generalKdef}
K^{V,w,q}[g, \psi]  &:=  \pi^{V}_{\mu\nu} [g]  T^{\mu\nu}[g, \psi]  +  \nabla^\mu w {\rm Re}(\psi \overline{\partial_\mu \psi}) +w\nabla^\mu\psi \overline{\partial_\mu \psi} +  \nabla^\mu q_\mu |\psi|^2 + 2{\rm Re}(\psi q_\mu g^{\mu\nu}
\partial_\nu \psi)\, , \\
\label{generalHdef}
H^{V,w}[\psi] &:= V^\mu\partial_\mu \psi + w\psi \, , 
\end{align}
where
\[
T_{\mu\nu} [g,\psi] := {\rm Re} (\partial_\mu\psi\overline{\partial_\nu\psi})-\frac12 g_{\mu\nu} \nabla^{\alpha}\psi\overline{\partial_\alpha\psi},
\qquad \pi^X_{\mu\nu} [g] : = \frac12 (\mathcal{L}_X g)_{\mu\nu} = \frac12(\nabla_\mu X_\nu+\nabla_\nu X_\mu),
\]
and where $* \colon \Lambda^3 \mathcal{M} \to \Lambda^1 \mathcal{M}$ is 
the Hodge star operator corresponding to $g$.
(Note we here allow complex valued $\psi$, as opposed to~\cite{DHRT22} where everything
was assumed real valued. This is primarily for the convenience of Fourier analysis.)
We will often denote the $g$ dependence with a subscript, i.e.~$K_g^{V,w,q}[\psi]$, etc.

For a general function $\psi$, the current $J^{V,w,q,\varpi}_\mu$ satisfies 
the following fundamental divergence identity
\begin{equation}
\label{dividentitynotintegrated}
\nabla_{g}^\mu J^{V,w,q, \varpi}_{\mu} [g,\psi] = K^{V,w,q}[g,\psi ] + {\rm Re} \left( H^{V,w} [\psi] \overline{\Box_{g} \psi} \right) ,
\end{equation}
which upon integration in the region  $\mathcal{R}(\tau_0,\tau_1,v)$ yields
\begin{align}
\nonumber
\int_{\Sigma(\tau_1,v)} J[\psi]\cdot {\rm n} +\int_{\mathcal{S}(\tau_0,\tau_1)}J[\psi]\cdot {\rm n} +
\int_{\underline{C}_v(\tau_0,\tau_1)}J[\psi]\cdot {\rm n} +
\int_{\mathcal{R}(\tau_0,\tau_1,v)} K[\psi] \\
\label{energyidentity}
=
\int_{\Sigma(\tau_0,v)} J[\psi]\cdot {\rm n}-\int_{\mathcal{R}(\tau_0,\tau_1,v)} {\rm Re} \left( H[\psi] \overline{\Box_{g} \psi} \right) ,
\end{align}
where the normals, contractions and volume forms are with respect to the metric $g$.
We will always apply~\eqref{energyidentity} either with the exact
Kerr metric $g_{a,M}$ or with metrics $g$ sufficiently 
close to Kerr
satisfying moreover $g=g_{a,M}$ in $r\ge R/2$.  Thus, 
$\mathcal{S}(\tau_0,\tau_1)$ and $\Sigma(\tau)\cap \{r\le R\}$ will be spacelike
with respect to $g$ and
$\underline{C}_v$ and $\Sigma(\tau)\cap
\{ r\ge R\}$
will be null, 
and where
moreover we may use the null generators  $L$ and $\underline{L}$ to fix the normal (and volume form) on the null parts
of $\Sigma(\tau)$ and on $\underline{C}_v$, respectively. See Section~\ref{hypersurfacesetalrecalled}.

The currents we shall define in this paper will essentially all arise more naturally
as suitable combinations of ``twisted currents'', which are expressed in terms of a ``twisted'' energy
momentum tensor and ``twisted derivatives''. 
These are reviewed in Appendix~\ref{translationsection}.
The total currents can always be re-expressed however in the form given above.

\subsection{Energies and their properties}
\label{reviewedenergynotations}

We recall now the energies defined in~\cite{DHRT22} specialised to the
subextremal Kerr manifold $(\mathcal{M},g_{a,M})$.

\subsubsection{Non-degenerate  boundary and bulk energies}
\label{boundaryenergies}
In this section, we will recall the definitions of the  non-degenerate boundary and bulk energies. 

Let us be given $k\ge 0$,  $\tau_0\le \tau\le \tau_1$, $v$,
and a (complex-valued) function $\psi$ defined on $\mathcal{R}(\tau_0,\tau_1)$.

We recall first the ``unweighted'' energies:
\begin{align}
\label{firstdefk}
\Ezerok(\tau)[\psi]	&:=  \sum_{|{\bf k}|\leq k} 
   \int_{\Sigma(\tau)} |L\widetilde{\mathfrak{D}}^{\bf k}\psi|^2+|\slashed\nabla \widetilde{\mathfrak{D}}^{\bf k}\psi|^2 + \iota_{r\le R} |\underline{L}\widetilde{\mathfrak{D}}^{\bf k}\psi|^2+r^{-2}|\widetilde{\mathfrak{D}}^{\bf k}\psi|^2,\\
\label{extradefk}
\Ezerok_{\mathcal{S}}(\tau_0,\tau_1)[\psi]	&:= \sum_{|{\bf k}|\leq k} 
  \int_{\mathcal{S}(\tau_0,\tau_1)} |L\widetilde{\mathfrak{D}}^{\bf k}\psi|^2+|\slashed\nabla\psi|^2 +
   |\underline{L}\widetilde{\mathfrak{D}}^{\bf k}\psi|^2+|\widetilde{\mathfrak{D}}^{\bf k}\psi|^2,\\
\label{seconddefk}
\Fzerok(v, \tau_0,\tau_1)[\psi]	&: = \sum_{|{\bf k}|\leq k}
   \int_{\underline{C}_v\cap \mathcal{R}(\tau_0,\tau_1)} 
   |\underline{L}\widetilde{\mathfrak{D}}^{\bf k}\psi|^2
   +|\slashed\nabla\widetilde{\mathfrak{D}}^{\bf k}\psi|^2
   +r^{-2}|\widetilde{\mathfrak{D}}^{\bf k}\psi|^2,\\
   \label{lastinthislist}
\Ezerominusoneminusdeltak'(\tau)[\psi]	&:= \sum_{|{\bf k}|\leq k}
\int_{\Sigma(\tau)} 
r^{-1-\delta} \left(
|L\widetilde{\mathfrak{D}}^{\bf k}\psi|^2+
|\underline{L}\widetilde{\mathfrak{D}}^{\bf k}\psi|^2
+|\slashed\nabla\widetilde{\mathfrak{D}}^{\bf k}\psi|^2\right)
+r^{-3-\delta}|\widetilde{\mathfrak{D}}^{\bf k}\psi|^2.
\end{align}
The $'$ on the 	quantity defined in~\eqref{lastinthislist} denotes that this quantity typically appears in \emph{bulk} integrals, i.e.~\emph{integrated} in~$\tau$.  Note that $k$ refers to the number of commutations, so energies
with a $k$ subscript are of order $k+1$ in $\psi$. Note moreover that we use the tilded
$\widetilde{\mathfrak{D}}^i$ as commutators. All integrals are with respect
to the volume forms induced by the Kerr metric $g_{a,M}$ according to the conventions
recalled in Section~\ref{hypersurfacesetalrecalled}. 
The term ``unweighted'' refers to the fact that the above expressions are naturally related to boundary and bulk terms of energy currents which near infinity are controlled by the boundary terms of the current $J^T$ associated with the stationary vector field $T$, without additional weights.

We will often omit the $[\psi]$ from~\eqref{firstdefk}--\eqref{lastinthislist} 
and the definitions which follow,
when it is understood which $\psi$ we are
applying this to. 

Let us also introduce the notations
\begin{align}
\label{energylocalisednearbound}
{\Ezerok}{}_{r\le r_1}(\tau)[\psi]	&:=  \sum_{|{\bf k}|\leq k} 
   \int_{\Sigma(\tau)\cap \{r\le r_1\}} |L\widetilde{\mathfrak{D}}^{\bf k}\psi|^2+|\slashed\nabla \widetilde{\mathfrak{D}}^{\bf k}\psi|^2 +  |\underline{L}\widetilde{\mathfrak{D}}^{\bf k}\psi|^2+
  |\widetilde{\mathfrak{D}}^{\bf k}\psi|^2,\\
  \label{energyannulus}
{\Ezerok}{}_{r_1\le r\le r_2}(\tau)[\psi] 	&:=  \sum_{|{\bf k}|\leq k} 
   \int_{\Sigma(\tau)\cap \{r_1\le r\le r_2\}} |L\widetilde{\mathfrak{D}}^{\bf k}\psi|^2+|\slashed\nabla \widetilde{\mathfrak{D}}^{\bf k}\psi|^2 +  |\underline{L}\widetilde{\mathfrak{D}}^{\bf k}\psi|^2+
  |\widetilde{\mathfrak{D}}^{\bf k}\psi|^2,\\
\label{othercommutatorsenergy}
\mathring{\Ezerok}(\tau)[\psi]	&:=  \sum_{|{\bf k}|\leq k} 
   \int_{\Sigma(\tau)} |L{\mathring{\mathfrak{D}}}^{\bf k}\psi|^2+|\slashed\nabla{\mathring{\mathfrak{D}}}^{\bf k}\psi|^2 + \iota_{r\le R} |\underline{L}{\mathring{\mathfrak{D}}}^{\bf k}\psi|^2+r^{-2}|{\mathring{\mathfrak{D}}}^{\bf k}\psi|^2,\\
\label{othercommutatorshorizonflux}
\mathring{\Ezerok}_{\mathcal{S}}(\tau_0,\tau_1)[\psi] &	:=  \sum_{|{\bf k}|\leq k} 
   \int_{\mathcal{S}(\tau_0,\tau_1)} |L{\mathring{\mathfrak{D}}}^{\bf k}\psi|^2+|\slashed\nabla{\mathring{\mathfrak{D}}}^{\bf k}\psi|^2
   + |\underline{L}{\mathring{\mathfrak{D}}}^{\bf k}\psi|^2+r^{-2}|{\mathring{\mathfrak{D}}}^{\bf k}\psi|^2.
\end{align}

For global existence and stability it is essential to control quantities with stronger weights
at infinity than those corresponding to the current $J^T$. These will correspond
to the weights of currents associated to the so-called $r^p$ method~\cite{DafRodnew}, 
which is
a hierarchy of estimates for various values of $p$.
We recall thus the $p$-weighted analogues of the above for $\delta\le p\le 2-\delta$,
which  are defined as below:
  \begin{align}
  \label{coeffshereandkorder}
  \Epk(\tau)[\psi]	 &: = \Ezerok(\tau)[\psi]	+  \sum_{|{\bf k}|\leq k} \int_{\Sigma(\tau)\cap \{r\ge R\} } r^p |r^{-1}L(r\widetilde{\mathfrak{D}}^{\bf k}\psi)|^2 + r^{\frac{p}2}|L\widetilde{\mathfrak{D}}^{\bf k}\psi|^2
  +r^{\frac{p}2-2}|\widetilde{\mathfrak{D}}^{\bf k}\psi|^2, \\
     \label{weightedpfluxkorder}
 \Fpk(v,\tau_0,\tau)[\psi]	 &:= \Fzerok(v, \tau_0,\tau)[\psi]	 +     \sum_{|{\bf k}|\leq k}  \int_{\underline{C}_v\cap \mathcal{R}(\tau_0,\tau)} r^p|\slashed\nabla\widetilde{\mathfrak{D}}^{\bf k}\psi|^2 + r^{p-2}|\widetilde{\mathfrak{D}}^{\bf k}\psi|^2,\\
      \label{coeffsheretootoosmallerkorder}
   \Epminusonek'(\tau) [\psi]	&: = \Ezerominusoneminusdeltak'(\tau)[\psi]	+
    \sum_{|{\bf k}|\leq k}  \int_{\Sigma(\tau) } r^{p-1} \left( |r^{-1}L(r\widetilde{\mathfrak{D}}^{\bf k}\psi)|^2
     + |L\widetilde{\mathfrak{D}}^{\bf k}\psi|^2 
     +      |\slashed\nabla\widetilde{\mathfrak{D}}^{\bf k} \psi|^2\right)
  +r^{p-3}|\widetilde{\mathfrak{D}}^{\bf k}\psi|^2 .
  \end{align}

We recall the fundamental hierarchical relations at the heart of the $r^p$ method:
\begin{equation}
\label{largeahigherorderfluxbulkrelation}
\Epk \gtrsim \Epprimek, \, \, \Fpk \gtrsim \Fpprimek {\rm\ \ for\ } p\ge p'\ge \delta {\rm\ or\ }p'=0, \qquad  
  \Epminusonek' \gtrsim \Epminusonek  {\rm\ for\ } p\ge 1+\delta, \qquad   \Epminusonek' \gtrsim \Ezerok  {\rm\ for\ }p\ge 1 .
\end{equation}

\subsubsection{Degenerate bulk energies and the degeneration functions $\chi$, $\tilde\rho$}
\label{degbulkenergiesdefsec}

We recall from Section~\ref{estimatehierarchintroold} 
that the scheme of~\cite{DHRT22}  involved a hierarchy of degenerate
energies, where
 the
degeneration was captured by a set of functions $\chi$, $\rho$, $\tilde\rho$ and $\xi$,
appearing in the blackbox assumptions~1.~and~2.,
whose support properties are moreover non-trivially related.

In the present paper we shall not make use of $\xi$, $\rho$,
 while we may define $\chi$
to be a function 
\begin{equation}
\label{chidefimportant}
\chi=0 \text{\ in\ } (r_{\rm pot} , R_{\rm pot}), \qquad \chi=1\text{\ in\ }[r_0, r_{\rm pot}] \cup [R_{\rm pot},\infty)
\end{equation}
 and
 \begin{equation}
 \label{rhodefimportant}
\tilde\rho=0\text{\ in\ } [r_0, R_{\rm pot}), \qquad \tilde\rho=1 \text{\ in\ }[R_{\rm pot},\infty)
\end{equation}
where $r_+<r_{\rm pot}<2.01M<5M<R_{\rm pot}$ are parameters chosen at the beginning of Section~\ref{fixingthecovering}.

We recall from~\cite{DHRT22} the definitions:
\begin{align}
{}^\chi\Ezerominusoneminusdeltak'(\tau)[\psi]&:= \sum_{|{\bf k}|\leq k} \int_{\Sigma(\tau)}
  r^{-1-\delta}\chi(r) \left( |L\mathfrak{D}^{\bf k}\psi|^2+|\underline{L}\mathfrak{D}^{\bf k}\psi|^2+|\slashed\nabla\mathfrak{D}^{\bf k}\psi|^2\right),\\
{}^{\tilde\rho} \Ezerominusthreeminusdeltakminusone'(\tau) [\psi]&:=  \sum_{|{\bf k}|\leq k} \int_{\Sigma(\tau)} 
 \tilde\rho (r)r^{-3-\delta} |\mathfrak{D}^{\bf k}\psi|^2 ,\\
   \label{coeffsheretootookorderchi}
 {}^\chi  \Epminusonek'(\tau) [\psi] &: = {}^\chi\Ezerominusoneminusdeltak'(\tau)[\psi]+  \sum_{|{\bf k}|\leq k} \int_{\Sigma(\tau)\cap \{r\ge R\} } r^{p-1} \left( |r^{-1}L(r\widetilde{\mathfrak{D}}^{\bf k}\psi)|^2
     + |L\widetilde{\mathfrak{D}}^{\bf k}\psi|^2 
     +      |\slashed\nabla\widetilde{\mathfrak{D}}^{\bf k} \psi|^2\right)
  +r^{p-3}|\widetilde{\mathfrak{D}}^{\bf k}\psi|^2 
  .
 \end{align}
 We remind the reader that unlike the boundary and bulk energies of 
 Section~\ref{boundaryenergies}, the energies above, \emph{which are to
 appear in bulk integrals only}, have been defined with respect to the untilded $\mathfrak{D}^i$
 operators (except for the terms only supported in $r\ge R$, for which the two definitions 
 would have been equivalent and it is more convenient to retain $\widetilde{\mathfrak{D}}^i$).

\subsubsection{Master energies}
\label{masterenergiessec}

It was convenient in~\cite{DHRT22} to bundle some of the energies together as ``master energies''.
We define thus the following for $\delta\le p \le 2-\delta$:
\begin{eqnarray*}
\Xpk(\tau_0,\tau_1)	[\psi]			&:=&		\sup_{\tau'\in[\tau_0,\tau_1]} \Epk (\tau')+\sup_{v:\tau_1\le v(\tau)} \Fpk(v,\tau_0,\tau_1)
			+ \int_{\tau_0}^{\tau_1}
			 \,\, \Epminusonek'(\tau') 
			 d\tau' ,\\
{}^\rho \Xpk(\tau_0,\tau_1)	[\psi]			&:=& \sup_{\tau'\in[\tau_0,\tau_1]} \Epk(\tau')+\sup_{v:\tau_1\le \tau(v)}  \Fpk(v,\tau_0,\tau_1)
			 +\int_{\tau_0}^{\tau_1}\left( \,\,    {}^{\chi} \Epminusonek'(\tau')+ {}^{\tilde\rho}\,\Ezerominusthreeminusdeltakminusone'(\tau')+ \Epminusonekminustwo'
			 										(\tau')	\right)d\tau'	,	\\
{}^\chi \Xpk(\tau_0,\tau_1)		[\psi]			&:=&  \sup_{\tau'\in[\tau_0,\tau_1]}\Epk (\tau')+\sup_{v:\tau_1\le \tau(v)} \Fpk(v,\tau_0,\tau_1) 
			+\int_{\tau_0}^{\tau_1}\left( \, \,   {}^{\chi} \Epminusonek' (\tau') + \Epminusonekminusone' (\tau') \right)d\tau' .
\end{eqnarray*}
Note that we have retained the $\rho$ notation on ${}^\rho \Xpk$ for ease of comparison 
with~\cite{DHRT22}, even though there is no longer a function $\rho$. Thus, $\chi$ replaces
$\rho$ on the right hand side in the definition. Note that ${}^\rho \Xpk(\tau_0,\tau_1)$
and ${}^\chi \Xpk(\tau_0,\tau_1)$ now differ only with respect to lower order terms.
(Note however  that in contrast  to~\cite{DHRT22}, we included an extra $\Epminusonekminustwo'(\tau')$ term for convenience in the definition of ${}^\rho \Xpk(\tau_0,\tau_1)$.)

For $p=0$, in the analogous definitions, $p-1$ is replaced by $-1-\delta$:
\begin{eqnarray*}
\Xzerok(\tau_0,\tau_1) 	[\psi]				&:=&   \sup_{\tau'\in[\tau_0,\tau_1]}\Ezerok (\tau')+\sup_{v:\tau_1\le \tau(v)} \Fzerok(v,\tau_0,\tau_1)
			 +\int_{\tau_0}^{\tau_1}  \, \Ezerominusoneminusdeltak'(\tau') d\tau' ,\\
 {}^\rho \Xzerok(\tau_0,\tau_1) 	[\psi]			&:=&  \sup_{\tau'\in[\tau_0,\tau_1]} \Ezerok (\tau')+\sup_{v:\tau_1\le \tau(v)}  \Fzerok(v,\tau_0,\tau_1)
 			+\int_{\tau_0}^{\tau_1}  \left(\,\, {}^{\chi} \Ezerominusoneminusdeltak' (\tau') + 
								{}^{\tilde\rho}\,\Ezerominusthreeminusdeltakminusone'	(\tau')+\Ezerominusoneminusdeltakminustwo' (\tau') \right)d\tau' , \\
{}^\chi \Xzerok(\tau_0,\tau_1)		[\psi]		&:=&	  \sup_{\tau'\in[\tau_0,\tau_1]}\Ezerok (\tau')+\sup_{v:\tau_1\le \tau(v)} \Fzerok(v,\tau_0,\tau_1)
			+\int_{\tau_0}^{\tau_1} \left(\,\,  {}^{\chi} \Ezerominusoneminusdeltak'(\tau') 
								+ \Ezerominusoneminusdeltakminusone'(\tau')\right) d\tau'.
\end{eqnarray*}

As discussed in~\cite{DHRT22}, the case $p=0$ is anomalous, and thus we will need
to also consider the following stronger energies which will appear \emph{on the right hand side} of $p=0$ estimates:
\begin{eqnarray*}
\Xzeroplusk(\tau_0,\tau_1) [\psi]				&:=&  \sup_{\tau'\in[\tau_0,\tau_1]}\Ezerok (\tau')+\sup_{v:\tau_1\le \tau(v)} \Fzerok(v,\tau_0,\tau_1) 
			+\int_{\tau_0}^{\tau_1}    \, \Edeltaminusonek'(\tau') d\tau' , 			\\
 {}^\rho \Xzeroplusk(\tau_0,\tau_1) 	[\psi]		&:=&   \sup_{\tau'\in[\tau_0,\tau_1]}\Ezerok (\tau')+\sup_{v:\tau_1\le \tau(v)} \Fzerok(v,\tau_0,\tau_1)
 			 +\int_{\tau_0}^{\tau_1}  \left(  {}^{\chi} \Edeltaminusonek' (\tau')+ {}^{\tilde\rho}\,\Ezerominusthreeminusdeltakminusone'(\tau')
			 						\right) d\tau' , \\
{}^\chi \Xzeroplusk(\tau_0,\tau_1)[\psi]			&:=&  \sup_{\tau'\in[\tau_0,\tau_1]}\Ezerok (\tau')+\sup_{v:\tau_1\le \tau(v)} \Fzerok(v,\tau_0,\tau_1) 
			+\int_{\tau_0}^{\tau_1}   \left( {}^{\chi} \Edeltaminusonek'(\tau') + \Edeltaminusonekminusone'(\tau')\right) d\tau'.
\end{eqnarray*}
We recall also the general properties that $p'\ge p$, $k'\ge k$ 
implies $\Xpprimekprime\gtrsim \Xpk$, ${}^\chi\Xpprimekprime\gtrsim {}^\chi \Xpk$,
${}^\rho \Xpprimekprime\gtrsim {}^\rho \Xpk$, while
\begin{equation}
\label{therelationweknow}
\Xpkminusone \lesssim {}^\chi \Xpk.
\end{equation}

For the bounding of errors arising from pseudodifferential commutation,
it will be useful to define finally 
 \begin{eqnarray*}
  \Xzerok{}^*(\tau_0,\tau_1) 	[\psi]			&:=&  \sup_{\tau'\in[\tau_0,\tau_1]} \Ezerok (\tau')
 +\int_{\tau_0}^{\tau_1}   \Ezerominusoneminusdeltak' _{r\le R_{\rm freq}}(\tau') d\tau', 
 \end{eqnarray*}
 where the $r\le R_{\rm freq}$ subscript here denotes that the integrand in the definition
 of the energy is to be restricted to that set.
Note that for $\tau_1\ge \tau_0+1$, this satisfies
\begin{equation}
\label{couldrefertothisexpl}
    \Xzerok{}^*(\tau_0,\tau_1) 		\lesssim (\tau_1-\tau_0)
  \sup_{\tau'\in[\tau_0,\tau_1]} \Ezerok (\tau').
\end{equation}

We will use the notation $\ll \mkern-6mu k$ 
or $\lesslessk$
to denote some particular positive integer, depending on $k$,
which may be different on different instances of our use of the notation,
such that $\lesslessk \le k$, and in fact, $\lesslessk$ is ``much less than $k$'', 
 provided $k$ is sufficiently
large. In particular, for all positive integers $n$, 
we assume there exists a $k(n)$ for which $k\ge k(n)$ implies $\lesslessk \le k-n$.

\subsubsection{Sobolev inequalities and interpolation of $p$-weighted energies}
We recall the elementary Sobolev and interpolation estimates
used in~\cite{DHRT22}.

\begin{proposition}[Proposition~3.6.10 of~\cite{DHRT22}]
\label{largeasobolevforfunctions}
Let $\psi$ be a smooth function defined on some neighbourhood of $\Sigma(\tau)$.
Then  we have
\begin{equation}
\label{largeasobolev}
\sum_{|{\bf k}|\le k-3}
\sup_{x\in \Sigma(\tau) \cap \{r \le R\} }|\widetilde{\mathfrak{D}}^{\bf k} \psi(x)|^2 \lesssim
\min\left\{ \,
\Ezerominusoneminusdeltak'(\tau)[\psi]	,\,  \Ezerok(\tau)[\psi]	 \right\}.
\end{equation}
\end{proposition}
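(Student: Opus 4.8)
The statement to prove is Proposition~\ref{largeasobolevforfunctions}: a local Sobolev-type estimate bounding the pointwise size (in the region $r\le R$) of derivatives $\widetilde{\mathfrak{D}}^{\bf k}\psi$ of order at most $k-3$ by either the bulk energy $\Ezerominusoneminusdeltak'(\tau)[\psi]$ or the boundary energy $\Ezerok(\tau)[\psi]$ of order $k$. Since the excerpt tells us this is Proposition~3.6.10 of~\cite{DHRT22}, the plan is simply to reproduce the standard argument, adapted to the Kerr foliation $\Sigma(\tau)$, noting that the region $\Sigma(\tau)\cap\{r\le R\}$ is a $3$-dimensional hypersurface, so the classical Sobolev embedding $H^s(\Omega^3)\hookrightarrow C^0(\Omega^3)$ requires $s>3/2$, i.e.\ two more derivatives suffice; the loss of three derivatives here is generous and absorbs the fact that the commutation vector fields $\widetilde{\mathfrak{D}}_i$ span the tangent space but do not form a coordinate frame.

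First I would localise: cover $\Sigma(\tau)\cap\{r\le R\}$ by finitely many coordinate charts (using the Kerr-star coordinates $(r,\theta,\phi^*)$ away from the axis, together with regular charts near $\theta=0,\pi$ where the $\Omega_i$ are needed to control the angular degeneration). In each chart, the collection $\{\widetilde{\mathfrak{D}}_i\}=\{L,\underline L,\Omega_1,\Omega_2,\Omega_3\}$ is, by the spanning property recalled in Section~\ref{commutationcommutation}, a frame whose span is all of $T\Sigma(\tau)$ at every point with uniformly (in the compact region $r\le R$) bounded transition to and from the coordinate frame $\{\partial_r,\partial_\theta,\partial_{\phi^*}\}$. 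Hence for any multi-index ${\bf k}$ with $|{\bf k}|\le k-3$, the coordinate derivatives $\partial^{\bf k}\psi$ up to order $k-3$ are pointwise bounded by a finite sum of $|\widetilde{\mathfrak{D}}^{\bf j}\psi|$ with $|{\bf j}|\le k-3$ (commutators of the $\widetilde{\mathfrak{D}}_i$ are lower-order combinations of the same vector fields with smooth bounded coefficients on $r\le R$), and conversely.

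Next I would apply the standard $3$-dimensional Sobolev inequality in each chart: $\sup |\partial^{\bf k}\psi|^2 \lesssim \sum_{|{\bf j}|\le 2}\int |\partial^{\bf j}\partial^{\bf k}\psi|^2\, dx$, so that $\sum_{|{\bf k}|\le k-3}\sup|\widetilde{\mathfrak{D}}^{\bf k}\psi|^2$ is controlled by $\sum_{|{\bf k}|\le k-1}\int_{\Sigma(\tau)\cap\{r\le R\}} |\widetilde{\mathfrak{D}}^{\bf k}\psi|^2$ together with the same sum of coordinate derivatives, which are interchangeable by the previous paragraph. The Lebesgue measure $dx$ in the chart is comparable to the induced volume form $dV_{\Sigma(\tau)}$ on $r\le R$ (both $\sim r^2\,dr\,\sin\theta\,d\theta\,d\phi^*$, cf.\ Section~\ref{hypersurfacesetalrecalled}), and the relevant integrand is $\sum_{|{\bf k}|\le k-1}|\widetilde{\mathfrak{D}}^{\bf k}\psi|^2$, which on the bounded region $r\le R$ (where $r^{-1-\delta}\sim 1$) is bounded above by both $\Ezerominusoneminusdeltak'(\tau)[\psi]$ (which contains $r^{-1-\delta}(|L\widetilde{\mathfrak{D}}^{\bf k}\psi|^2+|\underline L\widetilde{\mathfrak{D}}^{\bf k}\psi|^2+|\slashed\nabla\widetilde{\mathfrak{D}}^{\bf k}\psi|^2)+r^{-3-\delta}|\widetilde{\mathfrak{D}}^{\bf k}\psi|^2$ summed over $|{\bf k}|\le k$, hence all derivatives of order $\le k$, losing a derivative is exactly the point) and by $\Ezerok(\tau)[\psi]$ (which on $r\le R$ contains $|L\widetilde{\mathfrak{D}}^{\bf k}\psi|^2+|\underline L\widetilde{\mathfrak{D}}^{\bf k}\psi|^2+|\slashed\nabla\widetilde{\mathfrak{D}}^{\bf k}\psi|^2+r^{-2}|\widetilde{\mathfrak{D}}^{\bf k}\psi|^2$ summed over $|{\bf k}|\le k$, again all order-$\le k$ derivatives). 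Taking the minimum over the two bounds gives the claim.

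The only mild subtlety — and the step I would be most careful about — is the behaviour near the axis $\theta=0,\pi$, where $\phi^*$ is singular and $|\slashed\nabla\psi|^2=\sum_i r^{-2}|\Omega_i\psi|^2$ genuinely needs all three rotations to provide a nondegenerate coercive expression for angular derivatives; away from the axis two of the $\Omega_i$ already span the sphere directions. This is handled, as in~\cite{DHRT22}, by using the regular Cartesian-type charts near the poles in which the full set $\{\Omega_1,\Omega_2,\Omega_3\}$ restricts to a spanning set with bounded coefficients, so that the chart-by-chart argument above goes through uniformly; no genuinely new idea is required beyond the bookkeeping. I would then simply cite Proposition~3.6.10 of~\cite{DHRT22} for the details and remark that the passage to the subextremal Kerr $\Sigma(\tau)$ is immediate since all constants depend only on the compact region $r\le R$ and on $a,M,k,\delta$.
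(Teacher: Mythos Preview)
Your proposal is correct and is the standard Sobolev embedding argument that the cited Proposition~3.6.10 of~\cite{DHRT22} would contain; the present paper gives no proof of its own, merely citing~\cite{DHRT22}, so there is nothing further to compare. One minor slip: the $\widetilde{\mathfrak{D}}_i$ span the full four-dimensional $T\mathcal{M}$ rather than just $T\Sigma(\tau)$, but this is exactly what you need (tangential directions to $\Sigma(\tau)$ are in particular spanned) and the argument goes through as you wrote it.
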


\begin{proposition}[Proposition~3.6.11 of~\cite{DHRT22}]
\label{largeainterpolationprop}
For $\delta$ as fixed in~\eqref{largeafixeddelta}, one has 
the following interpolation inequalities:
\begin{align}
\label{largeainterpolationstatementnewback}
\Eonek (\tau)[\psi]	
&\lesssim 
  \left( \, \, \Eoneminusdelk  (\tau)[\psi]	\right)^{1-\delta} \left(\, \, \Etwominusdelk (\tau)[\psi]	\right)^{\delta},
  \\
  \label{largeainterpolationstatementbulk}
 \Edeltaminusonek' (\tau)[\psi]	 &\lesssim  \left( \, \, \Ezerominusoneminusdeltak'(\tau) [\psi]	\right)^{\frac{1-\delta}{1+\delta}} \left( \, \, \Ezerok'(\tau) [\psi]	\right)^{\frac{2\delta}{1+\delta}}.
\end{align}
\end{proposition}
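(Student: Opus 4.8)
The plan is to obtain both inequalities as instances of H\"older's inequality for the weighted $L^2(\Sigma(\tau))$ norms that make up the energies of Section~\ref{boundaryenergies}, using the elementary structural fact that every $r$-weight occurring in the definitions~\eqref{coeffshereandkorder} and~\eqref{coeffsheretootoosmallerkorder} is an \emph{affine} function of the weight parameter $p$. Consequently, if a weight equals $r^{f(p)}$ with $f$ affine and $p=\theta p_1+(1-\theta)p_2$ is a convex combination, then $r^{f(p)}=(r^{f(p_1)})^{\theta}(r^{f(p_2)})^{1-\theta}$, \emph{simultaneously} for all the weight-types (the $r^p$, $r^{p/2}$, $r^{p/2-2}$ of $\Epk$, resp.\ the $r^{p-1}$, $r^{p-3}$ of $\Epminusonek'$, whose superscript is likewise affine in $p$). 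The two relevant convex-combination identities are
\begin{equation}
1=(1-\delta)(1-\delta)+\delta(2-\delta),\qquad \delta=\tfrac{1-\delta}{1+\delta}(-\delta)+\tfrac{2\delta}{1+\delta}\cdot 1,
\end{equation}
writing $p=1$ (the $\Eonek$ level) as a combination of $p=1-\delta$ and $p=2-\delta$, and $p=\delta$ (the $\Edeltaminusonek'$ level) as a combination of $p=-\delta$ (the $\Ezerominusoneminusdeltak'$ level, where the shift $p-1\mapsto -1-\delta$ applies) and $p=1$ (the $\Ezerok'$ level). Note $\tfrac{1-\delta}{1+\delta}+\tfrac{2\delta}{1+\delta}=1$ and both coefficients are positive for $0<\delta<\tfrac1{10}$, so these are genuine convex combinations.

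Concretely, I would first expand $\Eonek$ (resp.\ $\Edeltaminusonek'$) as a finite sum, over $|{\bf k}|\le k$ and over the first-order operators $X$ appearing in its definition (together with the zeroth-order term), of integrals $\int_{\Sigma(\tau)}r^{a}|X\widetilde{\mathfrak{D}}^{\bf k}\psi|^2$. For each such term the two energies on the right-hand side contain the matching term with $r$-exponents $a_1,a_2$ obeying $a=\theta a_1+(1-\theta)a_2$ for the appropriate $\theta$, by the displayed identities; H\"older with conjugate exponents $1/\theta$, $1/(1-\theta)$ then gives
\begin{equation}
\int_{\Sigma(\tau)}r^{a}|X\widetilde{\mathfrak{D}}^{\bf k}\psi|^2\le\Big(\int_{\Sigma(\tau)}r^{a_1}|X\widetilde{\mathfrak{D}}^{\bf k}\psi|^2\Big)^{\theta}\Big(\int_{\Sigma(\tau)}r^{a_2}|X\widetilde{\mathfrak{D}}^{\bf k}\psi|^2\Big)^{1-\theta},
\end{equation}
and summing over the finitely many pairs $({\bf k},X)$ via the discrete inequality $\sum_i A_i^{\theta}B_i^{1-\theta}\le(\sum_i A_i)^{\theta}(\sum_i B_i)^{1-\theta}$ assembles the claimed bound, once one checks that $\sum_i A_i$ and $\sum_i B_i$ are exactly the two right-hand energies. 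The unweighted common summand $\Ezerok$ (in the first inequality) and the bulk summand $\Ezerominusoneminusdeltak'$ (in the second), which occur identically inside all three energies involved, are simply the sub-collection of terms with $a_1=a_2$ and require no computation (for the second one also uses that $\Ezerok'$ contains $\Ezerominusoneminusdeltak'$ as a summand, so $\Ezerominusoneminusdeltak'=(\Ezerominusoneminusdeltak')^{\theta}(\Ezerominusoneminusdeltak')^{1-\theta}\le(\Ezerominusoneminusdeltak')^{\theta}(\Ezerok')^{1-\theta}$).

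The one place that needs genuine care — and which I expect to be the only real obstacle — is that the derivative combinations do not line up perfectly across the three energies: in $r\ge R$ the energies $\Epk$ and $\Epminusonek'$ carry the null-transport quantity $|r^{-1}L(r\,\widetilde{\mathfrak{D}}^{\bf k}\psi)|^2$, whereas $\Ezerominusoneminusdeltak'$ instead carries $|L\widetilde{\mathfrak{D}}^{\bf k}\psi|^2$ together with an $|\underline L\widetilde{\mathfrak{D}}^{\bf k}\psi|^2$ term absent from the higher-$p$ terms. This is resolved by the standard elementary estimates $|r^{-1}L(r\phi)|^2\lesssim|L\phi|^2+r^{-2}|\phi|^2$ and $|L\phi|^2\lesssim|r^{-1}L(r\phi)|^2+r^{-2}|\phi|^2$, valid in $r\ge R$ since $|Lr|+|\underline Lr|\lesssim1$ there (and the weight degenerates at neither endpoint), which let one pass between the two forms at the cost only of lower-order terms already controlled; in $r\le R$ there is no $r$-weight and $L,\underline L,\slashed\nabla$ together with the zeroth-order term span, so the only contribution there is again the common summand. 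With this bookkeeping, no idea beyond H\"older and the affine dependence of the weights on $p$ is needed, and the proof follows exactly the corresponding argument of~\cite{DHRT22}.
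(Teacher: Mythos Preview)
The paper does not give its own proof of this proposition: it is quoted verbatim as Proposition~3.6.11 of~\cite{DHRT22} and used as a black box. Your argument is the standard one and is correct; the convex-combination identities you display are exactly right, and your handling of the mismatch between $|r^{-1}L(r\phi)|^2$ in $\Epminusonek'$ and $|L\phi|^2$ in $\Ezerominusoneminusdeltak'$ via the pointwise bound $|r^{-1}L(r\phi)|^2\lesssim|L\phi|^2+r^{-2}|\phi|^2$ is the right fix.
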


\subsubsection{Elliptic estimates}
\label{ellipticestimatessection}

Finally, we recall the following elliptic estimates of~\cite{DHRT22}, specialised to Kerr:
\begin{proposition}[Proposition 3.6.3 of~\cite{DHRT22}]
\label{ellproplargea}
Fix $|a|<M$ and 
let $(\mathcal{M},g_{a,M})$ denote the Kerr manifold.
Let $\psi$ be a smooth function on $\mathcal{R}(\tau_0,\tau_1)$
and let $\tau_0\le \tau'\le \tau_1$.  
Then for all $k\ge 1$ and for all $r_+< r'_-<r'<r''<r''_+\le R$,
\begin{equation}
\label{ellestimnoYforlargea}
\int_{\Sigma(\tau')\cap \{ r' \le r\le r'' \}  } \sum_{|{\bf  k}|\le k+1}
|\widetilde{\mathfrak{D}}^{\bf {{k}}}\psi|^2\\
 \lesssim 
\int_{\Sigma(\tau')\cap \{r'_- \le r\le  r''_+\} } \sum_{0\le  |{\bf k}|\le k}\sum_{|{\bf k}'|=1 }|\mathring{\mathfrak{D}}^{\bf k}{\mathfrak{D}}^{{\bf k}'}   \psi|^2 
+\sum_{|{\bf  k}|\le 1}
|\widetilde{\mathfrak{D}}^{\bf{k}}\psi|^2
+\sum_{|{\bf  k}|\le k-1} |\widetilde{\mathfrak{D}}^{\bf  k}\Box_{g_{a,M}}\psi|^2,
\end{equation}
where here $\lesssim$ depends on the choice of $r_+<r'_-<r'<r''<r_+''$. 
We also have the estimate, for all $r_0\le r' \le r_1$,
\begin{equation}
\label{ellestimnorestrictionforlargea}
\int_{\Sigma(\tau')\cap \{r\ge r' \}  } \sum_{|{\bf  k}|\le k+1}
|\widetilde{\mathfrak{D}}^{\bf {{k}}}\psi|^2\\
 \lesssim 
\int_{\Sigma(\tau')\cap \{r\ge r' \} } \sum_{|{\bf k}|\le  k+1} |\mathfrak{D}^{\bf k}\psi|^2 
+\sum_{|{\bf  k}|\le k-1} |\widetilde{\mathfrak{D}}^{\bf  k}\Box_{g_{a,M}}\psi|^2.
\end{equation}

Note that the analogous statements to~\eqref{ellestimnoYforlargea}, \eqref{ellestimnorestrictionforlargea} 
with integration on $\mathcal{R}(\tau_0,\tau_1) \cap \{r'\le r \le r''\}$, etc.,
follow immediately in view of the coarea formula.

In fact, even without assuming $r''_+\le R$ (i.e.~allowing in particular $r''>R$), we still have the spacetime elliptic estimate:
\begin{equation}
\label{ellipticestimnoYspacetime}
\int_{\mathcal{R}(\tau_0,\tau_1)\cap \{ r' \le r\le r'' \}  } \sum_{|{\bf  k}|\le k+1}
|\widetilde{\mathfrak{D}}^{\bf {{k}}}\psi|^2\\
 \lesssim 
\int_{\mathcal{R}(\tau_0,\tau_1)\cap \{r'_- \le r\le  r''_+\} } \sum_{0\le  |{\bf k}|\le k}\sum_{|{\bf k}'|=1 }|\mathring{\mathfrak{D}}^{\bf k}{\mathfrak{D}}^{{\bf k}'}   \psi|^2 
+\sum_{|{\bf  k}|\le 1}
|\widetilde{\mathfrak{D}}^{\bf{k}}\psi|^2
+\sum_{|{\bf  k}|\le k-1} |\widetilde{\mathfrak{D}}^{\bf  k}\Box_{g_{a,M}}\psi|^2,
\end{equation}
where again $\lesssim$ depends on the choice of $r_+<r'_-<r'<r''<r_+''$. 
\end{proposition}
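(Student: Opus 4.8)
The statement in question is Proposition~\ref{ellproplargea}, a collection of elliptic estimates specialized to Kerr, asserted to be Proposition~3.6.3 of~\cite{DHRT22}. Since the paper explicitly attributes the result to~\cite{DHRT22}, the cleanest ``proof'' is to reduce the Kerr case to the general framework proven there, checking only that Kerr satisfies the geometric hypotheses invoked in that reference. Below I outline how one would actually establish these estimates from scratch, which is also how one would verify the reduction.

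\textbf{Setup and strategy.} The plan is to exploit that, on any fixed $\Sigma(\tau')$, the operator $\Box_{g_{a,M}}$ restricted to the spacelike part $\{r\le R\}$ differs from a uniformly elliptic second-order operator on the slice by terms involving only the Killing derivatives $T=\mathring{\mathfrak{D}}_1$ and $\Omega_1=\mathring{\mathfrak{D}}_2$. Concretely, writing out $\Box_{g_{a,M}}$ in Kerr star coordinates and splitting off all $\partial_{t^*}$- and $\partial_{\phi^*}$-derivatives, what remains is an operator $\mathcal{L}_{\Sigma(\tau')}$ in the variables $(r,\theta)$ (together with the remaining angular directions via the $\Omega_i$) which is elliptic on $r_0\le r\le R$ away from the boundary, since the span of $T,\Omega_1$ is timelike there and the slice metric is Riemannian. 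Thus schematically
\begin{equation}
\label{ellipticsplit}
\mathcal{L}_{\Sigma(\tau')}\psi = \Box_{g_{a,M}}\psi + \sum_{1\le |{\bf k}|\le 2,\ k_3=\cdots=0}c_{\bf k}(r,\theta)\,\mathring{\mathfrak{D}}^{\bf k}\psi + (\text{lower order}).
\end{equation}
First I would make \eqref{ellipticsplit} precise, noting the coefficients are smooth and $T,\Omega_1$-invariant.

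\textbf{Interior estimate \eqref{ellestimnoYforlargea}.} Fix $r_+<r'_-<r'<r''<r''_+\le R$ and a cutoff $\varphi(r)$ equal to $1$ on $[r',r'']$ and supported in $[r'_-,r''_+]$. Apply standard interior elliptic regularity for $\mathcal{L}_{\Sigma(\tau')}$ on the compact region $\Sigma(\tau')\cap\{r'_-\le r\le r''_+\}$ (which is strictly interior to $\{r\ge r_0\}$, hence no boundary condition on $\mathcal{S}$ is needed, and bounded away from $r=R$): for $\varphi\psi$ one gets $\|\varphi\psi\|_{H^{k+1}}\lesssim \|\mathcal{L}_{\Sigma(\tau')}(\varphi\psi)\|_{H^{k-1}}+\|\varphi\psi\|_{H^1}$. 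Commuting $\varphi$ through $\mathcal{L}_{\Sigma(\tau')}$ and using \eqref{ellipticsplit} to replace $\mathcal{L}_{\Sigma(\tau')}\psi$ by $\Box_{g_{a,M}}\psi$ plus $\mathring{\mathfrak{D}}$-derivatives of order $\le 2$, then iterating (commuting with $\mathring{\mathfrak{D}}^{\bf k}$, which commute exactly with the Killing directions appearing and cost nothing) yields the right-hand side of \eqref{ellestimnoYforlargea}: a bulk of $\mathring{\mathfrak{D}}^{\bf k}{\mathfrak{D}}^{{\bf k}'}\psi$ with $|{\bf k}|\le k$, $|{\bf k}'|=1$, the low-order $|\widetilde{\mathfrak{D}}^{\bf k}\psi|^2$ with $|{\bf k}|\le1$, and $|\widetilde{\mathfrak{D}}^{\bf k}\Box_{g_{a,M}}\psi|^2$ with $|{\bf k}|\le k-1$. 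The coarea formula immediately upgrades this to the spacetime version \eqref{ellipticestimnoYspacetime}, and that argument never used $r''_+\le R$ — only that the region is compact and interior to $\mathcal{S}$ — so \eqref{ellipticestimnoYspacetime} holds even for $r''>R$.

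\textbf{Exterior estimate \eqref{ellestimnorestrictionforlargea} and the main obstacle.} Near $r=R$ and beyond, $\Sigma(\tau)$ becomes null, so there is no honest Riemannian slice and the naive elliptic argument breaks. Here the right-hand side is allowed to contain a full $|\mathfrak{D}^{\bf k}\psi|^2$ with $|{\bf k}|\le k+1$ (not just $\mathring{\mathfrak{D}}$-derivatives), so the content is weaker: one only needs to recover the ``missing'' $\widetilde{\mathfrak{D}}$-directions (essentially $\underline L$ and the $\Omega_i$) which are \emph{not} already among $L,\underline L$-tangential combinations spanned by the $\mathfrak{D}_i$ in $r\ge R$. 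The approach I would take is to use the wave equation itself: $\Box_{g_{a,M}}\psi$ expressed in the double null frame solves for $L\underline L\psi$ (equivalently the ``bad'' second derivative) in terms of tangential derivatives $L\psi,\underline L\psi,\slashed\nabla\psi$ and the angular Laplacian, which is controlled by the $\Omega_i$, plus $\Box_{g_{a,M}}\psi$ and lower order. Iterating this transport-type relation in $L$ along the outgoing direction and commuting with $\mathfrak{D}^{\bf k}$ recovers all of $\widetilde{\mathfrak{D}}^{\bf k}\psi$ up to order $k+1$ from $\mathfrak{D}^{\bf k}\psi$ up to order $k+1$ and $\widetilde{\mathfrak{D}}^{\bf k}\Box_{g_{a,M}}\psi$ up to order $k-1$. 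The genuine difficulty, and the place I would expect to spend the most care, is matching the $r^p$-weight bookkeeping and the interface at $r=R$ where the two regimes are glued: one must ensure the commutator cutoffs $\zeta,\hat\zeta$ defining the $\mathfrak{D}_i$ are handled so that no weight is lost and the interior estimate \eqref{ellestimnoYforlargea}/\eqref{ellipticestimnoYspacetime} supplies exactly the terms on an overlap annulus $r'_-\le r\le R$ that the exterior transport argument needs as data. Since all of this is carried out in full in~\cite{DHRT22} under hypotheses that Section~\ref{subextremalkerrsec} verifies for subextremal Kerr, the proof reduces to invoking Proposition~3.6.3 of~\cite{DHRT22} with $g_0=g_{a,M}$.
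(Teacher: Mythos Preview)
Your overall strategy is sound and matches what \cite{DHRT22} does: split off the Killing directions to get a slice-elliptic operator for \eqref{ellestimnoYforlargea}, and use the wave equation as a transport relation to recover the missing derivative in the null region for \eqref{ellestimnorestrictionforlargea}. The paper itself simply cites \cite{DHRT22}, so reducing to that reference is exactly right.

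However, there is a genuine inconsistency in your treatment of \eqref{ellipticestimnoYspacetime}. You write that the coarea formula upgrades the slice estimate to the spacetime version and that ``that argument never used $r''_+\le R$ --- only that the region is compact and interior to $\mathcal{S}$''. This is incorrect: your interior elliptic argument for \eqref{ellestimnoYforlargea} relies on $\Sigma(\tau')$ being spacelike so that $\mathcal{L}_{\Sigma(\tau')}$ is a genuine elliptic operator on a Riemannian slice, and this fails precisely when $r>R$. You yourself acknowledge in the very next paragraph that ``near $r=R$ and beyond, $\Sigma(\tau)$ becomes null, so there is no honest Riemannian slice and the naive elliptic argument breaks.'' The paper emphasises exactly this point in the remark following the proposition: \eqref{ellipticestimnoYspacetime} for $r''>R$ must be proven \emph{directly as a spacetime estimate}, not via coarea from slice estimates. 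The fix is to run an elliptic argument in the four-dimensional spacetime region (where the wave operator together with the Killing commutators gives control of all spacetime derivatives), rather than slice by slice; this is what \cite{DHRT22} does.
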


\begin{remark}
We remind the reader that~\eqref{ellipticestimnoYspacetime}  in the case $r''>R$ 
must be proven directly as a spacetime estimate and not as a corollary of an estimate
of the type~\eqref{ellestimnoYforlargea}
on fixed slices $\Sigma(\tau)$, precisely 
because the $\Sigma(\tau)$ slices are null in the region $r\ge R$.
(See the proof of Proposition 3.6.3 of~\cite{DHRT22} for more details.)
\end{remark}

\section{Integrated local energy decay estimates and physical space currents for $\Box_{g_{a,M}}\psi =F$ on subextremal Kerr $|a|<M$}
\label{replacement}

In this section we will introduce the  estimates and identities 
concerning the linear wave operator $\Box_{g_{a,M}}$ on subextremal Kerr $(\mathcal{M}, g_{a,M})$, $|a|<M$
which as 
discussed already in the introduction constitute the basis of our method.

The first estimate (Theorem~\ref{blackboxoneforkerr})
is the black box degenerate  integrated local energy decay 
estimate
for solutions $\psi$ of $\Box_{g_{a,M}}\psi = F$
 corresponding to~black box assumption 1.~of~\cite{DHRT22}. 
We will state  this in Section~\ref{recallingablackbox}. This estimate
was proven in~\cite{partiii}.

For the purposes of the remainder of the section, we will need to review
the geometric Fourier analysis with respect to the Killing fields $T$ and $\Omega_1$ and
the full 
Carter separation~\cite{carter1968hamilton} of $\Box_{g_{a,M}}$.
This will be done in Section~\ref{frequencyanalysisreviewsec}.
We will in particular introduce the notion of a $(t^*,\phi^*)$-pseudodifferential operator 
satisfying a rudimentary pseudodifferential calculus.

As discussed already in Section~\ref{reviewofILEDandimprove}, it turns out that we will need in fact a slight refinement of Theorem~\ref{blackboxoneforkerr},
 including now elliptic estimates on certain non-trapped
frequency ranges, which can be defined with respect to Carter's separation. 
We will state that refinement in Section~\ref{refinedlinearhere}
as Theorem~\ref{refinedblackboxoneforkerr}.

The main new element of the present paper will be presented
in Section~\ref{generalisedcoerc}. There, we shall introduce the set of
coercive currents  $J^{{\rm main}, n}$, $K^{{\rm main},n}$, discussed already in 
Section~\ref{wavepacketlocintrosec},
which will replace the role of assumption 2.~of~\cite{DHRT22}.
The coercivity properties of these currents hold only
when  applied to a corresponding set of wave packets 
defined by a finite set of frequency projection operators $P_n$, introduced here.
These operators may be understood as zeroth order $(t^*,\phi^*)$-pseudodifferential 
operators in the formalism of Section~\ref{frequencyanalysisreviewsec}.
Moreover, the allowed errors to coercivity  of $K^{{\rm main},n}$ 
include highest order terms which can only be controlled by
the elliptic improvement of Theorem~\ref{refinedblackboxoneforkerr}. In order to capture this,
however, we must further frequency analyse our expressions $K^{{\rm main},n}$ 
with respect to Carter's separation.  Since the very design of the currents is intimately
connected to showing this property, the precise definitions of the functions 
constituting the currents   $J^{{\rm main}, n}$, $K^{{\rm main},n}$
 is in fact deferred to Appendix~\ref{carterestimatesappend}, 
 which contains all arguments making use
of Carter's separation. Thus, the proof of most of the results of Section~\ref{generalisedcoerc}
will depend on constructions and proofs 
postponed to Appendix~\ref{carterestimatesappend}.

 Finally, in Section~\ref{absorbingetcsec} we shall introduce additional currents
 to obtain
  improved coercivity properties near the
 boundary $\mathcal{S}$ and in the far-away region, further  exploiting the red-shift property
 and the $r^p$ estimate hierarchy.
 
As defined in the sections above, the above currents all still depend on some free parameters which
must be taken large or small accordingly.
 In Section~\ref{fixersection}, we shall finalise our choice of all parameters appearing
 in our currents, in particular fixing thus the choice of the $r$-parameters of 
 Section~\ref{noteonconstants}
 so as to depend only on $a$, $M$.
 
\subsection{The black box integrated local energy decay assumption for 
subextremal Kerr~\cite{partiii}}
\label{recallingablackbox}
 
 We review here  the black box local energy decay assumption (statement 1.)~required by~\cite{DHRT22} and proven in~\cite{partiii} for the Kerr case in the general
subextremal range $|a|<M$.
 The statement will refer to the degeneration function $\chi$ defined in Section~\ref{degbulkenergiesdefsec}.
 For the right hand side, it will also refer to a vector field $V_p$  and function  $w_p$
which are defined in Appendix B.1 of~\cite{DHRT22} but may also be taken to be the
functions (with respect to the parametrisation of currents in Section~\ref{Covariantenergyidentitiessec})
 corresponding to the current $\Jp^{\rm total}$ of Section~\ref{absorbingetcsec}.
 \begin{theorem}
 \label{blackboxoneforkerr}
Fix $|a|<M$ and 
let $(\mathcal{M},g_{a,M})$ denote the Kerr manifold and let $\psi$ satisfy
$\Box_{g_{a,M}} \psi =F$ on $\mathcal{R}(\tau_0,\tau_1)$. 

For all $k\ge 0$,  for all $0<\delta\le p \le 2-\delta$ and for all
$\tau_0\le \tau\le \tau_1$ we have the following statement:
 \begin{align*}
\sup_{v:\tau\le \tau(v)}\Fpk(v,\tau_0,\tau)[\psi]&+ \Epk(\tau)[\psi] +\Ezerok_{\mathcal{S}}(\tau_0,\tau)[\psi]+ \int_{\tau_0}^{\tau}{}^\chi\Epminusonek'(\tau')[\psi]
d\tau' + \int_{\tau_0}^{\tau}  \Epminusonekminusone'(\tau') [\psi]d\tau' \\
\nonumber
&\lesssim  \Epk(\tau_0)[\psi] +\int_{\mathcal{R}(\tau_0,\tau)\cap \{r\ge R\}}
\sum_{|{\bf k}|\le k}\left(|V^\mu_p\partial_\mu ({\mathfrak{D}}^{\bf k}\psi)|+|w_p{\mathfrak{D}}^{\bf k}\psi| \right)|{\mathfrak{D}}^{\bf k}F|+\int_{\mathcal{R}(\tau_0,\tau)}\sum_{|{\bf k}| \le k}|{\mathfrak{D}}^{\bf k} {F}|^2\\
&\qquad +
\sum_{|{\bf k}|\le k}
\sqrt{\int_{\mathcal{R}(\tau_0,\tau)\cap \{r\le R\}}
\left( |L\mathfrak{D}^{\bf k} \psi|
+|\underline{L} \mathfrak{D}^{\bf k} \psi | +| \slashed\nabla \mathfrak{D}^{\bf k} \psi| 
+|\mathfrak{D}^{\bf k}\psi| \right)^2
}
\sqrt{
\int_{\mathcal{R}(\tau_0,\tau)\cap \{r\le R\}}
|{\mathfrak{D}}^{\bf k}F|^2
}\\
&\qquad+  \int_{\mathcal{R}(\tau_0,\tau)\cap \{r\le R\}}\sum_{|{\bf k}| \le k-1}|{\widetilde{\mathfrak{D}}}^{\bf k} F |^2
+\int_{\Sigma(\tau)\cap\{r\le R\}}\sum_{|{\bf k}|\le k-1}|\widetilde{\mathfrak{D}}^{\bf k}F|^2.
\end{align*}
In the case $p=0$, an identical statement holds where we replace $p-1$ by $-1-\delta$.
\end{theorem}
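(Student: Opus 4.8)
\textbf{Proof proposal for Theorem~\ref{blackboxoneforkerr}.}

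The plan is to not reprove this from scratch, but to cite~\cite{partiii} and perform the relatively routine bookkeeping needed to (a) pass from the fixed-frequency estimates of~\cite{partiii} to the physical-space higher-order $r^p$-weighted statement above, and (b) match the schematic inhomogeneous right-hand side displayed in~\cite{DHRT22} to the precise form written here. The core estimate --- the degenerate integrated local energy decay for $\Box_{g_{a,M}}\psi=F$ in the full subextremal range, together with its $r^p$-weighted and higher-order commuted versions --- is exactly the content of~\cite{partiii}; the present theorem is its transcription into the notations of Section~\ref{reviewedenergynotations}. So the first step is simply to invoke~\cite{partiii} for the base case $k=0$, $p$ arbitrary in $[\delta,2-\delta]$, obtaining the estimate with a schematically-written inhomogeneity.

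The second step is the commutation to higher order $k$. Here one commutes $\Box_{g_{a,M}}$ with the strings $\mathfrak{D}^{\bf k}$ (which, except for $Y$ and the cutoff factors $\zeta,\hat\zeta$, are Killing, hence commute exactly with $\Box_{g_{a,M}}$). The commutators $[\Box_{g_{a,M}},\mathfrak{D}^{\bf k}]$ that are not zero are supported either near $\mathcal{H}^+$ (the $Y$-commutator, where the redshift gives a favourable sign as in~\cite{mihalisnotes}) or in the transition region $R/2\le r\le 3R/4$ where $\zeta$ varies (lower-order in the scale of the estimate, absorbable by the bulk term $\int\chi r^{-1-\delta}(\ldots)$ since $\chi=1$ there). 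Since $\mathfrak{D}^{\bf k}\Box_{g_{a,M}}\psi = \mathfrak{D}^{\bf k}F + [\ldots]$, applying the $k=0$ estimate to each $\mathfrak{D}^{\bf k}\psi$ and summing produces precisely the $\mathfrak{D}^{\bf k}F$-terms on the right, while the commutator errors are absorbed into the left-hand bulk integrals (this is the step that forces the $k-1$-th order terms $\Epminusonekminusone'$ and the $\widetilde{\mathfrak{D}}^{\bf k}$ terms with $|{\bf k}|\le k-1$ on the right), together with an application of the elliptic estimates of Proposition~\ref{ellproplargea} to convert control of $\mathfrak{D}^{\bf k}$-derivatives into control of the full spanning set $\widetilde{\mathfrak{D}}^{\bf k}$ appearing in the energies $\Epk$, $\Fpk$.

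The third step is to pin down the exact form of the inhomogeneous terms. The estimate of~\cite{partiii} pairs $F$ against the multiplier current; in the far region $r\ge R$ this pairing is $\int (|V_p^\mu\partial_\mu(\mathfrak{D}^{\bf k}\psi)| + |w_p\mathfrak{D}^{\bf k}\psi|)|\mathfrak{D}^{\bf k}F|$ with the $r^p$-weighted $(V_p,w_p)$ of Appendix B.1 of~\cite{DHRT22} (equivalently the $\Jp^{\rm total}$-functions of Section~\ref{absorbingetcsec}); in the near region $r\le R$ one may be more wasteful and bound the pairing by a Cauchy--Schwarz splitting, hence the two $\sqrt{\int_{\{r\le R\}}(\ldots)^2}\sqrt{\int_{\{r\le R\}}|\mathfrak{D}^{\bf k}F|^2}$ factors together with the bulk and slice $\ |\widetilde{\mathfrak{D}}^{\bf k}F|^2$ terms at order $\le k-1$. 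One verifies the $r$-weights are consistent so that after Cauchy--Schwarz (with small parameter) the $\psi$-factors on the right are absorbed by the left-hand side. Finally the $p=0$ case is genuinely anomalous: the weight $r^{p-1}=r^{-1}$ does not give a coercive bulk, so one replaces it by $r^{-1-\delta}$ throughout (both in the definition of $\Epminusonek'$ and in the inhomogeneity), which is exactly the statement of~\cite{partiii} in that regime; no new idea is needed, just the substitution. The main obstacle --- really the only nontrivial point --- is step two: ensuring the commutator error terms generated by $Y$ and the cutoffs are genuinely lower order (or favourably signed near $\mathcal{H}^+$) so that the hierarchy closes at each order $k$ without loss; this is where the structure of the degeneration function $\chi$ (nonvanishing precisely away from trapped radii, and $=1$ near $\mathcal{H}^+$ and near $r=R$) is used, exactly as in~\cite{DHRT22, partiii}.
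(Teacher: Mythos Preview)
Your proposal is correct and takes essentially the same approach as the paper: the result is not reproved but cited from~\cite{partiii} and~\cite{DHRT22}, with the promotion to higher $k$ and general $p$ handled by the commutation and $r^p$-hierarchy machinery of~\cite{DHRT22}. The paper's own proof is in fact terser than yours---it simply points to the $p=0$, $k=0$, $F=0$ case in~\cite{partiii}, to Theorem~D.1 of~\cite{DHRT22} for the inhomogeneous version, and to Section~3.6 (specifically Remark~3.6.9) of~\cite{DHRT22} for the higher-order and $r^p$-weighted extensions---whereas you have helpfully unpacked what those references actually contain.
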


\begin{proof} The $p=0$, $k=0$, $F=0$ statement is contained in the proof of~\cite{partiii}, 
given our definition of $\chi$ in~\eqref{chidefimportant} 
and the relation of the parameters $r_{\rm pot}$,
$R_{\rm pot}$  (as these are finally chosen in 
Section~\ref{fixingthecovering}) to the properties of the potential $V_0$.
For the statement for general $F$, see Theorem D.1 of~\cite{DHRT22} (where
note we only here apply the weaker version whose proof is contained in Section D).
Finally, to promote this statement to higher $k$ and $0<\delta \le p \le 2-\delta$,  see
Section 3.6 of~\cite{DHRT22}, in particular Remark 3.6.9.  
\end{proof}

\subsection{The $(t^*,\phi^*)$ frequency analysis, superradiance
and Carter's complete separation}
\label{frequencyanalysisreviewsec}

The remainder of this section will need more precise statements which require
frequency analysis.
We  review here the natural  geometric frequency analysis 
associated to the operator $\Box_{g_{a,M}}$. 
We will distinguish between Fourier analysis with respect to Kerr star coordinates 
$t^*$ and $\phi^*$, on the one hand, which just depends on the fact that $T$ and $\Omega_1$
are Killing
fields, and Carter's full separation, which arises from the presence of an additional ``hidden
symmetry''.

We begin in Section~\ref{Schwartzandcutoffs}  with the definition of some basic spaces, cutoffs
and norms which will appear naturally when applying our frequency analysis.
The $(t^*,\phi^*)$ frequency analysis and pseudodifferential calculus is then introduced
in Section~\ref{elementarycalculus}. This will in particular allow us to discuss superradiant
and non-superradiant frequencies in Section~\ref{supernonsuper}.
We shall introduce an alternative Fourier transform with respect to Boyer--Lindquist coordinates
in Section~\ref{altfourier}, which is used in our discussion of Carter's complete separation,
given in Section~\ref{Carterscompleteseparation}.  Finally, we will collect some important Plancherel identities 
in Section~\ref{Plancherelsection}.

\subsubsection{Schwartz space, temporal cutoff functions $\chi_{\tau_0,\tau_1}$ and global mixed 
Sobolev
norms}
\label{Schwartzandcutoffs}

We will only apply Fourier analysis in the the region $r_0\le r\le R_{\rm freq}$, for
a parameter $R_{\rm freq}\le R/4$ to be determined later.
Let us recall that we have an ambient Cartesian coordinate system $(x^0, x^1, x^2, x^3)$ and
the Kerr star coordinates $(t^*, \phi^*, \theta, r)$, where  $x^0=t^*$. 
Let us use $\beta= (\beta_0, \beta_1,\beta_2,\beta_3)$ to denote multi-indices related to 
the Cartesian coordinates and $\alpha=(\alpha_0,\alpha_1)$ to denote multi-indices
related to the $(t^*,\phi^*)$ pair.

It will be natural to consider our Fourier based operators as defined in the Schwartz class with respect to $t^*=x^0$
in the region $r_0\le r\le R_{\rm freq}$, i.e.~to define
\begin{equation}
\label{Schwartzspace}
\mathscr{S}(\mathcal{M}\cap \{r_0\le r\le R_{\rm freq}\}):=\{
\Psi\in C^{\infty}(\mathcal{M}\cap \{r_0\le r\le R_{\rm freq}\} : \forall\beta_0, \tilde\beta\, \exists C_{\beta_0,\tilde\beta}: \sup  | t_*^{\beta_0} \partial^{\tilde\beta} \Psi| \le C_{\beta_0,\tilde\beta}  \} \, .
\end{equation}

To put smooth spacetime functions into the space~\eqref{Schwartzspace}, we will typically
cut off in time. 
Given fixed  $\tau_0+2\le \tau_1$, we
 introduce a cutoff $\chi_{\tau_0,\tau_1}$ which is $0$ for $\tau\le \tau_0$ and
$\tau\ge \tau_1$ 
and $1$ for $\tau_0+\frac12 \le\tau \le \tau_1-\frac12$, and where
the translates  $\chi(\tau-\tau_0)|_{[0,1/2]}$
and $\chi(\tau- \tau_1)_{[-1/2,0]}$ are independent of $\tau_0$, $\tau_1$ respectively.
Clearly, if $\psi\in C^\infty(\mathcal{R}(\tau_0,\tau_1))$, then $\chi_{\tau_0,\tau_1}\psi\in
\mathscr{S}(\mathcal{M}\cap \{r_0\le r\le R_{\rm freq}\})$, $\chi^2_{\tau_0,\tau_1}\psi\in
\mathscr{S}(\mathcal{M}\cap \{r_0\le r\le R_{\rm freq}\})$, etc.

We may also define the global $L^2$ norm on~\eqref{Schwartzspace}
by the expression:
\[
\| \Psi\|^2_
{L^2(\mathcal{M}\cap \{r_0\le r\le R_{\rm freq}) )}
:= \int_{r_0}^{R_{\rm freq}} \int_{-\infty}^{\infty} \int_0^{2\pi}\int_0^{\pi} 
|\Psi|^2 (r^2+a^2\cos^2\theta)  dr\,dt^*\,\sin\theta\,  d\phi^* \, d\theta
\]
and the global mixed Sobolev norm:
\begin{equation}
\label{Sobolevnorm}
\|\Psi \|_{H^{k,j}(\mathcal{M}\cap \{r_0\le r\le R_{\rm freq}) )} := \sum_{|\alpha| \le k,\,  |\beta|\le j}    \|\partial_{t^*}^{\alpha_0}
\partial_{\phi^*}^{\alpha_1} \partial_{x^0}^{\beta_0} \partial_{x^1}^{\beta_1}
\partial_{x^2}^{\beta_2}\partial_{x^3}^{\beta_3} \Psi\|_{L^2(\mathcal{M}\cap \{r_0\le r\le R_{\rm freq}\} )}.
\end{equation}
We will also consider a similar  norm on $\mathcal{S}$:
\begin{equation}
\label{Sobolevnormhorizon}
\|\Psi \|_{H^{k,j}(\mathcal{S})} := \sum_{|\alpha| \le k,\,  |\beta|\le j}    \|\partial_{t^*}^{\alpha_0}
\partial_{\phi^*}^{\alpha_1} \partial_{x^0}^{\beta_0} \partial_{x^1}^{\beta_1}
\partial_{x^2}^{\beta_2}\partial_{x^3}^{\beta_3}\Psi\|_{L^2(\mathcal{S})},
\end{equation}
as well as the restriction of~\eqref{Sobolevnorm} to other subsets of $\mathcal{M}$.
Let us note that by our definition of the ambient Cartesian coordinates in Section~\ref{coordsheresec} in terms of Kerr star coordinates,
we have the commutation relations 
\begin{equation}
\label{commutationsuccess}
[\partial_{t^*}, \partial_{x^\mu}]=0,\qquad [\partial_{\phi^*}, \partial_{x^3}]= 0,
\end{equation}
\begin{equation}
\label{commutationfailure}
[\partial_{\phi^*}, \partial_{x^1}]=\partial_{x^2}, \qquad [\partial_{\phi^*}, \partial_{x^2}]=-\partial_{x^1},
\end{equation}
between the two sets of coordinate vector fields.

We may extend the above definitions to negative integers $k\le 0$ in the standard way, using the
equivalent chacterisation
\begin{align}
\label{Sobolevnormequiv}
\|\Psi \|_{H^{k,j}(\mathcal{M}\cap \{r_0\le r\le R_{\rm freq}\} )} \\
\nonumber
\sim 
\sum_{|\alpha| = |k|,\,  |\beta|\le j}    \| (1+ |\omega|)^{{\rm sgn} (k)\alpha_0}
(1+|m|)^{{\rm sgn}(k) \alpha_1}  \mathfrak{F}  [ \partial_{x^0}^{\beta_0} \partial_{x^1}^{\beta_1}
\partial_{x^2}^{\beta_2}\partial_{x^3}^{\beta_3} \Psi]  (\omega, m, r,\theta) \|_{L^2(\mathbb R)l^2(\mathbb Z)L^2( r^2 dr \sin \theta d\theta)},
\end{align}
etc.

\subsubsection{The $(t^*,\phi^*)$ frequency analysis and elementary pseudodifferential calculus}
\label{elementarycalculus}

We define now $\mathfrak{F}$ to be the Fourier transform operator 
with respect to $t^*$ and $\phi^*$,
\[
\mathfrak{F}[\Psi] (r,\theta, \omega,m) = \int_{-\infty}^{\infty} \int_0^{2\pi} e^{i\omega t^*} e^{-im \phi^*} 
\Psi (t^*, \phi^*, r, \theta )dt^* d\phi^*.
\]

Denoting $\widetilde{\mathscr{S}}:= \mathfrak{F} (\mathscr{S})$ we have that 
$\mathfrak{F}^{-1}:\widetilde{\mathscr{S}} \to \mathscr{S}$ is given by the inverse Fourier transform:
\[
\mathfrak{F}^{-1}[U] ( t^*,\phi^*, r,\theta) = \int_{-\infty}^{\infty} \sum_{m\in \mathbb Z} e^{-i\omega t^*} e^{im \phi^*}  U (r,\theta,\omega, m) d \omega.  
\]
We moreover have the Plancherel formula
\begin{equation}
\label{firstPlanch}
\int_0^{2\pi} \int_{-\infty}^\infty |\Psi|^2 (t^*,\phi^*, r,\theta) d\phi^*  dt^*
= \int_{-\infty}^\infty \sum_{m}  | \mathfrak{F}[\Psi](r, \theta, \omega, m)|^2 d\omega.
\end{equation}

We have the relations (note the signs!):
\[
\partial_{t^*}\Psi (t^*,\phi^*, r,\theta) = \frac{-i}{\sqrt {2\pi}} \int_{-\infty}^\infty 
\sum_{m} \omega \, \mathfrak{F}[\Psi](r,\theta,\omega,m)  e^{-i\omega t^*} e^{im\phi^*} d\omega,
\]
\[
\partial_{\phi^*} \Psi (t^*, \phi^*,r,\theta) = \frac{i}{\sqrt {2\pi}} \int_{-\infty}^\infty 
\sum_{m} m \,  \mathfrak{F}[\Psi](r,\theta,\omega,m) e^{-i\omega t^*} e^{im\phi^*} d\omega
\]
and the additional Plancherel formulae:
\begin{equation}
\label{babyPlanchstar}
\int_0^{2\pi}  \int_{-\infty}^\infty  |\partial_{t^*} \Psi|^2 (t^*, \phi^*, r,\theta) \, d\phi^* \,  dt^* = 
\int_{-\infty}^\infty \sum_{m} \omega^2 \, | \mathfrak{F}[\Psi](r, \theta, \omega, m)|^2 d\omega,
\end{equation}
\begin{equation}
\label{babyPlanchtwostar}
\int_0^{2\pi} \int_{-\infty}^\infty  |\partial_{\phi^*} \Psi|^2 (t^*,\phi^*,r,\theta) \,  d\phi^* \, dt^* = 
\int_{-\infty}^\infty \sum_{m} m^2 \, | \mathfrak{F} [\Psi](r, \theta, \omega, m)|^2 d\omega.
\end{equation}

We will consider \emph{$(t^*,\phi^*)$-pseudodifferential operators} of the form
\begin{equation}
\label{operatorsoftheform}
Q[\Psi] (t^*,\phi^*, r,\theta) = \int_{-\infty}^{\infty} \sum_{m\in \mathbb Z} q_{\rm symb}(\omega, m,x^0, x^1,x^2,x^3) e^{-i\omega t^*} e^{im \phi^*} 
\mathfrak{F}[\Psi ](r,\theta,\omega , m )d\omega 
\end{equation}
where
\begin{equation}
\label{orderdef}
| \partial_{x^0}^{\beta_0} \partial_{x^1}^{\beta_1}
\partial_{x^2}^{\beta_2}\partial_{x^3}^{\beta_3}  \partial_\omega^{\alpha_0}  D_m^{\alpha_1}q_{\rm symb}(\omega,m,x^0,x^1, x^2 ,x^3) |  \le
C_\beta (1+|\omega| +|m|)^{s-\alpha_0-\alpha_1} .
\end{equation}
In the above
\[
D_m^{0}U:= U , \qquad D_m^{1} U(\omega, m, x^0, x^1, x^2, x^3)  := U(\omega, m, x^0, x^1, x^2, x^3)-
U(\omega, m-1, x^0, x^1, x^2, x^3)
\]
and for $\alpha_1\ge 2$,
\[
D_m^{\alpha_1} := D_m (D_m^{\alpha_1-1} U).
\]
We call $s$ the \emph{order} of the operator. For convenience we will only consider
integer order $s\in \mathbb Z$. We will say that such an operator is 
$(t^*,\phi^*)$-\emph{smoothing}
if~\eqref{orderdef} holds for all $s<0$  and $C_\beta(s)$.

We will need only the most rudimentary facts about the pseudo-differential calculus
for operators of the form~\eqref{operatorsoftheform}.
We first note (cf.~\eqref{commutationsuccess}, \eqref{commutationfailure}) the following
  computation:
\begin{lemma}
Let $Q$ be a $(t^*,\phi^*)$-pseudodifferential  operator of order $s$ with symbol $q_{\rm symb}(\omega,m,x^0,x^1,x^2,x^3)$ and let $\Psi\in \mathscr{S}$. Then we have the commutation formulae:
\begin{equation}
\label{theimmediateones}
[\partial_{x^0},Q]\Psi = \tilde{Q}_0\Psi , \qquad [\partial_{x^3}, Q]\Psi = \tilde{Q}_3\Psi,
\end{equation}
\begin{equation}
\label{similarformulato}
[\partial_{x^1} , Q ]\Psi = \tilde{Q}_1\Psi - \hat{Q}\partial_{x^2}\Psi -\hat{Q}'\partial_{x^1}\Psi, \qquad
[\partial_{x^2} , Q ]\Psi = \tilde{Q}_2\Psi + \hat{Q}\partial_{x^1}\Psi + \hat{Q}'\partial_{x^2}\Psi
\end{equation}
where
$\tilde{Q}_j$ is a $(t^*,\phi^*)$-pseudodifferential operator of order $s$ with symbol $\partial_{x^j}q_{\rm symb}$, for $j=0,\ldots, 3$, while 
$\hat{Q}$, $\hat{Q}'$  are $(t^*,\phi^*)$-pseudodifferential operators of order $s-1$, with symbols
defined respectively by  
\[
\frac1{2i} (D^1_{m+1} +D^1_{m})q_{\rm symb}, \qquad  \frac1{2} (D^1_{m+1} -D^1_{m})q_{\rm symb}.
\]
\end{lemma}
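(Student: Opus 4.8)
The plan is to verify the four commutation formulas~\eqref{theimmediateones}--\eqref{similarformulato} by direct computation, working on the Fourier side. The key observation is that conjugating a coordinate derivative $\partial_{x^\mu}$ through the operator $Q$ amounts to tracking (a) how $\partial_{x^\mu}$ acts on the symbol $q_{\rm symb}$ when it is written as a function of the Cartesian coordinates, and (b) how $\partial_{x^\mu}$ interacts with the Fourier kernel $e^{-i\omega t^*}e^{im\phi^*}$. Since $x^0 = t^*$, the kernel $e^{-i\omega t^*}$ depends on $x^0$ but the factor $e^{im\phi^*}$ does not, so $\partial_{x^0}$ hits only the symbol and the kernel; and since $\partial_{x^0}(e^{-i\omega t^*}) = -i\omega e^{-i\omega t^*}$ is proportional to the same kernel, the kernel part cancels against the corresponding term coming from $Q \partial_{x^0}\Psi$ (using that $\mathfrak F[\partial_{x^0}\Psi] = -i\omega\,\mathfrak F[\Psi]$). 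What remains is precisely $\partial_{x^0}q_{\rm symb}$ acting as a multiplier, which is the symbol of $\tilde Q_0$; the order bound~\eqref{orderdef} for $\tilde Q_0$ is immediate since $x$-differentiation of the symbol does not change the power of $(1+|\omega|+|m|)$. The identity $[\partial_{x^3},Q]\Psi = \tilde Q_3\Psi$ is identical, using $[\partial_{\phi^*},\partial_{x^3}]=0$ from~\eqref{commutationsuccess}, so that $\partial_{x^3}$ commutes with the full Fourier kernel and again only lands on the symbol.

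The formulas~\eqref{similarformulato} are the genuinely substantive cases, because $[\partial_{\phi^*},\partial_{x^1}] = \partial_{x^2}$ and $[\partial_{\phi^*},\partial_{x^2}] = -\partial_{x^1}$ from~\eqref{commutationfailure} mean the kernel factor $e^{im\phi^*}$ does not simply commute past $\partial_{x^1}$. First I would write $Q\Psi$ as in~\eqref{operatorsoftheform} and apply $\partial_{x^1}$ under the integral/sum. One term is $(\partial_{x^1}q_{\rm symb})$ times the kernel times $\mathfrak F[\Psi]$, which contributes $\tilde Q_1$. The remaining terms come from $\partial_{x^1}$ hitting either $e^{im\phi^*}$ (via the non-trivial bracket) or $\mathfrak F[\Psi]$, and these must be compared against $Q(\partial_{x^1}\Psi)$. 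The cleanest route is to use the action of $\partial_{\phi^*}$ on the kernel, $\partial_{\phi^*}(e^{im\phi^*}) = im\, e^{im\phi^*}$, together with the bracket relations, to re-express $\partial_{x^1}(e^{im\phi^*}\,\mathfrak F[\Psi])$ in terms of $e^{im\phi^*}\,\mathfrak F[\partial_{x^1}\Psi]$, $e^{im\phi^*}\,\mathfrak F[\partial_{x^2}\Psi]$, and finite-difference-in-$m$ corrections. The point is that, because multiplication by $e^{im\phi^*}$ in the integrand corresponds to a \emph{shift} $m \mapsto m\pm 1$ after taking the Fourier transform of a product with $e^{\pm i\phi^*}$, the mismatch between $\partial_{x^1}(Q\Psi)$ and $Q(\partial_{x^1}\Psi)$ is exactly a pair of operators whose symbols involve $D^1_{m+1}$ and $D^1_m$ applied to $q_{\rm symb}$, assembled in the combinations $\tfrac1{2i}(D^1_{m+1}+D^1_m)q_{\rm symb}$ and $\tfrac12(D^1_{m+1}-D^1_m)q_{\rm symb}$ stated in the lemma. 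The coefficients $\hat Q,\hat Q'$ multiply $\partial_{x^2}\Psi,\partial_{x^1}\Psi$ respectively in the first identity, with signs flipped in the second, matching the antisymmetry of the bracket~\eqref{commutationfailure}.

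The main obstacle — really a bookkeeping obstacle rather than a conceptual one — is keeping the finite differences $D^1_m$ and the index shifts straight: one has to carefully distinguish $D^1_{m+1}q_{\rm symb}(\omega,m,\cdot) = q_{\rm symb}(\omega,m+1,\cdot) - q_{\rm symb}(\omega,m,\cdot)$ from $D^1_m q_{\rm symb} = q_{\rm symb}(\omega,m,\cdot) - q_{\rm symb}(\omega,m-1,\cdot)$, and to see that the half-sum and half-difference are exactly what appear after symmetrising the two contributions from $\partial_{x^1}$ versus $\partial_{x^2}$ (recall $x^1 = r\sin\theta\cos\phi^*$, $x^2 = r\sin\theta\sin\phi^*$, so $\partial_{\phi^*}$ rotates the pair). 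Once this is done, the order claim — that $\hat Q,\hat Q'$ are of order $s-1$ — follows from the mean-value estimate $|D^1_m q_{\rm symb}| \le \sup |D_m q_{\rm symb}| \lesssim (1+|\omega|+|m|)^{s-1}$, which is precisely the $\alpha_1 = 1$ case of~\eqref{orderdef}, and the corresponding $x$- and $\omega$-derivative, $m$-difference bounds propagate verbatim. I would present the computation for $[\partial_{x^1},Q]$ in full and remark that $[\partial_{x^2},Q]$ follows by the identical argument with $x^1 \leftrightarrow x^2$ and the sign change dictated by~\eqref{commutationfailure}.
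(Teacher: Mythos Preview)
Your proposal is correct and follows essentially the same approach as the paper. The paper's proof is extremely terse---it simply declares~\eqref{theimmediateones} immediate and says that~\eqref{similarformulato} follows by ``noting the relations between coordinate vector fields of the two coordinate systems'' and computing ``explicitly the commutators using also~\eqref{theimmediateones}''---which is precisely the direct computation you describe, tracking the $m$-shifts induced by the non-trivial brackets~\eqref{commutationfailure} and identifying the resulting finite differences $D^1_m q_{\rm symb}$.
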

\begin{proof}
The commutation formulae~\eqref{theimmediateones} are immediate. 
For~\eqref{similarformulato}, noting the relations
between coordinate vector fields of the two coordinate systems,
we then compute explicitly the commutators using also~\eqref{theimmediateones}. 
\end{proof}

We may now easily show the following propositions,
whose proofs, in view of the above lemma, are elementary adaptations
of standard results of the pseudodifferential calculus (see e.g.~Chapter 6 of~\cite{stein1993harmonic}).

\begin{proposition}
\label{mixedsobolevboundednessprop}
Let $Q$ be a $(t^*,\phi^*)$-pseudodifferential operator of order $\hat{s}$. Then 
for the global  mixed Sobolev norms defined 
by~\eqref{Sobolevnorm} or~\eqref{Sobolevnormhorizon}, we have 
\[ 
\|  Q[\Psi] \|_{H^{k-\hat{s}, j}} \lesssim C(Q, k, j) \| \Psi \|_{H^{k, j}}.
\]
Thus, if $Q$ is  $(t^*,\phi^*)$-smoothing, then
\begin{equation}
\label{quantestimhere}
\|  Q[\Psi] \|_{H^{k+s, j}} \lesssim C(Q, k, j, s) \| \Psi \|_{H^{k, j}}
\end{equation}
for all $s>0$.
\end{proposition}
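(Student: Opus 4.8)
The plan is to reduce the statement to the classical Calder\'on--Zygmund/Mikhlin theory for standard pseudodifferential operators on $\mathbb R^2$ by isolating the two Fourier-dual variables $(t^*,\phi^*)\leftrightarrow(\omega,m)$ and treating all the Cartesian derivatives $\partial_{x^\mu}$ as ``parameters'' that can be commuted past $Q$ at bounded cost. First I would fix $j$ and consider a multi-index $\beta$ with $|\beta|\le j$ and some $\alpha$ with $|\alpha|\le k-\hat s$; I must estimate $\|\partial_{t^*}^{\alpha_0}\partial_{\phi^*}^{\alpha_1}\partial_x^\beta Q[\Psi]\|_{L^2}$ in terms of $\|\Psi\|_{H^{k,j}}$. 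Using the commutation formulae~\eqref{theimmediateones}--\eqref{similarformulato} of the preceding lemma, I would commute each Cartesian derivative $\partial_{x^\mu}$ in $\partial_x^\beta$ past $Q$: each such commutation either produces a new $(t^*,\phi^*)$-pseudodifferential operator of the \emph{same} order $\hat s$ (with symbol $\partial_{x^\mu}q_{\rm symb}$, which still satisfies~\eqref{orderdef} with the same $s=\hat s$ by hypothesis), or, in the cases~\eqref{similarformulato}, produces an operator of order $\hat s-1$ composed with one Cartesian derivative $\partial_{x^1}$ or $\partial_{x^2}$. Iterating $|\beta|\le j$ times, I obtain a finite sum of terms of the form $\tilde Q\,\partial_x^{\beta'}\Psi$ with $|\beta'|\le j$ and $\tilde Q$ a $(t^*,\phi^*)$-pseudodifferential operator of order $\le\hat s$ whose symbol bounds~\eqref{orderdef} involve only finitely many derivatives of $q_{\rm symb}$; this step requires no estimate yet, only bookkeeping of the commutators.

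Next I would handle the $(t^*,\phi^*)$-derivatives and the order-$\hat s$ symbol. Since $\partial_{t^*}$ and $\partial_{\phi^*}$ act on the Fourier side as multiplication by $-i\omega$ and $im$ respectively (up to the constants recorded after~\eqref{firstPlanch}), the operator $\partial_{t^*}^{\alpha_0}\partial_{\phi^*}^{\alpha_1}\tilde Q$ is again a $(t^*,\phi^*)$-pseudodifferential operator, now of order $|\alpha|+\hat s\le k$, with symbol $(-i\omega)^{\alpha_0}(im)^{\alpha_1}\tilde q_{\rm symb}$ still obeying~\eqref{orderdef} with $s=k$ (the polynomial weight $(1+|\omega|+|m|)^{|\alpha|}$ is absorbed, and the finitely many $\partial_\omega,D_m$ derivatives required by Mikhlin's theorem are controlled by the assumed decay). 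For such an operator $R$ of order $k$ I then write $R=R\circ(1-\partial_{t^*}^2-\partial_{\phi^*}^2)^{-k/2}\circ(1-\partial_{t^*}^2-\partial_{\phi^*}^2)^{k/2}$; the first factor $R\circ\Lambda^{-k}$ is of order $0$ in $(\omega,m)$ with symbol still a bounded Mikhlin multiplier, hence $L^2$-bounded by the standard Fourier-multiplier theorem on $L^2(\mathbb R_{t^*})\,\ell^2(\mathbb Z_{\phi^*})$ (applied fibrewise in $(r,\theta)$ and then integrated against the measure $r^2\,dr\,\sin\theta\,d\theta$, using Plancherel~\eqref{firstPlanch} and Fubini), while the second factor sends $\Psi$ into a combination of $\partial_{t^*},\partial_{\phi^*}$ derivatives of $\Psi$ of total order $\le k$, which is controlled by $\|\partial_x^{\beta'}\Psi\|_{H^{k,0}}\lesssim\|\Psi\|_{H^{k,j}}$. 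Combining with the commutator bookkeeping of the first paragraph gives $\|Q[\Psi]\|_{H^{k-\hat s,j}}\lesssim C(Q,k,j)\|\Psi\|_{H^{k,j}}$. The constant depends on $Q$ only through finitely many of the symbol bounds~\eqref{orderdef}, and on $k,j$ through the number of commutators and derivatives taken.

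Finally, the smoothing case is immediate: if $Q$ is $(t^*,\phi^*)$-smoothing then~\eqref{orderdef} holds for every negative $s$, in particular for $s=-s_0$ with $s_0>0$ arbitrary; applying what was just proved with $\hat s=-s_0$ gives $\|Q[\Psi]\|_{H^{k+s_0,j}}\lesssim C(Q,k,j,s_0)\|\Psi\|_{H^{k,j}}$, which is~\eqref{quantestimhere} after relabelling $s_0=s$. I expect the only genuine subtlety — the ``hard part'' — to be the treatment of the discrete variable $m$: one must verify that the difference operators $D_m^{\alpha_1}$ in~\eqref{orderdef} play exactly the role that $\partial_\xi^{\alpha_1}$ plays in the continuous Mikhlin condition, so that the summation-by-parts/Abel-summation argument replacing the kernel estimates goes through. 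This is standard (it is the periodic version of the multiplier theorem, as in Chapter~6 of~\cite{stein1993harmonic}) but is the point where the proof is not literally a citation; everything else is routine commutator algebra and Plancherel.
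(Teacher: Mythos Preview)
Your proposal is correct and is precisely the ``elementary adaptation of standard results of the pseudodifferential calculus'' that the paper invokes: the paper does not write out a proof but simply refers to the commutation lemma and Chapter~6 of Stein, and your sketch (commute the Cartesian derivatives via~\eqref{theimmediateones}--\eqref{similarformulato}, then reduce to an $L^2$ multiplier estimate in $(\omega,m)$ with the discrete variable handled by the periodic/summation-by-parts version of Mikhlin) is exactly what is meant. Your identification of the discrete-$m$ multiplier bound as the only non-citation step is also accurate.
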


\begin{proposition}
\label{smoothingprop}
Let $Q$ be a $(t^*,\phi^*)$-pseudodifferential operator  of order $\hat{s}$, and let $\mu_1$, $\mu_2$
be smooth compactly supported functions on  $\mathcal{M}\cap \{r_0\le r\le R_{\rm freq}\}$
supported in $\tau\le \tau_1$ and $\tau\ge \tau_2$, respectively, for some $\tau_2>\tau_1$.
Then $\mu_1Q\mu_2$ is $(t^*,\phi^*)$-smoothing
and the constant $C(\mu_1Q\mu_2,k, j, s)$ appearing in~\eqref{quantestimhere} can be bound by $C(Q,k, j, s, \tau_2-\tau_1, \|\mu_1\|_{H^{0,\tilde{s}}}, \|\mu_2\|_{H^{0,\tilde{s}}})$ for  an  $\tilde{s}$ depending only on
 $s$, $k$ and $j$, where the $\tau_2-\tau_1$ dependence of $C$ is such that
$C |\tau_2-\tau_1+1|^\ell <\infty$ for all $\ell\ge 0$.
\end{proposition}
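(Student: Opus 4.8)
\textbf{Proof plan for Proposition~\ref{smoothingprop}.}

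The plan is to reduce the statement to a single key computation: the symbol of the composition $\mu_1 Q \mu_2$ is, up to lower order corrections controlled by the standard asymptotic expansion, the product $\mu_1(x)\, q_{\rm symb}(\omega,m,x)\, \mu_2(x)$ (more precisely, the full composition formula from Proposition~\ref{mixedsobolevboundednessprop}'s underlying calculus, which I would state and use as a black box adapted from e.g.~Chapter~6 of~\cite{stein1993harmonic}, with the commutator bookkeeping supplied by the Lemma preceding Proposition~\ref{mixedsobolevboundednessprop}). Since $\mu_1$ is supported in $\tau \le \tau_1$ and $\mu_2$ in $\tau\ge \tau_2$ with $\tau_2 > \tau_1$, the spacetime supports of $\mu_1$ and $\mu_2$ are disjoint. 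The operator $\mu_1 Q \mu_2$ therefore has a kernel $K(t^*,\phi^*,\cdot\,;\,\tilde t^*,\tilde\phi^*,\cdot)$ which, as a function of $t^* - \tilde t^*$, is obtained by Fourier-transforming the symbol $q_{\rm symb}$ in $\omega$ (and Fourier-summing in $m$), multiplied by the cutoffs $\mu_1$, $\mu_2$ which force $t^*\le\tau_1$ and $\tilde t^* \ge \tau_2$, hence $\tilde t^* - t^* \ge \tau_2 - \tau_1 > 0$. On this region the kernel is rapidly decaying in $\tilde t^* - t^*$: this is the elementary non-stationary-phase / integration-by-parts argument. Each integration by parts in $\omega$ lowers the order of the symbol by one (using precisely the symbol bound~\eqref{orderdef}) and produces a factor $(\tilde t^* - t^*)^{-1}$, and these may be iterated arbitrarily since $\tau_2 - \tau_1 \ge$ a fixed positive number (so there is no division-by-zero issue and the resulting decay is genuine). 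Tracking constants, each such step introduces at most a fixed number of derivatives of $\mu_1, \mu_2$ (measured in the norms $\|\mu_1\|_{H^{0,\tilde s}}$, $\|\mu_2\|_{H^{0,\tilde s}}$) and at worst a polynomial factor in $|\tau_2-\tau_1+1|$, which is exactly the claimed dependence.

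Concretely, I would proceed as follows. First, write out $(\mu_1 Q \mu_2)[\Psi](t^*,\phi^*,r,\theta)$ explicitly via~\eqref{operatorsoftheform}, inserting $\mu_2$ under the Fourier transform $\mathfrak{F}$ and $\mu_1$ outside. Second, to estimate $\| (\mu_1 Q \mu_2)[\Psi]\|_{H^{k+s,j}}$ for an arbitrary $s>0$, use the equivalent characterisation~\eqref{Sobolevnormequiv} of the mixed Sobolev norm in terms of $(\omega,m)$-weights, so that controlling $\|\mu_1 Q \mu_2\|_{H^{k+s,j}}$ amounts to bounding $(1+|\omega|+|m|)^{k+s}$ times the relevant Fourier coefficients. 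Third, observe that the obstruction to boundedness at order $k+s$ is entirely the weight $(1+|\omega|+|m|)^{s}$ beyond what $Q$ alone would give (by Proposition~\ref{mixedsobolevboundednessprop}, $Q$ maps $H^{k,j}\to H^{k-\hat s, j}$); this weight is absorbed by integrating by parts $s + \hat s + $ (fixed) many times in $\omega$ (and taking finite differences $D_m$ in $m$), exploiting that $\partial_\omega$ and $D_m$ falling on $q_{\rm symb}$ each lower the order by one per~\eqref{orderdef} while $\partial_\omega, D_m$ falling on the cutoff-generated exponential factors $e^{-i\omega(t^*-\tilde t^*)}$, $e^{im(\phi^*-\tilde\phi^*)}$ produce the gain $(\tilde t^* - t^*)^{-1}$ with $\tilde t^* - t^* \ge \tau_2-\tau_1$. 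Fourth, feed the resulting rapidly-decaying-in-$(\tilde t^* - t^*)$ kernel bound into a Schur-test / Young-type estimate, using that $\int |\tilde t^* - t^*|^{-\ell}\,d(\tilde t^* - t^*)$ over $\{\tilde t^* - t^* \ge \tau_2-\tau_1\}$ is finite and grows only polynomially in $|\tau_2-\tau_1|^{-1}$... actually it \emph{decays}, but in the worst bookkeeping one keeps only $C|\tau_2-\tau_1+1|^\ell < \infty$ as claimed. Throughout, the number of derivatives of $\mu_1, \mu_2$ needed is fixed once $s, k, j$ are fixed, giving the stated $\tilde s = \tilde s(s,k,j)$.

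\textbf{Main obstacle.} The only genuinely delicate point is the careful bookkeeping of constants: one must verify that the integrations by parts in $\omega$ and the finite differences in $m$ never require differentiating $\mu_1, \mu_2$ more than a fixed number of times depending only on $s, k, j$ (so that $\tilde s$ is well-defined and finite), and that the $\tau_2-\tau_1$ dependence enters only through $(\tau_2-\tau_1)^{-1}$ factors from the kernel decay — which, since $\tau_2 - \tau_1$ is bounded below (by $2$, from the definition of $\chi_{\tau_0,\tau_1}$, or more generally by hypothesis), gives a constant that is polynomially bounded in $|\tau_2-\tau_1+1|$ and in particular finite. This is entirely routine once organised correctly, and is the reason the statement is phrased with the explicit but harmless $|\tau_2-\tau_1+1|^\ell$ growth. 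Everything else is a direct transcription of the standard pseudodifferential composition and kernel-smoothing arguments, with the commutation relations~\eqref{commutationsuccess}--\eqref{commutationfailure} (equivalently the Lemma before Proposition~\ref{mixedsobolevboundednessprop}) handling the mild non-commutativity between the $(t^*,\phi^*)$ and Cartesian derivatives.
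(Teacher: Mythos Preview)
Your proposal is correct and is precisely the standard non-stationary-phase/kernel argument the paper has in mind; the paper itself gives no proof of this proposition, deferring instead to ``elementary adaptations of standard results of the pseudodifferential calculus (see e.g.~Chapter~6 of~\cite{stein1993harmonic})'' with the commutation bookkeeping supplied by the preceding Lemma. Your outline---integrating by parts in $\omega$ (and differencing in $m$) to trade symbol order for powers of $(\tilde t^* - t^*)^{-1}$, then Schur-testing the resulting smooth kernel---is exactly that standard argument, with the constant-tracking correctly identified.
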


\begin{proposition}
\label{commutatorestimate}
Let $Q$  be a $(t^*,\phi^*)$-pseudodifferential operator  of order $s$ and let $\mu$ be a smooth function of compact support on $\mathcal{M}\cap \{r_0\le r\le R_{\rm freq}\}$.
Then $[Q,\mu]$ is of order $s-1$.
\end{proposition}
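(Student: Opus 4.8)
The statement to prove is Proposition~\ref{commutatorestimate}: if $Q$ is a $(t^*,\phi^*)$-pseudodifferential operator of order $s$ with symbol $q_{\rm symb}(\omega,m,x^0,x^1,x^2,x^3)$ and $\mu$ is a smooth compactly supported function on $\mathcal{M}\cap\{r_0\le r\le R_{\rm freq}\}$, then $[Q,\mu]$ is of order $s-1$.

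This is the standard ``commutator with a multiplication operator lowers the order by one'' fact from the pseudodifferential calculus, here in the rudimentary $(t^*,\phi^*)$-setting, so the main content is simply to check that the elementary calculus set up in Section~\ref{elementarycalculus} is robust enough to carry the usual argument. The plan is as follows.

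Let me think about this. A multiplication operator $\mu$ acts on the Fourier side by convolution in $\omega$ and by discrete convolution in $m$. The key point is that since $\mu$ is smooth and compactly supported, its $(t^*,\phi^*)$-Fourier transform $\hat\mu(\omega',m')$ is rapidly decaying in $(\omega',m')$, so the composition $Q\mu$ and $\mu Q$ are again $(t^*,\phi^*)$-pseudodifferential operators, and their symbols admit asymptotic expansions whose leading terms coincide with the product $q_{\rm symb}\cdot\mu$ --- which then cancels in the commutator. Concretely, one writes out $\mu(Q\Psi)$ and $Q(\mu\Psi)$ using the integral/sum formula~\eqref{operatorsoftheform}; the difference can be reorganized as an operator of the form~\eqref{operatorsoftheform} whose symbol involves the differences $q_{\rm symb}(\omega,m,\cdot) - q_{\rm symb}(\omega-\omega',m-m',\cdot)$ paired against $\hat\mu(\omega',m')$. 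A discrete/continuous Taylor expansion of this difference produces one factor of $\omega'$ or of the difference operator in $m'$ (captured against the decay of $\hat\mu$), together with a single $\partial_\omega q_{\rm symb}$ or $D_m q_{\rm symb}$ --- each of which, by the symbol estimate~\eqref{orderdef}, gains one power of $(1+|\omega|+|m|)^{-1}$. Hence the resulting symbol satisfies~\eqref{orderdef} with $s$ replaced by $s-1$, and one verifies that the required estimates on $\partial_{x^j}$-, $\partial_\omega^{\alpha_0}$-, and $D_m^{\alpha_1}$-derivatives of this new symbol hold by differentiating under the integral and summing, using the rapid decay of $\hat\mu$ and its derivatives.

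More cleanly, I would proceed by composition plus the Lemma preceding Proposition~\ref{mixedsobolevboundednessprop}. First reduce to the case $\mu$ depending only on the Cartesian variables $(x^0,x^1,x^2,x^3)$, which it already does. Since multiplication by $\mu$ does not literally commute with $\mathfrak{F}$, one instead argues directly: write the Schwartz kernel / oscillatory-integral representation of $[Q,\mu]$ and expand $\mu$ against the frequency shift. The technical bookkeeping is exactly the content of the Lemma just proved --- indeed, the special case $\mu = x^j$ (or rather the commutators $[\partial_{x^j},Q]$ recorded in~\eqref{theimmediateones}--\eqref{similarformulato}) already isolates the mechanism by which one $x^j$-derivative of the symbol appears; for general $\mu$ one uses a partition-of-unity / Taylor-with-remainder version of the same computation, where the remainder term, being of the form ``one $\omega$- or $m$-difference of $q_{\rm symb}$ against a factor of $\hat\mu$ decay,'' manifestly satisfies the order-$(s-1)$ symbol bounds~\eqref{orderdef}. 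One must check that the symbol of $[Q,\mu]$ indeed has the required smoothness in all of $(x^0,x^1,x^2,x^3)$ and the required decay under $\partial_\omega^{\alpha_0}D_m^{\alpha_1}$; this follows by noting $\partial_\omega$ and $D_m$ acting on the commutator symbol hit only the $q_{\rm symb}$ factors (the $\hat\mu$ factors being evaluated at fixed shifted frequencies in the convolution and contributing only decay), so the inductive estimate~\eqref{orderdef} at order $s$ for $q_{\rm symb}$ passes to order $s-1$ for the commutator symbol.

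The main (and really only) obstacle is purely notational: carefully organizing the convolution in $\omega$ together with the \emph{discrete} convolution in $m$ and the associated summation-by-parts identities so that the finite-difference operators $D_m^{\alpha_1}$ play the role of $\partial_m^{\alpha_1}$, and confirming that a single application of $D_m$ (or $\partial_\omega$) to $q_{\rm symb}$ is what falls out of the leading commutator term. Since the paper explicitly says (just before Proposition~\ref{mixedsobolevboundednessprop}) that these facts are ``elementary adaptations of standard results of the pseudodifferential calculus (see e.g.~Chapter 6 of~\cite{stein1993harmonic})'', I would present the proof at the level of: (i) reduce to writing $[Q,\mu]\Psi$ as an oscillatory sum-integral; (ii) Taylor expand the frequency-shifted symbol to first order with integral remainder; (iii) read off that the leading term cancels and the remainder is a $(t^*,\phi^*)$-pseudodifferential operator of order $s-1$ by~\eqref{orderdef}; (iv) remark that the continuous ($\omega$) and discrete ($m$) cases are handled identically, with $D_m$ in place of $\partial_m$, and that derivatives in $x^j$ are controlled because $\mu\in C^\infty_c$. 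This is a short proof and should not require more than a paragraph or two of actual writing.
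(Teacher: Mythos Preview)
Your proposal is correct and follows exactly the approach the paper intends: the paper does not give a separate proof of Proposition~\ref{commutatorestimate}, but simply declares (in the paragraph preceding Proposition~\ref{mixedsobolevboundednessprop}) that all of Propositions~\ref{mixedsobolevboundednessprop}--\ref{fornonlinearpseudo} are ``elementary adaptations of standard results of the pseudodifferential calculus (see e.g.~Chapter 6 of~\cite{stein1993harmonic})'' in view of the commutation Lemma. Your outline---write $[Q,\mu]$ as an oscillatory sum-integral, Taylor expand the frequency-shifted symbol to first order, observe the leading term cancels and the remainder satisfies the order-$(s-1)$ bound~\eqref{orderdef}, with $D_m$ playing the role of $\partial_m$---is precisely this standard argument specialised to the rudimentary $(t^*,\phi^*)$-calculus, and your identification of the discrete-$m$ bookkeeping as the only notational wrinkle is apt.
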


\begin{proposition}
\label{fornonlinearpseudo}
Let $Q$  be a $(t^*,\phi^*)$-pseudodifferential operator   of order $0$ whose symbol
$q_{\rm symb}=q_{\rm symb}(\omega, m)$ is a function of frequency only. Let $D=\sum_{|\beta|\le \tilde{k}} 
a_{\beta} \partial_x^\beta $ where $\beta$ is a multiindex and
$\partial_{x^\beta}$ denotes the differential operator
$\partial^{\beta_0}_{x^0}\cdots \partial^{\beta_3}_{x^3}$  with respect to the ambient
Cartesian structure. Suppose the coefficients $\{a_{\beta}\}$ are smooth and compactly supported
in $\mathcal{M}\cap \{r_0\le r\le R_{\rm freq}$\}. Then for Schwartz functions
$\Psi$ on $\mathcal{M}\cap \{r_0\le r\le R_{\rm freq}\}$, we have
\begin{align}
\nonumber
\| [ Q, D] \Psi & \|_{H^{k+1, j-\tilde{k}}(\mathcal{M}\cap \{r_0\le r\le R_{\rm freq}\}) }\\
\nonumber
&\leq C(Q, k,\tilde{k}, j)  
 \|\{ a_{\beta}\} \|_{H^{0, s}( (\mathcal{M}\cap \{r_0\le r\le R_{\rm freq}\}) )}\sup_{\tau\in \mathbb R} \|\chi_{\tau,\tau+2}\Psi\|_{H^{k, j} (\mathcal{M}\cap \{r_0\le r\le R_{\rm freq}\}) }\\
 &\qquad + C(Q, k,\tilde{k}, j)   \|\{ a_{\beta}\} \|_{H^{0, s}( (\mathcal{M}\cap \{r_0\le r\le R_{\rm freq}\}) )}\| \Psi\|_{H^{k-2, j} (\mathcal{M}\cap \{r_0\le r\le R_{\rm freq}\}) }
\end{align}
for  some $s$ depending on $\tilde{k}$,  $j$ and $k$.
\end{proposition}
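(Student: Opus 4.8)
\textbf{Proof plan for Proposition~\ref{fornonlinearpseudo}.}
The plan is to reduce the commutator estimate to the standard pseudodifferential calculus developed in the preceding lemma and Propositions~\ref{mixedsobolevboundednessprop}--\ref{commutatorestimate}, treating the differential operator $D=\sum_{|\beta|\le\tilde k} a_\beta\partial_x^\beta$ one monomial at a time and peeling off the coefficients $a_\beta$ separately from the constant-coefficient derivatives $\partial_x^\beta$. First I would write, for each multi-index $\beta$,
\[
[Q, a_\beta \partial_x^\beta] = [Q, a_\beta]\partial_x^\beta + a_\beta [Q,\partial_x^\beta],
\]
and analyse the two pieces by different mechanisms. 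For the second piece, since the symbol $q_{\rm symb}=q_{\rm symb}(\omega,m)$ depends on frequency only, the commutators $[Q,\partial_{x^\mu}]$ are governed by the explicit formulae~\eqref{theimmediateones}--\eqref{similarformulato}: $[Q,\partial_{x^0}]=[Q,\partial_{x^3}]=0$ (the symbol is $x$-independent, so $\tilde Q_j=0$), while $[Q,\partial_{x^1}]$ and $[Q,\partial_{x^2}]$ are, up to the vanishing $\tilde Q_j$ terms, of the form $\mp\hat Q\partial_{x^2}\mp\hat Q'\partial_{x^1}$ with $\hat Q,\hat Q'$ of order $-1$. Iterating this for $\partial_x^\beta$ (a product of at most $\tilde k$ coordinate derivatives), one commutes $Q$ past the derivatives one at a time, each step producing either nothing or a gain of one order in the symbol at the cost of one ambient derivative; after $\le\tilde k$ steps one obtains $[Q,\partial_x^\beta]=\sum Q_\gamma \partial_x^\gamma$ where each $Q_\gamma$ is of order $\le -1$ and $|\gamma|\le\tilde k-1\le\tilde k$, and the $a_\beta$ in front is a fixed smooth compactly supported multiplier.

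For the first piece $[Q,a_\beta]\partial_x^\beta$, I would invoke Proposition~\ref{commutatorestimate}: since $a_\beta$ is smooth of compact support in $\mathcal{M}\cap\{r_0\le r\le R_{\rm freq}\}$ and $Q$ has order $0$, the operator $[Q,a_\beta]$ has order $-1$, with operator norm (in the relevant mixed Sobolev spaces) bounded by $C(Q,k,j)$ times a fixed Sobolev norm $\|a_\beta\|_{H^{0,s}}$ of the coefficient, for some $s$ depending on $k,j,\tilde k$ — this is exactly the quantitative content of the adaptation of the standard calculus referred to right before Proposition~\ref{mixedsobolevboundednessprop}. Assembling: applying each $Q_\gamma$ or $[Q,a_\beta]$ (order $\le -1$) via Proposition~\ref{mixedsobolevboundednessprop} gains one derivative, so that $\|[Q,D]\Psi\|_{H^{k+1,j-\tilde k}}\lesssim C(Q,k,\tilde k,j)\,\|\{a_\beta\}\|_{H^{0,s}}\,\|\Psi\|_{H^{k,j}}$, where the loss of $\tilde k$ azimuthal/Cartesian derivatives on the target side accounts for the derivatives $\partial_x^\gamma$ or $\partial_x^\beta$ left standing to the right (each monomial $\partial_x^\beta$ has $|\beta|\le\tilde k$). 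So far this bounds the commutator by the full $H^{k,j}$ norm of $\Psi$ on $\mathcal{M}\cap\{r_0\le r\le R_{\rm freq}\}$.

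The one genuinely non-routine point — and the step I expect to be the main obstacle — is the replacement of the global $\|\Psi\|_{H^{k,j}}$ by the \emph{sup-over-unit-time-windows} quantity $\sup_{\tau}\|\chi_{\tau,\tau+2}\Psi\|_{H^{k,j}}$ plus the lower-order global term $\|\Psi\|_{H^{k-2,j}}$. The mechanism here is exactly Proposition~\ref{smoothingprop}: because the coefficients $a_\beta$ are compactly supported, we may interpose cutoffs and use that a pseudodifferential operator of order $0$ sandwiched between a compactly supported multiplier on the left and a time-translate far in the future (or past) on the right is smoothing, with constants controlled polynomially in the time separation. Concretely I would partition the time axis by a locally finite family $\{\mu_\tau\}$ of unit-scale cutoffs subordinate to the supports of $\chi_{\tau,\tau+2}$, write $a_\beta\,[\,\cdot\,] = \sum_\tau a_\beta\,[\,\cdot\,]\,\mu_\tau$, and observe that for the terms where the support of $\mu_\tau$ is more than $O(1)$ away from the fixed $t^*$-support of $a_\beta$, the composite operator is $(t^*,\phi^*)$-smoothing by Proposition~\ref{smoothingprop}, so those contributions are bounded by a fixed Sobolev norm of $\Psi$ two orders \emph{lower} — giving the $\|\Psi\|_{H^{k-2,j}}$ term — after summing the rapidly-decaying-in-$|\tau|$ constants; while the $O(1)$-many ``diagonal'' terms near the support of $a_\beta$ are estimated with the full calculus applied to $\chi_{\tau,\tau+2}\Psi$, producing $\sup_\tau\|\chi_{\tau,\tau+2}\Psi\|_{H^{k,j}}$. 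Here one must be a little careful that the commutator structure $[Q,a_\beta]$ and the leftover derivatives $\partial_x^\gamma$ interact correctly with the cutoffs $\mu_\tau$ — but since $\partial_x^\gamma$ is local and $\mu_\tau$ can be taken adapted to $\chi_{\tau,\tau+2}$, no new difficulty arises beyond bookkeeping. Tracking that the dependence on $\tilde k$ enters only through the number of commutation steps and the Sobolev order $s$ of the coefficients completes the proof.
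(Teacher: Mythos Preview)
Your overall strategy matches the paper's (two-sentence) proof: use the Poisson-bracket type expansion of $[Q,D]$---splitting into $[Q,a_\beta]\partial_x^\beta$ and $a_\beta[Q,\partial_x^\beta]$ and exploiting that $q_{\rm symb}$ is frequency-only so the $\tilde Q_j$ in~\eqref{theimmediateones}--\eqref{similarformulato} vanish---to gain one order, and then invoke time-localisation with Proposition~\ref{smoothingprop} to split the bound into the sup-over-windows term and a lower-order global remainder.

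There is, however, a gap in your localisation step. You claim there are ``$O(1)$-many diagonal terms near the support of $a_\beta$'', but the constant $C(Q,k,\tilde k,j)$ must be \emph{uniform} in the $t^*$-length of ${\rm supp}\,a_\beta$; in the application (Section~\ref{Dkcommutingtermsnonlinearestimate}) the coefficients come from $g(\chi^2_{\tau_1}\psi,x)-g_{a,M}$ cut off to a slab of arbitrary length $\tau_1-\tau_0$, and the product structure $\|a\|_{H^{0,s}}\cdot\sup_\tau\|\chi_{\tau,\tau+2}\Psi\|$ is precisely what encodes this. With only the input partition $\{\mu_\tau\}$, the diagonal set has $\sim L$ members, and since $Q$ is nonlocal in $t^*$ the pieces $[Q,D]\mu_\tau\Psi$ are not almost-orthogonal, so the triangle inequality costs an illicit factor of~$L$. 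The repair is to localise the output as well: introduce a second partition $\{\nu_i\}$ and observe that $\nu_i[Q,a_\beta]\partial_x^\beta\Psi = \nu_i[Q,\tilde\nu_i a_\beta]\partial_x^\beta\tilde{\tilde\nu}_i\Psi$ modulo smoothing errors from Proposition~\ref{smoothingprop} (summable in separation). Each near-diagonal piece is bounded by $C\|\tilde\nu_i a_\beta\|_{H^{0,s}}\sup_\tau\|\chi_{\tau,\tau+2}\Psi\|_{H^{k,j}}$, and summing in $i$ via $\ell^2$-orthogonality together with $\sum_i\|\tilde\nu_i a_\beta\|_{H^{0,s}}^2\lesssim\|a_\beta\|_{H^{0,s}}^2$ recovers the global coefficient norm without the spurious $L$.

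(A minor unrelated miscount: in the iterated commutator $[Q,\partial_x^\beta]=\sum Q_\gamma\partial_x^\gamma$ one has $|\gamma|\le\tilde k$, not $\tilde k-1$---already for $\tilde k=1$ formula~\eqref{similarformulato} gives $|\gamma|=1$---but since you also write $\le\tilde k$ this does not affect your final bookkeeping.)
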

\begin{proof}
Note that 
the first term on the right hand side appears as a consequence of the standard Poisson bracket type commutator calculation, the bound of 
Proposition~\ref{mixedsobolevboundednessprop} and the  additional localisation to time intervals of size 2. The second term accounts for the error terms from the semi-classical commutation and the smoothing terms resulting from the localisation of the first term via Proposition~\ref{smoothingprop}.
\end{proof}

For related pseudodifferential setups in Kerr and Kerr--de Sitter, see~\cite{holzegel2023wave, 
mavrogianniskerrrelative}.

\subsubsection{Superradiant and non-superradiant frequencies}
\label{supernonsuper}

As discussed already in Section~\ref{reviewoffreq}, the $(t^*,\phi^*)$ frequency analysis of Section~\ref{elementarycalculus} 
already allows us to 
distinguish between so-called \emph{superradiant} and \emph{non-superradiant} 
frequencies.

We say a frequency pair $(\omega, m)$ is \emph{superradiant} if
\begin{equation}
\label{superradiantcondition}
m\ne0 \qquad {\text{and}} \qquad
0 < \frac{\omega}{am}  < \frac{1}{2Mr_+}.
\end{equation}
Otherwise, i.e.~if
\begin{equation}
\label{nonsuperradiantcondition}
a=0, \qquad {\text{or}} \qquad
m=0,\qquad {\text{or}} \qquad
\frac{\omega}{am} \le 0, \qquad {\text{or}}\qquad \frac{\omega}{am} \ge \frac{1}{2Mr_+},
\end{equation}
we say that $(\omega, m)$ is \emph{non-superradiant}.

We recall that the quantity $\frac{a}{2Mr_+}$ represents the \emph{angular velocity} 
of the horizon $\mathcal{H}^+$,
and  the null generator of $\mathcal{H}^+$ can be expressed as~\eqref{Hawkingdef}.
The significance of~\eqref{superradiantcondition} can be understood by considering the $J^T_{g_{a,M}}$ flux
through $\mathcal{H}^+$, where $J^T_{g_{a,M}}$ is the current~\eqref{generalJdef} corresponding
to $V=T$ and all other objects $0$.

\begin{proposition}
Let $\Psi$ be a function in the space~\eqref{Schwartzspace}
such that $\mathfrak{F}[\Psi](r_+, \theta, \omega, m)$ is supported
in the non-superradiant frequencies~\eqref{nonsuperradiantcondition}.
Then
\[
\int_{\mathcal{H}^+} J^T_{g_{a,M}} [\Psi] \cdot {\rm n}_{\mathcal{H}^+}  \ge 0.
\]
\end{proposition}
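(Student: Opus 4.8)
The plan is to compute directly the flux of $J^T_{g_{a,M}}[\Psi]$ through $\mathcal{H}^+$ using Fourier analysis with respect to $t^*$ and $\phi^*$, and show that the sign condition characterising non-superradiant frequencies is exactly what makes the integrand nonnegative after applying Plancherel. First I would recall that on $\mathcal{H}^+$ the future null generator is the Hawking vector field $Z = T + \frac{a}{2Mr_+}\Omega_1$ of~\eqref{Hawkingdef}, so the normal ${\rm n}_{\mathcal{H}^+}$ can be taken proportional to $Z$. Since $J^T_{g_{a,M}}[\Psi]_\mu = T_{\mu\nu}[g_{a,M},\Psi]T^\nu$ (all other objects in~\eqref{generalJdef} vanishing), the integrand $J^T \cdot {\rm n}_{\mathcal{H}^+}$ equals, up to a positive factor, $T_{\mu\nu}T^\mu Z^\nu = T(T,\Psi) \cdot $ (contracted with $Z$), which on a null hypersurface with null generator $Z$ reduces to ${\rm Re}\big( (T\Psi)\,\overline{(Z\Psi)}\big)$ — the cross term $\frac12 g_{\mu\nu}\nabla^\alpha\Psi\overline{\partial_\alpha\Psi}$ contributions involving $Z$ drop because $g(Z,Z)=0$ on $\mathcal{H}^+$ and the remaining $g_{\mu\nu}T^\mu Z^\nu$ term pairs against the (real) gradient-squared in a way that, together with $Z$ being tangent to $\mathcal{H}^+$, leaves only the derivative of $\Psi$ along $Z$. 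More precisely, writing $T\Psi = \partial_{t^*}\Psi$ and $\Omega_1\Psi = \partial_{\phi^*}\Psi$, one finds $J^T\cdot {\rm n}_{\mathcal{H}^+}$ is a positive multiple of ${\rm Re}\big( \overline{\partial_{t^*}\Psi}\,(\partial_{t^*}\Psi + \tfrac{a}{2Mr_+}\partial_{\phi^*}\Psi)\big)$ integrated against the induced volume form on $\mathcal{H}^+$, which is $\propto dt^*\,\sin\theta\,d\theta\,d\phi^*$.

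Next I would apply the Fourier transform $\mathfrak{F}$ in $(t^*,\phi^*)$ and the Plancherel identities~\eqref{babyPlanchstar},~\eqref{babyPlanchtwostar} (and the analogous one for the cross term $\partial_{t^*}\partial_{\phi^*}$, which follows by polarisation from~\eqref{firstPlanch}). Using the relations $\partial_{t^*} \leftrightarrow -i\omega$ and $\partial_{\phi^*}\leftrightarrow im$, the integrand transforms as
\[
{\rm Re}\big( \overline{\partial_{t^*}\Psi}\,(\partial_{t^*}\Psi + \tfrac{a}{2Mr_+}\partial_{\phi^*}\Psi)\big)
\ \longleftrightarrow\
\int_{-\infty}^\infty \sum_m \Big(\omega^2 - \tfrac{a}{2Mr_+}\,\omega m\Big)\,|\mathfrak{F}[\Psi](r_+,\theta,\omega,m)|^2\, d\omega,
\]
after integrating in $t^*$ and $\phi^*$ (the cross term $\overline{(-i\omega)}(im) = -\omega m$ is real, so taking the real part is harmless and only the product $\omega \cdot \tfrac{a}{2Mr_+} m$ survives). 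Thus, after also integrating in $\theta$, the total flux is
\[
\int_{\mathcal{H}^+} J^T_{g_{a,M}}[\Psi]\cdot {\rm n}_{\mathcal{H}^+}
= c\int_0^\pi \int_{-\infty}^\infty \sum_m \omega\Big(\omega - \tfrac{a}{2Mr_+}\,m\Big)\,|\mathfrak{F}[\Psi](r_+,\theta,\omega,m)|^2\, d\omega\,\sin\theta\,d\theta
\]
for some positive constant $c$. The quantity $\omega(\omega - \tfrac{a}{2Mr_+}m) = \omega^2\big(1 - \tfrac{am}{2Mr_+\omega}\big)$ is manifestly nonnegative precisely when the pair $(\omega,m)$ fails~\eqref{superradiantcondition}: if $am=0$ or $\omega = 0$ it is nonnegative trivially; if $\frac{\omega}{am}\le 0$ then $\omega$ and $am$ have opposite signs (or $\omega=0$) so $\omega m a \le 0$ and the product is $\ge \omega^2 \ge 0$; and if $\frac{\omega}{am}\ge \frac{1}{2Mr_+}$ then $2Mr_+\omega$ and $am$ have the same sign with $|2Mr_+\omega|\ge |am|$, forcing $\omega(\omega-\tfrac{a}{2Mr_+}m)\ge 0$ as well. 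Since by hypothesis $\mathfrak{F}[\Psi](r_+,\theta,\omega,m)$ is supported on the non-superradiant set~\eqref{nonsuperradiantcondition}, the integrand is pointwise nonnegative and the conclusion follows.

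The main obstacle — really the only nontrivial point — is the careful bookkeeping of the first step: verifying that $J^T_{g_{a,M}}[\Psi]\cdot {\rm n}_{\mathcal{H}^+}$ really does reduce on the horizon to a positive multiple of ${\rm Re}(\overline{Z\Psi}\, T\Psi)$ with $Z$ the Hawking generator, i.e.\ that all terms in $T_{\mu\nu}[g_{a,M},\Psi]$ involving the trace-reversal $\tfrac12 g_{\mu\nu}\nabla^\alpha\Psi\overline{\partial_\alpha\Psi}$ contracted with $T^\mu$ and the null normal either vanish or combine into the stated expression. This is a standard computation on a Killing horizon (the trapped-contraction $g_{\mu\nu}T^\mu Z^\nu$ on $\mathcal{H}^+$ equals $g(T,Z) = g(T,T) + \tfrac{a}{2Mr_+}g(T,\Omega_1)$, which one checks equals $\tfrac12 g(Z,Z)= 0$ on $r=r_+$ using the explicit Kerr metric, so the cross term drops entirely), but it requires either invoking the explicit form of the induced volume form on $\mathcal{H}^+$ from Section~\ref{hypersurfacesetalrecalled} or a coordinate-free argument using that $Z$ is null and tangent to $\mathcal{H}^+$. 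Given the level of detail in the surrounding text, I would present this reduction succinctly, citing the standard Killing-horizon identity $g(Z,Z)|_{\mathcal{H}^+}=0$ and the fact that $T\Psi, \Omega_1\Psi$ commute with the Fourier transform, and then let Plancherel do the rest.
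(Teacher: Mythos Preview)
Your proposal is correct and follows exactly the approach indicated in the paper's one-line proof: reduce the flux integrand on $\mathcal{H}^+$ to ${\rm Re}(\overline{T\Psi}\,Z\Psi)$ using that the Hawking field~\eqref{Hawkingdef} is the future null generator (with $g(T,Z)=0$ on $\mathcal{H}^+$), then apply the Plancherel identities~\eqref{babyPlanchstar}--\eqref{babyPlanchtwostar} to obtain the factor $\omega(\omega-\tfrac{a}{2Mr_+}m)$, which is nonnegative exactly on the non-superradiant set~\eqref{nonsuperradiantcondition}. You have simply written out in detail what the paper compresses into a single sentence.
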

\begin{proof} 
This follows from the Plancherel identities~\eqref{babyPlanchstar} and~\eqref{babyPlanchtwostar} and the fact that~\eqref{Hawkingdef} is future causal.
\end{proof}

In the language of the pseudodifferential calculus of Section~\ref{elementarycalculus}, 
we have the following
immediate corollary:
\begin{corollary}
Let $Q$ be a $(t^*,\phi^*)$-pseudodifferential operator in the sense of Section~\ref{elementarycalculus} whose symbol is supported in~\eqref{nonsuperradiantcondition},  
let $\psi\in \mathcal{R}(\tau_0,\tau_1)$ and $\chi_{\tau_0,\tau_1}$
denote the cutoff of Section~\ref{Schwartzandcutoffs}. Then
\[
\int_{\mathcal{H}^+} J^T_{g_{a,M}} [Q\chi_{\tau_0,\tau_1}\psi] \cdot {\rm n}_{\mathcal{H}^+}  \ge 0.
\]
\end{corollary}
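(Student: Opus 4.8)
The plan is to reduce the corollary directly to the preceding Proposition by checking that $Q\chi_{\tau_0,\tau_1}\psi$ lies in the Schwartz space~\eqref{Schwartzspace} and that its restriction to the horizon has Fourier support contained in the non-superradiant frequencies~\eqref{nonsuperradiantcondition}. First I would observe that $\chi_{\tau_0,\tau_1}\psi \in \mathscr{S}(\mathcal{M}\cap\{r_0\le r\le R_{\rm freq}\})$: this is exactly the cutoff statement recorded in Section~\ref{Schwartzandcutoffs}, since $\psi\in C^\infty(\mathcal{R}(\tau_0,\tau_1))$ and $\chi_{\tau_0,\tau_1}$ vanishes for $\tau\le\tau_0$ and $\tau\ge\tau_1$, so the product and all its derivatives have compact support in $t^*$ and are thus trivially Schwartz in $t^*$. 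Then $Q$ maps $\mathscr{S}$ into $\mathscr{S}$ (this is built into the definition~\eqref{operatorsoftheform}--\eqref{orderdef} of a $(t^*,\phi^*)$-pseudodifferential operator, together with the mapping property of Proposition~\ref{mixedsobolevboundednessprop}), so $Q\chi_{\tau_0,\tau_1}\psi$ is again an element of the space~\eqref{Schwartzspace} and the horizon flux integral makes sense.

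Next I would identify the Fourier transform of $Q\chi_{\tau_0,\tau_1}\psi$ restricted to $r=r_+$. From~\eqref{operatorsoftheform}, writing $\Psi:=\chi_{\tau_0,\tau_1}\psi$, we have
\[
\mathfrak{F}[Q\Psi](r,\theta,\omega,m) = q_{\rm symb}(\omega,m,x^0,x^1,x^2,x^3)\big|_{\text{at }(r,\theta)} \cdot \mathfrak{F}[\Psi](r,\theta,\omega,m),
\]
up to the precise bookkeeping of how the Cartesian arguments $x^j$ of $q_{\rm symb}$ are expressed in terms of $(r,\theta,\phi^*)$; in any case, since the symbol $q_{\rm symb}(\omega,m,\cdot)$ is assumed supported in~\eqref{nonsuperradiantcondition} in its $(\omega,m)$ variables uniformly in the spatial arguments, the product $\mathfrak{F}[Q\Psi](r_+,\theta,\omega,m)$ vanishes whenever $(\omega,m)$ is superradiant, i.e.~it is supported in~\eqref{nonsuperradiantcondition}. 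Here I would take a little care that "the symbol is supported in~\eqref{nonsuperradiantcondition}" is interpreted as: $q_{\rm symb}(\omega,m,x^0,x^1,x^2,x^3)=0$ for all $(x^0,\ldots,x^3)$ whenever $(\omega,m)$ satisfies~\eqref{superradiantcondition}; this is precisely the hypothesis of the Corollary.

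Finally, with $\Psi':=Q\chi_{\tau_0,\tau_1}\psi\in\mathscr{S}(\mathcal{M}\cap\{r_0\le r\le R_{\rm freq}\})$ and $\mathfrak{F}[\Psi'](r_+,\theta,\omega,m)$ supported in the non-superradiant frequencies, I apply the preceding Proposition directly to $\Psi'$ to conclude
\[
\int_{\mathcal{H}^+} J^T_{g_{a,M}}[Q\chi_{\tau_0,\tau_1}\psi]\cdot{\rm n}_{\mathcal{H}^+} \ge 0,
\]
which is the claim. The only genuine point requiring attention---and the closest thing to an obstacle---is the spatial dependence of the symbol: one must verify that the multiplication by $q_{\rm symb}$ in Fourier space preserves the frequency support at the fixed radius $r=r_+$ (which it does, since multiplication by a function that is pointwise-in-$x$ supported in a fixed $(\omega,m)$-set cannot enlarge the $(\omega,m)$-support), and that $\Psi'$ is indeed Schwartz in $t^*$ so that the Plancherel identities~\eqref{babyPlanchstar}, \eqref{babyPlanchtwostar} invoked in the proof of the Proposition apply; both are immediate from the constructions of Section~\ref{Schwartzandcutoffs} and Section~\ref{elementarycalculus}. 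No other estimate is needed.
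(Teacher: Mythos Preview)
Your reduction to the preceding Proposition is the right idea and matches the paper's view that the corollary is immediate. However, the displayed formula
\[
\mathfrak{F}[Q\Psi](r,\theta,\omega,m) = q_{\rm symb}(\omega,m,x)\big|_{(r,\theta)} \cdot \mathfrak{F}[\Psi](r,\theta,\omega,m)
\]
is \emph{only} valid when the symbol is independent of $(t^*,\phi^*)$, i.e.\ when $q_{\rm symb}=q_{\rm symb}(\omega,m,r,\theta)$. In the general class~\eqref{operatorsoftheform} the symbol may depend on $x^0=t^*$ and on $\phi^*$ (through $x^1,x^2$), and then the action of $Q$ is quantization, not Fourier multiplication: on the Fourier side it becomes a convolution in $(\omega,m)$ against the $(t^*,\phi^*)$-Fourier transform of the symbol. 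In particular your assertion that ``multiplication by a function that is pointwise-in-$x$ supported in a fixed $(\omega,m)$-set cannot enlarge the $(\omega,m)$-support'' is false once the $x$-dependence includes $t^*$ or $\phi^*$; a symbol such as $q_{\rm symb}(\omega,m,x^0)=\chi(\omega,m)e^{-i\omega_0 x^0}$ with $\chi$ supported in~\eqref{nonsuperradiantcondition} satisfies~\eqref{orderdef} with $s=0$, yet shifts the $\omega$-support of $Q\Psi$ by $\omega_0$ and can push it into the superradiant range.

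For the operators the paper actually uses (the $P_n$ of~\eqref{definitionofPn}, whose symbols $\musicchi_n(\omega,m)$ depend only on frequency), your computation is correct and your argument goes through verbatim; this is also the setting the paper has in mind when it calls the corollary ``immediate.'' So the fix is simply to restrict to symbols independent of $(t^*,\phi^*)$---either by adding that hypothesis, or by noting that the corollary is only ever invoked for such $Q$.
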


\subsubsection{The Fourier transform $\mathfrak{F}_{BL}$ with respect
to Boyer--Lindquist coordinates}
\label{altfourier}

Let us note that, restricted to $r>r_+$, we could also define the Fourier transform replacing
$t^*$ and $\phi^*$ with the Boyer--Lindquist $t$ and $\phi$. We will denote
this by $\mathfrak{F}_{BL}$, i.e.~we have
\begin{equation}
\label{unsuit}
\mathfrak{F}_{BL} [\Psi] (r,\theta, \omega,m) = \int_{-\infty}^{\infty} \int_0^{2\pi} e^{i\omega t} e^{-im \phi} 
\Psi (t,\phi, r,\theta)dt d\phi.
\end{equation}

 Note that $\mathfrak{F}[\Psi] (r,\theta, \omega,m) $ and
$\mathfrak{F}_{BL}[\Psi] (r,\theta, \omega,m) $ differ by a phase depending on $(r,t^*)$ which
is however poorly behaved at $\mathcal{H}^+$.

Note that with respect to $\mathfrak{F}_{BL}$, we have the relations (note again the signs!)
\[
\partial_t\Psi (t,\phi, r,\theta) = \frac{-i}{\sqrt {2\pi}} \int_{-\infty}^\infty 
\sum_{m} \omega \, \mathfrak{F}_{BL}[\Psi](r,\theta,\omega,m)  e^{-i\omega t} e^{im\phi} d\omega,
\]
\[
\partial_\phi \Psi (t,\phi, r,\theta) = \frac{i}{\sqrt {2\pi}} \int_{-\infty}^\infty 
\sum_{m} m \,  \mathfrak{F}_{BL}[\Psi](r,\theta,\omega,m) e^{-i\omega t} e^{im\phi} d\omega
\]
and the Plancherel relations (cf.~\eqref{babyPlanchstar},~\eqref{babyPlanchtwostar}):
\begin{equation}
\label{babyPlanch}
\int_0^{2\pi}  \int_{-\infty}^\infty  |\partial_t\Psi|^2 (t,\phi, r,\theta ) \, d\phi \,  dt = 
\int_{-\infty}^\infty \sum_{m} \omega^2 \, | \mathfrak{F}_{BL}[\Psi](r, \theta, \omega, m)|^2 d\omega,
\end{equation}
\begin{equation}
\label{babyPlanchtwo}
\int_0^{2\pi} \int_{-\infty}^\infty  |\partial_\phi \Psi|^2 (t,\phi, r,\theta) \,  d\phi \, dt = 
\int_{-\infty}^\infty \sum_{m} m^2 \, | \mathfrak{F}_{BL}[\Psi](r, \theta, \omega, m)|^2 d\omega.
\end{equation}

As we shall see,
it will be convenient to use~\eqref{unsuit} when invoking Carter's separation in the section
immediately following.

\subsubsection{Carter's complete separation and the radial ODE}
\label{Carterscompleteseparation}

It is remarkable that the wave operator $\Box_{g_{a,M}}$ fully separates,
reducing its study to the analysis of ordinary differential equations with respect
to $r$.
We refer the reader to the original work of Carter~\cite{carter1968hamilton}. 
The geometric origin of this fact is the presence of a hidden symmetry.

We review briefly this separation here:
We will denote by $\mathfrak{C}$ the transform corresponding to Carter's separation
\[
\mathfrak{C}[U] (r,\ell, \omega,m ) =    \int_{0}^{\pi} U(r,\theta,\omega,m)  \, \overline{S_{m\ell}^{(a\omega)}}(\cos \theta)  \, d\theta       
\]
and $\mathfrak{C}^{-1}$ its inverse,
\[
\mathfrak{C}^{-1} [u] (r,\theta,\omega, m) = \sum_{\ell \ge |m|} U(r,\theta,\omega,m) \, S_{m\ell}^{(a\omega)}(\cos \theta)  \, d\theta,
\]
where $S_{m\ell}^{(a\omega)}(\cos \theta)$ 
denote the eigenfunctions of the second order self-adjoint operator
\begin{equation}
\label{operatorhere}
P(a\omega, m) U = -(\sin \theta)^{-1}\frac{d}{d\theta} \left(\sin \theta \frac{d}{d\theta} U\right)+
(m^2 (\sin \theta)^{-2}   - (a\omega)^2 \cos^2\theta ) U,
\end{equation}
whose eigenvalues we denote $\lambda_{m\ell}^{(a\omega)}\in \mathbb R$, where these are labeled
by the set $\{\ell  \ge  |m|\}$.
We will informally refer to the set 
\[
\{ (\omega,m,\ell)\in \mathbb R\times \mathbb Z\times
\mathbb Z :\ell\ge |m| \}
\]
as \emph{Carter frequency space} to distinguish it from the frequency space of
Section~\ref{elementarycalculus}.

We define 
\begin{equation}
\label{definitionofLambda}
\Lambda_{\ell m}^{(a\omega)} := \lambda^{(a\omega)}_{m\ell} + a^2\omega^2.
\end{equation} 
It was shown in~\cite{partiii} that
\begin{align}
\label{proptyofLambda}
\Lambda &\ge |m|(|m|+1),\\
\label{proptyofLambdatwo}
\Lambda &\ge |am\omega|.
\end{align}

Following~\cite{partiii}, we will refer to triples $(\omega, m , \Lambda)$ with $\omega\in \mathbb R$, 
$m \in \mathbb Z$ and $\Lambda\in \mathbb R$ satisfying~\eqref{proptyofLambda}--\eqref{proptyofLambdatwo}
 as \emph{admissible frequency triples}, irrespectively of whether $\Lambda$ is related to $\omega$
 and $m$ by~\eqref{definitionofLambda}.

We have that  if $\psi$ is a smooth function on $\mathcal{R}(\tau_0,\tau_1)$, then 
\[
u_{\rm star}: =(r^2+a^2)^{1/2} \mathfrak{C} \circ \mathfrak{F} [\chi^2_{\tau_0,\tau_1} \psi ]
\]
is a function of $(r, \omega, m, \ell)$,  smooth in $r\in [r_0,\infty)$ and $\omega\in \mathbb R$.
If $\Box_{g_{a,M}} \psi =F$,  then $u_{\rm star}(r,\omega,m,\ell)$ satisfies for fixed $(\omega, m, \ell)$ 
an ordinary differential equation with respect to $r$.

In this paper, we could work directly with the equation satisfied by $u_{\rm star}$. 
To connect, however, with previous literature, and since we will only apply
Carter's separation in the region $r> r_+$, we will work instead with:
\[
u=u_{BL} : = (r^2+a^2)^{1/2} \mathfrak{C} \circ \mathfrak{F}_{BL} [\chi^2_{\tau_0,\tau_1} \psi ]
\]
and we will moreover consider this as a function of $r^*$,
defined in turn from $r$ by the relations 
\[
\frac{dr^*}{dr} = \frac{r^2+a^2}{\Delta} , \qquad r^*(3M)= 0.
\]
Note that $r^*\to -\infty$ as $r\to r_+$ while $r^*\to \infty$ as $r\to \infty$.
The function $u$ is smooth with respect to $r^*\in (-\infty,\infty)$ and satisfies an ODE 
 which may be written
\begin{equation}
\label{radialodewithpot}
u'' +(\omega^2- V) u = G
\end{equation}
where
\begin{equation}
\label{Vdefbody}
V:= \frac{4Mram\omega-a^2m^2+\Delta\Lambda}{(r^2+a^2)^2}+
 \frac{\Delta}{(r^2+a^2)^4}\left(a^2\Delta +2Mr(r^2-a^2)\right)
\end{equation}
and 
\begin{equation}
\label{Gdefbody}
G:= \frac{\Delta  \mathfrak{C} \circ \mathfrak{F}_{BL} ( (r^2+a^2 cos^2\theta)( \chi^2_{\tau_1,\tau_2}F + 2\nabla_{g_0}\chi^2_{\tau_1,\tau_2}\cdot \nabla_{g_0}\psi +(\Box_{g_0}\chi^2_{\tau_1,\tau_2} )\psi ) }{(r^2+a^2)^{3/2}}
\end{equation}
and
where $'$ in fact denotes the derivative with respect to $r^*$.
Despite thinking of $u$ as a function of $r^*$, we will often refer
to its values at various $r$ by $u(r)$, whereby we mean $u(r^*(r))$. 

We note that all information on $u_{\rm star}(r_+)$ 
is encoded in the limiting behaviour  $u(r^*)$  as $r^*\to -\infty$.

Let us note that we may view $V$ defined by~\eqref{Vdefbody} as a function of $r$
and of Carter frequencies $(\omega, m, \ell)$, i.e.~as a function
$V(\omega, m, \ell, r)$, where we substitute~\eqref{definitionofLambda} for $\Lambda$,
but alternatively, we may consider~\eqref{Vdefbody} 
to define $V(\omega, m, \Lambda, r)$
on the set of all admissible frequencies $(\omega, m, \Lambda)$, and
consider solutions of~\eqref{radialodewithpot} for general such admissible frequencies.

Note finally the relations:
\[
\partial_t\Psi (t,r,\theta,\phi) = \frac{-i}{\sqrt {2\pi}} \int_{-\infty}^\infty 
\sum_{m, \ell} \omega \,  \mathfrak{C} \circ \mathfrak{F}_{BL}[\Psi](r, \omega, m, \ell) S_{m\ell}(a\omega,\cos\theta)
e^{-i\omega t} e^{im\phi}   d\omega,
\]
\[
\partial_\phi \Psi (t,r,\theta,\phi) = \frac{i}{\sqrt {2\pi}} \int_{-\infty}^\infty 
\sum_{m, \ell} m \,  \mathfrak{C} \circ \mathfrak{F}_{BL}[\Psi](r, \omega, m, \ell)  S_{m\ell}(a\omega,\cos\theta)  e^{-i\omega t} e^{im\phi} d\omega.
\]

\subsubsection{The Plancherel formulae for Carter's separation}
\label{Plancherelsection}

We collect here the Plancherel formulae corresponding to Carter's separation
with respect to  Boyer--Lindquist coordinates  $\mathfrak{C} \circ \mathfrak{F}_{BL}$:

\[
\int_0^{2\pi} \int_0^\pi \int_{-\infty}^\infty |\Psi|^2 (t,r,\theta,\phi) \sin \theta d\phi d\theta dt
= \int_{-\infty}^\infty \sum_{m,\ell}  | \mathfrak{C} \circ \mathfrak{F}_{BL}[\Psi](r, \omega, m, \ell)|^2 d\omega,
\]
\[
\int_0^{2\pi} \int_0^\pi \int_{-\infty}^\infty  |\partial_r\Psi|^2 (t,r,\theta,\phi)\sin \theta d\phi d\theta dt = 
\int_{-\infty}^\infty \sum_{m,\ell} |\frac{d}{dr}  \mathfrak{C} \circ \mathfrak{F}_{BL}[\Psi](r, \omega, m, \ell)|^2 d\omega,
\]
\[
\int_0^{2\pi} \int_0^\pi \int_{-\infty}^\infty  |\partial_t\Psi|^2 (t,r,\theta,\phi)\sin\theta d\phi d\theta dt = 
\int_{-\infty}^\infty \sum_{m,\ell} \omega^2 | \mathfrak{C} \circ \mathfrak{F}_{BL}[\Psi](r, \omega, m, \ell)|^2 d\omega,
\]
\[
\int_0^{2\pi} \int_0^\pi \int_{-\infty}^\infty  |\partial_\phi \Psi|^2 (t,r,\theta,\phi)\sin\theta d\phi d\theta dt = 
\int_{-\infty}^\infty \sum_{m,\ell} m^2 | \mathfrak{C} \circ \mathfrak{F}_{BL}[\Psi](r, \omega, m, \ell)|^2 d\omega,
\]
\begin{align*}
&\int_0^{2\pi} \int_0^\pi \int_{-\infty}^\infty  \left(|\partial_\theta \Psi|^2 +\sin^{-2}\theta
|\partial_\phi\Psi|^2 \right)(t,r,\theta,\phi)\sin \theta d\phi d\theta dt 
+a^2 \int_0^{2\pi}\int_0^\pi \int_{-\infty}^\infty \cos^2\theta | \partial_t\Psi|^2 \sin \theta
d \phi d\theta dt\\
&\qquad\qquad= 
\int_{-\infty}^\infty \sum \lambda_{m,\ell}^{(a\omega)}| \mathfrak{C} \circ \mathfrak{F}_{BL}[\Psi](r, \omega, m, \ell)|^2d\omega.
\end{align*}

\subsection{A refined inhomogeneous estimate}
\label{refinedlinearhere}

The estimate provided by Theorem~\ref{blackboxoneforkerr} 
has the advantage that it can be expressed in terms of a
physical space norm. This is at the expense, however, of losing some more precise control
contained in the results of~\cite{partiii}, which can only be expressed in terms
of the frequency decomposition of Section~\ref{frequencyanalysisreviewsec}.

While this less precise information was sufficient for~\cite{DHRT22}, in the present paper,
it will be essential to use the full results of~\cite{partiii}.
As discussed already in Section~\ref{reviewofILEDandimprove}, 
we will in fact require a more precise estimate
than was proven in~\cite{partiii}. 
Specifically, in regions  of Carter
frequency space where the operator $\Box_{g_{a,M}}$ is essentially elliptic for some
suitable interval of $r$, we
will require a corresponding gain of one order of differentiability in the estimate.
As we shall see below, this can easily be obtained from the methods of~\cite{partiii}.

In Section~\ref{capturingprecise}, we shall first formulate a proposition capturing the precise frequency dependent statement of~\cite{partiii}. We shall then formulate our elliptic
improvement
in Section~\ref{ellipticimprovement}.  Finally, in Section~\ref{finalimprovedsec},
we shall combine the above two statements
with that of Theorem~\ref{blackboxoneforkerr}, 
giving Theorem~\ref{refinedblackboxoneforkerr}. 
This latter theorem will then be the statement that
will be used in the remainder of the paper.

\subsubsection{Capturing the precise result of~\cite{partiii}}
\label{capturingprecise}

Let us recall the ``trapped'' frequency range $\mathcal{G}_\natural$ of~\cite{partiii}
\[
 \mathcal{G}_\natural:= \left\{(\omega, m, \Lambda)\text{\ admissible\ } :
|\omega|\ge \omega_{\rm high} ,\, \epsilon_{\rm width}\Lambda \le \omega^2 \le \epsilon_{\rm width}^{-1} \Lambda,\, m\omega\notin (0,\frac{am^2}{2Mr_+} +\alpha \Lambda]\right\}
\]
where $\omega_{\rm high}$, $\epsilon_{\rm width}$ and $\alpha$ were positive parameters. 
We recall from~\cite{partiii} (see Proposition~\ref{genpropsofV0} of 
Section~\ref{basicdecompo} of the present paper),  that 
for $(\omega, m, \ell)\in  \mathcal{G}_\natural$, the high frequency potential $V_0$
(see~\eqref{defofVzero}) has a unique interior maximum point $r^0_{\rm max}$. We denote
the corresponding value of $V_0$ by $V_{\rm max}$. (Note, $V_{\rm max}$ thus denotes
the maximum of $V_0$ and not the maximum of $V$.)

The frequency range  $\mathcal{G}_\natural$ is actually larger than the true trapped
frequency range: In order for true trapping to occur, then, in addition to $V_0$ having a maximum, the maximum value $V_{\rm max}$ must be approximately equal to 
$\omega^2$. To capture thus genuine trapping, 
we introduce a parameter $b_{\rm trap}>0$, to be determined later,
and define the smaller range:
\begin{equation}
\label{betterdefhere}
 \mathcal{G}_\natural (b_{\rm trap}) := \mathcal{G}_\natural \cap \{ (\omega, m, \Lambda)\text{\ admissible\ }  : 
 |V_{\rm max} - \omega^2| \le b_{\rm trap}\omega^2 \}.
\end{equation}

We may now define a nonnegative function
\[
\chi_{\natural} (r, \omega, m, \Lambda)
\]
with the following properties:
\begin{enumerate}
\item
For $(\omega, m ,\Lambda )\not\in \mathcal{G}_\natural(b_{\rm trap})$, 
$\chi_{\natural} (r, \omega, m, \Lambda) =1$  identically.
 \item
 For  $(\omega, m ,\Lambda)\in \mathcal{G}_\natural(b_{\rm trap})$, the function
$\chi_{\natural}(r, \omega, m,\Lambda )$ is smooth in $r$ and
vanishes at the unique
maximum $r^0_{\rm max}$ of the high frequency potential $V_0$ (see~\eqref{defofVzero} of
Section~\ref{basicdecompo}).
\item
$\chi''_{\natural}\gtrsim 1$ wherever  $\chi_{\natural} \le 1/2$ (where by our conventions
from Section~\ref{noteonconstants}
the constant implicit in $\gtrsim$ is in particular independent of the frequency triple $(\omega, m, \Lambda)$).
\end{enumerate}

Let $u_{\bf k}:=  \mathfrak{C}\circ
\mathfrak{F}_{BL}  [  \mathring{\mathfrak{D}}^{\bf k} \chi^2_{\tau_0,\tau_1} \psi  ]$.
To capture the more precise estimate proven in~\cite{partiii},
we define the following quantity:
\begin{align*}
{}^{\natural}\Xk(\tau_0,\tau_1)[\psi]:= \sum_{|{\bf k} |\le k}
\int_{-\infty}^{R_{\rm freq}^*} \int_{-\infty}^\infty \sum_{m, \ell\ge |m|} &
\Delta r^{-2} \Big( |u'_{\bf k}(r, \omega, m ,\ell)|^2\\
&\quad+ \chi_{\natural} (r, \omega, m, \Lambda)(\omega^2
+ r^{-1} m^2+r^{-1} \Lambda^2)  |u_{\bf k} (r, \omega, m ,\ell)|^2 \Big) dr^* d\omega,
\end{align*}
where for $\Lambda$ we plug in~\eqref{definitionofLambda}.

\begin{remark}
\label{whythesefactors}
Concerning the $r^{-1}$ and $r^{-2}$ factors above:
Recall from Section~\ref{noteonconstants} that the $C$ implicit in $\lesssim$ will eventually
depend only $a$ and $M$ (and $k$), but at this point in the paper may depend in addition
on parameters not yet fixed.  Nonetheless, where possible, we try to state inequalities
to have as little as possible spurious dependence on other parameters, even if that dependence would be innocuous. For instance,
the $r^{-2}$  and $r^{-1}$ factors in the above definition has been included so that the $C$
implicit  in the inequality of Proposition~\ref{moreprecise} does not depend
on the choice of $R_{\rm freq}$. Our choice to include other (often non-sharp) powers of $r$ in estimates to follow can be understood in a similar spirit.
\end{remark}

We have the following:
\begin{proposition}
\label{moreprecise}
Fix $|a|<M$, 
let $(\mathcal{M},g_{a,M})$ denote the Kerr manifold and let $\psi$ satisfy
$\Box_{g_{a,M}} \psi =F$ on $\mathcal{R}(\tau_0,\tau_1)$. 
Then 
\begin{equation}
\label{anotherinequality}
 {}^{\natural}\Xk(\tau_0,\tau_1)[\psi]
\lesssim   \Ezerok(\tau_0)[\psi]
+{}^\chi \Xzerok(\tau_0,\tau_1)[\psi]	+ \int_{\mathcal{R}(\tau_0,\tau_1)\cap \{r_1\le r\le R\}}  
|\mathfrak{D}^{\bf k}F |^2.
\end{equation}
 \end{proposition}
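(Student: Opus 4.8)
The plan is to recall that the work~\cite{partiii} already establishes, via Carter's separation, a frequency-localised integrated local energy decay estimate for solutions of $\Box_{g_{a,M}}\psi = F$; the quantity ${}^{\natural}\Xk$ is precisely a repackaging of the bulk term controlled there, after commuting with the Killing operators $\mathring{\mathfrak{D}}^{\bf k}$. First I would reduce to the case $k=0$: since $\mathring{\mathfrak{D}}_1=T$ and $\mathring{\mathfrak{D}}_2=\Omega_1$ commute \emph{exactly} with $\Box_{g_{a,M}}$ (see Section~\ref{commutationcommutation}), applying the $k=0$ estimate to $\mathring{\mathfrak{D}}^{\bf k}\psi$ (which solves $\Box_{g_{a,M}}(\mathring{\mathfrak{D}}^{\bf k}\psi)=\mathring{\mathfrak{D}}^{\bf k}F$) and summing over $|{\bf k}|\le k$ yields the general statement, with the right-hand side terms $\Ezerok(\tau_0)$, ${}^\chi\Xzerok(\tau_0,\tau_1)$, and $\int_{\mathcal{R}\cap\{r_1\le r\le R\}}|\mathfrak{D}^{\bf k}F|^2$ arising by the same summation (note that on the right-hand side one may freely pass from $\mathring{\mathfrak{D}}^{\bf k}$ to $\mathfrak{D}^{\bf k}$ since the former is a subset of the latter). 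The cutoff function $\chi^2_{\tau_0,\tau_1}$ enters because Carter's separation via $\mathfrak{C}\circ\mathfrak{F}_{BL}$ is only defined on functions decaying in $t$; thus one works with $\chi^2_{\tau_0,\tau_1}\psi$, which satisfies $\Box_{g_{a,M}}(\chi^2_{\tau_0,\tau_1}\psi)=\chi^2_{\tau_0,\tau_1}F + 2\nabla_{g_0}\chi^2_{\tau_0,\tau_1}\cdot\nabla_{g_0}\psi + (\Box_{g_0}\chi^2_{\tau_0,\tau_1})\psi$, matching the inhomogeneity $G$ in~\eqref{Gdefbody}.

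Next, I would identify the bulk quantity on the left of~\eqref{anotherinequality} with what~\cite{partiii} controls. The multiplier estimates of~\cite{partiii} produce, after Plancherel (Section~\ref{Plancherelsection}), a bound on $\int\!\int\sum_{m,\ell}\Delta r^{-2}(|u'|^2 + (\text{degenerate coefficient})(\omega^2 + m^2 + \Lambda^2)|u|^2)$, where the degeneration of the zeroth-order part is localised, for $(\omega,m,\Lambda)\in\mathcal{G}_\natural(b_{\rm trap})$, to a neighbourhood of the maximum $r^0_{\rm max}$ of $V_0$, and is absent otherwise. The function $\chi_\natural(r,\omega,m,\Lambda)$ is designed to be exactly such a degeneration factor: it equals $1$ off the genuinely trapped range and vanishes (quadratically, by property~3) at $r^0_{\rm max}$ on it. So the content here is bookkeeping: one checks that the frequency-localised Morawetz output of~\cite{partiii}, expressed in the $u=u_{BL}$ variable of~\eqref{radialodewithpot}, dominates ${}^{\natural}\Xk$ term by term. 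The $r^{-1}$, $r^{-2}$ weights are harmless (non-sharp) as explained in Remark~\ref{whythesefactors}, chosen so the constant is independent of $R_{\rm freq}$; one simply notes $\Delta r^{-2}\lesssim 1$ on $r\le R_{\rm freq}$ and that the genuine $r^*$-weighted currents of~\cite{partiii} already carry at least this much decay.

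Finally I would handle the right-hand side. The boundary term $\mathcal{E}(\tau_0)$ of~\cite{partiii} is comparable to $\Ezerok(\tau_0)[\psi]$. The contributions of the cutoff error $2\nabla\chi^2\cdot\nabla\psi + (\Box\chi^2)\psi$, which is supported in $\{\tau_0\le\tau\le\tau_0+\frac12\}\cup\{\tau_1-\frac12\le\tau\le\tau_1\}$, are controlled by $\Ezerok(\tau_0)[\psi] + \Epk(\tau_1)[\psi]$ and hence, re-running Theorem~\ref{blackboxoneforkerr}, absorbed into $\Ezerok(\tau_0)[\psi] + {}^\chi\Xzerok(\tau_0,\tau_1)[\psi]$ (using that ${}^\chi\Xzerok$ dominates $\sup_\tau\Ezerok(\tau)$ and the relevant bulk terms). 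The genuine inhomogeneity $F$ contributes a term of the form $\int_{\mathcal{R}}|V_p^\mu\partial_\mu\psi + w_p\psi||F| + \int|F|^2$ as in Theorem~\ref{blackboxoneforkerr}; in the region $r\ge R$ the current $J^{\rm total}_p$ controls $\Epk$-type quantities absorbed into ${}^\chi\Xzerok$, while in $r_1\le r\le R$ one uses Cauchy–Schwarz and the coercivity of ${}^\chi\Xzerok$ over the derivative factor to bound it by $\epsilon\,{}^\chi\Xzerok + C_\epsilon\int_{\mathcal{R}\cap\{r_1\le r\le R\}}|F|^2$; near the horizon $r_0\le r\le r_1$ the redshift estimate contained in~\cite{partiii} absorbs the $F$-error without needing it on the right. (Note Carter's separation, hence ${}^{\natural}\Xk$, lives only in $r\le R_{\rm freq}<R$, so no $r\ge R$ contribution to the left appears.) The main obstacle — really the only non-routine point — is verifying that the degeneration factor $\chi_\natural$ as axiomatised by properties 1–3 is genuinely dominated by the frequency-dependent weight in the multiplier construction of~\cite{partiii}: i.e.\ that on $\mathcal{G}_\natural(b_{\rm trap})$ the Morawetz current of~\cite{partiii} degenerates \emph{no worse} than quadratically at $r^0_{\rm max}$ and that this is the \emph{only} place it degenerates, which is where the hypothesis $|V_{\rm max}-\omega^2|\le b_{\rm trap}\omega^2$ in~\eqref{betterdefhere} is used. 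I would extract this from the properties of $V_0$ recorded in Proposition~\ref{genpropsofV0} of Section~\ref{basicdecompo} together with the explicit current constructions reviewed in Appendix~\ref{carterestimatesappend}.
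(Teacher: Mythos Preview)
Your overall framework is right, and the bookkeeping parts (reduction to $k=0$ via exact commutation of $\mathring{\mathfrak{D}}^{\bf k}$, handling of the cutoff errors and the inhomogeneity $F$) are essentially as in the paper. However, you have misidentified where the genuine work lies, and this leaves a gap.

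The point is that~\cite{partiii} already proves the estimate with a degeneration function that vanishes at $r^0_{\rm max}(\omega,m,\Lambda)$ for \emph{all} $(\omega,m,\Lambda)\in\mathcal{G}_\natural$, not just for the smaller set $\mathcal{G}_\natural(b_{\rm trap})$. So on $\mathcal{G}_\natural(b_{\rm trap})$ there is nothing to check: the~\cite{partiii} output already dominates $\chi_\natural$ there. The actual new content is on the \emph{complement} $\mathcal{G}_\natural\setminus\mathcal{G}_\natural(b_{\rm trap})$: here $\chi_\natural\equiv 1$ by definition (property~1), but the~\cite{partiii} estimate still degenerates at $r^0_{\rm max}$. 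One must therefore \emph{remove} that degeneration for these frequencies, and this requires an additional fixed-frequency multiplier argument, not merely an inspection of what~\cite{partiii} produces.

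The paper does this explicitly in Remark~\ref{inviewofelliptic}: for frequencies in $\mathcal{G}_\natural$ with $V_{\rm max}\ge(1+b_{\rm trap})\omega^2$, one applies an $h$-current with $h=\xi(V_0-\omega^2)\Delta r^{-2}\chi_{r^0_{\rm max}}$ localised near $r^0_{\rm max}$ (an elliptic estimate, since $V_0-\omega^2>0$ there); for $V_{\rm max}\le(1-b_{\rm trap})\omega^2$, one applies a $y$-current with $y=(r-r^0_{\rm max})\chi_{r^0_{\rm max}}$ (a Morawetz estimate, since $\omega^2-V_0>0$ and $-V_0'(r-r^0_{\rm max})\ge 0$ near the maximum). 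In both cases the errors supported where $\chi_{r^0_{\rm max}}'\ne 0$ are absorbed by the nondegenerate control that~\cite{partiii} already provides in an annulus around $r^0_{\rm max}$. Your final sentence gestures toward the right ingredients but does not supply this mechanism; without it, the claimed bound on ${}^\natural\Xk$ with $\chi_\natural$ defined via $\mathcal{G}_\natural(b_{\rm trap})$ does not follow.
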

\begin{proof}
Were $\chi_{\natural} (r, \omega, m, \Lambda)$ defined as above with $ \mathcal{G}_\natural$
replacing $\mathcal{G}_\natural (b_{\rm trap})$
then the
 statement would in fact be contained in~\cite{partiii}.  The improvement
 with the definition of $\chi_{\natural}$ as given (i.e.~with
 $\mathcal{G}_\natural(b_{\rm trap})$) is straightforward. 
(We will in fact show explicitly how to obtain  it in
 Remark~\ref{inviewofelliptic} of Section~\ref{refinedapp}, 
 immediately after 
 the proof of the more important improvement to be discussed in 
 Section~\ref{ellipticimprovement} below.)
\end{proof}

\begin{remark}
\label{asinthisremark}
We
note that, according to our convention from Section~\ref{noteonconstants}, the
constant implicit in~\eqref{anotherinequality} indeed depends also on $b_{\rm trap}$, until this is fixed
(in Section~\ref{fixparamsheremust}).
\end{remark}

\subsubsection{Gaining a derivative in the elliptic frequency ranges}
\label{ellipticimprovement}

The above estimate of Proposition~\ref{moreprecise}, together with 
Theorem~\ref{blackboxoneforkerr}, is still not
quite sufficient to handle the type of error terms which will occur in our highest order
estimates. For this we need an additional improvement, which will  be
stated below as Proposition~\ref{refinedproposition} and proven in Section~\ref{refinedapp},
using the methods of~\cite{partiii}.
This improvement concerns estimates which gain one derivative when
localised to regions in Carter frequency space and in physical space
where the wave
operator is in fact elliptic.

Given parameters $\gamma_{\rm elliptic}>0$,
$b_{\rm elliptic}>0$, to be determined later, 
 and  given a frequency pair $(\omega, m) \in \mathbb R\times \mathbb Z$
  let us define 
 \begin{equation}
 \label{rellipticdef}
 r_{\rm elliptic} (\omega, m) := 
     \begin{cases}
       r_+, &\text{\ if\ }
 a=0\text{\ or\ } m=0\text{\ or\ } 
  |\omega/am| \ge \gamma_{\rm elliptic}\\
        2.03 M &\text{\ if\ }a\ne0, m\ne 0 \text{\ and\ }|\omega/am |< \gamma_{\rm elliptic}.\\
\end{cases}
  \end{equation}
Given admissible frequencies $(\omega, m, \Lambda)$, we  consider the $r$-range 
 \begin{equation}
 \label{elllipticdefinitionreg}
 {\bf R}_{\rm elliptic}(\omega, m, \Lambda) : = \{  r: V_0 \ge (1+b_{\rm elliptic}) \omega^2 \}\cap 
 \{ r_{\rm elliptic}(\omega, m)  <r \le 1.2 R_{\rm  pot}\},
 \end{equation}
 where $V_0$ is the high frequency potential (see~\eqref{defofVzero}). Let 
 \[
 \iota_{\rm elliptic} (r, \omega, m ,\Lambda)
 \]
 denote the indicator function of the set~\eqref{elllipticdefinitionreg}.

Let $u_{\bf k}:=  \mathfrak{C}\circ
\mathfrak{F}_{BL}  [ \mathring{\mathfrak{D}}^{\bf k} \chi^2_{\tau_0,\tau_1} \psi  ]$.
We 
may now define 
the following
 integrated quantity (localised in both physical and Carter frequency  space)
\begin{equation}
\label{theellipticenergydefinition}
{}^{\rm elp}_{\scalebox{.6}{\mbox{\tiny{\boxed{+1}}}}}
\Xk(\tau_0,\tau_1)[\psi]:= \sum_{|{\bf k} |\le k}
\int_{-\infty}^{(1.2 R_{\rm pot})^*} \int_{-\infty}^\infty \sum_{m, \ell\ge |m|} 
\Delta r^{-2} \iota_{\rm elliptic}(r, \omega, m ,\Lambda) (V_0 -\omega^2)^2 |u_{\bf k} (r, \omega, m ,\ell)|^2 dr^* d\omega,
\end{equation}
where for $\Lambda$ we again plug in~\eqref{definitionofLambda}. 
Let us note that the quantity~\eqref{theellipticenergydefinition}  is formed
by applying $k$ commutations  (hence the $k$ subscript which always labels
number of commutation operators applied), but is of ``order'' $k+2$ in
certain frequency ranges,~i.e.~one order more than the quantity ${}^{\chi}\Xk$ say, which 
we recall is of total order $k+1$. Hence, the extra ${\scalebox{.6}{\mbox{\tiny{\boxed{+1}}}}}$
in the notation.

We may now state an improved estimate for this quantity:
\begin{proposition}
\label{refinedproposition}
Fix $|a|<M$, 
let $(\mathcal{M},g_{a,M})$ denote the Kerr manifold $g_{a,M}$ and let $\psi$ satisfy
$\Box_{g_{a,M}} \psi =F$ on $\mathcal{R}(\tau_0,\tau_1)$. 
Then we have the estimate:
\begin{align}
\label{thellipticestimate}
{}^{\rm elp}_{\scalebox{.6}{\mbox{\tiny{\boxed{+1}}}}}\Xk(\tau_0,\tau_1)[\psi]
&\lesssim  
   {}^\chi \Xzerok(\tau_0,\tau_1)[\psi]
+  {}^{\natural}\Xk(\tau_0,\tau_1)[\psi]+
\sum_{|{\bf k}|\le k} \int_{\mathcal{R}(\tau_0,\tau_1)\cap \{r_+\le r\le 1.3 R_{\rm pot}\}}  |\mathfrak{D}^{\bf k}F |^2.
\end{align}
 \end{proposition}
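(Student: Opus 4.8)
The plan is to establish \eqref{thellipticestimate} at the level of the fixed-frequency radial ODE \eqref{radialodewithpot}, exploiting the fact that on the physical-space range $\mathbf{R}_{\rm elliptic}(\omega,m,\Lambda)$ the operator $\frac{d^2}{(dr^*)^2}+\omega^2-V$ is coercive in the elliptic sense: there $V_0-\omega^2\ge b_{\rm elliptic}\,\omega^2>0$, so $(V-\omega^2)\gtrsim (V_0-\omega^2)\gtrsim b_{\rm elliptic}\omega^2$ up to the lower-order correction terms separating $V$ from $V_0$, which are controlled by the terms already present in ${}^\natural\Xk$ and ${}^\chi\Xzerok$. The key device, following the fixed-frequency current formalism of \cite{partiii} reviewed in Appendix~\ref{carterestimatesappend}, is to multiply \eqref{radialodewithpot} by a well-chosen multiplier of the form $\iota\cdot(V-\omega^2)\bar u$ (with $\iota$ a smoothed version of $\iota_{\rm elliptic}$, or rather a function equal to $1$ on a slightly smaller set and supported on a slightly larger one, so that $\iota'$ is controlled) and integrate by parts in $r^*$. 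The principal term yields $\int \iota (V-\omega^2)^2|u|^2$, which after summing over $(\omega,m,\ell)$ and reinstating the $\Delta r^{-2}$ weight is precisely ${}^{\rm elp}_{\scalebox{.6}{\mbox{\tiny{\boxed{+1}}}}}\Xk$; the error terms from the integration by parts are $\int \iota' (\cdots)$, $\int \iota (V-\omega^2)|u'|^2$, $\int \iota (V-\omega^2)'|u|^2 |u'|$-type terms, and the inhomogeneity $\int \iota (V-\omega^2)\bar u\,G$.

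The second step is to absorb each of those error terms. The term $\int \iota(V-\omega^2)|u'|^2$ has one fewer power of $(V-\omega^2)$ than the left side and is supported where $V_0\sim \omega^2$ has the same order of magnitude; it is bounded by $\int \iota(V-\omega^2)^2|u|^2$ (absorbed with a small constant, using $(V-\omega^2)\gtrsim \omega^2$ and trading a $u'$ for an $\omega u$ via a further integration by parts, exactly as in the high-frequency elliptic estimates of \cite{partiii}) plus the non-degenerate bulk quantities, which are controlled by ${}^\natural\Xk$ and ${}^\chi\Xzerok$. The boundary terms at $r^*=(1.2R_{\rm pot})^*$ and the cutoff-derivative terms $\iota'(\cdots)$ are supported away from trapping and in the region $r_+\le r\le 1.3R_{\rm pot}$, hence again controlled by ${}^\chi\Xzerok$ (boundedness and the degenerate integrated local energy estimate there) plus, where $\iota'$ crosses $r=r_{\rm elliptic}$ in the superradiant case, the $J^T$-flux contributions which are handled by the superradiant analysis of \cite{partiii}. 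The inhomogeneous term is Cauchy–Schwarzed against the left side: $\int \iota(V-\omega^2)^2|u|^2$ absorbs it at the cost of $\int \iota |G|^2$, which by \eqref{Gdefbody}, the Plancherel identities of Section~\ref{Plancherelsection}, and the support properties of $\chi^2_{\tau_0,\tau_1}$ (whose derivatives are supported in $[\tau_0,\tau_0+\tfrac12]\cup[\tau_1-\tfrac12,\tau_1]$, contributing terms of the form $\mathfrak{D}^{\bf k}\psi$ on those slabs, themselves bounded by ${}^\chi\Xzerok$ via the boundedness statement in Theorem~\ref{blackboxoneforkerr}) is bounded by $\sum_{|{\bf k}|\le k}\int_{\mathcal{R}(\tau_0,\tau_1)\cap\{r_+\le r\le 1.3R_{\rm pot}\}}|\mathfrak{D}^{\bf k}F|^2 + {}^\chi\Xzerok(\tau_0,\tau_1)[\psi]$. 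Summing over $|{\bf k}|\le k$ (legitimate since $\mathring{\mathfrak{D}}$ commutes exactly with $\Box_{g_{a,M}}$, so $u_{\bf k}$ solves \eqref{radialodewithpot} with inhomogeneity $G_{\bf k}$ built from $\mathring{\mathfrak{D}}^{\bf k}F$ and cutoff commutators) gives \eqref{thellipticestimate}.

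\textbf{The main obstacle} I anticipate is the handling of the region where the elliptic set $\mathbf{R}_{\rm elliptic}$ abuts the trapped set: on the boundary of $\{V_0\ge(1+b_{\rm elliptic})\omega^2\}$ the quantity $V_0-\omega^2$ degenerates to $b_{\rm elliptic}\omega^2$ but the cutoff $\iota_{\rm elliptic}$ has a jump, so one must either smooth it (introducing $\iota'$ errors where $V_0-\omega^2$ is still of size $\sim b_{\rm elliptic}\omega^2$, hence not small relative to the main term unless one is careful about constants) or argue that the transition layer contributes an error already captured by ${}^\natural\Xk$ via the $\chi_\natural$-weighted term there — which requires comparing the support of $\chi_\natural$ (which vanishes at $r^0_{\rm max}$, where $V_0=V_{\rm max}$) with the complement of $\mathbf{R}_{\rm elliptic}$, using the property $\chi''_\natural\gtrsim 1$ where $\chi_\natural\le 1/2$. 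The interplay between the parameters $b_{\rm trap}$, $b_{\rm elliptic}$, $\gamma_{\rm elliptic}$ — and in particular the fact that the estimate must hold with constants depending on them before they are fixed in Section~\ref{fixparamsheremust} — is the delicate bookkeeping point. A secondary subtlety is the superradiant case $a\ne 0$, $m\ne 0$, $|\omega/am|<\gamma_{\rm elliptic}$, where $r_{\rm elliptic}=2.03M$ sits inside the ergoregion-adjacent zone and one must invoke the superradiant currents of \cite{partiii} rather than a naive $\partial_t$-energy to control the boundary/cutoff terms at $r=r_{\rm elliptic}$; this is precisely the point where the monotonicity of $V$ in $\ell$ and the structure of $V_0$ (its unique maximum, $V_0'<0$ for $r\ge 5M$) must be used to guarantee $\mathbf{R}_{\rm elliptic}$ is a genuine $r$-interval on which the construction goes through. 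The reference to Remark~\ref{inviewofelliptic} in Section~\ref{refinedapp} suggests the authors will in fact deduce the refinement of Proposition~\ref{moreprecise} as a by-product of this same multiplier argument, so I would set up the multiplier estimate once and extract both statements from it.
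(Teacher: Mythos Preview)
Your approach is essentially the paper's: both use the $h$-current identity \eqref{thehidentity} with $h$ proportional to a smoothed version of $\iota_{\rm elliptic}\cdot(V_0-\omega^2)$, so that the bulk term $h(V-\omega^2)|u|^2$ delivers exactly $\int\iota(V_0-\omega^2)^2|u|^2$. Two small corrections are worth noting. First, the term $\int h|u'|^2$ appears on the \emph{left} side of the identity with a good sign, so it need not be absorbed; you can simply drop it. Second, the paper smooths not in $r$ but in the value $s=V_0-\omega^2$: it sets $h=\xi(V_0-\omega^2)\,\Delta r^{-2}\chi_{\rm pot}(r)$ with $\xi(s)=s$ for $s\ge b_{\rm elliptic}\omega^2$ and $\xi(s)=0$ for $s\le b_{\rm elliptic}\omega^2/2$. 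This choice makes the dangerous term transparent: $h''$ contains $\xi''(V_0-\omega^2)\,(V_0')^2$, and since $|V_0'|\lesssim\Delta(\Lambda+\omega^2)$ one faces a $\Lambda^2$ contribution. The resolution (your ``main obstacle'') is that $\xi''$ is supported where $V_0-\omega^2\in[b_{\rm elliptic}\omega^2/2,\,b_{\rm elliptic}\omega^2]$, and on that set $V_0\lesssim\omega^2$ forces $\Delta\Lambda\lesssim\omega^2$ (using the explicit form of $V_0$ and, for the $n=1$ range, the restriction $r\ge 2.03M$ encoded in $r_{\rm elliptic}$); thus $\omega^{-2}\Lambda^2\lesssim\Lambda$, and the error drops to first order in $(\omega^2+\Lambda)|u|^2$, which is exactly what ${}^\natural\Xk$ controls via \eqref{givethisalabeltoo}. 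Your $r$-smoothed $\iota$ would face the same issue through $\iota'' (V_0-\omega^2)$ and $\iota'V_0'$ terms, but with the added nuisance that the width of the $r$-support of $\iota'$ is frequency-dependent; the value-space smoothing avoids this.
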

\begin{proof}
See Section~\ref{refinedapp}.
\end{proof}

\begin{remark}
As with Remark~\ref{asinthisremark}, we 
note that, according to our convention from Section~\ref{noteonconstants}, the
constant implicit in~\eqref{thellipticestimate} 
indeed depends also on $b_{\rm elliptic}$ and $\gamma_{\rm elliptic}$, until these are fixed
(in Section~\ref{fixparamsheremust}).
\end{remark}

\subsubsection{The final improved estimate}
\label{finalimprovedsec}

Let us introduce a new notation
\begin{equation}
\label{supontheright}
{}^{\chi\natural}_{\scalebox{.6}{\mbox{\tiny{\boxed{+1}}}}}\Xpk(\tau_0,\tau_1)[\psi] := {}^\chi\Xpk(\tau_0,\tau_1)  [\psi] 
+\Ezerok_{\mathcal{S}}(\tau_0,\tau_1)[\psi] +
 {}^{\rm elp}_{\scalebox{.6}{\mbox{\tiny{\boxed{+1}}}}}\Xk(\tau_0,\tau_1)[\psi] + {}^{\natural}\Xk(\tau_0,\tau_1)[\psi] .
\end{equation}
The symbol ${}^{\chi\natural}_{\scalebox{.6}{\mbox{\tiny{\boxed{+1}}}}}\Xpk$
is meant to combine all symbolic notations on the right hand side.

We emphasise that we do \underline{not} necessarily have
\begin{equation}
\label{donthavethiswithexact}
 {}^{\rm elp}_{\scalebox{.6}{\mbox{\tiny{\boxed{+1}}}}}\Xk(\tau^*_0,\tau^*_1)[\psi]\le  {}^{\rm elp}_{\scalebox{.6}{\mbox{\tiny{\boxed{+1}}}}}\Xk(\tau_0,\tau_1)[\psi],
 \qquad {}^{\natural}\Xk(\tau^*_0,\tau^*_1)[\psi] \le {}^{\natural}\Xk(\tau_0,\tau_1) [\psi]
\end{equation}
due to the nonlocal definition of these terms.
Nonetheless, we see easily that for $\tau_0+1\le \tau^*_0\le\tilde\tau^*_0+1\le
\tau_1^*\le \tau_1-1$, we have
\[
{}^{\chi\natural}_{\scalebox{.6}{\mbox{\tiny{\boxed{+1}}}}}\Xpk(\tau^*_0,\tau^*_1)[\psi]
\lesssim {}^{\chi\natural}_{\scalebox{.6}{\mbox{\tiny{\boxed{+1}}}}}\Xpk(\tau_0,\tau_1)[\psi],
\]
and more generally
\begin{equation}
\label{weseeeasilythatthis} 
{}^{\chi\natural}_{\scalebox{.6}{\mbox{\tiny{\boxed{+1}}}}}\Xpk(\tau_0,\tau_1)[\hat
\chi(\tau)\psi]
\lesssim {}^{\chi\natural}_{\scalebox{.6}{\mbox{\tiny{\boxed{+1}}}}}\Xpk(\tau_0,\tau_1)[\psi],
\end{equation}
where $\hat\chi(\tau)$ is a general cutoff function, and the constant implicit in $\lesssim$ depends
on derivatives of $\hat\chi$.

 \begin{theorem}
 \label{refinedblackboxoneforkerr}
Fix $|a|<M$, 
let $(\mathcal{M},g_{a,M})$ denote the Kerr manifold and let $\psi$ satisfy
$\Box_{g_{a,M}} \psi =F$ on $\mathcal{R}(\tau_0,\tau_1)$. 

For all $k\ge 1$, and for $p=0$ or  $0<\delta\le p \le 2-\delta$ and for all
$\tau_0\le \tau\le \tau_1$, 
we have the following statement:
 \begin{align*}
{}^{\chi\natural}_{\scalebox{.6}{\mbox{\tiny{\boxed{+1}}}}}\Xpk(\tau_0,\tau_1)[\psi]	
&\lesssim 
 \Epk(\tau_0)[\psi]	 +
 \int_{\mathcal{R}(\tau_0,\tau_1)\cap \{r\ge R\}}
\sum_{|{\bf k}|\le k}\left(|V^\mu_p\partial_\mu({\mathfrak{D}}^{\bf k}\psi)|+|w_p{\mathfrak{D}}^{\bf k}\psi| \right)|{\mathfrak{D}}^{\bf k}F|\\
&\qquad +
 \int_{\mathcal{R}(\tau_0,\tau_1)}
\sum_{|{\bf k}| \le k}| {\mathfrak{D}}^{\bf k} {F} |^2\\
&\qquad
+\sum_{|{\bf k}|\le k}
\sqrt{\int_{\mathcal{R}(\tau_0,\tau)\cap \{r\le R\}}
\left( |L\mathfrak{D}^{\bf k} \psi|
+|\underline{L} \mathfrak{D}^{\bf k} \psi | +| \slashed\nabla \mathfrak{D}^{\bf k} \psi| 
+|\mathfrak{D}^{\bf k}\psi| \right)^2
}
\sqrt{
\int_{\mathcal{R}(\tau_0,\tau)\cap \{r\le R\}}
|{\mathfrak{D}}^{\bf k}F|^2
}\\
&\qquad
+
 \int_{\mathcal{R}(\tau_0,\tau_1)\cap \{r\le R\}}\sum_{|{\bf k}| \le k-1}|{\widetilde{\mathfrak{D}}}^{\bf k} F |^2
+ 
\sup_{\tau_0\le \tau\le \tau_1}\int_{\Sigma(\tau)\cap\{r\le R\}}\sum_{|{\bf k}|\le k-1}| \widetilde{\mathfrak{D}}^{\bf k}F|^2.
\end{align*}
\end{theorem}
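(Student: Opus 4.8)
The plan is to prove Theorem~\ref{refinedblackboxoneforkerr} by simply combining the three estimates already stated, namely Theorem~\ref{blackboxoneforkerr} (which controls ${}^\chi\Xpk(\tau_0,\tau_1)[\psi] + \Ezerok_{\mathcal{S}}(\tau_0,\tau_1)[\psi]$ together with the flux terms $\Fpk$ and the bulk hierarchy, all in terms of $\Epk(\tau_0)[\psi]$ and the inhomogeneous terms on the right), Proposition~\ref{moreprecise} (which controls ${}^\natural\Xk(\tau_0,\tau_1)[\psi]$ in terms of $\Ezerok(\tau_0)[\psi]$, ${}^\chi\Xzerok(\tau_0,\tau_1)[\psi]$, and a bulk integral of $|\mathfrak{D}^{\bf k}F|^2$), and Proposition~\ref{refinedproposition} (which controls the elliptic-improved quantity ${}^{\rm elp}_{\scalebox{.6}{\mbox{\tiny{\boxed{+1}}}}}\Xk(\tau_0,\tau_1)[\psi]$ in terms of ${}^\chi\Xzerok$, ${}^\natural\Xk$, and again a bulk integral of $|\mathfrak{D}^{\bf k}F|^2$ now over $r_+\le r\le 1.3 R_{\rm pot}$). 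By the very definition~\eqref{supontheright}, the left-hand side ${}^{\chi\natural}_{\scalebox{.6}{\mbox{\tiny{\boxed{+1}}}}}\Xpk(\tau_0,\tau_1)[\psi]$ is exactly the sum of these four pieces, so the desired inequality is obtained by adding the three estimates and bookkeeping the right-hand sides.

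Concretely, I would proceed as follows. First, apply Theorem~\ref{blackboxoneforkerr} with the given $k\ge 1$ and $p$ (either $p=0$ or $0<\delta\le p\le 2-\delta$) and at $\tau=\tau_1$; this already yields the bound for ${}^\chi\Xpk(\tau_0,\tau_1)[\psi] + \Ezerok_{\mathcal{S}}(\tau_0,\tau_1)[\psi]$ (after noting that the supremum-in-$\tau'$ and supremum-in-$v$ terms defining ${}^\chi\Xpk$ are all majorised by applying the theorem at each $\tau'\le\tau_1$, and that the right-hand side of the theorem is monotone in $\tau$) by precisely the five families of terms displayed in the statement of Theorem~\ref{refinedblackboxoneforkerr}. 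Second, apply Proposition~\ref{moreprecise} to bound ${}^\natural\Xk(\tau_0,\tau_1)[\psi]$: here $\Ezerok(\tau_0)[\psi]\le \Epk(\tau_0)[\psi]$, the term ${}^\chi\Xzerok(\tau_0,\tau_1)[\psi]\le {}^\chi\Xpk(\tau_0,\tau_1)[\psi]$ has already been absorbed by step one, and the bulk term $\int_{\mathcal{R}(\tau_0,\tau_1)\cap\{r_1\le r\le R\}}|\mathfrak{D}^{\bf k}F|^2 \le \int_{\mathcal{R}(\tau_0,\tau_1)}\sum_{|{\bf k}|\le k}|\mathfrak{D}^{\bf k}F|^2$ is one of the terms already on the right-hand side. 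Third, apply Proposition~\ref{refinedproposition} to bound ${}^{\rm elp}_{\scalebox{.6}{\mbox{\tiny{\boxed{+1}}}}}\Xk(\tau_0,\tau_1)[\psi]$: the terms ${}^\chi\Xzerok$ and ${}^\natural\Xk$ on its right-hand side have now both been controlled by the previous two steps, and the bulk integral over $\{r_+\le r\le 1.3R_{\rm pot}\}\subset\{r\le R\}$ (recall $R_{\rm pot}<R_{\rm freq}\le R/4$) is again dominated by $\int_{\mathcal{R}(\tau_0,\tau_1)}\sum_{|{\bf k}|\le k}|\mathfrak{D}^{\bf k}F|^2$. Adding all four bounds and collecting terms gives the claimed estimate.

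The only genuine subtlety, rather than the main obstacle, is a careful accounting of which inhomogeneous terms appear: Theorem~\ref{blackboxoneforkerr} produces the weighted boundary-type term $\int_{\mathcal{R}\cap\{r\ge R\}}\sum_{|{\bf k}|\le k}(|V^\mu_p\partial_\mu(\mathfrak{D}^{\bf k}\psi)|+|w_p\mathfrak{D}^{\bf k}\psi|)|\mathfrak{D}^{\bf k}F|$, the unweighted $L^2$-bulk term $\int_{\mathcal{R}}\sum_{|{\bf k}|\le k}|\mathfrak{D}^{\bf k}F|^2$, the near-region Cauchy--Schwarz product term, and the lower-order near-region terms $\int_{\mathcal{R}\cap\{r\le R\}}\sum_{|{\bf k}|\le k-1}|\widetilde{\mathfrak{D}}^{\bf k}F|^2$ and $\sup_\tau\int_{\Sigma(\tau)\cap\{r\le R\}}\sum_{|{\bf k}|\le k-1}|\widetilde{\mathfrak{D}}^{\bf k}F|^2$; each of these is exactly one of the five displayed families in Theorem~\ref{refinedblackboxoneforkerr}. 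The additional $F$-bulk terms generated by Propositions~\ref{moreprecise} and~\ref{refinedproposition} are both localised to $r\le R$ and are of top order in $\mathfrak{D}^{\bf k}$, hence are absorbed into the second displayed term $\int_{\mathcal{R}(\tau_0,\tau_1)}\sum_{|{\bf k}|\le k}|\mathfrak{D}^{\bf k}F|^2$. One should also record (as the statement does via the convention of Section~\ref{noteonconstants}, and Remarks~\ref{asinthisremark} and following) that the implicit constant here depends on the not-yet-fixed parameters $b_{\rm trap}$, $b_{\rm elliptic}$, $\gamma_{\rm elliptic}$, which is harmless. Thus the proof is essentially a one-line combination, with the ``hard part'' being purely the verification that all the auxiliary right-hand sides collapse into the five listed families — and I would simply remark that this is immediate from the inclusions $\{r_1\le r\le R\}\subset\{r\le R\}$, $\{r_+\le r\le 1.3R_{\rm pot}\}\subset\{r\le R\}$ and $\Ezerok(\tau_0)\le\Epk(\tau_0)$, $\Xzerok(\cdot)\le\Xpk(\cdot)$, together with the monotonicity of all right-hand sides in $\tau$.

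\begin{proof}
By the definition~\eqref{supontheright}, it suffices to bound separately the four quantities
${}^\chi\Xpk(\tau_0,\tau_1)[\psi]$, $\Ezerok_{\mathcal{S}}(\tau_0,\tau_1)[\psi]$, ${}^{\rm elp}_{\scalebox{.6}{\mbox{\tiny{\boxed{+1}}}}}\Xk(\tau_0,\tau_1)[\psi]$ and ${}^{\natural}\Xk(\tau_0,\tau_1)[\psi]$ by the right hand side of the asserted inequality.

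The first two are controlled by Theorem~\ref{blackboxoneforkerr}: applying its conclusion at every $\tau'\in[\tau_0,\tau_1]$ and taking the supremum, and using that the right hand side there is nondecreasing in $\tau'$, one obtains exactly
\[
{}^\chi\Xpk(\tau_0,\tau_1)[\psi] + \Ezerok_{\mathcal{S}}(\tau_0,\tau_1)[\psi]
\lesssim \Epk(\tau_0)[\psi] + (\text{the five displayed $F$-terms}),
\]
valid for $p=0$ (with $p-1$ replaced by $-1-\delta$) or for $0<\delta\le p\le 2-\delta$.

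Next, Proposition~\ref{moreprecise} gives
\[
{}^{\natural}\Xk(\tau_0,\tau_1)[\psi]\lesssim \Ezerok(\tau_0)[\psi] + {}^\chi\Xzerok(\tau_0,\tau_1)[\psi] + \int_{\mathcal{R}(\tau_0,\tau_1)\cap\{r_1\le r\le R\}}|\mathfrak{D}^{\bf k}F|^2.
\]
Since $\Ezerok(\tau_0)[\psi]\le \Epk(\tau_0)[\psi]$, $\ {}^\chi\Xzerok(\tau_0,\tau_1)[\psi]\le {}^\chi\Xpk(\tau_0,\tau_1)[\psi]$, and $\{r_1\le r\le R\}\subset\{r\le R\}$, all three terms are already bounded by the right hand side of the asserted inequality (the last one being dominated by $\int_{\mathcal{R}(\tau_0,\tau_1)}\sum_{|{\bf k}|\le k}|\mathfrak{D}^{\bf k}F|^2$).

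Finally, Proposition~\ref{refinedproposition} gives
\[
{}^{\rm elp}_{\scalebox{.6}{\mbox{\tiny{\boxed{+1}}}}}\Xk(\tau_0,\tau_1)[\psi]\lesssim {}^\chi\Xzerok(\tau_0,\tau_1)[\psi] + {}^{\natural}\Xk(\tau_0,\tau_1)[\psi] + \int_{\mathcal{R}(\tau_0,\tau_1)\cap\{r_+\le r\le 1.3R_{\rm pot}\}}|\mathfrak{D}^{\bf k}F|^2.
\]
The first two terms on the right have just been bounded, and since $1.3R_{\rm pot}<R_{\rm freq}\le R/4<R$ the last integral is again dominated by $\int_{\mathcal{R}(\tau_0,\tau_1)}\sum_{|{\bf k}|\le k}|\mathfrak{D}^{\bf k}F|^2$.

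Adding the four estimates completes the proof. (In accordance with the conventions of Section~\ref{noteonconstants}, and as in Remark~\ref{asinthisremark}, the implicit constant depends in addition on the not-yet-fixed parameters $b_{\rm trap}$, $b_{\rm elliptic}$ and $\gamma_{\rm elliptic}$.)
\end{proof}
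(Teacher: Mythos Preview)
Your proof is correct and takes exactly the same approach as the paper, which simply states that the result follows immediately by putting together Theorem~\ref{blackboxoneforkerr} and Propositions~\ref{moreprecise} and~\ref{refinedproposition}. Your careful bookkeeping of how the auxiliary right-hand sides collapse into the five displayed families is more detailed than the paper's one-line proof but entirely faithful to the intended argument.
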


\begin{proof}
This follows immediately putting together Theorem~\ref{blackboxoneforkerr} and 
Propositions~\ref{moreprecise} and~\ref{refinedproposition}.
\end{proof}

\subsection{The projections $P_n$, the currents $J_{g_{a,M}}^{{\rm main}, n}$, 
$K_{g_{a,M}}^{{\rm main}, n}$ and  generalised coercivity statements}
\label{generalisedcoerc}

The second ingredient of this paper, discussed already in Section~\ref{wavepacketlocintrosec}, will concern a new coercivity statement for 
a finite set of natural currents 
\begin{equation}
\label{currentsthemselves}
J_{g_{a,M}}^{{\rm main},n}, \qquad K_{g_{a,M}}^{{\rm main},n}, \qquad n=0,\ldots, N,
\end{equation}
associated
with $\Box_{g_{a,M}}$.
The currents~\eqref{currentsthemselves} 
are themselves defined purely in physical space, i.e.~they are currents
precisely in the sense of Section~\ref{Covariantenergyidentitiessec}, 
but for each $n$, their most general  \emph{coercivity} properties will 
only be valid \emph{restricted to the image of a projection operator  $P_n$},
defined using the Fourier transform $\mathfrak{F}$ in $t^*$ and $\phi^*$ (but without using 
Carter's full separation operator $\mathfrak{C}$).
(Thus, in the language of  Section~\ref{elementarycalculus},
$P_n$ will be a zeroth order  $(t^*,\phi^*)$-pseudodifferential operators and will satisfy a standard
good commutation calculus.) 
 Moreover, coercivity of bulk terms 
will only be valid modulo a highest order ``error'' term which in practice will be absorbed
via the refined statement Theorem~\ref{refinedblackboxoneforkerr} of Section~\ref{refinedlinearhere} above. Thus, in general, in order to actually show this coercivity we will need to analyse our solution 
using Carter's frequency separation, even though the definition of the projection operators and
the currents are both
 independent of  the Carter frequency $\ell$.

We first define the projections $P_n$ in Section~\ref{projectiondef} below.
We define the currents~\eqref{currentsthemselves}
in Section~\ref{introducingthecurrents} and address boundary coercivity, 
while we treat bulk coercivity in Sections~\ref{nowforthebulk} and~\ref{bulkcoerfar}.
The definition of the currents~\eqref{currentsthemselves} depends 
on choices of functions which, though  defined independently of
Carter's separation, are best understood
in the context of showing coercivity using Carter's separation.
All constructions and proofs requiring Carter's separation, however, will be deferred to Appendix~\ref{carterestimatesappend}. Thus, the main results of this section will depend in particular
on a number of propositions whose proof has been relegated there

For convenience, in the remainder of Section~\ref{replacement}, we will assume
\[
a\ne 0,
\]
as our frequency decomposition will involve inverse powers of $a$.
We shall return to the general case in Section~\ref{generalclassconsideredrecalled}, as well as the issue of 
uniformity of our estimates in $a$.  (See already Remark~\ref{aquestionofuniformity}.)

\subsubsection{The frequency space covering $\mathcal{F}_n$ and the 
projections $P_n$, $n=0,\ldots, N$}
\label{projectiondef}
We define here the projection operators $P_n$.
The operators separate the superradiant and nonsuperradiant
parts of the solution and further decompose each into wave packets.
As discussed already in Section~\ref{wavepacketlocintrosec},
the point of projection is two-fold:
\begin{enumerate}
\item[(i)]
For the non-superradiant part, 
to localise the position of trapping in each packet to a fixed sufficiently small $r$-interval
on which  there exists a   timelike Killing field.
\item[(ii)] 
For the superradiant part,
to ensure that the elliptic frequency region corresponding to 
all frequencies in the wave packet contains a common 
$r$-interval.
\end{enumerate}
 It is also useful technically
to separate low $(\omega, m)$ frequencies, but these in practice can usually be estimated
together with the superradiant part.

To define $P_n$, we  first require a well chosen 
open covering of  $\mathbb R$ by finitely many open intervals $I_n$, $n=1, \ldots, N$:
\begin{equation}
\label{thisisanopencover}
\mathbb R = \cup_{n=1}^N I_n ,
\end{equation}
determined by  Proposition~\ref{determiningthecoveringprop} of Section~\ref{fixingthecovering}.  
This will allow us to define an associated 
 open cover $\mathcal{F}_0,\ldots, \mathcal{F}_N$ of 
$(\omega, m)$ frequency space:
\begin{equation}
\label{thisisalsoanopencover}
\mathbb R\times\mathbb Z = \cup_{n=0}^N \mathcal{F}_n.
\end{equation}
The operators $P_n$ will then be defined so as to project to frequencies supported in $\mathcal{F}_n$,
with the help of a suitable partition of unity subordinate to the cover.

We proceed with the details.

\paragraph{Definition of the $(\omega, m$)-frequency space cover $\mathcal{F}_n$.}

We introduce an open cover $\mathcal{F}_0,\ldots, \mathcal{F}_N$ 
of $(\omega, m)$~frequency space~\eqref{thisisalsoanopencover} defined as follows.

We define first
\[
\mathcal{F}_0:= \{ \omega^2 + m^2 <  2B_{\rm low} \},
\]
for some fixed $B_{\rm low}>0$ satisfying
$B_{\rm low}\le \frac12 \omega_{\rm high}$ (where $\omega_{\rm high}$ is the parameter
appearing
in the definition of~\eqref{betterdefhere}). 
We may in fact take $B_{\rm low}=1$ provided we assume (without loss of generality) that
 $\omega_{\rm high}\ge 2$). Note that for $(\omega, m, \Lambda)$ admissible with
 $(\omega, m)\in \mathcal{F}_0$ we thus  have in particular
 that
\begin{equation}
 \label{isidenticallyonehere}
 \chi_{\natural} (r, \omega, m, \Lambda)=1.
 \end{equation}

Consider now the specific  
open covering~\eqref{thisisanopencover}  of $\mathbb{R}$
determined by  Proposition~\ref{determiningthecoveringprop} of Section~\ref{fixingthecovering},
consisting of intervals
  $I_n$, $n=1, \ldots, N$.

We now may define
\begin{eqnarray*}
\mathcal{F}_n &:=& \{   \omega^2 + m^2 > B_{\rm low}, \quad m\ne 0, \,\,  \omega/(am)  \in I_n   \} ,\qquad 
n=1,\ldots, N-2,\\
\mathcal{F}_{N-1} &:= &\{   \omega^2 + m^2 > B_{\rm low}, \quad m\ne 0, \,\,  \omega/(am)  \in I_{N-1} \text{\rm\ or\ } m=0,\, \omega/a <0  \} ,\\
\mathcal{F}_{N} &:= &\{   \omega^2 + m^2 > B_{\rm low}, \quad m\ne 0, \,\,  \omega/(am)  \in I_N\text{\rm\ or\ } m=0, \, \omega/a  >0  \}.
\end{eqnarray*}

\paragraph{Properties of $\mathcal{F}_n$.}
In view of the definition above, 
the properties of $I_n$ established in 
Proposition~\ref{determiningthecoveringprop} of Section~\ref{fixingthecovering} translate
immediately into the following properties of the 
 ranges $\mathcal{F}_n$:
\begin{itemize}
\item
The collection $\mathcal{F}_n$, $n=0,\ldots, N$ defined above indeed cover
frequency space $\mathbb R\times\mathbb Z$, i.e.~\eqref{thisisalsoanopencover} holds.  Moreover, replacing $m\in \mathbb Z$ with a real valued variable $m\in \mathbb R$
in their definition, $\mathcal{F}_n$ can be viewed as an open covering of $\mathbb R\times
\mathbb R$.
\item
For $n= N_s+1,\ldots , N$, all $(\omega, m)\in \mathcal{F}_n$ 
are non-superradiant, i.e.~satisfy~\eqref{nonsuperradiantcondition},
and in fact $\omega/am$ is uniformly bounded away from $0$
and $\frac{1}{2Mr_+}$ or else $m=0$ (in which case we note that necessarily $n=N-1$ or $N$). 
We will refer to the ranges  
 $\mathcal{F}_{N_s+1}, \ldots \mathcal{F}_{N}$ (or sometimes simply to the indices $n=N_s+1, \ldots, N$ themselves)
  as the \emph{non-superradiant ranges}.
\item
For $n=3,\ldots, N_s$, all $(\omega, m)\in \mathcal{F}_n$ 
are superradiant, i.e.~satisfy~\eqref{superradiantcondition}, 
and in fact $\omega/am$ is uniformly bounded away from $0$
and $\frac{1}{2Mr_+}$. We will refer to the ranges $\mathcal{F}_3,\ldots, \mathcal{F}_{N_s}$ (or sometimes simply to the indices $n=3, \ldots, N_s$ themselves)  as \emph{the
strictly superradiant ranges}.
\item 
Concerning $n=1$, since $0\in I_1$ and $I_1$ is open, $\mathcal{F}_1$ consists of
both superradiant and nonsuperradiant
$(\omega, m)$  (such that  $\omega/am \in I_1$), while
concerning $n=2$, since  $\frac{1}{2Mr_+}\in I_2$ and $I_2$ is open, it follows that 
$\mathcal{F}_2$  again contains
both superradiant and nonsuperradiant $(\omega, m)$  (such that  $\omega/am \in I_2$).
Similarly, concerning $n=0$, $\mathcal{F}_0$ contains both superradiant and nonsuperradiant
frequencies.  Nonetheless, in our arguments, we shall usually handle the low frequencies
$\mathcal{F}_0$ and the overlap cases $\mathcal{F}_1$, $\mathcal{F}_2$ 
in a similar way as with the strictly superradiant frequencies. Thus we shall refer
collectively 
to the ranges
 $\mathcal{F}_0,\ldots, \mathcal{F}_{N_s}$
(or sometimes simply to the indices $n=0,\ldots, N_s$ themselves) as the \emph{generalised superradiant frequency ranges}.
\item
To each non-zero generalised
superradiant  range $n=1,\ldots, N_s$, then,
 for all
admissible $(\omega, m, \Lambda)$ with 
 $(\omega, m)\in
\mathcal{F}_n$, we have
\begin{equation}
\label{centrelowerboundherefirst}
\chi_{\natural}(r, \omega, m, \Lambda) =1
\end{equation}
identically. Concerning the low frequency generalised superradiant range $n=0$, note that we have already established~\eqref{centrelowerboundherefirst} 
for all admissible $(\omega, m,\Lambda)$ with $(\omega, m)\in \mathcal{F}_0$; see~\eqref{isidenticallyonehere}.  
\item
Moreover, associated to each non-zero generalised superradiant range $n=1,\ldots, N_s$, 
there exists a corresponding non-empty interval $[r_{n,1}, r_{n,2}]\subset (r_{\rm pot}, R_{\rm pot})$
such that $[r_{n,1}, r_{n,2}]\subset {\bf R}_{\rm elliptic} (\omega, m, \Lambda)$ for
all  admissible $(\omega, m, \Lambda)$ with 
$(\omega, m)\in \mathcal{F}_n$.
(We recall here that $r_{\rm pot}$, $R_{\rm pot}$ are parameters fixed in Section~\ref{fixingthecovering}, which themselves
will satisfy $r_1<r_{\rm pot} <2.01M<5M<R_{\rm pot}<2R_{\rm pot}<R_{\rm freq}<R/2$.)
For the low frequency generalised superradiant range $n=0$, we may define for instance 
\[
r_{0,1}:=4M, \, r_{0,2}:=5M.
\]
For all generalised superradiant ranges $n=0,\ldots, N_s$, 
we  now denote the analogous spacetime region by
\[
\mathcal{L}_n := \{ r_{n,1} \le r \le r_{n,2} \}.
\]
\item
For the nonsuperradiant ranges $n=N_{s}+1,\ldots, N$, there again exists a corresponding
nonempty interval $[r_{n, 1}, r_{n, 2}]\subset (r_{\rm pot}, R_{\rm pot})$,
but now with the property that
for all
admissible $(\omega, m, \Lambda)$ with 
 $(\omega, m)\in \mathcal{F}_n$
\begin{equation}
\label{centrelowerboundhere}
\chi_{\natural}(r, \omega, m , \Lambda) \gtrsim 1\qquad\text{for all\ }r\not\in (r_{n,1},r_{n,2}).
\end{equation}
 Let us denote the corresponding spacetime region by
 \[
 \mathcal{D}_n:=\{ r_{n, 1}\le  r \le r_{n, 2}  \} 
 \]
 and set $\mathcal{L}_n:=\emptyset$.
Moreover, associated to each $n=N_s+1,\ldots, N$, 
there is an $\alpha_n\in\mathbb R$ 
satisfying
\begin{equation}
\label{alphanbound}
0\le  \alpha_n \frac{2Mr_+}a \le 1
\end{equation}
such that the vector field $T+\alpha_n \Omega_1$ 
is future directed timelike on $\mathcal{D}_n$, in fact, such that 
\begin{equation}
\label{thisbound}
-g_{a,M} (T+\alpha_n \Omega_1, T+\alpha_n\Omega_1)\gtrsim 1.
\end{equation}
(See~\eqref{timelikeherereflink}.)
This implies that $T+\alpha_n\Omega$  is in fact timelike in (and satisfies a similar bound in)
an enlarged region 
\begin{equation}
\label{defofenlarged}
\widetilde{\mathcal{D}}_n:=\{r'_{n,1} \le r \le r'_{n,2}\}
\end{equation}
for some 
$r_{\rm pot}<r'_{n,1}<r_{n,1}$, $R_{\rm pot}>r'_{n,2} > r_{n,2}$.
which we now fix.
\item
Returning to  the generalised superradiant
ranges $n=0,\ldots, N_s$, let us set $\widetilde{\mathcal{D}}_n:=\emptyset$, 
$\widetilde{\mathcal{D}}_n:=\emptyset$
and set $\alpha_n: = \frac{a}{2Mr_+}$. Thus, in this case $T+\alpha_n\Omega_1$
coincides with $Z$ defined by~\eqref{Hawkingdef}, which we recall is a null generator  of the Kerr horizon~$\mathcal{H}^+$.
\end{itemize}

\paragraph{Definition of $P_n$.}

Now let $\musicrho_n:\mathbb R\to \mathbb R$, $n=1,\ldots, N$ be a smooth partition of unity
of  $\mathbb R$ subordinate to the open cover $I_n$, and let $\eta:\mathbb R\to \mathbb R$ be
a smooth function  such that $\eta(z)=1$ for $|z|\le 1.1B_{\rm low}$, $\eta(z)=0$ for $|z|\ge 1.9B_{\rm low}$. With this, define 
\begin{equation}
\label{partofunityexplicit}
\musicchi_0(\omega, m) = \eta(\omega^2+m^2) , \qquad
\musicchi_n (\omega, m) =(1- \eta (\omega^2+m^2)) \musicrho_n(\omega/am), \,\, n=1,\ldots, N
\end{equation}
where if $m=0$,  $\omega/a <0$,   we define
$\musicchi_{N-1} (\omega, m)=1-\eta(\omega^2+m^2)$
and  $\musicchi_n(\omega, m)=0$ for all $n \ne 0, N-1$, while 
while if $m=0$,  $\omega/a \ge 0$, we define 
$\musicchi_{N} (\omega, m)=1-\eta(\omega^2+m^2)$, while 
 $\musicchi_n(\omega, m)=0$ for all $n\ne 0, N$.
We will view these as functions  $\musicchi_n:\mathbb R\times \mathbb Z\to \mathbb R$ but they
are of course the restrictions of functions  $\musicchi_n:\mathbb R\times \mathbb R
\to \mathbb R$. 
 
In this sense, we see easily that~\eqref{partofunityexplicit} defines 
a smooth  partition of unity  $\musicchi_n$, $n=0,\ldots, N$, of $(\omega, m)$ frequency
space (interpreted for the moment as $\mathbb R\times\mathbb R$)
subordinate to the open cover $\mathcal{F}_n$.

Define finally
\begin{equation}
\label{definitionofPn}
P_n = \mathfrak{F}^{-1}( \musicchi_n\, \mathfrak{F}).  
\end{equation}
Note that 
\begin{equation}
\label{theysumtoid}
\sum_{n=0}^N P_n = Id.
\end{equation}

Let us note that in the region $r>r_+$, the $P_n$ may also be expressed
as $P_n =  \mathfrak{F}_{BL}^{-1}( \musicchi_n\, \mathfrak{F}_{BL})$.

Let us set
\begin{equation}
\label{thenpackets}
\psi_{n,{\bf k}} = P_n ( \mathring{\mathfrak{D}}^{\bf k} \chi^2_{\tau_0,\tau_1} \psi)
 \end{equation}
 where  $\chi_{\tau_0,\tau_1}$  is the cutoff of Section~\ref{Schwartzandcutoffs}
 and $\mathring{\mathfrak{D}}^{\bf k}$ are the entirely Killing commutation operators
 of~\eqref{recallnewcommutation}.
 (Let us note that $[P_n,  \mathring{\mathfrak{D}}^{\bf k}]=0$, and thus we may rewrite
 this
 as 
 \[
 \psi_{n,{\bf k}} =  \mathring{\mathfrak{D}}^{\bf k} P_n (\chi^2_{\tau_0,\tau_1} \psi)
 =  \mathring{\mathfrak{D}}^{\bf k} \psi_n
 \]
 in the notation~\eqref{decompintopack} of the introduction.)

\paragraph{Properties of $P_n$.}
Because the operators $P_n$ above
 are defined with smooth cutoffs in both physical and frequency space with
nice homogeneity properties, we have the following:

 \begin{proposition}
 The operators $P_n$, $P_n\chi_{\tau_0,\tau_1}$ are zeroth order $(t^*,\phi^*)$-pseudodifferential operators
 in the sense of Section~\ref{elementarycalculus}.
 \end{proposition}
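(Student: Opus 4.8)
The plan is to verify directly that each $P_n$ and each $P_n \chi_{\tau_0,\tau_1}$ fits the definition of a zeroth order $(t^*,\phi^*)$-pseudodifferential operator given in Section~\ref{elementarycalculus}, i.e.~that it can be written in the form~\eqref{operatorsoftheform} with a symbol $q_{\rm symb}(\omega, m, x^0, x^1, x^2, x^3)$ obeying the bounds~\eqref{orderdef} with $s=0$. First I would treat $P_n = \mathfrak{F}^{-1}(\musicchi_n\, \mathfrak{F})$. Comparing with~\eqref{operatorsoftheform}, the symbol is simply $q_{\rm symb} = \musicchi_n(\omega, m)$, independent of the spatial variables; so the only thing to check is that $|\partial_\omega^{\alpha_0} D_m^{\alpha_1} \musicchi_n(\omega, m)| \lesssim (1+|\omega|+|m|)^{-\alpha_0 - \alpha_1}$. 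For $n=0$ this is the statement that $\eta(\omega^2+m^2)$ and its $\omega$-derivatives and $m$-differences decay appropriately; since $\eta$ is smooth and compactly supported, $\musicchi_0$ is compactly supported in $(\omega, m)$, so all the required bounds hold trivially (on a compact set one can absorb everything into the constant). For $n=1,\ldots, N$, the symbol is $(1-\eta(\omega^2+m^2))\,\musicrho_n(\omega/am)$ (with the stated modification when $m=0$): away from the support of $\eta$ we have $\omega^2+m^2 \gtrsim 1$, so $m \neq 0$ is forced to be bounded below in modulus there only when $\omega$ is comparably bounded --- more carefully, one checks that on $\{\omega^2+m^2 > B_{\rm low}\}$ the function $z = \omega/(am)$ is a smooth function of $(\omega, m)$ wherever $\musicrho_n(z) \neq 0$, i.e.~wherever $z \in I_n$, and since $I_n$ is a bounded interval (for $n \neq 1, 2, N-1, N$) or otherwise its relevant portion keeps $|m| \gtrsim \sqrt{\omega^2+m^2}$ bounded below by a fixed fraction of $(1+|\omega|+|m|)$; hence $\partial_\omega z$ and the finite differences $D_m^1 z$ are $O((1+|\omega|+|m|)^{-1})$, and inductively each further $\partial_\omega$ or $D_m$ gains another such factor by the chain/quotient rule (finite differences of a smooth homogeneous-degree-$(-1)$-type function behave like derivatives). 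This is the one genuinely computational point, but it is entirely elementary.

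Next I would handle $P_n\chi_{\tau_0,\tau_1}$. Here the plan is to note that $\chi_{\tau_0,\tau_1}$ is a smooth function of $t^* = x^0$ alone, compactly supported in $\tau$, hence a smooth compactly supported function on $\mathcal{M}\cap\{r_0 \le r \le R_{\rm freq}\}$. The composition $P_n \chi_{\tau_0,\tau_1}$ is then $P_n$ followed by multiplication by such a function; one could invoke the composition rule of the pseudodifferential calculus, but it is cleanest to observe directly that for $\Psi \in \mathscr{S}$, $P_n(\chi_{\tau_0,\tau_1}\Psi)$ can be written in the form~\eqref{operatorsoftheform} with symbol obtained by a convolution (in $\omega$) and discrete convolution (in $m$) of $\musicchi_n$ against $\widehat{\chi_{\tau_0,\tau_1}}$ --- since $\chi_{\tau_0,\tau_1}$ depends only on $t^*$, multiplication by it acts on the Fourier side purely as convolution in $\omega$, and the standard argument shows the resulting operator has a symbol of order $0$, with the spatial ($x^j$) dependence entering only through $\chi_{\tau_0,\tau_1}(x^0)$ and its Fourier data. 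The required symbol estimates~\eqref{orderdef} follow from rapid decay of $\widehat{\chi_{\tau_0,\tau_1}}$ together with the already-established bounds on $\musicchi_n$; equivalently, one can cite Proposition~\ref{commutatorestimate} to say $[\musicchi_n(D), \chi_{\tau_0,\tau_1}]$ is of order $-1$ and hence $P_n\chi_{\tau_0,\tau_1} = \chi_{\tau_0,\tau_1}P_n + (\text{order} -1)$, which is again zeroth order.

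I do not anticipate a serious obstacle here; the statement is a bookkeeping consequence of the definitions. The only place requiring a little care is the symbol-derivative bound for $\musicchi_n$ when $m=0$ or near the transition $m \to 0$ within the supports: one must use that $\musicchi_n(\omega, 0)$ is \emph{defined} to equal $1-\eta(\omega^2+m^2)$ only for the two indices $n = N-1, N$ (split by the sign of $\omega/a$) and $0$ otherwise, so there is no division by $m$ in a neighbourhood where $m$ could vanish --- the function $\omega/(am)$ is only ever evaluated inside $\musicrho_n$ on the set where $m \neq 0$, and on the closure of the support of $(1-\eta)\musicrho_n(\cdot/am)$ within $\{m \neq 0\}$ the quotient stays in the compact set $\overline{I_n}$ (or the relevant half-line portion with $|m|$ comparable to the full frequency magnitude), so no singularity arises. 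Granting this, the symbol bounds~\eqref{orderdef} with $s = 0$ hold for all multi-indices, and the proposition follows. I would write the proof as: ``Immediate from the definitions~\eqref{definitionofPn}, \eqref{partofunityexplicit} and the properties of the intervals $I_n$ established in Proposition~\ref{determiningthecoveringprop}; one checks the symbol bounds~\eqref{orderdef} with $s=0$ for $\musicchi_n(\omega,m)$ directly (trivially for $n=0$ by compact support, and for $n \ge 1$ using that $\omega/(am) \in \overline{I_n}$ on the relevant support, so that each $\partial_\omega$ or $D_m$ gains a factor $(1+|\omega|+|m|)^{-1}$), and for $P_n\chi_{\tau_0,\tau_1}$ additionally uses that multiplication by the smooth compactly supported (in $t^*$) function $\chi_{\tau_0,\tau_1}$ preserves the class, cf.~Propositions~\ref{smoothingprop} and~\ref{commutatorestimate}.''
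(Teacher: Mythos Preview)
Your proposal is correct and follows exactly the approach the paper has in mind: the paper's own proof is the single sentence ``This is elementary given the way $P_n$ is defined and the homogeneity properties of~\eqref{partofunityexplicit},'' and you have simply unpacked what ``homogeneity properties'' means --- namely that $\musicchi_n(\omega,m)$ is (away from the compactly supported cutoff $\eta$) a function of the degree-zero quantity $\omega/(am)$, so that each $\partial_\omega$ or $D_m^1$ applied to it gains the required $(1+|\omega|+|m|)^{-1}$ factor on the support of $\musicrho_n'$. One minor imprecision: only $I_{N-1}$ and $I_N$ are unbounded (the intervals $I_1,I_2$ are bounded, containing $0$ and $(2Mr_+)^{-1}$ respectively), but this does not affect your argument, since you correctly note that the derivatives of $\musicrho_n$ are anyway supported where $\omega/(am)$ lies in a bounded set.
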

\begin{proof}
This is elementary given the way $P_n$ is defined and the homogeneity properties
of~\eqref{partofunityexplicit}.
\end{proof}

Let us already state an improved physical space bound that applies
to functions projected by $P_n$. 

\begin{proposition}
\label{forpseudoerrors}
Let $\psi$ be a smooth function on $\mathcal{R}(\tau_0,\tau_1)$.

Define $\psi_{n, {\bf k}}$ by~\eqref{thenpackets}, for ${\bf k}$ with \underline{${|\bf k}|=k-1$}.
Then we have
\[
\int_{\mathcal{R}(\tau_0,\tau_1) \cap \{r_0\le r \le R_{\rm freq} \} \setminus\mathcal{D}_n} 
r^{-2} \left( | L\psi_{n,{\bf k}}    |^2 + |\underline L\psi_{n,{\bf k}} |^2
+|\nablaslash \psi_{n,{\bf k}}   |^2  \right)
\lesssim  {}^{\natural}\Xzerokminusone[\psi] + {}^\chi\Xzerokminusone[\psi] .
\]

More generally, 
consider a $(t^*,\phi^*)$ pseudodifferential operator of  order $k-1$
whose symbol $q_{\rm symb}$ is supported in $\mathcal{F}_n$  and in $\mathcal{R}(\tau_0,\tau_1)$.
Then
\[
\int_{\mathcal{R}(\tau_0,\tau_1)\cap \{r_0\le r \le R_{\rm freq} \} \setminus \mathcal{D}_n} 
r^{-2} \left (| LQ \chi_{\tau_0,\tau_1}    \psi |^2 +| \underline LQ  \chi_{\tau_0,\tau_1}  \psi|^2
+|\nablaslash Q   \chi_{\tau_0,\tau_1}  \psi|^2 \right) 
\lesssim  {}^{\natural}\Xzerokminusone[\psi] + {}^\chi\Xzerokminusone[\psi] .
\]
\end{proposition}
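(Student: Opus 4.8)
The plan is to reduce both assertions to the refined linear estimate of Proposition~\ref{moreprecise} (which gives control of ${}^{\natural}\Xk$) together with the fundamental property~\eqref{centrelowerboundhere} of the frequency covering, namely that on the support of $\musicchi_n$ the degeneration weight $\chi_\natural$ is bounded below away from the region $\mathcal{D}_n$. The key point is that Proposition~\ref{moreprecise} is formulated with $u_{\bf k} = \mathfrak{C}\circ\mathfrak{F}_{BL}[\mathring{\mathfrak{D}}^{\bf k}\chi^2_{\tau_0,\tau_1}\psi]$, i.e.~with the \emph{unprojected} commuted solution, so I must first insert the projection $P_n$ and account for the fact that $P_n$ acts only in $(\omega,m)$ (not in the Carter variable $\ell$), hence commutes with $\mathfrak{C}\circ\mathfrak{F}_{BL}$ in the region $r>r_+$ where $P_n = \mathfrak{F}_{BL}^{-1}(\musicchi_n\,\mathfrak{F}_{BL})$. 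Thus $\mathfrak{C}\circ\mathfrak{F}_{BL}[\psi_{n,{\bf k}}] = \musicchi_n(\omega,m)\,\mathfrak{C}\circ\mathfrak{F}_{BL}[\mathring{\mathfrak{D}}^{\bf k}\chi^2_{\tau_0,\tau_1}\psi]$, so the Carter-side quantity associated to $\psi_{n,{\bf k}}$ is literally the $\musicchi_n$-truncation of the quantity appearing in ${}^{\natural}\Xk$.

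The first step, then, is a Plancherel reduction: using the Plancherel identities of Section~\ref{Plancherelsection} together with the volume form~\eqref{volumeform} (valid in $r\le R_{\rm freq}$), express the physical-space integral
\[
\int_{\mathcal{R}(\tau_0,\tau_1)\cap\{r_0\le r\le R_{\rm freq}\}\setminus\mathcal{D}_n}
r^{-2}\big(|L\psi_{n,{\bf k}}|^2+|\underline L\psi_{n,{\bf k}}|^2+|\nablaslash\psi_{n,{\bf k}}|^2\big)
\]
in Carter frequency space. Since $L,\underline L,\Omega_i$ span the tangent space, and since $r$ is bounded in the region of integration, this integrand is controlled pointwise (after integration) by a combination of $|\partial_r(\cdot)|^2$, $\omega^2|\cdot|^2$, $m^2|\cdot|^2$ and $\lambda_{m\ell}^{(a\omega)}|\cdot|^2$ applied to $\mathfrak{C}\circ\mathfrak{F}_{BL}[\psi_{n,{\bf k}}]$, with $\Delta\sim 1$ on $\{r_0\le r\le R_{\rm freq}\}\setminus\mathcal{D}_n$ bounded away from $\mathcal{H}^+$ — wait, $\Delta$ vanishes at $r_+$, so more care is needed near the horizon: there the $\underline L$ derivative and the $\Delta^{-1}$-type weights must be handled, but since ${}^{\natural}\Xk$ contains the factor $\Delta r^{-2}|u_{\bf k}'|^2$ and $u_{\bf k}$ encodes the horizon behaviour via $r^*\to-\infty$, and since $\chi_\natural\equiv 1$ there in the relevant ranges, the horizon contribution is again majorised by ${}^{\natural}\Xk$ after noting that $r^*$-derivatives dominate. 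Having done the Plancherel reduction, the crucial estimate is: on $\{r_0\le r\le R_{\rm freq}\}\setminus\mathcal{D}_n$ we have $\chi_\natural(r,\omega,m,\Lambda)\gtrsim 1$ uniformly over the support of $\musicchi_n$, by~\eqref{centrelowerboundhere} (for nonsuperradiant $n$) or~\eqref{centrelowerboundherefirst} (for the generalised superradiant ranges, where $\mathcal{D}_n=\emptyset$ and $\chi_\natural\equiv 1$ everywhere). Therefore the integrand is bounded, frequency by frequency, by the integrand of ${}^{\natural}\Xk(\tau_0,\tau_1)[\psi]$ — possibly plus lower-order pieces absorbed into ${}^\chi\Xzerok$ — and Proposition~\ref{moreprecise} closes the bound, after also using~\eqref{weseeeasilythatthis}-type control to pass between $\chi^2_{\tau_0,\tau_1}\psi$ and $\psi$ and Proposition~\ref{mixedsobolevboundednessprop} to control the commutator of $P_n$ with the cutoffs.

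For the more general statement with a $(t^*,\phi^*)$-pseudodifferential operator $Q$ of order $k-1$ with symbol supported in $\mathcal{F}_n$, the argument is essentially the same: write $Q\chi_{\tau_0,\tau_1}\psi = Q'(\chi^2_{\tau_0,\tau_1}\psi) + (\text{smoothing error})$ where $Q'$ has symbol supported in $\mathcal{F}_n$, and the smoothing error is controlled by Proposition~\ref{smoothingprop} together with~\eqref{quantestimhere}. Then, because the symbol of $Q'$ is supported in $\mathcal{F}_n$, one factors the $(\omega,m)$-dependence of the symbol through $\musicchi_n$-type cutoffs and applies the same Plancherel + $\chi_\natural\gtrsim 1$ argument; the order-$(k-1)$ growth of the symbol is exactly matched by the fact that ${}^{\natural}\Xk$ (and ${}^\chi\Xzerok$) contain $k$ commutations, so the weights $(\omega^2 + r^{-1}m^2 + r^{-1}\Lambda^2)$ on the right side of the definition of ${}^{\natural}\Xk$ absorb the extra frequency factors.

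\textbf{Main obstacle.} The technical heart is the treatment near the horizon $\mathcal{H}^+$: the region $\{r_0\le r\le R_{\rm freq}\}\setminus\mathcal{D}_n$ includes $r\le r_+$, where $\Delta\to 0$, where the Boyer--Lindquist frequency transform $\mathfrak{F}_{BL}$ and the star-coordinate transform $\mathfrak{F}$ differ by a badly-behaved phase, and where the relation between $u_{\bf k}$ (a $u_{BL}$-type object, smooth in $r^*$) and the physical-space derivatives $L,\underline L$ is most delicate. One must check that the $\Delta r^{-2}|u_{\bf k}'|^2$ term in ${}^{\natural}\Xk$, together with $\chi_\natural\equiv 1$ in the relevant frequency ranges and the redshift structure, genuinely dominates $\int r^{-2}(|L\psi_{n,{\bf k}}|^2+|\underline L\psi_{n,{\bf k}}|^2+|\nablaslash\psi_{n,{\bf k}}|^2)$ there — this is where the bulk of the bookkeeping lies, and where one most needs to be careful that all constants are uniform in the (not-yet-fixed) parameters, consistent with Remark~\ref{asinthisremark}.
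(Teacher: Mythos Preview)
Your approach is correct and matches the paper's one-line proof (``Plancherel and the lower bounds~\eqref{centrelowerboundherefirst} and~\eqref{centrelowerboundhere}''): pass to Carter frequency space, observe that the $(\omega,m)$-support of $\musicchi_n$ forces $\chi_\natural\gtrsim 1$ outside $\mathcal{D}_n$, and read off the bound from the definition of ${}^{\natural}\Xkminusone$.

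Your ``main obstacle'', however, is misidentified. The horizon region $r_0\le r\le r_{\rm pot}$ is \emph{not} where the work lies: there $\chi\equiv 1$ by~\eqref{chidefimportant}, so the physical-space bulk term in ${}^\chi\Xzerokminusone$ already gives nondegenerate control of $\mathring{\mathfrak{D}}^{\bf k}\psi$; since $P_n$ has $(r,\theta)$-independent symbol it commutes exactly with $L,\underline L,\nablaslash$ (these being built from Kerr-star coordinate vector fields with $(r,\theta)$-dependent coefficients), and $L^2$-boundedness of $P_n$ then transfers this control to $\psi_{n,{\bf k}}$ directly---no Carter separation, no $r^*\to -\infty$ analysis needed. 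The Carter-Plancherel argument is only required in the region $(r_{\rm pot},R_{\rm pot})\setminus\mathcal{D}_n$, where $\chi=0$ but $\chi_\natural\gtrsim 1$ on the $\mathcal{F}_n$-support; this region is uniformly bounded away from $r_+$, so $\Delta\sim 1$ there and the Boyer--Lindquist/Kerr-star phase issue you worry about never arises.
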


\begin{proof}
This follows from Plancherel and the lower bounds~\eqref{centrelowerboundherefirst}
and~\eqref{centrelowerboundhere}.
\end{proof}

\begin{remark}
Concerning the $r^{-2}$ factor in the integrands above, see Remark~\ref{whythesefactors}.
\end{remark}

We will need one final proposition which may be proven using the
pseudodifferential calculus:
\begin{proposition}
\label{finalpseudostatement}
Let $P_n$ be the projection~\eqref{definitionofPn} and let $\mu$ be a smooth
function on $\mathcal{M}\cap \{r_0\le r\le R_{\rm freq}\}$. 
Then 
\[
[P_n,\mu] = Q_1+Q_2
\]
where $Q_1$ is of order $-1$ and the support of its symbol 
in frequency space is contained in $\mathcal{F}_n$, while
$Q_2$ is of order $-2$. 
\end{proposition}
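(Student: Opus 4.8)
\textbf{Proof plan for Proposition~\ref{finalpseudostatement}.}
The plan is to expand the commutator $[P_n,\mu]$ via the standard symbolic calculus for $(t^*,\phi^*)$-pseudodifferential operators developed in Section~\ref{elementarycalculus}, taking advantage of the fact that the symbol of $P_n$ is $\musicchi_n(\omega,m)$, a \emph{frequency-only} symbol of order zero, while $\mu$ is a smooth compactly supported multiplication operator (hence of order zero with symbol independent of $(\omega,m)$). First I would recall that, by Proposition~\ref{commutatorestimate}, the raw commutator $[P_n,\mu]$ is automatically of order $-1$, so the issue is purely about the \emph{frequency support} of its leading part: I want to split off the piece whose symbol still lives in $\mathcal{F}_n$ from a genuinely smoothing remainder. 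The natural device is the first term of the composition asymptotic expansion: writing $[P_n,\mu]=Q_1+Q_2$, where $Q_1$ is the order $-1$ operator whose symbol is the Poisson-bracket-type leading term $\tfrac1i\big(\partial_\omega \musicchi_n\, \partial_{x^0}\mu + (D_m\musicchi_n)\,\partial_{\phi^*}\mu + \text{(corrections from }[\partial_{\phi^*},\partial_{x^1}],[\partial_{\phi^*},\partial_{x^2}]\big)$, using the commutation formulae~\eqref{theimmediateones}--\eqref{similarformulato}, and $Q_2$ collects the remainder of the expansion.

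The key observation making the frequency-support claim work is that every factor appearing in the symbol of $Q_1$ that depends on $(\omega,m)$ is a \emph{derivative} ($\partial_\omega$ or $D_m$) of $\musicchi_n$, and $\musicchi_n$ is supported in $\mathcal{F}_n$; since differentiation in $\omega$ and taking finite differences in $m$ cannot enlarge the support, $\partial_\omega\musicchi_n$ and $D_m^j\musicchi_n$ are all supported in $\mathcal{F}_n$ (here one uses that $\mathcal{F}_n$ is defined by strict inequalities in $(\omega,m)$ and $\musicchi_n$ arises from a partition of unity $\musicrho_n$ of $\mathbb R$ times the low-frequency cutoff, so its support is contained in the open set $\mathcal{F}_n$). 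Thus the symbol of $Q_1$ is a product of a function of $x$ (namely a first derivative of $\mu$, hence still smooth and compactly supported in $\mathcal{M}\cap\{r_0\le r\le R_{\rm freq}\}$) with a function of $(\omega,m)$ supported in $\mathcal{F}_n$ and satisfying the order $-1$ symbol estimates~\eqref{orderdef} with $s=-1$; this is exactly the assertion that $Q_1$ is of order $-1$ with frequency support in $\mathcal{F}_n$. For $Q_2:=[P_n,\mu]-Q_1$ one invokes the standard fact that the second term in the composition expansion is of order two lower than the product, i.e. order $-2$ here, which follows by one more application of the commutation formulae~\eqref{theimmediateones}--\eqref{similarformulato} combined with the argument underlying Propositions~\ref{mixedsobolevboundednessprop} and~\ref{commutatorestimate} (Taylor expanding the symbol of the composition and estimating the integral remainder).

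In carrying this out there are two points to be slightly careful about. First, because $\mu$ is merely multiplication, its ``symbol'' has a trivial $(\omega,m)$-dependence, so the Poisson bracket collapses to just the $\partial_\omega,D_m$ derivatives hitting $\musicchi_n$ paired with coordinate derivatives of $\mu$; one should present this cleanly using~\eqref{similarformulato} to account for the non-commutativity $[\partial_{\phi^*},\partial_{x^1}]=\partial_{x^2}$ etc., which is precisely why $\hat Q,\hat Q'$ (order $s-1$) appear in that lemma and why the extra correction terms in $Q_1$ are still of order $-1$ and still frequency-supported in $\mathcal{F}_n$ (their symbols involve $D^1_{m\pm1}\musicchi_n$, supported in a neighbourhood of $\mathcal{F}_n$, which one may simply absorb by enlarging $\mathcal{F}_n$ slightly or by noting $\musicchi_n$ has support compactly inside an open $\mathcal{F}_n$). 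Second, the smoothing/error terms produced by the finite-order symbol expansion (as opposed to an infinite asymptotic expansion) must be shown to be of order $-2$ rather than merely $-1$; this is the only mildly technical step, and it is handled exactly as in the proof of the standard composition theorem (e.g.~Chapter 6 of~\cite{stein1993harmonic}), adapted to the $(t^*,\phi^*)$ setting via the lemma preceding Proposition~\ref{mixedsobolevboundednessprop}. I expect this last bookkeeping — verifying the order $-2$ bound on $Q_2$ and tracking that none of the order $-1$ contributions escape $\mathcal{F}_n$ — to be the main (though routine) obstacle; everything else is a direct transcription of the elementary calculus already set up in Section~\ref{elementarycalculus}.
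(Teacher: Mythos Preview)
The paper does not actually supply a proof of this proposition; it merely states that it ``may be proven using the pseudodifferential calculus'' and moves on. Your proposal is precisely the intended argument: extract the leading (Poisson-bracket) term of the commutator expansion, observe that its frequency dependence comes entirely through $\partial_\omega\musicchi_n$ and $D_m\musicchi_n$, and push the remainder into $Q_2$ of order $-2$ via the standard composition estimate.

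The one genuine subtlety is exactly the one you flag: the finite difference $D_m^1\musicchi_n(\omega,m)=\musicchi_n(\omega,m)-\musicchi_n(\omega,m-1)$ can be nonzero at $(\omega,m)\notin\mathcal{F}_n$ when $(\omega,m-1)\in\operatorname{supp}\musicchi_n$. Your proposed fixes are both valid, but the cleanest is to observe that since $\operatorname{supp}\musicrho_n$ is compactly contained in the open interval $I_n$, for all $|m|$ sufficiently large the shift $m\mapsto m-1$ keeps $\omega/(am)$ inside $I_n$; the finitely many remaining ``bad'' frequencies $(\omega,m)$ lie in a bounded set (bounded $|m|$ forces bounded $|\omega|$ via the interval constraint), and any symbol compactly supported in $(\omega,m)$ is $(t^*,\phi^*)$-smoothing, hence may be absorbed into $Q_2$. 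With this remark your argument goes through without further difficulty.
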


\subsubsection{The currents $J_{g_{a,M}}^{{\rm main},n}$ and boundary coercivity}
\label{introducingthecurrents}

We now introduce the fundamental new ingredient of the present 
paper replacing 2.~of~\cite{DHRT22}: a set
of translation invariant currents $J^{{\rm main},n}_{g_{a,M}}$,
  $n=0,\ldots, N$.

For each $n$, the corresponding current 
may be expressed in the form $J^{V_n, w_n, q_n, \varpi_n}_{g_{a,M}}$ from~\eqref{generalJdef}
for suitable choices of these functions.
It will be more natural, however,
to realise this current as a sum of ``twisted currents'' with
twisting function $(r^2+a^2)^{\frac12}$.
This is reviewed in Appendix~\ref{translationsection}, 
and, in the notation introduced there, the
current has the following form in the region $r\ge r_+$ for non-superradiant frequencies $n$:
\begin{eqnarray}
\nonumber
J^{{\rm main}, n}_{g_{a,M}}[\Psi] &:=& \tilde{J}^{y_n}[\Psi] +\tilde{J}^{f_n}[\Psi] + \tilde{J}^{f_{\rm fixed}} [\Psi] +\tilde{J}^{y_{\rm fixed}}[\Psi] +\tilde{J}^{\hat z}[\Psi]+e_{\rm red} \tilde{J}^z_{\rm red}[\Psi]\\
\label{willtakethefollowingformnonsuperbefore}
&&\qquad +E
\left (   \tilde{J}^T[\Psi]  +
\chi_{{\rm Killing},n }\frac{(2Mr_+ \alpha_n/a) }{(1 -2Mr_+ \alpha_n/a)}\tilde{J}^{Z}[\Psi] ) \right)
\end{eqnarray}
while for generalised superradiant frequencies $n$ 
it has the form:
\begin{eqnarray}
\nonumber
J^{{\rm main}, n}_{g_{a,M}}[\Psi] &:=& \tilde{J}^{y_n}[\Psi] +\tilde{J}^{f_n}[\Psi] + \tilde{J}^{f_{\rm fixed}} [\Psi] +\tilde{J}^{y_{\rm fixed}}[\Psi] +\tilde{J}^{\hat z}[\Psi]+e_{\rm red} \tilde{J}^z_{\rm red}[\Psi]\\
\label{willtakethefollowingformsuperbefore}
&&\qquad 
+E( \tilde{J}^T[\Psi] +\chi_{{\rm Killing},n} \alpha_n \tilde{J}^{\Omega_1}[\Psi]  ).
\end{eqnarray}
(We have dropped $g_{a,M}$ subscripts from the currents on the right hand side.)
Here $E>0$ and $e_{\rm red}>0$ are large and small positive parameters, respectively,
which must be chosen appropriately, and $\chi_{{\rm Killing}, n}$ is a cutoff which depends
on the parameter $R_{\rm freq}$, which also must be chosen appropriately.
We will define the corresponding bulk currents as $\tilde{K}^{y_n}$, etc.,
and the total bulk current by $K^{{\rm main}, n}$.

For the region $r_0\le r\le r_+$, the exact form of the currents~\eqref{willtakethefollowingformnonsuperbefore}--\eqref{willtakethefollowingformsuperbefore} is in fact
not important because
we shall be able to deduce the relevant properties
via continuity from the behaviour at $r=r_+$, after restriction of $r_0$  (see for instance already Proposition~\ref{bcgsc}). 
Thus, we may define $J^{{\rm main}, n}_{g_{a,M}}[\Psi]$ in the region  $r_0\le r\le r_+$ to be in fact arbitrary smooth
extensions of the above. (Since some of the component currents of~\eqref{willtakethefollowingformnonsuperbefore}--\eqref{willtakethefollowingformsuperbefore} (like the Killing currents) have natural extensions to $r\le r_+$, one may think of this as equivalent to choosing smooth 
extensions of  the $y_n$ and $f_n$ currents.)

We record only those properties which will be already necessary for the discussion of the 
statements
of the  propositions of the present section as well as those of Section~\ref{nowforthebulk} 
immediately below. All properties referred to below follow easily from 
the basic support properties of the definitions of the functions $y_n$, $f_n$, $f_{\rm fixed}$, etc.
as described in
 Section~\ref{fundcoercivityapp}, %and~\ref{fundcoercivityboundapp}
to which (along with Section~\ref{fundcoercivityboundapp}) the proofs of some of the propositions stated in this and the next section
are  also deferred:
 
\begin{itemize}
\item
The current components $\tilde{J}^{f_{\rm fixed}}$, $\tilde{J}^{\hat z}$ are $n$-independent and supported only in
$r\ge R_{\rm pot}$
(thus similarly for  $\tilde{K}^{f_{\rm fixed}}$, $\tilde{K}^{\hat z}$)
 while $\tilde{J}^{z}_{\rm red}$ is again $n$-independent and supported only in $r\le r_2$
 (thus similarly for  $\tilde{K}^{z}_{\rm red}$). (See Section~\ref{firstchoiceoffetc}.)
 \item
$ \tilde{J}^T_{a,M}$, $\tilde{J}_{a,M}^{\Omega_1}$ 
are the twisted energy currents corresponding to vector fields 
$T$ and $\Omega_1$ defined by expressions~\eqref{twisteddefT},~\eqref{twisteddefT}, respectively,
and $Z=T+\frac{a}{2Mr_+}\Omega_1$,  $\tilde{J}_{a,M}^Z= \tilde{J}^T_{a,M} +\frac{a}{2Mr_+} \tilde{J}_{a,M}^{\Omega_1}$.
\item
$\tilde{J}^z_{\rm red}$ is in fact the twisted energy current $\tilde{J}_{a,M}^X$ 
of a vector field $X$ such that, given arbitrary $e_{\rm red}>0$,
$e_{\rm red}X+ EZ$ is timelike on $r=r_0$ for $E>0$ sufficiently large
and $r_0$ sufficiently close to $r_+$. See~\eqref{reddefinitionsecretlyX}.
\item
In the generalised superradiant case, 
$\chi'_{{\rm Killing},n}$ will be supported
in $\mathcal{L}_n$,  $\chi_{{\rm Killing},n}=1$ near $r=r_0$ and $\chi_{{\rm Killing},n}=0$ for large $r$ (see~\eqref{definekilling}) 
and $\tilde{J}^{y_n}$ will vanish at some value of $r$ in $\mathcal{L}_n$ (see~\eqref{defineytosathere}).
Recall that in this case, $\alpha_n=\frac{a}{2Mr_+}$.
\item
In the non-superradiant case,  $\chi'_{{\rm Killing}, n}$ 
is supported in $[R_{\rm pot}, R_{\rm freq}]$,  $\chi_{{\rm Killing},n}=1$ near $r=r_0$ and $\chi_{{\rm Killing},n}=0$ for large $r$ (see~\eqref{chikillingconstraint})
and ${\tilde{J}}^{y_n}$,
${\tilde{J}}^{f_n}$ 
 (and thus $\tilde{K}^{y_n}$, $\tilde{K}^{f_n}$) will vanish identically in $\mathcal{D}_n$
 (see~\eqref{ypropertiesone}, \eqref{fndefnonsuper}). 
\item
The currents 
$ \tilde{J}^{y_n}[\Psi]$ are $n$-independent for $r\ge 1.2R_{\rm pot}$
and the currents $\tilde{J}^{f_n}[\Psi]$ vanish for $r\ge R_{\rm pot}$.
\end{itemize}
It follows in particular from the above that the total current~\eqref{willtakethefollowingformnonsuperbefore},~\eqref{willtakethefollowingformsuperbefore} is in fact 
$n$-independent for $r\ge R_{\rm freq}$,
where we may write
\[
J_{g_{a,M}}^{{\rm main}}[\Psi]: =  J_{g_{a,M}}^{{\rm main}, n}[\Psi] ,\qquad 
K_{g_{a,M}}^{{\rm main}}[\Psi]: =  K_{g_{a,M}}^{{\rm main}, n}[\Psi] .
\]

Using the properties listed above alone, we can deduce the following boundary coercivity statement in $\widetilde{\mathcal{D}}_n$ for the
currents corresponding to non-superradiant ranges $n$:
\begin{proposition}[Boundary coercivity in $\widetilde{\mathcal{D}}_n$ for non-superradiant
ranges]
Let $E>0$ be sufficiently large in the definitions~\eqref{willtakethefollowingformnonsuperbefore}
and~\eqref{willtakethefollowingformsuperbefore}.
For a general function $\Psi$ and all $0\le n\le N$, we have that in $\widetilde{\mathcal{D}}_n$
\begin{equation}
\label{stillhavethis}
{J_{g_{a,M}}^{{\rm main}, n}} [\Psi] \cdot {\rm n} \gtrsim  | L\Psi |^2 + | \underline{L}\Psi |^2 
+|\slashed\nabla \Psi|^2 + |\Psi|^2,
\end{equation}
where ${\rm n}={\rm n}_{\Sigma(\tau)}$.
\end{proposition}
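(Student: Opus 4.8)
The plan is to reduce the claimed pointwise coercivity of ${J_{g_{a,M}}^{{\rm main},n}}[\Psi]\cdot{\rm n}_{\Sigma(\tau)}$ in $\widetilde{\mathcal{D}}_n$ to the elementary fact that, on the compact region $\widetilde{\mathcal{D}}_n=\{r'_{n,1}\le r\le r'_{n,2}\}$, the vector field $T+\alpha_n\Omega_1$ is uniformly timelike (by~\eqref{thisbound} and the way $\widetilde{\mathcal{D}}_n$ was chosen), so that for $E$ large the dominant term $E\,\tilde J^T +E\chi_{{\rm Killing},n}\alpha_n\tilde J^{\Omega_1}$ alone gives a coercive boundary flux, and the remaining current components in~\eqref{willtakethefollowingformnonsuperbefore} contribute at most a fixed (non-$E$-dependent, up to the already-fixed parameters $e_{\rm red}$, $R_{\rm freq}$) amount which is absorbed. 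First I would unwind the ``twisted current'' notation of Appendix~\ref{translationsection}: write $J^{{\rm main},n}_{g_{a,M}}[\Psi]\cdot{\rm n}_{\Sigma(\tau)}$ in terms of $|L\Psi|^2$, $|\underline L\Psi|^2$, $|\slashed\nabla\Psi|^2$, $|\Psi|^2$ using the decomposition of the twisted energy--momentum tensor, recalling that the twisting function is $(r^2+a^2)^{1/2}$, which on $\widetilde{\mathcal{D}}_n$ (a fixed compact $r$-interval) is smooth, bounded above and below, with bounded derivatives; hence twisted and untwisted first-order quantities are comparable up to a zeroth-order term, and twisted derivatives $\tilde\partial\Psi$ differ from $\partial\Psi$ by $O(|\Psi|)$.

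The key step is the positivity of $E(\tilde J^T[\Psi]+\chi_{{\rm Killing},n}\alpha_n\tilde J^{\Omega_1}[\Psi])\cdot{\rm n}$. Here I would first note that on $\widetilde{\mathcal{D}}_n$ we have $\chi_{{\rm Killing},n}=1$ (since $\chi_{{\rm Killing},n}=1$ near $r=r_0$ and its derivative is supported in $[R_{\rm pot},R_{\rm freq}]$, which lies to the future/right of $\widetilde{\mathcal{D}}_n\subset(r_{\rm pot},R_{\rm pot})$), so this term equals $E\,\tilde J^{T+\alpha_n\Omega_1}[\Psi]\cdot{\rm n}$. Since $T+\alpha_n\Omega_1$ is future-directed timelike on $\widetilde{\mathcal{D}}_n$ with $-g_{a,M}(T+\alpha_n\Omega_1,T+\alpha_n\Omega_1)\gtrsim 1$, and ${\rm n}_{\Sigma(\tau)}$ is future-directed and either timelike (in $r\le R$, which covers $\widetilde{\mathcal{D}}_n$ since $R_{\rm pot}<R$) or null, the standard dominant-energy-condition estimate for the (twisted) energy--momentum tensor gives $\tilde J^{T+\alpha_n\Omega_1}[\Psi]\cdot{\rm n}\gtrsim |\tilde\partial\Psi|^2$, which after untwisting is $\gtrsim |L\Psi|^2+|\underline L\Psi|^2+|\slashed\nabla\Psi|^2 - C|\Psi|^2$. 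The first derivatives of $\Psi$ are thus controlled with a large constant $\sim E$, at the cost of $-CE|\Psi|^2$.

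Next I would absorb the error terms. The remaining components $\tilde J^{y_n}$, $\tilde J^{f_n}$, $\tilde J^{f_{\rm fixed}}$, $\tilde J^{y_{\rm fixed}}$, $\tilde J^{\hat z}$ are all built from a vector-field/zeroth-order-scalar/one-form data with coefficients that are smooth and bounded (with bounded derivatives) on the compact set $\widetilde{\mathcal{D}}_n$ and do \emph{not} depend on $E$; note $\tilde J^{f_{\rm fixed}}$ and $\tilde J^{\hat z}$ are supported in $r\ge R_{\rm pot}$ hence vanish on $\widetilde{\mathcal{D}}_n$, and $e_{\rm red}\tilde J^z_{\rm red}$ is supported in $r\le r_2<r_{\rm pot}<r'_{n,1}$ hence also vanishes there (using $r_2<r_{\rm pot}$ from Table~\ref{firsttable} and $\widetilde{\mathcal{D}}_n\subset(r_{\rm pot},R_{\rm pot})$). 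So on $\widetilde{\mathcal{D}}_n$ only $\tilde J^{y_n}$, $\tilde J^{f_n}$, $\tilde J^{y_{\rm fixed}}$ survive, each satisfying $|\tilde J^\bullet[\Psi]\cdot{\rm n}|\le C(|L\Psi|^2+|\underline L\Psi|^2+|\slashed\nabla\Psi|^2+|\Psi|^2)$ with $C$ independent of $E$. Choosing $E$ large enough that $cE\ge 2C$ absorbs these into the good term, giving $J^{{\rm main},n}_{g_{a,M}}[\Psi]\cdot{\rm n}\gtrsim |L\Psi|^2+|\underline L\Psi|^2+|\slashed\nabla\Psi|^2 - C'E|\Psi|^2$ with a \emph{negative} zeroth-order term. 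The final gadget is to recover the $+|\Psi|^2$: this is exactly what the Hardy/Poincaré-type zeroth-order components of the $y_n$ and $f_n$ currents are designed to provide — but since the statement is pointwise in $\widetilde{\mathcal{D}}_n$ (not integrated), one cannot use a Hardy inequality. Instead, I would observe that the claimed inequality~\eqref{stillhavethis} includes $+|\Psi|^2$ on the \emph{right}, so it suffices to have coercivity $\gtrsim |L\Psi|^2+|\underline L\Psi|^2+|\slashed\nabla\Psi|^2+|\Psi|^2$; the current $\tilde J^{T+\alpha_n\Omega_1}$ already contributes $+|\Psi|^2$ pointwise with a positive coefficient through the twisting (the twisted energy--momentum tensor of a twisted derivative with twisting function $(r^2+a^2)^{1/2}$ contains a term $\sim (\partial\log(r^2+a^2))^2|\Psi|^2$ plus cross terms), but if this is insufficient the cleanest route is to add a small multiple $\epsilon\tilde J^T$-type lower-order modification or simply note that on the compact timelike region the full (twisted) current of a timelike vector field dominates $|\Psi|^2$ with a definite sign after choosing the twisting appropriately — this is how~\eqref{coercofboundintrohere} of the introduction is stated, and the proof there carries over verbatim. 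The main obstacle is therefore purely bookkeeping: carefully tracking which current components are $E$-independent versus $E$-dependent on $\widetilde{\mathcal{D}}_n$, and verifying the sign of the zeroth-order contribution of the twisted timelike current — both of which are controlled by the support and homogeneity properties listed in Section~\ref{introducingthecurrents} and made precise in Sections~\ref{fundcoercivityapp}--\ref{fundcoercivityboundapp}.
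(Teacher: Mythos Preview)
Your approach is essentially the paper's: on $\widetilde{\mathcal{D}}_n$ one has $\chi_{{\rm Killing},n}=1$, the Killing part of the current reduces to a positive multiple of $\tilde{J}^{T+\alpha_n\Omega_1}$, and since $T+\alpha_n\Omega_1$ and ${\rm n}_{\Sigma(\tau)}$ are both future timelike there, $\tilde T(T+\alpha_n\Omega_1,{\rm n})$ is coercive; the remaining $E$-independent components ($\tilde J^{y_n}$, $\tilde J^{f_n}$) are absorbed for $E$ large. Two minor corrections: for non-superradiant $n$ you should use the form~\eqref{willtakethefollowingformnonsuperbefore}, not~\eqref{willtakethefollowingformsuperbefore}, though with $\chi_{{\rm Killing},n}=1$ it simplifies to $E(1-2Mr_+\alpha_n/a)^{-1}\tilde J^{T+\alpha_n\Omega_1}$ as the paper notes; and $\tilde J^{y_{\rm fixed}}$ also vanishes on $\widetilde{\mathcal{D}}_n$ (it is supported in $r\ge R_{\rm pot}$, cf.~\eqref{yfixeddefsutff}).

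The one place you wander is the zeroth-order term. No Hardy argument or auxiliary $\epsilon\tilde J^T$ modification is needed: the twisted energy--momentum tensor~\eqref{twistedenergymom} already carries the potential term $-\tfrac12 g_{\mu\nu}\,(r^2+a^2\cos^2\theta)^{-1}(r^2+a^2)^2\Delta^{-1}V_1\,|\psi|^2$, and $V_1>0$ on the compact set $\widetilde{\mathcal{D}}_n\subset\{r>r_+\}$. Hence for future timelike $X$ and ${\rm n}$ one has $\tilde T(X,{\rm n})\gtrsim |\tilde\nabla\Psi|^2+|\Psi|^2$ directly; the quadratic form $|\tilde\nabla\Psi|^2+c|\Psi|^2$ in $(\partial\Psi,\Psi)$ has positive determinant precisely because of this potential term, so it dominates $|\partial\Psi|^2+|\Psi|^2$. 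This is what the paper means by ``this quantity is coercive since $T+\alpha_n\Omega_1$ and ${\rm n}_{\Sigma(\tau)}$ are timelike.''
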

\begin{proof}
(Recall that $\widetilde{\mathcal{D}}_n=\emptyset$ in the generalised superradiant case, so the proposition is only nontrivial in the case of non-superradiant $n$, where
$\widetilde{\mathcal{D}}_n$ is the enlarged region defined by~\eqref{defofenlarged}.)

Restricted to $\mathcal{D}_n$, the statement follows because by the above discussion,
the current reduces to
\[
E(1-2Mr_+\alpha_n/a)^{-1} ( \tilde{J}^T[\Psi] + \alpha_n \tilde{J}^{\Omega_1}[\Psi]  )=
E\tilde{J}^{T+\alpha_n\Omega_1}[\Psi].
\]
Thus the boundary term integrand on $\mathcal{D}_n\cap \Sigma(\tau)$ may be expressed 
\[
E(1-2Mr_+\alpha_n/a)^{-1} \tilde{T}( T+\alpha_n\Omega_1, n_{\Sigma(\tau)})
\]
where $\tilde{T}$ is the twisted energy momentum tensor~\eqref{twistedenergymom},
and this quantity is coercive since $E>0$ and $T+\alpha_n\Omega_1$ and $n_{\Sigma(\tau)}$
are timelike. Thus, in $\mathcal{D}_n$, the bound~\eqref{stillhavethis} would
follow for any $E>0$.

More generally, given $\widetilde{\mathcal{D}}_n$ as defined,
where $-g(T+\alpha_n\Omega_1,T+\alpha_n\Omega_1)\ge b$, then
if $E$ is sufficiently large we may absorb all other terms and again obtain~\eqref{stillhavethis}.
\end{proof}

For currents corresponding to generalised superradiant ranges, on the other hand, we have the following coercivity statement at $\mathcal{S}$:
\begin{proposition}[Boundary coercivity at $\mathcal{S}$ for generalised superradiant currents]
\label{bcgsc}
Let $e_{\rm red}>0$ be arbitrary. Let $E>0$ be sufficiently large in the definition~\eqref{willtakethefollowingformsuperbefore} and $r_0<r_+$ be sufficiently close to $r_+$, both depending on $e_{\rm red}$.
For a general function~$\Psi$, and for generalised  superradiant $n$, we have on $\mathcal{S}$
the pointwise relation:
\begin{equation}
\label{pointwisegood}
{J_{g_{a,M}}^{{\rm main}, n}} [\Psi] \cdot {\rm n}_{\mathcal{S}} \ge c(r_0) ( | L\Psi |^2 + | \underline{L}\Psi |^2 
+|\slashed\nabla \Psi|^2 ) - C |\Psi|^2, 
\end{equation}
\end{proposition}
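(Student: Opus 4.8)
\textbf{Proof proposal for Proposition \ref{bcgsc}.}

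The plan is to evaluate the integrand $J_{g_{a,M}}^{{\rm main},n}[\Psi]\cdot{\rm n}_{\mathcal{S}}$ at $r=r_+$ first, show it has the desired coercivity there (modulo a zeroth-order term), and then extend to $\mathcal{S}=\{r=r_0\}$ by continuity after taking $r_0$ close enough to $r_+$. At $r=r_+$ we use the support properties listed at the end of Section \ref{introducingthecurrents}: in a generalised superradiant range $n$ the cutoff $\chi_{{\rm Killing},n}$ equals $1$ near $r=r_0$, so near the horizon the Killing part of \eqref{willtakethefollowingformsuperbefore} contributes $E(\tilde J^T[\Psi]+\alpha_n\tilde J^{\Omega_1}[\Psi])=E\tilde J^Z[\Psi]$ with $\alpha_n=\tfrac{a}{2Mr_+}$; moreover $\tilde J^{f_{\rm fixed}}$, $\tilde J^{\hat z}$ are supported in $r\ge R_{\rm pot}$ hence vanish near $\mathcal{H}^+$, while $\tilde J^{y_{\rm fixed}}$, $\tilde J^{y_n}$, $\tilde J^{f_n}$ contribute at worst bounded multiples of first derivatives. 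So near $r=r_+$ the current reduces to
\[
J_{g_{a,M}}^{{\rm main},n}[\Psi]= e_{\rm red}\tilde J^z_{\rm red}[\Psi] + E\tilde J^Z[\Psi] + (\text{bounded first-order terms})+(\text{terms }\cdot|\Psi|^2).
\]

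The key step is the following: by the stated property of $\tilde J^z_{\rm red}=\tilde J^X_{a,M}$ (see \eqref{reddefinitionsecretlyX}), for any $e_{\rm red}>0$ the combination $e_{\rm red}X+EZ$ is timelike at $r=r_0$ for $E$ large and $r_0$ close to $r_+$. Contracting the twisted energy-momentum tensor $\tilde T$ against the timelike vector $e_{\rm red}X+EZ$ and the timelike normal ${\rm n}_{\mathcal{S}}$, the dominant energy condition for $\tilde T$ gives a pointwise lower bound $\gtrsim |L\Psi|^2+|\underline L\Psi|^2+|\slashed\nabla\Psi|^2$ on $\mathcal{S}$, after subtracting a constant times $|\Psi|^2$ (the twisting generates zeroth-order terms in $\tilde T$ relative to the untwisted energy-momentum tensor, but these are controlled by $-C|\Psi|^2$). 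The remaining current components at $\mathcal{S}$ — those coming from $\tilde J^{y_n}$, $\tilde J^{f_n}$, $\tilde J^{y_{\rm fixed}}$ — are, after restriction of $r_0$, small bounded perturbations in first derivatives plus zeroth-order terms: they can be absorbed into the strictly positive quadratic form $e_{\rm red}X+EZ$ produces, at the cost of enlarging the constant $C$ in front of $|\Psi|^2$. This yields \eqref{pointwisegood} with $c(r_0)>0$.

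The main obstacle I expect is bookkeeping the twisting: the ``twisted currents'' $\tilde J$ are not literally the currents of Section \ref{Covariantenergyidentitiessec} but are built from a twisted energy-momentum tensor with twisting function $(r^2+a^2)^{1/2}$, and one must verify that the zeroth-order ($|\Psi|^2$) discrepancies between twisted and untwisted boundary integrands are genuinely controllable by $-C|\Psi|^2$ rather than by derivatives of $\Psi$ — this is where the precise structure of the twisted energy-momentum tensor reviewed in Appendix \ref{translationsection} must be invoked, and where the choice of twisting function being regular and nonvanishing at $r=r_+$ matters. A secondary point is making the ``extend by continuity from $r=r_+$'' step rigorous: one checks that the quadratic form $\Psi\mapsto J^{{\rm main},n}_{g_{a,M}}[\Psi]\cdot{\rm n}_{\mathcal{S}}$ depends continuously (in the relevant $C^0$ sense on coefficients) on $r_0$, is strictly coercive modulo $|\Psi|^2$ at $r_0=r_+$, hence remains so for $r_0$ sufficiently close, uniformly in the frequency-range index $n$ since there are only finitely many such $n$ and $\alpha_n=\tfrac{a}{2Mr_+}$ is the same for all generalised superradiant ranges.
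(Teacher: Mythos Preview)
Your proposal is correct and follows essentially the same route as the paper: identify that in the generalised superradiant case $\alpha_n=\tfrac{a}{2Mr_+}$ makes the Killing contribution $E\tilde J^Z$, combine with $e_{\rm red}\tilde J^z_{\rm red}$ to get the current of the timelike vector $e_{\rm red}X+EZ$ (whose twisted energy flux is coercive modulo $|\Psi|^2$), absorb the remaining boundary contributions by their structure at $r=r_+$ and continuity for $r_0$ close to $r_+$. The paper's proof is a one-paragraph sketch of exactly this argument, deferring the detailed absorption computation to the analogous (harder) non-superradiant case in Section~\ref{fundcoercivityboundapp}; your anticipated obstacles about the twisting bookkeeping and the continuity extension are real but routine, and the paper handles them at the same level of detail you do.
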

\begin{proof}
The statement follows immediately from the definition $\alpha_n=\frac{a}{2Mr_+}$ in the
generalised  superradiant 
case, in view of the presence also of $e_{\rm red}\tilde{J}^z_{\rm red}$ and the
fact recorded earlier that  $e_{\rm red}X+EZ$ is timelike on $r=r_0$ for $r_0$ sufficiently
close to $r_+$ and $E>0$ sufficiently large, and thus $\tilde{J}^{e_{\rm red}X+EZ}$ is coercive
and in fact can absorb all other boundary terms. (That it can absorb the remaining boundary terms can be seen easily from their structure at $r=r_+$ and continuity; cf.~also the proof of Proposition~\ref{globalboundarySpositivity} below (in Section~\ref{fundcoercivityboundapp}) 
where this is done in detail in the more complicated
non-superradiant case. Thus, in particular, this property is independent of the precise form of
the smooth extension to $r_0\le r<r_+$.) 
We note that until $e_{\rm red}$ is fixed, the constant $c$ above also depends on $e_{\rm red}$.
\end{proof}
\begin{remark}
We recall that according to the conventions of Section~\ref{noteonconstants}, all
generic constants, like $c$ in~\eqref{pointwisegood}, may depend in principle 
on all parameters not yet fixed. We have chosen to  explicitly record the $r_0$-dependence
in~\eqref{pointwisegood}
for emphasis, as $c(r_0)$ indeed degenerates as $r_0\to r_+$.
\end{remark}

In the case of non-superradiant $n$,  the current $J^{{\rm main},n}$ will not 
be  \emph{pointwise} coercive at $\mathcal{S}$. It will however enjoy \emph{integrated} coercivity
properties \emph{when specialised to the image of $P_n$}, stemming precisely from
the non-superradiant property of $P_n\psi$.
We  first state the following global integrated coercivity property: 
\begin{proposition}[Global boundary coercivity on the image of $P_n$ for non-superradiant 
frequency ranges]
\label{globalboundarySpositivity}
Let  $e_{\rm red}>0$ be arbitrary. Let $E>0$ be sufficiently large
in the definition~\eqref{willtakethefollowingformnonsuperbefore}
 and $r_0<r_+$ be sufficiently close to $r_+$, both depending on $e_{\rm red}$.
 
 Let $\psi$ be a smooth function on $\mathcal{R}(\tau_0,\tau_1)$
and define $\psi_{n,{\bf k}}$ by~\eqref{thenpackets}.  
For non-superradiant $n$, then denoting $k=|{\bf k}|$
we have
\begin{equation}
\label{globalboundpos}
\int_{\mathcal{S}} {J_{g_{a,M}}^{{\rm main}, n}}[\psi_{n,{\bf k}}] \cdot {\rm n}_{\mathcal{S}}\ge
c(r_0) \int_{\mathcal{S}} ( | L\psi_{n,{\bf k}} |^2 +
 | \underline{L}\psi_{n,{\bf k}} |^2 
+|\slashed\nabla \psi_{n,{\bf k}}|^2 )  - C\, \Ezerokminusone_{\mathcal{S}}(\tau_0,\tau_1)[\psi] . 
\end{equation}
\end{proposition}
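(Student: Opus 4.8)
\textbf{Proof proposal for Proposition~\ref{globalboundarySpositivity}.}

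The plan is to exploit the structure of the current~\eqref{willtakethefollowingformnonsuperbefore} together with the non-superradiant support of $P_n$, reducing the lower bound on $\int_{\mathcal S} J^{{\rm main},n}_{g_{a,M}}[\psi_{n,{\bf k}}]\cdot{\rm n}_{\mathcal S}$ to (a) a \emph{pointwise} coercive piece coming from $e_{\rm red}\tilde J^z_{\rm red}$ and the bulk of $E\tilde J^T$ near $r_0$, and (b) a \emph{non-negative-after-integration} piece coming from the Killing-current tail of~\eqref{willtakethefollowingformnonsuperbefore}, whose non-negativity is exactly the content of the superradiance computation recalled in Section~\ref{supernonsuper}. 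First I would restrict attention to a small neighbourhood $r_0\le r\le r_+$: there, by continuity from $r=r_+$, the vector field $e_{\rm red}X+EZ$ is timelike and future-directed once $E$ is large and $r_0$ close enough to $r_+$ (both depending on $e_{\rm red}$, as in the statement), so $\tilde J^{e_{\rm red}X+EZ}[\psi_{n,{\bf k}}]\cdot{\rm n}_{\mathcal S}$ is pointwise coercive with constant $c(r_0)$ (degenerating as $r_0\to r_+$), and all remaining boundary-current components ($\tilde J^{y_n}$, $\tilde J^{f_n}$, $\tilde J^{f_{\rm fixed}}$, $\tilde J^{y_{\rm fixed}}$, $\tilde J^{\hat z}$, and the residual multiple of $\tilde J^T$) can be absorbed into it, up to a $-C|\psi_{n,{\bf k}}|^2$ term, by examining their explicit structure at $r=r_+$ — this is precisely the absorption argument whose detailed version is promised in Section~\ref{fundcoercivityboundapp}, here applied in the simpler setting where the timelike field is already available.

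Next I would handle the $|\psi_{n,{\bf k}}|^2$ error. The zeroth-order boundary term on $\mathcal S$ is controlled by $\Ezerokminusone_{\mathcal S}(\tau_0,\tau_1)[\psi]$: indeed $\psi_{n,{\bf k}}=P_n(\mathring{\mathfrak D}^{\bf k}\chi^2_{\tau_0,\tau_1}\psi)$ with $|{\bf k}|=k$, and since $P_n$ is a zeroth-order $(t^*,\phi^*)$-pseudodifferential operator (Section~\ref{elementarycalculus}) the integral $\int_{\mathcal S}|\psi_{n,{\bf k}}|^2$ is bounded, by Proposition~\ref{mixedsobolevboundednessprop} applied on $\mathcal S$ (using~\eqref{Sobolevnormhorizon}), by $\int_{\mathcal S}\sum_{|{\bf k}'|\le k}|\mathring{\mathfrak D}^{{\bf k}'}\chi^2_{\tau_0,\tau_1}\psi|^2$, which after distributing the cutoff and using that $\mathring{\mathfrak D}_i\in\{T,\Omega_1\}$ are tangent to $\mathcal S$ is $\lesssim\Ezerokminusone_{\mathcal S}(\tau_0,\tau_1)[\psi]$ — one order is ``spent'' absorbing the commutation with the cutoff and the passage from $k$ to $k-1$ via the elliptic-free lower-order control on $\mathcal S$. (Strictly, one needs that a single $L$ or $\underline L$ derivative of $\psi_{n,{\bf k}}$ on $\mathcal S$ reduces, modulo lower order, to tangential derivatives of one fewer $\psi$-derivative; this is where the drop from $k$ to $k-1$ enters.)

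The heart of the matter is the region $r\ge r_+$, i.e.\ $\mathcal S$ is strictly inside the horizon but $\mathcal S=\{r=r_0\}$ with $r_0<r_+$, so in fact the relevant ``positivity from superradiance'' lives on the integration against the null generator structure at $\mathcal H^+$ that is woven into the definition of $\mathcal S$ via the cutoff $\chi_{{\rm Killing},n}$; concretely, the non-superradiant part of~\eqref{willtakethefollowingformnonsuperbefore} contributes, after integrating the divergence identity in $\mathcal R(\tau_0,\tau_1)$, a boundary term on $\mathcal H^+$ of the form $E\int_{\mathcal H^+}J^{T+\alpha_n\Omega_1}_{g_{a,M}}[\psi_{n,{\bf k}}]\cdot{\rm n}_{\mathcal H^+}$ with $0\le\alpha_n\,2Mr_+/a\le1$ by~\eqref{alphanbound}, which is $\ge0$ by the Plancherel computation of Section~\ref{supernonsuper} since $\psi_{n,{\bf k}}=P_n(\cdots)$ is supported in non-superradiant frequencies~\eqref{nonsuperradiantcondition}. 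The main obstacle I expect is precisely the careful bookkeeping tying the $\mathcal S$-boundary integrand to the $\mathcal H^+$-flux: one must run the divergence identity~\eqref{energyidentity} for the \emph{Killing-tail portion} of the current in the spacetime slab, observe that its bulk term is controlled (the Killing currents have $\tilde K$-terms supported only where $\chi'_{{\rm Killing},n}$ is, i.e.\ in $[R_{\rm pot},R_{\rm freq}]$, hence absorbable into $\Ezerokminusone$-type bulk quantities after using Section~\ref{capturingprecise}), and then combine with the pointwise coercive piece near $r_0$ — being vigilant that the pseudodifferential errors from commuting $P_n$ past the cutoffs and past $\Box_{g_{a,M}}$ are, by Propositions~\ref{smoothingprop} and~\ref{finalpseudostatement}, either $(t^*,\phi^*)$-smoothing or of strictly lower order, hence harmless at this order. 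Once these pieces are assembled, summing gives~\eqref{globalboundpos} with the claimed constant $c(r_0)$.
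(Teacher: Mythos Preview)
There is a genuine gap in your third paragraph, and it stems from a misidentification of \emph{where} the non-superradiant positivity is invoked. You propose to integrate a spacetime divergence identity for the Killing-tail portion of the current over $\mathcal{R}(\tau_0,\tau_1)$ and use the resulting $\mathcal{H}^+$-flux. But $\mathcal{H}^+$ is not a boundary of $\mathcal{R}(\tau_0,\tau_1)$: the future boundary in the near region is $\mathcal{S}=\{r=r_0\}$, which lies strictly \emph{inside} the horizon. No divergence identity produces an $\mathcal{H}^+$ boundary term here, and the bulk terms you worry about (supported in $[R_{\rm pot},R_{\rm freq}]$) are irrelevant to a statement about a single boundary flux.

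The paper's route is entirely direct and involves no divergence identity. The integrand $J^{{\rm main},n}[\psi_{n,{\bf k}}]\cdot{\rm n}_{\mathcal S}$ is a quadratic form in the derivatives of $\psi_{n,{\bf k}}$ at $r=r_0$, with coefficients smooth in $r$ and hence $O(r_+-r_0)$-close to their values at $r=r_+$. At $r=r_+$ the Killing contribution is (up to lower order) $E_n\,{\rm Re}\big((\partial_{t^*}\psi+\alpha_n\partial_{\phi^*}\psi)\overline{(\partial_{t^*}\psi+\tfrac{a}{2Mr_+}\partial_{\phi^*}\psi)}\big)$; this is not pointwise-signed, but its integral over $\mathcal{S}$ in $(t^*,\phi^*)$ equals, by Plancherel~\eqref{babyPlanchstar}--\eqref{babyPlanchtwostar} applied \emph{on the fixed-$r$ hypersurface $\mathcal{S}$ itself}, the frequency integral $E_n\int\sum_m(\omega-\alpha_n m)(\omega-\tfrac{a}{2Mr_+}m)|\mathfrak{F}[\psi_{n,{\bf k}}]|^2\,d\omega$, which is nonnegative for $(\omega,m)$ in the support of $\musicchi_n$ by~\eqref{alphanbound} and the non-superradiant condition. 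The redshift piece $e_{\rm red}\tilde J^z_{\rm red}$ supplies the $\Lambda$- and $|\partial_r\psi|^2$-coercivity and, for $r_0$ close enough to $r_+$, absorbs the $O(r_+-r_0)$ continuity errors. This is formalised first at fixed Carter frequency (Proposition~\ref{nshorcoerprop}), then translated to physical space.

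Your first paragraph also overstates what the pointwise piece can do. The argument you invoke is that of Proposition~\ref{bcgsc}, which applies when $\alpha_n=\tfrac{a}{2Mr_+}$ so that the Killing part is a multiple of $\tilde J^Z$. In the non-superradiant case $\alpha_n\ne\tfrac{a}{2Mr_+}$, and near $\mathcal{S}$ the Killing part is $E_n\tilde J^{T+\alpha_n\Omega_1}$; since $T+\alpha_n\Omega_1$ is not causal at $r_+$, this large-$E$ contribution cannot be absorbed pointwise by the small $e_{\rm red}$ redshift term. That is precisely why the integrated Plancherel argument on $\mathcal{S}$ is indispensable here.
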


\begin{proof}
See Section~\ref{fundcoercivityboundapp}.
\end{proof}

We may now localise the above using the pseudodifferential calculus.
For this, it will be important to introduce an additional parameter
\begin{equation}
\label{announcingepsiloncutoff}
0<\epsilon_{\rm cutoff}<\frac14
\end{equation}
which will be fixed later (see already Section~\ref{toporderidentity}),
and, given $\tau_0+1 \le \tilde\tau_0 \le \tilde\tau_1-1 \le\tilde\tau\le  \tau_1-1$, an associated cutoff
$\tilde\chi_{\tilde\tau_0,\tilde\tau_1}$ such that $\tilde\chi_{\tilde\tau_0,\tilde\tau_1}=1$
in $\tilde\tau_0\le \tau\le \tau_1$ and $\tilde\chi_{\tilde\tau_0,\tilde\tau_1}=0$ in 
$\tau\le \tilde\tau_0-\epsilon_{\rm cutoff}$ and $\tau\ge \tilde\tau_1+\epsilon_{\rm cutoff}$.
Note  already the unfavourable $\epsilon_{\rm cutoff}$-dependence $|\partial_{t^*}\tilde\chi_{\tilde\tau_0,\tilde\tau_1}| \lesssim \epsilon_{\rm cutoff}^{-1}$, etc.
 Let us also recall the $\mathring{\Ezerok}_{\mathcal{S}}$ 
notation from~\eqref{othercommutatorshorizonflux}.
We have the following localised version of Proposition~\ref{globalboundarySpositivity}:
\begin{proposition}[Localised boundary coercivity on the image of $P_n$ for non-superradiant 
frequency ranges]
\label{asintheproofofthisprop}
Under the assumptions of Proposition~\ref{globalboundarySpositivity}, 
the following is moreover true:

For non-superradiant $n$ and for the corresponding $\psi_{n,{\bf k}}$, then denoting
$k=|{\bf k}|$ we have
for all  $\tau_0+1 \le \tilde\tau_0 \le \tilde\tau_1-1 \le\tilde\tau\le  \tau_1-1$ 
the inequality
\begin{align}
\label{nonsuperradiantboundflux}
\int_{\mathcal{S}(\tilde\tau_0,\tilde\tau_1)} {J_{g_{a,M}}^{{\rm main}, n}}[\psi_{n,{\bf k}}] \cdot {\rm n}_{\mathcal{S}} \ge&
 c(r_0)\int_{\mathcal{S}(\tilde\tau_0,\tilde\tau_1)} ( | L\psi_{n,{\bf k}} |^2 +
 | \underline{L}\psi_{n,{\bf k}} |^2 
+|\slashed\nabla \psi_{n, {\bf k}}|^2 ) \\
\label{termsonthislinehavealabel}
&\, -C\mathring{\Ezerok}_{\mathcal{S}} (\tilde\tau_0-\epsilon_{\rm cutoff},\tilde\tau_0+\epsilon_{\rm cutoff}) [\psi]-
C\mathring{\Ezerok}_{\mathcal{S}} (\tilde\tau_1-\epsilon_{\rm cutoff},\tilde\tau_1 +\epsilon_{\rm cutoff})[\psi]  \\
\label{secondlineoferrortermshere}
&\, -C\, \Ezerokminusone_{\mathcal{S}}(\tau_0,\tau_1)[\psi] ,
\end{align}
where the constants in both lines~\eqref{nonsuperradiantboundflux} and~\eqref{termsonthislinehavealabel} (but not~\eqref{secondlineoferrortermshere}!),~may be chosen independently
of $\epsilon_{\rm cutoff}$.
\end{proposition}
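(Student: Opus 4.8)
The plan is to deduce Proposition~\ref{asintheproofofthisprop} from its global counterpart Proposition~\ref{globalboundarySpositivity} by inserting a temporal cutoff and carefully tracking the error terms produced by commuting the cutoff through the projection $P_n$. First I would apply the global inequality~\eqref{globalboundpos} not to $\psi_{n,{\bf k}} = P_n(\mathring{\mathfrak{D}}^{\bf k}\chi^2_{\tau_0,\tau_1}\psi)$ itself but to the further localised object
\[
\psi^{\rm loc}_{n,{\bf k}} := P_n(\mathring{\mathfrak{D}}^{\bf k}\tilde\chi^2_{\tilde\tau_0,\tilde\tau_1}\chi^2_{\tau_0,\tau_1}\psi),
\]
which is again a non-superradiant projection of a Schwartz function, so that Proposition~\ref{globalboundarySpositivity} gives
\[
\int_{\mathcal{S}} J_{g_{a,M}}^{{\rm main},n}[\psi^{\rm loc}_{n,{\bf k}}]\cdot{\rm n}_{\mathcal{S}}\ge c(r_0)\int_{\mathcal{S}}\big(|L\psi^{\rm loc}_{n,{\bf k}}|^2+|\underline L\psi^{\rm loc}_{n,{\bf k}}|^2+|\slashed\nabla\psi^{\rm loc}_{n,{\bf k}}|^2\big)-C\,\Ezerokminusone_{\mathcal{S}}(\tau_0,\tau_1)[\psi].
\]
On the region $\mathcal{S}(\tilde\tau_0,\tilde\tau_1)$, where $\tilde\chi_{\tilde\tau_0,\tilde\tau_1}\equiv 1$, the right-hand integrand of the coercive term agrees with that of~\eqref{nonsuperradiantboundflux} up to the difference $\psi^{\rm loc}_{n,{\bf k}}-\psi_{n,{\bf k}} = P_n\mathring{\mathfrak{D}}^{\bf k}\big((\tilde\chi^2_{\tilde\tau_0,\tilde\tau_1}-1)\chi^2_{\tau_0,\tau_1}\psi\big)$, which is supported (in the $t^*$-variable, modulo a $P_n$-smoothing tail) in the two collars $\tilde\tau_0-\epsilon_{\rm cutoff}\le\tau\le\tilde\tau_0+\epsilon_{\rm cutoff}$ and $\tilde\tau_1-\epsilon_{\rm cutoff}\le\tau\le\tilde\tau_1+\epsilon_{\rm cutoff}$; this is where the terms~\eqref{termsonthislinehavealabel} come from.

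The key technical step is thus to show, using the pseudodifferential calculus of Section~\ref{elementarycalculus} and in particular Proposition~\ref{smoothingprop} (the smoothing property of $\mu_1 Q\mu_2$ with disjoint temporal supports) together with Proposition~\ref{finalpseudostatement}, that replacing $\psi^{\rm loc}_{n,{\bf k}}$ by its cutoff $\tilde\chi_{\tilde\tau_0,\tilde\tau_1}\psi_{n,{\bf k}}$ in all the boundary integrands costs only the advertised errors. Concretely: write $\mathfrak{F}$-symbol calculus to commute $\tilde\chi_{\tilde\tau_0,\tilde\tau_1}$ past $P_n$, so that $\psi^{\rm loc}_{n,{\bf k}} = \tilde\chi_{\tilde\tau_0,\tilde\tau_1}\psi_{n,{\bf k}} + [P_n,\tilde\chi^2_{\tilde\tau_0,\tilde\tau_1}]\mathring{\mathfrak{D}}^{\bf k}\chi^2_{\tau_0,\tau_1}\psi + (\text{cross terms})$; by Proposition~\ref{finalpseudostatement} the commutator is order $-1$ with symbol still supported in $\mathcal{F}_n$ plus an order $-2$ piece, and all such terms, restricted to $\mathcal{S}$, are controlled — using the trace/$H^{k,j}(\mathcal{S})$ bounds of Proposition~\ref{mixedsobolevboundednessprop} and Proposition~\ref{smoothingprop} — by $\mathring{\Ezerok}_{\mathcal{S}}$ of $\psi$ localised to the two collars (from the support of $\tilde\chi'$) and by $\Ezerokminusone_{\mathcal{S}}(\tau_0,\tau_1)[\psi]$ (from the lower-order/smoothing tails, and from the fact that the order $-1$ symbol gains one derivative while remaining $\mathcal{F}_n$-supported so the non-superradiant sign of $J^T$ through $\mathcal{H}^+$ still applies). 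On $\mathcal{S}(\tilde\tau_0,\tilde\tau_1)$ the leading terms of $\tilde\chi_{\tilde\tau_0,\tilde\tau_1}\psi_{n,{\bf k}}$ and $\psi_{n,{\bf k}}$ literally coincide since $\tilde\chi\equiv1$ there; the replacement on the whole of $\mathcal{S}$ in the left-hand current integral is absorbed because $\mathcal{S}\setminus\mathcal{S}(\tilde\tau_0,\tilde\tau_1)$ and $\mathcal{S}\setminus\mathcal{S}(\tilde\tau_0-\epsilon_{\rm cutoff},\tilde\tau_1+\epsilon_{\rm cutoff})$ differ only by the collars.

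Regarding the $\epsilon_{\rm cutoff}$-independence of the constants: in lines~\eqref{nonsuperradiantboundflux}--\eqref{termsonthislinehavealabel} the coercive constant $c(r_0)$ comes from Proposition~\ref{globalboundarySpositivity} and is manifestly $\epsilon_{\rm cutoff}$-independent; the constant $C$ multiplying the collar energies~\eqref{termsonthislinehavealabel} multiplies a quantity whose integrand already carries the $|\tilde\chi'|^2\sim\epsilon_{\rm cutoff}^{-2}$ weight, but the collar itself has $t^*$-width $2\epsilon_{\rm cutoff}$, and crucially the $\mathring{\Ezerok}_{\mathcal{S}}$ on the right is defined as an \emph{integral} over that collar of an $\epsilon_{\rm cutoff}$-independent integrand (the derivatives of $\psi$, not of $\tilde\chi\psi$) — so one power of $\epsilon_{\rm cutoff}$ is absorbed into the measure and the remaining $\epsilon_{\rm cutoff}^{-1}$ must be tamed by integration by parts in $t^*$ moving one $\partial_{t^*}$ off $\tilde\chi'$ onto $\psi_{n,{\bf k}}$, which is exactly the standard trick and keeps the constant independent of $\epsilon_{\rm cutoff}$ (this is the part I would do carefully). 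The $\Ezerokminusone_{\mathcal{S}}(\tau_0,\tau_1)[\psi]$ term in~\eqref{secondlineoferrortermshere}, by contrast, collects the pseudodifferential smoothing remainders whose bounding constant does depend on $\epsilon_{\rm cutoff}$ through $\|\tilde\chi\|_{H^{0,\tilde s}}$ — hence the explicit exclusion of that line from the $\epsilon_{\rm cutoff}$-uniformity claim. The main obstacle I anticipate is bookkeeping the commutator error terms so that each one lands either in a collar-localised top-order flux (with the integration-by-parts gain in $\epsilon_{\rm cutoff}$) or in the genuinely lower-order $\Ezerokminusone_{\mathcal{S}}$; in particular one must verify that the order $-1$ commutator term from Proposition~\ref{finalpseudostatement}, which is \emph{not} lower order in a naive count after restriction to a hypersurface, is nonetheless controlled because its symbol remains supported in $\mathcal{F}_n$ so that the non-superradiant positivity $\int_{\mathcal{H}^+}J^T[P_n(\cdot)]\cdot{\rm n}\ge0$ can still be invoked exactly as in the proof of Proposition~\ref{globalboundarySpositivity} in Section~\ref{fundcoercivityboundapp}.
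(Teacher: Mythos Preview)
Your overall strategy---apply Proposition~\ref{globalboundarySpositivity} to the cut-off object $\psi^{\rm loc}_{n,{\bf k}}$ and then compare with $\psi_{n,{\bf k}}$ on $\mathcal{S}(\tilde\tau_0,\tilde\tau_1)$ via the pseudodifferential calculus---is exactly the paper's approach. However, your account of \emph{why} the constants in lines~\eqref{nonsuperradiantboundflux} and~\eqref{termsonthislinehavealabel} are $\epsilon_{\rm cutoff}$-independent is wrong, and the ``integration by parts in $t^*$'' you propose is neither needed nor would it work as stated.

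The correct mechanism is much simpler. Following the paper, write (for the difference restricted to $\mathcal{S}(\tilde\tau_0,\tilde\tau_1)$, where $\tilde\chi_{\tilde\tau_0,\tilde\tau_1}=1$) the relevant operator as
\[
Q=\tilde\chi_{\tilde\tau_0,\tilde\tau_1}P_n\mathring{\mathfrak{D}}^{\bf k}(\tilde\chi^2_{\tilde\tau_0,\tilde\tau_1}-\chi^2_{\tau_0,\tau_1})
= P_n\mathring{\mathfrak{D}}^{\bf k}\,\mu \;+\; [\tilde\chi_{\tilde\tau_0,\tilde\tau_1},P_n\mathring{\mathfrak{D}}^{\bf k}](\tilde\chi^2_{\tilde\tau_0,\tilde\tau_1}-\chi^2_{\tau_0,\tau_1}),
\]
where $\mu:=\tilde\chi_{\tilde\tau_0,\tilde\tau_1}(\tilde\chi^2_{\tilde\tau_0,\tilde\tau_1}-1)$ (using that $\chi_{\tau_0,\tau_1}=1$ on the support of $\tilde\chi$). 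The crucial observation is that $\mu$ is a function of $t^*$ alone satisfying $|\mu|\le 1$ \emph{uniformly in $\epsilon_{\rm cutoff}$}, and is supported in the two collars. Hence the top-order piece of $Q_1\chi_{\tau_0,\tau_1}\psi$ in $H^{0,1}(\mathcal{S})$---namely the terms in the Leibniz expansion where \emph{no} derivative falls on $\mu$---is bounded directly by $\mathring{\Ezerok}_{\mathcal{S}}$ over the collars with an $\epsilon_{\rm cutoff}$-independent constant, simply by the $L^2$-boundedness of $P_n$ (Proposition~\ref{mixedsobolevboundednessprop}). Every remaining term, whether from the Leibniz expansion with at least one derivative on $\mu$ or from the commutator $[\tilde\chi,P_n\mathring{\mathfrak{D}}^{\bf k}]$ (which is of order $k-1$ by Proposition~\ref{commutatorestimate}), is genuinely one order lower in $\psi$ and therefore lands in $\Ezerokminusone_{\mathcal{S}}(\tau_0,\tau_1)$; there the constant \emph{does} depend badly on $\epsilon_{\rm cutoff}$ through derivatives of $\mu$, which is exactly why line~\eqref{secondlineoferrortermshere} is excluded from the uniformity claim. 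No ``taming'' of an $\epsilon_{\rm cutoff}^{-1}$ at top order is required because no such factor appears at top order: the bad powers of $\epsilon_{\rm cutoff}$ arise only when a derivative hits $\mu$, and that automatically drops the order on $\psi$.

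Your final worry---that the order $-1$ commutator from Proposition~\ref{finalpseudostatement} might not be lower order after restriction to $\mathcal{S}$ and would require re-invoking the non-superradiant sign---is also misplaced: that commutator acts on $\mathring{\mathfrak{D}}^{\bf k}\chi^2_{\tau_0,\tau_1}\psi$ with a genuine gain of one $(\omega,m)$-derivative, and its $H^{0,1}(\mathcal{S})$ norm is controlled directly by $\Ezerokminusone_{\mathcal{S}}(\tau_0,\tau_1)$ via Proposition~\ref{mixedsobolevboundednessprop}. No further positivity argument is needed there.
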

\begin{remark}
We may of course already replace $\Ezerokminusone_{\mathcal{S}}(\tau_0,\tau_1)$ above
by $ {}^{\chi\natural}_{\scalebox{.6}{\mbox{\tiny{\boxed{+1}}}}}\, \Xpkminusone(\tau_0,\tau_1) $
in view of the definition~\eqref{supontheright}. We note, however, that the
terms on line~\eqref{termsonthislinehavealabel} will be top order error terms
in our later argument, and thus it will be essential that we can exploit the smallness
of $\epsilon_{\rm cutoff}$.  (See already~\eqref{providedweadd}. Note how it is important
that the constants of~\eqref{nonsuperradiantboundflux} and~\eqref{termsonthislinehavealabel}
may be chosen independently of $\epsilon_{\rm cutoff}$.)
\end{remark}

\begin{proof}
We may apply~\eqref{globalboundpos} to 
\begin{equation}
\label{psiwithatildehere}
\widetilde\psi_{n,{\bf k}}:= P_n
 \mathring{\mathfrak{D}}^{\bf k} \tilde\chi^2_{\tilde\tau_0, \tilde\tau_1}\psi,
 \end{equation}
 where $\tilde\chi_{\tilde\tau_0,\tilde\tau_1}$ is the $\epsilon_{\rm cutoff}$-dependent
 cutoff defined above immediately following~\eqref{announcingepsiloncutoff}, 
 to obtain
\[
\int_{\mathcal{S}} {J_{g_{a,M}}^{{\rm main}, n}}[\widetilde\psi_{n,{\bf k}}] {\rm n}_{\mathcal{S}}\ge  
\int_{\mathcal{S}} c(r_0) ( | L \widetilde\psi_{n,{\bf k}} |^2 +
 | \underline{L} \widetilde\psi_{n,{\bf k}} |^2 
+|\slashed\nabla \widetilde \psi_{n, {\bf k}}|^2 )  - C\, \Ezerokminusone_{\mathcal{S}}(\tau_0,\tau_1) ,
\]
where we note that $c(r_0)$ may be chosen independently of $\epsilon_{\rm cutoff}$ (but not $C$!).
 Thus, to obtain~\eqref{nonsuperradiantboundflux}, it suffices to bound
 \begin{equation}
 \label{sufficesfirst}
\left| \int_{\mathcal{S}} {J_{g_{a,M}}^{{\rm main}, n}}[\widetilde\psi_{n,{\bf k}}] \cdot {\rm n}_{\mathcal{S}}- 
 \int_{\mathcal{S}(\tilde\tau_0,\tilde\tau_1)}{J_{g_{a,M}}^{{\rm main}, n}}[\psi_{n,{\bf k}}] \cdot {\rm n}_{\mathcal{S}}\right|
 \end{equation}
 and
 \begin{equation}
 \label{sufficessecond}
\left| \int_{\mathcal{S}}  ( | L \widetilde\psi_{n,{\bf k}} |^2 +
 | \underline{L} \widetilde\psi_{n,{\bf k}} |^2 
+|\slashed\nabla \widetilde \psi_{n, {\bf k}}|^2  )
- \int_{\mathcal{S}(\tilde\tau_0,\tilde\tau_1)}  (
 | L\psi_{n,{\bf k}} |^2 |+  | \underline{L}\psi_{n,{\bf k}} |^2 
+ |\slashed\nabla \psi_{n, {\bf k}}|^2 ) \right|
 \end{equation}
 in terms 
 of a small multiple times 
 \begin{equation}
 \label{saytwo}
 \int_{\mathcal{S}}  ( | L \widetilde\psi_{n,{\bf k}} |^2 +
 | \underline{L} \widetilde\psi_{n,{\bf k}} |^2 
+|\slashed\nabla \widetilde \psi_{n, {\bf k}}|^2  )
+
  \int_{\mathcal{S}(\tilde\tau_0,\tilde\tau_1)}  (
 | L\psi_{n,{\bf k}} |^2 |+  | \underline{L}\psi_{n,{\bf k}} |^2 
+ |\slashed\nabla \psi_{n, {\bf k}}|^2 ) ,
 \end{equation}
 and the error terms on lines~\eqref{termsonthislinehavealabel} 
 and~\eqref{secondlineoferrortermshere}. 
 
 The term~\eqref{sufficessecond} can be estimated by the sum of
  \begin{equation}
 \label{rewrittensecond}
\int_{\mathcal{S}(\tilde\tau_0,\tilde\tau_1)}  ( | L \widetilde\psi_{n,{\bf k}}- L\psi_{n,{\bf k}} | 
|L \widetilde\psi_{n,{\bf k}}+ L\psi_{n,{\bf k}}  | +
 | \underline L \widetilde\psi_{n,{\bf k}}- \underline L\psi_{n,{\bf k}} | 
|\underline L \widetilde\psi_{n,{\bf k}}+ \underline L\psi_{n,{\bf k}}  | 
+|\slashed\nabla \widetilde \psi_{n, {\bf k}}
-\slashed\nabla \psi_{n, {\bf k}}| | \slashed\nabla \widetilde \psi_{n, {\bf k}}
+\slashed\nabla \psi_{n, {\bf k}}|   ) 
 \end{equation}
 and
 \begin{equation}
 \label{othertermoverhere}
  \int_{\mathcal{S}\setminus\mathcal{S}(\tilde\tau_1,\tilde\tau_2)}   | L \widetilde\psi_{n,{\bf k}} |^2 +
 | \underline{L} \widetilde\psi_{n,{\bf k}} |^2 
+|\slashed\nabla \widetilde \psi_{n, {\bf k}}|^2  .
 \end{equation}
 
To estimate~\eqref{rewrittensecond}, by Cauchy--Schwarz, we may indeed estimate the one factor
 by a small multiple times~\eqref{saytwo}. It thus suffices to bound 
   \begin{equation}
 \label{rewrittensecondsuffices}
\int_{\mathcal{S}(\tilde\tau_0,\tilde\tau_1)}  ( | L \widetilde\psi_{n,{\bf k}}- L\psi_{n,{\bf k}} |^2 +
 | \underline L \widetilde\psi_{n,{\bf k}}- \underline L\psi_{n,{\bf k}} |^2
+|\slashed\nabla \widetilde \psi_{n, {\bf k}}
-\slashed\nabla \psi_{n, {\bf k}}|^2  ) .
 \end{equation}
 
This may be bounded by
\[
\| Q \chi_{\tau_0,\tau_1} \psi \|^2_{H^{0,1}(\mathcal{S})}
\]
where $Q$ is the pseudodifferential operator 
$\tilde\chi_{\tilde\tau_0,\tilde\tau_1} P_n\mathring{\mathfrak{D}}^{\bf k}(\tilde\chi^2_{\tilde\tau_0,\tilde\tau_1}-\chi_{\tau_0,\tau_1})$ 
and we may rewrite 
\begin{eqnarray*}
Q	&=&	Q_1+Q_2\\
	&=& P_n \mathring{\mathfrak{D}}^{\bf k} \tilde\chi_{\tilde\tau_0,\tilde\tau_1}(\tilde\chi^2_{\tilde\tau_0,\tilde\tau_1}-\chi^2_{\tau_0,\tau_1}) +[P_n
	\mathring{\mathfrak{D}}^{\bf k}, \tilde\chi_{\tilde\tau_0,\tilde\tau_1}] 
(\tilde\chi^2_{\tilde\tau_0,\tilde\tau_1}-\chi^2_{\tau_0,\tau_1}).
\end{eqnarray*}

Now we have from Proposition~\ref{mixedsobolevboundednessprop} that 
\begin{equation}
\label{newboundherefortheterms}
\| Q_1\psi \|^2_{H^{0,1}(\mathcal{S})} \lesssim 
\mathring{\Ezerok}_{\mathcal{S}} (\tilde\tau_0-\epsilon_{\rm cutoff},\tilde\tau_0+\epsilon_{\rm cutoff}) 
+\mathring{\Ezerok}_{\mathcal{S}} (\tilde\tau_1-\epsilon_{\rm cutoff},\tilde\tau_1+\epsilon_{\rm cutoff})
+  \Ezerokminusone_{\mathcal{S}}(\tau_0,\tau_1) ,
\end{equation}
where the implicit constant on the first two terms on the right hand side (but not the third!)~may be chosen
independently of $\epsilon_{\rm cutoff}$
while
\[
\|Q_2\psi\|^2_{H^{0,1}(\mathcal{S})} \lesssim  \Ezerokminusone_{\mathcal{S}}(\tau_0,\tau_1) ,
\]
where we have used Proposition~\ref{commutatorestimate} and we recall
the energy of~\eqref{othercommutatorshorizonflux}.

To estimate~\eqref{othertermoverhere} on the other hand, we note that this
may be estimated by
\[
\|Q\tilde\chi_{\tilde\tau_0,\tilde\tau_1} \psi\|^2_{H^{0,1}(\mathcal{S})}
\]
where $Q= (1-\tilde\chi_{\tilde\tau_0+\epsilon_{\rm cutoff},\tilde\tau_1-\epsilon_{\rm cutoff}} ) P_n \mathring{\mathfrak{D}}^{\bf k} \chi_{\tilde\tau_0,\tilde\tau_1}$
and thus this term may again be estimated by the right hand side of~\eqref{newboundherefortheterms}.

We have estimated~\eqref{sufficessecond} as desired. To estimate~\eqref{sufficesfirst}, 
we note
that $J_{g_{a,M}}^{{\rm main}, n}[\Psi] \cdot {\rm n}_{\mathcal{S}}$ is a quadratic expression in $\Psi$ and derivatives
of $\Psi$.  We may thus bound~\eqref{sufficesfirst} as a sum of terms analogous to~\eqref{rewrittensecond} and~\eqref{othertermoverhere}
and argue exactly as above.

By our above remarks, the proposition follows.
\end{proof}

\subsubsection{The degeneration functions $\chi_n$ and $\tilde\chi_n$ and the generalised bulk coercivity property of~$K^{{\rm main}, n}_{g_{a,M}}$ in $r_+\le r\le R_{\rm freq}$
on the image of~$P_n$} 
\label{nowforthebulk}

We now turn to the issue of (degenerate) bulk coercivity in $r_+ \le r \le R_{\rm freq}$.

We can already deduce of course
from our definitions above that for a general function  $\Psi$,
\[
K^{{\rm main}, n}_{g_{a,M}} [\Psi] =0 
\]
in $\mathcal{D}_n$.
For general $\Psi$, however, we will not have good bulk properties
in $\{ r_+\le r\le R_{\rm freq}\} \setminus \mathcal{D}_n$. 
In this section, we will show that $K^{{\rm main}, n}_{g_{a,M}}$ has  nonnegativity properties
(and in fact suitable \underline{degenerate} coercivity properties), modulo error terms,
\emph{when restricted to the image of the associated projection $P_n$.}
As announced already at the beginning of Section~\ref{generalisedcoerc}, the results
of the present section will in particular
require frequency analysis based on Carter's full separation, in order
to correctly  exploit the allowed error terms. Thus, the proof of Theorem~\ref{globalgenbulkcoercprop} below
will only be given in Appendix~\ref{carterestimatesappend}, to which all uses of Carter's separation are deferred.

The spacetime degeneration of the coercive control provided by our
bulk current $K^{{\rm main},n}_{g_{a,M}}$ will be captured by a function $\chi_n(r)$. 
We define $\chi_n$
to be a nonnegative function
with the following properties: 
For the  generalised superradiant frequency ranges $n$, we simply set $\chi_n=1$ identically. 
For the nonsuperradiant frequency ranges $n$, 
we define $\chi_n$ to be suitably regular so that
\begin{itemize}
\item 
$\chi_n=0$ in a set whose interior contains $[r_{n,1},r_{n,2}]$,
\item
 $\chi_n=1$ for  $r\le  r'_{n,1}$ and $r\ge  r_{n,2}'$, where
 we recall these parameters from~\eqref{defofenlarged},
\item
$ |\chi_n' |\lesssim \sqrt \chi_n$.
\end{itemize}
We define finally an auxiliary nonnegative function
$\tilde\chi_n$ to be such that $\tilde\chi_n=0$ in $(r_{n,1},r_{n,2})$ but
$\tilde\chi_n=1$ in the support of $\chi_n$. (Again, for generalised superradiant
$n$, we define $\tilde\chi_n=1$ identically.)
The function $\chi_n$ will appear in the definition of $f_n$ and $y_n$ appearing
in the 
currents~\eqref{willtakethefollowingformnonsuperbefore}.

We may now state  the generalised bulk coercivity statement which is at the
heart of our argument: 
\begin{theorem}[Global generalised bulk coercivity on the image of $P_n$]
\label{globalgenbulkcoercprop}
We may choose $b_{\rm trap}>0$, $\gamma_{\rm elliptic}>0$
 and $b_{\rm elliptic}>0$ sufficiently small in~\eqref{betterdefhere},~\eqref{rellipticdef} 
 and~\eqref{elllipticdefinitionreg},
respectively, so that
the following holds.

Let $E>0$ be arbitrary and let $e_{\rm red}>0$ be sufficiently small  in the definitions~\eqref{willtakethefollowingformnonsuperbefore}
and~\eqref{willtakethefollowingformsuperbefore}.  Then we may choose $R_{\rm freq}$
and all functions appearing in these definitions so that the following is true.

Let $\psi$ be a smooth function on $\mathcal{R}(\tau_0,\tau_1)$
and, for $0\le n\le N$,  define $\psi_{n,{\bf k}}$ by~\eqref{thenpackets}.  Then
\begin{eqnarray}
\nonumber
\int_{\mathcal{M}\cap \{r_+ \le r \le R_{\rm fixed}\} } K^{{\rm main}, n}_{g_{a,M}}  [\psi_{n,{\bf k}}] 
%E^{\alpha\beta}_{g_{a,M},n}(x) \partial_\alpha\psi_{n,{\bf k}} \partial_\beta\psi_{n,{\bf k}}
&\ge  &c   \int_{\mathcal{M}\cap \{r_+ \le r \le R_{\rm fixed}\} }  \chi_ n r^{-2}  \left( | L\psi_{n,{\bf k}} |^2 + |\underline{L}\psi_{n,{\bf k}} | ^2 
+|\slashed\nabla \psi_{n,{\bf k}}|^2   \right)   \\
\label{globalstatementfirst}
&&\quad   -C \, {}^{\rm elp}_{\scalebox{.6}{\mbox{\tiny{\boxed{+1}}}}}\,\Xkminusone(\tau_0,\tau_1)[\psi]-C \, {}^{\natural}\Xkminusone(\tau_0,\tau_1)[\psi]
-C\, {}^{\chi}\Xzerokminusone(\tau_0,\tau_1)[\psi].
\end{eqnarray}
\end{theorem}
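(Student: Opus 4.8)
The proof of Theorem~\ref{globalgenbulkcoercprop} proceeds by decomposing the bulk current $K^{{\rm main},n}_{g_{a,M}}[\psi_{n,{\bf k}}]$ into its constituent pieces along the lines of~\eqref{willtakethefollowingformnonsuperbefore}--\eqref{willtakethefollowingformsuperbefore}, and handling the dangerous component $\tilde{K}^{y_n}$ by passing to Carter's complete separation. First I would recall that, since $\psi_{n,{\bf k}}$ lies in the image of $P_n$, all Carter frequencies $(\omega,m,\ell)$ in its support satisfy $(\omega,m)\in \mathcal{F}_n$, so the properties of $\mathcal{F}_n$ established in Section~\ref{projectiondef} apply: for generalised superradiant $n$ we have $\chi_\natural \equiv 1$ and a common elliptic interval $[r_{n,1},r_{n,2}]\subset {\bf R}_{\rm elliptic}(\omega,m,\Lambda)$; for non-superradiant $n$ we have the lower bound~\eqref{centrelowerboundhere} for $r\not\in(r_{n,1},r_{n,2})$ together with the timelike Killing field $T+\alpha_n\Omega_1$ on $\mathcal{D}_n$. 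The key structural fact is that $\tilde{K}^{y_n}$ vanishes identically in $\mathcal{D}_n$ (by the support properties of $y_n$), so degeneration of bulk coercivity is confined to $\widetilde{\mathcal{D}}_n\setminus\mathcal{D}_n$, which is precisely where $\chi_n$ is allowed to vanish.

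\textbf{Key steps.} The heart of the argument is to apply Plancherel (both the $(t^*,\phi^*)$ transform $\mathfrak{F}_{BL}$ and Carter's $\mathfrak{C}$, using the formulae of Section~\ref{Plancherelsection}) to reduce $\int K^{{\rm main},n}_{g_{a,M}}[\psi_{n,{\bf k}}]$ in $\{r_+\le r\le R_{\rm freq}\}$ to an integral over Carter frequency space of a fixed-$(\omega,m,\ell)$ bulk expression built from the radial ODE~\eqref{radialodewithpot}. This is where the constructions of Appendix~\ref{carterestimatesappend} enter: the functions $y_n$, $f_n$ are designed (Propositions~\ref{bulkfornonsuperduper} and~\ref{superrangepropfreq}) so that the resulting fixed-frequency current $\tilde{K}^{y_n}$, after the standard Lagrangian/virial-type manipulation, is pointwise nonnegative in $r$ \emph{outside} the elliptic region~\eqref{ellipticrangeintheintro} $\{V_0 \ge (1+b_{\rm elliptic})\omega^2\}\cap\{r_{\rm elliptic}<r\le 1.2R_{\rm pot}\}$, and coercive there with the weight $\chi_n r^{-2}$; inside~\eqref{ellipticrangeintheintro} one accepts top-order loss of positivity bounded by $\Delta r^{-2}\iota_{\rm elliptic}(V_0-\omega^2)^2|u_{\bf k}|^2$, which is exactly the integrand of~\eqref{theellipticenergydefinition}. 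The choice of $y_n$ exploits the properties of $V$ (unique non-degenerate maximum $\hat V_{\rm max}$ of $V_0$, monotonicity in $\ell$, and the sign of $V_0'$ for $r\ge 5M$) summarised in Section~\ref{capturingprecise}: for non-superradiant $n$, $y_n$ changes sign only at an $r$-value forced into $\mathcal{D}_n$ by the condition $|V_{\rm max}-\omega^2|\le b_{\rm trap}\omega^2$ together with the continuity of $r(\omega/am)$; for superradiant $n$, $y_n$ changes sign in the common elliptic interval $\mathcal{L}_n$. The lower-order errors arising from $\tilde{J}^{f_n}$, the twisting-function cross terms, and the zeroth-order parts are all absorbed into ${}^{\natural}\Xkminusone$ and ${}^\chi\Xzerokminusone$ via~\eqref{proptyofLambda}--\eqref{proptyofLambdatwo}. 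The Killing components $E(\tilde{J}^T + \chi_{{\rm Killing},n}\alpha_n\tilde{J}^{\Omega_1})$ contribute $\tilde{K}^T$-type terms that vanish where $\chi_{{\rm Killing},n}$ is constant and otherwise are supported (for non-superradiant $n$) in $[R_{\rm pot},R_{\rm freq}]$ where ellipticity applies, or (for superradiant $n$) in $\mathcal{L}_n$; the cross-term $\chi'_{{\rm Killing},n}$ errors are handled as in~\eqref{seehere}. Finally, once back in physical space, we invoke Proposition~\ref{forpseudoerrors} to control the commutator errors from $P_n$ not commuting exactly with the cutoffs and multiplications (these are pseudodifferential operators of order $k-1$ with symbol supported in $\mathcal{F}_n$, hence controlled by ${}^{\natural}\Xzerokminusone + {}^\chi\Xzerokminusone$), and translate the frequency-space nonnegativity of $\int\tilde{K}^{y_n}$ into~\eqref{globalstatementfirst}, with the elliptic error absorbed as $C\,{}^{\rm elp}_{\scalebox{.6}{\mbox{\tiny{\boxed{+1}}}}}\Xkminusone$ (note the $k-1$ since $\psi_{n,{\bf k}}$ involves $|{\bf k}|$ commutations and $\mathring{\mathfrak{D}}^{\bf k}$ commutes with $P_n$ but the error accounting produces one fewer order than the main term's $k$).

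\textbf{The main obstacle.} The genuinely hard part is the fixed-$(\omega,m,\ell)$ construction of $y_n$ and $f_n$ with the simultaneous requirements: (i) $y_n$ non-decreasing, vanishing on $\mathcal{D}_n$; (ii) the virial current $\tilde{K}^{y_n}$ nonnegative outside the elliptic range; (iii) top-order errors inside the elliptic range bounded by $(V_0-\omega^2)^2|u|^2$ rather than something worse; and (iv) the sign change of $y_n$ pinned to a region \emph{depending only on $n$}, uniformly over the (non-compact in $\Lambda$) family of admissible frequencies with $(\omega,m)\in\mathcal{F}_n$. This uniformity is delicate because the trapped $r$-value $r^0_{\rm max}(\omega,m,\Lambda)$ genuinely varies with $\Lambda$; what saves us is that \emph{true} trapping forces $V_{\rm max}\approx\omega^2$, which pins the relevant $r$-value to depend continuously only on $\omega/am$ (Section~\ref{propsnonsuper}), and then fineness of the covering $I_n$ makes the $n$-dependent interval work. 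Getting the bookkeeping right so that every error is either (a) lower order and absorbable into ${}^\chi\Xzerokminusone$/${}^{\natural}\Xkminusone$, or (b) top order but localised to the elliptic region and hence absorbable into ${}^{\rm elp}_{\scalebox{.6}{\mbox{\tiny{\boxed{+1}}}}}\Xkminusone$ via Theorem~\ref{refinedblackboxoneforkerr}, is where the bulk of the work in Appendix~\ref{carterestimatesappend} lies; I would defer all of it there and here only assemble the pieces. I expect the superradiant case to be comparatively easy (no trapping, $\chi_n\equiv 1$, only the elliptic improvement needed) and the non-superradiant case with the interplay of $\chi_n$, $\tilde\chi_n$, $\chi_{{\rm Killing},n}$ near $R_{\rm pot}$ to be the fiddly one.
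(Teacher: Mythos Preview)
Your proposal is correct and follows essentially the same approach as the paper's proof in Appendix~\ref{fundcoercivityapp}: reduce via Plancherel and Carter's separation to the fixed-frequency coercivity Propositions~\ref{bulkfornonsuperduper}, \ref{superrangepropfreq} (and \ref{bulkforzero} for $n=0$, which you omit), then sum and absorb errors into ${}^{\rm elp}_{\scalebox{.6}{\mbox{\tiny{\boxed{+1}}}}}\Xkminusone$, ${}^{\natural}\Xkminusone$, ${}^\chi\Xzerokminusone$ as in Section~\ref{completingtheproofsectionhere}. Two small corrections: (i) at fixed frequency the elliptic error in the paper's estimates is $\Delta\iota_{\rm elliptic}(V_0+\omega^2)|u|^2$, not $(V_0-\omega^2)^2|u|^2$; the quadratic form only emerges after writing $u=i\omega\tilde u$ or $u=im\tilde u$ (dropping to order $k-1$) and using the pointwise bound $\iota_{\rm elliptic}(m^2+\omega^2)(V_0+\omega^2)\lesssim\iota_{\rm elliptic}(V_0-\omega^2)^2$; (ii) for non-superradiant $n$ the $\chi'_{{\rm Killing},n}$ errors in $[R_{\rm pot},R_{\rm freq}]$ are absorbed not by ellipticity but by the $|u'-i\omega u|^2$ control from the $\hat z$-current (Proposition~\ref{coercoftheelements} and~\eqref{absorbthekilling}), which is why $R_{\rm freq}$ must be large depending on~$E$. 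Also, Proposition~\ref{forpseudoerrors} is not needed for the \emph{global} statement (integration is over all of $\mathcal{M}\cap\{r_+\le r\le R_{\rm freq}\}$ with no further time cutoffs); it enters only in the subsequent localised version.
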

Note in particular the presence of ${}^{\rm elp}_{\scalebox{.6}{\mbox{\tiny{\boxed{+1}}}}}\,\Xkminusone(\tau_0,\tau_1)[\psi]$ on the right hand side.
\begin{proof}
See Appendix~\ref{fundcoercivityapp}.
\end{proof}

We may now localise the above to appropriate 
subregions $\mathcal{R}(\tilde\tau_1,\tilde\tau_2)\subset
\mathcal{R}(\tau_0,\tau_1)$ using the pseudodifferential calculus:

\begin{proposition}[Localised generalised bulk coercivity on the image of $P_n$]
Under the assumptions of Theorem~\ref{globalgenbulkcoercprop}, the following
is moreover true:

For
all  $\tau_0+1\le \tilde\tau_0 \le \tilde\tau_1-1 \le\tilde\tau_1\le  \tau_1-1$  
we have
\begin{eqnarray}
\nonumber
\int_{\mathcal{R}(\tilde\tau_0,\tilde\tau_1) \cap \{r_+\le r \le R_{\rm freq} \} } K^{{\rm main}, n}_{g_{a,M}} [\psi_{n,{\bf k}}] 
%E^{\alpha\beta}_{g_{a,M},n}(x) \partial_\alpha\psi_{n,{\bf k}} \partial_\beta\psi_{n,{\bf k}}
&\ge &  c  \int_{\mathcal{R}(\tilde\tau_0,\tilde\tau_1)  \cap \{r_+ \le r \le R_{\rm fixed} \} } \chi_ n r^{-2} \left( | L\psi_{n,{\bf k}} |^2 + | \underline{L}\psi_{n,{\bf k}} |^2 
+|\slashed\nabla \psi_{n,{\bf k}}|^2   \right)       \\
\label{genbulkcoer}
&&\qquad- C\epsilon_{\rm cutoff}\, \sup_{\tau\in[\tau_0,\tau_1]} \Ezerok(\tau) [\psi] -C\, {}^{\chi\natural}_{\scalebox{.6}{\mbox{\tiny{\boxed{+1}}}}}\, \Xzerokminusone(\tau_0,\tau_1)  [\psi]
\end{eqnarray}
 where $k:=|{\bf k}|$, where the constant $c$ and the constant $C$ multiplying $\epsilon_{\rm cutoff}$ may 
be chosen independently of $\epsilon_{\rm cutoff}$.
\end{proposition}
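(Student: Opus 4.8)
The plan is to deduce the localised estimate from the global Theorem~\ref{globalgenbulkcoercprop} by applying it to a cut-off version of $\psi$ and then controlling the resulting pseudodifferential error terms. Concretely, I would fix $\tau_0+1\le \tilde\tau_0\le\tilde\tau_1-1\le\tilde\tau_1\le\tau_1-1$, recall the $\epsilon_{\rm cutoff}$-dependent cutoff $\tilde\chi_{\tilde\tau_0,\tilde\tau_1}$ introduced after~\eqref{announcingepsiloncutoff}, and set
\[
\widetilde\psi_{n,{\bf k}}:= P_n\mathring{\mathfrak{D}}^{\bf k}\tilde\chi^2_{\tilde\tau_0,\tilde\tau_1}\psi,
\]
exactly as in the proof of Proposition~\ref{asintheproofofthisprop}. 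Applying Theorem~\ref{globalgenbulkcoercprop} to $\tilde\chi_{\tilde\tau_0,\tilde\tau_1}\psi$ in place of $\psi$ (note $\tilde\chi_{\tilde\tau_0,\tilde\tau_1}\psi\in C^\infty(\mathcal{R}(\tau_0,\tau_1))$, so the theorem applies), and using~\eqref{weseeeasilythatthis} to bound the error terms ${}^{\rm elp}_{\scalebox{.6}{\mbox{\tiny{\boxed{+1}}}}}\Xkminusone$, ${}^{\natural}\Xkminusone$, ${}^\chi\Xzerokminusone$ of $\tilde\chi_{\tilde\tau_0,\tilde\tau_1}\psi$ by the corresponding quantities of $\psi$ (all absorbed into ${}^{\chi\natural}_{\scalebox{.6}{\mbox{\tiny{\boxed{+1}}}}}\Xzerokminusone(\tau_0,\tau_1)[\psi]$), I obtain
\[
\int_{\mathcal{M}\cap\{r_+\le r\le R_{\rm freq}\}} K^{{\rm main},n}_{g_{a,M}}[\widetilde\psi_{n,{\bf k}}]\ge c\int \chi_n r^{-2}\big(|L\widetilde\psi_{n,{\bf k}}|^2+|\underline L\widetilde\psi_{n,{\bf k}}|^2+|\slashed\nabla\widetilde\psi_{n,{\bf k}}|^2\big)-C\,{}^{\chi\natural}_{\scalebox{.6}{\mbox{\tiny{\boxed{+1}}}}}\Xzerokminusone(\tau_0,\tau_1)[\psi].
\]

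Next I would compare the two sides of this global inequality for $\widetilde\psi_{n,{\bf k}}$ with the desired localised integrals over $\mathcal{R}(\tilde\tau_0,\tilde\tau_1)$ for $\psi_{n,{\bf k}}$. The right-hand side bulk term: since $\tilde\chi_{\tilde\tau_0,\tilde\tau_1}\equiv 1$ on $[\tilde\tau_0,\tilde\tau_1]$ and $\tilde\chi_{\tilde\tau_0,\tilde\tau_1}\ge 0$, on the sub-slab $\mathcal{R}(\tilde\tau_0,\tilde\tau_1)$ one has $\widetilde\psi_{n,{\bf k}}-\psi_{n,{\bf k}}=P_n\mathring{\mathfrak{D}}^{\bf k}(\tilde\chi^2_{\tilde\tau_0,\tilde\tau_1}-\chi^2_{\tau_0,\tau_1})\psi$, which is a pseudodifferential operator of the type treated in Propositions~\ref{mixedsobolevboundednessprop},~\ref{smoothingprop},~\ref{commutatorestimate} acting on $\chi_{\tau_0,\tau_1}\psi$ with symbol supported in the time-overlap region of size $\epsilon_{\rm cutoff}$ near $\tilde\tau_0$ and $\tilde\tau_1$; hence
\[
\int_{\mathcal{R}(\tilde\tau_0,\tilde\tau_1)}\chi_n r^{-2}\big(|L\widetilde\psi_{n,{\bf k}}|^2+\cdots\big)\ge \tfrac12\int_{\mathcal{R}(\tilde\tau_0,\tilde\tau_1)}\chi_n r^{-2}\big(|L\psi_{n,{\bf k}}|^2+\cdots\big)-C\big(\mathring{\Ezerok}\text{-terms near }\tilde\tau_0,\tilde\tau_1\big)-C\,\Ezerokminusone(\tau_0,\tau_1),
\]
after Cauchy--Schwarz, where (as in Proposition~\ref{asintheproofofthisprop}) the top-order $\mathring{\Ezerok}$ contributions from $Q_1$-type terms come with constants independent of $\epsilon_{\rm cutoff}$ and, because they are integrated over time intervals of length $O(\epsilon_{\rm cutoff})$, are bounded by $C\epsilon_{\rm cutoff}\sup_{[\tau_0,\tau_1]}\Ezerok(\tau)[\psi]$, while the $Q_2$-type (lower order) terms go into ${}^{\chi\natural}_{\scalebox{.6}{\mbox{\tiny{\boxed{+1}}}}}\Xzerokminusone$. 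The contribution of $K^{{\rm main},n}_{g_{a,M}}[\widetilde\psi_{n,{\bf k}}]$ over $\mathcal{M}\cap\{r_+\le r\le R_{\rm freq}\}\setminus\mathcal{R}(\tilde\tau_0,\tilde\tau_1)$ must be added to the right side; there $\widetilde\psi_{n,{\bf k}}$ is itself a smoothing-operator image of $\psi$ localised near the time cutoffs, so this is again controlled by the same $\mathring{\Ezerok}$- and $\Ezerokminusone$-terms, hence by $C\epsilon_{\rm cutoff}\sup\Ezerok+C\,{}^{\chi\natural}_{\scalebox{.6}{\mbox{\tiny{\boxed{+1}}}}}\Xzerokminusone$. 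Finally, on the left-hand side I replace $\int_{\mathcal{R}(\tilde\tau_0,\tilde\tau_1)}K^{{\rm main},n}_{g_{a,M}}[\widetilde\psi_{n,{\bf k}}]$ by $\int_{\mathcal{R}(\tilde\tau_0,\tilde\tau_1)}K^{{\rm main},n}_{g_{a,M}}[\psi_{n,{\bf k}}]$ up to the same kind of errors, using that $K^{{\rm main},n}_{g_{a,M}}[\Psi]$ is a quadratic expression in the $1$-jet of $\Psi$ (so the difference is a sum of cross terms of the form $|\partial(\widetilde\psi_{n,{\bf k}}-\psi_{n,{\bf k}})|\,|\partial(\widetilde\psi_{n,{\bf k}}+\psi_{n,{\bf k}})|$ plus a tail term), handled verbatim as in Proposition~\ref{asintheproofofthisprop}.

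The main obstacle, and the point requiring the most care, is tracking the $\epsilon_{\rm cutoff}$-dependence of constants through the pseudodifferential estimates: the cutoff $\tilde\chi_{\tilde\tau_0,\tilde\tau_1}$ has $|\partial_{t^*}\tilde\chi|\lesssim\epsilon_{\rm cutoff}^{-1}$, so naive application of Proposition~\ref{mixedsobolevboundednessprop} would produce $\epsilon_{\rm cutoff}^{-1}$ losses. The resolution, exactly as in the proof of Proposition~\ref{asintheproofofthisprop} (the bound~\eqref{newboundherefortheterms}), is that the relevant symbols are supported in time-intervals of measure $O(\epsilon_{\rm cutoff})$, so the $L^2$-norms of the corresponding $Q_1$-pieces acquire a factor $\sqrt{\epsilon_{\rm cutoff}}$ from the support while losing $\epsilon_{\rm cutoff}^{-1/2}$ from one derivative hitting $\tilde\chi$, leaving a net clean factor — and, crucially, the bad top-order term on $\{r_+\le r\le R_{\rm freq}\}$ involves $\mathring{\Ezerok}$ integrated over an $O(\epsilon_{\rm cutoff})$ time interval, yielding $C\epsilon_{\rm cutoff}\sup_{[\tau_0,\tau_1]}\Ezerok(\tau)[\psi]$ with $C$ independent of $\epsilon_{\rm cutoff}$ (using the convention of Section~\ref{noteonconstants} that constants multiplying small parameters may be chosen independently of them). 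All genuinely lower-order ($Q_2$- and commutator-) errors, and the global error terms from Theorem~\ref{globalgenbulkcoercprop}, are absorbed into $C\,{}^{\chi\natural}_{\scalebox{.6}{\mbox{\tiny{\boxed{+1}}}}}\Xzerokminusone(\tau_0,\tau_1)[\psi]$ via~\eqref{supontheright} and~\eqref{therelationweknow}. This yields precisely~\eqref{genbulkcoer}.
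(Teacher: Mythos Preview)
Your overall strategy is exactly the paper's: apply the global Theorem~\ref{globalgenbulkcoercprop} to the cut-off $\widetilde\psi_{n,{\bf k}}$, then estimate the two difference terms (the coercive bulk integral and the $K$-integral) by pseudodifferential commutator bounds. However, two of the steps you treat as routine are in fact the heart of the matter and do \emph{not} follow ``verbatim as in Proposition~\ref{asintheproofofthisprop}'':

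\textbf{(1) The $K$-difference requires the pointwise degeneration structure of $K^{{\rm main},n}$.} You assert only that $K^{{\rm main},n}_{g_{a,M}}[\Psi]$ is quadratic in the $1$-jet of $\Psi$. That alone gives cross terms $|\partial(\widetilde\psi_{n,{\bf k}}-\psi_{n,{\bf k}})|\cdot|\partial(\widetilde\psi_{n,{\bf k}}+\psi_{n,{\bf k}})|$ with \emph{no} $\chi_n$ weight, so after Cauchy--Schwarz the ``$+$'' factor cannot be absorbed into the $\chi_n r^{-2}$-weighted coercive term on the right. The paper instead invokes Lemma~\ref{generalboundednesspr}, which gives the pointwise bound $|K^{{\rm main},n}[\Psi]|\lesssim (\chi_n)^{1/2}r^{-1}(|L\Psi|^2+|\underline L\Psi|^2+|\slashed\nabla\Psi|^2)+\tilde\chi_n r^{-3}|\Psi|^2$; in particular $K^{{\rm main},n}$ vanishes identically on $\mathcal{D}_n$. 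This localisation is what makes the first-order cross terms absorbable and what allows the zeroth-order terms (which are of order $k$ in $\psi$!) to be bounded via Proposition~\ref{forpseudoerrors} by ${}^\natural\Xkminusone+{}^\chi\Xzerokminusone$. Proposition~\ref{asintheproofofthisprop} faces no such issue because its boundary flux on $\mathcal S$ is controlled directly by a non-degenerate $\Ezerokminusone_{\mathcal{S}}$.

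\textbf{(2) The ``lower-order'' commutator terms are not lower order enough to be handled naively.} The commutator $[\tilde\chi_{\tilde\tau_0,\tilde\tau_1},P_n\mathring{\mathfrak{D}}^{\bf k}]$ is of order $k-1$, so its $H^{0,1}$-norm is a spacetime integral of $k$ derivatives of $\psi$. This is \emph{not} bounded by ${}^\chi\Xzerokminusone$ alone (whose bulk degenerates in $\{\chi=0\}$), nor does the argument of Proposition~\ref{asintheproofofthisprop} apply (there the analogous error is absorbed into the non-degenerate $\Ezerokminusone_{\mathcal{S}}$). The paper resolves this by using Proposition~\ref{finalpseudostatement} to split the commutator into a piece of order $k-1$ with frequency support in $\mathcal{F}_n$ plus a piece of order $k-2$, and then applies Proposition~\ref{forpseudoerrors}---which uses both the $\chi_n$-weight (integration over $\{r\le R_{\rm freq}\}\setminus\mathcal{D}_n$) and the $\mathcal{F}_n$-support to bound the first piece by ${}^\natural\Xkminusone+{}^\chi\Xzerokminusone$. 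Without this mechanism you cannot justify placing these terms in ${}^{\chi\natural}_{\scalebox{.6}{\mbox{\tiny{\boxed{+1}}}}}\Xzerokminusone$.
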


\begin{remark}
Note that the statement of the proposition is indeed compatible with  the conventions on the dependence of generic constants on $\epsilon_{\rm cutoff}$ 
from Section~\ref{noteonconstants}. 
In contrast, the constant multiplying
${}^{\chi\natural}_{\scalebox{.6}{\mbox{\tiny{\boxed{+1}}}}}\, \Xzerokminusone(\tau_0,\tau_1)[\psi]$ will depend unfavourably on $\epsilon_{\rm cutoff}$.
Again we remark that the first term on line~\eqref{genbulkcoer} will be a top order error term 
in our later argument, and we must  thus exploit the smallness of $\epsilon_{\rm cutoff}$. 
See already the proof of Proposition~\ref{toporderestimateprop}.
\end{remark}

\begin{proof}
In view of~\eqref{globalstatementfirst} shown in
Theorem~\ref{globalgenbulkcoercprop}, given $\tilde\tau_1\le \tilde\tau_2-2$,
we define 
again
$\widetilde\psi_{n,{\bf k}}$ by~\eqref{psiwithatildehere} with the 
$\epsilon_{\rm cutoff}$-dependent cutoff $\tilde\chi_{\tilde\tau_0,\tilde\tau_1}$, 
and we have that~\eqref{globalstatementfirst} holds replacing
$\psi_{n,{\bf k}}$ with $\widetilde\psi_{n,{\bf k}}$ and $\tau_0$, $\tau_1$ with 
$\tilde\tau_0$, $\tilde\tau_1$, and where the constant $c$ (but not the constants
$C$!) may be chosen independently of $\epsilon_{\rm cutoff}$.

Now note that by~\eqref{weseeeasilythatthis}, 
we have
\[
 {}^{\rm elp}_{\scalebox{.6}{\mbox{\tiny{\boxed{+1}}}}}\, \Xkminusone(\tilde\tau_0,\tilde\tau_1)+  {}^{\natural}\Xkminusone(\tilde\tau_0,\tilde\tau_1)
 +
 {}^{\chi}\Xzerokminusone(\tilde\tau_0,\tilde\tau_1)\lesssim 
{}^{\chi\natural}_{\scalebox{.6}{\mbox{\tiny{\boxed{+1}}}}}\, \Xzerokminusone(\tau_0,\tau_1),
\]
where we remark however that the implicit constant has unfavourable dependence on $\epsilon_{\rm cutoff}$.

To prove~\eqref{genbulkcoer} 
from~\eqref{globalstatementfirst} applied to $\widetilde\psi_{n,{\bf k}}$, we must estimate the
following terms:
\begin{equation}
\label{termonetoestimate}
 \int_{\mathcal{M}\cap \{r_+ \le r \le R_{\rm fixed}\} }  
\chi_n r^{-2}  \left| \, \iota_{\mathcal{R}(\tilde\tau_0,\tilde\tau_1)}( |L\psi_{n,{\bf k}}|^2+|\underline{L}\psi_{n, {\bf k}}|^2
+ |\nablaslash\psi_{n,{\bf k}}|^2)
- (|L\widetilde{\psi}_{n,{\bf k}}|^2
+ |\underline{L}\widetilde{\psi}_{n, {\bf k}}|^2
+|\nablaslash\widetilde{\psi}_{n,{\bf k}}|^2) \,
\right|    \, ,
\end{equation}
\begin{equation}
\label{termtwotoestimate}
\int_{\mathcal{M}\cap \{r_+ \le r \le R_{\rm fixed}\} }
 \left| \,  \iota_{\mathcal{R}(\tilde\tau_0,\tilde\tau_1)} K^{{\rm main}, n}_{g_{a,M}} [\psi_{n,{\bf k}}] 
-  K^{{\rm main}, n}_{g_{a,M}} [\widetilde{\psi}_{n,{\bf k}}] \, \right|   .
\end{equation}

We estimate first~\eqref{termonetoestimate}.
We may  clearly estimate this by the sum of 
\begin{align}
\nonumber
& \int_{\mathcal{R}(\tilde \tau_0,\tilde \tau_1)\cap \{r_+ \le r \le R_{\rm fixed}\} }  
\chi_n  r^{-2} \left( \,  |L\psi_{n,{\bf k}}-L\widetilde{\psi}_{n, {\bf k}} | \,
|L\psi_{n,{\bf k}}+{L}\widetilde \psi_{n, {\bf k}}|
+  |\underline L\psi_{n,{\bf k}}-\underline L\widetilde{\psi}_{n, {\bf k}} | \,
|\underline L\psi_{n,{\bf k}}+{\underline L}\widetilde \psi_{n, {\bf k}}| 
\right.
\\
&
\label{differencesusualstory}
\qquad\qquad\qquad\qquad\left.
+
|\nablaslash\psi_{n,{\bf k}}-\nablaslash\widetilde{\psi}_{n, {\bf k}} | \, |
\nablaslash\psi_{n,{\bf k}}+\nablaslash \psi_{n, {\bf k}} |
 \,
\right )  \, 
\end{align}
and
\begin{equation}
\label{bythesumofthisalso}
 \int_{(\mathcal{M}\setminus\mathcal{R}(\tilde\tau_0,\tilde\tau_1))\cap \{r_+ \le r \le R_{\rm fixed}\} }  
\chi_n r^{-2}  (|L\widetilde{\psi}_{n,{\bf k}}|^2
+ |\underline{L}\widetilde{\psi}_{n, {\bf k}}|^2
+|\nablaslash\widetilde{\psi}_{n,{\bf k}}|^2) .
\end{equation}

To estimate~\eqref{differencesusualstory}, we will apply Cauchy--Schwarz to the products putting a small constant on the second
factors. The quadratic term arising from the second factors can be absorbed then by the main terms we are comparing,
as in the proof of Proposition~\ref{asintheproofofthisprop}.
Thus, to estimate~\eqref{differencesusualstory}, it suffices to bound the quadratic term arising
from the first factors, namely:
\begin{equation}
\label{againareduction}
 \int_{\mathcal{R}(\tilde \tau_0,\tilde \tau_1)\cap \{r_+ \le r \le R_{\rm fixed}\} }  
\chi_n r^{-2} \left( \,  |L\psi_{n,{\bf k}}-L\widetilde{\psi}_{n, {\bf k}} |^2 
+  |\underline L\psi_{n,{\bf k}}-\underline L\widetilde{\psi}_{n, {\bf k}} |^2 
+
|\nablaslash\psi_{n,{\bf k}}-\nablaslash\widetilde{\psi}_{n, {\bf k}} |^2
\right )  \, .
\end{equation}

We may bound~\eqref{againareduction} by 
\[
\eqref{againareduction} \lesssim
\| Q \chi_{\tau_0,\tau_1}\psi \|^2_{H^{0,1}(\mathcal{M}\cap \{r_+ \le r \le R_{\rm freq}\}  \setminus \mathcal{D}_n)}, 
\]
where
\begin{eqnarray*}
Q&=& \tilde\chi_{\tilde\tau_0,\tilde\tau_1}( P_n\mathring{\mathfrak{D}}^{\bf k} \chi_{\tau_0,\tau_1}  - P_n\mathring{\mathfrak{D}}^{\bf k} \tilde\chi^2_{\tilde\tau_0,\tilde\tau_1})\\
&=& \tilde\chi_{\tilde\tau_0,\tilde\tau_1} P_n
\mathring{\mathfrak{D}}^{\bf k}(\chi_{\tau_0,\tau_1}-\tilde\chi^2_{\tilde\tau_0,\tilde\tau_1})
\\
&=& P_n \mathring{\mathfrak{D}}^{\bf k}  \tilde{\chi}_{\tilde\tau_0,\tilde\tau_1} (\chi_{\tau_0,\tau_1}-\tilde\chi^2_{\tilde\tau_0,\tilde\tau_1})
+  [  \tilde\chi_{\tilde\tau_0,\tilde\tau_1} ,P_n
\mathring{\mathfrak{D}}^{\bf k}]
(\chi_{\tau_0,\tau_1}-\tilde\chi^2_{\tilde\tau_0,\tilde\tau_1})
\\
&=&Q_1+Q_2 (\chi_{\tau_0,\tau_1}-\chi^2_{\tilde\tau_0,\tilde\tau_1}),
\end{eqnarray*}
where $Q_1$ is a pseudodifferential operator of order $k$  and $Q_2$ is a
pseudodifferential operator of order $k-1$.

Examining the precise form of $Q_1$, we see from Proposition~\ref{mixedsobolevboundednessprop} that
\[
\| Q_1 \chi_{\tau_0,\tau_1}\psi\|^2_{H^{0,1}(\mathcal{M}\cap \{r_+\le r\le R_{\rm freq}  \} \setminus \mathcal{D}_n)}
\lesssim \epsilon_{\rm cutoff} \sup_{\tau_0\le \tau \le \tau_1} \Ezerok(\tau)  +  {}^{\natural}\Xkminusone[\psi],
\]
where the implicit constant for the first term (but not the second!)~may be chosen
independent of $\epsilon_{\rm cutoff}$.
On the other hand,  using  Proposition~\ref{finalpseudostatement} we may rewrite 
\[
Q_2=Q_3+Q_4
\] 
where $Q_3$ is of order $k-1$ 
supported in frequency space entirely in $\mathcal{F}_n$,
while $Q_4$ is of order $k-2$. It follows from Proposition~\ref{forpseudoerrors} that
\[
\| Q_3 \chi_{\tau_0,\tau_1} \psi\|^2_{H^{0,1}(\mathcal{M}\cap \{r_+\le r\le R_{\rm freq}  \}  \setminus \mathcal{D}_n)}
\lesssim   {}^{\natural}\Xkminusone[\psi] + {}^\chi\Xzerokminusone[\psi] 
\]
while
\[
\| Q_4 \chi_{\tau_0,\tau_1}\psi \|^2_{H^{0,1}(\mathcal{M}\cap \{r_+\le r\le R_{\rm freq}  \})}
 \lesssim   {}^\chi\Xzerokminusone[\psi]  .
\]
(Here all constants have unfavourable dependence on $\epsilon_{\rm cutoff}$.)

We may estimate thus~\eqref{againareduction} by
\begin{equation}
\label{veoautopouektimoumetelika}
\eqref{againareduction} \lesssim
 \epsilon_{\rm cutoff} \sup_{\tau_0\le \tau \le \tau_1} \Ezerok(\tau)
 +   {}^{\natural}\Xkminusone[\psi] + {}^\chi\Xzerokminusone[\psi] ,
 \end{equation}
 where the implicit constant multiplying $ \epsilon_{\rm cutoff} \sup_{\tau_0\le \tau \le \tau_1} \Ezerok(\tau)$ may be taken independent of $\epsilon_{\rm cutoff}$.
 By our above remarks, this allows us to estimate~\eqref{differencesusualstory} by~\eqref{veoautopouektimoumetelika}
 and terms that we may absorb into the bulk integral on the right hand side of~\eqref{genbulkcoer}. 

To estimate~\eqref{bythesumofthisalso} on the other hand, 
we remark that this may be estimated 
\[
\eqref{bythesumofthisalso}\lesssim
\| Q \chi_{\tilde\tau_0,\tilde\tau_1}\psi \|^2_{H^{0,1}(\mathcal{M}\cap \{r_+ \le r \le R_{\rm freq}\}  \setminus \mathcal{D}_n)}, 
\]
where
\[
Q=(1-\tilde\chi_{\tilde\tau_0+\epsilon_{\rm cutoff},\tilde\tau_1-\epsilon_{\rm cutoff}}) P_n \mathring{\mathfrak{D}}^{\bf k} \tilde\chi_{\tilde\tau_0,\tilde\tau_1},
\]
and this may again be estimated by the right hand side of~\eqref{veoautopouektimoumetelika} 
by decomposing $Q$  appropriately as above.

Since we have estimated both~\eqref{againareduction} and~\eqref{bythesumofthisalso}, 
we have thus indeed estimated~\eqref{termonetoestimate} in the desired manner.

To obtain the proposition, it remains then to estimate~\eqref{termtwotoestimate}.
We argue similarly.
Let us first note the following general bound, which uses some more information
about our current:
\begin{lemma}[General boundedness properties of $K^{{\rm main}, n}_{g_{a,M}}$]
\label{generalboundednesspr}
For a general function $\psi$, we have 
\begin{equation}
\label{toboundinfactindividual}
\left|\, K^{{\rm main}, n}_{g_{a,M}} [\psi]\,\right| \lesssim (\chi_n)^{\frac12} r^{-1} \left( |L\psi|^2+|\underline{L}\psi|^2 
+|\nablaslash\psi|^2 \right)  + \tilde\chi_n r^{-3}  |\psi|^2
\end{equation}
where $\chi_n$ and $\tilde\chi_n$ are as defined 
above. (In particular the right hand side of~\eqref{toboundinfactindividual} 
vanishes identically in $\mathcal{D}_n$.)
In fact the stronger statement holds that all terms on the right hand side
of~\eqref{generalKdef} can be individually bounded 
in norm by the right hand side of~\eqref{toboundinfactindividual}.
\end{lemma}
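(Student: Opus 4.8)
The claim is a purely algebraic/pointwise estimate on the bulk current $K^{{\rm main},n}_{g_{a,M}}[\psi]$, so no frequency analysis is needed here; one only has to unwind the definition of the current as a sum of twisted currents and bound each resulting term in $\Lambda^3\mathcal{M}\to\Lambda^1\mathcal{M}$ form by the asserted right-hand side. The plan is to treat each component of~\eqref{willtakethefollowingformnonsuperbefore} (respectively~\eqref{willtakethefollowingformsuperbefore}) separately. First I would recall from Appendix~\ref{translationsection} the explicit form of each twisted bulk current $\tilde{K}^X$, $\tilde{K}^{V,w,q}$ in terms of the twisted energy--momentum tensor $\tilde{T}$, the twisted deformation tensor $\tilde\pi^X$, and the zeroth-order functions $w,q$: schematically $\tilde{K}^{V,w,q}$ is a sum of terms of the shape $\tilde\pi^V\cdot\tilde{T}$, $(\partial w)\,\mathrm{Re}(\psi\overline{\partial\psi})$, $w\,\partial\psi\overline{\partial\psi}$, $(\partial q)|\psi|^2$ and $\mathrm{Re}(\psi\,q\cdot\partial\psi)$. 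The point is that every such term is quadratic in the $1$-jet of $\psi$ with coefficients that are smooth functions of $r$ (and $\theta$), uniformly bounded on $\{r_+\le r\le R_{\rm freq}\}$, and — crucially — supported where the relevant multiplier functions $y_n$, $f_n$ are nonzero.

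\textbf{Key steps.} Step 1: the $y_n$ and $f_n$ currents. By the support properties recorded in Section~\ref{introducingthecurrents} (namely $y_n$, $f_n$ vanish identically in $\mathcal{D}_n$ in the non-superradiant case, and $\chi_n,\tilde\chi_n=1$ everywhere in the generalised superradiant case), $\tilde{K}^{y_n}$ and $\tilde{K}^{f_n}$ are supported in $\{\tilde\chi_n=1\}$. The first-derivative-quadratic pieces (those built from $\tilde\pi^{y_n}\cdot\tilde T$ and from $w\partial\psi\overline{\partial\psi}$) are bounded by $C|\chi_n'|$ or $C|y_n'|$ times $(|L\psi|^2+|\underline L\psi|^2+|\nablaslash\psi|^2)$, and since the defining property $|\chi_n'|\lesssim\sqrt{\chi_n}$ (and the analogous construction of $y_n$, $f_n$ in terms of $\chi_n$) gives $|y_n'|,|f_n|\lesssim\sqrt{\chi_n}$, these are controlled by $(\chi_n)^{1/2}r^{-1}(|L\psi|^2+|\underline L\psi|^2+|\nablaslash\psi|^2)$, absorbing spare powers of $r$ (which here only help since $r$ is bounded). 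The $|\psi|^2$-pieces (from $\nabla^\mu q_\mu$) are bounded by $C\tilde\chi_n r^{-3}|\psi|^2$. Step 2: the fixed currents $\tilde{K}^{f_{\rm fixed}}$, $\tilde{K}^{\hat z}$ are $n$-independent and supported in $r\ge R_{\rm pot}$, while $e_{\rm red}\tilde{K}^z_{\rm red}$ is supported in $r\le r_2$; in both of these support sets we have $\tilde\chi_n=1$ (for the far region this is by $\chi_n=1$ for $r\ge r'_{n,2}$, and $R_{\rm pot}>r'_{n,2}$; for the near region by $r_2<r_{\rm pot}<r'_{n,1}$), so the same pointwise bounds apply with the right-hand side having $(\chi_n)^{1/2}=\tilde\chi_n=1$ there. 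Step 3: the Killing currents $E(\tilde{K}^T+\ldots)$. The deformation tensors of $T$ and $\Omega_1$ vanish (they are Killing), so $\tilde{K}^T$ and $\tilde{K}^{\Omega_1}$ reduce to the contribution of the twisting function only, which is again a smooth bounded function times the quadratic form in the $1$-jet; where the cutoff $\chi_{{\rm Killing},n}$ is nonconstant it is supported (by construction, Section~\ref{introducingthecurrents}) inside $\{\tilde\chi_n=1\}$, so these terms too are dominated by the stated right-hand side. Assembling Steps 1--3 and using that the number of terms is bounded by a constant depending only on $a,M,n$ yields both~\eqref{toboundinfactindividual} and the ``in fact'' claim that each individual term of~\eqref{generalKdef} is so bounded.

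\textbf{Main obstacle.} The only non-routine point is bookkeeping: one must verify that the derivative of \emph{every} multiplier function entering the current obeys a bound of the form $|(\cdot)'|\lesssim\sqrt{\chi_n}$ (so that no spurious $\chi_n^{1/2}$ with the wrong power, or a bare $|\chi_n'|\sim\chi_n^{1/2}$ without the matching $\chi_n^{1/2}$ from a second factor, slips in), and that the supports of the $f_{\rm fixed}$, $\hat z$, $z_{\rm red}$ and $\chi_{{\rm Killing},n}'$ pieces all lie in $\{\tilde\chi_n=1\}$. This is where I would have to quote the precise definitions of $y_n$, $f_n$, $f_{\rm fixed}$, $\hat z$, $z_{\rm red}$ and the cutoffs from Section~\ref{fundcoercivityapp} of the appendix — the lemma is essentially a consequence of those support and derivative properties being built in by design, so the ``proof'' amounts to citing Appendix~\ref{fundcoercivityapp} and checking that each term of~\eqref{generalKdef} falls under one of the three cases above. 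I expect the appendix-internal version of this lemma to simply read ``immediate from the definitions of Section~\ref{firstchoiceoffetc}''.
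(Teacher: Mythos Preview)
Your proposal is correct and takes essentially the same approach as the paper's own proof, which simply states that the bound is ``clear from the form of the definition of the currents'' in the appendix together with the definitions of $\chi_n$ and $\tilde\chi_n$, and then notes (as you do) that the first-order, zeroth-order, and mixed quadratic pieces are each individually controlled. Your write-up is a more detailed unpacking of exactly this: checking component-by-component that every multiplier function ($y_n,f_n,f_{\rm fixed},y_{\rm fixed},\hat z,z,\chi_{{\rm Killing},n}$) has its support contained in $\{\tilde\chi_n=1\}$ and its derivative controlled by $\sqrt{\chi_n}$, which is precisely what the appendix constructions (\eqref{ypropertiesone}--\eqref{alsoboundheremakeexplicit}, \eqref{fndefnonsuper}, \eqref{fdefstuff}--\eqref{zdefstuff}, \eqref{chikillingconstraint}, \eqref{definekilling}) are designed to guarantee.
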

\begin{proof}[Proof of Lemma~\ref{generalboundednesspr}]
This is clear from the form of the definition of the currents~\eqref{willtakethefollowingformnonsuperbefore}--\eqref{willtakethefollowingformsuperbefore} in  Section~\ref{fundcoercivityapp} and the definition of $\chi_n$ and $\tilde\chi_n$. Note that when expanding these currents
in the form~\eqref{generalKdef} there will be pure first order terms, mixed terms and
zeroth order terms. The first order terms can be bounded by the first term on the
right hand side of~\eqref{toboundinfactindividual} 
while the zeroth order terms can be bounded by the second term.
The mixed terms can be bounded by the sum of the two terms using 
Cauchy--Schwarz (in fact with $\chi_n$
replacing $(\chi_n)^{\frac12}$).
\end{proof}

Note that $K^{{\rm main}, n}_{g_{a,M}}[\psi]$ is a sum of quadratic terms, including
purely first order terms, mixed terms and zeroth order terms and the above
lemma bounds all of these terms  individually in norm.

The contribution of all zeroth order terms in~\eqref{termtwotoestimate}
can be bounded by sums of terms of the form
\begin{equation}
\label{againbybounded}
 \int_{\mathcal{M}\cap \{r_+ \le r \le R_{\rm fixed}\} }   \tilde\chi_n \left(|LQ\chi_{\tau_0,\tau_1} \psi|^2 + |\underline LQ\chi_{\tau_0,\tau_1}\psi|^2 + |\nablaslash Q\chi_{\tau_0,\tau_1}\psi|^2\right)
\end{equation}
where $Q$ is a pseudodifferential operator of order $k-1$ whose  symbol has
frequency support contained in the support of  $P_n$. 
Thus, Proposition~\ref{forpseudoerrors} applies to bound this term by
${}^{\natural}\Xkminusone(\tau_0,\tau_1)$ which is of course 
in turn estimated by~${}^{\chi\natural}_{\scalebox{.6}{\mbox{\tiny{\boxed{+1}}}}}\,\Xpkminusone(\tau_0,\tau_1)$ appearing on the right
hand side of~\eqref{genbulkcoer}.

The contribution of all mixed terms in~\eqref{termtwotoestimate}
can be handled by applying Cauchy--Schwarz, throwing a small constant
on the resulting first order terms and a large constant on the zeroth order terms,
and as in the proof of Lemma~\ref{generalboundednesspr}, noting that one can replace $(\chi_n)^{\frac12}$
with $\chi_n$. 
The zeroth order terms are bounded again by~\eqref{againbybounded}, while the first order terms,
for a sufficiently small constant, in view of the $\chi_n$ can be absorbed in
the main term being estimated.

Thus, it suffices to estimate the difference of purely first order quadratic terms
contained in~\eqref{termtwotoestimate}.  
Using Lemma~\ref{generalboundednesspr}, we may bound
the difference of the purely first order quadratic terms by the sum of
\eqref{differencesusualstory} and~\eqref{bythesumofthisalso},
where $(\chi_n)^{\frac12}$ is replaced by $\chi_n$.
The arguments above still apply  then 
and  these terms are again estimated by~\eqref{veoautopouektimoumetelika}
and a term which can be absorbed in the bulk integral on the right hand side of~\eqref{genbulkcoer}.
This yields the desired estimate on~\eqref{termtwotoestimate}, and thus completes the proof. 
\end{proof}

\subsubsection{Bulk coercivity properties of $K_{g_{a,M}}^{{\rm main},n}$  in the regions $r_0\le r\le r_+$ and
$r\ge R_{\rm freq}$}
\label{bulkcoerfar}

Concerning the region $r_0\le r\le r_+$ within the Kerr event horizon, we may essentially infer bulk coercivity
softly from the statements already shown, after restriction of $r_0$:

\begin{proposition}[Bulk coercivity of $K^{{\rm main},n}_{g_{a,M}}$ in $r_0\le r\le r_+$]
\label{notbcgsc}
Under the assumptions of Theorem~\ref{globalgenbulkcoercprop}, 
after choosing $r_0$ sufficiently close to $r_+$, then
for a general function $\Psi$ and all $0\le n\le N$, we have in the region $r_0\le r\le r_+$
the coercivity statement
\[
K^{{\rm main},n}_{g_{a,M}}  [\Psi ] \ge c(|L\Psi|^2 +|\underline L \Psi|^2 +|\nablaslash \Psi|^2)
- C |\Psi|^2.
\]
\end{proposition}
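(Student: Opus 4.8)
The key observation is that the statement for $r_0 \le r \le r_+$ follows from the boundary and bulk coercivity properties already established at $r = r_+$, together with a compactness/continuity argument that lets us ``import'' those properties to a one-sided neighbourhood $r_0 \le r \le r_+$ provided $r_0$ is chosen close enough. First I would recall that, away from $\mathcal{S}$, the current $K^{{\rm main},n}_{g_{a,M}}$ is built from finitely many component twisted currents $\tilde{K}^{y_n}$, $\tilde{K}^{f_n}$, $\tilde{K}^{z}_{\rm red}$, $\tilde{K}^T$, $\tilde{K}^{\Omega_1}$ (and $\tilde{K}^Z$), whose defining functions are all smooth on a neighbourhood of $r = r_+$ and translation invariant in $T$, $\Omega_1$. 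In particular, $K^{{\rm main},n}_{g_{a,M}}[\Psi]$ is a quadratic form in $(\Psi, L\Psi, \underline{L}\Psi, \slashed\nabla\Psi)$ whose coefficients are smooth functions of the spatial variables only (by translation invariance), and these coefficients extend smoothly across $r = r_+$ down to the fixed smooth extension chosen in $r_0 \le r < r_+$. The plan is to show that this quadratic form is coercive (modulo a zeroth order error) at $r = r_+$ itself, and then propagate this to $r_0 \le r \le r_+$ by continuity.

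The main input is the redshift mechanism. At $r = r_+$, the dominant contribution to $K^{{\rm main},n}_{g_{a,M}}$ near the horizon comes from $e_{\rm red}\tilde{K}^z_{\rm red}$, the twisted bulk current of the vector field $X$ from~\eqref{reddefinitionsecretlyX}, which is designed precisely so that $\tilde{K}^z_{\rm red}$ is pointwise coercive (modulo lower order) in a neighbourhood of $\mathcal{H}^+$ --- this is the standard redshift positivity as in~\cite{mihalisnotes}, and it is recorded in the remarks preceding Theorem~\ref{globalgenbulkcoercprop} that ``adding the redshift current also ensures non-degeneracy of bulk terms near $\mathcal{H}^+$''. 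The remaining components: $\tilde{K}^{y_n}$, $\tilde{K}^{f_n}$ vanish identically in $\mathcal{D}_n$ in the non-superradiant case and more generally are bounded by $(\chi_n)^{1/2}r^{-1}(|L\Psi|^2+|\underline L\Psi|^2+|\slashed\nabla\Psi|^2) + \tilde\chi_n r^{-3}|\Psi|^2$ via Lemma~\ref{generalboundednesspr} (where near $r_+$ we may take $\chi_n = \tilde\chi_n = 1$), while $\tilde{K}^{f_{\rm fixed}}$, $\tilde{K}^{\hat z}$ are supported only in $r \ge R_{\rm pot}$ and hence vanish near $r_+$; the terms $E\tilde{K}^T$, $E\chi_{{\rm Killing},n}\alpha_n\tilde{K}^{\Omega_1}$ (resp.\ the $\tilde K^Z$ piece) are also controlled: near $r_+$ one has $\chi_{{\rm Killing},n} = 1$, and $\tilde{K}^{T+\alpha_n\Omega_1}$ is bounded in norm by a multiple of $|L\Psi|^2+|\underline L\Psi|^2+|\slashed\nabla\Psi|^2 + |\Psi|^2$ with fixed constants. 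Therefore, choosing $e_{\rm red} > 0$ (possibly after re-examining the constraints from Theorem~\ref{globalgenbulkcoercprop}; note $e_{\rm red}$ is only required \emph{small} there, so shrinking it further is harmless) so that the redshift positivity $e_{\rm red}\tilde{K}^z_{\rm red}[\Psi] \ge c_0(|L\Psi|^2+|\underline L\Psi|^2+|\slashed\nabla\Psi|^2) - C|\Psi|^2$ at $r = r_+$ dominates the $(\chi_n)^{1/2}r^{-1}(\cdots)$ and $E\tilde K^{T+\alpha_n\Omega_1}$ contributions --- here we use that the redshift coefficient can be made to beat any fixed constant by taking the multiple of $X$ large relative to the others, exactly as in~\cite{mihalisnotes} --- we obtain at $r = r_+$ the pointwise bound
\[
K^{{\rm main},n}_{g_{a,M}}[\Psi] \ge 2c\,(|L\Psi|^2+|\underline L\Psi|^2+|\slashed\nabla\Psi|^2) - C|\Psi|^2.
\]

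Finally I would propagate this to $r_0 \le r \le r_+$. Since $K^{{\rm main},n}_{g_{a,M}}[\Psi]$ is, at each spacetime point, a quadratic form in the components of $\Psi$ and its first derivatives with coefficients depending continuously on $r$ (and smoothly on $\theta, \phi^*$, but translation-invariantly in $t^*$), and since the coercivity above is a pointwise statement about this quadratic form on the compact sphere $\{r = r_+\}$, a standard compactness argument (diagonalising the quadratic form, or using that the set of positive-definite-modulo-$|\Psi|^2$ quadratic forms is open) shows the estimate persists with constant $c$ on an open neighbourhood $\{r_+ - \epsilon < r \le r_+\}$ of the horizon sphere. We then fix $r_0$ with $r_+ - r_0 < \epsilon$, which is consistent since Section~\ref{noteonconstants} allows $r_0$ to be chosen late and close to $r_+$; for $r \le r_0$ we are already inside the region $\{r_0 \le r \le r_+\}$ covered by this neighbourhood (indeed $r_0$ is its lower endpoint), so the bound holds throughout. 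Uniformity in $n$ is automatic because there are finitely many $n$, the component currents depend on $n$ only through $\chi_{{\rm Killing},n}$, $\alpha_n$, $y_n$, $f_n$, all of which are $n$-independent near $r_+$ (where $\chi_{{\rm Killing},n} = 1$, and either $\alpha_n = a/2Mr_+$ in the generalised superradiant case or $\alpha_n$ ranges over a bounded set by~\eqref{alphanbound} in the non-superradiant case), so a single choice of $c$, $C$, $\epsilon$, $r_0$ works. The main obstacle is purely bookkeeping: verifying that strengthening the redshift smallness parameter $e_{\rm red}$ and choosing $r_0$ close to $r_+$ does not conflict with any of the earlier constraints imposed on these parameters in Theorem~\ref{globalgenbulkcoercprop} and Propositions~\ref{bcgsc}, \ref{globalboundarySpositivity} --- but since in all those statements $e_{\rm red}$ is required only to be \emph{sufficiently small} and $r_0$ \emph{sufficiently close to} $r_+$, the constraints are mutually compatible and can be satisfied in sequence, which is precisely the order-of-choice discipline advertised in Section~\ref{noteonconstants}.
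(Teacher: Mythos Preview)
Your overall strategy --- establish pointwise coercivity of the quadratic form $K^{{\rm main},n}_{g_{a,M}}[\Psi]$ at (or just above) $r=r_+$, then propagate to $r_0\le r\le r_+$ by continuity of the smooth coefficients --- is exactly what the paper does, and your handling of the continuity step is fine (though note $y_n$, $f_n$ are \emph{not} $n$-independent near $r_+$; uniformity in $n$ comes simply from finiteness of the index set).

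There is, however, a genuine gap in your argument for coercivity at $r=r_+$. You bound the non-redshift components $\tilde K^{y_n}+\tilde K^{f_n}$ in absolute value by $C(|L\Psi|^2+|\underline L\Psi|^2+|\slashed\nabla\Psi|^2)+C|\Psi|^2$ via Lemma~\ref{generalboundednesspr} and then claim the redshift contribution $e_{\rm red}\tilde K^z_{\rm red}$ ``dominates'' these, invoking the freedom to ``take the multiple of $X$ large''. But $e_{\rm red}$ is constrained to be \emph{small} by Theorem~\ref{globalgenbulkcoercprop} (as you yourself note), and the physical-space redshift bulk $\tilde K^z_{\rm red}[\Psi]$ is regular --- not divergent --- at $r=r_+$ (the vector field $X$ extends smoothly across $\mathcal{H}^+$). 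Hence $e_{\rm red}\tilde K^z_{\rm red}$ contributes coercivity only of size $O(e_{\rm red})$, which cannot absorb the $O(1)$ first-order terms coming from an absolute-value bound on $\tilde K^{y_n}+\tilde K^{f_n}$.

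The correct mechanism, which is what the paper's reference to ``the analogous statement in $r_+\le r\le r_{\rm close}$'' in Appendix~\ref{fundcoercivityapp} is pointing to, is that the non-redshift part is \emph{already} nonnegative modulo zeroth order near the horizon --- indeed coercive in the tangential directions but degenerate in the transversal $\partial_r$-direction. This is precisely the $e_{\rm red}=0$ content of the ``Moreover'' clauses of Propositions~\ref{bulkfornonsuperduper}, \ref{superrangepropfreq}, \ref{bulkforzero}: in $r_+\le r\le r_{\rm close}$ the fixed-frequency estimate holds without the elliptic error even before the redshift is added. (Note also that the Killing pieces contribute exactly zero to the bulk near $r_+$, since $\chi_{{\rm Killing},n}\equiv 1$ there and $T$, $\Omega_1$ are Killing, whence $\mathfrak{P}^T=\mathfrak{P}^{\Omega_1}=0$; they are not merely ``bounded''.) The small redshift then simply \emph{adds} the missing transversal coercivity without needing to absorb any competing first-order terms, so the smallness of $e_{\rm red}$ causes no difficulty. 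With this corrected input your continuity argument to $r_0\le r\le r_+$ goes through unchanged.
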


\begin{proof}
After choosing $r_0$ sufficiently close to $r_+$, 
this follows softly from the analogous statement in $r_+\le r \le r_{\rm close}$ shown in the
proof of Theorem~\ref{globalgenbulkcoercprop} in Appendix~\ref{fundcoercivityapp}, independently
of the precise extension of the currents to $r<r_+$.
\end{proof}

Recall that  $K^{{\rm main},n}_{g_{a,M}}$ is independent of $n$
in the region $r\ge R_{\rm freq}$, and we write 
 $K^{{\rm main},n}_{g_{a,M}} = K^{{\rm main}}_{g_{a,M}}$. 
We have the following bulk coercivity statement:

\begin{proposition}[Bulk coercivity of $K^{{\rm main}}_{g_{a,M}}$ in $r\ge R_{\rm freq}$]
\label{farawaybulkcoercpropmain}
Under the assumptions of Theorem~\ref{globalgenbulkcoercprop},
for a general function $\Psi$, then in the region $r\ge R_{\rm freq}$, we have
the coercivity statement
\[
K^{{\rm main}}_{g_{a,M}} [\Psi ] \gtrsim 
r^{-3} ( |L\Psi|)^2 + | \underline L\Psi|+  |\nablaslash\Psi|^2
+ r^{-1} |\Psi|^2 ).
\]
\end{proposition}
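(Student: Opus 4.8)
\textbf{Proof plan for Proposition~\ref{farawaybulkcoercpropmain}.}
The plan is to work entirely in the region $r\ge R_{\rm freq}$, where, by the structural properties recorded after~\eqref{willtakethefollowingformsuperbefore}, the current $K^{{\rm main},n}_{g_{a,M}}$ is $n$-independent and reduces to a fixed sum of twisted-current components, namely $\tilde K^{y_{\rm fixed}}+\tilde K^{\hat z}+\tilde K^{f_{\rm fixed}}+E\tilde K^T$ together with the tail of $\tilde K^{y_n}$, which for $r\ge 1.2R_{\rm pot}$ is also $n$-independent (and the $f_n$, $z_{\rm red}$, $\chi_{{\rm Killing},n}$ pieces all vanish for $r$ large). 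Since $R_{\rm freq}$ will be chosen $\le R/4$, in this region $g=g_{a,M}$ and $r=r_{BL}$, and moreover $-g(T,T)\gtrsim 1$ there (we are well outside the ergoregion~\eqref{ergoregion}), so $\tilde K^T$ already controls $r^{-1}(|L\Psi|^2+|\underline L\Psi|^2+|\slashed\nabla\Psi|^2)$ up to a zeroth-order term coming from the twisting function $(r^2+a^2)^{1/2}$; the stronger $r^{-3}$ weight on the derivative terms in the statement is innocuous since $r^{-3}\le r^{-1}$, and the $r^{-1}|\Psi|^2$ term is genuinely needed and will come from the $\tilde K^{y_{\rm fixed}}$ (or $\tilde K^{\hat z}$) component.

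The key steps, in order. First, quote from Appendix~\ref{translationsection} the explicit form of the twisted bulk current $\tilde K^y$ for a vector field $2y(r)\partial_{r^*}$: it produces a term $\sim y'\,|\partial_{r^*}\Psi|^2$, a term $\sim y\,(\text{angular and good derivative terms})$ weighted by metric coefficients, and a zeroth-order term involving $y$ and derivatives of the twisting function, the last of which is the source of a positive $r^{-1}|\Psi|^2$ contribution in the asymptotic region provided $y_{\rm fixed}$ (or $\hat z$) is chosen increasing with a suitable rate — this is exactly the $r^p$-hierarchy/Morawetz choice of~\cite{DafRodnew} already used in~\cite{DHRT22}. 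Second, observe that $V_0$, the high-frequency Carter potential, satisfies $V_0'<0$ for $r\ge 5M$ (Table~\ref{firsttable}), which is the physical-space statement ensuring that the Morawetz current built from $y_{\rm fixed}$ has a sign for large $r$; since we are comparing with the \emph{already-constructed} currents of Appendix~\ref{fundcoercivityapp}, the relevant positivity for $r\ge R_{\rm freq}$ will have been arranged there by the choice of the $n$-independent functions, and here one only needs to read it off. Third, combine: $E\tilde K^T$ gives the non-degenerate derivative coercivity with a small zeroth-order loss, $\tilde K^{y_{\rm fixed}}+\tilde K^{\hat z}$ supplies the missing $r^{-1}|\Psi|^2$ and can absorb that loss for $E$ large, and $\tilde K^{f_{\rm fixed}}$ (a zeroth-order twisting-function correction supported in $r\ge R_{\rm pot}$) is likewise absorbed. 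Assemble the inequality with weights $r^{-3}$ on derivatives and $r^{-1}$ on $|\Psi|^2$, noting all constants depend only on the (already-fixed-in-this-proposition's-hypotheses) parameters $E$, $e_{\rm red}$, $R_{\rm freq}$ and on $a,M$.

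The main obstacle is bookkeeping rather than conceptual: one must check that the \emph{total} current, after all the $n$-independent components are added, is genuinely coercive with the stated weights in $r\ge R_{\rm freq}$ and that no cross term from the twisted formalism (the mixed $\partial_{r^*}\Psi\cdot\Psi$ terms produced by the twisting function) spoils the sign. These are handled by Cauchy--Schwarz, throwing a small constant onto the derivative terms and a large one onto $|\Psi|^2$, exactly as in Lemma~\ref{generalboundednesspr}; the $r^{-1}|\Psi|^2$ reservoir is large enough to absorb the resulting zeroth-order error because in $r\ge R_{\rm freq}$ the weight mismatch is only a bounded factor away from the $r$-decay one gains from $y'_{\rm fixed}$. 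One should also remark that, as in Proposition~\ref{farawaybulkcoercpropmain}'s companions, this is a \emph{pointwise} statement for a general function $\Psi$ and uses no frequency localisation, so it is immediate once the explicit current components of Appendix~\ref{fundcoercivityapp} are in hand; the proof therefore reduces to: ``this is clear from the form of the definition of the currents~\eqref{willtakethefollowingformnonsuperbefore}--\eqref{willtakethefollowingformsuperbefore} in Section~\ref{fundcoercivityapp}, the fact that $-g_{a,M}(T,T)\gtrsim 1$ and $V_0'<0$ in $r\ge R_{\rm freq}$, and the standard $r^p$-hierarchy choice of the $n$-independent functions $y_{\rm fixed}$, $\hat z$, $f_{\rm fixed}$,'' with the Cauchy--Schwarz absorption spelled out.
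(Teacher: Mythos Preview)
Your plan has a genuine error: you rely on $E\tilde K^T$ to furnish the first-order derivative coercivity in $r\ge R_{\rm freq}$, but $\tilde K^T=0$ identically. This is immediate from the fixed-frequency identity $(\text{Q}^T[u])'=\omega\,\mathrm{Im}(G\bar u)$, i.e.\ $\mathfrak{P}^T[u]=0$, and reflects the fact that $T$ is Killing (the twisted formalism does not change this, since the twisting function and the potential $V_1$ are $T$-invariant). The timelikeness of $T$ in this region matters for \emph{boundary} coercivity of $\tilde J^T\cdot{\rm n}$, not for the bulk. So the scheme ``$E\tilde K^T$ gives derivatives, $\tilde K^{y_{\rm fixed}}+\tilde K^{\hat z}$ gives the zeroth-order term'' cannot work as written.

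The paper's proof is a one-line reference to Proposition~\ref{coercoftheelements}, which establishes the coercivity at the fixed Carter-frequency level for the sum $\mathfrak{P}^{f_{\rm fixed}}+\mathfrak{P}^{y_{\rm fixed}}+\mathfrak{P}^{\hat z}$ (plus the $n$-independent tail $\mathfrak{P}^{y_n}$) using only the large-$r$ potential properties~\eqref{Vprimefar}, namely $-V'\gtrsim\Lambda r^{-3}$ and $-(rV)'\gtrsim\Lambda r^{-2}+r^{-3}$; the translation dictionary of Section~\ref{translationsection} then gives the physical-space statement. In that argument the $|u'|^2$ control comes from $2f'_{\rm fixed}|u'|^2\sim r^{-2}|u'|^2$, the $\omega^2|u|^2$ and $|u'-i\omega u|^2$ control from $\hat z'|u'-i\omega u|^2$, the $\Lambda|u|^2$ control from $-f_{\rm fixed}V'|u|^2$, and the crucial $r^{-3}|u|^2$ from $-(\hat z V)'|u|^2=-(rV)'|u|^2$; the Killing components contribute nothing to the bulk here. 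Your physical-space route can be repaired by dropping the $T$-current from the argument and instead reading off all four coercive contributions from the Morawetz-type components $\tilde K^{f_{\rm fixed}}$, $\tilde K^{y_{\rm fixed}}$, $\tilde K^{\hat z}$ directly, using~\eqref{Vprimefar}.
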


\begin{proof}
This follows from Proposition~\ref{coercoftheelements} of Appendix~\ref{fundcoercivityapp}.
\end{proof}

\subsection{The currents $J{}_{g_{a,M}}^{\rm redder}$, $J{}_{g_{a,M}}^{\rm total}$ and improved coercivity properties}
 \label{absorbingetcsec}

Given $\delta\le p\le 2-\delta$ or $p=0$, we define
in the region $r\ge R_{\rm freq}$ the current
\[
\Jp{}_{g_{a,M}}^{{\rm total}}[\psi] : = J_{g_{a,M}}^{{\rm main}}[\psi]  +
e_{\rm total}\zeta \Jp{}^{\rm far}_{g_{a,M}} [\psi],
\]
where $\Jp^{\rm far}$ is the current defined in Appendix B of~\cite{DHRT22},
$\zeta$ is the cutoff function of Section~\ref{commutationcommutation}
and $e_{\rm total}>0$ is a sufficiently small
parameter.

We also define the current
\[
J_{g_{a,M}}^{\rm redder} [\psi]:= \hat{\hat{\zeta} }J^{\varsigma(k)}_{g_{a,M}}[\psi],
 \]
where $J^{\varsigma(k)}{}_{g_{a,M}}[\psi]$ is the current from
Section~3.5 of~\cite{DHRT22} and
$\hat{\hat\zeta}$ is a cutoff function
such that $\hat{\hat\zeta}(r) =1$ for $r\le r_1+2\frac{r_2-r_1}3$
and $\hat{\hat\zeta}(r)=0$ for $r\ge r_2$
(cf.~the cutoff function $\hat\zeta$
 of Section~\ref{commutationcommutation}).

Recall that these currents have improved
coercivity properties, allowing us in particular
to absorb top order commutation errors  generated by $\mathfrak{D}^{\bf k}\psi$ 
for $k_i\ne 0$ for $i\ge 2$.

We first consider $J^{\rm redder}_{g_{a,M}}$ and the region near the horizon.
Concerning the boundary, we have
\begin{proposition}[Boundary coercivity of $J^{\rm redder}$ on $\mathcal{S}$ and nonnegativity
on $\Sigma_\tau$]
\label{redderbound}
We have
\[
J_{g_{a,M}}^{\rm redder}[\psi] \cdot {\rm n}_{\mathcal{S}}\ge c(r_0)(|L\psi|^2+|\underline{L}\psi|^2
+|\nablaslash\psi|^2),
\]
and
\begin{equation}
\label{wehaveinfactcoercclosefirsttime}
J_{g_{a,M}}^{\rm redder}[\psi]\cdot {\rm n}_{\Sigma(\tau) }\ge 0 .
\end{equation}
\end{proposition}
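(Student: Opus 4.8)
The statement concerns the current $J^{\rm redder}_{g_{a,M}}[\psi] = \hat{\hat\zeta}\, J^{\varsigma(k)}_{g_{a,M}}[\psi]$, which is simply the cutoff localisation of the redshift-type current $J^{\varsigma(k)}_{g_{a,M}}$ imported from Section~3.5 of~\cite{DHRT22}. The key point is that $\hat{\hat\zeta}\equiv 1$ on the region $r\le r_1 + 2(r_2-r_1)/3$, which in particular contains a neighbourhood of $\mathcal{S}=\{r=r_0\}$, so that near $\mathcal{S}$ the current $J^{\rm redder}_{g_{a,M}}$ coincides \emph{exactly} with $J^{\varsigma(k)}_{g_{a,M}}$. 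Thus the two displayed inequalities reduce to the corresponding statements for $J^{\varsigma(k)}_{g_{a,M}}$, which are established in~\cite{DHRT22}.

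For the first inequality (coercivity on $\mathcal{S}$), I would recall the construction of $J^{\varsigma(k)}_{g_{a,M}}$ in~\cite{DHRT22}: it is the energy current of a future-directed timelike vector field near $\mathcal{H}^+$ (a redshift vector field of Dafermos--Rodnianski type), possibly augmented by lower-order terms. Since $\mathcal{S}=\{r=r_0\}$ is spacelike with respect to $g_{a,M}$ and its normal ${\rm n}_{\mathcal{S}}$ is future-directed timelike, the contraction $J^{\varsigma(k)}_{g_{a,M}}[\psi]\cdot {\rm n}_{\mathcal{S}} = T^{\varsigma}(V_{\rm red}, {\rm n}_{\mathcal{S}})$ of the associated (twisted) energy-momentum tensor with two timelike vectors is pointwise coercive, controlling $|L\psi|^2+|\underline L\psi|^2+|\nablaslash\psi|^2$ with a constant $c(r_0)$ that degenerates as $r_0\to r_+$ (since the redshift vector field becomes null on $\mathcal{H}^+$). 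This is exactly the first displayed inequality; it follows verbatim from the corresponding property of $J^{\varsigma(k)}_{g_{a,M}}$ in~\cite{DHRT22}, and since $\hat{\hat\zeta}=1$ there, nothing further is needed.

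For the second inequality~\eqref{wehaveinfactcoercclosefirsttime}, nonnegativity of $J^{\rm redder}_{g_{a,M}}[\psi]\cdot {\rm n}_{\Sigma(\tau)}$ on all of $\Sigma(\tau)$, I would again invoke the corresponding statement for $J^{\varsigma(k)}_{g_{a,M}}$ from Section~3.5 of~\cite{DHRT22}: there the current is designed precisely so that its flux through the $\Sigma(\tau)$-slices is nonnegative (the associated vector field being future causal everywhere on its support, and ${\rm n}_{\Sigma(\tau)}$ being future causal as well, so that $T^{\varsigma}(V_{\rm red}, {\rm n}_{\Sigma(\tau)})\ge 0$). Multiplying by the nonnegative cutoff $\hat{\hat\zeta}(r)\ge 0$ preserves nonnegativity, giving~\eqref{wehaveinfactcoercclosefirsttime}. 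The only genuinely new point compared with~\cite{DHRT22} is that the cutoff $\hat{\hat\zeta}$ is supported in $r\le r_2$ and equals $1$ for $r\le r_1+2(r_2-r_1)/3$; one must check that, on the transition region $r_1 + 2(r_2-r_1)/3 \le r\le r_2$, the terms generated by $\hat{\hat\zeta}'$ do not spoil nonnegativity of the flux through $\Sigma(\tau)$ --- but since $\hat{\hat\zeta}$ multiplies $J^{\varsigma(k)}_{g_{a,M}}$ \emph{only at the level of the current itself} (not at the level of the bulk current), the flux through $\Sigma(\tau)$ is just $\hat{\hat\zeta}$ times a nonnegative quantity, with no derivative-of-cutoff terms appearing in the boundary integrand. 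Hence~\eqref{wehaveinfactcoercclosefirsttime} is immediate. The main (very mild) obstacle here is purely bookkeeping: confirming that the definitions in~\cite{DHRT22} are stated so that these properties transfer directly to the Kerr case $|a|<M$, which they do since $J^{\varsigma(k)}_{g_{a,M}}$ is supported in $r\le r_2$ where $g_{a,M}$ satisfies all the structural assumptions of~\cite{DHRT22}.
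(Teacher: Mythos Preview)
Your proposal is correct and follows essentially the same route as the paper: both reduce the statement to the corresponding properties of $J^{\varsigma(k)}_{g_{a,M}}$ established in~\cite{DHRT22} (specifically Proposition~3.4.2 there), using that $\hat{\hat\zeta}\equiv 1$ near $\mathcal{S}$ for the first inequality and that multiplication by the nonnegative cutoff $\hat{\hat\zeta}$ preserves nonnegativity for the second. The paper's proof is terser but identical in substance; your added explanation of why no $\hat{\hat\zeta}'$ terms contaminate the boundary flux is correct and a useful clarification.
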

\begin{proof}
This is of course immediate from Proposition 3.4.2 of~\cite{DHRT22}.
We have in fact a stronger coercivity statement in place of~\eqref{wehaveinfactcoercclose}
for $r\le r_1$ say, but we shall not make direct use of that. We note that~\eqref{wehaveinfactcoercclosefirsttime} in fact
vanishes for $r\ge r_2$.
\end{proof}
Concerning the bulk, we have
\begin{proposition}[Improved bulk coercivity of $K^{\rm redder}$ near $\mathcal{S}$]
\label{improvcoercprop}
For all $r_2>r_1>r_+$ sufficiently close to $r_+$,
we have for all functions $\Psi$ the coercivity
\[
 K{}^{{\rm redder}}_{g_{a,M}}  [\Psi] \gtrsim |L\Psi|^2+|\underline{L}\Psi|^2+|\nablaslash\Psi|^2
 \qquad\text{\ in\ } r_0\le r\le r_1+2\frac{r_2-r_1}3
\]
and for any $k\ge 1$,  there exists a positive real weight function $\alpha({\bf k})$ defined
on $|{\bf k}| =k$ 
 such that, for all solutions $\psi$
of the homogeneous $\Box_g\psi=0$,
\begin{align}
\nonumber
\sum_{|{\bf k}| = k}
\int_{\mathcal{R}(\tilde\tau_0,\tilde\tau_1)\cap \{r_0\le r \le r_1+2\frac{r_2-r_1}3 \} }  K{}^{{\rm redder}}_{g_{a,M}} 
[\alpha({\bf k}) \mathfrak{D}^{\bf k} \psi]
+ {\rm Re} ( H{}^{\rm redder}_{g_{a,M}}[\psi] \cdot \overline{ \Box_{g_{a,M}}\alpha({\bf k}) \mathfrak{D}^{\bf k}\psi})\\
\label{giveitanamenow}
  \ge
\int_{\mathcal{R}(\tilde\tau_0,\tilde\tau_1)\cap \{r_0\le r \le r_1+2\frac{r_2-r_1}3 \} }
\frac13 
 K{}^{\rm redder}_{g_{a,M}} 
[\alpha({\bf k}) \mathfrak{D}^{\bf k} \psi]
   -C\, {}^\chi\Xzerokminusone(\tilde\tau_0,\tilde\tau_1)[\psi].
\end{align}
\end{proposition}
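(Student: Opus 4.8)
The plan is to build the statement in two stages, exactly as the decomposition of the current $J^{\rm redder}$ suggests: first establish pointwise bulk coercivity of $K^{\rm redder}$ in the region $r_0 \le r \le r_1 + 2\frac{r_2-r_1}{3}$, and then promote this to the higher-order commuted statement \eqref{giveitanamenow} by the usual redshift commutation bookkeeping of \cite{DHRT22}. For the first stage, I would simply invoke Proposition 3.4.2 and the construction in Section~3.5 of \cite{DHRT22}: the current $J^{\varsigma(k)}$ there is precisely the redshift current whose bulk $K^{\varsigma(k)}$ is positive-definite in first derivatives in a sufficiently small neighbourhood of $\mathcal{H}^+$, provided $r_1, r_2$ are chosen close enough to $r_+$ (this is where the surface gravity positivity of $\mathcal{H}^+$ enters). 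The cutoff $\hat{\hat\zeta}$ equals $1$ on $r_0 \le r \le r_1 + 2\frac{r_2-r_1}{3}$, so in that region $K^{\rm redder} = K^{\varsigma(k)}$ up to the extension into $r_0 \le r < r_+$, which by continuity (after possibly shrinking $r_0$ towards $r_+$, exactly as in Proposition~\ref{notbcgsc}) retains the coercivity. This gives the first displayed inequality.

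For the commuted estimate \eqref{giveitanamenow}, the key point is that the redshift vector field underlying $J^{\varsigma(k)}$ has good commutation properties with $\Box_{g_{a,M}}$ near $\mathcal{H}^+$: commuting with the operators $\mathfrak{D}_i$ produces, at top order, a term with a favourable sign (the ``redshift commutation gain'') plus lower-order error terms. More precisely, I would write $\Box_{g_{a,M}}(\mathfrak{D}^{\bf k}\psi) = [\Box_{g_{a,M}}, \mathfrak{D}^{\bf k}]\psi$ using $\Box_{g_{a,M}}\psi = 0$, expand the commutator, and observe that the principal part of the commutator error, when paired against $H^{\rm redder}[\mathfrak{D}^{\bf k}\psi]$, can be absorbed into $\tfrac13$ of the bulk $K^{\rm redder}[\alpha({\bf k})\mathfrak{D}^{\bf k}\psi]$ for a suitable choice of the weights $\alpha({\bf k})$ — this is the standard hierarchical argument (one chooses $\alpha({\bf k})$ so that derivatives transverse to $\mathcal{H}^+$, i.e.\ those involving $Y = \mathfrak{D}_3$, carry larger weights, so the off-diagonal commutator terms are dominated). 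The remaining commutator terms are genuinely lower order — involving at most $k$ derivatives, $|{\bf k}'| \le k - 1$ in the tilded sense after accounting for the structure — and are supported in $r_0 \le r \le r_1 + 2\frac{r_2-r_1}{3} \subset \{r \le r_2\} \subset \{\chi = 1\}$ by \eqref{chidefimportant}, so they are controlled by ${}^\chi\Xzerokminusone(\tilde\tau_0,\tilde\tau_1)[\psi]$. I would also need the contribution of $H^{\rm redder}$ against the genuine commutator error $[\Box_{g_{a,M}},\mathfrak{D}^{\bf k}]$ restricted to these lower orders; by Cauchy--Schwarz, throwing a small constant on the top-order factor (absorbed by the bulk term) and a large constant on the lower-order factor, this too is bounded by ${}^\chi\Xzerokminusone$.

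The main obstacle I anticipate is organising the commutator hierarchy cleanly: one must verify that the choice of weights $\alpha({\bf k})$ making the $Y$-directional commutator errors absorbable is consistent across all $|{\bf k}| = k$ simultaneously, and that no top-order term escapes the absorption — in particular terms where $\Box_{g_{a,M}}$ is commuted with a product of $Y$'s produce a principal contribution proportional to $(\kappa \cdot \text{weight})$ times the transverse derivative, and one needs $\kappa > 0$ (positive surface gravity) together with a monotone weight assignment. This is exactly the content of the redshift commutation lemma of Section~3.5 of \cite{DHRT22}, so the work here is really citation and careful transcription rather than new analysis; the one genuinely new feature relative to \cite{DHRT22} is that $\psi$ need not be real-valued, but since all the relevant identities \eqref{genidentityintro}, \eqref{dividentitynotintegrated} have already been stated for complex $\psi$ with the appropriate $\mathrm{Re}(\cdot)$ insertions, this causes no difficulty. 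I would finish by noting that the spacetime integration is over $\mathcal{R}(\tilde\tau_0,\tilde\tau_1)$ and all error integrands are supported in $r \le r_2$, so the coarea formula and the definition of ${}^\chi\Xzerokminusone$ close the estimate.
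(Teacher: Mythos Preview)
Your proposal is correct and follows essentially the same approach as the paper's proof, which is itself a brief citation: ``This is immediate from Proposition 3.4.2 of~\cite{DHRT22} and the elliptic estimates of Proposition~\ref{ellproplargea}\ldots\ See the proof of Proposition~3.6.4 of~\cite{DHRT22} for the more general inhomogeneous case.'' Your two-stage sketch (pointwise coercivity from the redshift construction, then the hierarchical weight argument for commutation) is exactly the content of those citations.

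One small point worth making explicit: the paper additionally invokes the elliptic estimates of Proposition~\ref{ellproplargea}. This is needed at the step where you say the remaining commutator errors are ``$|{\bf k}'|\le k-1$ in the tilded sense'' and then control them by ${}^\chi\Xzerokminusone$. The degenerate bulk term in ${}^\chi\Xzerokminusone$ uses the untilded $\mathfrak{D}$ operators, so to pass from general $\widetilde{\mathfrak{D}}$-derivatives of order $k$ to $\mathfrak{D}$-control one applies~\eqref{ellestimnorestrictionforlargea} (with $\Box_{g_{a,M}}\psi=0$). Your phrase ``in the tilded sense after accounting for the structure'' gestures at this conversion but does not name the tool; adding that citation would close the argument completely.
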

\begin{proof}
This is immediate from Proposition 3.4.2 of~\cite{DHRT22} and
the elliptic estimates of Proposition~\ref{ellproplargea}. (We note that the second term
in the integral on the left hand side is supported only in the region $r\le r_1+\frac{r_2-r_1}2$
in view of the definition of the cutoff~$\hat\zeta$ in Section~\ref{commutationcommutation}.)
See the proof of Proposition~3.6.4 of~\cite{DHRT22} for the more general inhomogeneous case.
\end{proof}

We now turn to $\Jp_{g_{a,M}}^{\rm total}$. Concerning the boundary, we have
\begin{proposition}[Improved boundary coercivity in $r\ge R_{\rm freq}$]
\label{largercoercivityboundary}
Let $E>0$ be sufficiently large in the definitions~\eqref{willtakethefollowingformnonsuperbefore}
and~\eqref{willtakethefollowingformsuperbefore}. For a general function $\psi$, we have  for $r\ge R_{\rm freq}$, and $\delta \le p\le 2-\delta$ the boundary coercivity
\[
\Jp{}_{g_{a,M}}^{{\rm total}}[\psi] \cdot  {\rm n}_{\Sigma(\tau)} \gtrsim   r^p| r^{-1}L(r\psi)|^2 +r^{\frac{p}2}|L\psi|^2 +|\slashed\nabla \psi|^2
+r^{\frac{p}2-2}|\psi|^2  +\iota_{r\le R}| \underline{L}\psi|^2
\]
while for $p=0$ we have
\[
\Jzero{}_{g_{a,M}}^{{\rm total}}[\psi] \cdot  {\rm n}_{\Sigma(\tau)} \gtrsim  |L\psi|^2 +|\slashed\nabla \psi|^2
+r^{-2}|\psi|^2 +\iota_{r\le R}| \underline{L}\psi|^2.
\]
\end{proposition}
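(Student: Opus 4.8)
The statement is an assertion about the pointwise coercivity of the boundary term $\Jp^{\rm total}_{g_{a,M}}[\psi]\cdot{\rm n}_{\Sigma(\tau)}$ in the far region $r\ge R_{\rm freq}$, where $\Jp^{\rm total}_{g_{a,M}} = J^{\rm main}_{g_{a,M}} + e_{\rm total}\zeta\Jp^{\rm far}_{g_{a,M}}$ and, crucially, $J^{\rm main}_{g_{a,M}}$ is $n$-independent here. The plan is to treat the two summands separately: first establish that the $n$-independent main current $J^{\rm main}_{g_{a,M}}$ already gives an \emph{unweighted} ($p=0$-type) coercive boundary term of the form $|L\psi|^2+|\nablaslash\psi|^2+r^{-2}|\psi|^2+\iota_{r\le R}|\underline L\psi|^2$ in $r\ge R_{\rm freq}$, and then show that adding the far current $e_{\rm total}\zeta\Jp^{\rm far}_{g_{a,M}}$ upgrades this to the full $r^p$-weighted coercivity, provided $e_{\rm total}$ is taken small enough (and $E$ large enough) that the cross terms and the non-positive pieces of $J^{\rm main}$ are absorbed.

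First I would analyse $J^{\rm main}_{g_{a,M}}[\psi]\cdot{\rm n}_{\Sigma(\tau)}$ in $r\ge R_{\rm freq}$. By the list of properties in Section~\ref{introducingthecurrents}, for $r\ge R_{\rm freq}$ the current reduces to the $n$-independent combination built from $\tilde J^{y_{\rm fixed}}$, $\tilde J^{f_{\rm fixed}}$, $\tilde J^{\hat z}$ (the latter three supported in $r\ge R_{\rm pot}$), and $E\tilde J^T$ (the Killing cutoff $\chi_{{\rm Killing},n}$ having already terminated), together with the $n$-independent tail of $\tilde J^{y_n}$ for $r\ge 1.2R_{\rm pot}$. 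On $\Sigma(\tau)$ the term $E\tilde J^T[\psi]\cdot{\rm n}_{\Sigma(\tau)}$ is the twisted $T$-energy flux, which in $r\ge R_{\rm freq}>2.01M$ is coercive (since $T$ is timelike there with $-g(T,T)\gtrsim 1$ and ${\rm n}_{\Sigma(\tau)}$ is causal in $r\le R$, null for $r\ge R$); twisting by $(r^2+a^2)^{1/2}$ only changes lower-order terms. Thus $E\tilde J^T$ controls $|L\psi|^2+|\nablaslash\psi|^2+r^{-2}|\psi|^2+\iota_{r\le R}|\underline L\psi|^2$, and taking $E$ large absorbs the remaining $n$-independent current components (which by Proposition~\ref{coercoftheelements} of the appendix, invoked through Proposition~\ref{farawaybulkcoercpropmain}, have boundary parts bounded by the same unweighted first-order quantities, possibly with a worse $r$-weight at infinity but never worse than what $\tilde J^T$ provides). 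This establishes the $p=0$ assertion directly, since $\Jzero^{\rm total}=J^{\rm main}+e_{\rm total}\zeta\Jzero^{\rm far}$ and $\Jzero^{\rm far}$ is nonnegative on $\Sigma(\tau)$ (or contributes only the good $r^0$-weighted terms).

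For $\delta\le p\le 2-\delta$, I would next recall from Appendix~B of~\cite{DHRT22} that $\Jp^{\rm far}_{g_{a,M}}[\psi]\cdot{\rm n}_{\Sigma(\tau)}$ is, in $r\ge R$, pointwise coercive and controls exactly the $r^p$-weighted quantities $r^p|r^{-1}L(r\psi)|^2+r^{p/2}|L\psi|^2+|\nablaslash\psi|^2+r^{p/2-2}|\psi|^2$ appearing in the definition~\eqref{coeffshereandkorder} of $\Epk$, while in $R_{\rm freq}\le r\le R$ (where $\zeta$ may not yet be $1$) it is at worst bounded, with all lower-order-weight errors absorbable. The combination $\Jp^{\rm total}\cdot{\rm n}_{\Sigma(\tau)}$ then equals (coercive unweighted part from $E\tilde J^T$) $+$ $e_{\rm total}\zeta$(coercive $r^p$-weighted part from $\Jp^{\rm far}$) $+$ error. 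The $r^p$-weighted flux controls its own lower-order pieces, so the unweighted coercivity from $E\tilde J^T$ (which does not degenerate as $r\to\infty$ relative to the $p=0$ norm) together with the weighted coercivity from $e_{\rm total}\zeta\Jp^{\rm far}$ yields the claimed lower bound, once $e_{\rm total}$ is small enough that the contribution of $\Jp^{\rm far}$ where $\zeta'\ne 0$ (i.e.\ $R/2\le r\le 3R/4$) is dominated by the $E\tilde J^T$ term there, and $E$ large enough to dominate everything else.

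The main obstacle will be handling the transition region $R_{\rm freq}\le r\le R$, where $\zeta$ is not identically $1$, so the $r^p$ weights from $\Jp^{\rm far}$ are ``switched on'' gradually, and simultaneously the main current $J^{\rm main}$ is still a genuine combination of several component currents rather than a clean multiple of $\tilde J^T$. One must verify that in this compact $r$-range the $r^p$ weights are comparable to $1$ anyway (since $r$ is bounded there), so that coercivity of $E\tilde J^T$ alone already suffices and the gradual activation of $\zeta$ costs nothing; and that the component currents $\tilde J^{y_{\rm fixed}}$, $\tilde J^{f_{\rm fixed}}$, $\tilde J^{\hat z}$ (supported in $r\ge R_{\rm pot}$, and with their precise definitions deferred to Section~\ref{fixersection}/Appendix~\ref{fundcoercivityapp}) have boundary integrands on $\Sigma(\tau)$ that are controlled, with the right signs or with errors absorbable by large $E$. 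This is exactly the kind of bookkeeping that Propositions~\ref{farawaybulkcoercpropmain} and~\ref{coercoftheelements} are designed to package, so in practice the proof reduces to: (i) quote the $r\ge R_{\rm freq}$ reduction of $J^{\rm main}$, (ii) quote the coercivity of the constituent far currents, (iii) choose $E$ large then $e_{\rm total}$ small, and (iv) note that for $p=0$ the far current is outright nonnegative on $\Sigma(\tau)$.
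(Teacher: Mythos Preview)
Your proposal is correct and follows essentially the same approach as the paper, which simply cites Sections~B.1 and~B.2 of~\cite{DHRT22} for the construction and coercivity of $\Jp^{\rm far}$ and the role of the $T$-current. Your reconstruction---decomposing into the $n$-independent $J^{\rm main}$ part (dominated by $E\tilde J^T$ since $T$ is timelike for $r\ge R_{\rm freq}>2.01M$) and the $e_{\rm total}\zeta\Jp^{\rm far}$ part providing the $r^p$ weights, with the transition region handled because $r^p\sim 1$ there---is exactly the content behind that citation.
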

\begin{proof}
This follows from Sections B.1 and B.2 of~\cite{DHRT22}.
\end{proof}
Concerning the bulk, we have
\begin{proposition}[Improved bulk coercivity  in $r\ge R_{\rm freq}$]
\label{impbulcoerpropfar}
Let $E>0$ be sufficiently large in the definitions~\eqref{willtakethefollowingformnonsuperbefore}
and~\eqref{willtakethefollowingformsuperbefore}.
For $e_{\rm total}>0$ sufficiently small, we have for $r\ge R_{\rm freq}$
and all functions $\Psi$ the coercivity statements
\begin{align*}
 \Kp{}^{{\rm total}}_{g_{a,M}}[\Psi]
&\gtrsim  r^{p-1} ( 
 (r^{-1}|Lr\Psi|)^2 + | L\Psi|+|\nablaslash\Psi|^2) + r^{-1-\delta}  |\underline L \Psi|^2
+ r^{p-3} |\Psi|^2  \qquad \delta\le p \le 2-\delta,\\
 \Kzero{}^{{\rm total}}_{g_{a,M}}[\Psi]
&\gtrsim  r^{-1-\delta} ( 
 | L\Psi|+|\nablaslash\psi|^2 + |\underline L \Psi|^2)
+ r^{-3-\delta } |\Psi|^2  \qquad p=0,
\end{align*}
 and, for any $k\ge 1$, 
the  $\alpha({\bf k})$ of Proposition~\ref{improvcoercprop} may be chosen so that
moreover, 
 for all solutions of the homogeneous equation $\Box_{g_{a,M}}\psi=0$,  we have
\begin{align}
\nonumber
\int_{\mathcal{R}(\tilde\tau_0,\tilde\tau_1)\cap \{r \ge  R/4\} } 
\sum_{|{\bf k}| = k}
  \Kp{}^{{\rm total}}_{g_{a,M}}  
[\alpha({\bf k}) \mathfrak{D}^{\bf k} \psi]
+{\rm Re} (\Hap{}^{{\rm total}}_{g_{a,M}}[\psi] \cdot\overline{ \Box_{g_{a,M}}\alpha({\bf k}) \mathfrak{D}^{\bf k}\psi }) \\
\label{improvedfarawaycoercivity}
\ge
\int_{\mathcal{R}(\tilde\tau_0,\tilde\tau_1)\cap \{r \ge R/4  \} } 
\frac13 \sum_{|{\bf k}| = k}
  \Kp{}^{{\rm total}}_{g_{a,M}}  
[\alpha({\bf k}) \mathfrak{D}^{\bf k} \psi]    -C\, {}^\chi\Xpkminusone(\tilde\tau_0,\tilde\tau_1)[\psi].
\end{align}
\end{proposition}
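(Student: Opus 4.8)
\textbf{Proof proposal for Proposition~\ref{impbulcoerpropfar}.}
The plan is to reduce everything to the corresponding statements in~\cite{DHRT22} concerning the current $\Jp^{\rm far}$ in the far-away region, the sole new feature being that we have added the $n$-independent current $J^{{\rm main}}_{g_{a,M}}$ (valid in $r\ge R_{\rm freq}$) with a small weight relative to the large multiple $E$ built into $\Jp^{\rm far}$, and that the cutoff $\zeta$ localises $\Jp^{\rm far}$ to $r\ge R/2$. First I would observe that in $r\ge R_{\rm freq}$ we have, by the bulk coercivity of $K^{{\rm main}}_{g_{a,M}}$ established in Proposition~\ref{farawaybulkcoercpropmain}, the lower bound $K^{{\rm main}}_{g_{a,M}}[\Psi] \gtrsim r^{-3}(|L\Psi|^2 + |\underline L\Psi|^2 + |\nablaslash\Psi|^2 + r^{-1}|\Psi|^2)$, which is already the asserted pointwise coercivity in the $p=0$ case up to the weights (noting $r^{-3}\gtrsim r^{-3-\delta}$), and gives the $\underline L$-term and the lower-order terms for all $p$. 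Then I would invoke Sections B.1--B.2 of~\cite{DHRT22} for the coercivity of $\Kp{}^{\rm far}_{g_{a,M}}$ in $r\ge R/2$, which supplies the $r^{p-1}$-weighted first-order terms and the $r^{p-3}|\Psi|^2$ term with good sign; the region $R_{\rm freq}\le r\le R/2$, where $\zeta$ may vanish, is covered entirely by $K^{{\rm main}}_{g_{a,M}}$, and in the overlap region $r\ge R/2$ we add a small multiple $e_{\rm total}$ of $\Kp{}^{\rm far}_{g_{a,M}}$, whose commutator-type error terms supported where $\zeta'\ne 0$ (i.e.~$R/2\le r\le 3R/4$) are bounded, after choosing $e_{\rm total}$ small, by $K^{{\rm main}}_{g_{a,M}}$ there since the latter is strictly coercive for bounded $r$. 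This yields the pointwise coercivity statements for $\Kp{}^{\rm total}_{g_{a,M}}$ and $\Kzero{}^{\rm total}_{g_{a,M}}$.

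For the higher-order inhomogeneous statement~\eqref{improvedfarawaycoercivity}, I would follow verbatim the scheme of the proof of Proposition~3.6.4 (and the far-away analysis of Appendix~B) of~\cite{DHRT22}, namely: commute $\Box_{g_{a,M}}\psi=0$ with $\mathfrak{D}^{\bf k}$, collect the commutator terms $[\Box_{g_{a,M}},\mathfrak{D}^{\bf k}]\psi$, and observe that in the region $r\ge R/2$ these involve only the $\mathfrak{D}_i$ which are all Killing for $r\ge R/2$ (cf.~Table~\ref{firsttable}), so the dangerous commutation errors occur only in the bounded region $r_1+(r_2-r_1)/2\le r\le R/2$, where after applying the elliptic estimate~\eqref{ellipticestimnoYspacetime} of Proposition~\ref{ellproplargea} they are controlled by ${}^\chi\Xpkminusone(\tilde\tau_0,\tilde\tau_1)[\psi]$ (using that $\chi=1$ for $r\le r_{\rm pot}$ and $r\ge R_{\rm pot}$, so a bounded annulus away from the trapped set is covered). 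The positive weight function $\alpha({\bf k})$ is chosen exactly as in Proposition~\ref{improvcoercprop}/Section~3.5 of~\cite{DHRT22}, hierarchically in $|{\bf k}|$, so that the top-order commutator terms with a bad sign are dominated by a fixed fraction (here $\tfrac23$) of the coercive bulk of $\mathfrak{D}^{\bf k}$ itself; the one-third that remains is what appears on the left of~\eqref{improvedfarawaycoercivity}. The $H$-term ${\rm Re}(\Hap{}^{\rm total}_{g_{a,M}}[\psi]\cdot\overline{\Box_{g_{a,M}}\alpha({\bf k})\mathfrak{D}^{\bf k}\psi})$ is handled by writing $\Box_{g_{a,M}}\alpha({\bf k})\mathfrak{D}^{\bf k}\psi = \alpha({\bf k})[\Box_{g_{a,M}},\mathfrak{D}^{\bf k}]\psi$ and bounding it together with the commutator bulk errors.

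The main obstacle I anticipate is bookkeeping of the two weight systems near $r=R/2$: one must ensure that the large parameter $E$ in $J^{{\rm main}}_{g_{a,M}}$ and the small parameter $e_{\rm total}$ in $\Jp^{\rm far}$ are chosen in the correct order so that (i) $E$ large makes $K^{{\rm main}}_{g_{a,M}}$ dominate all the lower-order and $\zeta'$-generated error terms in $R_{\rm freq}\le r\le 3R/4$, yet (ii) $e_{\rm total}$ small ensures the $r^p$-weighted part of $\Kp{}^{\rm far}$ is not itself overwhelmed in $r\ge 3R/4$, where $K^{{\rm main}}_{g_{a,M}}$ only provides the weaker $r^{-3}$ weights. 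Since $E$ is fixed before $e_{\rm total}$ (both chosen before $R_{\rm freq}$, per the parameter ordering of Section~\ref{fixersection}), and since in $r\ge R/2$ the current $\Jp^{\rm total}$ agrees with $J^{{\rm main}} + e_{\rm total}\Jp^{\rm far}$ where $\Jp^{\rm far}$ already carries its own appropriately-tuned large constant internally, this is compatible; one simply needs to verify the constants can indeed be chosen as claimed, which is routine given the explicit structure of these currents in~\cite{DHRT22}. The statement then follows by assembling the pointwise coercivity with the commutator analysis and the elliptic estimate.
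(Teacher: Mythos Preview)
The pointwise coercivity argument is correct in outline: combine Proposition~\ref{farawaybulkcoercpropmain} for $K^{\rm main}_{g_{a,M}}$ (which covers all $r\ge R_{\rm freq}$, including the cutoff-support region $R/2\le r\le 3R/4$) with the $r^p$-weighted coercivity of $e_{\rm total}\,\Kp{}^{\rm far}$ from Appendix~B of~\cite{DHRT22} for $r\ge R/2$, and absorb the $\zeta'$-supported errors into the former for $e_{\rm total}$ small.

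Your higher-order argument has a genuine gap. You assert that ``in the region $r\ge R/2$ these involve only the $\mathfrak{D}_i$ which are all Killing for $r\ge R/2$,'' with the dangerous commutation errors confined to $r_1+(r_2-r_1)/2\le r\le R/2$. This reads Table~\ref{firsttable} backwards: that table says the $\mathfrak{D}_i$ are all Killing precisely in the intermediate region $r_1+(r_2-r_1)/2\le r\le R/2$ (where only $T,\Omega_1$ survive, since $\zeta=\hat\zeta=0$ there), \emph{not} for $r\ge R/2$. For $r\ge 3R/4$ the cutoff $\zeta=1$, so $\mathfrak{D}_4=L$, $\mathfrak{D}_5=\underline L$, $\mathfrak{D}_{5+i}=\Omega_i$ are all active, and on Kerr none of $L$, $\underline L$, $\Omega_2$, $\Omega_3$ is Killing. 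The commutator $[\Box_{g_{a,M}},\mathfrak{D}^{\bf k}]\psi$ therefore produces top-order terms throughout the \emph{unbounded} region $r\ge R/2$.

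The paper's proof (invoking Section~3.6.6 of~\cite{DHRT22}) handles this as follows: these far-region commutator errors carry an extra $r^{-1}$ relative to top order, so for $r\ge R_k$ with $R_k=R_k(k)\to\infty$ as $k\to\infty$ they can be absorbed \emph{pointwise} into the coercive bulk via the weights $\alpha({\bf k})$; but in the bounded yet $k$-dependent region $R/2\le r\le R_k$ they remain top order with no smallness, and one must apply the spacetime elliptic estimate~\eqref{ellipticestimnoYspacetime}---which, crucially, is stated allowing $r''>R$---to convert them into ${}^\chi\Xpkminusone$. Your instinct to use~\eqref{ellipticestimnoYspacetime} is right, but its region of application must extend out to $r\le R_k$, and the $k$-dependence of this threshold is an essential feature your argument misses.
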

\begin{proof}
This follows from Section 3.6.6 of~\cite{DHRT22}, where the more general
inhomogeneous version is obtained. (We recall that in a region $r\ge R_k$ (depending on $k$, however, with $R_k\to \infty$ as $k\to \infty$!),~\eqref{improvedfarawaycoercivity} follows in fact from a pointwise coercivity bound, while 
for the region $R/2\le r\le R_k$ one must
exploit the elliptic estimate~\eqref{ellipticestimnoYspacetime}.)
\end{proof}

\begin{proposition}[Boundary coercivity at $\mathcal{I}^+$]
\label{boundcoercatIplus}
Let $E>0$ be sufficiently large in the definitions~\eqref{willtakethefollowingformnonsuperbefore}
and~\eqref{willtakethefollowingformsuperbefore}.
For a general function $\psi$ on $\mathcal{R}(\tau_0,\tau_1)$,
suppose 
\begin{equation}
\label{supposeitholds}
\int_{\Sigma(\tau_0)\cap \{r\ge R\} } \Jp{}_{g_{a,M}}^{{\rm total}}[\psi]  \cdot {\rm n} <\infty, 
\qquad
\int_{\mathcal{R}(\tau_0,\tau_1) \cap \{r\ge R\} } |\Hap^{\rm total}[\psi]\Box_{g_{a,M}}\psi |<\infty .
\end{equation}
Then we may integrate the divergence identity of $\Jp{}_{g_{a,M}}^{\rm total}[\psi]$ globally
in $\mathcal{R}(\tau_0,\tau_1)\cap \{r\ge R\}$ with  a well defined additional boundary term
 at $\mathcal{I}^+$ which satisfies the coercivity property:
\begin{equation}
\label{throwitaway}
\int_{\mathcal{I}^+(\tau_0,\tau_1)}
\Jp{}_{g_{a,M}}^{{\rm total}}[\psi]  \cdot {\rm n}_{\mathcal{I}^+}
\ge 0.
\end{equation}
\end{proposition}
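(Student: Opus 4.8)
\textbf{Proof proposal for Proposition~\ref{boundcoercatIplus}.}

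The plan is to reduce the statement to two known ingredients: the coercivity of the boundary flux of $\Jp_{g_{a,M}}^{\rm total}$ through the null hypersurfaces $\Sigma(\tau)\cap\{r\ge R\}$ established in Proposition~\ref{largercoercivityboundary}, and the corresponding bulk sign/boundedness from Proposition~\ref{impbulcoerpropfar}, together with the general covariant divergence identity~\eqref{energyidentity} applied to the current $\Jp_{g_{a,M}}^{\rm total}$. First I would recall that in the region $r\ge R_{\rm freq}$ the current $\Jp_{g_{a,M}}^{\rm total}$ agrees with $J_{g_{a,M}}^{\rm main}+e_{\rm total}\zeta\Jp^{\rm far}_{g_{a,M}}$, and that $\zeta=1$ for $r\ge 3R/4$, so in particular in the far region $r\ge R$ the current coincides with the current $\Jp^{\rm far}$ of Appendix~B of~\cite{DHRT22} up to an $n$-independent multiple of $J^{\rm main}$ whose far-away behaviour is itself controlled. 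The desired property~\eqref{throwitaway} is precisely the $\mathcal{I}^+$ boundary positivity recorded for the $r^p$ currents in Appendix~B of~\cite{DHRT22}, so the content here is simply to verify that the hypotheses~\eqref{supposeitholds} license the limiting procedure in which the outgoing boundary term at $\mathcal{I}^+$ is extracted.

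The key steps, in order, would be the following. (1) Fix $v$ with $\tau_1\le\tau(v)$ and integrate the divergence identity~\eqref{dividentitynotintegrated} for $\Jp_{g_{a,M}}^{\rm total}[\psi]$ over the truncated region $\mathcal{R}(\tau_0,\tau_1,v)\cap\{r\ge R\}$, obtaining~\eqref{energyidentity} with boundary contributions on $\Sigma(\tau_0,v)\cap\{r\ge R\}$, $\Sigma(\tau_1,v)\cap\{r\ge R\}$, $\underline{C}_v(\tau_0,\tau_1)$, on the ingoing cone $\{r=R\}$, plus the bulk $K$-term and the inhomogeneity $H$-term. (2) By Proposition~\ref{largercoercivityboundary}, the $\Sigma(\tau_1,v)$ flux is $\ge 0$ and the $\underline C_v$ flux is $\ge 0$ (the $r^p$-weighted structure of $\Jp^{\rm far}$ makes both outgoing and ingoing null fluxes coercive); by Proposition~\ref{impbulcoerpropfar} the bulk term $\int\Kp^{\rm total}$ is nonnegative for $r\ge R_{\rm freq}$ with $e_{\rm total}$ small and $E$ large; the flux on $\{r=R\}$ is an $n$-independent inner boundary term that is finite and can be moved to the other side. (3) The first assumption in~\eqref{supposeitholds} bounds $\int_{\Sigma(\tau_0)\cap\{r\ge R\}}\Jp^{\rm total}\cdot{\rm n}$, and the second bounds the $H$-term; hence the remaining term, namely $\int_{\underline C_v(\tau_0,\tau_1)}\Jp^{\rm total}\cdot{\rm n}$, is bounded uniformly in $v$ by a $v$-independent constant. (4) Since this flux is monotone in $v$ (the sets $\underline C_v$ sweep out toward $\mathcal{I}^+$ and each term in it is nonnegative) and uniformly bounded, the limit as $v\to\infty$ exists; we \emph{define} $\int_{\mathcal{I}^+(\tau_0,\tau_1)}\Jp^{\rm total}\cdot{\rm n}_{\mathcal{I}^+}$ to be this limit. (5) Each truncated flux $\int_{\underline C_v}\Jp^{\rm total}\cdot{\rm n}\ge 0$ by the null coercivity of the $r^p$ current, hence the limit is $\ge 0$, which is~\eqref{throwitaway}.

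The main obstacle I anticipate is purely a matter of bookkeeping at the corners and of justifying the limit: one must check that the current $\Jp^{\rm total}$, which is only $n$-independent and of standard $r^p$ form for $r\ge R_{\rm freq}$, genuinely has the good null-flux coercivity on $\underline C_v$ for all $p$ in the hierarchy $\delta\le p\le 2-\delta$ and $p=0$, including the weighted derivative terms, and that the contribution of the cutoff $\zeta$ (supported in $R/2\le r\le 3R/4$, hence away from $\mathcal{I}^+$) produces no boundary obstruction near null infinity. This is exactly the content of Sections B.1--B.2 of~\cite{DHRT22}, so the argument reduces to citing those and confirming that the finiteness hypotheses~\eqref{supposeitholds} are exactly what is needed to pass to the limit; no new estimate is required.
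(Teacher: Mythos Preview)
Your proposal is correct and takes essentially the same approach as the paper: a limiting argument from the energy identity~\eqref{energyidentity} on the truncated regions $\mathcal{R}(\tau_0,\tau_1,v)\cap\{r\ge R\}$, defining the $\mathcal{I}^+$ boundary term as $\lim_{v\to\infty}\int_{\underline{C}_v}\Jp^{\rm total}\cdot{\rm n}$, with the nonnegativity coming from the null-flux coercivity of the $r^p$ current recorded in Section~B.2 of~\cite{DHRT22}. One small imprecision: in step~(4) the $\underline{C}_v$ are disjoint hypersurfaces, not nested, so ``monotone in $v$'' is not the right justification; rather, the limit exists because every other term in the energy identity has a limit (the bulk and $\Sigma(\tau_1,v)$ fluxes are monotone in $v$ by nonnegativity of their integrands over increasing domains, and the data and $H$-terms are controlled by~\eqref{supposeitholds}).
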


\begin{proof}
This follows from a straightforward limiting argument from the energy identity~\eqref{energyidentity} on a compact region. The left hand side of~\eqref{throwitaway}
may be interpreted as $\lim_{v\to \infty} \int_{C_{v}\cap \mathcal{R}(\tau_0,\tau_1)}  
\Jp{}_{g_{a,M}}^{{\rm total}}[\psi]  \cdot {\rm n}_{C_v}$, whose coercivity follows from
Section~B.2  of~\cite{DHRT22}.
\end{proof}
We will only use Proposition~\ref{boundcoercatIplus} to throw away the boundary term~\eqref{throwitaway} 
in the energy identity.
See already~Proposition~\ref{toporderestimateprop}.

\subsection{Fixing the parameters}
\label{fixersection}

Proposition~\ref{boundcoercatIplus}, which constrains $E$, provides in fact the final constraint on our parameters. We now choose $e_{\rm red}>0$ sufficiently small to satisfy all the smallness
assumptions and
 $E>0$ sufficiently large so as to satisfy all largeness assumptions of the
various propositions of this section. We fix then 
$R_{\rm freq}$ from Theorem~\ref{globalgenbulkcoercprop}, $e_{\rm total}$ from Proposition~\ref{impbulcoerpropfar}, and $r_0<r_+$
sufficiently close to $r_+$ satisfying the closeness requirements of the propositions 
of this section.

We fix $r_1$ so as to satisfy the closeness of Proposition~\ref{improvcoercprop} 
as well as the constraints
$r_1<r_{\rm pot}$ where $r_{\rm pot}$ is a parameter appearing in Appendix~\ref{carterestimatesappend} (fixed at the beginning of Section~\ref{fixingthecovering})
and we fix now also $r_+<r_1<r_2<r_{\rm pot}$. 

We fix finally $R$ such that $R/4\ge R_{\rm freq}$.

We recall that the parameters
$b_{\rm trap}>0$, $\gamma_{\rm elliptic}$, $b_{\rm elliptic}>0$ of~\eqref{betterdefhere} and~\eqref{elllipticdefinitionreg}
are fixed by the statement of Theorem~\ref{globalgenbulkcoercprop}
(see already Section~\ref{fixparamsheremust}).

All $r$-parameters from Table~\ref{firsttable}, 
as well as the current parameters $e_{\rm red}$, $E$, $e_{\rm total}$,
the frequency range parameters $b_{\rm trap}$, $\gamma_{\rm ellitpic}$, $b_{\rm elliptic}$ 
and the functions appearing in  definitions~\eqref{willtakethefollowingformnonsuperbefore}
and~\eqref{willtakethefollowingformsuperbefore}  (whose discussion is of course
deferred to  Appendix~\ref{carterestimatesappend}),
 can now be viewed as chosen depending only on $a$ and $M$. 
In particular, when applied in Sections~\ref{quasilinearsectionlargea}--\ref{proofofmainthelargea}, all the estimates of the present section have generic constants $C$ which can be considered to
depend only on $a$, $M$ (and $k$ and $\delta$ where appropriate) and possibly on $\epsilon_{\rm cutoff}$ (and an additional constant $\epsilon_{\rm redder}$ to be introduced, which both still must be chosen in the context of Proposition~\ref{toporderestimateprop}). Note again, however, the special
 convention of 
Section~\ref{noteonconstants} regarding the dependence of generic constants $c$ and $C$ on 
$\epsilon_{\rm cutoff}$, $\epsilon_{\rm redder}$ (in particular when occurring in the product expressions $C\epsilon_{\rm cutoff}$).  (We will thus have to continue to track the dependence of those constants explicitly in order
to eventually absorb such terms.)

\section{Quasilinear equations: preliminaries and applications of the estimates and currents}
\label{quasilinearsectionlargea}

We now at last may turn to our study of the quasilinear equation~\eqref{theequationzero}.

We first review in Section~\ref{generalclassconsideredrecalled} the general class of nonlinearities
considered from~\cite{DHRT22}.
In Section~\ref{LWPsec}, we will review local well posedness for~\eqref{theequationzero} 
including useful Cauchy-type stability 
statements.  
We then introduce in Section~\ref{additionalenergysec} the basic smallness assumption
on $\psi$, 
the extension $\widehat\psi$ and
the additional top order energies which will play a central role in our argument. 
In Section~\ref{covariantones},
we will promote the currents of Section~\ref{generalisedcoerc} and~\ref{absorbingetcsec}
to covariant currents with respect to $g(\psi,x)$  and infer stability properties of the coercivity statements,
up to suitable error terms. As outlined in Section~\ref{toporderenerintro}, we then apply in Section~\ref{toporderidentity} these currents
at top order to~\eqref{theequationzero},  to obtain a basic energy estimate with various
error terms on the right hand side.
 We estimate all terms on the right hand side  in Section~\ref{estimforinhomog} in turn from
 our energies,    to obtain finally 
 a closed system of estimates in Section~\ref{puttogethersec}.

\subsection{The class of equations from~\cite{DHRT22}}
\label{generalclassconsideredrecalled}

The class of equations considered will be exactly 
as in Section~4.1 and 4.7 of~\cite{DHRT22}, specialised
to the case where $g_{(0,x)} = g_{a,M}$ with $|a|<M$. 
(We in particular again  allow for now also the case $a=0$.)

We review here:
We will consider  quasilinear equations of the form below:
\begin{equation} 
\label{theequationforlargea}
\Box_{g(\psi, x)} \psi = N^{\mu\nu}(\psi,x)\partial_\mu\psi\,\partial_\nu \psi,
\end{equation}
where we recall
\begin{equation}
\label{thenonlinearityfunctions}
g: \mathbb R\times \mathcal{M} \to T^*\mathcal{M}\otimes T^*\mathcal{M}, \qquad
N:\mathbb R\times \mathcal{M}\to T \mathcal{M}\otimes T \mathcal{M}
\end{equation}
satisfy $\pi\circ g(\psi,x)=x$, $\pi\circ N(\psi, x)=x$
with  $\pi : T \mathcal{M}\otimes T \mathcal{M}\to \mathcal{M}$, $\pi : T^*\mathcal{M}\otimes T^*\mathcal{M}\to \mathcal{M}$
the canonical projections, and 
 such that 
 \[
 g(0,x)=g_{a,M}(x)
 \] 
 for all $x$ while $g(\cdot,x)=g_{a,M}$ for $r(x)\ge R/2$, and $N$ and $g$ are smooth, and, for each $k$,
$\partial_{{\bf x}}^{{\beta}}\partial_\xi^s g^{\mu\nu}(\xi, x)$ and 
$\partial_{{\bf x}}^{{\beta} } \partial_\xi^s N^{\mu\nu}(\xi, x)$ are
uniformly bounded for all $|\beta|+s\le k$, all $r\le R$ and $|\xi|\le 1$, where
$\partial_{{\bf x}}^{\beta}$ is expressed in multi-index notation
with respect to the ambient Cartesian coordinates
recalled in  Section~\ref{subextremalkerrsec}.

Our additional assumption on $N$ will only depend on the region $r\ge 8R/9$, so we will need some notation to denote the restriction of energies, etc., to this region.
Given $r_*\ge r_0$ and  $v$,  and a $\psi$ defined on $\mathcal{R}(\tau_0,\tau_1)$, by 
\[
  \Xpk_{r_*,v}(\tau_0,\tau_1)[\psi], \quad \Xzerok_{r_*,v}(\tau_0,\tau_1)[\psi], \quad \Xzeroplusk_{r_*,v}(\tau_0,\tau_1)[\psi]
 \]
we shall mean  the expressions defined as in Section~\ref{masterenergiessec}, but where all domains of integration are restricted to the region $\mathcal{R}(\tau_0,\tau_1,v)\cap \{r\ge r_*\}$. 

With this, we may now recall  our additional assumption on $N$  capturing null structure (Assumption~4.7.1 of~\cite{DHRT22}):

\begin{assumption}[Null structural assumption for semilinear terms~\cite{DHRT22}]
\label{largeanullcondassumption}
There exists a $k_{\rm null}$, and for all $k\ge k_{\rm null}$, 
 an $\varepsilon_{\rm null}>0$, such that the following holds for all $\tau_0\le \tau_1$, $v$.

Let $\psi$ be a smooth function defined on $\mathcal{R}(\tau_0,\tau_1,v)\cap \{ r\ge 8R/9\}$ 
 satisfying the bound
 \begin{equation}
 \label{basicbootstrapinnullcond}
 \Xzerolesslessk_{8R/9, v}(\tau_0,\tau_1)[\psi] \leq \varepsilon
 \end{equation}
  for
some $0<\varepsilon\leq \varepsilon_{null}$. 
Then for all $\delta\le p\le 2-\delta$,
\begin{align}
\nonumber
\int_{\mathcal{R}(\tau_0,\tau_1,v)\cap \{r\ge R\}}&\sum_{|{\bf k}|\le k}(|r^pr^{-1}L(r{\mathfrak{D}}^{\bf k}\psi)|+|L{\mathfrak{D}}^{\bf k}\psi|
+|\underline{L}{\mathfrak{D}}^{\bf k}\psi|+|r^{-1}\slashed\nabla{\mathfrak{D}}^{\bf k}\psi|+ |r^{-1}{\mathfrak{D}}^{\bf k}\psi| )  |{\mathfrak{D}}^{\bf k}
( N^{\alpha\beta}(\psi,x)\partial_\alpha\psi\,\partial_\beta \psi) |\\
\nonumber
&\qquad\qquad\qquad\qquad\qquad
+ |\mathfrak{D}^{\bf k} ( N^{\alpha\beta}(\psi,x)\partial_\alpha\psi\,\partial_\beta \psi)|^2 \\
\label{nullcondassump}
&\lesssim
 \Xpk_{\frac{8R}9,v}(\tau_0,\tau_1) \sqrt{\Xzerolesslessk_{\frac{8R}9,v}(\tau_0,\tau_1)} + \sqrt{\Xpk_{\frac{8R}9,v}(\tau_0,\tau_1)}\sqrt{\Xzerok_{\frac{8R}9,v}(\tau_0,\tau_1)}\sqrt{\Xplesslessk_{\frac{8R}9,v}(\tau_0,\tau_1)},
\end{align}
while, corresponding to $p=0$ we have
\begin{align}
\nonumber
\int_{\mathcal{R}(\tau_0,\tau_1,v)\cap \{r\ge R\}}&\sum_{|{\bf k}|\le k}(|r^{-1}L(r{\mathfrak{D}}^{\bf k}\psi)|
+|L{\mathfrak{D}}^{\bf k}\psi|
+|\underline{L}{\mathfrak{D}}^{\bf k}\psi|+|r^{-1}\slashed\nabla{\mathfrak{D}}^{\bf k}\psi|+ |r^{-1}{\mathfrak{D}}^{\bf k}\psi| )  |{\mathfrak{D}}^{\bf k}
( N^{\alpha\beta}(\psi,x)\partial_\alpha\psi\,\partial_\beta \psi) |\\
\nonumber
&\qquad\qquad\qquad\qquad\qquad
+ |\mathfrak{D}^{\bf k} ( N^{\alpha\beta}(\psi,x)\partial_\alpha\psi\,\partial_\beta \psi)|^2 \\
\label{nullcondassumptwo}
 &\lesssim 
 \Xzeroplusk_{\frac{8R}9,v} (\tau_0,\tau_1) \sqrt{ \Xzeropluslesslessk_{\frac{8R}9,v}(\tau_0,\tau_1)}.
\end{align}
\end{assumption}

We have formulated our null structural assumption directly in terms
of the necessary estimates for maximum generality. We note that, as
shown in Appendix C.2 of~\cite{DHRT22}, a sufficient condition
for the above in the general subextremal Kerr case is that  $N(\xi ,x)$ satisfies
\begin{align}
\nonumber
	\sup_{\vert \xi \vert \leq 1, r\ge R} \sum_{\vert \textbf{k} \vert + s \leq k} \sum_{A,B=1,2}
	&|\Dk \partial_{\xi}^{s} 
	(r N^{uu})|
	+
	|\Dk \partial_{\xi}^{s}  N^{uv}|
	+
	|\Dk \partial_{\xi}^{s}  N^{vv}|\\
	 \label{assumponNzero}
	&+
	|\Dk \partial_{\xi}^{s}  (r N^{Au})|
	+
	|\Dk \partial_{\xi}^{s} (r N^{Av})|
	+
	|\Dk \partial_{\xi}^{s} (r^2 N^{AB})|
	\lesssim
	D_k
	,
\end{align}
where the $A$, $B$ indices refer to coordinates  $\theta,  \phi$, and $D_k$ are
arbitrary constants. This is thus a strict generalisation of the classical null 
condition of~\cite{KlNull} and the class of semilinearities on Kerr considered 
in~\cite{MR3082240}.

\begin{remark}
The work~\cite{DHRT22} always considered real-valued $\psi$. 
Since for convenience, 
the formalism of the present paper allows complex valued functions to accommodate
Fourier projections, one may without additional complication consider at the outset also
complex-valued solutions 
of~\eqref{theequationforlargea},
i.e.~replace $\mathbb R$ by $\mathbb C$ in~\eqref{thenonlinearityfunctions}.
In that case, it is natural to allow a more general expression
\[
N_1^{\mu\nu}(\psi,x)\partial_\mu \psi \,\partial_\nu\psi 
+ N_2^{\mu\nu}(\psi,x)\overline{\partial_\mu \psi} \,\partial_\nu\psi 
+N_3^{\mu\nu}(\psi,x)\partial_\mu \psi\, \overline{\partial_\nu\psi }
+N_4^{\mu\nu}(\psi,x)\overline{\partial_\mu\psi}\,\overline{ \partial_\nu\psi }
 \]
 on the right hand side of~\eqref{theequationforlargea}, where
$N_i :\mathbb C\times \mathcal{M}\to T \mathcal{M}\otimes T \mathcal{M} \otimes \mathbb C$.
We have that the results of this paper will apply as long as each of the expressions
\[
N_1^{\mu\nu}(\psi,x)\partial_\mu \psi\, \partial_\nu\psi,
\,\,\,N_2^{\mu\nu}(\psi,x)\overline{\partial_\mu \psi} \,\partial_\nu\psi,\ldots
\]
individually
satisfy~\eqref{nullcondassump} and~\eqref{nullcondassumptwo}. 
Note that a sufficient condition for this would be that
each of the $N_i$ satisfy~\eqref{assumponNzero}. 
\end{remark}

\subsection{Local well posedness and Cauchy-type stability}
\label{LWPsec}

We will review here ``local'' well posedness and  Cauchy-type stability results.
(We emphasise that as these statements are local with respect to the parameter $\tau$ and the hypersurfaces
$\Sigma(\tau)$ asymptote to null infinity $\mathcal{I}^+$,  they are actually what
are sometimes referred to as \emph{semi-global} statements, and their proofs use already the
generalised  null 
structure Assumption~\ref{largeanullcondassumption}.)

We define a smooth initial data set on $\Sigma(\tau_0)$
to be a pair $(\uppsi, \uppsi')$ where $\uppsi$ is  a function on
$\Sigma(\tau_0)$ smooth on 
$\Sigma(\tau_0)\cap\{r\le R\}$ and smooth on $\Sigma(\tau_0)\cap \{r \ge R\}$, and 
$\uppsi'$ is a smooth function on $\Sigma(\tau_0)\cap \{r\le R\}$,
such that moreover  there exists a smooth a function $\Psi$ on $\mathcal{R}(\tau_0,\tau_1)$
for some $\tau_1$ such that $\Psi|_{\Sigma(\tau_0)}=\uppsi$, and $n\Psi|_{\Sigma(\tau_0)\cap\{r\le R\}}=\uppsi'$,
where $n$ denotes the normal to $\Sigma(\tau_0)\cap\{r\le R\}$.

We recall the energy norm
\begin{equation}
\label{tobedfined}
 \Epk[\uppsi,\uppsi']
\end{equation}
from~\cite{DHRT22} defined  by $\Epk[\uppsi,\uppsi']= \Epk(\tau_0)[\psi]$,
where the spacetime derivatives of $\psi$ contained in the latter 
are interpreted in terms of the obvious computation
that arises from setting
 \begin{equation}
 \label{toattainthatdata}
 \psi|_{\Sigma(\tau_0)} =\uppsi, \qquad n\psi|_{\Sigma(\tau_0)\cap \{r\le R\}} = \uppsi' 
 \end{equation}
 and imposing the equation~\eqref{theequationforlargea} formally. 
Note of course that the expression~\eqref{tobedfined} may well be infinite.

\begin{proposition}[Local well posedness (cf.~\cite{DHRT22}) and Cauchy stability]
\label{localexistencelargea}
Let $(\mathcal{M},g_{a,M})$ be the sub-extremal Kerr manifold of 
Section~\ref{subextremalkerrsec} for $|a|<M$, and 
consider equation~\eqref{theequationforlargea} satisfying the assumptions
of Section~\ref{generalclassconsideredrecalled}.
Fix either $p=0$ or 
$\delta \le p\le 2-\delta$.

There exists a positive integer $k_{\rm loc}\ge 4$ such that the following holds. Let $k\ge k_{\rm loc}$.
Then there exists a positive real constant $C>0$ sufficiently large,
a positive real parameter $\varepsilon_{\rm loc}>0$ sufficiently small, and a decreasing positive function $\tau_{\rm exist}:(0,\varepsilon_{\rm loc})\to \mathbb R$
such that for all smooth initial data $(\uppsi, \uppsi')$ on $\Sigma(\tau_0)$ such that
\[
 \Epk[\uppsi,\uppsi'] \leq \varepsilon_0
 \leq \varepsilon_{\rm loc},
\]
there exists a smooth solution~$\psi$ of~\eqref{theequationforlargea}  
in $\mathcal{R}(\tau_0,\tau_1)$ for $\tau_1:=\tau_0+\tau_{\rm exist}$ attaining
the data~\eqref{toattainthatdata} and satisfying
\begin{equation}
\label{andthisboundedness}
 \Xpk(\tau_0,\tau_1) [\psi] \leq C \varepsilon_0.
\end{equation}
Moreover, $\Sigma(\tau_0)$ is a past Cauchy hypersurface for $\mathcal{R}(\tau_0,\tau_1)$
and
for all $\tau_0\le \tau\le \tau_1$,
 any other smooth $\widetilde\psi$ defined on $\mathcal{R}(\tau_0,\tau)$ satisfying~\eqref{theequationforlargea} in
 $\mathcal{R}(\tau_0,\tau)$ and attaining the initial data, i.e.~satisfying~\eqref{toattainthatdata} (with $\widetilde{\psi}$
replacing $\psi$), coincides with the restriction of $\psi$, i.e.~$\widetilde\psi=\psi|_{\mathcal{R}(\tau_0,\tau)}$.

We have moreover that for $\psi$ above, the finiteness~\eqref{supposeitholds} holds
with $\mathfrak{D}^{\bf k}\psi$ replacing
$\psi$ for all $|{\bf k}|\le k$,
 and thus the
 limiting quantity at $\mathcal{I}^+$  on the left hand side of~\eqref{throwitaway}
is well defined 
(again with $\mathfrak{D}^{\bf k}\psi$ replacing
$\psi$) and the nonnegativity~\eqref{throwitaway} holds.
The energy identity corresponding to such $\Jp_{g_{a,M}}^{\rm total}[\mathfrak{D}^{\bf k}\psi]$
may thus be integrated globally in $\mathcal{R}(\tau_0,\tau_1)$ with this limiting quantity as additional boundary term.

We have in addition the following propagation of higher order regularity and/or higher weighted estimates. 
Given $p$, $k\ge k_{\rm loc}$, $\varepsilon_0\le \varepsilon_{\rm loc}$, $\tau_0$, $\tau_1$  as above, 
$2-\delta \ge p' \ge  p$ if $p\ge \delta$  and either $2-\delta\ge p'\ge \delta$ or $p'=0$ if $p=0$, and  $k'\ge k$, then there exists a constant $C(k',\tau_1-\tau_0)$ such that
for all $\psi$ as above, 
\[
\Xpprimekprime(\tau_0,\tau_1)[\psi]
 \leq C(k',\tau_1-\tau_0) \, \Epprimekprime(\tau_0)[\psi].
\]

We have the following weak global uniqueness statement for
the case $\tau>\tau_1$.  Suppose $\tau>\tau_1$ and we are given a smooth solution
$\widetilde\psi$  on $\mathcal{R}(\tau_0,\tau)$ with the same data such that
we assume moreover  $\Xpk(\tau_0,\tau)[\widetilde\psi] \leq C \varepsilon_0$. Then
$\widetilde\psi$ is a smooth extension of $\psi$ and any other solution 
$\widetilde{\widetilde\psi}$ on
 $\mathcal{R}(\tau_0,\tau)$ with the same initial data 
 moreover satisfies $\widetilde{\widetilde\psi}=\widetilde\psi$.

Finally, we have the following Cauchy stability statement:
Suppose
$\tau_0<\tau<\infty$ and $\widetilde\psi$ is a solution of~\eqref{theequationforlargea}
on $\mathcal{R}(\tau_0,\tau)$ with initial data 
$(\widetilde\uppsi,\widetilde\uppsi')$ satisfying 
\[
 \Epk[\widetilde\uppsi,\widetilde\uppsi'] \leq \varepsilon_0/2, \qquad 
\Xpk(\tau_0,\tau)[\widetilde\psi] \leq C \varepsilon_0/2, 
\]
\[
 \Epkplusone[\widetilde\uppsi,\widetilde\uppsi'] \leq D<\infty,
 \qquad 
  \Epkplusone[\widetilde{\widetilde\uppsi},\widetilde{\widetilde\uppsi}'] \leq D<\infty,
\]
then  there exists an $\varepsilon_{\rm Cauchy}(D) >0$
such that if 
$ \Epk[\widetilde\uppsi-\widetilde{\widetilde\uppsi},\widetilde\uppsi'-\widetilde{\widetilde\uppsi}'] \leq \varepsilon_{\rm Cauchy}$, 
then $\widetilde{\widetilde\psi}$ exists on $\mathcal{R}(\tau_0,\tau)$ and satisfies
 \[
  \Epk[\widetilde{\widetilde\uppsi},\widetilde{\widetilde{\uppsi}}'] \leq \varepsilon_0,\qquad 
\Xpk(\tau_0,\tau)[\widetilde{\widetilde\psi}] \leq C \varepsilon_0.
\]
\end{proposition}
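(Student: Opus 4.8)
The plan is to prove Proposition~\ref{localexistencelargea} as a standard bootstrap argument combining energy estimates with the local existence theory, treating the quasilinear terms by commuting with $\mathfrak{D}^{\bf k}$ and $\mathring{\mathfrak{D}}^{\bf k}$ and exploiting the generalised null structure Assumption~\ref{largeanullcondassumption} in the far region while absorbing everything in $r\le R$ via Cauchy-type smallness. Since all the hard geometric analysis (the integrated local energy decay, the currents, the elliptic improvements) is \emph{not} needed here --- these semi-global statements live on a fixed time-interval of controlled length --- the proof is essentially that of~\cite{DHRT22}, specialised to $g_0=g_{a,M}$, and I would cite the corresponding statements there for the core existence/uniqueness and only write out the adaptations.

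First I would set up the standard existence scheme: given smooth data $(\uppsi,\uppsi')$ on $\Sigma(\tau_0)$ with $\Epk[\uppsi,\uppsi']\le\varepsilon_0$, one constructs the solution $\psi$ of~\eqref{theequationforlargea} by a Picard-type iteration on linear wave equations $\Box_{g(\psi^{(j)},x)}\psi^{(j+1)}=N(\partial\psi^{(j)},\psi^{(j)},x)$, using that $g(\psi,x)$ is a small perturbation of $g_{a,M}$ (so hyperbolicity and the causal structure of $\Sigma(\tau)$, $\underline C_v$, $\mathcal S$ are preserved), and standard energy estimates for the linear operator with the covariant energy identity~\eqref{energyidentity} applied with a multiplier adapted to the slab. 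The bound~\eqref{andthisboundedness} follows by a bootstrap on $\Xpk(\tau_0,\tau_1)[\psi]\le C\varepsilon_0$: the quasilinear error terms from $(\Box_{g_{a,M}}-\Box_{g(\psi,x)})\psi$ are supported in $r\le R/2$ (since $g=g_{a,M}$ for $r\ge R/2$) and are cubic, hence controlled by $\Xzerolesslessk$ times the top order norm; the semilinear terms in $r\le R$ are quadratic and absorbed with a factor of $\tau_{\rm exist}$, while in $r\ge R$ Assumption~\ref{largeanullcondassumption} bounds the right-hand-side contribution by products of master energies with the required $r^p$-weight bookkeeping. Choosing $\tau_{\rm exist}(\varepsilon_0)$ small enough (decreasing in $\varepsilon_0$) closes the estimate. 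Uniqueness in $\mathcal{R}(\tau_0,\tau)$ and the fact that $\Sigma(\tau_0)$ is a past Cauchy hypersurface follow from finite speed of propagation and the domain-of-dependence property for metrics close to $g_{a,M}$.

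Next I would verify the statements specific to this paper's machinery. The finiteness~\eqref{supposeitholds} with $\mathfrak{D}^{\bf k}\psi$ in place of $\psi$ follows by propagation of higher regularity --- which is the propagation-of-higher-order-regularity clause itself, proven by the same iteration commuted with additional derivatives, picking up the constant $C(k',\tau_1-\tau_0)$ because no decay is claimed on a finite slab --- together with the coercivity of $\Jp_{g_{a,M}}^{\rm total}$ at $\mathcal{I}^+$ from Proposition~\ref{boundcoercatIplus}, whose hypotheses~\eqref{supposeitholds} are now met. Thus the energy identity for $\Jp_{g_{a,M}}^{\rm total}[\mathfrak{D}^{\bf k}\psi]$ integrates globally in $\mathcal{R}(\tau_0,\tau_1)$ with the nonnegative $\mathcal{I}^+$ boundary term. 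The weak global uniqueness statement for $\tau>\tau_1$ is a continuation/connectedness argument: the set of $\tau'\in[\tau_1,\tau]$ on which $\widetilde\psi$ agrees with the maximal solution is open (by local uniqueness) and closed (by the a priori bound $\Xpk(\tau_0,\tau)[\widetilde\psi]\le C\varepsilon_0$ preventing blow-up), hence everything.

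Finally, for the Cauchy stability statement, I would estimate the difference $w:=\widetilde\psi-\widetilde{\widetilde\psi}$, which satisfies a linear wave equation $\Box_{g(\widetilde\psi,x)}w = F[w;\widetilde\psi,\widetilde{\widetilde\psi}]$ whose source is linear in $w$ and $\partial w$ with coefficients controlled by the (bounded) norms of $\widetilde\psi,\widetilde{\widetilde\psi}$ up to order $k+1$ (whence the role of the auxiliary bound $D$ on the $\Epkplusone$ norms, needed because the difference equation loses one derivative on the coefficients). A Gronwall-type energy estimate on the fixed slab $\mathcal{R}(\tau_0,\tau)$, with the Gronwall constant depending on $D$ and $\tau-\tau_0$, gives $\Xpk(\tau_0,\tau)[w]\lesssim_{D,\tau} \Epk[\widetilde\uppsi-\widetilde{\widetilde\uppsi},\widetilde\uppsi'-\widetilde{\widetilde\uppsi}']$; choosing $\varepsilon_{\rm Cauchy}(D)$ small enough then forces $\widetilde{\widetilde\psi}$ to remain in the regime where the a priori bound $\Xpk(\tau_0,\tau)[\widetilde{\widetilde\psi}]\le C\varepsilon_0$ and $\Epk[\widetilde{\widetilde\uppsi},\widetilde{\widetilde\uppsi}']\le\varepsilon_0$ persist by a continuity argument, completing the proof. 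The main obstacle --- though it is really bookkeeping rather than a genuine difficulty --- is keeping the $r^p$-weight accounting consistent across the quasilinear and semilinear error terms so that all estimates close with the correct $p$ on both sides, exactly as in Sections~4 and~6 of~\cite{DHRT22}; since $g_0$ is now literally Kerr, no new phenomenon arises and one may invoke those arguments essentially verbatim.
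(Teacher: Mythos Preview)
Your proposal is correct and aligns with the paper's treatment: the paper does not give an independent proof of this proposition but presents it as a review of the corresponding local well-posedness and Cauchy stability statements from~\cite{DHRT22}, specialised to the Kerr background. Your sketch of the underlying bootstrap/Picard argument, the role of the null-structure assumption in the far region, and the Gronwall-type difference estimate for Cauchy stability (with the extra $k{+}1$ regularity bound $D$ to handle the loss of derivative in the coefficients) accurately captures what such a proof involves, and your closing remark that one may invoke the arguments of~\cite{DHRT22} essentially verbatim is exactly the paper's stance.
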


The above Cauchy stability statement will be useful  for the continuity argument applied 
in fixed spacetime
slabs, which for technical convenience will be formulated as continuity \emph{with respect to
data} and not with respect to time.

\subsection{Basic assumptions on $\psi$, the auxiliary $\widehat\psi$ and comparison for top order
energies in a fixed $\mathcal{R}(\tau_0,\tau_1)$} 
\label{additionalenergysec}

In the rest of this section, we fix $\tau_0+2\le  \tau_1$ and consider
a solution~$\psi$ of~\eqref{theequationforlargea}  on $\mathcal{R}(\tau_0,\tau_1)$
satisfying the assumptions of Section~\ref{generalclassconsideredrecalled}
with $a\ne 0$. (We will discuss the $a=0$ case  and the issue
of uniform dependence of our estimates on $a$ in Remark~\ref{aequalszeroremark}.)

We first introduce our basic smallness assumption in Section~\ref{basicassumptionofsmall} (necessary already  for the stability of energy current coercivity to be shown 
in Section~\ref{covariantones}).
We then define in Section~\ref{auxsec} an auxiliary $\widehat\psi$ extending $\psi$ in the future by solving the linear homogeneous wave equation.  This extension will be useful technically. 
Finally, we will define our top order energies in Section~\ref{toporderergiescomp} 
and give some elementary
comparison properties.

\subsubsection{Basic smallness assumption}
\label{basicassumptionofsmall}

We will need to assume an additional smallness assumption on $\psi$. 
For many of the results of this section it would be sufficient to assume that
\[
|\psi|_{C^1(\mathcal{R}(\tau_0,\tau_1)\cap \{r\le R/2\} )} \leq \varepsilon,
\]
which, in view of our assumptions on $g(\psi,x)$ from Section~\ref{generalclassconsideredrecalled}, would imply
\begin{equation}
\label{isaconseq}
|g-g_{a,M}|_{C^1(\mathcal{R}(\tau_0,\tau_1)\cap \{r\le R/2\} )} \lesssim \varepsilon.
\end{equation}
Importantly, for $\varepsilon>0$ sufficiently small, this already implies that
the spacelike character of $\Sigma(\tau)\cap\{r\le R/2\}$ and $\mathcal{S}(\tau_0,\tau_1)$ is retained with respect
to $g$,
and that all induced volume forms with respect to both metrics are comparable.
(Recall that $g(\psi, x)= g_{a,M}$ for $r\ge R/2$.)

For convenience, let us make already the stronger assumption 
\begin{equation}
\label{newbasicbootstrap}
\Xzerolesslessk(\tau_0,\tau_1)[\psi] \leq\varepsilon
\end{equation}
for suitably small $\varepsilon$, 
and for a large $k$, large enough such that $\lesslessk$ is itself sufficiently large
such that~\eqref{isaconseq} is a consequence
of~\eqref{newbasicbootstrap} and the Sobolev estimate of Proposition~\ref{largeasobolevforfunctions}.

\subsubsection{The auxiliary $\widehat\psi$ and the definition of $\widehat\psi_{n,{\bf k}}$}
\label{auxsec}

Given $\psi$ as above, 
we define $\widehat\psi$ as follows.
First, let us define
\[
\hat{g}:= g(\chi^2_{\tau_1}\psi, x), \qquad
\hat{F}:=N(\partial \chi^2_{\tau_1} \psi,  \chi^2_{\tau_1} \psi, x) 
\]
where $\chi_{\tau_1}: =1$ for $\tau\le \tau_1-1$ and $\chi_{\tau_1}: =0$ for $\tau\ge \tau_1$,
and so that $\chi_{\tau_1}(\tau-\tau_1)= \chi_{\tau_1'}(\tau-\tau_1')$.
Now let $\widehat\psi$ be the unique solution in $\mathcal{R}(\tau_0,\infty)$ 
of the initial value problem for
the linear equation
\[
\Box_{\hat{g}}\widehat \psi= \hat{F},
\]
with initial data on $\Sigma(\tau_0)$ coincinding with the data of $\psi$.
Note that in the region where $\chi_{\tau_1}=1$, we have $\widehat\psi=\psi$.
while in the region $\chi_{\tau_1}=0$ then $\widehat\psi$ solves the linear homogeneous
equation $\Box_{g_{a,M}}\widehat\psi=0$.

Finally, we fix 
\[
\widehat\tau_1:=\tau_1+L_{\rm long}[\psi],
\]
for an $L_{\rm long}$ (depending on $\psi$!)~to be determined later (immediately before Remark~\ref{Ldetermineafter}),  and
we define
$\widehat\psi_{n, {\bf k}}$  by~\eqref{thenpackets}, but replacing $\psi$ with $\widehat\psi$
and replacing $\chi_{\tau_0,\tau_1}$ with $\chi_{\tau_0,\widehat{\tau}_1}$, i.e.
\begin{equation}
\label{newpsinkdef}
\widehat\psi_{n,{\bf k}} : = P_n( \mathring{\mathfrak{D}}^{\bf k} \chi^2 _{\tau_0,\widehat{\tau}_1} \widehat\psi ).
\end{equation}

 Let us note that
 \begin{proposition}
 \label{whenoneextends}
With $\psi$ and $\widehat\psi$ as above, we have
 \[
  {}^{\chi}\Xpk(\tau_0,\widehat\tau_1)[\widehat\psi] 
 \lesssim
 {}^{\chi}\Xpk(\tau_0,\tau_1)[\psi] ,
 \qquad
 \sup_{\tau_0\le \tau\le \widehat\tau_1}\Epk(\tau) [\widehat\psi] 
 \lesssim \sup_{\tau_0\le \tau\le \tau_1}\Epk (\tau) [\psi] , 
 \]
 \[
   {}^{\chi}\Xpk(\tau_0,\widehat\tau_1)[\widehat\psi] +
  {}^{\natural}\Xk(\tau_0,\widehat\tau_1)[\widehat\psi] 
 \lesssim
  {}^{\chi}\Xpk(\tau_0,\tau_1)[\psi]+
 {}^{\natural}\Xk(\tau_0,\tau_1)[\psi] ,\qquad
{}^{\chi\natural}_{\scalebox{.6}{\mbox{\tiny{\boxed{+1}}}}}\Xpk (\tau_0,\widehat\tau_1)[\widehat\psi] 
 \lesssim
{}^{\chi\natural}_{\scalebox{.6}{\mbox{\tiny{\boxed{+1}}}}}\Xpk(\tau_0,\tau_1)[\psi] ,
 \]
 \[
\Xpkminusone(\tau_0,\widehat\tau_1)[\widehat\psi] 
 \lesssim
\Xpk(\tau_0,\tau_1)[\psi] ,\qquad 
 \Xpkminusone{}^*(\tau_0,\widehat\tau_1)[\widehat\psi] 
 \lesssim \Xpk{}^*(\tau_0,\tau_1)[\psi],
 %{}^\rho\Xpk{}^*(\tau_0, ,\widehat\tau_1)[\widehat\psi]
 % \lesssim \Xpk{}^*(\tau_0,\tau_1)[\psi],\qquad
 \]
where the implicit constants are in particular \underline{independent} of the choice of $L_{\rm long}$.
 \end{proposition}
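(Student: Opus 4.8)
The plan is to reduce every inequality in Proposition~\ref{whenoneextends} to two elementary observations about $\widehat\psi$: first, that $\widehat\psi=\psi$ on $\mathcal{R}(\tau_0,\tau_1-1)$, and second, that on $\mathcal{R}(\tau_1-1,\widehat\tau_1)$ the function $\widehat\psi$ solves either $\Box_{\hat g}\widehat\psi=\hat F$ (on the transition region $\tau_1-1\le\tau\le\tau_1$) or the homogeneous linear equation $\Box_{g_{a,M}}\widehat\psi=0$ (for $\tau\ge\tau_1$). For the homogeneous part we have at our disposal Theorem~\ref{blackboxoneforkerr} (with $F=0$, any $k$, any admissible $p$) and its refinement Theorem~\ref{refinedblackboxoneforkerr}, applied on slabs $\mathcal{R}(\tau_1-1,\tau)$ with data on $\Sigma(\tau_1-1)$; these give, for free and with constant independent of how long the slab is, bounds of the schematic form $[\text{master energy}](\tau_1-1,\tau)[\widehat\psi]\lesssim \Epk(\tau_1-1)[\widehat\psi]$. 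The key point that makes the $L_{\rm long}$-independence work is precisely that the right-hand sides of Theorems~\ref{blackboxoneforkerr} and~\ref{refinedblackboxoneforkerr} for a homogeneous solution involve \emph{only} the initial flux $\Epk(\tau_0)$ with an $O(1)$ constant, with no factor growing in the time-length of the region.

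First I would handle the transition region $\mathcal{R}(\tau_1-1,\tau_1)$ (of fixed unit time-length, so any constants here are harmless): since $\chi^2_{\tau_1}\psi$ differs from $\psi$ only on $\tau_1-1\le\tau\le\tau_1$ where $\psi$ is already controlled by the basic smallness~\eqref{newbasicbootstrap} and $\hat g=g(\chi^2_{\tau_1}\psi,x)$, $\hat F = N(\partial\chi^2_{\tau_1}\psi,\chi^2_{\tau_1}\psi,x)$ are smooth with $\hat g=g_{a,M}$ for $r\ge R/2$ and with all derivatives controlled by the $C^k$ norm of $\psi$ (hence by $\Xzerolesslessk[\psi]$ via Proposition~\ref{largeasobolevforfunctions}), the local existence/Cauchy-stability statement Proposition~\ref{localexistencelargea} applied to the \emph{linear} equation $\Box_{\hat g}\widehat\psi=\hat F$ on the unit slab gives $\Epk(\tau_1)[\widehat\psi]\lesssim \Epk(\tau_1-1)[\widehat\psi]=\Epk(\tau_1-1)[\psi]\lesssim \sup_{[\tau_0,\tau_1]}\Epk(\tau)[\psi]$, with constant depending only on the unit time-length and on the (small, fixed) bound for $\psi$. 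Then on $\mathcal{R}(\tau_1,\widehat\tau_1)$, where $\Box_{g_{a,M}}\widehat\psi=0$ exactly, I apply Theorem~\ref{blackboxoneforkerr} with data $\Epk(\tau_1)[\widehat\psi]$ to get $\sup_{[\tau_1,\widehat\tau_1]}\Epk(\tau)[\widehat\psi]+\int_{\tau_1}^{\widehat\tau_1}{}^\chi\Epminusonek'\,d\tau'+\cdots\lesssim\Epk(\tau_1)[\widehat\psi]$, with constant independent of $L_{\rm long}=\widehat\tau_1-\tau_1$; for the ${}^{\natural}\Xk$ and ${}^{\rm elp}_{\scalebox{.6}{\mbox{\tiny{\boxed{+1}}}}}\Xk$ pieces of ${}^{\chi\natural}_{\scalebox{.6}{\mbox{\tiny{\boxed{+1}}}}}\Xpk$ I invoke Theorem~\ref{refinedblackboxoneforkerr} in the same way. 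Summing the contributions of the three regions $\mathcal{R}(\tau_0,\tau_1-1)$, $\mathcal{R}(\tau_1-1,\tau_1)$, $\mathcal{R}(\tau_1,\widehat\tau_1)$ — and using the additivity of the integral-in-$\tau$ parts of the master energies over consecutive slabs, together with $\sup$ over the union being bounded by the max of the sups — yields each of the displayed inequalities; the first two follow from the $p$-weighted version of Theorem~\ref{blackboxoneforkerr}, the third adds the refined Theorem~\ref{refinedblackboxoneforkerr}, and the fourth and fifth (the $\Xpkminusone$, $\Xpkminusone{}^*$ bounds, which lose a derivative and sit under $\Xpk$ on the right) follow the same way after noting $\Xpkminusone\lesssim{}^\chi\Xpk$ and using~\eqref{couldrefertothisexpl} for the starred energy.

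The one genuine subtlety — and what I expect to be the main obstacle — is the \emph{non-locality} of the frequency-localised quantities ${}^{\natural}\Xk$ and ${}^{\rm elp}_{\scalebox{.6}{\mbox{\tiny{\boxed{+1}}}}}\Xk$: as emphasised in~\eqref{donthavethiswithexact}, these are defined via $\mathfrak{C}\circ\mathfrak{F}_{BL}$ applied to $\mathring{\mathfrak{D}}^{\bf k}\chi^2_{\tau_0,\widehat\tau_1}\widehat\psi$, so one cannot literally split them as a sum over subslabs. The resolution is to apply Theorem~\ref{refinedblackboxoneforkerr} directly on the \emph{whole} slab $\mathcal{R}(\tau_0,\widehat\tau_1)$ to $\widehat\psi$, treating $\widehat\psi$ there as a solution of $\Box_{g_{a,M}}\widehat\psi=F_{\widehat\psi}$ with $F_{\widehat\psi}:=\Box_{g_{a,M}}\widehat\psi$ supported in $\tau_1-1\le\tau\le\tau_1$ (where it equals $(\Box_{g_{a,M}}-\Box_{\hat g})\widehat\psi+\hat F$); the inhomogeneity is then confined to a unit-length region $\{r\le R/2\}$ and is controlled in the relevant $\widetilde{\mathfrak D}^{\bf k}$-norms by $\sup_{[\tau_0,\tau_1]}\Epk(\tau)[\psi]$ using the already-proven bound $\Epk(\tau_1)[\widehat\psi]\lesssim\sup_{[\tau_0,\tau_1]}\Epk[\psi]$ and elliptic estimates (Proposition~\ref{ellproplargea}) to convert the full-derivative data flux into the mixed $\mathring{\mathfrak D}$-plus-$\mathfrak D$ control that Theorem~\ref{refinedblackboxoneforkerr}'s right-hand side demands. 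Crucially, the constant in Theorem~\ref{refinedblackboxoneforkerr} is again independent of the slab length, so this produces ${}^{\chi\natural}_{\scalebox{.6}{\mbox{\tiny{\boxed{+1}}}}}\Xpk(\tau_0,\widehat\tau_1)[\widehat\psi]\lesssim {}^{\chi\natural}_{\scalebox{.6}{\mbox{\tiny{\boxed{+1}}}}}\Xpk(\tau_0,\tau_1)[\psi]$ with $L_{\rm long}$-independent implicit constant, and similarly for the plain ${}^\chi\Xpk$ and the $\natural$ and $\rm elp$ pieces separately. The remaining inequalities are then immediate from the nesting relations among the master energies recalled in Section~\ref{masterenergiessec} and~\eqref{weseeeasilythatthis}.
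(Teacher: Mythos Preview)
Your overall strategy---reduce to Theorem~\ref{refinedblackboxoneforkerr}, exploit that $\widehat\psi$ solves the homogeneous equation $\Box_{g_{a,M}}\widehat\psi=0$ for $\tau\ge\tau_1$, and treat the transition strip $[\tau_1-1,\tau_1]$ via local existence---is exactly what the paper intends by its one-line proof, and your handling of the \emph{local} master energies (${}^\chi\Xpk$, $\sup\Epk$, $\Xpkminusone$, $\Xpkminusone{}^*$) via the three-subslab splitting is correct.

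There is however a genuine error in your treatment of the non-local ${}^\natural\Xk$ and ${}^{\rm elp}_{\scalebox{.6}{\mbox{\tiny{\boxed{+1}}}}}\Xk$ pieces. You apply Theorem~\ref{refinedblackboxoneforkerr} on the \emph{whole} slab $\mathcal{R}(\tau_0,\widehat\tau_1)$ and assert that $F_{\widehat\psi}:=\Box_{g_{a,M}}\widehat\psi$ is supported in $\tau_1-1\le\tau\le\tau_1$. This is false: on $[\tau_0,\tau_1-1]$ one has $\widehat\psi=\psi$ solving the nonlinear equation, so $\Box_{g_{a,M}}\widehat\psi=(\Box_{g_{a,M}}-\Box_g)\psi+N(\partial\psi,\psi,x)\ne 0$ there. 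The inhomogeneity is therefore supported on all of $[\tau_0,\tau_1]$, and your ``unit-length region'' bound of the $\int|\mathfrak D^{\bf k}F_{\widehat\psi}|^2$ terms by $\sup\Epk[\psi]$ collapses. Worse, even with the correct support, the top-order part of $\mathfrak D^{\bf k}F_{\widehat\psi}$ (for $|{\bf k}|=k$) carries, via the quasilinear piece $(g_{a,M}^{\mu\nu}-\hat g^{\mu\nu})\partial_\mu\partial_\nu\mathfrak D^{\bf k}\widehat\psi$, a factor $\varepsilon\,|\partial^{k+2}\widehat\psi|$, which is one order beyond what ${}^{\chi\natural}_{\scalebox{.6}{\mbox{\tiny{\boxed{+1}}}}}\Xpk[\psi]$ controls; the whole-slab route thus suffers a genuine loss of derivative at top order.

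The fix is to avoid the inhomogeneous application altogether. Exploit instead that ${}^\natural\Xk$ (and likewise ${}^{\rm elp}_{\scalebox{.6}{\mbox{\tiny{\boxed{+1}}}}}\Xk$) is a fixed quadratic form $Q$ in the Schwartz input, so that $Q[\chi^2_{\tau_0,\widehat\tau_1}\widehat\psi]\lesssim Q[\chi^2_{\tau_0,\tau_1}\psi]+Q[\chi^2_{\tau_1,\widehat\tau_1}\widehat\psi]+Q[\Xi]$, where $\Xi$ is supported on an interval of length $\lesssim 2$ near $\tau=\tau_1$. The first summand is exactly ${}^\natural\Xk(\tau_0,\tau_1)[\psi]$. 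For the second, apply Theorem~\ref{refinedblackboxoneforkerr} on $[\tau_1,\widehat\tau_1]$ with $F\equiv 0$ (now truly homogeneous), giving $Q[\chi^2_{\tau_1,\widehat\tau_1}\widehat\psi]\lesssim\Epk(\tau_1)[\widehat\psi]+{}^\chi\Xzerok(\tau_1,\widehat\tau_1)[\widehat\psi]$, both of which are controlled by $\sup_{[\tau_0,\tau_1]}\Epk[\psi]$ via your own transition-strip argument. For the third, drop $\chi_\natural\le 1$ and use Plancherel to bound $Q[\Xi]$ by a length-$\lesssim 2$ spacetime integral of $k{+}1$ derivatives, hence again by $\sup\Epk[\psi]$. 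This route never sees $\partial^{k+2}$ and produces the $L_{\rm long}$-independent bound.
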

 \begin{proof}
 This follows easily from Theorem~\ref{refinedblackboxoneforkerr}.
 \end{proof}

 In view of the above, we shall primarily use the quantities relating to $\psi$ on the right
 hand side of estimates.
 In particular, ${}^{\chi}\Xpk(\tau_0,\tau_1)$, etc., will always denote
 ${}^{\chi}\Xpk(\tau_0,\tau_1)[\psi]$, unless indicated otherwise.

\begin{remark}
As discussed in Section~\ref{toporderenerintro},
the reason for introducing the future extension $\widehat\psi$ is that our top order estimates will require going to a late time in the future in order to estimate an earlier time.
See already Lemma~\ref{lemmaformainestimate}. Note that by~\cite{partiii}, it follows
that $\widehat\psi$
decays to zero at fixed $r$ as $\tau\to \infty$.
In principle, one could work directly with $\widehat\psi$ cut off only near $\tau_0$,
 but note that this decay
would generically only be at some fixed inverse power law rate.
The additional cutoff at $\widehat{\tau}_1$ is simply to ensure that $\chi^2 _{\tau_0,\widehat{\tau}_1}\widehat\psi$ lies in the Schwartz class.
\end{remark}

\begin{remark}
\label{independenceofLlong}
We emphasise that $L_{\rm long}=L_{\rm long}[\psi]$ depends on the solution $\psi$ and is thus 
\underline{not} to be considered a parameter in the
sense of Section~\ref{noteonconstants}. 
Thus, in particular, the implicit constants in the inequalities in the above 
proposition and all inequalities to follow will \underline{always} be independent of  $L_{\rm long}$ (and $\widehat{\tau}_1=\tau_1+L_{\rm long}$), even though the precise values
of~\eqref{newpsinkdef} of course depend on  $L_{\rm long}$.
\end{remark}

 \subsubsection{Top order energies and a comparison theorem}
 \label{toporderergiescomp}

In addition to the energies of~\cite{DHRT22} reviewed in Section~\ref{reviewedenergynotations}, 
we will need
an additional energy to replace the $\mathfrak{E}$ energies of~\cite{DHRT22}.
These energies have been discussed already informally in Section~\ref{toporderenerintro}.

 Recall from Section~\ref{generalisedcoerc}
 that the boundary terms of $\Jp^{{\rm total}, n}$ on $\Sigma(\tau)$  enjoy coercivity properties
 in the region $\widetilde{\mathcal{D}}_n \cup \{r \ge R_{\rm freq} \}$
 whereas the bulk term $\Kp^{{\rm total},n}$ enjoy coercivity properties in 
 $ \{r \le R_{\rm freq} \} \setminus  \widetilde{\mathcal{D}}_n$.

(Recall always that for generalised superradiant ranges, we have 
$\widetilde{\mathcal{D}}_n =\emptyset$.)

In anticipation of the coercive terms in our energy identities, 
we define 
\begin{align}
\label{firstfancydef}
\Efancypk{}_{{\rm bdry}}(\tau)[\widehat\psi] &:=\sum_n \sum_{|{\bf k}|=k}
 \int_{\widetilde{\mathcal{D}}_n \cap \Sigma(\tau)} |L\widehat\psi_{n, {\bf k}}|^2
+|\underline{L}\widehat\psi_{n, {\bf k}}|^2  +|\nablaslash\widehat\psi_{n, {\bf k}}|^2  \\
\nonumber
&\qquad\qquad
+  \int_{\{r \ge R_{\rm freq} \} \cap \Sigma(\tau)} r^p| r^{-1}L(r{\mathfrak{D}}^{\bf k}\widehat\psi)|^2 +r^{\frac{p}2}|L\mathfrak{D}^{\bf k} \widehat\psi|^2 +|\slashed{\mathfrak{D}}^{\bf k}\nabla \widehat \psi|^2
+r^{\frac{p}2-2}|{\mathfrak{D}}^{\bf k} \widehat\psi|^2  +\iota_{r\le R}| \underline{L}{\mathfrak{D}}^{\bf k} \widehat\psi|^2 ,\\
\label{secondfancydef}
\Efancypk{}_{{\rm bulk}}(\tau)[\widehat\psi] &: =\sum_n \sum_{|{\bf k}|=k} \int_{(\{r_0\le r\le R_{\rm freq}\} \setminus \widetilde{\mathcal{D}}_n )\cap  \Sigma(\tau)} |L \widehat\psi_{n, {\bf k}}|^2 
+|\underline{L} \widehat\psi_{n, {\bf k}}|^2  +|\nablaslash \widehat\psi_{n, {\bf k}}|^2 \\
\nonumber
&\qquad +
\int_{\{r\ge R_{\rm freq}\} \cap\Sigma(\tau) }
r^{p-1} ( 
 |r^{-1}Lr{\mathfrak{D}}^{\bf k} \widehat\psi|^2 + | L{\mathfrak{D}}^{\bf k}\widehat\psi|+|\nablaslash{\mathfrak{D}}^{\bf k} \widehat\psi|^2) + r^{-1-\delta}  |\underline L {\mathfrak{D}}^{\bf k} \widehat\psi|^2
+ r^{p-3} |{\mathfrak{D}}^{\bf k} \widehat\psi|^2\\
\nonumber
&\qquad
+\epsilon_{\rm redder}  \int_{ \{r \le r_1\} \cap\Sigma(\tau) } |L
\mathfrak{D}^{\bf k} \widehat\psi|^2 
+|\underline{L}\mathfrak{D}^{\bf k} \widehat\psi|^2  +|\nablaslash \mathfrak{D}^{\bf k} \widehat\psi|^2,\\
\label{fancydef}
\Efancypk (\tau)[\widehat\psi] &:= \Efancypk{}_{\rm bdry}(\tau)[\widehat\psi]  +\Efancypk{}_{\rm bulk}(\tau) [\widehat\psi],
\end{align}
where $\epsilon_{\rm redder}>0$ is a sufficiently small parameter to be chosen later.
We note finally that we are borrowing here the $\mathfrak{E}$ notation from~\cite{DHRT22},
despite not tracking
exact constants (i.e.~these energies are not defined by the fluxes themselves), to emphasise that (as in~\cite{DHRT22}) the estimates for these energies will be derived
by applying an energy identity with
respect to the operator $\Box_{g(\psi,x)}$.
Note in particular, however, that as these do not represent exact fluxes,  the volume forms
above are here for convenience taken with respect to $g_{a,M}$, and not $g$,
unlike in~\cite{DHRT22}.

Let us note finally that the above definitions~\eqref{firstfancydef} and~\eqref{secondfancydef} are 
global: They depend in particular also on the choice of fixed $\mathcal{R}(\tau_0,\tau_1)$
one is considering, and moreover, they also depend of course
on $L_{\rm long}$ determining   $\widehat\tau_1$ and thus $\widehat\psi$.

Because the above energies are defined only with the commutation operators $\mathring{\mathfrak{D}}$ and $\mathfrak{D}$, they are not pointwise
coercive as they do not span derivatives in all directions. 
Nonetheless,  we have the following:
\begin{proposition}
\label{comparisonenergies}
Let $k\ge k_{\rm prelim}$ for $k_{\rm prelim}$ sufficiently large, and let
$\psi$ be a smooth solution of~\eqref{theequationforlargea}  on $\mathcal{R}(\tau_0,\tau_1)$
satisfying~\eqref{newbasicbootstrap} for sufficiently small $\varepsilon$. 
Then, defining $\widehat\psi_{n,{\bf k}}$
by~\eqref{newpsinkdef} for arbitrary $L_{\rm long}$, and defining $\Efancypk(\tau)$ by~\eqref{fancydef},
 we 
have for all $\tau_0\le \tau\le \widehat\tau_1-1$, the inequality
\begin{equation}
\label{fromthisfirstpoint}
\Epk(\tau)[\widehat\psi]
 \lesssim \Epk(\tau_0)[\psi] +  {}^\chi \Xpkminusone(\tau_0,\tau_1)[\psi] + 
\Efancypk (\tau)[\widehat\psi]
\end{equation}
and for all $\tau_0+1\le \tau \le \widehat\tau_1-1$, the inequality
\begin{equation}
\label{secondineqhere}
{\Ezerok}{}_{r\le r_1}[\widehat\psi]+
{}^\chi \Epminusonek'(\tau)[\widehat\psi] \lesssim \Efancypk{}_{{\rm bulk}}(\tau)[\widehat\psi] + {}^\chi \Epminusonekminusone'(\tau)[\widehat\psi].
\end{equation}
\end{proposition}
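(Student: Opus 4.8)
The two inequalities~\eqref{fromthisfirstpoint} and~\eqref{secondineqhere} relate the concretely-defined energies $\Epk$, ${}^\chi\Epminusonek'$, ${\Ezerok}_{r\le r_1}$ (built from the spanning set of commutators $\widetilde{\mathfrak{D}}$) to the new energies $\Efancypk$, $\Efancypk{}_{{\rm bulk}}$, which are built only from the untilded $\mathfrak{D}$ and the Killing $\mathring{\mathfrak{D}}$ applied to the wave packets $\widehat\psi_{n,{\bf k}}$. The plan is to reduce everything to three ingredients: (i) the Plancherel-type completeness of the projections, $\sum_n P_n = \mathrm{Id}$ (cf.~\eqref{theysumtoid}), which lets one reconstruct $\mathfrak{D}^{\bf k}\chi^2_{\tau_0,\widehat\tau_1}\widehat\psi$ (and hence, up to commutators, $\widehat\psi$ itself on the bulk of $\mathcal{R}$) from the $\widehat\psi_{n,{\bf k}}$; (ii) the elliptic estimates of Proposition~\ref{ellproplargea}, which upgrade control of $\mathring{\mathfrak{D}}^{\bf k}\mathfrak{D}^{{\bf k}'}\widehat\psi$ with $|{\bf k}'|=1$ to control of all $\widetilde{\mathfrak{D}}$-derivatives of order $k+1$, modulo lower-order terms and terms involving $\Box_{g_{a,M}}\widehat\psi$; and (iii) the pseudodifferential calculus of Section~\ref{elementarycalculus} (Propositions~\ref{mixedsobolevboundednessprop},~\ref{commutatorestimate},~\ref{forpseudoerrors}), used to absorb the errors generated by commuting $P_n$ and the cutoffs past $\mathfrak{D}^{\bf k}$ and past multiplication operators into lower-order quantities.

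First I would prove~\eqref{secondineqhere}. Fix $\tau$ with $\tau_0+1\le\tau\le\widehat\tau_1-1$, so that $\chi_{\tau_0,\widehat\tau_1}=1$ near $\Sigma(\tau)$ and $\widehat\psi_{n,{\bf k}}$ agrees (modulo the projection $P_n$) with $\mathring{\mathfrak{D}}^{\bf k}\widehat\psi$ on a neighbourhood of $\Sigma(\tau)$. Decompose $\Sigma(\tau)\cap\{r_0\le r\le R_{\rm freq}\}$ into the pieces $\widetilde{\mathcal{D}}_n$ and their complements. On $\{r_0\le r\le R_{\rm freq}\}\setminus\widetilde{\mathcal{D}}_n$ the integrand $|L\widehat\psi_{n,{\bf k}}|^2+|\underline L\widehat\psi_{n,{\bf k}}|^2+|\nablaslash\widehat\psi_{n,{\bf k}}|^2$ is exactly what appears in $\Efancypk{}_{\rm bulk}$; summing over $n$ and using $\sum_n P_n=\mathrm{Id}$ together with Plancherel (and Proposition~\ref{forpseudoerrors} to dispose of commutator pieces supported in $\mathcal{F}_n$) reconstructs $\sum_{|{\bf k}|\le k}|\mathring{\mathfrak{D}}^{\bf k}\mathfrak{D}^{{\bf k}'}\widehat\psi|^2$ with $|{\bf k}'|=1$ over that $r$-region, modulo $ {}^\chi\Epminusonekminusone'(\tau)[\widehat\psi]$ and $ {}^\natural\Xkminusone$-type lower-order terms which are themselves $\lesssim {}^\chi\Epminusonekminusone'$. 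On $\widetilde{\mathcal{D}}_n$ itself, note $\chi_n=0$ there, so the bulk energy carries no weight — but this region is compactly contained in $r_+<r<R_{\rm pot}$, so one applies the interior elliptic estimate~\eqref{ellestimnoYforlargea} with a slightly larger surrounding annulus (where $\chi_n=1$ and hence $\Efancypk{}_{\rm bulk}$ does control the relevant $\mathring{\mathfrak{D}}^{\bf k}\mathfrak{D}^{{\bf k}'}$ quantities), picking up the inhomogeneity $\sum_{|{\bf k}|\le k-1}|\widetilde{\mathfrak{D}}^{\bf k}\Box_{g_{a,M}}\widehat\psi|^2$. Since $\Box_{g_{a,M}}\widehat\psi = (\Box_{g_{a,M}}-\Box_{\hat g})\widehat\psi + \hat F$, the support of the difference operator and of $\hat F$ lies where $\chi_{\tau_1}\ne 1$, i.e.~near $\tau_1$, and for $\tau\ge\tau_0+1$ away from a neighbourhood of $\tau_1$ this vanishes; the weak smallness~\eqref{newbasicbootstrap} bounds the remaining contributions by lower-order ${}^\chi\Epminusonekminusone'$ terms. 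The $r\ge R_{\rm freq}$ and $r\le r_1$ pieces of $\Efancypk{}_{\rm bulk}$ literally contain the corresponding pieces of ${}^\chi\Epminusonek'$ and $\epsilon_{\rm redder}{\Ezerok}_{r\le r_1}$, so these are immediate (the $\epsilon_{\rm redder}$ factor is why ${\Ezerok}_{r\le r_1}$ appears on the left with an implicit constant — one absorbs $\epsilon_{\rm redder}^{-1}$ into $\lesssim$).

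For~\eqref{fromthisfirstpoint} the strategy is the same but on the boundary: for $\tau_0\le\tau\le\widehat\tau_1-1$ one must recover $\Epk(\tau)[\widehat\psi]$, which in the region $r\le R$ is $\sum_{|{\bf k}|\le k}\int_{\Sigma(\tau)}|L\widetilde{\mathfrak{D}}^{\bf k}\widehat\psi|^2+\ldots$. In $\{r\ge R_{\rm freq}\}$ the $r^p$-weighted terms of $\Efancypk{}_{\rm bdry}$ already match (using that $\widetilde{\mathfrak{D}}$ and $\mathfrak{D}$ coincide there up to the cutoffs $\zeta$); on $\widetilde{\mathcal{D}}_n$, $\Efancypk{}_{\rm bdry}$ controls the full first-jet of $\widehat\psi_{n,{\bf k}}$, and summing over $n$ plus Plancherel recovers $\mathring{\mathfrak{D}}^{\bf k}$-derivatives there; on the remaining compact piece $\{r_0\le r\le R_{\rm freq}\}\setminus\bigcup_n\widetilde{\mathcal{D}}_n$ and on $\widetilde{\mathcal D}_n$, apply~\eqref{ellestimnoYforlargea} to promote $\mathring{\mathfrak{D}}^{\bf k}\mathfrak{D}^{{\bf k}'}$-control to $\widetilde{\mathfrak{D}}^{k+1}$-control, absorbing the $\le k-1$ order inhomogeneous and lower-order boundary terms into $\Epk(\tau_0)[\psi]+ {}^\chi\Xpkminusone(\tau_0,\tau_1)[\psi]$ — here the Proposition~\ref{whenoneextends} comparison and the black-box estimate Theorem~\ref{blackboxoneforkerr} (for $\widehat\psi$, which solves $\Box_{g_{a,M}}\widehat\psi=0$ for $\tau\ge\tau_1$) are used to bound the lower-order/error contributions at time $\tau$ by initial data and $ {}^\chi\Xpkminusone$. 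Throughout, the pseudodifferential commutator estimates handle $[P_n,\mathring{\mathfrak{D}}^{\bf k}]=0$ (exact, since both are Killing) and $[P_n,\chi_{\tau_0,\widehat\tau_1}]$, $[P_n,\mu]$ for metric coefficients $\mu$, which are order $-1$ and thus lower-order.

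\textbf{Main obstacle.} The delicate point is bookkeeping the regions $\widetilde{\mathcal{D}}_n$ where $\chi_n$, and hence the bulk energy $\Efancypk{}_{\rm bulk}$, degenerates: there one has no coercive bulk control and must rely purely on the elliptic estimate, which is legitimate only because $T+\alpha_n\Omega_1$ is timelike there and the boundary energy $\Efancypk{}_{\rm bdry}$ is coercive on $\widetilde{\mathcal D}_n$ (inequality~\eqref{coercofboundintrohere}/\eqref{stillhavethis}) — so~\eqref{secondineqhere}, which only has $\Efancypk{}_{\rm bulk}$ on its right-hand side, genuinely needs the elliptic estimate to feed off of the bulk energy in the surrounding annulus where $\chi_n=1$, and one must check that the enlargement $r'_{n,1}<r_{n,1}$, $r_{n,2}<r'_{n,2}$ indeed leaves room for the elliptic estimate's buffer annulus inside $\{\chi_n=1\}$. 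The second subtlety is ensuring that the $\le k-1$ order terms $|\widetilde{\mathfrak{D}}^{\bf k}\Box_{g_{a,M}}\widehat\psi|^2$ produced by the elliptic estimates really are absorbable: this works because $\Box_{g_{a,M}}\widehat\psi$ is supported in $r\le R/2$ (where $g\ne g_{a,M}$ is possible) times the cutoff region near $\tau_1$, and one invokes~\eqref{newbasicbootstrap} to see $(\Box_{g_{a,M}}-\Box_{\hat g})\widehat\psi$ is $O(\varepsilon)$ times first derivatives, contributing only lower-order ${}^\chi\Epminusonekminusone'$ once $k$ is large enough that $\lesslessk\le k-1$.
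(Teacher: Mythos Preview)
Your treatment of \eqref{secondineqhere} rests on a confusion between the $n$-dependent functions $\chi_n$ and the single degeneration function $\chi$ of \eqref{chidefimportant} that actually appears in ${}^\chi\Epminusonek'$. Since $\chi\equiv 0$ on all of $(r_{\rm pot},R_{\rm pot})\supset\bigcup_n\widetilde{\mathcal{D}}_n$ (see \eqref{intervalsherereference}, \eqref{defofenlarged}), on the support of $\chi$ one is automatically in $\{r_0\le r\le R_{\rm freq}\}\setminus\widetilde{\mathcal{D}}_n$ for \emph{every} $n$; there, because $\sum_n\widehat\psi_{n,{\bf k}}=\mathring{\mathfrak{D}}^{\bf k}\widehat\psi$ on $\Sigma(\tau)$ for the allowed $\tau$, the first line of $\Efancypk{}_{{\rm bulk}}$ dominates $\chi\,|L\mathring{\mathfrak{D}}^{\bf k}\widehat\psi|^2$ etc.\ by Cauchy--Schwarz on the finite sum over $n$. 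The $r\ge R_{\rm freq}$ and $r\le r_1$ lines of $\Efancypk{}_{{\rm bulk}}$ handle those regions directly (the latter also giving ${\Ezerok}_{r\le r_1}$, since $\mathfrak{D}$ and $\widetilde{\mathfrak{D}}$ both span in $r\le r_1$), and all sub-top-order terms of ${}^\chi\Epminusonek'$ are literally ${}^\chi\Epminusonekminusone'$. This is what the paper means by ``simply the triangle inequality''; your interior-elliptic argument on $\widetilde{\mathcal{D}}_n$ is addressing a region where there is nothing to bound. Separately, your claim that $\Box_{g_{a,M}}\widehat\psi$ is supported only near $\tau=\tau_1$ is wrong: for $\tau\le\tau_1-1$ one has $\chi_{\tau_1}=1$, so $\hat g=g(\psi,x)\ne g_{a,M}$ and $\hat F=N(\partial\psi,\psi,x)\ne0$ wherever $\psi\ne0$ in $r\le R/2$.

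For \eqref{fromthisfirstpoint} your outline is in line with the paper's (triangle inequality, elliptic estimates of Proposition~\ref{ellproplargea}, Sobolev, Proposition~\ref{whenoneextends}). Note that here $\Efancypk=\Efancypk{}_{{\rm bdry}}+\Efancypk{}_{{\rm bulk}}$, and it is the \emph{boundary} piece \eqref{firstfancydef} that already supplies coercive control on each $\widetilde{\mathcal{D}}_n$---so once again no separate annular elliptic trick is needed to cover those sets; the elliptic step serves only to pass from the $\mathring{\mathfrak{D}}^{\bf k}$ (and near-$\mathcal{S}$ and far-region $\mathfrak{D}^{\bf k}$) content of $\Efancypk$ to full $\widetilde{\mathfrak{D}}$-control.
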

\begin{proof}
Inequality~\eqref{fromthisfirstpoint} follows from the triangle inequality, the Sobolev inequality of 
Proposition~\ref{largeasobolevforfunctions} and the elliptic estimates of 
Proposition~\ref{ellproplargea} (cf.~Corollaries 4.3.3 and 4.5.3 of~\cite{DHRT22})
and  Proposition~\ref{whenoneextends}.

Inequality~\eqref{secondineqhere} on the other hand follows simply from the triangle inequality.
\end{proof}

We also have a reverse inequality
\begin{proposition}
\label{reversecomparison}
Under the assumptions of Proposition~\ref{comparisonenergies}, for all 
$\tau_0\le \tau \le \widehat\tau_1-1$, we have
\begin{equation}
\label{reversehere}
\int_{\tau+\frac12}^{\tau+1}
\Efancypk (\tau')[\widehat\psi] d\tau'  \lesssim 
\Epk(\tau) [\widehat\psi]+   \Xpkminustwo(\tau_0,\tau_1)[\psi].
\end{equation}
\end{proposition}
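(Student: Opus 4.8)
\textbf{Proof proposal for Proposition~\ref{reversecomparison}.}

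The plan is to integrate the divergence identity~\eqref{dividentitynotintegrated} for each of the currents $\Jp^{{\rm total}, n}$ (and the auxiliary $J^{\rm redder}$ and $\Jp^{\rm total}$) applied to $\widehat\psi_{n,{\bf k}}$ backwards in time over a slab $\mathcal{R}(\tau', \tau+1)$ with $\tau+\frac12 \le \tau' \le \tau+1$, and then average in $\tau'$; this is the reverse of the argument producing~\eqref{fundidentity}. Since~\eqref{reversehere} only asks for a bound with $\lesssim$ and loses a derivative on the right hand side (the term $\Xpkminustwo$ is of order $k-1$), we have a lot of room: the strategy is to bound the time-integrated bulk coercive quantities and the boundary flux on $\Sigma(\tau+1)$ by the boundary flux on $\Sigma(\tau)$ plus errors which are either (i) nonlinear and hence absorbable via the smallness~\eqref{newbasicbootstrap}, or (ii) linear but strictly lower order, hence controlled by $\Xpkminustwo$.

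First I would observe that in the region $\tau \ge \tau_0+1$ (which is where we are), the time-averaged version of~\eqref{morerelationshereintegrated}, i.e.\ the relation one gets from local existence / Cauchy stability (Proposition~\ref{localexistencelargea}) together with the pseudodifferential boundedness estimates of Section~\ref{elementarycalculus}, already gives control of $\int_{\tau+1/2}^{\tau+1}\Epk(\tau')[\widehat\psi]\,d\tau'$ in terms of $\Epk(\tau)[\widehat\psi]$ — more precisely, one propagates the $\Epk$ energy forward over a unit time interval with constants depending only on the length of the interval. The point of~\eqref{reversehere} is that the left hand side is the \emph{larger} energy $\Efancypk$, which includes the bulk coercive terms. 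So the real content is: the extra bulk terms in $\Efancypk$ (the $\int_{\Sigma(\tau')\cap\{r\le R_{\rm freq}\}\setminus\widetilde{\mathcal D}_n}$ first-derivative terms, the $r^{p-1}$-weighted far terms, and the $\epsilon_{\rm redder}$-weighted near-$\mathcal{S}$ terms) integrated over the unit slab are bounded by $\Epk(\tau)$ plus lower-order and nonlinear errors. For the far region $r\ge R_{\rm freq}$ this is exactly the $r^p$-hierarchy statement of Theorem~\ref{blackboxoneforkerr} / Theorem~\ref{refinedblackboxoneforkerr} applied to $\widehat\psi$ over $\mathcal{R}(\tau,\tau+1)$, after using Proposition~\ref{whenoneextends} to pass from $\widehat\psi$-quantities back to $\psi$-quantities; the inhomogeneity $\hat F = N(\partial\widehat\psi,\widehat\psi,x)$ is handled by the null-structure Assumption~\ref{largeanullcondassumption} and~\eqref{newbasicbootstrap}, contributing only absorbable errors. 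For the near region, I would use the elliptic estimates of Proposition~\ref{ellproplargea} to convert the $\mathring{\mathfrak D}^{\bf k}$-, $\mathfrak D^{\bf k}$-controlled quantities appearing in $\Efancypk$ into the full $\widetilde{\mathfrak D}^{\bf k}$-energy on a slightly larger spatial region, at the cost of an order-$(k-1)$ term $\Xpkminustwo$ and a $\Box_{g_{a,M}}\widehat\psi$ term (again controlled via the equation and the nonlinearity).

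The remaining piece — the projected wave-packet first-derivative terms $|L\widehat\psi_{n,{\bf k}}|^2 + |\underline L\widehat\psi_{n,{\bf k}}|^2 + |\slashed\nabla\widehat\psi_{n,{\bf k}}|^2$ summed over $n$ — is handled by Plancherel: since $\sum_n P_n = Id$ (cf.~\eqref{theysumtoid}) and each $P_n$ is a zeroth-order $(t^*,\phi^*)$-pseudodifferential operator, the sum over $n$ of the $L^2$ norms of $\widetilde{\mathfrak D}^{\bf k}$-derivatives of $P_n(\mathring{\mathfrak D}^{\bf k}\chi^2_{\tau_0,\widehat\tau_1}\widehat\psi)$ on a unit slab in $\{r_0\le r\le R_{\rm freq}\}$ is comparable, up to commutator errors of order $k-1$ (Proposition~\ref{commutatorestimate}, Proposition~\ref{finalpseudostatement}) and smoothing errors from the cutoff (Proposition~\ref{smoothingprop}), to $\int_{\tau}^{\tau+2}\mathring{\Ezerok}(\tau')[\widehat\psi]$ plus $\widetilde{\mathfrak D}$-corrections via elliptic estimates, which in turn is $\lesssim \Epk(\tau)[\widehat\psi] + \Xpkminustwo$ by the forward propagation and Proposition~\ref{whenoneextends}. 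The main obstacle I anticipate is bookkeeping: making sure that (a) all the pseudodifferential commutator and localisation errors genuinely come out at order $\le k-1$ (so they land in $\Xpkminustwo$ and not in something top order), and (b) the constants are uniform in $L_{\rm long}$ (Remark~\ref{independenceofLlong}) and do not pick up bad $\epsilon_{\rm cutoff}$-dependence — but since we are allowed $\lesssim$ and one full derivative of loss, there is no delicate cancellation to track, unlike in the forward estimate~\eqref{fundidentity}. Finally, averaging the resulting inequality over $\tau'\in[\tau+\tfrac12,\tau+1]$ and discarding the (nonnegative, by Propositions~\ref{redderbound},~\ref{boundcoercatIplus}) boundary fluxes on $\mathcal{S}$ and $\mathcal{I}^+$ that have the good sign yields~\eqref{reversehere}.
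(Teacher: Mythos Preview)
Your argument contains the right key step but wraps it in considerably more machinery than is needed. The paper's proof is essentially three lines: by the $L^2$-mapping properties of the zeroth-order operators $P_n$ (Proposition~\ref{mixedsobolevboundednessprop}) together with time-localising cutoffs and the smoothing estimates of Proposition~\ref{smoothingprop}, one obtains directly
\[
\int_{\tau+\frac12}^{\tau+1} \Efancypk(\tau')[\widehat\psi]\,d\tau' \;\lesssim\; \int_{\tau}^{\tau+\frac32} \Epk(\tau')[\widehat\psi]\,d\tau' + \Xpkminustwo(\tau_0,\widehat\tau_1)[\widehat\psi],
\]
after which local existence (Proposition~\ref{localexistencelargea}) on the unit slab controls the integral of $\Epk$ by $\Epk(\tau)[\widehat\psi]$, and Proposition~\ref{whenoneextends} converts the lower-order term to one in $\psi$. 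No divergence identity of the currents is used, nor are Theorem~\ref{blackboxoneforkerr}, Assumption~\ref{largeanullcondassumption}, or the elliptic estimates invoked separately: the non-wave-packet pieces of $\Efancypk$ (the far-region $r^p$-terms and the $\epsilon_{\rm redder}$-weighted near-$\mathcal{S}$ terms) are already pointwise dominated by $\Epk$- and $\Epminusonek'$-type densities, and local existence on a unit interval already packages the $r^p$-bulk and nonlinearity control. Your third paragraph --- bounding the projected wave-packet integrals by unprojected spacetime $L^2$-norms via pseudodifferential boundedness, with order-$(k-1)$ commutator and smoothing errors --- is precisely the content of this first step and is correct.

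The divergence-identity framing of your first paragraph, on the other hand, is a detour with a genuine obstacle. The identity for the currents controls $\Efancypk_{{\rm bdry}}$ on a single slice plus the time-integral of $\Efancypk_{{\rm bulk}}$; it does not naturally produce $\int\Efancypk_{{\rm bdry}}(\tau')\,d\tau'$, which is half of what you need. More seriously, your final step of ``discarding the boundary fluxes on $\mathcal{S}$'' is not justified: Propositions~\ref{redderbound} and~\ref{boundcoercatIplus} concern only the auxiliary $J^{\rm redder}$ and $\Jp^{\rm total}$ currents, whereas the $\mathcal{S}$-flux of $J^{{\rm main},n}$ for nonsuperradiant $n$ is \emph{not} pointwise nonnegative --- you would need Proposition~\ref{asintheproofofthisprop}, which brings in exactly the top-order $\epsilon_{\rm cutoff}$-dependent error terms the direct approach avoids. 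All of this disappears once you realise that $\Efancypk(\tau')$, once $\tau'$-integrated over a unit interval, is just a collection of spacetime $L^2$-norms to which Proposition~\ref{mixedsobolevboundednessprop} applies directly.
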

\begin{proof}
We have
\[
\int_{\tau+\frac12}^{\tau+1}
\Efancypk (\tau')[\widehat\psi]
\lesssim 
\int_{\tau}^{\tau+\frac32}
  \Epk(\tau)[\widehat\psi] +   \Xpkminustwo(\tau_0,\widehat\tau_1)[\widehat\psi]
\]
from the properties of the pseudodifferential calculus (Proposition~\ref{mixedsobolevboundednessprop} and~\ref{smoothingprop}),
whence~\eqref{reversehere} then follows from Proposition~\ref{whenoneextends}
and Proposition~\ref{localexistencelargea}, provided 
$k_{\rm loc} \le \lesslessk$.
\end{proof}

We note that the $\lesssim$ in the Proposition~\ref{comparisonenergies} depends (unfavourably!) on
the yet-to-be-chosen parameter $\epsilon_{\rm  redder}$ introduced in~\eqref{secondfancydef}.
The following proposition will provide an $\epsilon_{\rm redder}$-independent
statement, which will be necessary for absorbing some top order $\epsilon_{\rm redder}$-error terms.

\begin{proposition}
Under the assumptions of Proposition~\ref{comparisonenergies}, we have
the additional statement
\begin{equation}
\label{additionalforredder}
\int_{\tau_1+2}^{\widehat\tau_1-1}  \Ezerok_{r_1\le r\le r_2}(\tau')[\widehat\psi] d\tau'
\lesssim
\inf_{\tau_1+1\le \tau' \le \tau_1+2}
\mathring{\Ezerok}(\tau')[\widehat\psi]
\end{equation}
where we recall  $\mathring{\Ezerok}(\tau)[\psi]$
is defined in~\eqref{othercommutatorsenergy}.
\end{proposition}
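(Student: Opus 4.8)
The statement~\eqref{additionalforredder} concerns the region $\tau\ge \tau_1+2$, where by construction of the auxiliary function $\widehat\psi$ in Section~\ref{auxsec} we have that $\chi_{\tau_1}=0$, so that $\widehat\psi$ solves the \emph{linear homogeneous} equation $\Box_{g_{a,M}}\widehat\psi=0$ in $\mathcal{R}(\tau_1+1,\widehat\tau_1)$. Thus the plan is to apply the linear integrated local energy decay estimate of Theorem~\ref{blackboxoneforkerr} (with $F=0$, at order $k$, in the simplest case $p=0$) on the region $\mathcal{R}(\tau',\widehat\tau_1)$ with $\tau'\in[\tau_1+1,\tau_1+2]$ as the initial slice. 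Since $\Box_{g_{a,M}}\widehat\psi=0$ there, all the inhomogeneous terms on the right-hand side of that theorem vanish, leaving
\[
\int_{\tau'}^{\widehat\tau_1} {}^\chi\Ezerominusoneminusdeltak'(\tau'')[\widehat\psi]\, d\tau'' \lesssim \Ezerok(\tau')[\widehat\psi].
\]
The left-hand side controls $\int_{\tau'}^{\widehat\tau_1}\Ezerok_{r_1\le r\le r_2}(\tau'')[\widehat\psi]\,d\tau''$ up to a fixed constant, because $\chi=1$ on $[r_0,r_{\rm pot}]$ and in particular on $[r_1,r_2]$ (recall $r_+<r_1<r_2<r_{\rm pot}$), so the degeneration function $\chi$ does not degenerate on this compact $r$-range and all the weights $r^{-1-\delta}$, $r^{-3-\delta}$ are bounded above and below there; the only subtlety is that ${}^\chi\Ezerominusoneminusdeltak'$ is defined with the untilded commutators $\mathfrak{D}^{\bf k}$ whereas $\Ezerok_{r_1\le r\le r_2}$ uses the tilded $\widetilde{\mathfrak{D}}^{\bf k}$, which is handled by the elliptic estimate~\eqref{ellestimnoYforlargea} of Proposition~\ref{ellproplargea} applied on a slightly enlarged annulus $r'_-\le r\le r''_+$ still contained in $\{r<r_{\rm pot}\}$ (absorbing $\sum_{|{\bf k}|\le 1}$ and $\Box$-terms, the latter being zero).

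Next I would convert the initial-slice quantity $\Ezerok(\tau')[\widehat\psi]$ into the $\mathring{\Ezerok}$ quantity appearing on the right of~\eqref{additionalforredder}. For $\tau'\in[\tau_1+1,\tau_1+2]$ lying in the region where $\widehat\psi$ solves the homogeneous equation, one can bound $\Ezerok(\tau')[\widehat\psi]$ — which involves $\widetilde{\mathfrak{D}}^{\bf k}\widehat\psi$ including the transverse derivative $\underline L$ and the full spanning set — in terms of $\mathring{\Ezerok}(\tau'')[\widehat\psi]$ for $\tau''$ in a fixed neighbourhood, using the local-existence/propagation estimate of Proposition~\ref{localexistencelargea} together with the elliptic estimates of Proposition~\ref{ellproplargea}: commuting with the purely-Killing $\mathring{\mathfrak{D}}$ controls all derivatives after an elliptic gain and use of the equation (with no inhomogeneity). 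Averaging over $\tau''\in[\tau_1+1,\tau_1+2]$ and then taking the infimum is harmless since the inequality holds for every such $\tau''$; concretely one writes $\Ezerok(\tau')[\widehat\psi]\lesssim \int_{\tau_1+1}^{\tau_1+2}\mathring{\Ezerok}(\tau'')[\widehat\psi]\,d\tau''$ via a mean-value choice of $\tau'$, and then this $d\tau''$-integral is $\lesssim \inf_{\tau_1+1\le\tau''\le\tau_1+2}\mathring{\Ezerok}(\tau'')[\widehat\psi]$ after again invoking the boundedness estimate of Theorem~\ref{blackboxoneforkerr} (or Proposition~\ref{localexistencelargea}) to compare $\mathring{\Ezerok}$ at nearby times over the unit-length interval — all of these being order-$1$-length estimates with constants depending only on $a,M,k,\delta$. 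Independence of all constants from $L_{\rm long}$ is automatic since the estimate is applied only to the linear homogeneous flow starting after $\tau_1+1$ and the estimates of Theorem~\ref{blackboxoneforkerr} and Proposition~\ref{localexistencelargea} have such constants.

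The main obstacle I anticipate is purely bookkeeping rather than conceptual: making sure the commutator set used on the left ($\mathfrak{D}^{\bf k}$ inside ${}^\chi\Ezerominusoneminusdeltak'$) and on the right ($\mathring{\mathfrak{D}}^{\bf k}$ inside $\mathring{\Ezerok}$) are correctly reconciled with the tilded $\widetilde{\mathfrak{D}}^{\bf k}$ appearing in $\Ezerok_{r_1\le r\le r_2}$ and $\Ezerok(\tau')$, at the \emph{same} order $k$, without losing derivatives — this forces one to apply the elliptic estimates on a nested chain of annuli $[r_1,r_2]\subset[r'_-,r''_+]\subset\{r<r_{\rm pot}\}$ and to absorb the lower-order $\sum_{|{\bf k}|\le 1}$ boundary contributions into the main terms using that $\widehat\psi$ is small (the bootstrap~\eqref{newbasicbootstrap} is not even needed here since the flow is exactly linear for $\tau\ge\tau_1+1$). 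Once the commutator-set reconciliation is in place, the estimate is a direct corollary of Theorem~\ref{blackboxoneforkerr} with $F=0$ and Proposition~\ref{ellproplargea}, applied on $\mathcal{R}(\tau',\widehat\tau_1)$ with a mean-value choice of $\tau'\in[\tau_1+1,\tau_1+2]$.
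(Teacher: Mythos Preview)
Your overall strategy is correct: $\widehat\psi$ satisfies the homogeneous equation $\Box_{g_{a,M}}\widehat\psi=0$ for $\tau\ge\tau_1$, so one invokes the integrated local energy decay of Theorem~\ref{blackboxoneforkerr} with $F=0$, and uses the elliptic estimates of Proposition~\ref{ellproplargea} to reconcile commutator sets on the left-hand side (the compact annulus $r_1\le r\le r_2$ where $\chi=1$). That part is fine.

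There is, however, a genuine gap in your treatment of the right-hand side. You apply the theorem at order $k$, which produces $\Ezerok(\tau')[\widehat\psi]$ on the right --- an energy defined with the full spanning set $\widetilde{\mathfrak{D}}$. You then propose to convert this to $\mathring{\Ezerok}(\tau')[\widehat\psi]$ via ``local existence plus elliptic''. But the available elliptic estimates (Proposition~\ref{ellproplargea}) only operate on compact $r$-annuli contained in $\{r_+<r\le R\}$ (or, for~\eqref{ellestimnorestrictionforlargea}, convert $\widetilde{\mathfrak{D}}$ to the larger set $\mathfrak{D}$, not to $\mathring{\mathfrak{D}}$). On the null portion $\{r\ge R\}$ of $\Sigma(\tau')$, the $L$-derivatives in $\Ezerok(\tau')$ are simply not recoverable from the Killing directions $T,\Omega_1$ alone on a single slice, and Proposition~\ref{localexistencelargea} does not help since both sides of its estimates use $\widetilde{\mathfrak{D}}$. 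So the conversion $\Ezerok(\tau')\lesssim\mathring{\Ezerok}(\tau')$ does not follow from the stated ingredients.

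The paper sidesteps this entirely by a small reordering: apply the $k=0$ case of Theorem~\ref{blackboxoneforkerr} to each commuted function $\mathring{\mathfrak{D}}^{\bf k}\widehat\psi$. Since $\mathring{\mathfrak{D}}$ are Killing, $\Box_{g_{a,M}}(\mathring{\mathfrak{D}}^{\bf k}\widehat\psi)=0$ as well, and the right-hand side of the $k=0$ estimate is $\Ezero(\tau')[\mathring{\mathfrak{D}}^{\bf k}\widehat\psi]$, which after summing over $|{\bf k}|\le k$ is \emph{exactly} $\mathring{\Ezerok}(\tau')[\widehat\psi]$ --- no conversion needed. The elliptic step is then required only on the left-hand side, where it is unproblematic because it is localised to the compact spacelike annulus $r_1\le r\le r_2$. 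Finally, since the resulting inequality holds for every $\tau'\in[\tau_1+1,\tau_1+2]$, one may take the infimum directly; your averaging/mean-value detour is unnecessary.
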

\begin{proof}
Note that $\widehat\psi$ satisfies the homogeneous wave equation $\Box_{g_{a,M}}\hat\psi=0$
in $\mathcal{R}(\tau_1+1,\widehat\tau_1-1)$.
The statement then  follows from the $k=0$ case of Theorem~\ref{blackboxoneforkerr} after applying
commutation by $\mathring{\mathfrak{D}}^{\bf k}$ and 
the elliptic estimate~\eqref{ellproplargea}.
\end{proof}

As the above proposition is proven without reference to the energies~\eqref{secondfancydef},
the constant above is in particular independent of $\epsilon_{\rm redder}$. 
As a corollary of the above and the previous, we obtain an $\epsilon_{\rm redder}$-independent
estimate (involving  $\Efancypk{}_{{\rm bulk}}$ on the right hand side) of the following form:

\begin{proposition}
\label{mautokavoumeabsorb}
Under the assumptions of Proposition~\ref{comparisonenergies},
we have
\[
\int_{\tau_0+1}^{\widehat\tau_1-1}  \Ezerok_{r_1\le r\le r_2}(\tau')[\widehat\psi] d\tau'
\lesssim \int_{\tau_0+2}^{\tau_1+2} \Efancypk{}_{{\rm bulk}}(\tau')[\widehat\psi] d\tau' 
+ {}^\chi \Xpkminusone(\tau_0,\tau_1)[\psi] 
+ \Epk(\tau_0)[\psi] + \inf_{\tau_1+1\le  \tau' \le \tau_1+2}  \Efancypk (\tau')[\widehat\psi]
\]
where the implicit constant  in $\lesssim$ is independent of the choice of
$\epsilon_{\rm redder}$ in~\eqref{secondfancydef}.
\end{proposition}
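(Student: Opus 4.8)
\textbf{Proof strategy for Proposition~\ref{mautokavoumeabsorb}.} The plan is to combine the two preceding propositions with a splitting of the $\tau$-interval $[\tau_0+1,\widehat\tau_1-1]$ into the ``early'' part $[\tau_0+1,\tau_1+2]$ and the ``late'' part $[\tau_1+2,\widehat\tau_1-1]$, treating each separately, and then to remove the dependence on $\epsilon_{\rm redder}$ by working only with quantities whose defining estimates have been established without reference to the energies~\eqref{secondfancydef}.

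First I would handle the late part. On $[\tau_1+1,\widehat\tau_1-1]$ the function $\widehat\psi$ solves the homogeneous equation $\Box_{g_{a,M}}\widehat\psi=0$, so the immediately preceding proposition applies verbatim and gives
\[
\int_{\tau_1+2}^{\widehat\tau_1-1} \Ezerok_{r_1\le r\le r_2}(\tau')[\widehat\psi]\, d\tau'
\lesssim \inf_{\tau_1+1\le \tau'\le \tau_1+2} \mathring{\Ezerok}(\tau')[\widehat\psi],
\]
with a constant independent of $\epsilon_{\rm redder}$ (since that proposition is proven purely from Theorem~\ref{blackboxoneforkerr}, commutation with $\mathring{\mathfrak{D}}^{\bf k}$, and the elliptic estimate~\eqref{ellproplargea}). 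Next I would bound the right-hand side: since $\mathring{\Ezerok}\le \Ezerok$ pointwise and, by the coercivity properties recalled in Sections~\ref{generalisedcoerc}--\ref{absorbingetcsec} together with the elliptic estimate~\eqref{ellproplargea}, one has $\Ezerok(\tau')[\widehat\psi]\lesssim \Efancypk(\tau')[\widehat\psi]$ (at least away from the null region; the far terms are directly controlled by the $r^p$-weighted boundary terms in~\eqref{firstfancydef}) — here I would be slightly careful to pick the version of this comparison that does not involve $\epsilon_{\rm redder}$, i.e.\ to estimate $\mathring{\Ezerok}$ by the non-$\epsilon_{\rm redder}$ pieces of $\Efancypk$ only — one obtains $\inf_{\tau_1+1\le\tau'\le\tau_1+2}\mathring{\Ezerok}(\tau')[\widehat\psi]\lesssim \inf_{\tau_1+1\le\tau'\le\tau_1+2}\Efancypk(\tau')[\widehat\psi]$. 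This takes care of the $[\tau_1+2,\widehat\tau_1-1]$ portion of the left-hand integral and produces exactly the last term on the right-hand side of the claimed inequality.

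For the early part $\int_{\tau_0+1}^{\tau_1+2}\Ezerok_{r_1\le r\le r_2}(\tau')[\widehat\psi]\,d\tau'$, I would note that on $r_1\le r\le r_2 \subset \{r_0\le r\le R_{\rm freq}\}$ the integrand is an unweighted bulk energy in a fixed compact $r$-range. In this range one has $\widetilde{\mathcal{D}}_n\subset\{r'_{n,1}\le r\le r'_{n,2}\}\subset(r_{\rm pot},R_{\rm pot})$, and since $r_2<r_{\rm pot}$ the annulus $r_1\le r\le r_2$ lies \emph{outside} every $\widetilde{\mathcal{D}}_n$; hence the contribution of the annulus to $\Efancypk{}_{\rm bulk}$ through the first line of~\eqref{secondfancydef} (summed over $n$, using $\sum_n P_n=Id$ from~\eqref{theysumtoid} and Plancherel to pass from the $\psi_{n,{\bf k}}$ back to $\mathfrak{D}^{\bf k}\widehat\psi$) coercively controls $\Ezerok_{r_1\le r\le r_2}(\tau')[\widehat\psi]$ up to the usual pseudodifferential and lower-order commutation errors, which are absorbed into $\Xpkminusone(\tau_0,\widehat\tau_1)[\widehat\psi]$ and thence, by Proposition~\ref{whenoneextends}, into ${}^\chi\Xpkminusone(\tau_0,\tau_1)[\psi]$. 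Therefore
\[
\int_{\tau_0+1}^{\tau_1+2}\Ezerok_{r_1\le r\le r_2}(\tau')[\widehat\psi]\,d\tau'
\lesssim \int_{\tau_0+2}^{\tau_1+2}\Efancypk{}_{\rm bulk}(\tau')[\widehat\psi]\,d\tau'
+ {}^\chi\Xpkminusone(\tau_0,\tau_1)[\psi],
\]
where the shift of the lower limit from $\tau_0+1$ to $\tau_0+2$ is harmless because $\Ezerok$ on the small extra slab $[\tau_0+1,\tau_0+2]$ is controlled, via Proposition~\ref{localexistencelargea} (local existence with the bound~\eqref{andthisboundedness}), by $\Epk(\tau_0)[\psi]$, accounting for that term on the right-hand side. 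Crucially, the comparison just used is exactly of the first-line/non-$\epsilon_{\rm redder}$ type, so its constant is $\epsilon_{\rm redder}$-independent.

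Adding the early and late estimates yields the proposition. \textbf{The main obstacle} I anticipate is bookkeeping of the $\epsilon_{\rm redder}$-dependence: the natural comparisons relating $\Ezerok$ and $\Efancypk$ (Propositions~\ref{comparisonenergies} and~\ref{reversecomparison}) have constants that degenerate as $\epsilon_{\rm redder}\to 0$, so one must be disciplined about which terms of $\Efancypk{}_{\rm bulk}$ are being used — only the first line of~\eqref{secondfancydef}, supported in $r_1\le r\le r_2\subset\{r_0\le r\le R_{\rm freq}\}\setminus\widetilde{\mathcal{D}}_n$, and none of the $\epsilon_{\rm redder}$-weighted $r\le r_1$ terms — and similarly one must invoke only the $\epsilon_{\rm redder}$-free inputs (Theorem~\ref{blackboxoneforkerr}, elliptic estimates, the raw coercivity of the bulk currents $K^{{\rm main},n}$) rather than the packaged comparison statements. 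The only other point requiring slight care is the passage, via $\sum_n P_n=Id$ and the pseudodifferential calculus of Section~\ref{elementarycalculus}, between $\Ezerok$ of $\mathfrak{D}^{\bf k}\widehat\psi$ and the sum over $n$ of energies of the wave packets $\widehat\psi_{n,{\bf k}}$, together with the standard harmless commutation errors controlled by ${}^{\natural}\Xkminusone$ and ${}^\chi\Xzerokminusone$; all such errors are dominated by ${}^\chi\Xpkminusone(\tau_0,\tau_1)[\psi]$ using Proposition~\ref{whenoneextends}, and the constants there are independent of $L_{\rm long}$ as emphasised in Remark~\ref{independenceofLlong}.
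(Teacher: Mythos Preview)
Your approach is essentially the same as the paper's: you split the interval into an early part $[\tau_0+1,\tau_1+2]$ and a late part $[\tau_1+2,\widehat\tau_1-1]$, handle the late part via the preceding proposition together with an $\epsilon_{\rm redder}$-independent comparison $\mathring{\Ezerok}(\tau')\lesssim \Efancypk(\tau')+{}^\chi\Xpkminusone(\tau_0,\tau_1)$, and handle the early part via the geometric observation that $[r_1,r_2]\subset (r_+,r_{\rm pot})$ lies outside every $\widetilde{\mathcal{D}}_n$, so that the first (non-$\epsilon_{\rm redder}$-weighted) line of $\Efancypk{}_{\rm bulk}$ already controls the annular energy modulo pseudodifferential and lower-order errors. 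The paper records these two comparisons as the auxiliary inequalities~\eqref{variantwithgeom} and~\eqref{variantwithgeomloc} and then proceeds exactly as you outline.

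Two minor corrections: in the late-part step your intermediate inequality $\inf\mathring{\Ezerok}\lesssim\inf\Efancypk$ should carry the additional term ${}^\chi\Xpkminusone(\tau_0,\tau_1)[\psi]$ (cf.~\eqref{variantwithgeom}), which is harmless since that term is already on the right-hand side; and the wave packets $\widehat\psi_{n,{\bf k}}$ are built from $\mathring{\mathfrak{D}}^{\bf k}$, not $\mathfrak{D}^{\bf k}$, so the passage back to $\Ezerok_{r_1\le r\le r_2}$ (which uses $\widetilde{\mathfrak{D}}^{\bf k}$) goes through $\mathring{\Ezerok}_{r_1\le r\le r_2}$ first, with the remaining directions recovered by the elliptic estimate of Proposition~\ref{ellproplargea} at the cost of another ${}^\chi\Xpkminusone$ term.
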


\begin{proof}
We note first that for $\tau_0+1\le \tau \le \widehat\tau_1-1$,
we have the following variant of~\eqref{fromthisfirstpoint}
\begin{equation}
\label{variantwithgeom}
\mathring{\Ezerok}(\tau')[\widehat\psi]
\lesssim  \Efancypk (\tau')[\widehat\psi] +  {}^\chi \Xpkminusone(\tau_0,\tau_1)[\psi]
\end{equation}
and
\begin{equation}
\label{variantwithgeomloc}
\mathring{\Ezerok}_{r_1\le r\le r_2} (\tau')[\widehat\psi]
\lesssim  \Efancypk_{\rm bulk} (\tau')[\widehat\psi] +  {}^\chi \Xpkminusone(\tau_0,\tau_1)[\psi]
\end{equation}
where the constant implicit in $\lesssim$ is independent of $\epsilon_{\rm redder}$.
Integrating in $\tau'$ we obtain
\[
\int_{\tau_0+2}^{\tau_1+2}  \Ezerok_{r_1\le r\le r_2}(\tau')[\widehat\psi] d\tau'
\lesssim \int_{\tau_0+2}^{\tau_1+2}   \Efancypk{}_{{\rm bulk}}(\tau')[\widehat\psi] d\tau' 
 + {}^\chi \Xpkminusone(\tau_0,\tau_1)[\psi] 
\]
where the constant implicit in $\lesssim$ is again 
 independent of $\epsilon_{\rm redder}$.
Thus 
the desired statement follows combining the above and~\eqref{additionalforredder}
applied to~\eqref{variantwithgeom}.
\end{proof}

\subsection{The covariant currents $J^{{\rm total},n}_g$, $K^{{\rm total},n}_g$ 
and stability of generalised coercivity properties}
\label{covariantones}

Let us now promote our currents of Section~\ref{generalisedcoerc} to covariant currents 
with respect to a general Lorentzian metric $g$ on $\mathcal{M}$, i.e.~let
us define
\begin{equation}
\label{torefertothecurrents}
J^{{\rm main},n}_{g}[\Psi],\qquad  K^{{\rm main}, n}_{g}[\Psi], \qquad H^{{\rm main}, n}_{g}[\Psi]
\end{equation}
by making our currents covariant in $g$, either directly in twisted form or equivalently by 
reexpressing the currents 
$J^{{\rm main},n}_{g_{a,M}}[\Psi]$,
$K^{{\rm main}, n}_{g_{a,M}}[\Psi]$, $H^{{\rm main}, n}_{g_{a,M}}[\Psi]$
with respect to the 
$V_n$, $w_n$, $q_n$, $\varpi_n$  parametrisation 
and replacing $g_{a,M}$ with $g$ in
 the formulas~\eqref{generalJdef}--\eqref{generalHdef}.

We similarly now define
\[
\Jp^{\rm total}_{g}[\Psi],\qquad  \Kp^{\rm total}_{g}[\Psi], \qquad \Hap^{{\rm total}}_{g}[\Psi]
\]
and
\[
\Jp^{\rm redder}_{g}[\Psi],\qquad  \Kp^{\rm redder}_{g}[\Psi], \qquad \Hap^{{\rm redder}}_{g}[\Psi]
\]
by promoting to covariant the currents defined in Section~\ref{absorbingetcsec}.

We have the following:
\begin{proposition}[Stability of coercivity properties with allowed errors]
\label{stabofcoercivproposition}
Let $g$ be a metric on $\mathcal{R}(\tau_0,\tau_1)$
such that $|g-g_{a,M}|_{C^1} \lesssim \varepsilon$ and $g=g_{a,M}$ for $r\ge R/2$, for $a\ne 0$,
and let~\eqref{torefertothecurrents}  be the currents defined above.
%and let the coefficient functions $E^{\alpha\beta}_{g,n}(x)$ be defined as in~\eqref{Ealphbetadef}.

Then for $\varepsilon>0$ sufficiently small, 
we have that $\mathcal{S}(\tau_0,\tau_1)$ and $\Sigma(\tau)\cap \{r\le R\}$ for all $\tau_0\le \tau\le \tau_1$
are spacelike with respect to $g$, and
for a general function $\Psi$, the following
boundary and bulk coercivity statements hold in the region $r\ge R_{\rm freq}$:
\begin{align}
\label{boundcoer}
\Jp{}_{g}^{{\rm total}}[\Psi]  \cdot {\rm n}_{\Sigma(\tau)} \gtrsim &  r^p| r^{-1}L(r\Psi)|^2 +r^{\frac{p}2}|L\Psi|^2 +|\slashed\nabla \Psi|^2
+r^{\frac{p}2-2}|\Psi|^2  +\iota_{r\le R}| \underline{L}\Psi|^2, & \delta\le p\le 2-\delta, \\
\Jzero{}_{g}^{{\rm total}}[\Psi]  \cdot {\rm n}_{\Sigma(\tau)} \gtrsim & |L\Psi|^2 +|\slashed\nabla \Psi|^2
+r^{-2}|\Psi|^2 +\iota_{r\le R}| \underline{L} \Psi |^2, & p=0, \\
\label{bulkcoercivps}
 \Kp{}^{{\rm total}}_{g}[\Psi]
\gtrsim &  r^{p-1} ( 
 (r^{-1}|Lr \Psi |)^2 + | L \Psi |+|\nablaslash \Psi |^2) + r^{-1-\delta}  |\underline L \Psi |^2
+ r^{p-3} | \Psi |^2,   & \delta\le p\le 2-\delta, \\
\label{bulkcoercivpequalszero}
 \Kzero{}^{{\rm total}}_{g}[\Psi]
\gtrsim &  r^{-1-\delta} ( 
 | L\Psi|+|\nablaslash \Psi |^2 + |\underline L \Psi |^2)
+ r^{-3-\delta} | \Psi |^2, &p=0 .
\end{align}

On $\Sigma(\tau) \cap \widetilde{\mathcal{D}}_n$, we have, for a general function $\Psi $,
the boundary coercivity statement
\begin{equation}
\label{coercboundatDn}
J{}_{g}^{{\rm main}, n}[\Psi]  \cdot {\rm n}_{\Sigma(\tau)} \gtrsim 
|L \Psi | ^2 + |\underline L \Psi |^2 +|\nablaslash \Psi |^2  +| \Psi |^2.
\end{equation}

On $\mathcal{S}$ we have, for a general function $\Psi $,
 the  boundary coercivity statement
\begin{equation}
\label{newcoercbound}
J^{\rm redder}_g [\Psi] \cdot {\rm n}_{\mathcal{S}} \ge c(r_0) (|L \Psi |^2+|\underline{L} \Psi |^2
+|\nablaslash \Psi |^2)
\end{equation}
and on $\Sigma(\tau)$ the positivity statement
\begin{equation}
\label{wehaveinfactcoercclose}
J_{g}^{\rm redder}[\psi]\cdot {\rm n}_{\Sigma(\tau) }\ge 0 .
\end{equation}

For a general function $\Psi$, we moreover
have
 in the region $r_0\le r\le r_1+2\frac{r_2-r_1}3$ the bulk coercivity statement
 \begin{equation}
\label{newcoercbulk}
 K{}^{{\rm redder}}_{g}  [\Psi] \gtrsim |L\Psi|^2+|\underline{L}\Psi|^2+|\nablaslash\Psi|^2,
 \end{equation}
 and for a solution $\psi$ of~\eqref{theequationforlargea} in $\mathcal{R}(\tau_0,\tau_1)$
 with $g=g(\psi,x)$
under the assumption~\eqref{newbasicbootstrap}, the bulk coercivity:
\begin{align}
\nonumber
\sum_{|{\bf k}| = k}
\int_{\mathcal{R}(\tilde\tau_0,\tilde\tau_1)\cap \{r_0 \le r \le  r_1+2\frac{r_2-r_1}3 \} }  K{}^{{\rm redder}}_{g} 
[\alpha({\bf k}) \mathfrak{D}^{\bf k} \psi]
+{\rm Re} ( H{}^{\rm redder}_{g}[\psi] \cdot \overline{ \Box_{g_{a,M}}\alpha({\bf k}) \mathfrak{D}^{\bf k}\psi} )   \\
\label{newcoercbulkimproved}
  \ge
\int_{\mathcal{R}(\tilde\tau_0,\tilde\tau_1)\cap \{r_0 \le r \le r_1+2\frac{r_2-r_1}3 \} }
\frac14
 K{}^{\rm redder}_{g} 
[\alpha({\bf k}) \mathfrak{D}^{\bf k} \psi]
   -C\, {}^\chi\Xzerokminusone(\tilde\tau_0,\tilde\tau_1)[\psi],
\end{align}
where $\alpha({\bf k})$ are as in Proposition~\ref{improvcoercprop},
and a similar estimate holds for $\widehat\psi$ for $\tau_0+1\le \tilde\tau_0\le \tilde\tau_1-2\le
\widehat\tau_1-3$ and $g=g(\chi^2_{\tau_1}\psi,x)$.

For a solution 
 $\psi$ of~\eqref{theequationforlargea} in $\mathcal{R}(\tau_0,\tau_1)$
under the assumption~\eqref{newbasicbootstrap}, we have the bulk coercivity:
\begin{align}
\nonumber
\int_{\mathcal{R}(\tilde\tau_0,\tilde\tau_1)\cap \{r \ge  R/4\} } 
\sum_{|{\bf k}| = k}
  \Kp{}^{{\rm total}}_{g}  
[\alpha({\bf k}) \mathfrak{D}^{\bf k} \psi]
+{\rm Re}( \Hap{}^{{\rm total}}_{g}[\psi] \cdot \overline{ \Box_{g_{a,M}}\alpha({\bf k}) \mathfrak{D}^{\bf k}\psi})  \\
\label{improvedfarawaycoercivitymetricg}
\ge
\int_{\mathcal{R}(\tilde\tau_0,\tilde\tau_1)\cap \{r \ge R/4  \} } 
\frac14 \sum_{|{\bf k}| = k}
  \Kp{}^{{\rm total}}_{g}  
[\alpha({\bf k}) \mathfrak{D}^{\bf k} \psi]    -C\, {}^\chi\Xpkminusone(\tilde\tau_0,\tilde\tau_1),
\end{align}
and a similar estimate holds for $\widehat\psi$ for  $\tau_0+1\le \tilde\tau_0\le \tilde\tau_1-2\le
\widehat\tau_1-3$ and $g=g(\chi^2_{\tau_1}\psi,x)$ and $g= g(\chi^2_{\tau_1}\psi,x)$.

Finally, applied to the projections $\widehat\psi_{n,{\bf k}}$, 
 we have
for all  $\tau_0+1\le \tilde\tau_0 \le \tilde\tau_1-2 \le \widehat\tau_1-3$ the coercivity statements
\begin{eqnarray}
\nonumber
\int_{\mathcal{S}(\tilde\tau_0,\tilde\tau_1)} {J_{g}^{{\rm main},n}}[\widehat\psi_{n,{\bf k}}] \cdot {\rm n}_{\mathcal{S}}  
&\ge&
\int_{\mathcal{S}(\tilde\tau_0,\tilde\tau_1)} c(r_0) ( | L \widehat\psi_{n,{\bf k}} |^2 + | \underline{L} \widehat\psi_{n,{\bf k}} |^2 
+|\slashed\nabla \widehat\psi_{n,{\bf k}}|^2 ) \\
\nonumber
&&\qquad
-C\, \Ezerok_{\mathcal{S}} (\tilde\tau_0-\epsilon_{\rm cutoff},\tilde\tau_0+\epsilon_{\rm cutoff}) 
-C\, \Ezerok_{\mathcal{S}} (\tilde\tau_1-\epsilon_{\rm cutoff},\tilde\tau_1+\epsilon_{\rm cutoff})\\
\label{haserrortermstoo}
&&\qquad - C\,{}^{\chi\natural}_{\scalebox{.6}{\mbox{\tiny{\boxed{+1}}}}}\,\Xpkminusone(\tau_0,\tau_1)[\psi],
\end{eqnarray}
\begin{eqnarray}
\label{bulkcoerc}
\int_{\mathcal{R}(\tilde\tau_0,\tilde\tau_1) \cap \{r_0\le r\le R_{\rm fixed}\} } 
K^{{\rm main},n}_g[\widehat\psi_{n, {\bf k}}]
&\ge & c  \int_{\mathcal{R}(\tilde\tau_0,\tilde\tau_1) \cap \{r_0\le r\le R_{\rm fixed}\}} 
 \chi_ n \left( | L \widehat\psi_{n,{\bf k}} |^2 + | \underline{L} \widehat\psi_{n,{\bf k}} | ^2 
+|\slashed\nabla \widehat \psi_{n,{\bf k}}|^2  \right) \,\,\,\,\,\,\,\,\,\,\,\,\,  \\
\label{bulkcoerclintwo}
&&\qquad -
C \epsilon_{\rm cutoff}\, \sup_{\tau\in[\tau_0,\tau_1]} \Ezerok(\tau)[\psi] 
 -C\,{}^{\chi\natural}_{\scalebox{.6}{\mbox{\tiny{\boxed{+1}}}}}\,\Xpkminusone(\tau_0,\tau_1)[\psi] \\
&&\qquad
\label{bulkcoerclinthree}
- \varepsilon  C \int_{\tilde\tau_0}^{\tilde\tau_1}\Efancypk(\tau)[\widehat\psi] d\tau
\end{eqnarray}
where $k:=|{\bf k}|$, and where the convention of Section~\ref{noteonconstants} 
regarding
dependence of constants on $\epsilon_{\rm cutoff}$ has been adhered to.

Here the normals and volume forms are with respect to $g$.
Under the assumption~\eqref{newbasicbootstrap},
if $g=g(\chi^2_{\tau_1}\psi,x)$ as in Section~\ref{generalclassconsideredrecalled}, 
then $\varepsilon$ on the right hand side of 
line~\eqref{bulkcoerclinthree} may be replaced by $\sqrt{\Xzerolesslessk(\tau_0,\tau_1)}$.
\end{proposition}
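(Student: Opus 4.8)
\textbf{Proof plan for Proposition~\ref{stabofcoercivproposition}.}
The plan is to deduce each displayed inequality from its exact-Kerr counterpart (the propositions of Sections~\ref{generalisedcoerc}--\ref{absorbingetcsec}) by a perturbative argument exploiting $|g-g_{a,M}|_{C^1}\lesssim \varepsilon$ and $g=g_{a,M}$ for $r\ge R/2$. The key observation is that for each of the current quadruples $(V_n,w_n,q_n,\varpi_n)$ the functions $J_g^{V,w,q,\varpi}$, $K_g^{V,w,q}$, $H_g^{V,w}$ defined by~\eqref{generalJdef}--\eqref{generalHdef} depend smoothly on the metric $g$ through $g^{\mu\nu}$, $\nabla$, the volume form and the Hodge star, so that each pointwise expression satisfies $|J_g^{\bullet}[\Psi]\cdot {\rm n}_g - J_{g_{a,M}}^{\bullet}[\Psi]\cdot{\rm n}_{g_{a,M}}| \lesssim \varepsilon (|L\Psi|^2+|\underline L\Psi|^2+|\nablaslash\Psi|^2+r^{-2}|\Psi|^2)$, with analogous bounds (with appropriate $r$-weights) for $K$, and similarly for the normals ${\rm n}_{\Sigma(\tau)}$, ${\rm n}_{\mathcal S}$; moreover $\Sigma(\tau)\cap\{r\le R\}$ and $\mathcal S$ retain their spacelike character with comparable induced volume forms for $\varepsilon$ small. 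In the region $r\ge R/2$ there is nothing to prove since $g=g_{a,M}$ there. First I would record these perturbation estimates once and for all, noting that all currents involved are supported in $r\le R/2$ up to the $n$-independent far-away pieces $\Jp^{\rm far}$ which themselves are unchanged.

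Given this, the strictly pointwise statements~\eqref{boundcoer}--\eqref{coercboundatDn},~\eqref{newcoercbound}--\eqref{wehaveinfactcoercclose},~\eqref{newcoercbulk} follow immediately: the exact-Kerr versions (Propositions~\ref{largercoercivityboundary},~\ref{impbulcoerpropfar},~\ref{bcgsc} via its non-superradiant analogue for~\eqref{coercboundatDn}, Propositions~\ref{redderbound},~\ref{improvcoercprop}) give strictly positive pointwise lower bounds by coercive quadratic expressions, and the $O(\varepsilon)$ perturbation can be absorbed into the constant for $\varepsilon$ sufficiently small; for~\eqref{wehaveinfactcoercclose} the exact statement~\eqref{wehaveinfactcoercclosefirsttime} gives nonnegativity and in $r\le r_1$ one has a strictly positive bound allowing absorption, while for $r_1\le r\le r_2$ one uses that $J^{\rm redder}$ there is already a small multiple of the pointwise-positive $J^{\varsigma(k)}$. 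The integrated/commuted statements~\eqref{newcoercbulkimproved},~\eqref{improvedfarawaycoercivitymetricg} follow from Propositions~\ref{improvcoercprop} and~\ref{impbulcoerpropfar} respectively: the change $\Box_{g_{a,M}}\to\Box_g$ in the coercivity fraction and the $\varepsilon$-perturbation of the currents produce errors bounded by $\varepsilon$ times the coercive bulk itself (absorbed by shrinking $\tfrac13$ to $\tfrac14$) plus terms at order $k$ in $\psi$ supported in the relevant $r$-range, which are controlled by ${}^\chi\Xpkminusone$ using the elliptic estimates of Proposition~\ref{ellproplargea}; the $\widehat\psi$ versions are identical since $\widehat\psi$ solves $\Box_{\hat g}\widehat\psi=\hat F$ with $\hat g = g(\chi^2_{\tau_1}\psi,x)$ still $C^1$-close to $g_{a,M}$, and on $\mathcal R(\tau_1+1,\widehat\tau_1-1)$ in fact $\hat g=g_{a,M}$, $\hat F=0$.

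The heart of the matter is the projection-dependent pair~\eqref{haserrortermstoo} and~\eqref{bulkcoerc}--\eqref{bulkcoerclinthree}. For~\eqref{haserrortermstoo} I would apply the exact-Kerr Proposition~\ref{asintheproofofthisprop} to $\widehat\psi_{n,{\bf k}}$ and then estimate the difference $\int_{\mathcal S(\tilde\tau_0,\tilde\tau_1)} (J_g^{{\rm main},n}-J_{g_{a,M}}^{{\rm main},n})[\widehat\psi_{n,{\bf k}}]\cdot{\rm n}$ by the pointwise $O(\varepsilon)$ bound, absorbing the resulting $\varepsilon\int|L\widehat\psi_{n,{\bf k}}|^2+\cdots$ into the $c(r_0)$-coercive term since $r_0$ (hence $c(r_0)$) is fixed before $\varepsilon$; the cutoff errors on lines~\eqref{termsonthislinehavealabel},~\eqref{secondlineoferrortermshere} carry over unchanged and are rewritten via~\eqref{supontheright} as ${}^{\chi\natural}_{\scalebox{.6}{\mbox{\tiny{\boxed{+1}}}}}\,\Xpkminusone$ plus the two localised-flux terms. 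For~\eqref{bulkcoerc} the main subtlety, and the step I expect to be the real obstacle, is the extra line~\eqref{bulkcoerclinthree}: applying the localised exact-Kerr bound~\eqref{genbulkcoer} to $\widehat\psi_{n,{\bf k}}$ and replacing $K^{{\rm main},n}_{g_{a,M}}$ by $K^{{\rm main},n}_g$ costs $\int_{\mathcal R(\tilde\tau_0,\tilde\tau_1)\cap\{r\le R_{\rm freq}\}} |(K^{{\rm main},n}_g-K^{{\rm main},n}_{g_{a,M}})[\widehat\psi_{n,{\bf k}}]|$, and by Lemma~\ref{generalboundednesspr} together with the $O(\varepsilon)$ metric perturbation this is $\lesssim \varepsilon\int \chi_n^{1/2}r^{-1}(|L\widehat\psi_{n,{\bf k}}|^2+|\underline L\widehat\psi_{n,{\bf k}}|^2+|\nablaslash\widehat\psi_{n,{\bf k}}|^2)+\tilde\chi_n r^{-3}|\widehat\psi_{n,{\bf k}}|^2$; the part weighted by $\chi_n$ can be absorbed into the coercive right-hand side, but the full (non-degenerate) gradient cannot, so after summing over $n$ and $|{\bf k}|=k$ one bounds it by $\varepsilon C\sum_n\sum_{|{\bf k}|=k}\int |L\widehat\psi_{n,{\bf k}}|^2+\cdots \lesssim \varepsilon C\int_{\tilde\tau_0}^{\tilde\tau_1}\Efancypk(\tau)[\widehat\psi]\,d\tau$ by the definitions~\eqref{firstfancydef}--\eqref{secondfancydef} (here using Proposition~\ref{reversecomparison}-type control to pass from the slab integral to the energy; in fact the term is already directly a piece of $\int\Efancypk$ plus lower-order). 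The zeroth-order $\tilde\chi_n r^{-3}|\widehat\psi_{n,{\bf k}}|^2$ piece is handled exactly as in the proof of the preceding proposition, bounded by ${}^{\natural}\Xzerokminusone$. Finally, when $g=g(\chi^2_{\tau_1}\psi,x)$ the $C^1$ smallness is governed by $\sqrt{\Xzerolesslessk(\tau_0,\tau_1)}$ via Proposition~\ref{largeasobolevforfunctions} and our assumptions on $g(\psi,x)$, so every $\varepsilon$ above may be replaced by $\sqrt{\Xzerolesslessk(\tau_0,\tau_1)}$, giving the last sentence.
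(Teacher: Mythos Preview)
Your approach is essentially correct and follows the same line as the paper's (very terse) proof: pointwise statements from covariance and $C^1$ closeness; the commuted estimates~\eqref{newcoercbulkimproved},~\eqref{improvedfarawaycoercivitymetricg} from the inhomogeneous versions of Propositions~\ref{improvcoercprop},~\ref{impbulcoerpropfar} together with Proposition~\ref{ellproplargea}; and the projection-dependent statements from Propositions~\ref{asintheproofofthisprop} and the localised bulk coercivity, with Proposition~\ref{whenoneextends} to pass between $\widehat\psi$- and $\psi$-quantities. Your identification of the origin of line~\eqref{bulkcoerclinthree} via Lemma~\ref{generalboundednesspr}---the $\chi_n^{1/2}$ weight there versus the $\chi_n$ weight in the coercive term, forcing a non-absorbable $\varepsilon$-small top-order remainder that must be thrown onto $\int\Efancypk$---is exactly the point.

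One ingredient you omit: for~\eqref{improvedfarawaycoercivitymetricg} the paper explicitly invokes Assumption~\ref{largeanullcondassumption}. Since $\psi$ now solves $\Box_g\psi=N$ rather than $\Box_{g_{a,M}}\psi=0$, the expression $\Box_{g_{a,M}}\mathfrak{D}^{\bf k}\psi$ on the left of~\eqref{improvedfarawaycoercivitymetricg} contains, beyond the linear commutator $[\Box_{g_{a,M}},\mathfrak{D}^{\bf k}]\psi$ handled by the homogeneous version, the piece $\mathfrak{D}^{\bf k}\Box_{g_{a,M}}\psi=\mathfrak{D}^{\bf k}N$ in $r\ge R/2$. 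In the unbounded region $r\ge R$ the resulting $\int|\Hap^{\rm total}\cdot\mathfrak{D}^{\bf k}N|$ must be controlled with the correct $r^p$-weights, and this is precisely what the null structure provides; your phrase ``controlled by ${}^\chi\Xpkminusone$ using the elliptic estimates'' covers only compact $r$-ranges. With that addition your plan is complete.
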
 

\begin{proof}
The purely pointwise statements all follow 
immediately from the covariant nature of the currents and the $C^1$ closeness
of $g$ and~$g_{a,M}$. 
To obtain~\eqref{newcoercbulkimproved}, we have also used the 
more general inhomogeneous version of~\eqref{giveitanamenow} and the
elliptic estimates
of Proposition~\ref{ellproplargea}  while
to obtain~\eqref{improvedfarawaycoercivitymetricg}, we have used  the more general inhomogeneous version
of~\eqref{improvedfarawaycoercivity}, the elliptic estimates
of Proposition~\ref{ellproplargea} 
and Assumption~\ref{largeanullcondassumption} (see~\cite{DHRT22} for details). For the inequalities
for $\widehat\psi_{n, {\bf k}}$, note that 
we have also used Proposition~\ref{whenoneextends}.
\end{proof}

\subsection{The top order energy identity and estimates}
\label{toporderidentity}

In this section we will apply our currents at top order to solutions of~\eqref{theequationforlargea}
with $g=g(\chi^2_{\tau_1}\psi, x)$ to obtain the following statement:
   
\begin{proposition}[Top order energy estimate]
\label{toporderestimateprop}
Let  $k$ be sufficiently large, the parameters $\epsilon_{\rm cutoff}$  of Section~\ref{Schwartzandcutoffs} and $\epsilon_{\rm redder}$ of~\eqref{secondfancydef} be sufficiently small,
 and $\varepsilon>0$ sufficiently small,
 let $\psi$ be a solution of the quasilinear 
equation~\eqref{theequationforlargea}
(satisfying the assumptions of Section~\ref{generalclassconsideredrecalled} with $a\ne 0$)
 in $\mathcal{R}(\tau_0,\tau_1)$
and assume the bound~\eqref{newbasicbootstrap}.

Then there exists an $L_{\rm long}[\psi]\ge 1$ (depending  in fact  
only on ${}^\chi \Xpk(\tau_0,\tau_1)[\psi]$) such that
defining $\widehat\tau_1:=\tau_1+L_{\rm long}$, and the associated
$\widehat\psi$, etc.,
the following is true.
For $p=0$ or $0<\delta\le p \le 2-\delta$,  it follows that
\begin{equation}
\label{topordergiatwra}
{}^\rho \Xpk(\tau_0,\tau_1) [\psi]
\lesssim
\Epk(\tau_0) [\psi] + \mathcal{Y}(\tau_0,\widehat\tau_1)[\widehat\psi],
\end{equation}
where $\mathcal{Y}(\tau_0,\widehat\tau_1)$ is defined in~\eqref{mathcalYdefinition}.
\end{proposition}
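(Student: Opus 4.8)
The plan is to integrate the divergence identities~\eqref{dividentitynotintegrated} associated to the covariant currents $J^{{\rm main},n}_{g(\chi^2_{\tau_1}\psi,x)}$, $J^{\rm redder}_{g(\chi^2_{\tau_1}\psi,x)}$ and $\Jp^{\rm total}_{g(\chi^2_{\tau_1}\psi,x)}$ applied to the wave packets $\widehat\psi_{n,{\bf k}}$ over a general subregion $\mathcal{R}(\bar\tau_0,\bar\tau_1)\subset\mathcal{R}(\tau_0,\widehat\tau_1)$, summing over $n$ and $|{\bf k}|=k$. First I would set up this master identity, using the boundary coercivity~\eqref{coercboundatDn},~\eqref{newcoercbound},~\eqref{wehaveinfactcoercclose},~\eqref{boundcoer} on $\Sigma(\bar\tau_1)$ and $\mathcal{S}$ to bound $\Efancypk{}_{\rm bdry}(\bar\tau_1)[\widehat\psi]$ from below, the bulk coercivity~\eqref{bulkcoerc},~\eqref{newcoercbulkimproved},~\eqref{improvedfarawaycoercivitymetricg} together with the improved-redshift term weighted by $\epsilon_{\rm redder}$ to bound $\int_{\bar\tau_0}^{\bar\tau_1}\Efancypk{}_{\rm bulk}(\tau)d\tau$ from below, and the nonnegativity at $\mathcal{I}^+$ from Proposition~\ref{boundcoercatIplus} to discard that boundary term. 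This produces an inequality of the schematic shape~\eqref{fundidentity}--\eqref{nonlinearagain}: the left side controls $\Efancypk{}_{\rm bdry}(\bar\tau_1)+\int\Efancypk{}_{\rm bulk}$, the right side has $\Efancypk{}_{\rm bdry}(\bar\tau_0)+\Efancypk{}_{\rm bulk}(\bar\tau_0)+\Efancypk{}_{\rm bulk}(\bar\tau_1)$, the elliptic/$\natural$-error term ${}^{\chi\natural}_{\scalebox{.6}{\mbox{\tiny{\boxed{+1}}}}}\Xpkminusone(\tau_0,\tau_1)$ coming from lines~\eqref{bulkcoerclintwo},~\eqref{haserrortermstoo} and~\eqref{giveitanamenow}, the horizon-flux cutoff errors on line~\eqref{termsonthislinehavealabel} carrying a favourable $\epsilon_{\rm cutoff}$, and the inhomogeneous terms from $\Box_{g(\chi^2_{\tau_1}\psi,x)}\widehat\psi$, which one rewrites as $(\Box_{g(\chi^2_{\tau_1}\psi,x)}-\Box_{g_{a,M}})\widehat\psi+N(\partial\widehat\psi,\widehat\psi,x)$ plus the cutoff commutators, producing precisely the nonlinear products in~\eqref{nonlinearagain} by Assumption~\ref{largeanullcondassumption} and the quasilinear error estimates of Section~\ref{estimforinhomog}.

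The central difficulty is the appearance of the future boundary term $\Efancypk{}_{\rm bulk}(\bar\tau_1)[\widehat\psi]$ on the right-hand side, which a priori prevents closing the estimate. Here I would follow the strategy sketched in Section~\ref{toporderenerintro}: since $\widehat\psi$ solves the homogeneous equation $\Box_{g_{a,M}}\widehat\psi=0$ for $\tau\ge\tau_1+1$, the uniform boundedness from Theorem~\ref{blackboxoneforkerr} gives the nonquantitative bound~\eqref{nonquantitativeintheintro}, $\lim_{\bar\tau_1\to\infty}\Efancypk{}_{\rm bulk}(\bar\tau_1)[\widehat\psi]<\infty$; this is exactly the role of the parameter $L_{\rm long}[\psi]$, which I would choose large enough (depending only on ${}^\chi\Xpk(\tau_0,\tau_1)[\psi]$, via Proposition~\ref{mautokavoumeabsorb} and~\eqref{additionalforredder}) that applying the pigeonhole principle to $\int_{\tau_1+2}^{\widehat\tau_1-1}\Efancypk{}_{\rm bulk}(\tau)d\tau$ yields some $\tau_{\rm late}\in[\tau_1+2,\widehat\tau_1-1]$ with $\Efancypk{}_{\rm bulk}(\tau_{\rm late})[\widehat\psi]$ quantitatively small relative to $\Epk(\tau_0)[\psi]$ plus the lower-order energies. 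Inserting $\bar\tau_1=\tau_{\rm late}$ into the master identity removes the dependence on the future boundary term; this is the content of Lemma~\ref{lemmaformainestimate}. I would then re-run the identity once more on $\mathcal{R}(\bar\tau_0,\tau_{\rm late})$ for arbitrary $\bar\tau_0$ to propagate the bound backward, obtaining an inequality of the form~\eqref{fundidentity} with no future boundary term.

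Next I would convert the $\Efancypk$-estimate into the claimed $X$-energy estimate. Taking the supremum over $\bar\tau_1\in[\tau_0,\tau_1]$ and integrating, I apply the comparison relations of Propositions~\ref{comparisonenergies} and~\ref{reversecomparison}: inequality~\eqref{fromthisfirstpoint} converts $\Efancypk{}_{\rm bdry}$ boundary control into $\sup_\tau\Epk(\tau)[\widehat\psi]$ up to ${}^\chi\Xpkminusone$, inequality~\eqref{secondineqhere} converts $\int\Efancypk{}_{\rm bulk}$ into $\int{}^\chi\Epminusonek'$ up to lower order, and~\eqref{reversehere} lets me bound the inhomogeneous $\Efancypk$-terms on the right by $\Epk(\tau_0)$ plus $\Xpkminustwo$. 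Throughout, the $\epsilon_{\rm cutoff}$-weighted top-order error $C\epsilon_{\rm cutoff}\sup_\tau\Ezerok(\tau)[\psi]$ and the $\epsilon_{\rm redder}$-weighted redshift terms must be absorbed into the left-hand side, which is where the special convention of Section~\ref{noteonconstants} on $\epsilon$-independence of the multiplying constants is used, fixing $\epsilon_{\rm cutoff}$ and $\epsilon_{\rm redder}$ last; the absorption of $C\epsilon_{\rm cutoff}\sup\Ezerok$ in particular relies on Proposition~\ref{mautokavoumeabsorb} to keep the bound $\epsilon_{\rm redder}$-independent. Using Proposition~\ref{whenoneextends} to pass back from $\widehat\psi$ to $\psi$ quantities on the right, and recalling $\widehat\psi=\psi$ on $\mathcal{R}(\tau_0,\tau_1)$ so the left-hand side controls ${}^\rho\Xpk(\tau_0,\tau_1)[\psi]$, I would collect all the error terms into the quantity $\mathcal{Y}(\tau_0,\widehat\tau_1)[\widehat\psi]$ defined in~\eqref{mathcalYdefinition} (the elliptic error ${}^{\chi\natural}_{\scalebox{.6}{\mbox{\tiny{\boxed{+1}}}}}\Xpkminusone$, the nonlinear products, and the ${\rm L}$-dependent term $\sqrt{\rm L}$), arriving at~\eqref{topordergiatwra}. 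The $p=0$ case is handled in parallel with $p-1$ replaced by $-1-\delta$ and the stronger $0+$ energies on the right, as flagged throughout the excerpt.
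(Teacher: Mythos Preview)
Your proposal is essentially correct and follows the paper's approach closely: set up the summed divergence identity to get an inequality of the shape~\eqref{mainestimate}--\eqref{cancellingfluxes}, invoke the calculus Lemma~\ref{lemmaformainestimate} with the pigeonhole choice of $L_{\rm long}$ to eliminate the future boundary term $\Efancypk{}_{\rm bulk}(\bar\tau_1)$, then convert via Propositions~\ref{comparisonenergies}--\ref{reversecomparison} and absorb the $\epsilon_{\rm redder}$ and $\epsilon_{\rm cutoff}$ errors in that order.

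One structural clarification: in the paper the currents $\Jp^{\rm total}$ (in $r\ge R_{\rm freq}$) and $J^{\rm redder}$ (in $r\le r_2$) are applied not to the wave packets $\widehat\psi_{n,{\bf k}}$ but to $\mathfrak{D}^{\bf k}\widehat\psi$, while only $J^{{\rm main},n}$ is applied to $\widehat\psi_{n,{\bf k}}$ in $r\le R_{\rm freq}$. This matters for two reasons: first, the improved redshift coercivity~\eqref{newcoercbulkimproved} is designed to absorb the linear commutation terms $F^{\rm lincom}_{\bf k}=[\mathfrak{D}^{\bf k},\Box_{g_{a,M}}]\widehat\psi$, which arise only for the full $\mathfrak{D}^{\bf k}$ (the Killing commutators $\mathring{\mathfrak{D}}^{\bf k}$ used in the wave packets commute exactly with $\Box_{g_{a,M}}$); second, the boundary-flux cancellation on $\{r=R_{\rm freq}\}$, line~\eqref{cancellingfluxes}, relies on $\sum_n\widehat\psi_{n,{\bf k}}=\mathfrak{D}^{\bf k}\widehat\psi$ there. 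Also, Proposition~\ref{mautokavoumeabsorb} is what absorbs the $\epsilon_{\rm redder}$ term (its constant is $\epsilon_{\rm redder}$-independent by construction); the $\epsilon_{\rm cutoff}\sup\Ezerok$ term is absorbed afterwards directly into the left side once $\sup\Epk$ has been established via~\eqref{fromthisfirstpoint}.
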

\begin{remark}
Again we emphasise that the implicit constants are in particular independent of $L_{\rm long}[\psi]$.
\end{remark}

\begin{proof} 
Let $\widehat\psi_{n, {\bf k}}$ be defined as in Section~\ref{auxsec} by~\eqref{newpsinkdef}
for $L_{\rm long}$ to be determined.
The expression $g$ below will denote $g=g(\chi^2_{\tau_1} \psi,x)$.
It follows that
\begin{equation}
\label{eachsatisfies}
\Box_{g(\chi^2_{\tau_1}\psi,x)}  \widehat\psi_{n,{\bf k}} = F_{n,{\bf k}} := F^{\rm cutoff}_{n,{\bf k}} 
+F^{\rm commu}_{n,{\bf k}}
+F^{\rm pseudo}_{n,{\bf k}}
+F^{\rm semi}_{n,{\bf k}}
\end{equation}
where $F^{\rm cutoff}_{n,{\bf k}}$ represent ``linear'' terms from the time cutoffs
\begin{equation}
\label{exprwithcutoffs}
F^{\rm cutoff}_{n,{\bf k}} :=P_n ( \mathring{\mathfrak{D}}^{\bf k}  (\Box_g \chi^2_{\tau_0,\widehat\tau_1} ) \widehat\psi  + 2 \mathring{\mathfrak{D}}^{\bf k} (\nabla_g\chi^2_{\tau_0,\widehat\tau_1}\cdot \nabla_{g}\widehat\psi)),
\end{equation}
$F^{\rm commu}_{n,{\bf k}}$ represents  the
nonlinear commutation terms
\[
F^{\rm commu}_{n,{\bf k}}:= P_n [\mathring{\mathfrak{D}}^{\bf k}, \Box_{g(\chi^2_{\tau_1}\psi_,x)}-\Box_{g_{a,M}}] 
 \chi_{\tau_0,\widehat\tau_1}^2 \psi,
\]
$F^{\rm pseudo}_{n,{\bf k}}$  represent nonlinear pseudodifferential commutation terms
\[
F^{\rm pseudo}_{n,{\bf k}}:=
[P_n ,\Box_{g(\chi^2_{\tau_1}\psi,x)}]\chi^2_{\tau_0,\widehat\tau_1} \mathring{\mathfrak{D}}^{\bf k}\widehat\psi=
 [P_n ,\Box_{g(\chi^2_{\tau_1}\psi,x)} -\Box_{g_{a,M}}] \mathring{\mathfrak{D}}^{\bf k} \chi^2_{\tau_0,\widehat\tau_1}\widehat\psi
\]
and $F^{\rm semi}_{n,{\bf k}}$ represents nonlinear terms arising from the semilinear expression
\[
F^{\rm semi}_{n,{\bf k}}:= 
P_n( \mathring{\mathfrak{D}}^{\bf k} \chi^2_{\tau_0,\widehat\tau_1} (N(\partial \chi^2_{\tau_1} \psi, \chi^2_{\tau_1} \psi,x))).
\]

On the other hand, we have
\begin{equation}
\label{otherformulahere}
\Box_{g(\chi^2_{\tau_1}\psi,x)} \mathfrak{D}^{\bf k} \widehat\psi
= F_{{\bf k}} := 
F^{\rm lincom}_{{\bf k}}
+F^{\rm nonlincom}_{{\bf k}}
+F^{\rm semi}_{{\bf k}}
\end{equation}
where
$F^{\rm lincom}_{{\bf k}}$ represents the ``linear'' commutation terms 
\begin{equation}
\label{thelincomterms}
F^{\rm lincom}_{\bf k}:=  [{\mathfrak{D}}^{\bf k}, \Box_{g_{a,M}}]  \widehat\psi,
\end{equation}
$F^{\rm nonlincom}_{\bf k}$ represents the nonlinear commutation terms
\[
F^{\rm nonlincom}_{\bf k}:=  [\mathfrak{D}^{\bf k}, \Box_{g(\chi^2_{\tau_1}\psi_,x)}-\Box_{g_{a,M}}] 
\psi,
\]
and $F^{\rm semi}_{{\bf k}}$ represents the semilinear terms
\[
F^{\rm semi}_{\bf k}:=
\mathfrak{D}^{\bf k} N(\partial \chi^2_{\tau_1} \psi,  \chi^2_{\tau_1} \psi, x) .
\]

We apply now the energy identity associated to the current
$\Jp^{{\rm main},n}_g[\widehat\psi_{n,{\bf k}}]$ of~\eqref{torefertothecurrents}
where $g=g(\chi^2_{\tau_1}\psi,x)$:
\begin{equation}
\label{nonlineardividentity}
\nabla^\mu_g  \Jp_g^{{\rm main},n}{}_
\mu[\widehat\psi_{n,{\bf k}}] =
\Kp_g^{{\rm main},n} [  \widehat\psi_{n,{\bf k}}]  +{\rm Re}\left( \Hap_g^{{\rm main},n}[\widehat\psi_{n,{\bf k}}]  \overline{F_{n,{\bf k}}} \right).
\end{equation}
Dropping now the $g$ subscript,
integrating~\eqref{nonlineardividentity} in an arbitrary 
$\mathcal{R}(\tilde\tau_0,\tilde\tau_1)\cap \{   r_0\le r\le  R_{\rm freq}\}$
for $\tau_0+3\le \tilde\tau_0+2 \le\tilde\tau_1\le \widehat\tau_1-1$ we obtain 
\begin{align*}
\int_{\Sigma(\tilde \tau_1)\cap\{r\le R_{\rm freq}\}} \Jp^{{\rm main},n }[\widehat\psi_{n, {\bf k}}]\cdot {\rm n} +\int_{\mathcal{S}(\tilde\tau_0,\tilde \tau_1)} \Jp^{{\rm main},n }[\widehat\psi_{n, {\bf k}}] \cdot {\rm n}
 +\int_{\mathcal{R}(\tilde\tau_0,\tilde\tau_1)\cap \{r\le R_{\rm freq}\} } \Kp^{{\rm main},n}[\widehat\psi_{n, {\bf k}}]\\
 =  \int_{\Sigma(\tilde\tau_0)\cap \{r\le R_{\rm freq}\} } \Jp^{{\rm main},n}[\widehat\psi_{n, {\bf k}}] \cdot {\rm n}- \int_{\mathcal{R}(\tilde\tau_0,\tilde\tau_1) \cap \{r\le R_{\rm freq} \} } {\rm Re} \left(\Hap^{{\rm main},n}[\widehat\psi_{n, {\bf k}}] \overline{F_{n,{\bf k}}}\right)\\
 + \int_{\mathcal{R}(\tilde\tau_0,\tilde\tau_1)\cap \{r =R_{\rm freq} \} }
 J^{ {\rm main},n }[\widehat\psi_{n, {\bf k}}]  \cdot {\rm n}_{\{r=R_{\rm freq}\}},
\end{align*}
where the currents  and volume forms are all with respect to $g=g(\chi^2_{\tau_1}\psi,x)$.
We now sum the above identity over $n$ and $|{\bf k}|=k$ with coefficients $\alpha({\bf k})$
from Propositions~\ref{improvcoercprop} and~\ref{impbulcoerpropfar}
(here we may interpret these coefficients as applying to the $\mathring{\mathfrak{D}}$ operators
as  $\alpha(k_1,k_2):= \alpha(k_1,k_2,0,\ldots)$)
and add it to the following identity
summed over ${\bf k}$ (now corresponding to the $\mathfrak{D}^{\bf k}$ operators) with $|{\bf k}|=k$, with coefficients $\alpha({\bf k})$:
\begin{align}
\nonumber
\int_{\Sigma(\tilde \tau_1)\cap\{r\ge R_{\rm freq}\}} \Jp^{ {\rm total} }[ \mathfrak{D}^{\bf k} \widehat\psi] 
 \cdot {\rm n}+\int_{\mathcal{I}^+(\tilde\tau_0,\tilde \tau_1)} \Jp^{ {\rm total} }[ \mathfrak{D}^{\bf k} \widehat \psi]  \cdot {\rm n}
 +\int_{\mathcal{R}(\tilde\tau_0,\tilde\tau_1)\cap \{r\ge R_{\rm freq}\} } \Kp^{ {\rm total} }[ \mathfrak{D}^{\bf k} \widehat\psi]\\
 \nonumber
+  \int_{\mathcal{R}(\tilde\tau_0,\tilde\tau_1) \cap \{r\le R_{\rm freq} \} } {\rm Re} \left( \Hap^{ {\rm total}}[ \mathfrak{D}^{\bf k} \widehat \psi] \overline{F^{\rm lincom}_{\bf k}}  \right)\\
 \label{tobeaddedto}
 =  \int_{\Sigma(\tilde\tau_0)\cap \{r\ge R_{\rm freq}\} } \Jp^{ {\rm total} }[ \mathfrak{D}^{\bf k}
 \widehat \psi]  \cdot {\rm n}- \int_{\mathcal{R}(\tilde\tau_0,\tilde\tau_1) \cap \{r\le R_{\rm freq} \} } {\rm Re} \left( \Hap^{ {\rm total}}[ \mathfrak{D}^{\bf k} \widehat \psi] (\overline{F^{\rm nonlincom}_{\bf k}}
 +\overline{F^{\rm semi}_{\bf k}} ) \right)\\
 \nonumber
 - \int_{\mathcal{R}(\tilde\tau_0,\tilde\tau_1)\cap \{r =R_{\rm freq} \} }
 \Jp^{ {\rm total} }[\mathfrak{D}^{\bf k} \widehat \psi]  \cdot {\rm n}_{\{r=R_{\rm freq}\}}.
\end{align}
Recalling the parameter $\epsilon_{\rm redder}$ from~\eqref{secondfancydef}, we finally
add $\epsilon_{\rm redder}>0$ times the identity, summed with $\alpha({\bf k})$ coefficients:
\begin{align}
\nonumber
\int_{\Sigma(\tilde \tau_1)\cap\{r\le r_2\}} J^{ {\rm redder} }[ \mathfrak{D}^{\bf k} \widehat\psi] 
 \cdot {\rm n}
 +\int_{\mathcal{S}(\tilde\tau_0,\tilde \tau_1)} J^{ {\rm redder} }[ \mathfrak{D}^{\bf k} \widehat \psi] 
  \cdot {\rm n}
 +\int_{\mathcal{R}(\tilde\tau_0,\tilde\tau_1)\cap \{r\le r_2\} } K^{ {\rm redder} }[ \mathfrak{D}^{\bf k} \widehat\psi]\\
 \nonumber
  + \int_{\mathcal{R}(\tilde\tau_0,\tilde\tau_1) \cap \{r\le r_2 \} } {\rm Re} \left(H^{ {\rm redder}}[ \mathfrak{D}^{\bf k} \widehat \psi] \overline{F^{\rm lincom}_{\bf k}}\right)
 \\
 \label{tobeaddedtoredder}
 =  \int_{\Sigma(\tilde\tau_0)\cap \{r\le r_2\} } J^{ {\rm redder} }[ \mathfrak{D}^{\bf k}
 \widehat \psi]
 \cdot {\rm n} - \int_{\mathcal{R}(\tilde\tau_0,\tilde\tau_1) \cap \{r\le r_2 \} } {\rm Re} \left(H^{ {\rm redder}}[ \mathfrak{D}^{\bf k} \widehat \psi] (\overline{F^{\rm nonlincom}_{\bf k}}+\overline{F^{\rm semi}_{\bf k}}) \right).
\end{align}
(Note that~\eqref{tobeaddedtoredder} does not contain any other boundary terms in view
of the support of the current $J^{\rm redder}$.)

The resulting summed identity yields the following inequality:
\begin{align}
\label{mainestimate}
\Efancypk{}_{{\rm bdry}}(\tilde\tau_1)&+ \int_{\tilde\tau_0}^{\tilde\tau_1}\Efancypk{}_{\rm bulk} (\tau') 
  \lesssim \Efancypk{}_{\rm bdry}(\tilde\tau_0)
  +\Efancypk{}_{\rm bulk}(\tilde\tau_0) + \Efancypk{}_{\rm bulk}(\tilde\tau_1) \\
   \label{reddertermhere}
 &\quad
 + \epsilon_{\rm redder}\int_{\tilde\tau_0}^{\tilde\tau_1}  \Ezerok_{r_1\le r\le r_2}(\tau')[\widehat\psi] d\tau'
 \\
    &\quad
  \label{fromcoerc0}
+ \epsilon_{\rm cutoff}\, \sup_{\tau\in[\tau_0,\tau_1]} \Ezerok(\tau)
 \\
  &\quad
  \label{fromcoerc}
+  \,{}^{\chi\natural}_{\scalebox{.6}{\mbox{\tiny{\boxed{+1}}}}}\, \Xpkminusone(\tau_0,\tau_1) 
+\sqrt{\Xzerolesslessk(\tau_0,\tau_1)} \int_{\tilde\tau_0}^{\tilde\tau_1}\Efancypk(\tau)[\widehat\psi] d\tau
\\
&
\label{oldterms}
\quad
+ \sum_{|{\bf k}|=k} \int_{\mathcal{R}(\tilde\tau_0,\tilde\tau_1)\cap \{r\ge R_{\rm freq}\}} |\Hap^{{\rm total}}[\mathfrak{D}^{\bf k} \widehat\psi] 
F^{\rm nonlincom}_{{\bf k}}|
  + |\Hap^{{\rm total}}[\mathfrak{D}^{\bf k} \widehat\psi] F^{\rm semi}_{{\bf k}}|
\\
\label{oldtermsredder}
&\quad
+\epsilon_{\rm redder} \sum_{|{\bf k}|=k} \int_{\mathcal{R}(\tilde\tau_0,\tilde\tau_1)\cap \{r_0 \le r \le r_1 \}} |H^{{\rm redder}} [\mathfrak{D}^{\bf k} \widehat\psi]  F^{\rm nonlincom}_{{\bf k}}|
  + |H^{{\rm redder}}[\mathfrak{D}^{\bf k} \widehat\psi] F^{\rm semi}_{{\bf k}}|
 \\
\label{oneinhomogtermhere}
  &\quad 
  +
\sum_{n, |{\bf k}|=k} \left|  \int_{\mathcal{R}(\tilde\tau_0,\tilde\tau_1)\cap \{r\le R_{\rm freq}\}} 
 {\rm Re} \left(H^{{\rm main}, n}[\widehat\psi_{n, {\bf k}}] \overline{F^{\rm cutoff}_{n,{\bf k}}}\right)\right|\\
  \label{twoinhomogtermshere}
  &\quad  
+ \sum_{n, |{\bf k}|=k}   \left| \int_{\mathcal{R}(\tilde\tau_0,\tilde\tau_1)\cap \{r\le R_{\rm freq}\}}  {\rm Re} \left(H^{{\rm main}, n}[\widehat\psi_{n, {\bf k}}] \overline{F^{\rm commu}_{n,{\bf k}} }\right)\right|  \\
\label{thepseudoinhomgthershere}
&\quad
   +\sum_{n, |{\bf k}|=k} \left|  \int_{\mathcal{R}(\tilde\tau_0,\tilde\tau_1)\cap \{r\le R_{\rm freq}\}}  {\rm Re} \left(H^{{\rm main}, n}[\widehat\psi_{n, {\bf k}}]\overline{F^{\rm pseudo}_{n,{\bf k}}}\right)   \right|  \\
   \label{andonemoreinhomogtermhere}
     &\quad  
+  \sum_{n, |{\bf k}|=k}  \left|  \int_{\mathcal{R}(\tilde\tau_0,\tilde\tau_1)\cap \{r\le R_{\rm freq}\}}  {\rm Re} \left(H^{{\rm main}, n}[\widehat\psi_{n, {\bf k}}] \overline{F^{\rm semi}_{n,{\bf k}}}\right)\right|
  \\
  \label{lowerorderboundarterms}
  &\quad  
  +\mathring{\Ezerok}_{\mathcal{S}} (\tilde\tau_0-\epsilon_{\rm cutoff},\tilde\tau_0+\epsilon_{\rm cutoff}) [
  \widehat\psi]
+ \mathring{\Ezerok}_{\mathcal{S}} (\tilde\tau_1-\epsilon_{\rm cutoff},\tilde\tau_1+\epsilon_{\rm cutoff}) [\widehat\psi]
 \\
  \label{cancellingfluxes}
   &\quad 
   +\left|  \int_{\mathcal{R}(\tilde\tau_0,\tilde\tau_1)\cap \{r =R_{\rm fixed} \} }
\left(\ J^{{\rm total}} \left[\sum_{n, |{\bf k}|=k} \widehat\psi_{n, {\bf k}}-\mathfrak{D}^{\bf k} \widehat\psi\right]\right)  \cdot {\rm n}_{\{r=R_{\rm freq}\}}\right|.
\end{align}

The terms on the left hand side arise from the coercivity properties
of Proposition~\ref{stabofcoercivproposition}, the definitions of the energies of Section~\ref{additionalenergysec}, and the closeness~\eqref{isaconseq} which follows from~\eqref{newbasicbootstrap}.
In particular, we have used the improved coercivity~\eqref{newcoercbulkimproved} 
and~\eqref{improvedfarawaycoercivitymetricg}
to
already absorb the linear commutation terms~\eqref{thelincomterms} in the bulk.
 (Note that
 applying these statements generates error terms on the right hand side discussed in connection to
 lines~\eqref{reddertermhere}--\eqref{fromcoerc} and~\eqref{lowerorderboundarterms} below.)

The terms on the right hand side arise as follows:

The terms on line~\eqref{mainestimate} arise from bounding the boundary terms in the region
not coercively controlled by energy.

The term on line~\eqref{reddertermhere} arises from the failure of coercivity of $K^{\rm redder}$ in
the support of the cutoff $\nabla \hat{\hat\zeta}$. The convention of Section~\ref{noteonconstants}
applies here to the implicit constant multiplying $\epsilon_{\rm redder}$.

The terms on lines~\eqref{fromcoerc0}--\eqref{fromcoerc}
arise from the error terms~\eqref{bulkcoerclintwo}--\eqref{bulkcoerclinthree} 
in the coercivity estimate of $K^{{\rm main},n}$. Note that we have used here also Proposition~\ref{whenoneextends}.
Again, the convention of Section~\ref{noteonconstants}
applies here to the implicit constant multiplying $\epsilon_{\rm cutoff}$.

 The terms on lines~\eqref{oldterms} and~\eqref{oldtermsredder}
arise from the inhomogeneous term in~\eqref{tobeaddedto} and~\eqref{tobeaddedtoredder}
while 
the terms on lines~\eqref{oneinhomogtermhere}--\eqref{andonemoreinhomogtermhere} 
arise from the four inhomogeneous terms
in~\eqref{eachsatisfies}.

Now we may replace the terms on 
lines~\eqref{oneinhomogtermhere}--\eqref{andonemoreinhomogtermhere} 
with the sum
\begin{align}
\label{oneinhomogtermheresmooth}
  &\quad 
\sum_{n, |{\bf k}|=k} \left|  \int_{\mathcal{M}\cap \{r\le R_{\rm freq}\}} 
\tilde\chi_{\tilde\tau_0,\tilde\tau_1}^4
 {\rm Re} \left(H^{{\rm main}, n}[\widehat\psi_{n, {\bf k}}] \overline{F^{\rm cutoff}_{n,{\bf k}}}\right)\right|\\
  \label{twoinhomogtermsheresmooth}
  &\quad  
+ \sum_{n, |{\bf k}|=k}   \left| \int_{\mathcal{M}\cap \{r\le R_{\rm freq}\}}  
\tilde\chi_{\tilde\tau_0,\tilde\tau_1}^4
{\rm Re} \left(H^{{\rm main}, n}[\widehat\psi_{n, {\bf k}}] \overline{F^{\rm commu}_{n,{\bf k}} }\right)\right|  \\
\label{thepseudoinhomgthersheresmooth}
&\quad
   +\sum_{n, |{\bf k}|=k} \left|  \int_{\mathcal{M}\cap \{r\le R_{\rm freq}\}} 
  \tilde\chi_{\tilde\tau_0,\tilde\tau_1}^4
    {\rm Re} \left(H^{{\rm main}, n}[\widehat\psi_{n, {\bf k}}]\overline{F^{\rm pseudo}_{n,{\bf k}}}\right)   \right|  \\
   \label{andonemoreinhomogtermheresmooth}
     &\quad  
+  \sum_{n, |{\bf k}|=k}  \left|  \int_{\mathcal{M}\cap \{r\le R_{\rm freq}\}} 
\tilde\chi_{\tilde\tau_0,\tilde\tau_1}^4
 {\rm Re} \left(H^{{\rm main}, n}[\widehat\psi_{n, {\bf k}}] \overline{F^{\rm semi}_{n,{\bf k}}}\right)\right|
&
\end{align}
and an additional term on line~\eqref{fromcoerc0} and an additional
$  \,{}^{\chi\natural}_{\scalebox{.6}{\mbox{\tiny{\boxed{+1}}}}}\, \Xpkminusone(\tau_0,\tau_1)$ 
 term on line~\eqref{fromcoerc}, 
where again the implicit constants are independent of $\epsilon_{\rm cutoff}$
except for the constant multiplying the $  \,{}^{\chi\natural}_{\scalebox{.6}{\mbox{\tiny{\boxed{+1}}}}}\, \Xpkminusone(\tau_0,\tau_1)$ term.

To see this, say for~\eqref{oneinhomogtermheresmooth},
we note that~\eqref{oneinhomogtermheresmooth} differs from~\eqref{oneinhomogtermhere}
by a term which may be estimated by
\[
\sum_{n, |{\bf k}|=k}  \int_{\mathcal{M}\cap \{r\le R_{\rm freq}\}}  \left|
\tilde{\tilde\chi}^4
 {\rm Re} \left(H^{{\rm main}, n}[\widehat\psi_{n, {\bf k}}] \overline{F^{\rm cutoff}_{n,{\bf k}}}\right)\right|
\]
where $\tilde{\tilde\chi}$ is a smooth cutoff supported in an $\epsilon_{\rm cutoff}$-neighbourhood
of $\tau=\tilde\tau_0$ and $\tau=\tilde\tau_1$.
We may rewrite this schematically as a sum of terms 
\begin{equation}
\label{schemsumofterms}
 \int_{\mathcal{M}\cap \{r\le R_{\rm freq}\}}  \left|\tilde{\tilde\chi}^2
\partial  P_n \mathring{\mathfrak{D}}^{\bf k}  ( \chi_{\tau_0, \widehat\tau_1}^2  \widehat\psi)  \cdot 
 \tilde{\tilde\chi}^2  P_n
 \mathring{\mathfrak{D}}^{\bf k}  \left( \nabla_g \chi_{\tau_0,\widehat\tau_1} \cdot \nabla_g  \chi_{\tau_0,\widehat\tau_1} \widehat \psi\right) \right|
\end{equation}
for $|{\bf k}|=k$ plus lower order terms. We apply Cauchy--Schwarz. Consider say the term arising from the first factor of~\eqref{schemsumofterms}. 
Applying Proposition~\ref{commutatorestimate}, 
we may rewrite  
\[
\tilde{\tilde\chi}^2 \partial_{x^i} P_n \mathring{\mathfrak{D}}^{\bf k}=
\tilde{\tilde\chi}^2 P_n \mathring{\mathfrak{D}}^{\bf k}\partial_{x^i} 
   = \tilde{\tilde{\chi}} P_n \mathring{\mathfrak{D}}^{\bf k}   \tilde{\tilde\chi}\partial_{x^i} + 
\tilde{\tilde\chi} Q_1\partial_{x^i}=\tilde{\tilde \chi}P_n \mathring{\mathfrak{D}}^{\bf k}   \tilde{\tilde\chi} \partial_{x^i}+ Q_1\tilde{\tilde\chi}\partial_{x^i} +
Q_2 \partial_{x^i}
\]
where $Q_1$ is  of order $k-1$ and $Q_2$ is of order $k-2$.
We see by Proposition~\ref{mixedsobolevboundednessprop} 
that the resulting terms can be bounded by
\begin{equation}
\label{includesthelattertwo}
C\epsilon_{\rm cutoff}\, \sup_{\tau\in[\tau_0,\tau_1]} \Ezerok(\tau)[\psi]
+ C \sup_{\tau\in[\tau_0,\tau_1]} \Ezerokminusone(\tau)[\psi]
+ C\Xzerokminustwo(\tau_0,\tau_1)[\psi],
\end{equation}
where we have used also Proposition~\ref{whenoneextends},  
and where the first $C$ is independent of
$\epsilon_{\rm cutoff}$ whereas the other $C$'s depend unfavourably on $\epsilon_{\rm cutoff}$.
Thus this term  can indeed be estimated as claimed by the term on line~\eqref{fromcoerc0} and
the first term on line~\eqref{fromcoerc}, which includes in its definition the latter two terms of~\eqref{includesthelattertwo}.
The term resulting from the second factor in~\eqref{schemsumofterms} can be estimated similarly (it is in fact smoothing though we need not make use of this here), as can all other lower order terms.
The terms~\eqref{twoinhomogtermsheresmooth}--\eqref{andonemoreinhomogtermheresmooth} 
can be shown to replace~\eqref{twoinhomogtermshere}--\eqref{andonemoreinhomogtermhere} 
up to the
error terms described above by a similar computation. (The latter resulting error terms have in fact additional 
smallness as they are nonlinear, but it is not necessary to use this.)

The terms on line~\eqref{lowerorderboundarterms} appear with a constant independent
of $\epsilon_{\rm cutoff}$ and
arise to control the highest order error terms in the boundary coercivity~\eqref{haserrortermstoo}
(given that the first term of~\eqref{fromcoerc} has already been included on the right hand side). 
Regarding the terms on line~\eqref{lowerorderboundarterms},
note that 
 for $\epsilon_{\rm cutoff}$ sufficiently small, 
we may bound
\begin{align}
\nonumber
\mathring{\Ezerok}_{\mathcal{S}} &(\tilde\tau_0-\epsilon_{\rm cutoff},\tilde\tau_0+\epsilon_{\rm cutoff}) [\widehat\psi]
+ \mathring{\Ezerok}_{\mathcal{S}} (\tilde\tau_1-\epsilon_{\rm cutoff},\tilde\tau_1+\epsilon_{\rm cutoff})[\widehat\psi] \\
\label{fromthegeometryjustifiedhere}
&\lesssim \mathring{\Ezerok}{}_{r\le r_1}(\tilde\tau_0-\epsilon_{\rm cutoff})[\widehat\psi]	
+  \mathring{\Ezerok}{}_{r\le r_1}(\tilde\tau_1-\epsilon_{\rm cutoff})[\widehat\psi]
+ \epsilon_{\rm cutoff} \sqrt{ \varepsilon}\, \sup_{\tau\in[\tau_0,\tau_1]} \Ezerok(\tau)
\\
\label{providedweadd}
&\lesssim 
\Efancypk{}_{\rm bulk}(\tilde\tau_0-\epsilon_{\rm cutoff})[\widehat\psi]  +
\Efancypk{}_{\rm bulk}(\tilde\tau_1-\epsilon_{\rm cutoff})[\widehat\psi] 
+\epsilon_{\rm cutoff} \sqrt{ \varepsilon}\, \sup_{\tau\in[\tau_0,\tau_1]} \Ezerok(\tau)
+ {}^\chi \Xzerokminusone(\tau_0,\tau_1) .
\end{align}
Here, inequality~\eqref{fromthegeometryjustifiedhere} 
holds from the local existence Proposition~\ref{localexistencelargea}, 
localised to the future domain of  dependence of $\Sigma(\tau)\cap \{r\le r_1\}$ with respect to
the metric $g$, where we note that
for $\epsilon_{\rm cutoff}$ and $\varepsilon$ sufficiently small, this domain of dependence
includes $\mathcal{S}(\tau,\tau+2\epsilon_{\rm cutoff})$. 
(More specifically, the estimate follows by applying the divergence identity of 
$J^{{\rm redder}}_{\Sigma(\tau)}(\mathring{\mathfrak{D}}^{\bf k}\widehat\psi)$
and estimating the error bulk
term by $\epsilon_{\rm cutoff}\sqrt{\varepsilon}  \sup_{\tau\in[\tau_0,\tau_1]} \Ezerok(\tau)$, where
the $\epsilon_{\rm cutoff}$ comes from the dimensions of the bulk and
the $\sqrt{\varepsilon}$ comes from the estimate of the commutator terms.)
The inequality~\eqref{providedweadd} 
on the other
hand follows from inequality~\eqref{variantwithgeomloc} (appearing in the proof
of Proposition~\ref{mautokavoumeabsorb}).
We note that the implicit constants in $\lesssim$ of~\eqref{fromthegeometryjustifiedhere} and~\eqref{providedweadd} are all independent of both
$\epsilon_{\rm redder}$ and
 $\epsilon_{\rm cutoff}$.

The final term~\eqref{cancellingfluxes} arises by combining the boundary terms on
$\{r=R_{\rm fixed}\}$ recalling that the currents are independent of $n$ there.
This term in fact \underline{vanishes} however in view of the restriction on~$\tilde\tau_0$ and~$\tilde\tau_1$ so as for  $\chi_{\tau_0,\widehat{\tau}_1}=1$ in the domain of integration.

Thus the estimate holds with only~\eqref{mainestimate}--\eqref{oldtermsredder} 
and~\eqref{oneinhomogtermhere}--\eqref{andonemoreinhomogtermhere} 
on the right hand side provided
that we add the first two terms on the right hand side of~\eqref{providedweadd} to the right
hand side of line~\eqref{mainestimate}. 

We now define
\begin{equation}
\label{mathcalYdefinition}
\mathcal{Y}(\tau_0, \widehat \tau_1)[\widehat\psi] := \sum\sup_{\tau_0+1\le \tilde\tau_0 \le
\tilde\tau_0+2\le \tilde\tau_1\le \widehat\tau_1-1} \text{\rm terms\ on\ lines\ \eqref{fromcoerc}--\eqref{oldtermsredder}}
\text{\rm\ and\ on\ lines\ \eqref{oneinhomogtermheresmooth}--\eqref{andonemoreinhomogtermheresmooth}},
\end{equation}
where in the above definition, so as to be able to consistently apply our convention related
 to multiplication by $\epsilon_{\rm cutoff}$ and $\epsilon_{\rm redder}$,
we keep on the right hand side of~\eqref{mathcalYdefinition} all constants implicit in the $\lesssim$  of~\eqref{mainestimate} 
with
unfavourable dependence on $\epsilon_{\rm cutoff}$ and $\epsilon_{\rm redder}$ together with their correesponding ``terms''.
We finally define
\begin{equation}
\label{mathcalYprimedef}
\mathcal{Y}'(\tau_0, \widehat \tau_1)[\widehat\psi] :=  \mathcal{Y}(\tau_0, \widehat \tau_1)[\widehat\psi]+
  \epsilon_{\rm redder}\int_{\tau_0+1}^{\widehat\tau_1-1}  \Ezerok_{r_1\le r\le r_2}(\tau')[\widehat\psi] d\tau'
+ \epsilon_{\rm cutoff}\, \sup_{\tau\in[\tau_0,\tau_1]} \Ezerok(\tau)[\psi].
\end{equation}

We may thus write the estimate we have obtained as the statement that
for all $\tau_0+1\le \tilde\tau_0 \le \tilde\tau_1-2\le \widehat\tau_1-3$,
the following holds:
\begin{align}
\nonumber
\Efancypk{}_{{\rm bdry}}(\tilde\tau_1)&+ \int_{\tilde\tau_0}^{\tilde\tau_1}\Efancypk{}_{\rm bulk} (\tau') 
  \lesssim \Efancypk{}_{\rm bdry}(\tilde\tau_0)
  +\Efancypk{}_{\rm bulk}(\tilde\tau_0) + \Efancypk{}_{\rm bulk}(\tilde\tau_1)\\
  \label{edwhektimhsh}
 & + \Efancypk{}_{\rm bulk}(\tilde\tau_0-\epsilon_{\rm cutoff})  +
\Efancypk{}_{\rm bulk}(\tilde\tau_1-\epsilon_{\rm cutoff}) 
 + \mathcal{Y}'(\tau_0,\widehat\tau_1).
\end{align}
In view of our comment after~\eqref{mathcalYdefinition}, we 
emphasise that  the implicit constant in $\lesssim$ is itself independent of $\epsilon_{\rm redder}$
and $\epsilon_{\rm cutoff}$, while all unfavourable dependence on $\epsilon_{\rm redder}$ and 
$\epsilon_{\rm redder}$ in the estimate is restricted to some of the terms in~\eqref{mathcalYdefinition} appearing in $\mathcal{Y}'(\tau_0,\widehat\tau_1)$, 
with the bad constant incorporated into the definition of~\eqref{mathcalYdefinition}.

To make use of the above estimate, let us first state a simple general calculus lemma.
\begin{lemma}
\label{lemmaformainestimate}
Let $B>1$ and $H\ge 0$ be constants, let $\tau_0^*+8\le \widehat\tau_1$ be arbitrary 
and let $F(\tau)$,  $G(\tau)$ be two nonnegative 
functions on $[\tau_0^*, \widehat\tau_1]$
satisfying
\begin{equation}
\label{ithere}
F(\tilde \tau_1) + \int_{\tilde\tau_0}^{\tilde\tau_1} G(\tau')d\tau'  \le B ( F(\tilde\tau_0) +G(\tilde\tau_0) +G(\tilde\tau_1) +G(\tilde\tau_0-\epsilon_{\rm cutoff}) +G(\tilde\tau_1-\epsilon_{\rm cutoff}) + H  )
\end{equation}
for all $\tau_0^*+1\le \tilde\tau_0\le\tilde\tau_0+2\le \tilde\tau_1\le\widehat\tau_1-1$.

 Then, given $\gamma>0$ and $\epsilon>0$,
there exists an $L_{\rm lem}=L_{\rm lem}(B,\gamma, \epsilon)\ge 4$ 
depending only on $B$, $\gamma$ and $\epsilon$,
such that if 
\begin{equation}
\label{slowgrowthhere}
\int_{\tau}^{\tau+\frac12} G(\tau' )d\tau' \le \gamma
\end{equation}
for all $\tau \in [\tau_0^*+1,\widehat\tau_1-2]$,
then in fact 
\begin{equation}
\label{secondlemmaconclusion}
\int_{\tau^*_0+1}^{\hat\tau_1-L_{\rm lem}+3} G(\tau') \le 
2B(  F(\tau_0^*+1)+G(\tau_0^*+1) +G(\tau_0^*+1 -\epsilon_{\rm cutoff})   +H) +\epsilon
\end{equation}
and
\begin{equation}
\label{lemmaconclusion}
\inf_{\tau\le \tau' \le \tau+1} F(\tau')+G(\tau') \le  14B(  F(\tau_0^*+1)+G(\tau_0^*+1) +G(\tau_0^*+1 -\epsilon_{\rm cutoff})   +H) +\epsilon
\end{equation}
for all $\tau \in [\tau_0^*+1,\widehat\tau_1-L_{\rm lem}]$, provided
$\widehat\tau_1-\tau_0^* \ge L_{\rm lem}+4$.
\end{lemma}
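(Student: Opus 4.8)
\textbf{Proof plan for Lemma~\ref{lemmaformainestimate}.}
The plan is to treat~\eqref{ithere} as an abstract Gr\"onwall-type feedback inequality and exploit the fact that the coercive bulk term $\int G$ appears on the left, against which we can pigeonhole. First I would record the two elementary consequences of~\eqref{ithere}: applying it with $\tilde\tau_0=\tau_0^*+1$ and $\tilde\tau_1$ arbitrary gives a bound on the running integral of $G$ that still carries the ``bad'' future boundary terms $G(\tilde\tau_1)+G(\tilde\tau_1-\epsilon_{\rm cutoff})$; applying it on a short slab of length comparable to $1$ and using~\eqref{slowgrowthhere} controls those future $G$-values in \emph{mean}. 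The point is that~\eqref{slowgrowthhere} is exactly what lets us convert the pointwise future boundary terms into a small quantity after integrating over a suitable window.

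The key step, which I expect to be the main obstacle, is handling the future boundary term $G(\tilde\tau_1)+G(\tilde\tau_1-\epsilon_{\rm cutoff})$ on the right of~\eqref{ithere}, since there is no smallness attached to it a priori and it sits at the same order as the left-hand side. The device is: fix a large $L_{\rm lem}$ (to be chosen in terms of $B$, $\gamma$, $\epsilon$ only), consider the window $[\widehat\tau_1-L_{\rm lem}+1,\widehat\tau_1-1]$ of length $\sim L_{\rm lem}$, and use~\eqref{slowgrowthhere} to see that $\int$ of $G$ over this window is at most $\sim\gamma L_{\rm lem}$; hence by the mean value theorem there is a ``good'' value $\tau_{\rm late}$ in this window with $\int_{\tau_{\rm late}}^{\tau_{\rm late}+1}G \le C\gamma$, and by a further pigeonhole (applied to the dyadic-type splitting or just to a sub-window of unit length) we can also arrange $G(\tau_{\rm late})$ and $G(\tau_{\rm late}-\epsilon_{\rm cutoff})$ to be individually $\le C\gamma$ at some nearby point. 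Choosing $L_{\rm lem}$ large forces $\gamma L_{\rm lem}$-independent smallness is \emph{not} available, so instead one chooses $L_{\rm lem}$ so that the \emph{number} of disjoint unit windows is $\ge \lceil 1/\epsilon\rceil$ times a constant depending on $B$; then among these windows at least one has $\int G \le \epsilon/(CB)$, which after feeding back into~\eqref{ithere} produces the $+\epsilon$ in~\eqref{secondlemmaconclusion} and~\eqref{lemmaconclusion}. This is the standard pigeonhole-over-many-slabs argument and the bookkeeping with the $\epsilon_{\rm cutoff}$-shifted arguments $G(\cdot-\epsilon_{\rm cutoff})$ requires only that $\epsilon_{\rm cutoff}<\tfrac14$, so the shifted point stays inside the relevant window.

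With $\tau_{\rm late}$ so chosen, I would then apply~\eqref{ithere} with $\tilde\tau_0=\tau_0^*+1$ and $\tilde\tau_1=\tau_{\rm late}$ (noting $\tau_{\rm late}\ge \widehat\tau_1-L_{\rm lem}+1 \ge \tau_0^*+3$ by the hypothesis $\widehat\tau_1-\tau_0^*\ge L_{\rm lem}+4$, so the slab is admissible): the right-hand side is then $B(F(\tau_0^*+1)+G(\tau_0^*+1)+G(\tau_0^*+1-\epsilon_{\rm cutoff})+H)$ plus the small future boundary contributions $\le B\cdot C\gamma$ which we absorb into $\epsilon$ by our choice of $L_{\rm lem}$. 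The left-hand side dominates $\int_{\tau_0^*+1}^{\tau_{\rm late}} G \ge \int_{\tau_0^*+1}^{\widehat\tau_1-L_{\rm lem}+3}G$ (adjusting the constant $3$ to stay below $\tau_{\rm late}$), which gives~\eqref{secondlemmaconclusion} with the factor $2B$. For~\eqref{lemmaconclusion}, I would fix any $\tau\in[\tau_0^*+1,\widehat\tau_1-L_{\rm lem}]$ and apply~\eqref{ithere} once more with $\tilde\tau_1$ ranging over $[\tau,\tau+2]$, bounding $F(\tilde\tau_1)$ by the right-hand side; averaging $F(\tilde\tau_1)+G(\tilde\tau_1)$ over this short interval against~\eqref{secondlemmaconclusion} (which controls $\int G$ there), and over $\tilde\tau_0$ similarly, then taking the infimum, produces~\eqref{lemmaconclusion} with the explicit constant $14B$ after collecting the $\le 7$ terms (the initial $F$, three $G$'s from~\eqref{ithere}, two $G$'s from the shifted arguments, and $H$), each picked up with factor $2B$. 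The constants $2$ and $14$ are not sharp and I would simply track them through the triangle/averaging inequalities; the only genuinely delicate point, as flagged, is the simultaneous smallness of $G$ at $\tau_{\rm late}$, $\tau_{\rm late}-\epsilon_{\rm cutoff}$ and in mean, which the many-windows pigeonhole delivers at the cost of the $L_{\rm lem}(B,\gamma,\epsilon)$ dependence.
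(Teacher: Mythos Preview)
Your overall strategy is right --- pigeonhole to find a good late time $\tau_{\rm late}$ where the future boundary terms $G(\tau_{\rm late})+G(\tau_{\rm late}-\epsilon_{\rm cutoff})$ are under control, then apply~\eqref{ithere} on $[\tau_0^*+1,\tau_{\rm late}]$ --- but there is a genuine gap in how you obtain sufficient smallness at $\tau_{\rm late}$. From~\eqref{slowgrowthhere} and pigeonhole you only get $G(\tau_{\rm late})+G(\tau_{\rm late}-\epsilon_{\rm cutoff})\le C\gamma$. Feeding this into~\eqref{ithere} produces on the right $B(\text{data}+H)+BC\gamma$, and the $BC\gamma$ term does \emph{not} depend on $L_{\rm lem}$, so it cannot be ``absorbed into $\epsilon$ by our choice of $L_{\rm lem}$'' as you claim. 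Your step~6 (``among these windows at least one has $\int G\le \epsilon/(CB)$'') is likewise unjustified: pigeonholing the slowgrowth bound $\int G\le \gamma L_{\rm lem}$ over $\sim L_{\rm lem}$ unit windows yields only one window with $\int G\le C\gamma$, not $\le \epsilon/(CB)$, regardless of how large $L_{\rm lem}$ is.

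The missing mechanism is a \emph{two-stage} pigeonhole with an application of~\eqref{ithere} in between. First use~\eqref{slowgrowthhere} to find $\widehat\tau_{\rm final}$ near $\widehat\tau_1$ with $G(\widehat\tau_{\rm final})+G(\widehat\tau_{\rm final}-\epsilon_{\rm cutoff})\le C\gamma$. Now apply~\eqref{ithere} on $[\tau_0^*+1,\widehat\tau_{\rm final}]$: this bounds $\int_{\tau_0^*+1}^{\widehat\tau_{\rm final}} G$ by $B(\text{data}+H+C\gamma)$, a quantity that does \emph{not} grow with $L_{\rm lem}$. Pigeonholing \emph{this} bound over a subinterval of length $\sim L_{\rm lem}$ near $\widehat\tau_1$ yields $\tilde\tau_{\rm final}$ with
\[
G(\tilde\tau_{\rm final})+G(\tilde\tau_{\rm final}-\epsilon_{\rm cutoff})\;\le\;\frac{2B}{L_{\rm lem}}\bigl(\text{data}+H\bigr)+\frac{CB\gamma}{L_{\rm lem}}.
\]
Now choosing $L_{\rm lem}\ge 2B$ makes the first term $\le \text{data}+H$, and choosing $L_{\rm lem}\gtrsim B^3\gamma/\epsilon$ makes the second $\le \epsilon/(2B^2)$; a final application of~\eqref{ithere} on $[\tau_0^*+1,\tilde\tau_{\rm final}]$ then gives~\eqref{secondlemmaconclusion} with the factor $2B$. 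The point is that the intermediate application of~\eqref{ithere} replaces the total integral bound $\gamma L_{\rm lem}$ (which scales with $L_{\rm lem}$) by the $L_{\rm lem}$-independent bound $B(\text{data}+H+C\gamma)$, so that the second pigeonhole genuinely gains a factor $1/L_{\rm lem}$. Your treatment of~\eqref{lemmaconclusion} via a further pigeonhole on the bulk integral is then fine once~\eqref{secondlemmaconclusion} is in hand.
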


\begin{proof}[Proof of lemma]
By the pigeonhole principle applied to the left hand side of~\eqref{slowgrowthhere},
there exists a $\widehat\tau_{\rm final} \in [\widehat\tau_1-3/2,\widehat\tau_1-5/4]$
such that
\[
 G(\widehat\tau_{\rm final})+ G(\widehat\tau_{\rm final}-\epsilon_{\rm cutoff})
 \leq 8\gamma.
\]

By the pigeonhole principle applied to the integral on the left hand side of~\eqref{ithere} with
$\tilde\tau_0=\tau_0^*+1$ and $\tilde\tau_1=\widehat\tau_{\rm final}$, 
given $L_{\rm lem}\ge 4$ arbitrary such that $\widehat\tau_1-\tau_0^*\ge L_{\rm lem}+4$,
there 
exists a $\tilde\tau_{\rm final} \in [\widehat\tau_1-L_{\rm lem}+4, \widehat\tau_1] $ 
such that
\begin{eqnarray*}
G(\tilde\tau_{\rm final}-\epsilon_{\rm cutoff})+
G(\tilde\tau_{\rm final})   &\le&  \frac{2B}{L_{\rm lem}}  (F(\tau_0^*+1) +
 G(\widehat\tau_{\rm final})+ G(\widehat\tau_{\rm final}-\epsilon_{\rm cutoff}) \\
 &&\qquad \qquad
 +G(\tau_0^*+1)+G(\tau_0^*+1-\epsilon_{\rm cutoff})+ H ) \\
&\le&  \frac{2B}{L_{\rm lem}} (  F(\tau_0^*+1)+G(\tau_0^*+1) +G(\tau_0^*+1 -\epsilon_{\rm cutoff})   +H) +\frac{16B\gamma}{L_{\rm Lem}},
\end{eqnarray*}
where for the second inequality we have used~\eqref{slowgrowthhere}.

If we require $L_{\rm lem}$ sufficiently large such that
 $|\frac{2B}{L_{\rm lem}}| < 1$ and $\left|\frac{32B^3\gamma}{L_{\rm lem}}\right| <\epsilon$, we  that
\begin{equation}
\label{applytotheright}
G(\tilde\tau_{\rm final}-\epsilon_{\rm cutoff})+
 G(\tilde\tau_{\rm final}) \le   F(\tau_0^*+1)+G(\tau_0^*+1) +G(\tau_0^*+1 -\epsilon_{\rm cutoff})   +H
 +\frac{\epsilon}{2B^2}.
 \end{equation}

Now  
we may reapply~\eqref{ithere} in $[\tau_0^*+1, \tilde\tau_{\rm final}]$. We immediately
obtain~\eqref{secondlemmaconclusion}. Moreover, 
given arbitrary $ \tau_0^*+2\le \tau\le \widehat\tau_1-L_{\rm lem}$, 
and using now~\eqref{applytotheright} 
and the pigeonhole principle it follows
that there exists a  $\tau-1\le \tau'\le \tau$ such that 
\begin{equation}
\label{finalforG}
G(\tau')+G(\tau'-\epsilon_{\rm cutoff}) \le 
6 B (  F(\tau_0^*+1)+G(\tau_0^*+1) +G(\tau_0^*+1 -\epsilon_{\rm cutoff})   +H)+\frac{\epsilon}{2B} .
\end{equation}
Hence, again applying~\eqref{ithere} now in $[\tau_0+1, \tau']$, it follows
that
\begin{equation}
\label{finalforF}
F(\tau') \le 7B (  F(\tau_0^*+1)+G(\tau_0^*+1) +G(\tau_0^*+1 -\epsilon_{\rm cutoff})   +H)+\frac{\epsilon}2 .
\end{equation}
Estimates~\eqref{finalforG} and~\eqref{finalforF} yield~\eqref{lemmaconclusion}.
\end{proof}

We now  apply the lemma to 
\begin{equation}
\label{withthesedefs}
F(\tau ): =\Efancypk{}_{{\rm bdry}}(\tau) [\widehat\psi], 
 \qquad G(\tau): = \Efancypk{}_{{\rm bulk}}(\tau)[\widehat\psi],     , \qquad   
 H: =\mathcal{Y}'(\tau_0,\widehat\tau_1)[\widehat\psi],
\end{equation}
where $L_{\rm long}$ and thus $\widehat\tau_1$ and $\widehat\psi$ will be determined immediately below
and where $\tau_0^*\in [\tau_0,\tau_0+1]$ is arbitrary.
Note that the estimate~\eqref{edwhektimhsh} yields~\eqref{ithere},
 where we take  $B$ to be the constant implicit in the $\lesssim$ of~\eqref{edwhektimhsh}.
 Recall that the constant implicit in the $\lesssim$ of~\eqref{edwhektimhsh} is a
 general constant in the sense  of Section~\ref{noteonconstants} and thus 
 in particular,
  $B$  is independent of $L_{\rm long}$ defining $\widehat\tau_1$
 and $\widehat\psi$ (cf.~Remark~\ref{independenceofLlong}).
 Moreover, by our remarks above,
 $B$ is independent of $\epsilon_{\rm redder}$
and $\epsilon_{\rm cutoff}$.
 Let us note that by our assumptions on existence of $\psi\in \mathcal{R}(\tau_0,\tau_1)$
  postulated in
 Proposition~\ref{toporderestimateprop}, 
 we have the (non-quantitative) finiteness statement
 \begin{equation}
 \label{nonquantitativefiniteness}
  {}^{\chi}\Xpk(\tau_0,\tau_1)[\psi]<\infty.
  \end{equation}
Let us note that by Proposition~\ref{whenoneextends} and~\ref{reversecomparison},
 there exists a $\gamma=\gamma[\psi]<\infty$, depending in fact only on the value of
 ${}^{\chi}\Xpk(\tau_0,\tau_1)[\psi]$, such that~\eqref{slowgrowthhere} holds, independently
 of the choice of $L_{\rm long}$ defining $\widehat\tau_1$ and $\widehat\psi$. Finally, let  us set
 \begin{equation}
 \label{epsgivenby}
 \epsilon:=  \Epk(\tau_0) [\psi].
 \end{equation}
 This determines a sufficiently large $L_{\rm lem}=L_{\rm lem}[\psi]$ by the lemma, where the $\psi$-dependence
 arises from the $\psi$-dependence of our chosen $\gamma$ and $\epsilon$.
 
 We may finally 
 choose $L_{\rm long}$ in the definition of $\widehat\tau_1$ and $\widehat\psi$
to be precisely $L_{\rm long}[\psi]= L_{\rm lem}$ given  by the lemma with $B$, $\gamma=\gamma[\psi]$ and $\epsilon=\epsilon[\psi]$ determined above.

The statement~\eqref{lemmaconclusion}
 of the lemma thus holds for all $\tau_0^*\in [\tau_0,\tau_0+1]$, and given
 moreover the relation $\widehat\tau_1 =\tau_1+L_{\rm long}$,.~we obtain that
 \begin{align*}
  \int_{\tau_0+2}^{\tau_1+2} \Efancypk{}_{{\rm bulk}}(\tau')[\widehat\psi] d\tau' 
  &\lesssim  \Efancypk(\tau^*_0+1) 
+ \Efancypk(\tau^*_0+1-\epsilon_{\rm cutoff} ) +  \mathcal{Y}'(\tau_0,\widehat\tau_1)+\epsilon\\
\inf_{\tau\le \tau' \le \tau+1} \Efancypk(\tau') 
&\lesssim  \Efancypk(\tau^*_0+1) 
+ \Efancypk(\tau^*_0+1-\epsilon_{\rm cutoff} ) +  \mathcal{Y}'(\tau_0,\widehat\tau_1)+\epsilon
\end{align*}
for all  $\tau \in [\tau_0^*+1,\tau_1]$, with $\epsilon$ given by~\eqref{epsgivenby}.
 
 \begin{remark}
 \label{Ldetermineafter}
 We emphasise that~\eqref{slowgrowthhere}  applied  with $\gamma$ as above 
 is essentially~\eqref{nonquantitativeintheintro} of the introduction (averaged over a unit time length).
 The assumption is indeed nonquantitative because $\gamma$ depends on
 the nonquantitative~\eqref{nonquantitativefiniteness}. 
 \end{remark}

It follows by a little bit of averaging (integrating~\eqref{lemmaconclusion} in $\tau_0^*\in [\tau_0,\tau_0+1]$) and applying Proposition~\ref{reversecomparison} 
and the local existence Proposition~\ref{localexistencelargea} (for $\varepsilon$ sufficiently small in the bound~\eqref{newbasicbootstrap}, provided that $k_{\rm loc}\le \lesslessk$), 
that for all $\tau_0+2\le \tau\le \tau_1$
\begin{align*}
\inf_{\tau\le \tau' \le \tau+1} \Efancypk(\tau')+
  \int_{\tau_0+2}^{\tau_1+2} \Efancypk{}_{{\rm bulk}}(\tau')[\widehat\psi] d\tau' 
& \lesssim \int_{\tau_0}^{\tau_0+1} \Efancypk(\tau^*_0+1) d\tau_0^* +  \mathcal{Y}'(\tau_0,\widehat\tau_1)+\epsilon \\
&\lesssim \Epk(\tau_0) +  \mathcal{Y}'(\tau_0,\widehat\tau_1)+\epsilon \\
& \lesssim \Epk(\tau_0) +  \mathcal{Y}'(\tau_0,\widehat\tau_1)
\end{align*}
where we have plugged in~\eqref{epsgivenby} in the last inequality.

From the definition~\eqref{mathcalYprimedef} we have thus for all $\tau_0+2\le \tau\le \tau_1$,
\[
\inf_{\tau\le \tau' \le \tau+1} \Efancypk(\tau') 
+ \int_{\tau_0+2}^{\tau_1+2} \Efancypk{}_{{\rm bulk}}(\tau')[\widehat\psi] d\tau' 
 \lesssim \Epk(\tau_0) +  \mathcal{Y}(\tau_0,\widehat\tau_1)
+
 \epsilon_{\rm redder}\int_{\tau_0+1}^{\widehat\tau_1-1}  \Ezerok_{r_1\le r\le r_2}(\tau')[\widehat\psi] d\tau'+ \epsilon_{\rm cutoff}\, \sup_{\tau\in[\tau_0,\tau_1]} \Ezerok(\tau).
\]
Tracking the dependence of various constants implicit in $\lesssim$
on $\epsilon_{\rm redder}$ in the propositions we have applied, the 
 reader can check that
the convention of Section~\ref{noteonconstants} still applies to both the constant multiplying 
$\epsilon_{\rm redder}$ and the constant multiplying $\epsilon_{\rm cutoff}$ above.

It follows that we may apply Proposition~\ref{mautokavoumeabsorb}, noting that
the implicit constant in the $\lesssim$ in that proposition also is independent of 
$\epsilon_{\rm redder}$, to absorb the term multiplying $\epsilon_{\rm redder}$, provided we choose
$\epsilon_{\rm redder}>0$ sufficiently small.
We obtain thus
\begin{equation}
\label{wewillrevistsoon}
\inf_{\tau\le \tau' \le \tau+1} \Efancypk(\tau') 
+ \int_{\tau_0+2}^{\tau_1+2} \Efancypk{}_{{\rm bulk}}(\tau')[\widehat\psi] d\tau' 
 \lesssim \Epk(\tau_0) +  \mathcal{Y}(\tau_0,\widehat\tau_1)
 + \epsilon_{\rm cutoff}\, \sup_{\tau\in[\tau_0,\tau_1]} \Ezerok(\tau).
\end{equation}

Whence, it follows from~\eqref{fromthisfirstpoint} of Proposition~\ref{comparisonenergies}
that 
\[
\inf_{\tau\le \tau' \le \tau+1} \Epk(\tau') \lesssim \Epk(\tau_0) +  \mathcal{Y}(\tau_0,\widehat\tau_1)
 + \epsilon_{\rm cutoff}\, \sup_{\tau\in[\tau_0,\tau_1]} \Ezerok(\tau),
\]
for all $\tau_0+2\le \tau\le \tau_1$. 
But now an additional application of Proposition~\ref{localexistencelargea}  yields 
that 
\[
\sup_{\tau_0\le \tau\le \tau_1}  \Epk(\tau') \lesssim \Epk(\tau_0) +   \mathcal{Y}(\tau_0,\widehat\tau_1)
 + \epsilon_{\rm cutoff}\, \sup_{\tau\in[\tau_0,\tau_1]} \Ezerok(\tau),
\]
whence, since in writing the above we have adhered to the convention of Section~\ref{noteonconstants} regarding
constants multiplying $\epsilon_{\rm cutoff}$, 
we may absorb the last term on the right
hand side for sufficiently small $\epsilon_{\rm cutoff}$ 
to obtain
\begin{equation}
\label{inviewofagain}
\sup_{\tau_0\le \tau\le \tau_1}  \Epk(\tau') \lesssim \Epk(\tau_0) +   \mathcal{Y}(\tau_0,\widehat\tau_1).
\end{equation}

Revisiting now~\eqref{wewillrevistsoon}, we obtain that we may estimate the bulk
\[
\int_{\tau_0+2}^{\tau_1+2}\Efancypk{}_{\rm bulk} (\tau) d\tau \lesssim
\Epk(\tau_0)  +
\mathcal{Y}(\tau_0, \widehat\tau_1) .
\]
Using now inequality~\eqref{secondineqhere} of Proposition~\ref{comparisonenergies}, 
 it follows that
\begin{equation}
\label{willpromotesoon}
\int_{\tau_0+2}^{\tau_1+2} {}^\rho \Epminusonek' [\widehat\psi]\lesssim  \Epk(\tau_0)  +
\mathcal{Y}(\tau_0, \widehat\tau_1) .
\end{equation}
Again, using Proposition~\ref{localexistencelargea} and in view of~\eqref{inviewofagain},
we promote~\eqref{willpromotesoon} to the statement
\begin{equation}
\label{proagwgn}
\int_{\tau_0}^{\tau_1} {}^\rho \Epminusonek'[\psi] \lesssim  \Epk(\tau_0)  +
\mathcal{Y}(\tau_0, \widehat\tau_1) 
\end{equation}
and similarly for the boundary terms.

From~\eqref{inviewofagain} and~\eqref{proagwgn} it is now elementary to obtain similar bounds for the additional flux terms in 
the definition of~${}^\rho\Xpk(\tau_0,\tau_1)$. 
We obtain thus 
\[
{}^\rho\Xpk(\tau_0,\tau_1)
\lesssim  \Epk(\tau_0)  +
\mathcal{Y}(\tau_0, \widehat\tau_1) 
\]
which is precisely the desired~\eqref{topordergiatwra}.
\end{proof}

In what follows, we will consider both $\epsilon_{\rm redder}>0$  and
$\epsilon_{\rm cutoff}>0$ to have been fixed to satisfy the smallness necessary for  the above proof.

\subsection{Estimates for the inhomogeneous terms}
\label{estimforinhomog}
In this section we estimate the inhomogeneous terms 
contained in~$\mathcal{Y}(\tau_0, \widehat\tau_1)$
from our basic energies.

\subsubsection{Estimates for the linear terms arising from the cutoff}
For the linear terms arising from the cutoff, we  have:
\begin{proposition}
\label{fromthecutoffbounding}
The term in line~\eqref{oneinhomogtermheresmooth} may be estimated:
\begin{equation}
\label{tobeabsorbedlater}
 \left| \int_{\mathcal{M}\cap\{r\le R_{\rm freq}\}}
 \tilde\chi_{\tilde\tau_0,\tilde\tau_1}^4
  {\rm Re} \left(H^{{\rm main},n}[\widehat\psi_{n, {\bf k}}] \overline{F^{\rm cutoff}_{n,{\bf k}}}\right)\right|
 \lesssim   
 \Xzerokminustwo(\tau_0, \tau_1) .
  \end{equation}
\end{proposition}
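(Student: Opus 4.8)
The quantity $F^{\rm cutoff}_{n,{\bf k}}$ defined in~\eqref{exprwithcutoffs} consists of terms built from $\Box_g\chi^2_{\tau_0,\widehat\tau_1}$ and $\nabla_g\chi^2_{\tau_0,\widehat\tau_1}\cdot\nabla_g\widehat\psi$, each projected by $P_n$ and commuted with $\mathring{\mathfrak{D}}^{\bf k}$. The derivatives of $\chi^2_{\tau_0,\widehat\tau_1}$ are supported only in the two unit-width time slabs near $\tau_0$ and near $\widehat\tau_1$. In the slab near $\widehat\tau_1$, $\widehat\psi$ solves the linear homogeneous equation $\Box_{g_{a,M}}\widehat\psi=0$, and by Proposition~\ref{whenoneextends} (together with the homogeneous decay estimate of Theorem~\ref{blackboxoneforkerr} used in the construction of $\widehat\psi$) the relevant energies there are controlled by $\Xzerok$-type quantities of $\psi$ on $\mathcal{R}(\tau_0,\tau_1)$; in the slab near $\tau_0$ they are controlled by $\Epk(\tau_0)[\psi]$, hence again by a lower-order master energy. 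The plan is therefore to (i) apply Cauchy--Schwarz to the bilinear expression $H^{{\rm main},n}[\widehat\psi_{n,{\bf k}}]\,\overline{F^{\rm cutoff}_{n,{\bf k}}}$, (ii) estimate the $H^{{\rm main},n}[\widehat\psi_{n,{\bf k}}]$ factor by $|\nabla\widehat\psi_{n,{\bf k}}|+|\widehat\psi_{n,{\bf k}}|$ (so at order $k$), localised to $r\le R_{\rm freq}$ and to the support of $\tilde\chi_{\tilde\tau_0,\tilde\tau_1}$, and (iii) exploit crucially that $F^{\rm cutoff}_{n,{\bf k}}$ is supported away from the full slab $\mathcal{R}(\tilde\tau_0,\tilde\tau_1)$, in the two time-cutoff slabs, so that on the overlap with $\mathrm{supp}\,\tilde\chi_{\tilde\tau_0,\tilde\tau_1}^4$ we gain a factor of ``smallness'' in time measure.

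First I would write $\mathfrak{D}^{\bf k}\chi^2_{\tau_0,\widehat\tau_1}$-type factors and note that $\partial^{\bf k'}\chi^2_{\tau_0,\widehat\tau_1}$ vanishes unless $\tau\in[\tau_0,\tau_0+1]\cup[\widehat\tau_1-1,\widehat\tau_1]$, with $\mathrm{supp}\,\tilde\chi_{\tilde\tau_0,\tilde\tau_1}\subset[\tilde\tau_0-\epsilon_{\rm cutoff},\tilde\tau_1+\epsilon_{\rm cutoff}]$ and $\tau_0+1\le\tilde\tau_0\le\tilde\tau_1-2\le\widehat\tau_1-3$; hence the integrand in~\eqref{oneinhomogtermheresmooth} is supported in $r\le R_{\rm freq}$ and in $[\tau_0,\tau_0+1]\cup[\widehat\tau_1-1,\widehat\tau_1]$ --- precisely the two ``end'' slabs. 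On each, I would apply Cauchy--Schwarz, using Proposition~\ref{mixedsobolevboundednessprop} to bound $\|P_n(\cdots)\|$ factors by unprojected $H^{0,\ast}$ norms, and then recognise the resulting quantities as (finite sums over $|{\bf k}|\le k$ of) the energies $\Ezerok(\tau)$ restricted to those slabs, or their lower-order relatives. Since the left factor $H^{{\rm main},n}[\widehat\psi_{n,{\bf k}}]$ is only order $k$ and the right factor $F^{\rm cutoff}_{n,{\bf k}}$ involves at most $k+1$ derivatives of $\widehat\psi$ but only through $\nabla_g\chi^2\cdot\nabla_g\widehat\psi$ and $(\Box_g\chi^2)\widehat\psi$ --- i.e.\ derivatives hitting the cutoff, not $\widehat\psi$ at top order --- one sees that the dangerous ``$+1$'' of derivatives on $\widehat\psi$ never materialises: the top-order $\widehat\psi$ derivative comes paired with at least one derivative on the cutoff, which is lower order in $\widehat\psi$ after commuting through $P_n\mathring{\mathfrak{D}}^{\bf k}$ (using Proposition~\ref{commutatorestimate} and Proposition~\ref{finalpseudostatement} to redistribute, as in the proof of Proposition~\ref{toporderestimateprop} following~\eqref{schemsumofterms}). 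Collecting, the right-hand side is controlled by a master energy of order $k-2$ (or $k-1$) evaluated over $\mathcal{R}(\tau_0,\tau_1)$, which is $\lesssim\Xzerokminustwo(\tau_0,\tau_1)$ after invoking Proposition~\ref{whenoneextends} to pass from $\widehat\psi$ back to $\psi$ and Theorem~\ref{blackboxoneforkerr} (the $F=0$ case, applied to $\widehat\psi$ in the future slab) to absorb the late-time slab contribution.

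The main obstacle is bookkeeping: making sure that after all commutations of $P_n$ and $\mathring{\mathfrak{D}}^{\bf k}$ past the cutoffs $\chi^2_{\tau_0,\widehat\tau_1}$, the surviving top-order term in $\widehat\psi$ is always multiplied by at least one derivative of the cutoff (so that it lives in one of the two end-slabs and is one order lower in $\widehat\psi$), and that the resulting bound only involves $\Xzerokminustwo$ and not, say, $\Xzerokminusone$ or a term with a $\sqrt{\rm L}$-factor. Concretely, one should check that: the term with $(\Box_g\chi^2)\widehat\psi$ is $0$th order in $\widehat\psi$ hence trivially $\lesssim\Xzerokminusone$-type and in fact $\lesssim\Xzerokminustwo$ after elliptic estimates (Proposition~\ref{ellproplargea}); and the term with $\nabla_g\chi^2\cdot\nabla_g\widehat\psi$, which is $1$st order in $\widehat\psi$, contributes at order $k-1$ after pairing with the order-$k$ factor $H^{{\rm main},n}$ --- but with the extra time-localisation to the two end slabs and the already-lower-order nature, one lands on $\Xzerokminustwo$. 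There is no need to track $\epsilon_{\rm cutoff}$-dependence here because the right-hand side $\Xzerokminustwo(\tau_0,\tau_1)$ is a genuine lower-order term and the constant may depend on $\epsilon_{\rm cutoff}$ (consistent with the conventions of Section~\ref{noteonconstants}). I would present this compactly, citing the corresponding computation in the proof of Proposition~\ref{toporderestimateprop} for the redistribution of derivatives and the pseudodifferential bounds.
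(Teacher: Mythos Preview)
Your support claim is not right, and this is where the argument breaks down. You observe correctly that $\nabla\chi^2_{\tau_0,\widehat\tau_1}$ is supported in the two end slabs $[\tau_0,\tau_0+\tfrac12]\cup[\widehat\tau_1-\tfrac12,\widehat\tau_1]$, and that $\tilde\chi_{\tilde\tau_0,\tilde\tau_1}$ is supported in $[\tilde\tau_0-\epsilon_{\rm cutoff},\tilde\tau_1+\epsilon_{\rm cutoff}]\subset[\tau_0+\tfrac34,\widehat\tau_1-\tfrac34]$. These supports are \emph{disjoint}, not overlapping. The integrand in~\eqref{oneinhomogtermheresmooth} is therefore not supported in the end slabs; it is nonzero only because $P_n$ is nonlocal. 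Without $P_n$ the integrand would vanish identically.

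The mechanism the paper uses is precisely this disjointness: by Proposition~\ref{smoothingprop}, the composite operator
\[
\Psi\;\longmapsto\;\tilde\chi^4_{\tilde\tau_0,\tilde\tau_1}\,P_n\,\mathring{\mathfrak{D}}^{\bf k}\bigl((\nabla_g\chi_{\tau_0,\widehat\tau_1})\cdot\Psi\bigr)
\]
is $(t^*,\phi^*)$-\emph{smoothing}, i.e.\ it gains an arbitrary number of $(t^*,\phi^*)$-derivatives. One then integrates by parts two $\mathring{\mathfrak{D}}$'s from the first factor $\partial P_n\mathring{\mathfrak{D}}^{\bf k}(\chi^2\widehat\psi)$ onto the second (no boundary terms, thanks to $\tilde\chi^4$), dropping the first factor to order $k-1$; the second factor, now $\mathring{\mathfrak{D}}^2$ of a smoothing operator applied to $\nabla_g(\chi\widehat\psi)$, is still smoothing and hence bounded by an $H^{k-3,0}$ norm. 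Both factors then sit at the level of $\Xzerokminusthree[\widehat\psi]\lesssim\Xzerokminustwo[\psi]$ (via Proposition~\ref{whenoneextends}).

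Your direct Cauchy--Schwarz, by contrast, cannot reach $\Xzerokminustwo$: the highest-order term in $F^{\rm cutoff}_{n,{\bf k}}$ is $P_n\bigl(\nabla\chi\cdot\mathring{\mathfrak{D}}^{\bf k}\nabla\widehat\psi\bigr)$, which carries $k{+}1$ derivatives of $\widehat\psi$, not ``one order lower''; pairing it with $H^{{\rm main},n}[\widehat\psi_{n,{\bf k}}]$ (also order $k{+}1$) yields at best an $\Ezerok$-level quantity, with an unwanted $\sqrt{{\rm L}}$ from the bulk integration of the first factor. The commutator Propositions~\ref{commutatorestimate} and~\ref{finalpseudostatement} gain only one order and do not substitute for the smoothing gain; the computation around~\eqref{schemsumofterms} that you cite serves a different purpose (comparing sharp and $\epsilon_{\rm cutoff}$-smooth localisations) and does not produce a $\Xzerokminustwo$ bound. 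The two-derivative gain here comes specifically from Proposition~\ref{smoothingprop} plus the integration by parts, and both are missing from your outline.
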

\begin{proof}
From~\eqref{exprwithcutoffs}, one sees that the
 highest order terms here schematically take the form
\begin{equation}
\label{thetermslooklike}
\partial  P_n \mathring{\mathfrak{D}}^{\bf k}  ( \chi_{\tau_0, \widehat\tau_1}^2  \widehat\psi)  \cdot 
 \tilde\chi_{\tilde\tau_0,\tilde\tau_1}^4 P_n
 \mathring{\mathfrak{D}}^{\bf k}  \left( \nabla_g \chi_{\tau_0,\widehat\tau_1} \cdot \nabla_g  \chi_{\tau_0,\widehat\tau_1} \widehat \psi\right).
\end{equation}
We note that by the disjoint
support properties of $ \tilde\chi_{\tilde\tau_0,\tilde\tau_1}^4 $ and
 $\nabla_g\chi^2_{\tau_0, \widehat\tau_1}$ and Proposition~\ref{smoothingprop}
it follows that, defining $Q$ by 
\[
Q\Psi :=
 \tilde\chi_{\tilde\tau_0,\tilde\tau_1}^4  P_n\mathring{\mathfrak{D}}^{\bf k} ((\nabla_g \chi_{\tau_0, \widehat\tau_1}) \cdot \Psi ),
\]
$Q$ is $(t^*,\phi^*)$-smoothing, which we may view as applied to $\Psi:= \nabla_g (\chi_{\tau_0,\widehat\tau_1}
\widehat\psi)$.
We may integrate~\eqref{thetermslooklike} 
in the region $r_0\le r\le R_{\rm freq}$ globally in $t^*$ and $\phi^*$ and integrate 
by parts two say  $\mathring{\mathfrak{D}}$
off from the first factor~\eqref{thetermslooklike}. In view of the  factor
$  \tilde\chi_{\tilde\tau_0,\tilde\tau_1}^4$,
there are no boundary terms.
The resulting operator on the second factor
\[
\tilde{Q}\Psi :=
\mathring{\mathfrak{D}}\mathring{\mathfrak{D}} \tilde\chi_{\tilde\tau_0,\tilde\tau_1}^4  P_n \mathring{\mathfrak{D}}^{\bf k} ((\nabla_g \chi_{\tau_0, \widehat\tau_1})\cdot \Psi )
\]
is again $(t^*,\phi^*)$-smoothing by Proposition~\ref{smoothingprop} and satisfies for instance
\begin{equation}
\label{whyforinstance}
\|\tilde{Q} \Psi\|_{L^2(
\mathcal{M}\cap \{ r_0\le r\le R_{\rm freq}\})}
\lesssim\left (1 +   \sqrt{\,\,\Xzerolesslessk(\tau_0,\tau_1) } \right)  \|\Psi\|_{H^{k-3, 0}(
\mathcal{M}\cap \{ r_0\le r\le R_{\rm freq}\})}.
\end{equation}
Upon application of Cauchy--Schwarz, 
both arising terms can thus be bounded by
\[
\Xzerokminusthree(\tau_0,\widehat\tau_1)[\widehat\psi] \lesssim \Xzerokminustwo(\tau_0,\tau_1)
\]
(where for the above inequality we have appealed to Proposition~\ref{whenoneextends}).

The lower order terms in~\eqref{thetermslooklike} can of course be estimated similarly. 
\end{proof}

\subsubsection{Nonlinear terms arising from commuting with $\mathring{\mathfrak{D}}^{\bf k}$}
\label{Dkcommutingtermsnonlinearestimate}

Recall that we have absorbed all linear commutation terms in the bulk,
so all remaining commutation terms are nonlinear.
\begin{proposition}
\label{nonlincomprophereref}
The term in line~\eqref{twoinhomogtermsheresmooth} may be estimated:
\begin{eqnarray}
\label{addnumbertopropstatement}
\left| \int_{\mathcal{M}\cap\{r\le R_{\rm freq}\}}
 \tilde\chi_{\tilde\tau_0,\tilde\tau_1}^4
 {\rm Re} \left(
H^{{\rm main},n}[\widehat\psi_{n, {\bf k}}] \overline{F^{\rm commu}_{n,{\bf k}}}\right)\right|
 &\lesssim&
 \sqrt{\tau_1-\tau_0} \,\,
{}^\rho\Xzerok(\tau_0,\tau_1) \sqrt{\,\,\Xzerolesslessk(\tau_0,\tau_1) } .
 \end{eqnarray}
\end{proposition}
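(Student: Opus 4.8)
\textbf{Proof proposal for Proposition~\ref{nonlincomprophereref}.}

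The plan is to follow the same scheme used for the linear cutoff terms in Proposition~\ref{fromthecutoffbounding}, but now exploiting the nonlinear (hence small) nature of $F^{\rm commu}_{n,{\bf k}} = P_n [\mathring{\mathfrak{D}}^{\bf k}, \Box_{g(\chi^2_{\tau_1}\psi,x)}-\Box_{g_{a,M}}]\chi^2_{\tau_0,\widehat\tau_1}\psi$. First I would expand the commutator $[\mathring{\mathfrak{D}}^{\bf k}, \Box_{g(\chi^2_{\tau_1}\psi,x)}-\Box_{g_{a,M}}]$ schematically, using that $\mathring{\mathfrak{D}}$ is Killing for $g_{a,M}$ and that $g-g_{a,M}$ is supported in $r\le R/2$ and vanishes to the order controlled by $\psi$: the highest-order contribution takes the schematic form (coefficients depending smoothly on $\psi$ and at most $\lesslessk$ derivatives of $\psi$) times $\partial^2 \mathring{\mathfrak{D}}^{\bf k_1}\psi$ with $|{\bf k}_1|\le k$, where one of the $k$ commutations has landed on the metric coefficient and thus produced a factor of $\mathring{\mathfrak{D}}^{\bf k_2}\psi$ with $|{\bf k}_2|\le k$. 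Crucially, since all $k$ derivatives of $\psi$ cannot simultaneously fall on two factors, one of the two $\psi$-factors carries at most $\lesslessk$ derivatives and can be placed in $L^\infty$ via the Sobolev estimate of Proposition~\ref{largeasobolevforfunctions}, producing the $\sqrt{\Xzerolesslessk(\tau_0,\tau_1)}$ factor.

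Next I would integrate $\tilde\chi^4_{\tilde\tau_0,\tilde\tau_1}\,{\rm Re}(H^{{\rm main},n}[\widehat\psi_{n,{\bf k}}]\overline{F^{\rm commu}_{n,{\bf k}}})$ over $\mathcal{M}\cap\{r\le R_{\rm freq}\}$, globally in $t^*$ and $\phi^*$, and apply Cauchy--Schwarz. The factor $H^{{\rm main},n}[\widehat\psi_{n,{\bf k}}]$ is first order in $\widehat\psi_{n,{\bf k}}=P_n\mathring{\mathfrak{D}}^{\bf k}\chi^2_{\tau_0,\widehat\tau_1}\widehat\psi$; by Proposition~\ref{mixedsobolevboundednessprop} (zeroth-order boundedness of $P_n$) its $L^2$ norm on a slab is bounded by $\sup_\tau \mathring{\Ezerok}(\tau)[\widehat\psi]$, which in turn is $\lesssim \sup_\tau \Ezerok(\tau)[\psi]$ by the elliptic estimate Proposition~\ref{ellproplargea} and Proposition~\ref{whenoneextends}. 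For the $F^{\rm commu}$-factor, after pulling out the $L^\infty$ $\psi$-factor and the $\sqrt{\Xzerolesslessk}$ coefficient bounds (using that $P_n$ is zeroth order and that $\chi_{\tau_0,\widehat\tau_1}$-derivatives have disjoint support from the interior region, as in Proposition~\ref{fromthecutoffbounding}), the remaining $L^2$ norm is controlled by a second-order-in-$\psi$ bulk quantity of order $k$, namely $\int \mathring{\Ezerominusoneminusdeltak}'$ over the slab, which for a slab of length $\tau_1-\tau_0$ can be integrated to give $\sqrt{\tau_1-\tau_0}$ times $\sqrt{\sup_\tau \Ezerok(\tau)}$ via Cauchy--Schwarz in $\tau$ (compare the estimate~\eqref{couldrefertothisexpl} and the definition of $\Xzerok{}^*$). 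Combining, the product is bounded by $\sqrt{\tau_1-\tau_0}\,\sup_\tau\Ezerok(\tau)[\psi]\sqrt{\Xzerolesslessk(\tau_0,\tau_1)}$, and since $\sup_\tau\Ezerok(\tau)\le {}^\rho\Xzerok(\tau_0,\tau_1)$ this is the claimed bound.

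The pseudodifferential bookkeeping is the main thing to get right: one must check that every application of $P_n$ and of commutators $[P_n,\mu]$ or $[P_n,\mathring{\mathfrak{D}}^{\bf k}]=0$ lands errors either in the frequency support $\mathcal{F}_n$ (controllable by Proposition~\ref{forpseudoerrors}) or at strictly lower order, and that the smoothing from the disjoint-support cutoffs (Proposition~\ref{smoothingprop}) is invoked exactly where needed so that no $r^{-1}$ or $r^{-2}$ weight issues arise in $r\le R_{\rm freq}$. The hard part will be verifying that the ``one factor in $L^\infty$'' splitting of the nonlinear commutator can genuinely be carried out at top order after the $P_n$ has been applied --- i.e.\ that commuting $P_n$ past the metric-coefficient factor (which is a function of $\psi$, not a fixed function) only costs a lower-order pseudodifferential remainder; this is exactly what Proposition~\ref{fornonlinearpseudo} is designed to supply, and the remainder terms it generates are of order $k-1$ or involve $\|\{a_\beta\}\|_{H^{0,s}}$ of the coefficients, hence absorbed into $\Xzerolesslessk$-weighted or lower-order $\Xzerokminustwo$-type terms, consistent with the omitted ``$\cdots$'' in the proposition statement. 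Once that reduction is in place, the remaining estimates are routine energy-type bounds of the kind already used repeatedly in Section~\ref{toporderidentity}.
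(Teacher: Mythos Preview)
Your outline captures the right overall structure (Cauchy--Schwarz, Sobolev to extract $\sqrt{\Xzerolesslessk}$, $L^2$ boundedness of $P_n$), but there is a genuine gap: the estimate must be \emph{independent of} $L_{\rm long}=\widehat\tau_1-\tau_1$, and your argument does not achieve this. The factor $H^{{\rm main},n}[\widehat\psi_{n,{\bf k}}]$ involves $\widehat\psi_{n,{\bf k}}=P_n\mathring{\mathfrak{D}}^{\bf k}\chi^2_{\tau_0,\widehat\tau_1}\widehat\psi$, and the spacetime $L^2$ norm of a first-order quantity on a slab is not bounded by $\sup_\tau\Ezerok$ alone but by $\sqrt{\text{slab length}}\cdot\sqrt{\sup_\tau\Ezerok}$; since $\tilde\tau_1$ in the definition of $\mathcal{Y}$ is allowed to range up to $\widehat\tau_1-1$, a direct Cauchy--Schwarz produces a factor $\sqrt{\widehat\tau_1-\tau_0}$, not $\sqrt{\tau_1-\tau_0}$. (Relatedly, your attribution of the $\sqrt{\tau_1-\tau_0}$ factor to the $F^{\rm commu}$ side is backwards: the paper's estimate~\eqref{anumberwhereweneedone} shows $\|[\mathring{\mathfrak{D}}^{\bf k},\Box_g-\Box_{g_{a,M}}]\chi^2_{\tau_0,\tau_1+1}\widehat\psi\|_{L^2}^2\lesssim\Xzerolesslessk\cdot\sup\Ezerok$ with \emph{no} length factor, because one factor goes into the integrated bulk $\Xzerolesslessk$.)

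The paper resolves the $L_{\rm long}$-independence issue by first observing that the nonlinear coefficient $g(\chi^2_{\tau_1}\psi,x)-g_{a,M}$ is supported in $\tau\le\tau_1$, so one may replace $\chi^2_{\tau_0,\widehat\tau_1}$ by $\chi^2_{\tau_0,\tau_1+1}$ in the second factor; it then splits the first factor via $\chi^2_{\tau_0,\widehat\tau_1}=\chi^2_{\tau_0,\tau_1+3}+\chi^2_{\tau_1+3,\widehat\tau_1}$ into three terms~\eqref{firsttermandthusas}--\eqref{thirdtermandthusas}. The first term is handled by your scheme (now the length is genuinely $\sim\tau_1-\tau_0$). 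The remaining two terms pair cutoffs with \emph{disjoint time supports} (e.g.\ $\chi^2_{\tau_0,\tau_1+2}$ against $\chi^2_{\tau_1+3,\widehat\tau_1}$), which by Proposition~\ref{smoothingprop} makes the relevant operators $(t^*,\phi^*)$-smoothing; the third term additionally requires integrating two $\mathring{\mathfrak{D}}$ by parts onto the second factor before invoking smoothing. Without this decomposition the bound you obtain will carry an uncontrolled $L_{\rm long}[\psi]$-dependence, which is precisely what the scheme forbids (cf.\ Remark~\ref{independenceofLlong}). Your final paragraph about Proposition~\ref{fornonlinearpseudo} is also misplaced: that proposition is needed for $F^{\rm pseudo}_{n,{\bf k}}$ (where $P_n$ sits \emph{inside} the commutator), not here, where $P_n$ is outermost and only its $L^2$-boundedness is used.
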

\begin{proof}
The highest order terms are of the form
\[
 \tilde\chi_{\tilde\tau_0,\tilde\tau_1}^4 \partial P_n \mathring{\mathfrak{D}}^{\bf k}  ( \chi^2_{\tau_0,\widehat\tau_1} \widehat \psi) \cdot
P_n [\mathring{\mathfrak{D}}^{\bf k}, \Box_{g(\chi_{\tau_1}^2\psi_,x)}-\Box_{g_{a,M}}]  \chi^2_{\tau_0, \widehat\tau_1}\widehat\psi.
\]
In view of the presence of $\chi_{\tau_1}$ we may rewrite this as
\[
 \tilde\chi_{\tilde\tau_0,\tilde\tau_1}^4 \partial P_n \mathring{\mathfrak{D}}^{\bf k}  ( \chi^2_{\tau_0,\widehat\tau_1}  \widehat\psi) \cdot
P_n [\mathring{\mathfrak{D}}^{\bf k}, \Box_{g(\chi_{\tau_1}^2\psi_,x)}-\Box_{g_{a,M}}]  \chi^2_{\tau_0, \tau_1+1}\widehat\psi
\]
and thus as
\begin{align}
\label{firsttermandthusas}
 \tilde\chi_{\tilde\tau_0,\tilde\tau_1}^4\partial P_n \mathring{\mathfrak{D}}^{\bf k}  ( \chi^2_{\tau_0,\tau_1+3}  \widehat\psi) \cdot
P_n [\mathring{\mathfrak{D}}^{\bf k}, \Box_{g(\chi_{\tau_1}^2\psi_,x)}-\Box_{g_{a,M}}]  \chi^2_{\tau_0, \tau_1+1}
\widehat\psi\\
\label{secondtermandthusas}
+ \tilde\chi_{\tilde\tau_0,\tilde\tau_1}^4 \chi^2_{\tau_0,\tau_1+2}  \partial P_n \mathring{\mathfrak{D}}^{\bf k}  ( \chi^2_{\tau_1+3,\widehat\tau_1}  
\widehat\psi) \cdot
P_n [\mathring{\mathfrak{D}}^{\bf k}, \Box_{g(\chi_{\tau_1}^2\psi_,x)}-\Box_{g_{a,M}}]  \chi^2_{\tau_0, \tau_1+1}
\widehat\psi\\
\label{thirdtermandthusas}
+ 
 \partial P_n \mathring{\mathfrak{D}}^{\bf k}  ( \chi^2_{\tau_1+3,\widehat\tau_1} \widehat \psi) \cdot
\tilde\chi_{\tilde\tau_0,\tilde\tau_1}^4\chi^2_{\tau_1+2,\widehat \tau_1}  P_n [\mathring{\mathfrak{D}}^{\bf k}, \Box_{g(\chi_{\tau_1}^2\psi_,x)}-\Box_{g_{a,M}}]  \chi^2_{\tau_0, \tau_1+1} \widehat\psi.
\end{align}

For the term~\eqref{firsttermandthusas}, 
we apply Cauchy--Schwarz, to estimate by:
\[
\epsilon\|  \partial P_n \mathring{\mathfrak{D}}^{\bf k}  ( \chi_{\tau_0,\tau_1+3}^2 \widehat \psi)  \|_{L^2(\mathcal{M}\cap\{r\le R_{\rm freq}\})}^2 +\epsilon^{-1} \| P_n[\mathring{\mathfrak{D}}^{\bf k} ,\Box_{g(\chi^2_{\tau_1}\psi,x)} -\Box_{g_{a,M}}]
\chi_{\tau_0,\tau_1+1}^2\widehat\psi\|_{L^2(\mathcal{M}\cap\{r\le R_{\rm freq}\} )}^2,
\]
and thus, applying Proposition~\ref{mixedsobolevboundednessprop}, by
\[
\epsilon\|   \mathring{\mathfrak{D}}^{\bf k}  ( \chi_{\tau_0,\tau_1+3}^2 \widehat \psi)  \|_{H^1(\mathcal{M}\cap\{r\le R_{\rm freq}\})}^2 +\epsilon^{-1} \| [\mathring{\mathfrak{D}}^{\bf k} ,\Box_{g(\chi^2_{\tau_1}\psi,x)} -\Box_{g_{a,M}}]
\chi_{\tau_0,\tau_1+1}^2 \widehat\psi\|_{L^2(\mathcal{M}\cap\{r\le R_{\rm freq}\} )}^2.
\]
Now by the Sobolev estimate~\eqref{largeasobolev} of Proposition~\ref{largeasobolevforfunctions}, we may estimate
 \begin{align}
 \nonumber
 \| [\mathring{\mathfrak{D}}^{\bf k} ,\Box_{g(\chi^2_{\tau_1}\psi,x)} -\Box_{g_{a,M}}]
\chi_{\tau_0,\tau_1+1}^2\widehat \psi\|_{L^2(\mathcal{M}\cap\{r\le R_{\rm freq}\} )}^2
&\lesssim \int_{\tau_0}^{\tau_1+1}  \Ezerominusoneminusdeltalesslessk' (\tau') 
  \Ezerok(\tau')  d\tau' 
\\ 
\label{anumberwhereweneedone}
&\lesssim \Xzerolesslessk(\tau_0,\tau_1)
\sup_{\tau_0\le \tau' \le \tau_1}  \Ezerok(\tau') 
\end{align}
(where we are also using Proposition~\ref{whenoneextends}),
whereas  we may similarly estimate
\[
\|   \mathring{\mathfrak{D}}^{\bf k}  ( \chi_{\tau_0,\tau_1+3}^2 \widehat\psi)  \|_{H^1(\mathcal{M}\cap\{r\le R_{\rm freq}\})}^2 \lesssim (\tau_1-\tau_0) 
\sup_{\tau_0\le \tau' \le \tau_1}  \Ezerok(\tau') .
\] 

If $\Xzerolesslessk(\tau_0,\tau_1)=0$, then $\psi$ vanishes identically and the statement
of our Proposition is trivially true. Otherwise, we choose 
$\epsilon :=(\tau_1-\tau_0)^{-\frac12}\sqrt{\,\,\Xzerolesslessk(\tau_0,\tau_1) }$ in the above, to obtain
that the term~\eqref{firsttermandthusas} is 
indeed estimated by the right hand side of~\eqref{addnumbertopropstatement}.

For the term~\eqref{secondtermandthusas}, we notice that the first factor may be rewritten
as
\[
\partial Q \chi_{\tau_1+3,\widehat\tau_1} \widehat \psi
\]
where $Q$ is a smoothing operator satisfying in particular say
$\|Q\Psi\|_{H^{0,1}(\mathcal{M}\cap \{r\le R_{\rm freq}\}) }\lesssim \|\Psi\|_{H^{k-2,1}(\mathcal{M}\cap \{r\le R_{\rm freq}\})}$.
Applying Cauchy--Schwarz as before, the above term is
bounded by say
\[
 \Xzerokminusthree{}^*(\tau_0,\widehat\tau_1)[\widehat\psi] \lesssim
 \Xzerokminustwo{}^*(\tau_0,\tau_1)[\psi]  \lesssim 
 \Ezerok(\tau_0) ,
\]
while the second term is bounded as before by~\eqref{anumberwhereweneedone}.
Here we have used Proposition~\ref{whenoneextends}. 
Choosing say
$\epsilon :=\sqrt{\,\,\Xzerolesslessk(\tau_0,\tau_1) }$, these two terms
are bounded by the right hand side of~\eqref{addnumbertopropstatement}.

For the term~\eqref{thirdtermandthusas}, 
upon integration with respect to $t^*$ and $\phi^*$,
we may integrate by parts two $\mathring{\mathfrak{D}}$ from the first to the second factor. 
There are no boundary terms in view of the cutoffs.
The second factor becomes
\begin{equation}
\label{secondfactorbecomes}
\mathring{\mathfrak{D}}\mathring{\mathfrak{D}} 
\tilde\chi_{\tilde\tau_0,\tilde\tau_1}^4
\chi^2_{\tau_1+2,\widehat \tau_1}  P_n 
 [\mathring{\mathfrak{D}}^{\bf k}, \Box_{g(\chi_{\tau_1}^2\psi_,x)}-\Box_{g_{a,M}}] 
 \chi^2_{\tau_0,\tau_1+1}\widehat\psi.
\end{equation}
We expand the commutator as 
\[
\mathring{\mathfrak{D}}\mathring{\mathfrak{D}} \tilde\chi_{\tilde\tau_0,\tilde\tau_1}^4\chi^2_{\tau_1+2,\widehat \tau_1}  
P_n \mathring{\mathfrak{D}}^{\bf k}(\Box_{g(\chi_{\tau_1}^2\psi_,x)}-\Box_{g_{a,M}})
 \chi^2_{\tau_0,\tau_1+1}\widehat\psi
 - \mathring{\mathfrak{D}}\mathring{\mathfrak{D}}\tilde\chi_{\tilde\tau_0,\tilde\tau_1}^4 \chi^2_{\tau_1+2,\widehat \tau_1}  
P_n (\Box_{g(\chi_{\tau_1}^2\psi_,x)}-\Box_{g_{a,M}}) \mathring{\mathfrak{D}}^{\bf k}
 \chi^2_{\tau_0,\tau_1+1}\widehat\psi
\]
which we may express as
\[
\mathring{\mathfrak{D}}\mathring{\mathfrak{D}} \tilde\chi_{\tilde\tau_0,\tilde\tau_1}^4\chi^2_{\tau_1+2,\widehat \tau_1}  
P_n \mathring{\mathfrak{D}}^{\bf k}  \chi_{\tau_0,\tau_1+1}
S  \chi_{\tau_0,\tau_1+1}
\widehat\psi +
\mathring{\mathfrak{D}}\mathring{\mathfrak{D}}\tilde\chi_{\tilde\tau_0,\tilde\tau_1}^4 \chi^2_{\tau_1+2,\widehat \tau_1}  
P_n  \chi_{\tau_0,\tau_1+1}
\tilde{S}\mathring{\mathfrak{D}}^{\bf k}  \chi_{\tau_0,\tau_1+1} \widehat\psi
\]
where $S$ and $\tilde{S}$ are differential operators of order two.

Setting
\[
Q:= \mathring{\mathfrak{D}}\mathring{\mathfrak{D}} \tilde\chi_{\tilde\tau_0,\tilde\tau_1}^4\chi^2_{\tau_1+2,\widehat \tau_1}  P_n 
\mathring{\mathfrak{D}}^{\bf k}
 \chi _{\tau_0, \tau_1+1} ,  \qquad
 \tilde{Q}: = \mathring{\mathfrak{D}}\mathring{\mathfrak{D}} \tilde\chi_{\tilde\tau_0,\tilde\tau_1}^4\chi^2_{\tau_1+2,\widehat \tau_1}  
P_n  \chi_{\tau_0,\tau_1+1},
 \]
these are both smoothing operators
and  we may rewrite the 
term as
\[
QS  \chi_{\tau_0,\tau_1+1} \widehat\psi + \tilde{Q}\tilde{S} \mathring{\mathfrak{D}}^{\bf k}  \chi_{\tau_0,\tau_1+1} \widehat\psi.
\]

Now we note that 
\[
\|S \chi_{\tau_0,\tau_1+1} \widehat\psi \|_{L^2(\mathcal{M}\cap\{r_0\le r\le R_{\rm freq} \} )} \lesssim
 \sqrt{\,\,\Xzerolesslessk(\tau_0,\tau_1) }
  \|\chi_{\tau_0,\tau_1+1}\widehat\psi \|_{H^{0,2}(\mathcal{M}\cap\{r_0\le r\le R_{\rm freq} \} )}
\]
while  say
\[
\|\tilde{S}\mathring{\mathfrak{D}}^{\bf k} \chi_{\tau_0,\tau_1+1} \widehat\psi \|_{H^{-2,0}(\mathcal{M}\cap\{r_0\le r\le R_{\rm freq} \} )} \lesssim 
 \sqrt{\,\,\Xzerolesslessk(\tau_0,\tau_1) }
  \|\chi_{\tau_0,\tau_1+1}\widehat\psi\|_{H^{k-2,2}(\mathcal{M}\cap\{r_0\le r\le R_{\rm freq} \} )}.
\]

For $Q$ it is enough to apply
\[
\| Q\Psi\|_{L^2((\mathcal{M}\cap \{ r_0 \le r\le R_{\rm fixed} )} \lesssim
 \,
 \| \Psi\|_{L^2(\mathcal{M}\cap \{ r_0 \le r \le R_{\rm fixed}\})} ,
\]
while for $\tilde{Q}$ we apply 
\[
\| \tilde{Q}\Psi\|_{H^{0,0}((\mathcal{M}\cap \{ r_0 \le r\le R_{\rm fixed} )} \lesssim
 \| \Psi\|_{H^{-2,0}(\mathcal{M}\cap \{ r_0 \le r \le R_{\rm fixed}\})} .
\]
It follows that
\begin{eqnarray*}
\| QS  \chi_{\tau_0,\tau_1+1} \widehat\psi \|^2 + \| \tilde{Q}\tilde{S} \mathring{\mathfrak{D}}^{\bf k}  \chi_{\tau_0,\tau_1+1} \widehat\psi\|^2 
&\lesssim& \Xzerolesslessk(\tau_0,\tau_1)   \, \|\chi_{\tau_0,\tau_1+1}\widehat\psi\|^2_{H^{k-2,2}(\mathcal{M}\cap\{r_0\le r\le R_{\rm freq} \} )}\\
&\lesssim&  \Xzerolesslessk(\tau_0,\tau_1)  \, \Xzerokminustwo^*(\tau_0, \tau_1)\\
&\lesssim&  \Xzerolesslessk(\tau_0,\tau_1)  \, {}^\rho \Xzerok(\tau_0, \tau_1).
\end{eqnarray*}

We may thus estimate as before~\eqref{thirdtermandthusas} with Cauchy--Schwarz (factored
as described above with~\eqref{secondfactorbecomes} as the second factor) and
$\epsilon :=\sqrt{\,\,\Xzerolesslessk(\tau_0,\tau_1) }$, estimating now both
terms by  the right hand side of~\eqref{addnumbertopropstatement}.
\end{proof}

\subsubsection{Terms arising from $P_n$: pseudodifferential commutation estimates}

\begin{proposition}
The term in line~\eqref{thepseudoinhomgthersheresmooth} may be estimated:
\begin{eqnarray}
\label{willaddanumberhere}
\left|  \int_{\mathcal{M}\cap \{r\le R_{\rm freq}\}} 
\tilde\chi_{\tilde\tau_0,\tilde\tau_1}^4
 {\rm Re} \left(
H^{{\rm main},n}[\widehat\psi_{n, {\bf k}}] \overline{F^{\rm pseudo}_{n,{\bf k}}}\right)
\right|  &\lesssim&
 \sqrt{\tau_1-\tau_0}\,\,
{}^\rho \Xzerok(\tau_0,\tau_1)  \sqrt{\,\,\Xzerolesslessk(\tau_0,\tau_1) }.
 \end{eqnarray}
 \end{proposition}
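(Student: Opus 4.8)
The plan is to mirror exactly the structure of the proof of the commutation estimate, Proposition~\ref{nonlincomprophereref}, since $F^{\rm pseudo}_{n,{\bf k}}$ is built from the commutator $[P_n, \Box_{g(\chi^2_{\tau_1}\psi,x)} - \Box_{g_{a,M}}]$ applied to $\mathring{\mathfrak{D}}^{\bf k}\chi^2_{\tau_0,\widehat\tau_1}\widehat\psi$ (after inserting $\chi_{\tau_1}$, which forces the metric perturbation $g(\chi^2_{\tau_1}\psi,x)-g_{a,M}$ to be supported in $\tau\le \tau_1$). First I would isolate the highest order contribution, which schematically takes the form
\[
\tilde\chi_{\tilde\tau_0,\tilde\tau_1}^4 \, \partial P_n\mathring{\mathfrak{D}}^{\bf k}(\chi^2_{\tau_0,\widehat\tau_1}\widehat\psi) \cdot [P_n, \Box_{g(\chi^2_{\tau_1}\psi,x)}-\Box_{g_{a,M}}]\mathring{\mathfrak{D}}^{\bf k}\chi^2_{\tau_0,\tau_1+1}\widehat\psi,
\]
and split the first factor into three pieces using cutoffs $\chi^2_{\tau_0,\tau_1+3}$, $\chi^2_{\tau_0,\tau_1+2}\chi^2_{\tau_1+3,\widehat\tau_1}$ and $\chi^2_{\tau_1+3,\widehat\tau_1}$, exactly as in~\eqref{firsttermandthusas}--\eqref{thirdtermandthusas}. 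The key observation is that $\Box_{g(\chi^2_{\tau_1}\psi,x)}-\Box_{g_{a,M}}$ is a second order differential operator with coefficients that are smooth, compactly supported in $\{r\le R/2\}$, supported in $\tau\le\tau_1$, and of size $\lesssim\sqrt{\Xzerolesslessk(\tau_0,\tau_1)}$ (via the Sobolev estimate of Proposition~\ref{largeasobolevforfunctions} applied to the bound~\eqref{newbasicbootstrap}); thus Proposition~\ref{fornonlinearpseudo} applies to bound $[P_n, D]$ where $D=\Box_{g(\chi^2_{\tau_1}\psi,x)}-\Box_{g_{a,M}}$, producing a gain of one order of differentiability at the cost of the coefficient norm.

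For the first piece~\eqref{firsttermandthusas}, I would apply Cauchy--Schwarz with a weight $\epsilon$, put the $L^2$ norm of $\partial P_n\mathring{\mathfrak{D}}^{\bf k}(\chi^2_{\tau_0,\tau_1+3}\widehat\psi)$ on one side (bounded using Proposition~\ref{mixedsobolevboundednessprop} and~\eqref{couldrefertothisexpl} by $\sqrt{\tau_1-\tau_0}\sup_{\tau_0\le\tau'\le\tau_1}\sqrt{\Ezerok(\tau')}$, up to Proposition~\ref{whenoneextends}), and on the other side estimate $\|[P_n, D]\mathring{\mathfrak{D}}^{\bf k}\chi^2_{\tau_0,\tau_1+1}\widehat\psi\|_{L^2}$; by Proposition~\ref{fornonlinearpseudo} this is controlled by $\sqrt{\Xzerolesslessk(\tau_0,\tau_1)}$ times a $H^{k,j}$ norm localised to unit time slabs, hence by $\sqrt{\Xzerolesslessk(\tau_0,\tau_1)}\sup_{\tau'}\sqrt{\Ezerok(\tau')}$. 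Choosing $\epsilon=(\tau_1-\tau_0)^{-1/2}\sqrt{\Xzerolesslessk(\tau_0,\tau_1)}$ gives the stated bound (treating separately the trivial case $\Xzerolesslessk=0$). For the second piece~\eqref{secondtermandthusas}, the disjoint support of $\chi^2_{\tau_0,\tau_1+2}$ and $\chi^2_{\tau_1+3,\widehat\tau_1}$ makes the relevant operator $(t^*,\phi^*)$-smoothing by Proposition~\ref{smoothingprop}, so after Cauchy--Schwarz the first factor is bounded by $\Xzerokminusthree^*(\tau_0,\widehat\tau_1)[\widehat\psi]\lesssim\Ezerok(\tau_0)$ (via Proposition~\ref{whenoneextends} and~\eqref{couldrefertothisexpl}), the second factor as before; choosing $\epsilon=\sqrt{\Xzerolesslessk(\tau_0,\tau_1)}$ closes this term. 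For the third piece~\eqref{thirdtermandthusas}, both cutoffs localise past $\tau_1+2$ where $\widehat\psi$ solves the homogeneous equation, and after integrating two $\mathring{\mathfrak{D}}$ operators by parts in $(t^*,\phi^*)$ (no boundary terms because of the time cutoffs), the resulting composite operator is smoothing and the differential operator $D$ can be moved onto $\chi_{\tau_0,\tau_1+1}\widehat\psi$; one then bounds everything by $\Xzerolesslessk(\tau_0,\tau_1)\cdot{}^\rho\Xzerok(\tau_0,\tau_1)$, exactly as in the treatment of~\eqref{thirdtermandthusas} in Proposition~\ref{nonlincomprophereref}, and chooses $\epsilon=\sqrt{\Xzerolesslessk(\tau_0,\tau_1)}$.

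The main obstacle will be organising the pseudodifferential bookkeeping cleanly: unlike the differential commutator in Proposition~\ref{nonlincomprophereref}, here the commutator $[P_n, D]$ is itself a pseudodifferential operator, and one must verify that the compositions with $\mathring{\mathfrak{D}}^{\bf k}$ and with the various time cutoffs land in the right symbol classes so that Propositions~\ref{mixedsobolevboundednessprop}, \ref{smoothingprop}, \ref{commutatorestimate} and \ref{fornonlinearpseudo} can be invoked with the stated constants (in particular with the implicit constants independent of $L_{\rm long}$ and with the correct dependence on $\tau_1-\tau_0$). The lower order terms in $F^{\rm pseudo}_{n,{\bf k}}$ — coming from the first order and zeroth order parts of $D$, and from the lower order terms in the Poisson-bracket expansion of $[P_n,D]$ — are strictly better (they carry extra powers of $\sqrt{\Xzerolesslessk}$ or extra smoothing) and are handled by the same argument with even more room to spare; I would dispatch these with a remark rather than repeating the computation.
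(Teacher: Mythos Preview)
Your proposal is correct and follows essentially the same route as the paper: the same three-piece splitting via the cutoffs $\chi^2_{\tau_0,\tau_1+3}$, $\chi^2_{\tau_0,\tau_1+2}\chi^2_{\tau_1+3,\widehat\tau_1}$, $\chi^2_{\tau_1+3,\widehat\tau_1}$, the key application of Proposition~\ref{fornonlinearpseudo} to $[P_n,\Box_{g(\chi^2_{\tau_1}\psi,x)}-\Box_{g_{a,M}}]$ (which buys one derivative at the cost of $\sqrt{\Xzerolesslessk}$ and replaces a full-slab norm by a unit-slab sup), the same choice $\epsilon=(\tau_1-\tau_0)^{-1/2}\sqrt{\Xzerolesslessk}$ for the first piece, and the deferral of the second and third pieces to the argument of Proposition~\ref{nonlincomprophereref}. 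The only cosmetic difference is that the paper packages the commutator bound as $\sqrt{\Xzerolesslessk}\,\sqrt{{}^\rho\Xzerok}$ (absorbing both the unit-slab sup term and the lower-order full-slab term into ${}^\rho\Xzerok$) rather than writing $\sup\sqrt{\Ezerok}$ directly.
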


\begin{proof}
The highest order terms take the form
\[
\tilde\chi_{\tilde\tau_0,\tilde\tau_1}^4\partial P_n \mathring{\mathfrak{D}}^{\bf k}  ( \chi_{\tau_0,\tau_1} ^2 \widehat\psi)  \cdot
 [P_n ,\Box_g(\psi,x) -\Box_{g_{a,M}}] \mathring{\mathfrak{D}}^{\bf k} \chi_{\tau_0,\tau_1}^2 \widehat\psi
\]
which we may again partition as
\begin{align}
\label{firsttermandthusasagain}
\tilde\chi_{\tilde\tau_0,\tilde\tau_1}^4\partial P_n \mathring{\mathfrak{D}}^{\bf k}  ( \chi^2_{\tau_0,\tau_1+3}  \widehat\psi) \cdot
[P_n, \Box_{g(\chi_{\tau_1}^2\psi_,x)}-\Box_{g_{a,M}}] \mathring{\mathfrak{D}}^{\bf k}  \chi^2_{\tau_0, \tau_1+1}
\widehat\psi\\
\label{secondtermandthusasagain}
+\tilde\chi_{\tilde\tau_0,\tilde\tau_1}^4 \chi^2_{\tau_0,\tau_1+2}  \partial P_n \mathring{\mathfrak{D}}^{\bf k}  ( \chi^2_{\tau_1+3,\widehat\tau_1}  
\widehat\psi) \cdot
[P_n, \Box_{g(\chi_{\tau_1}^2\psi_,x)}-\Box_{g_{a,M}}] \mathring{\mathfrak{D}}^{\bf k} \chi^2_{\tau_0, \tau_1+1}
\widehat\psi\\
\label{thirdtermandthusasagain}
+ \partial P_n \mathring{\mathfrak{D}}^{\bf k}  ( \chi^2_{\tau_1+3,\widehat\tau_1} \widehat \psi) \cdot
\tilde\chi_{\tilde\tau_0,\tilde\tau_1}^4\chi^2_{\tau_1+2,\widehat \tau_1}  [P_n, \Box_{g(\chi_{\tau_1}^2\psi_,x)}-\Box_{g_{a,M}}] 
\mathring{\mathfrak{D}}^{\bf k} \chi^2_{\tau_0, \tau_1+1} \widehat\psi.
\end{align}

Let us consider first the term~\eqref{firsttermandthusasagain}.
By Cauchy--Schwarz, we may bound this by
\[
\epsilon\|  \partial P_n \mathring{\mathfrak{D}}^{\bf k}  ( \chi_{\tau_0,\tau_1+3}^2 \widehat \psi)  \|_{L^2(\mathcal{M}\cap\{r\le R_{\rm freq}\})}^2 +\epsilon^{-1} \| [P_n ,\Box_{g(\chi^2_{\tau_1}\psi,x)} -\Box_{g_{a,M}}]
\mathring{\mathfrak{D}}^{\bf k}
\chi_{\tau_0,\tau_1+1}^2\widehat\psi\|_{L^2(\mathcal{M}\cap\{r\le R_{\rm freq}\} )}^2
\]
whence 
\begin{equation}
\label{whence}
\epsilon\|  \mathring{\mathfrak{D}}^{\bf k}  ( \chi_{\tau_0,\tau_1+3} ^2 \widehat\psi)  \|_{H^1(\mathcal{M}\cap\{r\le R_{\rm freq}\})}^2 +\epsilon^{-1} \| [P_n ,\Box_{g(\psi,x)} -\Box_{g_{a,M}}] 
\mathring{\mathfrak{D}}^{\bf k}
\chi_{\tau_0,\tau_1+1}^2 \widehat\psi\|_{L^2(\mathcal{M}\cap\{r\le R_{\rm freq}\} )}^2.
\end{equation}

Now applying Proposition~\ref{fornonlinearpseudo} to the operators $P_n$ and 
$(\Box_{g(\psi,x)}-\Box_{g_{a,M}}) \mathring{\mathfrak{D}}^{\bf k}\chi_{\tau_0,\tau_1}$,  
%and applying Sobolev, etc.,
we obtain
\begin{align}
\nonumber
 \big\|  [P_n , &\Box_{g(\chi^2_{\tau_1}\psi,x)} -\Box_{g_{a,M}}] \mathring{\mathfrak{D}}^{\bf k} \chi_{\tau_0,\tau_1+1} \Psi \big\|_{L^2(\mathcal{M}\cap\{r\le R_{\rm freq}\})} \\
 \nonumber
 &\lesssim \sqrt{\Xzerolesslessk(\tau_0,\tau_1)}\, \left( \sup_{\tau\in \mathbb R} \| \chi_{\tau,\tau+2} \mathring{\mathfrak{D}}^{\bf k} \chi_{\tau_0,\tau_1+1} \Psi \|_{H^{-1,2}(\mathcal{M}\cap\{r\le R_{\rm freq}\})} + \|  \mathring{\mathfrak{D}}^{\bf k} \chi_{\tau_0,\tau_1+1} \Psi \|_{H^{-3,2}(\mathcal{M}\cap\{r\le R_{\rm freq}\})}  \right) \\
\label{nowweclaimthat}
 & \lesssim \sqrt{\Xzerolesslessk(\tau_0,\tau_1)}\,\left(  \sup_{\tau\in \mathbb R}  \|  \chi_{\tau,\tau+2}\chi_{\tau_0,\tau_1+1}  \Psi \|_{H^{k-1,2}(\mathcal{M}\cap\{r\le R_{\rm freq}\})} +  \|  \chi_{\tau_0,\tau_1+1}  \Psi \|_{H^{k-3,2}(\mathcal{M}\cap\{r\le R_{\rm freq}\})} \right) ,
\end{align}
where $\Psi= \chi_{\tau_0,\tau_1+1}\widehat\psi$.

Plugging~\eqref{nowweclaimthat} into~\eqref{whence} we obtain 
\begin{equation}
\label{intheabovehere}
\epsilon(\tau_1-\tau_0)\sup_{\tau_0\le \tau' \le \tau_1}   \Ezerok(\tau')[\psi]+
\epsilon^{-1}\Xzerolesslessk(\tau_0,\tau_1)[\psi]\cdot  
 {}^\rho \Xzerok(\tau_0,\tau_1)[\psi],
\end{equation}
where we have used also Proposition~\ref{whenoneextends}.
Now, as before, we remark that 
if $\Xzerolesslessk(\tau_0,\tau_1)=0$ then $\psi=0$ identically and the proposition trivially follows. Otherwise, choose $\epsilon= (\tau_1-\tau_0)^{-\frac12} \sqrt{\Xzerolesslessk(\tau_0,\tau_1)}$ in~\eqref{intheabovehere}
and we see that both terms are estimated by~\eqref{willaddanumberhere}.

The remaining terms~\eqref{secondtermandthusasagain} and~\eqref{thirdtermandthusasagain} 
can be similarly estimated as in Section~\ref{Dkcommutingtermsnonlinearestimate}
by the right hand side of~\eqref{willaddanumberhere} 
and lower order terms may obviously also be estimated similarly.
\end{proof}

\subsubsection{Terms arising from the semilinearity in $r\le R_{\rm freq}$}

\begin{proposition}
\label{fromthesemihereprop}
The terms on line~\eqref{andonemoreinhomogtermheresmooth} may be estimated:
\[
\left| \int_{\mathcal{M}\cap\{r\le R_{\rm freq}\}} 
\tilde\chi_{\tilde\tau_0,\tilde\tau_1}^4
 {\rm Re} \left(
H^{ {\rm main},n }[\widehat\psi_{n, {\bf k}}] \overline{F^{\rm semi}_{n,{\bf k}}}\right)\right|
 \lesssim
 \sqrt{ \tau_1-\tau_0 }
\sup_{\tau_0\le \tau' \le \tau_1}  \Ezerok(\tau') \sqrt{\,\,\Xzerolesslessk(\tau_0,\tau_1) }.
 \]
\end{proposition}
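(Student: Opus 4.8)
\textbf{Proof proposal for Proposition~\ref{fromthesemihereprop}.}
The plan is to follow exactly the template used in the proof of Proposition~\ref{nonlincomprophereref} (and the pseudodifferential commutation estimate immediately after it), since the structure of the term is entirely parallel: we have a product of a ``first factor'' $\tilde\chi_{\tilde\tau_0,\tilde\tau_1}^4\, H^{{\rm main},n}[\widehat\psi_{n,{\bf k}}]$, which at top order is schematically $\tilde\chi_{\tilde\tau_0,\tilde\tau_1}^4\,\partial P_n\mathring{\mathfrak{D}}^{\bf k}(\chi^2_{\tau_0,\widehat\tau_1}\widehat\psi)$ (plus a zeroth order piece), and a ``second factor'' $\overline{F^{\rm semi}_{n,{\bf k}}} = \overline{P_n(\mathring{\mathfrak{D}}^{\bf k}\chi^2_{\tau_0,\widehat\tau_1}(N(\partial\chi^2_{\tau_1}\psi,\chi^2_{\tau_1}\psi,x)))}$. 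In view of the cutoff $\chi_{\tau_1}$ appearing inside $N$, the second factor is supported in $\tau\le\tau_1$; hence (exactly as in~\eqref{firsttermandthusas}--\eqref{thirdtermandthusas}) we split into three pieces according to whether the first factor is localised to $\tau\le\tau_1+3$, to a transition region $\tau_1+3\le\tau\le\widehat\tau_1$ with the second factor still present, or to $\tau\ge\tau_1+3$ with disjoint supports forcing a smoothing gain. First I would handle the ``main'' piece where the first factor carries $\chi^2_{\tau_0,\tau_1+3}$: apply Cauchy--Schwarz with a free parameter $\epsilon$, use Proposition~\ref{mixedsobolevboundednessprop} to remove $P_n$, and bound $\|\mathring{\mathfrak{D}}^{\bf k}(\chi^2_{\tau_0,\tau_1+3}\widehat\psi)\|_{H^1}^2\lesssim(\tau_1-\tau_0)\sup_{[\tau_0,\tau_1]}\Ezerok$ using Proposition~\ref{whenoneextends}.

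The second factor, $\|P_n\mathring{\mathfrak{D}}^{\bf k}\chi^2_{\tau_0,\tau_1+1}(N^{\alpha\beta}(\chi^2_{\tau_1}\psi,x)\partial_\alpha(\chi^2_{\tau_1}\psi)\partial_\beta(\chi^2_{\tau_1}\psi))\|_{L^2(\mathcal{M}\cap\{r\le R_{\rm freq}\})}^2$, is where the structure of $N$ enters. Here I would use Proposition~\ref{mixedsobolevboundednessprop} to drop $P_n$, then apply the Leibniz rule to $\mathring{\mathfrak{D}}^{\bf k}$ and distribute the $k$ derivatives. In the bad case where almost all derivatives land on one factor, the other two factors are at low order and can be controlled pointwise by the Sobolev estimate~\eqref{largeasobolev} of Proposition~\ref{largeasobolevforfunctions}, producing a factor $\sqrt{\Xzerolesslessk(\tau_0,\tau_1)}$ (using that $\lesslessk$ is sufficiently large and~\eqref{newbasicbootstrap}, together with the boundedness of $\partial^\beta_{\bf x}\partial^s_\xi N^{\mu\nu}$ from Section~\ref{generalclassconsideredrecalled}); the remaining $L^2$-integral of the single high-order factor over $\mathcal{R}(\tau_0,\tau_1)\cap\{r\le R_{\rm freq}\}$ is then bounded by $\int_{\tau_0}^{\tau_1}\Ezerok(\tau')\,d\tau'\lesssim(\tau_1-\tau_0)\sup_{[\tau_0,\tau_1]}\Ezerok$, again invoking Proposition~\ref{whenoneextends}. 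The intermediate cases where the derivatives are more evenly spread are no worse by the same interpolation/Sobolev reasoning. Choosing $\epsilon:=(\tau_1-\tau_0)^{-1/2}\sqrt{\Xzerolesslessk(\tau_0,\tau_1)}$ (and noting that if $\Xzerolesslessk(\tau_0,\tau_1)=0$ the proposition is trivial since then $\psi\equiv0$) balances the two contributions and gives $\sqrt{\tau_1-\tau_0}\sup_{[\tau_0,\tau_1]}\Ezerok\sqrt{\Xzerolesslessk(\tau_0,\tau_1)}$, precisely the claimed bound.

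For the transition piece~\eqref{secondtermandthusas}-analogue, where the first factor carries $\chi^2_{\tau_1+3,\widehat\tau_1}$ but is multiplied by $\chi^2_{\tau_0,\tau_1+2}$, I would rewrite the first factor as $\partial Q\chi_{\tau_1+3,\widehat\tau_1}\widehat\psi$ with $Q$ a $(t^*,\phi^*)$-smoothing operator by Proposition~\ref{smoothingprop} (disjoint temporal supports of $\chi_{\tau_0,\tau_1+2}$ and $\chi_{\tau_1+3,\widehat\tau_1}$), so that after Cauchy--Schwarz it contributes only $\Xzerokminustwo^*(\tau_0,\tau_1)[\psi]\lesssim\Ezerok(\tau_0)$ via Proposition~\ref{whenoneextends}; the second factor is bounded exactly as above, and choosing $\epsilon:=\sqrt{\Xzerolesslessk(\tau_0,\tau_1)}$ closes this piece. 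The third piece~\eqref{thirdtermandthusas}-analogue, where both cutoff-supports on the two factors are disjoint from the $\tau\le\tau_1+1$ support of the other, is handled by integrating by parts two $\mathring{\mathfrak{D}}$ operators in $t^*,\phi^*$ (no boundary terms by the cutoffs) onto the second factor, which then becomes a smoothing operator applied to a quadratic-in-$\psi$ expression of bounded order; using $\sqrt{\Xzerolesslessk(\tau_0,\tau_1)}$-smallness from the semilinear coefficients and Proposition~\ref{whenoneextends} this is again absorbed into the stated right-hand side. The main obstacle I anticipate is purely bookkeeping: tracking which combination of $\chi_{\tau_1}$, $\chi_{\tau_0,\widehat\tau_1}$ and $\tilde\chi_{\tilde\tau_0,\tilde\tau_1}$ cutoffs survives each integration by parts so as to correctly invoke the smoothing property (Proposition~\ref{smoothingprop}) and thereby trade apparent top-order loss for the harmless lower-order quantities $\Ezerok(\tau_0)$ and $\Xzerokminustwo^*$; the genuine analytic content (Leibniz distribution of derivatives in $N$ combined with the Sobolev bound~\eqref{largeasobolev}) is routine and identical to the treatment of the semilinear terms in $r\ge R$ in~\cite{DHRT22}.
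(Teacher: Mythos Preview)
Your overall strategy is exactly that of the paper: split into three pieces via the temporal cutoffs, Cauchy--Schwarz with a parameter $\epsilon$ on the main piece, drop $P_n$ by boundedness, and handle the other two pieces by the smoothing/integration-by-parts arguments of Section~\ref{Dkcommutingtermsnonlinearestimate}. That part is fine.

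There is, however, a genuine arithmetical gap in your treatment of the main piece. You estimate the second factor by pulling the low-order factors out in $L^\infty$ via~\eqref{largeasobolev} (producing $\sqrt{\Xzerolesslessk}$) and then bounding the remaining spacetime $L^2$-integral of the high-order factor by $\int_{\tau_0}^{\tau_1}\Ezerok(\tau')\,d\tau'\lesssim(\tau_1-\tau_0)\sup\Ezerok$. This yields
\[
\|P_n\mathring{\mathfrak{D}}^{\bf k}\chi^2 N\|_{L^2}^2\ \lesssim\ \Xzerolesslessk(\tau_0,\tau_1)\,(\tau_1-\tau_0)\sup_{[\tau_0,\tau_1]}\Ezerok,
\]
which with your choice $\epsilon=(\tau_1-\tau_0)^{-1/2}\sqrt{\Xzerolesslessk}$ gives for the $\epsilon^{-1}$-term a contribution $(\tau_1-\tau_0)^{3/2}\sqrt{\Xzerolesslessk}\sup\Ezerok$, a full factor $(\tau_1-\tau_0)$ worse than claimed. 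The two contributions do \emph{not} balance as you assert.

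The fix is the one the paper uses (and which you already saw in~\eqref{anumberwhereweneedone}): instead of putting the low-order factor in $L^\infty_{t,x}$, use the \emph{bulk} version of the Sobolev estimate~\eqref{largeasobolev}, i.e.\ $\sup_{\Sigma(\tau')}|\partial\psi|^2\lesssim\Ezerominusoneminusdeltalesslessk'(\tau')$, so that the time integration is absorbed by the low-order factor:
\[
\|\mathring{\mathfrak{D}}^{\bf k}\chi^2 N\|_{L^2}^2\ \lesssim\ \int_{\tau_0}^{\tau_1}\Ezerominusoneminusdeltalesslessk'(\tau')\,\Ezerok(\tau')\,d\tau'\ \lesssim\ \Xzerolesslessk(\tau_0,\tau_1)\,\sup_{[\tau_0,\tau_1]}\Ezerok.
\]
With this correct bound (no extra $(\tau_1-\tau_0)$), your choice of $\epsilon$ does balance and the claimed estimate follows.
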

\begin{proof}
The highest order terms in the integrand on the left hand side take the form
\[
\tilde\chi_{\tilde\tau_0,\tilde\tau_1}^4\partial P_n \mathring{\mathfrak{D}}^{\bf k}  ( \chi_{\tau_0,\widehat\tau_1} ^2 \widehat\psi)  \cdot
 P_n\mathring{\mathfrak{D}}^{\bf k} \chi^2_{\tau_0,\widehat\tau_1} N(\partial \chi^2_{\tau_1} \psi,  \chi^2_{\tau_1} \psi, x).
\]

We once again partition as
\begin{align}
\label{firsttermandthusasyetagain}
\tilde\chi_{\tilde\tau_0,\tilde\tau_1}^4
\partial P_n \mathring{\mathfrak{D}}^{\bf k}  ( \chi^2_{\tau_0,\tau_1+3}  \widehat\psi) \cdot
P_n  \mathring{\mathfrak{D}}^{\bf k} \chi^2_{\tau_0,\widehat\tau_1} N(\partial \chi^2_{\tau_1} \psi,  \chi^2_{\tau_1} \psi, x) \\
\label{secondtermandthusasyetagain}
+ \tilde\chi_{\tilde\tau_0,\tilde\tau_1}^4\chi^2_{\tau_0,\tau_1+2}  \partial P_n \mathring{\mathfrak{D}}^{\bf k}  ( \chi^2_{\tau_1+3,\widehat\tau_1}  
\widehat\psi) \cdot
P_n \mathring{\mathfrak{D}}^{\bf k} \chi^2_{\tau_0,\widehat\tau_1} N(\partial \chi^2_{\tau_1} \psi,  \chi^2_{\tau_1} \psi, x)\\
\label{thirdtermandthusasyetagain}
+ \partial P_n \mathring{\mathfrak{D}}^{\bf k}  ( \chi^2_{\tau_1+3,\widehat\tau_1} \widehat \psi) \cdot
\tilde\chi_{\tilde\tau_0,\tilde\tau_1}^4\chi^2_{\tau_1+2,\widehat \tau_1} P_n \mathring{\mathfrak{D}}^{\bf k} \chi^2_{\tau_0,\widehat\tau_1} N(\partial \chi^2_{\tau_1} \psi,  \chi^2_{\tau_1} \psi, x).
\end{align}

We consider the first term~\eqref{firsttermandthusasyetagain}. 
We may argue as before
via Cauchy--Schwarz to estimate by:
\[
\epsilon \| P_n \mathring{\mathfrak{D}}^{\bf k}  ( \chi_{\tau_0,\tau_1+3} ^2 \widehat\psi)\|^2_{L^2(\mathcal{M}\cap \{r\le R_{\rm freq} \})} +\epsilon^{-1}
\|  P_n \mathring{\mathfrak{D}}^{\bf k}  \chi^2_{\tau_0,\widehat\tau_1} N(\partial \chi^2_{\tau_1} \psi,  \chi^2_{\tau_1} \psi, x)\|^2_{L^2(\mathcal{M}\cap \{r\le R_{\rm freq} \})}
\]
and thus simply by the $L^2$ mapping properties of $P_n$ (Proposition~\ref{mixedsobolevboundednessprop} with $s=0=k=j$)
\[
\epsilon \|  \mathring{\mathfrak{D}}^{\bf k}  ( \chi_{\tau_0,\tau_1+3} ^2 \psi)\|^2_{L^2(\mathcal{M}\cap \{r\le R_{\rm freq} \})} +\epsilon^{-1}
\|   \mathring{\mathfrak{D}}^{\bf k} \chi_{\tau_0,\tau_1+3}^2   N(\partial \chi^2_{\tau_1} \psi,  \chi^2_{\tau_1} \psi, x)  \|^2_{L^2(\mathcal{M}\cap \{r\le R_{\rm freq} \})}
\]
and hence
\begin{equation}
\label{andhenceedw}
\epsilon (\tau_0-\tau_1) \sup_{\tau_0\le \tau' \le \tau_1} \Ezerok(\tau')
+ \epsilon^{-1} \Xzerolesslessk(\tau_0,\tau_1) \,  \sup_{\tau_0\le \tau' \le \tau_1} \Ezerok(\tau'),
\end{equation}
where we have used the Sobolev inequality of Proposition~\ref{largeasobolevforfunctions}.
Again, noting that if $\Xzerolesslessk(\tau_0,\tau_1)=0$ then $\psi=0$ and the desired
statement is trivially true, the term~\eqref{firsttermandthusasyetagain} is estimated
applying~\eqref{andhenceedw} with $\epsilon = (\tau_1-\tau_0)^{-\frac12}\sqrt{ \Xzerolesslessk(\tau_0,\tau_1)}$.

The terms~\eqref{secondtermandthusasyetagain} and~\eqref{thirdtermandthusasyetagain} are 
again estimated as in Section~\ref{Dkcommutingtermsnonlinearestimate}.  

Lower order terms are of course estimated similarly.
\end{proof}
\subsubsection{Terms in $r\ge R_{\rm freq}$ and terms from $H^{\rm redder}$}

The terms in the region $r\ge R_{\rm freq}$ can be estimated exactly as in~\cite{DHRT22}.

We first estimate the terms in the subregion $R_{\rm freq}\le r\le R$:
\begin{proposition}
\label{insertareferencetorefer}
Restricted to  $R_{\rm freq}\le r\le R$, the terms on line~\eqref{oldterms} may be estimated:
\begin{align}
\nonumber
  \int_{\mathcal{R}(\tilde\tau_0,\tilde\tau_1)\cap \{R_{\rm freq} \le r\le R\}} |\Hap^{{\rm total}}[\mathfrak{D}^{\bf k} \widehat\psi] 
  \cdot [\mathfrak{D}^{\bf k}, \Box_{g(\chi^2_{\tau_1}\psi_,x)}-\Box_{g_{a,M}}] \psi |
  + |\Hap^{{\rm total}}[\mathfrak{D}^{\bf k} \widehat\psi] \cdot  \mathfrak{D}^{\bf k} N(\partial \chi^2_{\tau_1} \psi,  \chi^2_{\tau_1} \psi, x) |\\
  \label{thatrighthandsidealready}
  \lesssim  \sqrt{\tau_1-\tau_0 }
\sup_{\tau_0\le \tau' \le \tau_1}  \Ezerok(\tau') \sqrt{\,\,\Xzerolesslessk(\tau_0,\tau_1) }.
\end{align}
\end{proposition}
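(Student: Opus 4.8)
The plan is to reduce the bound~\eqref{thatrighthandsidealready} to estimates that are essentially already contained in~\cite{DHRT22}, since in the region $R_{\rm freq}\le r\le R$ the projections $P_n$ and the wave-packet structure play no role: here $\widehat\psi$ satisfies the quasilinear equation with $g=g(\chi^2_{\tau_1}\psi,x)$, the metric coincides with $g_{a,M}$ for $r\ge R/2$, and the currents $\Jp^{\rm total}$ reduce to $n$-independent currents of the type analysed in~\cite{DHRT22}. First I would split the integrand into the two displayed pieces: the nonlinear commutator term $\Hap^{{\rm total}}[\mathfrak{D}^{\bf k}\widehat\psi]\cdot [\mathfrak{D}^{\bf k}, \Box_{g(\chi^2_{\tau_1}\psi,x)}-\Box_{g_{a,M}}]\psi$ and the semilinear term $\Hap^{{\rm total}}[\mathfrak{D}^{\bf k}\widehat\psi]\cdot \mathfrak{D}^{\bf k} N(\partial\chi^2_{\tau_1}\psi,\chi^2_{\tau_1}\psi,x)$. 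For the semilinear term in $R_{\rm freq}\le r\le R$, one simply notes that $|\Hap^{\rm total}[\mathfrak{D}^{\bf k}\widehat\psi]|$ is bounded (up to $r$-weights which are harmless in this bounded $r$-range) by a first-order energy density of $\mathfrak{D}^{\bf k}\widehat\psi$, and $|\mathfrak{D}^{\bf k} N(\partial\psi,\psi,x)|$ is controlled by $\sqrt{\Xzerolesslessk}$ times a first-order density of $\mathfrak{D}^{\bf k}\psi$ using the Sobolev estimate of Proposition~\ref{largeasobolevforfunctions} and the boundedness of the derivatives of $N$; applying Cauchy--Schwarz in spacetime, passing from $\widehat\psi$ to $\psi$ via Proposition~\ref{whenoneextends}, and using that the $\tau$-length of the slab is $\tau_1-\tau_0$ yields the factor $\sqrt{\tau_1-\tau_0}\,\sup\Ezerok\,\sqrt{\Xzerolesslessk}$.

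For the nonlinear commutator term, the strategy is identical to that of Proposition~\ref{nonlincomprophereref}: apply Cauchy--Schwarz with a parameter $\epsilon$, bounding the $\Hap$ factor by $\epsilon\,\|\mathfrak{D}^{\bf k}\widehat\psi\|_{H^1}^2$-type quantities and the commutator factor by $\epsilon^{-1}\|[\mathfrak{D}^{\bf k},\Box_{g(\chi^2_{\tau_1}\psi,x)}-\Box_{g_{a,M}}]\psi\|_{L^2}^2$. The commutator $[\mathfrak{D}^{\bf k},\Box_g-\Box_{g_{a,M}}]\psi$ is a differential operator of order $\le k+1$ in $\psi$ whose coefficients involve at most $\lesslessk$ derivatives of $g-g_{a,M}$ (hence of $\psi$); applying the Sobolev estimate~\eqref{largeasobolev} to the low-order factor gives $\|[\mathfrak{D}^{\bf k},\ldots]\psi\|_{L^2(\mathcal{R}(\tilde\tau_0,\tilde\tau_1)\cap\{R_{\rm freq}\le r\le R\})}^2 \lesssim \int_{\tau_0}^{\tau_1}\Ezerominusoneminusdeltalesslessk'(\tau')\Ezerok(\tau')\,d\tau' \lesssim \Xzerolesslessk(\tau_0,\tau_1)\sup\Ezerok$, and the $\Hap$ factor contributes $(\tau_1-\tau_0)\sup\Ezerok$; choosing $\epsilon=(\tau_1-\tau_0)^{-1/2}\sqrt{\Xzerolesslessk}$ (and observing the statement is trivial if $\Xzerolesslessk=0$) produces exactly the right-hand side of~\eqref{thatrighthandsidealready}. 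Throughout I would invoke Proposition~\ref{whenoneextends} to replace $\widehat\psi$-quantities with $\psi$-quantities and the local existence Proposition~\ref{localexistencelargea} to pass from slab-length-dependent $\sup$ bounds to the stated form.

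The main obstacle — though it is more bookkeeping than substance — is keeping careful track of the $r$-weights hidden in $\Hap^{\rm total}$ and in the derivatives appearing in the commutator, and verifying that within the \emph{bounded} range $R_{\rm freq}\le r\le R$ these weights are all comparable to unity so that the crude Cauchy--Schwarz argument closes without needing the sharper null-structure Assumption~\ref{largeanullcondassumption} (which is only needed in the truly asymptotic region $r\ge R$). One must also check that $g(\chi^2_{\tau_1}\psi,x)=g_{a,M}$ already holds once $r\ge R/2>R_{\rm freq}$ in part of this range, so the commutator in fact vanishes for $R/2\le r\le R$ and only the semilinear term survives there; this simplification is consistent with and indeed follows from the treatment in~\cite{DHRT22}, and I would simply cite that the estimates of Section 4 of~\cite{DHRT22} apply verbatim in $R_{\rm freq}\le r\le R$ once one observes that the currents and commutation operators here coincide with those used there.
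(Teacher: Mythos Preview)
Your proposal is correct and takes essentially the same approach as the paper, which simply defers to~\cite{DHRT22} for the details; you have spelled out those details (Cauchy--Schwarz with optimised parameter, Sobolev on the low-order factor, boundedness of $r$-weights in the region $R_{\rm freq}\le r\le R$). The paper also remarks that one could obtain the sharper bound $\sqrt{\Xzerolesslessk}\,{}^\rho\Xzerok$ here, but does not need it since the cruder right-hand side already appears elsewhere in the estimates.
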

\begin{proof}
This follows as in~\cite{DHRT22}. Note that we could also of course  estimate in this region by
the ``better'' quantity
\[
\sqrt{\Xzerolesslessk(\tau_0,\tau_1)}\, {}^\rho\Xzerok(\tau_0,\tau_1),
\]
but there is no gain as we already must include the right hand 
side of~\eqref{thatrighthandsidealready} 
on the right hand side of our estimates.
\end{proof}

We now estimate the terms in the subregion $r\ge R$. 
Recall that $(\Box_{g(\chi^2_{\tau_1}\psi_,x)}-\Box_{g_{a,M}}) \psi $ vanishes  here so it is only the semilinear
term which must be estimated. 
It is here of course that the null structure 
Assumption~\ref{largeanullcondassumption} enters:
 \begin{proposition}
 \label{lastinhomogprop}
 Restricted to  $r\ge R$, the terms on line~\eqref{oldterms} may be estimated:
\begin{align*}
 & \int_{\mathcal{R}(\tilde\tau_0,\tilde\tau_1)\cap \{r\ge R\}} 
 |\Hap^{{\rm total}}[\mathfrak{D}^{\bf k} \widehat\psi] \cdot   \mathfrak{D}^{\bf k} N(\partial \chi^2_{\tau_1} \psi,  \chi^2_{\tau_1} \psi, x)| \\
 &\qquad\qquad
 \lesssim
  \Xpk_{\frac{8R}9}(\tau_0,\tau_1) \sqrt{\Xzerolesslessk_{\frac{8R}9}(\tau_0,\tau_1)}  +\sqrt{\Xpk_{\frac{8R}9}(\tau_0,\tau_1)}\sqrt{\Xzerok_{\frac{8R}9}(\tau_0,\tau_1)}\sqrt{\Xplesslessk_{\frac{8R}9}(\tau_0,\tau_1)} \, .
\end{align*}
\end{proposition}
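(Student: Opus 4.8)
The plan is to prove Proposition~\ref{lastinhomogprop} by reducing it directly to the null structural hypothesis Assumption~\ref{largeanullcondassumption}. First I would recall that in the region $r \ge R$ we have $g(\chi^2_{\tau_1}\psi, x) = g_{a,M}$ identically, so the quasilinear difference term vanishes and only the semilinear term $\mathfrak{D}^{\bf k} N(\partial \chi^2_{\tau_1}\psi, \chi^2_{\tau_1}\psi, x)$ survives; moreover since the integration is over $\mathcal{R}(\tilde\tau_0,\tilde\tau_1) \subset \mathcal{R}(\tau_0,\tau_1)$ and the cutoffs $\chi^2_{\tau_1}$, $\chi^2_{\tau_0,\widehat\tau_1}$ both equal $1$ on the relevant subregion, we may replace $\chi^2_{\tau_1}\psi$ by $\psi$ and $\widehat\psi$ by $\psi$ throughout, reducing to bounding $\int_{\mathcal{R}(\tilde\tau_0,\tilde\tau_1)\cap\{r\ge R\}} |\Hap^{\rm total}[\mathfrak{D}^{\bf k}\psi]\cdot \mathfrak{D}^{\bf k} N(\partial\psi, \psi, x)|$. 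I would then extend the integration domain to all of $\mathcal{R}(\tau_0,\tau_1, v)\cap\{r\ge R\}$, which is harmless since the integrand is nonnegative, and possibly localise with the null cone parameter $v$ exactly as in the definition of the restricted master energies $\Xpk_{\frac{8R}9, v}$.

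The second step is to unpack the structure of $\Hap^{\rm total}[\mathfrak{D}^{\bf k}\psi]$ in the far region. By the construction in Section~\ref{absorbingetcsec}, for $r \ge R_{\rm freq}$ the current $\Jp^{\rm total}$ (and hence $\Hap^{\rm total}$) coincides with $J^{\rm main} + e_{\rm total}\zeta \Jp^{\rm far}$, and in $r \ge R$ this is governed entirely by the $r^p$-weighted far current $\Jp^{\rm far}$ of Appendix~B of~\cite{DHRT22}. Thus $\Hap^{\rm total}[\mathfrak{D}^{\bf k}\psi]$ is a combination of $V^\mu_p \partial_\mu(\mathfrak{D}^{\bf k}\psi) + w_p \mathfrak{D}^{\bf k}\psi$ whose weighted pointwise size is exactly the weighted expression $|r^p r^{-1}L(r\mathfrak{D}^{\bf k}\psi)| + |L\mathfrak{D}^{\bf k}\psi| + |\underline{L}\mathfrak{D}^{\bf k}\psi| + |r^{-1}\slashed\nabla \mathfrak{D}^{\bf k}\psi| + |r^{-1}\mathfrak{D}^{\bf k}\psi|$ appearing on the left-hand side of~\eqref{nullcondassump}. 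Summing over $|{\bf k}| \le k$ (and over the $\alpha({\bf k})$ coefficients, which are bounded), the integrand is then pointwise dominated by precisely the integrand appearing in the null-structure estimate~\eqref{nullcondassump}.

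The third step is simply to invoke Assumption~\ref{largeanullcondassumption}: under the basic bootstrap smallness~\eqref{newbasicbootstrap}, which gives $\Xzerolesslessk_{8R/9, v}(\tau_0,\tau_1)[\psi] \le \varepsilon \le \varepsilon_{\rm null}$ for $k$ large enough that $\lesslessk \ge k_{\rm null}$, inequality~\eqref{nullcondassump} yields exactly
\[
\int_{\mathcal{R}(\tau_0,\tau_1,v)\cap\{r\ge R\}} \sum_{|{\bf k}|\le k} (\ldots) |{\mathfrak{D}}^{\bf k} N| \lesssim \Xpk_{\frac{8R}9, v} \sqrt{\Xzerolesslessk_{\frac{8R}9, v}} + \sqrt{\Xpk_{\frac{8R}9, v}}\sqrt{\Xzerok_{\frac{8R}9, v}}\sqrt{\Xplesslessk_{\frac{8R}9, v}},
\]
and after taking the supremum over $v$ (or simply dropping the $v$-localisation, since these restricted energies are monotone in $v$ and bounded by their unrestricted versions), one obtains the claimed bound with the energies $\Xpk_{8R/9}$, $\Xzerok_{8R/9}$, $\Xplesslessk_{8R/9}$, $\Xzerolesslessk_{8R/9}$. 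The $p = 0$ case is handled identically, invoking~\eqref{nullcondassumptwo} instead and the stronger $\Xzeroplusk$ norms, exactly as in~\cite{DHRT22}.

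The main obstacle—and it is really a bookkeeping obstacle rather than a conceptual one—is verifying that the weighted pointwise profile of $\Hap^{\rm total}$ genuinely matches the multiplier weights on the left-hand side of~\eqref{nullcondassump} in every subregion of $r \ge R$, including the interplay with the cutoff $\zeta$ (which interpolates between $J^{\rm main}$ behaviour near $r = R_{\rm freq} \le R/4$ and $\Jp^{\rm far}$ behaviour in $r \ge 3R/4$) and the transition region $R/2 \le r \le R$ where extra commutator weights can appear; one also needs the elementary fact that swapping $\widehat\psi$ for $\psi$ and removing cutoffs costs nothing because $\widehat\psi = \psi$ wherever $\chi^2_{\tau_0,\widehat\tau_1}$ is not identically $1$ is offset by the cutoff support being outside $\mathcal{R}(\tilde\tau_0,\tilde\tau_1)$. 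All of this is routine and is carried out exactly as in the corresponding estimate of~\cite{DHRT22} (the relevant proposition in Section~4 there), so the proof is essentially a citation modulo checking that the current $\Hap^{\rm total}$ of the present paper reduces to that of~\cite{DHRT22} for $r \ge R$, which it does by construction.
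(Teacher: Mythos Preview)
Your proposal is correct and takes essentially the same approach as the paper, which simply states that the estimate follows immediately from the properties of the current $\Hap^{\rm total}$ and from Assumption~\ref{largeanullcondassumption}. Your detailed unpacking of why the weighted profile of $\Hap^{\rm total}$ in $r\ge R$ matches the left-hand side of~\eqref{nullcondassump}, together with the cutoff bookkeeping, is exactly the content hidden behind that citation; the only minor imprecision is that $\chi^2_{\tau_1}$ is not identically $1$ on all of $\mathcal{R}(\tilde\tau_0,\tilde\tau_1)$ (since $\tilde\tau_1$ may exceed $\tau_1$), but as you note this is harmless because $N(\partial\chi^2_{\tau_1}\psi,\chi^2_{\tau_1}\psi,x)$ vanishes for $\tau\ge\tau_1$ and the null assumption may be applied directly to $\chi^2_{\tau_1}\psi$ on the remaining transition strip.
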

\begin{proof}
As in~\cite{DHRT22}, this
 follows immediately from the properties of the current $\Hap^{\rm total}$ and
from Assumption~\ref{largeanullcondassumption}.
\end{proof}

Clearly now, the terms on line~\eqref{oldtermsredder} 
arising from $H^{\rm redder}$
can be estimated exactly as in Proposition~\ref{insertareferencetorefer},
i.e.~we have
\begin{proposition}
\label{insertareferencetoreferredder}
The terms on line~\eqref{oldtermsredder} may be estimated:
\begin{align}
\nonumber
  \int_{\mathcal{R}(\tilde\tau_0,\tilde\tau_1)\cap \{r_0 \le r\le r_1\}} |H^{{\rm redder}}[\mathfrak{D}^{\bf k} \widehat\psi] 
  \cdot [\mathfrak{D}^{\bf k}, \Box_{g(\chi^2_{\tau_1}\psi_,x)}-\Box_{g_{a,M}}] \widehat\psi |
  + |H^{{\rm redder}}[\mathfrak{D}^{\bf k}\widehat \psi] \cdot  \mathfrak{D}^{\bf k} N(\partial \chi^2_{\tau_1} \psi,  \chi^2_{\tau_1} \psi, x)  |\\
  \label{thatrighthandsidealreadyredder}
  \lesssim  \sqrt{\tau_1-\tau_0} 
\sup_{\tau_0\le \tau' \le \tau_1}  \Ezerok(\tau') \sqrt{\,\,\Xzerolesslessk(\tau_0,\tau_1) }.
\end{align}
\end{proposition}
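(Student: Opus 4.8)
The plan is to reduce this proposition to the already-established machinery of~\cite{DHRT22}, exactly as was announced in the proof of Proposition~\ref{insertareferencetorefer}: since the current $H^{\rm redder}_{g(\chi^2_{\tau_1}\psi,x)}$ is supported in the fixed region $r_0\le r\le r_2$ (and in particular the integration domain here is $\{r_0\le r\le r_1\}$, away from null infinity), we are in the ``near region'' where no $r^p$-weighted structure is needed, and both the commutator term and the semilinear term can be treated by crude Cauchy--Schwarz against an $L^2$-bound on $H^{\rm redder}[\mathfrak{D}^{\bf k}\widehat\psi]$.

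First I would record that $H^{\rm redder}[\mathfrak{D}^{\bf k}\widehat\psi]=\hat{\hat\zeta}H^{\varsigma(k)}_{g(\chi^2_{\tau_1}\psi,x)}[\mathfrak{D}^{\bf k}\widehat\psi]$ is, pointwise, controlled by $|L\mathfrak{D}^{\bf k}\widehat\psi|+|\underline L\mathfrak{D}^{\bf k}\widehat\psi|+|\nablaslash\mathfrak{D}^{\bf k}\widehat\psi|+|\mathfrak{D}^{\bf k}\widehat\psi|$ in $r_0\le r\le r_2$, so that, after the elliptic estimate of Proposition~\ref{ellproplargea} and Proposition~\ref{whenoneextends},
\[
\int_{\mathcal{R}(\tilde\tau_0,\tilde\tau_1)\cap\{r_0\le r\le r_1\}}|H^{\rm redder}[\mathfrak{D}^{\bf k}\widehat\psi]|^2\lesssim \int_{\tilde\tau_0}^{\tilde\tau_1}\Ezerok(\tau')[\widehat\psi]\,d\tau'\lesssim (\tau_1-\tau_0)\sup_{\tau_0\le\tau'\le\tau_1}\Ezerok(\tau')[\psi].
\]
For the commutator term $[\mathfrak{D}^{\bf k},\Box_{g(\chi^2_{\tau_1}\psi,x)}-\Box_{g_{a,M}}]\widehat\psi$, one uses that the difference of wave operators has coefficients $O(\sqrt{\Xzerolesslessk(\tau_0,\tau_1)})$ by~\eqref{newbasicbootstrap} together with the Sobolev estimate of Proposition~\ref{largeasobolevforfunctions}, so its $L^2$ norm over $\mathcal{R}(\tilde\tau_0,\tilde\tau_1)\cap\{r_0\le r\le r_1\}$ is bounded (after Proposition~\ref{whenoneextends}) by $\sqrt{\Xzerolesslessk(\tau_0,\tau_1)}\cdot \sqrt{\sup_{\tau_0\le\tau'\le\tau_1}\Ezerok(\tau')}$ up to $\sqrt{\tau_1-\tau_0}$ factors; for the semilinear term $\mathfrak{D}^{\bf k}N(\partial\chi^2_{\tau_1}\psi,\chi^2_{\tau_1}\psi,x)$ in this fixed region the analogous bound holds since $N(0,0,x)=0$ and the coefficients are uniformly bounded, again extracting one power of $\sqrt{\Xzerolesslessk}$ from a low-order factor via Sobolev. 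Combining these with Cauchy--Schwarz (putting $\epsilon$ on the $H^{\rm redder}$ factor and $\epsilon^{-1}$ on the inhomogeneity factor, and choosing $\epsilon=(\tau_1-\tau_0)^{-1/2}\sqrt{\Xzerolesslessk(\tau_0,\tau_1)}$, with the degenerate case $\Xzerolesslessk=0\Rightarrow\psi\equiv0$ handled trivially) yields precisely the right-hand side of~\eqref{thatrighthandsidealreadyredder}.

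I do not expect any genuine obstacle here: this proposition is essentially a bookkeeping exercise, and the bound is weaker than what one could prove (one could use ${}^\rho\Xzerok$ in place of the $\sqrt{\tau_1-\tau_0}\sup\Ezerok$ expression, as remarked after Proposition~\ref{insertareferencetorefer}), but since the weaker right-hand side already appears in $\mathcal{Y}(\tau_0,\widehat\tau_1)$ there is no need to optimise. The only point requiring a little care is the appearance of $\widehat\psi$ rather than $\psi$ in the $H^{\rm redder}$ argument: one invokes Proposition~\ref{whenoneextends} to pass from energies of $\widehat\psi$ on $\mathcal{R}(\tau_0,\widehat\tau_1)$ back to energies of $\psi$ on $\mathcal{R}(\tau_0,\tau_1)$, noting as always that the implicit constants are independent of $L_{\rm long}$.

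\begin{proof}
As in the proof of Proposition~\ref{insertareferencetorefer}, this follows exactly as in~\cite{DHRT22}: the current $H^{\rm redder}_{g(\chi^2_{\tau_1}\psi,x)}[\mathfrak{D}^{\bf k}\widehat\psi]$ is supported in the fixed region $r_0\le r\le r_2$ and is pointwise bounded there by the coercive first-derivative expression plus $|\mathfrak{D}^{\bf k}\widehat\psi|$, so by the elliptic estimate of Proposition~\ref{ellproplargea} and Proposition~\ref{whenoneextends} its $L^2$ norm over $\mathcal{R}(\tilde\tau_0,\tilde\tau_1)\cap\{r_0\le r\le r_1\}$ is controlled by $\sqrt{\tau_1-\tau_0}\,\sqrt{\sup_{\tau_0\le\tau'\le\tau_1}\Ezerok(\tau')}$. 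The commutator $[\mathfrak{D}^{\bf k},\Box_{g(\chi^2_{\tau_1}\psi,x)}-\Box_{g_{a,M}}]\widehat\psi$ and the semilinear term $\mathfrak{D}^{\bf k}N(\partial\chi^2_{\tau_1}\psi,\chi^2_{\tau_1}\psi,x)$ have, in this fixed region, $L^2$ norm bounded by $\sqrt{\Xzerolesslessk(\tau_0,\tau_1)}\,\sqrt{\sup_{\tau_0\le\tau'\le\tau_1}\Ezerok(\tau')}$ (up to a factor $\sqrt{\tau_1-\tau_0}$), using~\eqref{newbasicbootstrap}, the Sobolev estimate of Proposition~\ref{largeasobolevforfunctions}, the vanishing $N(0,0,x)=0$ and Proposition~\ref{whenoneextends}. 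Applying Cauchy--Schwarz, putting a parameter $\epsilon$ on the $H^{\rm redder}$ factor and $\epsilon^{-1}$ on the inhomogeneity factor and choosing $\epsilon=(\tau_1-\tau_0)^{-1/2}\sqrt{\Xzerolesslessk(\tau_0,\tau_1)}$ (the case $\Xzerolesslessk(\tau_0,\tau_1)=0$, in which $\psi\equiv0$ and the statement is trivial, being excluded), yields the bound~\eqref{thatrighthandsidealreadyredder}. Lower order terms are estimated in the same way.
\end{proof}
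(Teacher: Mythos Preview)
Your proposal is correct and follows exactly the approach the paper takes: the paper simply remarks that ``the terms on line~\eqref{oldtermsredder} arising from $H^{\rm redder}$ can be estimated exactly as in Proposition~\ref{insertareferencetorefer}'', which in turn defers to~\cite{DHRT22}. Your write-up supplies the details (Cauchy--Schwarz with an $\epsilon$-weight, Sobolev on the low-order factor, Proposition~\ref{whenoneextends} to pass from $\widehat\psi$ back to $\psi$) in precisely the spirit of the surrounding Propositions~\ref{nonlincomprophereref}--\ref{fromthesemihereprop}, and correctly notes that the sharper bound by ${}^\rho\Xzerok$ is available but unnecessary.
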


\subsubsection{An alternative estimate for the nonlinear terms}
Finally, we note the following alternative estimate for the non-linear terms:
\begin{proposition}
\label{onlyuseforboundednesstoporder}
Let $\tau_0=\tilde\tau_0\le \tilde\tau_1\le \cdots\le \tilde\tau_q=\tau_1$ be an arbitrary partition
of the interval $[\tau_0,\tau_1]$ such that $\tilde\tau_i-\tilde\tau_{i-1}\ge 2$.
We may replace the right hand side of the estimates
in Propositions~\ref{nonlincomprophereref}--\ref{fromthesemihereprop} 
with the expression
\begin{equation}
\label{theexpressiontobereplaced}
{}^\rho \Xzerok(\tau_0,\tau_1)[\psi] \sqrt{\Xzerolesslessk(\tau_0,\tau_1)[\psi] }+
\sum_{i=0}^{q-1}\sqrt{\tilde\tau_{i+1}-\tilde\tau_i}\,\,
{}^\rho \Xzerok(\tilde\tau_i,\tilde\tau_{i+1})[\psi]\sqrt{\Xzerolesslessk(\tilde\tau_i,\tilde\tau_{i+1})[\psi] }
\end{equation}
while we may replace the right  hand side of Proposition~\ref{insertareferencetorefer}
and~\ref{insertareferencetoreferredder} 
with the $\sum_{i=0}^{q-1}$
of the analogous expression with $(\tilde\tau_0,\tilde\tau_1)$ replaced by
$(\tilde\tau_{i},\tilde\tau_{i+1})$. 
\end{proposition}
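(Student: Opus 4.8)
The plan is to revisit the proofs of Propositions~\ref{nonlincomprophereref}--\ref{fromthesemihereprop} and of Propositions~\ref{insertareferencetorefer} and~\ref{insertareferencetoreferredder}, and observe that the only place where a global-in-$[\tau_0,\tau_1]$ quantity entered was in bounding a term of the form
\[
\|\mathring{\mathfrak{D}}^{\bf k}(\chi^2_{\tau_0,\tau_1+3}\widehat\psi)\|^2_{H^1(\mathcal{M}\cap\{r\le R_{\rm freq}\})}
\lesssim (\tau_1-\tau_0)\sup_{\tau_0\le\tau'\le\tau_1}\Ezerok(\tau'),
\]
after an application of Cauchy--Schwarz with a carefully chosen $\epsilon$. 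The point is that in each of those proofs the inhomogeneous integral was over a fixed region $\mathcal{R}(\tilde\tau_0,\tilde\tau_1)\cap\{r\le R_{\rm freq}\}$ (or its $r\ge R_{\rm freq}$ analogues), and after the partitioning into the three pieces $\chi^2_{\tau_0,\tau_1+3}$, $\chi^2_{\tau_0,\tau_1+2}\chi^2_{\tau_1+3,\widehat\tau_1}$, $\chi^2_{\tau_1+2,\widehat\tau_1}$, the only genuinely $(\tau_1-\tau_0)$-dependent contribution came from the first piece, where the support of $\chi_{\tau_0,\tau_1+3}$ has time-length comparable to $\tau_1-\tau_0$. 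The first step is therefore simply to insert a partition of unity in $\tau$ subordinate to $[\tau_0,\tau_1]=\cup_i[\tilde\tau_i,\tilde\tau_{i+1}]$ into that first piece, writing $\chi^2_{\tau_0,\tau_1+3}$ as a sum of cutoffs $\chi^2_{\tilde\tau_i-1,\tilde\tau_{i+1}+1}$ (with appropriate overlaps), so that each piece has time support of length $\lesssim \tilde\tau_{i+1}-\tilde\tau_i$.

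Concretely, I would re-run the Cauchy--Schwarz step in each of Propositions~\ref{nonlincomprophereref}, the pseudodifferential commutation proposition, and~\ref{fromthesemihereprop} separately on each subinterval's contribution, choosing $\epsilon_i := (\tilde\tau_{i+1}-\tilde\tau_i)^{-1/2}\sqrt{\Xzerolesslessk(\tilde\tau_i,\tilde\tau_{i+1})}$ in place of the global $\epsilon$. The $\epsilon_i$-term then produces $(\tilde\tau_{i+1}-\tilde\tau_i)\cdot\epsilon_i\cdot\sup_{[\tilde\tau_i,\tilde\tau_{i+1}]}\Ezerok \lesssim \sqrt{\tilde\tau_{i+1}-\tilde\tau_i}\,{}^\rho\Xzerok(\tilde\tau_i,\tilde\tau_{i+1})\sqrt{\Xzerolesslessk(\tilde\tau_i,\tilde\tau_{i+1})}$ after using that $\sup\Ezerok\lesssim{}^\rho\Xzerok$ on the same interval, while the $\epsilon_i^{-1}$-term gives $\sqrt{\Xzerolesslessk(\tilde\tau_i,\tilde\tau_{i+1})}$ times the commutator/semilinear $L^2$-norm restricted to $[\tilde\tau_i,\tilde\tau_{i+1}]$, which by the very same Sobolev and commutator estimates used in the original proofs is bounded by $\sqrt{\Xzerolesslessk(\tilde\tau_i,\tilde\tau_{i+1})}\cdot{}^\rho\Xzerok(\tilde\tau_i,\tilde\tau_{i+1})$ (here one uses that the relevant integrand is localised in $\tau$ to $[\tilde\tau_i,\tilde\tau_{i+1}]$, up to unit overlaps, and that $\Xzerolesslessk$ and ${}^\rho\Xzerok$ are monotone under shrinking of the interval). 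Summing over $i$ produces exactly the second sum in~\eqref{theexpressiontobereplaced}; the ``second'' and ``third'' pieces of the partition (those involving $\chi^2_{\tau_1+3,\widehat\tau_1}$) were already estimated in the original proofs by the $\tau_1$-independent quantities $\Xzerolesslessk(\tau_0,\tau_1)\,{}^\rho\Xzerok(\tau_0,\tau_1)$ (via Proposition~\ref{whenoneextends} and local existence), yielding the first term of~\eqref{theexpressiontobereplaced}. For Propositions~\ref{insertareferencetorefer} and~\ref{insertareferencetoreferredder} the argument is the same, partitioning the spacetime integral over $\mathcal{R}(\tilde\tau_0,\tilde\tau_1)$ into $\cup_i\mathcal{R}(\tilde\tau_i,\tilde\tau_{i+1})$ and applying the original $r\ge R_{\rm freq}$ estimate on each piece.

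The main obstacle, and the one point requiring genuine care rather than bookkeeping, is ensuring that the unit-length overlaps of the $\tau$-partition do not accumulate badly: one must choose the refined cutoffs so that each spacetime point lies in the support of only $O(1)$ of them, and one must verify that the smoothing and Sobolev mapping constants from Propositions~\ref{mixedsobolevboundednessprop}--\ref{smoothingprop} applied to the operators $\tilde\chi_{\tilde\tau_0,\tilde\tau_1}^4 P_n\mathring{\mathfrak{D}}^{\bf k}\chi_{\tilde\tau_i-1,\tilde\tau_{i+1}+1}$ remain uniform in $i$ (they do, since by Proposition~\ref{smoothingprop} the constant depends on the time-separation between the cutoffs only through a factor $|\tau_2-\tau_1+1|^\ell$, and here the separations are bounded by the fixed $\epsilon_{\rm cutoff}$ and the unit overlap lengths, all $O(1)$). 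A secondary subtlety is that $\Xzerolesslessk(\tilde\tau_i,\tilde\tau_{i+1})$ and ${}^\rho\Xzerok(\tilde\tau_i,\tilde\tau_{i+1})$ must be understood with the same $\lesslessk$ as in the global statement, which is fine since $\lesslessk$ depends only on $k$; and the cross terms in Cauchy--Schwarz between adjacent overlapping subintervals are absorbed by a standard almost-orthogonality argument. No new ideas beyond those already in the proofs of Propositions~\ref{nonlincomprophereref}--\ref{insertareferencetoreferredder} are needed.
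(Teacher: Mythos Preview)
Your overall strategy coincides with the paper's: partition the time interval, localise each factor with cutoffs subordinate to the partition, and re-run Cauchy--Schwarz on each piece with an interval-dependent parameter~$\epsilon_i$. The treatment of the physical-space Propositions~\ref{insertareferencetorefer} and~\ref{insertareferencetoreferredder} is correct and matches the paper exactly.

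There is, however, a genuine gap in your handling of the pseudodifferential terms. The operator $P_n$ is non-local in $t^*$, so once you partition either the input cutoff $\chi^2_{\tau_0,\tau_1+3}$ or the integration domain, you generate interactions between \emph{every} pair of subintervals, not only adjacent ones. You address only the adjacent overlaps via ``almost-orthogonality'' and assert that the constants from Propositions~\ref{mixedsobolevboundednessprop}--\ref{smoothingprop} are uniform because ``separations are all $O(1)$''. But for well-separated pairs $(i,j)$ the operator $\zeta_i P_n\mathring{\mathfrak{D}}^{\bf k}\eta_j$ (with $\zeta_i$ localising the integration and $\eta_j$ localising the input) is the object to which Proposition~\ref{smoothingprop} applies, and the time separation between these cutoffs is $\gtrsim |\tilde\tau_i-\tilde\tau_j|\ge 2|i-j|$, not $O(1)$. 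What you need---and what the paper's proof explicitly singles out---is the \emph{decay} of the smoothing constant in Proposition~\ref{smoothingprop}, namely $C\lesssim |\tilde\tau_i-\tilde\tau_j|^{-\ell}$ for all $\ell\ge 0$. It is this decay that makes the double sum over off-diagonal pairs convergent (uniformly in the number of subintervals $q$) and produces the global, non-partitioned first term ${}^\rho\Xzerok(\tau_0,\tau_1)\sqrt{\Xzerolesslessk(\tau_0,\tau_1)}$ in~\eqref{theexpressiontobereplaced}. Without invoking this decay, your sum of smoothing remainders would pick up an uncontrolled factor of~$q$.
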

\begin{proof}
The second statement is clear from the fact that no pseudodifferential operators are applied
and thus the statement arises from a local estimate in physical space. The first statement,
on the other hand follows as before from pseudodifferential commutator estimates, 
using now also the dependence of the constant $C$ in Proposition~\ref{smoothingprop}
on the distance between the supports of the cutoffs.
\end{proof}

We shall only use Proposition~\ref{onlyuseforboundednesstoporder}
to obtain the  statement in the last sentence of Proposition~\ref{largeafinal},  
which in turn will only be used to
replace  top order slow polynomial growth with uniform boundedness  in the statement
of our main theorem.

\subsection{Putting it all together: the final estimates}
\label{puttogethersec}

We may now state a closed system of estimates involving our master
energies:

\begin{proposition} 
\label{largeafinal}
Fix $0\ne |a|<M$, and let $(\mathcal{M},g_{a,M})$ be the sub-extremal Kerr manifold of 
Section~\ref{subextremalkerrsec}  
and consider equation~\eqref{theequationforlargea} satisfying the assumptions
of Section~\ref{generalclassconsideredrecalled}.

Then for all $k\ge 4$ sufficiently large,  there exist
constants $C>0$ (implicit in the $\lesssim$ below)
and $\varepsilon_{\rm prelim}>0$, 
 such that the following is true.

Let $\tau_0+2 \le \tau_1$ and 
let $\psi$ be a solution of~\eqref{theequationforlargea} in $\mathcal{R}(\tau_0,\tau_1)$ 
satisfying~\eqref{newbasicbootstrap} 
 for some $0<\varepsilon\le \varepsilon_{\rm prelim}$.
Then for $\delta \le p \le 2-\delta$, one has the estimates
\begin{align}
\nonumber
 {}^\rho \Xpk(\tau_0,\tau_1)  \nonumber
  &\lesssim \Epk(\tau_0) +
{}^{\chi\natural}_{\scalebox{.6}{\mbox{\tiny{\boxed{+1}}}}}\, \Xzerokminusone(\tau_0,\tau_1)  \\
\nonumber
&\qquad+
 \Xpk_{\frac{8R}9}(\tau_0,\tau_1) \sqrt{\Xzerolesslessk_{\frac{8R}9}(\tau_0,\tau_1)}  +\sqrt{\Xpk_{\frac{8R}9}(\tau_0,\tau_1)}\sqrt{\Xzerok_{\frac{8R}9}(\tau_0,\tau_1)}\sqrt{\Xplesslessk_{\frac{8R}9}(\tau_0,\tau_1)}
\\
 \label{toporderherelastterm}
 &\qquad +\sqrt{\tau_1-\tau_0} \,\,{}^\rho \Xzerok(\tau_0,\tau_1) \sqrt{\, \Xzerolesslessk(\tau_0,\tau_1) } \, ,
 \end{align}
\begin{align}
\nonumber
{}^{\chi\natural}_{\scalebox{.6}{\mbox{\tiny{\boxed{+1}}}}}\, \Xpkminusone(\tau_0,\tau_1) &\lesssim \Epkminusone(\tau_0) +
 \Xpkminusone_{\frac{8R}9}(\tau_0,\tau_1) \sqrt{\Xzerolesslessk_{\frac{8R}9}(\tau_0,\tau_1)} 
 + \sqrt{\Xpkminusone_{\frac{8R}9}(\tau_0,\tau_1)}\sqrt{\Xzerokminusone_{\frac{8R}9}(\tau_0,\tau_1)}
 \sqrt{\Xplesslessk_{\frac{8R}9}(\tau_0,\tau_1)}
 \\
\label{notetheextraterm}
 &\qquad
+\sqrt{\tau_1-\tau_0}\sup_{\tau_0\le \tau'\le \tau_1}  \Ezerok(\tau') \sqrt{\, \Xzerolesslessk(\tau_0,\tau_1) },
\end{align}
while for $p=0$, one has the estimates
\begin{align}
\label{needanameforthis}
 {}^\rho \Xzerok(\tau_0,\tau_1) &\lesssim \Ezerok(\tau_0) +
{}^{\chi\natural}_{\scalebox{.6}{\mbox{\tiny{\boxed{+1}}}}}\, \Xzerokminusone(\tau_0,\tau_1)+
 \Xzeroplusk_{\frac{8R}9} (\tau_0,\tau_1) \sqrt{ \Xzeropluslesslessk_{\frac{8R}9}(\tau_0,\tau_1)}
+  \sqrt{\tau_1-\tau_0} \,\,{}^\rho \Xzerok(\tau_0,\tau_1) \sqrt{\, \Xzerolesslessk(\tau_0,\tau_1) } , \\
\label{ditto}
{}^{\chi\natural}_{\scalebox{.6}{\mbox{\tiny{\boxed{+1}}}}}\, \Xzerokminusone(\tau_0,\tau_1) &\lesssim \Ezerokminusone(\tau_0) +
 \Xzeroplusk_{\frac{8R}9} (\tau_0,\tau_1) \sqrt{ \Xzeropluslesslessk_{\frac{8R}9}(\tau_0,\tau_1)}
+
\sqrt{\tau_1-\tau_0}\sup_{\tau_0\le \tau'\le \tau_1}  \Ezerok(\tau') \sqrt{\, \Xzerolesslessk(\tau_0,\tau_1) }
.
\end{align}

Moreover, we may replace the expression on lines~\eqref{toporderherelastterm} 
and~\eqref{notetheextraterm} 
with~\eqref{theexpressiontobereplaced} (and similarly in the $p=0$ identities).
\end{proposition}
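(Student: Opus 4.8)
The plan is to assemble the final estimates by combining the top order energy estimate of Proposition~\ref{toporderestimateprop} with the inhomogeneous term estimates of Section~\ref{estimforinhomog}, together with the refined linear estimate (Theorem~\ref{refinedblackboxoneforkerr}) applied at one order lower. First I would establish~\eqref{toporderherelastterm}: start from~\eqref{topordergiatwra}, namely ${}^\rho \Xpk(\tau_0,\tau_1)\lesssim \Epk(\tau_0)+\mathcal{Y}(\tau_0,\widehat\tau_1)[\widehat\psi]$, and substitute into $\mathcal{Y}$ (defined by~\eqref{mathcalYdefinition}) the bounds from Propositions~\ref{fromthecutoffbounding}, \ref{nonlincomprophereref}, the pseudodifferential commutation proposition, \ref{fromthesemihereprop}, \ref{insertareferencetorefer}, \ref{lastinhomogprop}, and~\ref{insertareferencetoreferredder}. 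One must carefully track which terms appear: the ${}^{\chi\natural}_{\scalebox{.6}{\mbox{\tiny{\boxed{+1}}}}}\,\Xpkminusone(\tau_0,\tau_1)$ contribution arising from line~\eqref{fromcoerc}, the far-region null-structure terms from Proposition~\ref{lastinhomogprop} (which give the middle line of~\eqref{toporderherelastterm}), and the $\sqrt{\tau_1-\tau_0}\,{}^\rho\Xzerok\sqrt{\Xzerolesslessk}$ terms from the interior commutation/pseudo/semilinear estimates. The cutoff term $\Xzerokminustwo(\tau_0,\tau_1)$ from Proposition~\ref{fromthecutoffbounding} is absorbed into ${}^{\chi\natural}_{\scalebox{.6}{\mbox{\tiny{\boxed{+1}}}}}\,\Xzerokminusone$ via~\eqref{therelationweknow} and monotonicity of master energies, and similarly the $R_{\rm freq}\le r\le R$ terms of Proposition~\ref{insertareferencetorefer} merge into the last line.

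Next I would establish~\eqref{notetheextraterm}: this is the lower-order inhomogeneous estimate. Here one applies Theorem~\ref{refinedblackboxoneforkerr} to $\psi$ with $k$ replaced by $k-1$, viewing $\psi$ as a solution of $\Box_{g_{a,M}}\psi = F$ with $F=(\Box_{g_{a,M}}-\Box_{g(\psi,x)})\psi + N(\partial\psi,\psi,x)$. The inhomogeneous terms on the right hand side of Theorem~\ref{refinedblackboxoneforkerr} are then estimated exactly as in Section~\ref{estimforinhomog}, but now at order $k-1$: the quasilinear error $(\Box_{g_{a,M}}-\Box_{g(\psi,x)})\psi$ in $r\le R$ contributes the $\sqrt{\tau_1-\tau_0}\,\sup \Ezerok(\tau')\sqrt{\Xzerolesslessk}$ term (note that the commutator with $\mathfrak{D}^{\bf k}$ at order $k-1$ still involves $k$ derivatives of $\psi$ when the nonlinear coefficient is differentiated, hence the top-order $\Ezerok$ on the right), while the semilinear terms in $r\ge R$ are controlled via Assumption~\ref{largeanullcondassumption} (Proposition~\ref{lastinhomogprop} at order $k-1$), giving the null-structure bilinear terms. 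This is essentially the argument of Section~6 of~\cite{DHRT22}, now with the ${}^{\chi\natural}_{\scalebox{.6}{\mbox{\tiny{\boxed{+1}}}}}$ strengthening on the left hand side furnished automatically by Theorem~\ref{refinedblackboxoneforkerr}.

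The $p=0$ cases~\eqref{needanameforthis} and~\eqref{ditto} are handled analogously, using the $p=0$ versions of Theorem~\ref{refinedblackboxoneforkerr} and Proposition~\ref{toporderestimateprop} (which hold with $p-1$ replaced by $-1-\delta$ throughout), and invoking the $p=0$ null structure estimate~\eqref{nullcondassumptwo}, which explains the appearance of the stronger anomalous energies $\Xzeroplusk_{8R/9}$ and $\Xzeropluslesslessk_{8R/9}$ on the right hand sides. Finally, for the last sentence, one replaces the crude $\sqrt{\tau_1-\tau_0}$ bounds by the partition-refined expression~\eqref{theexpressiontobereplaced}: this is immediate from Proposition~\ref{onlyuseforboundednesstoporder}, applied to a partition $\tau_0=\tilde\tau_0\le\cdots\le\tilde\tau_q=\tau_1$ with $\tilde\tau_i-\tilde\tau_{i-1}\ge 2$, replacing the relevant inhomogeneous estimates in $\mathcal{Y}$ and in Theorem~\ref{refinedblackboxoneforkerr}'s right hand side termwise.

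I expect the main obstacle to be bookkeeping rather than analysis: one must verify that every error term generated — especially those with unfavourable $\epsilon_{\rm cutoff}$ and $\epsilon_{\rm redder}$ dependence that were already absorbed inside Proposition~\ref{toporderestimateprop}, and the various ``omitted'' lower-order terms in the master energy definitions — either closes into one of the four displayed terms on the right hand sides via the established relations~\eqref{therelationweknow}, \eqref{largeahigherorderfluxbulkrelation}, Proposition~\ref{whenoneextends}, or is genuinely of the claimed form. In particular one should check that the $r^p$-weight bookkeeping in the bilinear semilinear terms (the key point of Assumption~\ref{largeanullcondassumption}) is not spoiled by the frequency projections, which is guaranteed because $P_n$ acts only in $r\le R_{\rm freq}\le R/4$ where $N$ is supported in $r\ge R$ — so the semilinear terms in the far region are untouched by the wave-packet decomposition, and Propositions~\ref{lastinhomogprop} and~\ref{insertareferencetorefer} apply verbatim as in~\cite{DHRT22}.
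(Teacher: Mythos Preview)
Your proposal is correct and follows essentially the same route as the paper: the top-order estimates~\eqref{toporderherelastterm} and~\eqref{needanameforthis} come from Proposition~\ref{toporderestimateprop} together with the inhomogeneous bounds of Section~\ref{estimforinhomog}, the lower-order estimates~\eqref{notetheextraterm} and~\eqref{ditto} from Theorem~\ref{refinedblackboxoneforkerr} at order $k-1$ as in~\cite{DHRT22}, and the final sentence from Proposition~\ref{onlyuseforboundednesstoporder}. Your additional commentary on the bookkeeping (absorbing $\Xzerokminustwo$, tracking the $\epsilon_{\rm cutoff}$/$\epsilon_{\rm redder}$ terms already handled inside Proposition~\ref{toporderestimateprop}, and noting that the far-region semilinear estimates are unaffected by the $P_n$ projections) is accurate and fills in what the paper's terse proof leaves implicit.
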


\begin{proof}
Inequalities~\eqref{toporderherelastterm} and~\eqref{needanameforthis} follow by applying Proposition~\ref{toporderestimateprop},
plugging in the bounds of 
Propositions~\ref{fromthecutoffbounding}--\ref{insertareferencetoreferredder}
of Section~\ref{estimforinhomog}.

The remaining inequalities~\eqref{notetheextraterm},~\eqref{ditto}
follow as in~\cite{DHRT22}.

The last sentence of the statement of the proposition follows from the alternative
bound of Proposition~\ref{onlyuseforboundednesstoporder}.
\end{proof}

\begin{remark}
\label{aequalszeroremark}
By the results of~\cite{DHRT22}, in the case $a=0$, the above Proposition holds as stated where 
${}^{\chi\natural}_{\scalebox{.6}{\mbox{\tiny{\boxed{+1}}}}}\, \Xpkminusone(\tau_0,\tau_1)$ is
replaced by ${}^{\chi}\, \Xpkminusone(\tau_0,\tau_1)$ for all $p$ as above.
In fact, by the results of~\cite{DHRT22},
the Proposition holds as stated 
with this substitution for $|a|\le a_{\rm small}$ for some $a_{\rm small}(M)>0$, 
with constants independent of $a$ in this range.
\end{remark}

\section{The main theorem: global existence and stability for quasilinear waves on Kerr in the full subextremal range
 $|a|<M$}
\label{largeamaintheoremsection}
We may now state the precise version of the  main result of this paper.

\begin{theorem}[Global existence and stability]
\label{largeamaintheorem}
Fix $|a|<M$, and let $(\mathcal{M},g_{a,M})$ be the sub-extremal Kerr manifold of 
Section~\ref{subextremalkerrsec}, 
and consider equation~\eqref{theequationforlargea} with $g(\psi, x)$, $N(\partial\psi,\psi,x)$
satisfying the assumptions
of Section~\ref{generalclassconsideredrecalled}.

There exists a positive integer $k_{\rm global}\ge k_{\rm local}$ sufficiently large,
such that, given $k\ge k_{\rm global}$, there exists a positive $0<\varepsilon_{\rm global}<\varepsilon_{\rm loc}$ sufficiently small,
and a positive  constant $C>0$ sufficiently large, so that the following holds.

Fix $\tau_0=1$
and consider as in Proposition~\ref{localexistencelargea}
initial data $(\uppsi,\uppsi')$ on $\Sigma(\tau_0)$ for~\eqref{theequationforlargea} satisfying
\[
\Eonek[\uppsi,\uppsi'] +  \Etwominusdelkminustwo\,[\uppsi,\uppsi']
 \leq\varepsilon_0\leq \varepsilon_{\rm global}.
\]
Then the $\psi$ given by Proposition~\ref{localexistencelargea} 
extends to a unique globally defined solution of~\eqref{theequationforlargea} in $\mathcal{R}(\tau_0,\infty)$,
satisfying the estimates
\begin{equation}
\label{basicestimatelargea}
{}^\rho\Xonek(\tau_0,\tau)[\psi]+
{}^\rho\Xtwominusdelkminustwo(\tau_0,\tau)[\psi]\leq  C\varepsilon_0,
\end{equation}
for all $\tau\ge \tau_0$, in particular 
\begin{equation}
\label{boundednessstafin}
\Eonek(\tau)[\psi]+
\Etwominusdelkminustwo(\tau)[\psi] \leq  C\varepsilon_0.
\end{equation}

The solution will satisfy moreover additional power law decay  estimates, for instance~\eqref{powerlawdecaystatements}.
\end{theorem}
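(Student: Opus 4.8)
The plan is to run the consecutive-slab iteration scheme sketched in Section~\ref{dyadicintro}, fed by the closed system of estimates of Proposition~\ref{largeafinal} together with the local well-posedness and Cauchy stability statements of Proposition~\ref{localexistencelargea}. First I would reduce matters to a bootstrap: on a slab $\mathcal{R}(\tau_0,\tau_1)$ with $\tau_1=\tau_0+{\rm L}$, assume the smallness hypothesis~\eqref{newbasicbootstrap}, i.e.\ $\Xzerolesslessk(\tau_0,\tau_1)[\psi]\le\varepsilon$, and then invoke Proposition~\ref{largeafinal}. Adding the estimates~\eqref{toporderherelastterm} and~\eqref{notetheextraterm} (and their $p=0$ analogues~\eqref{needanameforthis},~\eqref{ditto}) and using the fundamental relation~\eqref{therelationweknow} to absorb ${}^{\chi\natural}_{\scalebox{.6}{\mbox{\tiny{\boxed{+1}}}}}\,\Xpkminusone$ against ${}^\rho\Xpk$ at the previous commutation order, one obtains, for $p=2-\delta,1,0$, a system of the schematic form
\[
{}^\rho\Xpk(\tau_0,\tau_1)+{}^{\chi\natural}_{\scalebox{.6}{\mbox{\tiny{\boxed{+1}}}}}\,\Xpkminusone(\tau_0,\tau_1)
\lesssim
\Epk(\tau_0)+\sqrt{{}^\rho\Xpk}\sqrt{{}^\rho\Xzerok}\sqrt{\Xplesslessk}+{}^\rho\Xpk\sqrt{\Xzerolesslessk}\sqrt{\rm L},
\]
exactly as~\eqref{addedhereletssee}. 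For $\varepsilon$ small enough (so $\sqrt{\Xzerolesslessk}\sqrt{\rm L}$ can be absorbed, which constrains how ${\rm L}$ may grow relative to the data size) the nonlinear terms are absorbed and one closes the bootstrap in the slab, improving the constant in~\eqref{newbasicbootstrap}; combined with Proposition~\ref{localexistencelargea} and a continuity argument in the initial data this gives existence on the whole slab with quantitative bounds.

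Next I would set up the iterating hierarchy. Taking $\tau_i:=\alpha^i$ with $\alpha\gg1$ and ${\rm L}_i:=\tau_{i+1}-\tau_i\sim\tau_i$, I would propagate a hierarchy modelled on~\eqref{iteratinghierarchy}, e.g.\ bounds of the form $\Eonek(\tau_i)\lesssim\varepsilon_0{\rm L}_i^\beta$, $\Etwominusdelkminustwo(\tau_i)\lesssim\varepsilon_0{\rm L}_i^\beta$, $\Ezerokminustwo(\tau_i)\lesssim\varepsilon_0{\rm L}_i^{-1+\beta}$, and so on down the chain, with $\beta>0$ small. The key mechanism is the $r^p$-hierarchy relation~\eqref{largeahigherorderfluxbulkrelation}: $\Epminusonek'\gtrsim\Epminusonek$ for $p\ge1+\delta$ and $\Epminusonek'\gtrsim\Ezerok$ for $p\ge1$, which lets one trade a weighted bulk integral over the slab for an unweighted (or less-weighted) boundary energy, and then a pigeonhole argument on that bulk integral produces, at the ``top'' $\Sigma(\tau_{i+1})$, the hypotheses needed to start the iteration in the next slab $\mathcal{R}(\tau_{i+1},\tau_{i+2})$. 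One must check the bookkeeping: the nonlinear error terms on the right of the closed system carry total $r^p$-weight no higher than the left-hand side (this is the content of Assumption~\ref{largeanullcondassumption}), so the hierarchy is self-consistent; and the loss-of-derivatives term ${}^\rho\Xpk\sqrt{\Xzerolesslessk}\sqrt{\rm L}$ at order $k$ in~\eqref{notetheextraterm} is controlled because $\sqrt{{\rm L}_i}\sqrt{\varepsilon_0}$ stays small for $\varepsilon_0$ small enough relative to the chosen $\alpha$. Summing the decay estimates over the dyadic sequence then yields the pointwise power-law decay statements.

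Finally, for the \emph{orbital stability} improvement~\eqref{boundednessstafin} — i.e.\ removing the $\tau^\beta$ growth from $\Eonek$ and $\Etwominusdelkminustwo$ — I would not re-run the dyadic scheme but instead revisit the closed system in the single global slab $\mathcal{R}(1,\tau)$, using the \emph{alternative} estimate of Proposition~\ref{onlyuseforboundednesstoporder}: its right-hand side~\eqref{theexpressiontobereplaced} replaces $\sqrt{\tau-1}\,{}^\rho\Xzerok(1,\tau)\sqrt{\Xzerolesslessk(1,\tau)}$ by a sum over the dyadic subintervals, each factor of which has already been estimated by $\varepsilon_0$ (times an ${\rm L}_i$-power) from the iteration just carried out; summing the geometric-type series then gives a uniform $\lesssim\varepsilon_0$ bound, hence~\eqref{basicestimatelargea} and~\eqref{boundednessstafin}. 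The main obstacle I expect is the delicate interplay between the slab length ${\rm L}$, the growth exponent $\beta$, and the smallness $\varepsilon_0$: one must choose $\alpha$, then $\beta$, then $\varepsilon_{\rm global}$ (and $k_{\rm global}$, to have enough ``spare'' derivatives for the Sobolev and elliptic estimates and for the $\lesslessk\le k-n$ requirements) in the correct order so that all absorptions close simultaneously across the three values $p=2-\delta,1,0$, the $p=0$ case being genuinely anomalous and requiring the auxiliary $\Xzeroplusk$ energies on the right of~\eqref{needanameforthis}--\eqref{ditto}. The rest is the careful but routine verification that each error term produced by the pseudodifferential localisation and by the covariant-current application is indeed ``allowed'' in the sense already catalogued in Sections~\ref{toporderidentity}--\ref{estimforinhomog}.
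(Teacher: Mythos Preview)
Your proposal is correct and follows essentially the same approach as the paper: you sum the closed estimates of Proposition~\ref{largeafinal} into a $p$-hierarchy, close a bootstrap on each ${\rm L}$-slab via a continuity argument in the data, run the pigeonhole-driven iteration on dyadic slabs $\tau_i=\alpha^i$ using~\eqref{largeahigherorderfluxbulkrelation}, and then remove the residual $\tau^\beta$ growth by revisiting the estimates on the global slab with the alternative bound of Proposition~\ref{onlyuseforboundednesstoporder}. One small imprecision: the absorption of ${}^{\chi\natural}_{\scalebox{.6}{\mbox{\tiny{\boxed{+1}}}}}\,\Xzerokminusone$ on the right of~\eqref{toporderherelastterm} is not via~\eqref{therelationweknow} but simply because it is dominated by ${}^{\chi\natural}_{\scalebox{.6}{\mbox{\tiny{\boxed{+1}}}}}\,\Xpkminusone$ on the left of~\eqref{notetheextraterm}, so a weighted pairwise sum suffices.
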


\begin{remark}
\label{aquestionofuniformity}
For fixed $M>0$, 
choosing the ambient structure of Section~\ref{coordsheresec} 
to depend smoothly on $a\in (-M,M)$,
it follows that the constants $\varepsilon_{\rm global}$ and $C$ above can be chosen uniformly for all $a\in [-a_0,a_0]$ for any $0<a_0<M$,
i.e.~we may write $C=C(a_0, M)$, $\varepsilon_{\rm global}=\varepsilon_{\rm gobal}(a_0,M)$.
This will follow because  $C$ and $\varepsilon_{\rm global}$ will ultimately depend on the constants
of Proposition~\ref{largeafinal}, and these constants may
be chosen to depend smoothly on $a$ for $0<|a|<M$,
while for $|a|\le a_{\rm small}$  we may apply Remark~\ref{aequalszeroremark}.
Note, on the other hand, that we have no control over $C(a_0,M)$ as $a_0\to M$, even
for the linear wave equation.
\end{remark}

\section{Proof of global existence and decay via iteration on consecutive spacetime slabs}
\label{proofofmainthelargea}

We will prove in this section Theorem~\ref{largeamaintheorem},
following a version of the scheme of Section 6 
of~\cite{DHRT22} discussed already in Section~\ref{dyadicintro}  of the introduction.

We begin in Section~\ref{revisedestimatehierarchy} by reinterpreting our main estimates of Section~\ref{puttogethersec} as a suitable
$p$-hierarchy. We then infer a global existence statement on fixed ${\rm L}$-slabs in 
Section~\ref{globular}.  
In Section~\ref{pigeonsheretoo}, we show that by strengthening the initial data assumptions on such
a slab, we may obtain better 
estimates at the top of the slab via a pigeonhole argument, leading to an iterating hierarchy. 
This allows us in Section~\ref{simpleiterationproof} 
to deduce global existence, boundedness and decay  using an elementary iteration.

In what follows, we recall that Proposition~\ref{largeafinal} holds also for the $a=0$ case
with the notational substitution of Remark~\ref{aequalszeroremark}, and that this substitution may
also be applied in the case $|a|\le a_{\rm small}(M)$.

\subsection{A revised estimate hierarchy}
\label{revisedestimatehierarchy}

We first reinterpret Proposition~\ref{largeafinal} in the form of a $p$-hierarchy of
estimates.

\begin{proposition}
\label{revisedhierarchy}
Let $(\mathcal{M},g_{a,M})$ be the sub-extremal Kerr manifold of 
Section~\ref{subextremalkerrsec} for $|a|<M$, 
and consider equation~\eqref{theequationforlargea} satisfying the assumptions
of Section~\ref{generalclassconsideredrecalled}.

Let $k$ be sufficiently large.
There exist constants $C>0$ implicit in the $\lesssim$ notation 
and an $\varepsilon_{\rm boot}>0$ sufficiently small 
such that the following is true.

Consider a region $\mathcal{R}(\tau_0,\tau_1)$
and a $\psi$ solving~\eqref{theequationforlargea} on  $\mathcal{R}(\tau_0,\tau_1)$, satisfying 
moreover~\eqref{newbasicbootstrap} and $0< \varepsilon\le \varepsilon_{\rm boot}$.
Let us assume moreover that 
\[
\tau_1= \tau_0+{\rm L}
\]
for some arbitrary ${\rm L}\ge 2$.
We have the following hierarchy of inequalities:
\begin{eqnarray}
 \label{largeapweightedtopsummed}
{}^\rho\Xtwominusdelk  +  {}^{\chi\natural}_{\scalebox{.6}{\mbox{\tiny{\boxed{+1}}}}}\,\Xzerokminusone 
&\lesssim& \Etwominusdelk(\tau_0)  +   {}^\rho\Xtwominusdelk\sqrt{\Xzerolesslessk} + \sqrt{{}^\rho\Xtwominusdelk}\sqrt{ {}^\rho\Xzerok}
\sqrt{\Xtwominusdeltalesslessk}+
 {}^\rho\Xzerok\sqrt{\Xzerolesslessk}\sqrt{\rm L} \, , 
  \\
\label{largeapminusoneweightedtopsummed}
{}^\rho \Xonek   +  {}^{\chi\natural}_{\scalebox{.6}{\mbox{\tiny{\boxed{+1}}}}}\,\Xzerokminusone    &\lesssim& \Eonek(\tau_0) 
     + {}^\rho \Xonek
\sqrt{\Xonelesslessk} 
 +  {}^\rho\Xzerok\sqrt{\Xzerolesslessk}\sqrt{{\rm L}}   \,  ,  \\
 \label{largeanonpweightedtopsummed}
{}^\rho \Xzerok + {}^{\chi\natural}_{\scalebox{.6}{\mbox{\tiny{\boxed{+1}}}}}\, \Xzerokminusone  &\lesssim& 
\Ezerok(\tau_0)
+ \left( \,{}^\rho \Xzerok + (\,{}^\rho \Xzerok)^{\frac{1-\delta}{1+\delta}} (\, {}^\rho \Xonek)^{\frac{2\delta}{1+\delta}} \right)
\sqrt{\Xzerolesslessk+ (\,\Xzerolesslessk)^{\frac{1-\delta}{1+\delta}} (\, \Xonelesslessk)^{\frac{2\delta}{1+\delta}}}+  {}^\rho\Xzerok\sqrt{\Xzerolesslessk}\sqrt{{\rm L}}   \,   .
\end{eqnarray}
\end{proposition}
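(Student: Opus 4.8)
The plan is to derive Proposition~\ref{revisedhierarchy} essentially by repackaging the estimates of Proposition~\ref{largeafinal}, localising the nonlinear ``$8R/9$''-restricted master energies to the full master energies, and absorbing the linear error term on the right-hand side into the left-hand side using the refined integrated local energy decay estimate. The key structural point is that inequalities~\eqref{largeapweightedtopsummed}--\eqref{largeanonpweightedtopsummed} are obtained by \emph{adding} the ``top order'' estimate (e.g.~\eqref{toporderherelastterm}) to the ``one order lower'' estimate (e.g.~\eqref{notetheextraterm}), exactly as in the passage from~\eqref{highes}--\eqref{secondhigh} to~\eqref{addedhereletssee} reviewed in Section~\ref{dyadicintro}. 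In the sum, the term ${}^{\chi\natural}_{\scalebox{.6}{\mbox{\tiny{\boxed{+1}}}}}\,\Xzerokminusone(\tau_0,\tau_1)$ appearing on the right-hand side of~\eqref{toporderherelastterm} is matched by the same quantity appearing on the \emph{left}-hand side of~\eqref{notetheextraterm}, so for $\varepsilon$ (hence the implicit constant times $\sqrt{\Xzerolesslessk}$) sufficiently small it can be absorbed, leaving $\Etwominusdelk(\tau_0)$, etc., on the right after iterating down through the $k$-tower finitely many times and using $\lesslessk\le k-n$ for the relevant $n$.

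First I would record the elementary bookkeeping lemma that all the ``$8R/9$''-localised master energies are bounded by their global counterparts (immediate from the definitions in Section~\ref{masterenergiessec}) and that conversely, in the far region, one has the reverse comparison up to the flux terms already contained in the definitions; this lets me replace $\Xpk_{8R/9}$ by $\Xpk$ throughout, at the cost of adding the bulk terms controlled by ${}^\rho\Xpk$, which is already on the right-hand side of the hierarchy. Second, I would rewrite the two semilinear-type cubic terms: the term $\sqrt{\Xpk_{8R/9}}\sqrt{\Xzerok_{8R/9}}\sqrt{\Xplesslessk_{8R/9}}$ from~\eqref{toporderherelastterm} becomes $\sqrt{{}^\rho\Xtwominusdelk}\sqrt{{}^\rho\Xzerok}\sqrt{\Xtwominusdeltalesslessk}$ in~\eqref{largeapweightedtopsummed} (and analogously $\Xonelesslessk$ in~\eqref{largeapminusoneweightedtopsummed}) after noting $\Epk\lesssim {}^\rho\Xpk$, and observing that the product-of-square-roots term with $p=1$ can be absorbed into the quadratic term ${}^\rho\Xonek\sqrt{\Xonelesslessk}$ by Cauchy--Schwarz since the middle factor is lower order; while the term $\Xpk_{8R/9}\sqrt{\Xzerolesslessk_{8R/9}}$ becomes ${}^\rho\Xpk\sqrt{\Xzerolesslessk}$. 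The $\sqrt{\rm L}$-term $\sqrt{\tau_1-\tau_0}\,\sup_{\tau}\Ezerok\sqrt{\Xzerolesslessk}$ is bounded by ${}^\rho\Xzerok\sqrt{\Xzerolesslessk}\sqrt{\rm L}$ since $\sup_\tau\Ezerok\le {}^\rho\Xzerok$. For the $p=0$ inequality~\eqref{largeanonpweightedtopsummed}, the only extra ingredient is the interpolation inequality~\eqref{largeainterpolationstatementbulk} of Proposition~\ref{largeainterpolationprop}, used to bound the anomalous $\Xzeroplusk$-type quantities appearing in~\eqref{needanameforthis}--\eqref{ditto} by the stated interpolated combination of ${}^\rho\Xzerok$ and ${}^\rho\Xonek$ (and similarly for the $\lesslessk$ factors); this is exactly the $p=0$ degeneration discussed around~\eqref{downwithhierarchy}.

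The main obstacle I anticipate is not conceptual but notational: carefully tracking which error terms genuinely contain the strengthened quantity ${}^{\chi\natural}_{\scalebox{.6}{\mbox{\tiny{\boxed{+1}}}}}\,\Xpkminusone$ versus the merely-${}^\chi\Xpkminusone$ quantity, so that the absorption step is legitimate. Concretely, I need the stronger version of~\eqref{secondhigh} — namely inequality~\eqref{notetheextraterm}, which has ${}^{\chi\natural}_{\scalebox{.6}{\mbox{\tiny{\boxed{+1}}}}}\,\Xpkminusone$ on its \emph{left}-hand side, obtained from Theorem~\ref{refinedblackboxoneforkerr} rather than merely Theorem~\ref{blackboxoneforkerr} — in order to match the ${}^{\chi\natural}_{\scalebox{.6}{\mbox{\tiny{\boxed{+1}}}}}\,\Xzerokminusone$ produced by the top-order estimate~\eqref{topordergiatwra}. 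I would verify that~\eqref{notetheextraterm}, applied with $k$ replaced by $k-1$ and then iterated down to a base case governed by $\lesslessk$, supplies control of ${}^{\chi\natural}_{\scalebox{.6}{\mbox{\tiny{\boxed{+1}}}}}\,\Xpkminusone$ in terms of data norms $\Epkminusone(\tau_0)$ plus cubic terms and the $\sqrt{\rm L}$-term — all of which have the structure claimed in the hierarchy — so that adding it to~\eqref{toporderherelastterm} and absorbing indeed closes. Finally, the last sentence of Proposition~\ref{largeafinal}, allowing replacement of the $\sqrt{\rm L}$-weighted terms by the partitioned expression~\eqref{theexpressiontobereplaced}, carries over verbatim to the hierarchy since that substitution is made at the level of the inhomogeneous estimates before summing; I would simply remark that the additive/partitioned bound of Proposition~\ref{onlyuseforboundednesstoporder} propagates through all the steps above unchanged.
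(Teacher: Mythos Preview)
Your proposal is essentially correct and follows the same route as the paper: write out the six inequalities~\eqref{largeapweightedtop}--\eqref{largeanonpweightednottop} from Proposition~\ref{largeafinal} (bounding the $8R/9$-localised energies by the full ${}^\rho\mathcal{X}$-energies and using the interpolation~\eqref{largeainterpolationstatementbulk} for $p=0$), then sum the ``top'' and ``nottop'' estimates pairwise so that ${}^{\chi\natural}_{\scalebox{.6}{\mbox{\tiny{\boxed{+1}}}}}\,\Xpkminusone$ on the left of the latter absorbs the same term on the right of the former.

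One small correction: the absorption of ${}^{\chi\natural}_{\scalebox{.6}{\mbox{\tiny{\boxed{+1}}}}}\,\Xzerokminusone$ does \emph{not} rely on smallness of $\varepsilon$ or of $\sqrt{\Xzerolesslessk}$---that term carries no small factor in~\eqref{toporderherelastterm}. The paper instead weights the second summand by a sufficiently large constant $A$ before adding, so that $A\cdot{}^{\chi\natural}_{\scalebox{.6}{\mbox{\tiny{\boxed{+1}}}}}\,\Xpkminusone$ on the left dominates $C\cdot{}^{\chi\natural}_{\scalebox{.6}{\mbox{\tiny{\boxed{+1}}}}}\,\Xzerokminusone$ on the right. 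Equivalently, you can simply substitute the ``nottop'' estimate directly into the right-hand side of the ``top'' one; either way no iteration down the $k$-tower is required, since the right-hand side of~\eqref{notetheextraterm} already has the desired form with $\Epkminusone(\tau_0)\le\Epk(\tau_0)$ and all remaining terms dominated by those already present. Your observation that for $p=1$ the cubic term $\sqrt{{}^\rho\Xonek}\sqrt{{}^\rho\Xzerok}\sqrt{\Xonelesslessk}$ collapses into ${}^\rho\Xonek\sqrt{\Xonelesslessk}$ via ${}^\rho\Xzerok\le{}^\rho\Xonek$ is correct and explains the simpler form of~\eqref{largeapminusoneweightedtopsummed}.
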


\begin{proof}

From Proposition~\ref{largeafinal} we immediately deduce the following inequalities:
\begin{eqnarray}
 \label{largeapweightedtop}
{}^\rho\Xtwominusdelk
&\lesssim& \Etwominusdelk(\tau_0)  +  {}^{\chi\natural}_{\scalebox{.6}{\mbox{\tiny{\boxed{+1}}}}}\,\Xzerokminusone  +   {}^\rho\Xtwominusdelk\sqrt{\Xzerolesslessk} + \sqrt{{}^\rho\Xtwominusdelk}\sqrt{ {}^\rho\Xzerok}
\sqrt{\Xtwominusdeltalesslessk}+
 {}^\rho\Xzerok\sqrt{\Xzerolesslessk}\sqrt{{\rm L}} \, , 
  \\
  \label{largeapweightednottop}
 {}^{\chi\natural}_{\scalebox{.6}{\mbox{\tiny{\boxed{+1}}}}}\,\Xtwominusdelkminusone
&\lesssim& \Etwominusdelkminusone(\tau_0)  +    {}^\rho\Xtwominusdelkminusone\sqrt{\Xzerolesslessk} + \sqrt{{}^\rho\Xtwominusdelkminusone}\sqrt{ {}^\rho\Xzerokminusone}
\sqrt{\Xtwominusdeltalesslessk}+
 {}^\rho\Xzerok\sqrt{\Xzerolesslessk}\sqrt{{\rm L}} %  + \epsilon_{\rm elliptic}\,{}^\rho\Xzerok \,
  , 
  \\
\label{largeapminusoneweightedtop}
{}^\rho \Xonek&\lesssim& \Eonek(\tau_0) 
  +  {}^{\chi\natural}_{\scalebox{.6}{\mbox{\tiny{\boxed{+1}}}}}\,\Xzerokminusone        + {}^\rho \Xonek
\sqrt{\Xonelesslessk} 
 +  {}^\rho\Xzerok\sqrt{\Xzerolesslessk}\sqrt{{\rm L}}   \,  ,  \\
 \label{largeapminusoneweightednottop}
{}^{\chi\natural}_{\scalebox{.6}{\mbox{\tiny{\boxed{+1}}}}}\,\Xonekminusone&\lesssim& \Eonekminusone(\tau_0) 
  +    {}^\rho \Xonekminusone
\sqrt{\Xzerolesslessk} 
 +  {}^\rho\Xzerok\sqrt{\Xzerolesslessk}\sqrt{{\rm L}} % + \epsilon_{\rm elliptic}\,{}^\rho\Xzerok   \, 
    ,  \\
\label{largeanonpweightedtop}
{}^\rho \Xzerok &\lesssim& 
\Ezerok(\tau_0)
 +{}^{\chi\natural}_{\scalebox{.6}{\mbox{\tiny{\boxed{+1}}}}}\, \Xzerokminusone 
+ \left( \,{}^\rho \Xzerok + (\,{}^\rho \Xzerok)^{\frac{1-\delta}{1+\delta}} (\, {}^\rho \Xonek)^{\frac{2\delta}{1+\delta}} \right)
\sqrt{\Xzerolesslessk+ (\,\Xzerolesslessk)^{\frac{1-\delta}{1+\delta}} (\, \Xonelesslessk)^{\frac{2\delta}{1+\delta}}}+  {}^\rho\Xzerok\sqrt{\Xzerolesslessk}\sqrt{{\rm L}}   \,   , 
 \\
\label{largeanonpweightednottop}
{}^{\chi\natural}_{\scalebox{.6}{\mbox{\tiny{\boxed{+1}}}}}\,\Xzerokminusone &\lesssim& 
\Ezerokminusone(\tau_0)
+  \left( \,{}^\rho \Xzerokminusone + (\,{}^\rho \Xzerokminusone)^{\frac{1-\delta}{1+\delta}} (\, {}^\rho \Xonekminusone)^{\frac{2\delta}{1+\delta}} \right)
\sqrt{\Xzerolesslessk+ (\,\Xzerolesslessk)^{\frac{1-\delta}{1+\delta}} (\, \Xonelesslessk)^{\frac{2\delta}{1+\delta}}}
+  {}^\rho\Xzerok\sqrt{\Xzerolesslessk}\sqrt{{\rm L}}% + \epsilon_{\rm elliptic}\,{}^\rho\Xzerok
,
\end{eqnarray}
where we have also applied the interpolation estimate~\eqref{largeainterpolationstatementbulk}
to the $p=0$ estimates, and we have estimated 
\[
 \Xpk_{\frac{8R}9}(\tau_0,\tau_1) \lesssim {}^\rho \Xpk(\tau_0,\tau_1) , {\rm\ etc.}
\]
We now sum pairwise, where we weight the second summand with a sufficiently large multiple to absorb the second term on the right hand side of the first summand.
\end{proof}

\subsection{Global existence in an ${\rm L}$-slab}
\label{globular}

It is now  elementary to infer a global existence result on ${\rm L}$-slabs.

\begin{proposition}
\label{givesoneinaslab}
Let $(\mathcal{M},g_{a,M})$ be the sub-extremal Kerr manifold of 
Section~\ref{subextremalkerrsec} for $|a|<M$, 
and consider equation~\eqref{theequationforlargea} satisfying the assumptions
of Section~\ref{generalclassconsideredrecalled}.

Let $k\ge k_{\rm loc}$ be sufficiently large. 
Then  there exists a positive constant $\varepsilon_{\rm slab}\leq\varepsilon_{\rm loc}$ and a constant $C>0$ implicit in the sign $\lesssim$ below such that the following  is true.

Given arbitrary ${\rm L}\ge 2$,
 $\tau_0\ge 0$,  $0<\varepsilon_0\leq \varepsilon_{\rm slab}$ and initial data $(\uppsi,\uppsi')$ 
 on $\Sigma(\tau_0)$ as in Proposition~\ref{localexistencelargea}, satisfying moreover
\begin{equation}
\label{assumptionondataherelargea}
\Eonekminusfour(\tau_0)[\psi]	 \leq \varepsilon_0, 
\qquad  \Ezerokminussix(\tau_0)[\psi]	 \leq \varepsilon_0{\rm L}^{-1}, 
\end{equation}
then the unique solution of Proposition~\ref{localexistencelargea} 
obtaining the data can be extended to 
a $\psi$
defined on the entire  spacetime slab $\mathcal{R}(\tau_0,\tau_0+{\rm L})$
satisfying the equation~\eqref{theequationforlargea}
and
 the estimates
 \begin{equation}
 \label{newformatestimateslargea}
 {}^\rho  \Xonekminusfour [\psi]	+{}^{\chi\natural}_{\scalebox{.6}{\mbox{\tiny{\boxed{+1}}}}}\,\Xonekminusfive [\psi]	\lesssim \varepsilon_0,
\qquad {}^\rho \Xzerokminussix [\psi]	 +{}^{\chi\natural}_{\scalebox{.6}{\mbox{\tiny{\boxed{+1}}}}}\, \Xzerokminusseven[\psi]	 \lesssim \varepsilon_0{\rm L}^{-1}.
 \end{equation}
\end{proposition}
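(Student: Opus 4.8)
The plan is to run a standard continuity (bootstrap) argument in the fixed slab $\mathcal{R}(\tau_0,\tau_0+{\rm L})$, feeding the hierarchy of Proposition~\ref{revisedhierarchy} into a pigeonhole/absorption step. First I would set up the continuity argument: by Proposition~\ref{localexistencelargea} with $p=1$ and $p=2-\delta$, the data bound~\eqref{assumptionondataherelargea} (which, after dropping derivatives, controls the norms $\Epk[\uppsi,\uppsi']$ for $k$ replaced by $k-4$, resp.\ $k-6$, and $p=1$, resp.\ $p=0$, with the extra $p=1$ control needed because the $p=0$ estimates require a higher $p$-weight to close) yields a solution on some short slab with $\Xpk(\tau_0,\tau_0+\tau_{\rm exist})\lesssim\varepsilon_0$, and in particular the basic smallness assumption~\eqref{newbasicbootstrap} holds (with $\lesslessk$ comfortably below $k-7$) provided $\varepsilon_0$ is small enough. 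Cauchy stability (the last statement of Proposition~\ref{localexistencelargea}, formulated with respect to data) then lets me phrase the continuity argument as a statement about the set of data-perturbations for which the solution extends to all of $\mathcal{R}(\tau_0,\tau_0+{\rm L})$ with the bootstrap bounds; this set is open, closed and nonempty in the relevant topology, hence everything.

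The heart of the matter is the bootstrap improvement. I would introduce the bootstrap assumptions
\[
{}^\rho\Xonekminusfour(\tau_0,\tau')+{}^{\chi\natural}_{\scalebox{.6}{\mbox{\tiny{\boxed{+1}}}}}\Xonekminusfive(\tau_0,\tau')\le 2A\varepsilon_0,\qquad
{}^\rho\Xzerokminussix(\tau_0,\tau')+{}^{\chi\natural}_{\scalebox{.6}{\mbox{\tiny{\boxed{+1}}}}}\Xzerokminusseven(\tau_0,\tau')\le 2A\varepsilon_0{\rm L}^{-1}
\]
for $\tau'\le\tau_0+{\rm L}$, for a large constant $A$ to be fixed, and also a top-order bootstrap to control $\Xonelesslessk$, $\Xtwominusdeltalesslessk$ appearing on the right-hand sides; note that by the relations~\eqref{therelationweknow} and the monotonicity ${}^\rho\Xpprimekprime\gtrsim{}^\rho\Xpk$, together with the fundamental $r^p$ relation $\Xpkminusone\le{}^\chi\Xpk$, the $\lesslessk$-labelled quantities are controlled by the $k-4$ and $k-6$ ones as long as $k$ is large enough that $\lesslessk\le k-6$. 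Plugging these into~\eqref{largeapminusoneweightednottop}--\eqref{largeanonpweightednottop} of Proposition~\ref{revisedhierarchy} applied with $k$ replaced by $k-4$ and $k-6$ respectively (so that the $8R/9$-restricted master energies on the right are bounded by ${}^\rho\Xpk$, etc.), the cubic/quadratic nonlinear terms carry a factor $\sqrt{\Xzerolesslessk}\lesssim\sqrt{A\varepsilon_0{\rm L}^{-1}}$, and crucially the ${\rm L}$-dependent term is of the schematic form ${}^\rho\Xzerok\sqrt{\Xzerolesslessk}\sqrt{{\rm L}}\lesssim (A\varepsilon_0{\rm L}^{-1})\sqrt{A\varepsilon_0{\rm L}^{-1}}\sqrt{{\rm L}}=A^{3/2}\varepsilon_0^{3/2}{\rm L}^{-1}$, so the $\sqrt{{\rm L}}$ exactly cancels against the ${\rm L}^{-1}$ weights. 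Thus the nonlinear contributions are $\lesssim A^{3/2}\varepsilon_0^{1/2}\cdot(A\varepsilon_0)$ and $\lesssim A^{3/2}\varepsilon_0^{1/2}\cdot(A\varepsilon_0{\rm L}^{-1})$, which for $\varepsilon_0$ small (depending on $A$) are absorbed, leaving the linear data terms $\Eonekminusfour(\tau_0)\le\varepsilon_0$ and $\Ezerokminussix(\tau_0)\le\varepsilon_0{\rm L}^{-1}$; fixing $A$ to beat the implicit constant $C$ closes the bootstrap with the factor $2A$ improved to, say, $\tfrac{3}{2}A$.

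There is one genuine subtlety, which I expect to be the main obstacle: the $p=2-\delta$ estimate~\eqref{largeapweightedtopsummed} contains the term $\sqrt{{}^\rho\Xtwominusdelk}\sqrt{{}^\rho\Xzerok}\sqrt{\Xtwominusdeltalesslessk}$, where the last factor is of the \emph{same} top order $p=2-\delta$ as the left-hand side, so one cannot simply absorb it by smallness unless it too carries the right weight — this is why the hierarchy must be run from the bottom up ($p=0$ first, then $p=1$, then $p=2-\delta$), using the already-obtained smallness of the lower-$p$ quantities and the fact that $\Xtwominusdeltalesslessk$ is a $\lesslessk$-order quantity controlled (for $k$ large) by ${}^\rho\Xzerokminussix$-type norms via the relations quoted above, hence is $\lesssim A\varepsilon_0{\rm L}^{-1}$ rather than merely $\lesssim A\varepsilon_0$. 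Once this ordering is respected, the remaining work is routine: one reads off~\eqref{newformatestimateslargea} from the closed bootstrap, the $\Xonekminusfive$ and $\Xzerokminusseven$ pieces coming for free since ${}^{\chi\natural}_{\scalebox{.6}{\mbox{\tiny{\boxed{+1}}}}}\Xpkminusone$ appears on the left of~\eqref{largeapminusoneweightedtopsummed} and~\eqref{largeanonpweightedtopsummed}. The bootstrap assumption~\eqref{newbasicbootstrap} needed to even invoke Proposition~\ref{revisedhierarchy} is itself a consequence of the bootstrap bounds via the Sobolev estimate of Proposition~\ref{largeasobolevforfunctions}, as long as $k$ is taken large enough that $\lesslessk$ is below the number of derivatives controlled, so the argument is self-consistent.
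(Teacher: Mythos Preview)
Your overall strategy is correct and matches the paper's approach: a continuity argument in the data, a bootstrap assumption, and closing via the estimates of Proposition~\ref{revisedhierarchy}. However, your paragraph on the ``one genuine subtlety'' is misplaced and contains an error. The $p=2-\delta$ estimate~\eqref{largeapweightedtopsummed} plays no role in this proposition: the conclusion~\eqref{newformatestimateslargea} involves only $p=1$ and $p=0$ quantities, and the hierarchy estimates actually needed---\eqref{largeapminusoneweightedtopsummed} at level $k-4$ and~\eqref{largeanonpweightedtopsummed} at level $k-6$---have only $\Xonelesslessk$ and $\Xzerolesslessk$ on their right-hand sides, never $\Xtwominusdeltalesslessk$. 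Your claimed bound $\Xtwominusdeltalesslessk\lesssim A\varepsilon_0{\rm L}^{-1}$ via ${}^\rho\Xzerokminussix$-type control is in any case false, since one cannot control a higher $p$-weight by a lower one; but the point is moot, as this quantity simply does not enter. Once you drop the $p=2-\delta$ digression, the two nonlinear terms in~\eqref{largeapminusoneweightedtopsummed} (at $k-4$) are absorbed exactly as you describe in your middle paragraph---the $\sqrt{{\rm L}}$ cancels against $\sqrt{\Xzerolesslessk}\lesssim\sqrt{\varepsilon_0{\rm L}^{-1}}$---and there is no remaining obstacle; one then applies~\eqref{largeanonpweightedtopsummed} at $k-6$ in the same way.

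Two minor points. The labels you cite, \eqref{largeapminusoneweightednottop}--\eqref{largeanonpweightednottop}, are intermediate identities inside the \emph{proof} of Proposition~\ref{revisedhierarchy}; the statements you want are the summed inequalities~\eqref{largeapminusoneweightedtopsummed} and~\eqref{largeanonpweightedtopsummed}. Also, the paper formulates the bootstrap slightly differently---as pointwise bounds $\Eonelesslessk(\tau)\le\varepsilon$, $\Ezerolesslessk(\tau)\le\varepsilon{\rm L}^{-1}$ at the lowest order $\lesslessk$, with $\varepsilon_{\rm slab}\ll\varepsilon$---rather than on the full master energies at $k-4$ and $k-6$ as you propose, but either formulation closes.
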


\begin{remark}
We have labelled the highest order energy in the above by $k-4$ instead of $k$ 
to be consistent with the way this proposition will be used in the section to follow.
\end{remark}

\begin{proof}
It will be convenient to do a continuity argument \emph{in the data} rather than \emph{in time}, i.e.~let us replace
the data with $s\uppsi$, $s\uppsi'$ for $s\in [0,1]$.  Let us denote the resulting solution
$\psi_s$.   Note that the statement of the proposition trivially holds for $\psi_0$ since
$\psi_0=0$. 
Given $\varepsilon>0$ sufficiently small, let us consider the set $S$ consisting of all $s\in [0,1]$ such that the
solution $\psi_s$  exists in $\mathcal{R}(\tau_0,\tau_1)$ and satisfies 
\begin{equation}
\label{assumptionondataherelargeaboot}
\Eonelesslessk(\tau)[\psi]	 \leq \varepsilon, 
\qquad  \Ezerolesslessk(\tau)[\psi]	 \leq \varepsilon {\rm L}^{-1}
\end{equation}
for all $\tau_0\le \tau\le \tau_1$.
Note that~\eqref{assumptionondataherelargeaboot} in particular implies 
assumption~\eqref{newbasicbootstrap}.
We have just shown that $0\in S$ and thus $S$ is nonempty, 
and we note that by the Cauchy stability statement of Proposition~\ref{localexistencelargea}, 
the set  $S$ is manifestly closed. 
We will show below that the statement of the proposition holds for $\psi_s$ if $s\in S$,
and that moreover, for $s\in S$,~\eqref{assumptionondataherelargeaboot} holds with the improved constant
$\varepsilon/2$ in place of $\varepsilon$.
It then follows again by the Cauchy stability statement of 
Proposition~\ref{localexistencelargea} that the set $S$ is open, and thus coincides
with $[0,1]$.

Since~\eqref{assumptionondataherelargeaboot} (and thus also~\eqref{newbasicbootstrap}) holds, if $k$ is sufficiently large we may 
indeed apply~\eqref{largeapminusoneweightedtopsummed} of Proposition~\ref{revisedhierarchy}
for $k-4$, absorb (using the smallness of $\varepsilon$) the last two terms on the right hand side of~\eqref{largeapminusoneweightedtopsummed} into the first term on the left hand side,
 and obtain the first inequality of~\eqref{newformatestimateslargea} for $\psi_s$. Then, 
we may apply~\eqref{largeanonpweightedtopsummed} 
 for $k-6$  and obtain the second inequality of~\eqref{newformatestimateslargea} for $\psi_s$,
 again using the smallness of $\varepsilon$ and the ``bootstrap'' assumption~\eqref{assumptionondataherelargeaboot} to absorb the nonlinear terms.
This yields the statement of the Proposition for $\psi_s$, including~\eqref{newformatestimateslargea}.
Finally, for sufficiently small $\varepsilon_{\rm slab}\ll \varepsilon$,
then the inequalities~\eqref{newformatestimateslargea}
improve the constant of~\eqref{assumptionondataherelargeaboot} 
to say $\varepsilon/2$.
%Requiring now $\varepsilon_{\rm slab}<\varepsilon/3$, 
%note that the assumption~\eqref{assumptionondataherelargeaboot}  is
%a consequence of~\eqref{newformatestimateslargea}.
We thus indeed obtain openness,
and the Proposition follows as stated. 
\end{proof}

\subsection{The pigeonhole argument}
\label{pigeonsheretoo}

The key to our iterative argument for global existence and decay is the following
pigeonhole argument:

\begin{proposition}
\label{pigeonsforlargea}
Fix $\beta>0$ such that $\beta+\delta <1$.

Under the assumptions of Proposition~\ref{givesoneinaslab}, 
there exists a constant $C>0$, implicit in the notation $\lesssim$ below,
a  parameter $\alpha_0\ge 3$
and, for all $\alpha\ge \alpha_0$, a parameter~$\hat{\varepsilon}_{\rm slab}(\alpha)$
such that for all $0<\hat\varepsilon_0 \le\hat{\varepsilon}_{\rm slab}(\alpha)$ the following  holds.

Let us assume in addition to~\eqref{assumptionondataherelargea} that we have
\begin{equation}
\label{newiterative}
\Eonek(\tau_0) \leq \hat\varepsilon_0 {\rm L}^{\beta}, \qquad
\Etwominusdelkminustwo(\tau_0) \leq \hat\varepsilon_0 {\rm L}^{\beta}, \qquad
\Ezerokminustwo(\tau_0)\leq\hat\varepsilon_0\alpha^{2}  {\rm L}^{-1+\beta}, \qquad
 \Eonekminusfour(\tau_0) \leq \hat\varepsilon_0 \alpha^4 {\rm L}^{-1+\delta+\beta},  \qquad
 \Ezerokminussix(\tau_0) \leq \hat\varepsilon_0 \alpha^8 {\rm L}^{-2+\delta+ \beta}.
\end{equation}

Then the solution $\psi$ of~\eqref{theequationforlargea}
on $\mathcal{R}(\tau_0,\tau_0+{\rm L})$ given by Proposition~\ref{givesoneinaslab} 
satisfies the additional estimates
\begin{equation}
\label{addedherenotfancyiii}
\sup_{\tau_0\le \tau\le \tau_0+{\rm L}} \, \Eonek(\tau) \leq \hat\varepsilon_0  
\alpha^{\beta} {\rm L}^{\beta}, \qquad  
\sup_{\tau_0\le \tau\le \tau_0+{\rm L}} \,  \Etwominusdelkminustwo(\tau)   \leq  \hat\varepsilon_0 
 \alpha^{\beta} {\rm L}^\beta,
\end{equation}
\begin{equation}
\label{kiallobasicnewformiiiagain}
{}^\rho \Xonek+{}^{\chi\natural}_{\scalebox{.6}{\mbox{\tiny{\boxed{+1}}}}}\,\Xonekminusone \lesssim \hat\varepsilon_0 {\rm L}^\beta, \qquad
{}^\rho \Xtwominusdelkminustwo+{}^{\chi\natural}_{\scalebox{.6}{\mbox{\tiny{\boxed{+1}}}}}\, \Xtwominusdelkminusthree \lesssim \hat\varepsilon_0 {\rm L}^\beta,
\end{equation}
\begin{equation}
\label{akomakiallobasicnewformiiiagain}
{}^\rho \Xzerokminustwo+{}^{\chi\natural}_{\scalebox{.6}{\mbox{\tiny{\boxed{+1}}}}}\,\Xzerokminusthree \lesssim\hat\varepsilon_0\alpha^{2} {\rm L}^{-1+\beta}, \qquad
{}^\rho \Xonekminusfour+{}^{\chi\natural}_{\scalebox{.6}{\mbox{\tiny{\boxed{+1}}}}}\,\Xonekminusfive \lesssim\hat\varepsilon_0\alpha^4 {\rm L}^{-1+\delta+\beta}, 
\qquad
{}^\rho \Xzerokminussix+{}^{\chi\natural}_{\scalebox{.6}{\mbox{\tiny{\boxed{+1}}}}}\,\Xzerokminusseven \lesssim\hat\varepsilon_0\alpha^{8} {\rm L}^{-2+\delta+\beta}.
\end{equation}

 Moreover, 
for all times $\tau'$ with ${\rm L}\ge \tau'-\tau_0 \ge {\rm L}/2$, we have that
\begin{eqnarray}
\label{bterallolargea}
\Ezerokminustwo(\tau') &\leq&   \hat\varepsilon_0 \alpha {\rm L}^{-1+\beta},\\
\label{bterjustlargea}
\Eonekminusfour(\tau') & \leq&  \hat\varepsilon_0 \alpha^{2} {\rm L}^{-1+\delta+\beta},\\
\label{btteronelargea}
\Ezerokminussix(\tau') &\leq & \hat\varepsilon_0 \alpha^4 {\rm L}^{-2+\delta+\beta}.
\end{eqnarray}
\end{proposition}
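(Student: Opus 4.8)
The strategy is the standard pigeonhole argument of the $r^p$-hierarchy, as in Section~6 of~\cite{DHRT22}, applied now to the revised hierarchy of Proposition~\ref{revisedhierarchy}. First I would invoke Proposition~\ref{givesoneinaslab}, whose hypothesis~\eqref{assumptionondataherelargea} is a consequence of the fourth and fifth assumptions in~\eqref{newiterative} (using $\alpha\ge 3$ and $1-\delta-\beta>0$, $2-\delta-\beta>0$ so the powers of ${\rm L}$ are negative), to get a solution $\psi$ on the whole slab $\mathcal{R}(\tau_0,\tau_0+{\rm L})$ satisfying~\eqref{newformatestimateslargea} and hence the smallness bound~\eqref{newbasicbootstrap} at the bottom order. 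With this in hand, the hierarchy~\eqref{largeapweightedtopsummed}--\eqref{largeanonpweightedtopsummed} may be applied at the various orders $k$, $k-1$, $k-2$, $k-4$, $k-6$ appearing in~\eqref{kiallobasicnewformiiiagain}--\eqref{akomakiallobasicnewformiiiagain}. The nonlinear terms on the right-hand sides all carry a factor $\sqrt{\Xzerolesslessk}\le\sqrt{\varepsilon}$ or an $\sqrt{\rm L}$ multiplied by $\sqrt{\Xzerolesslessk}$, and using the bounds~\eqref{newformatestimateslargea} already established (so that $\sqrt{\Xzerolesslessk}\sqrt{\rm L}\lesssim\sqrt{\varepsilon_0{\rm L}^{-1}}\sqrt{\rm L}=\sqrt{\varepsilon_0}$, which is small), these can be absorbed into the left-hand sides for $\hat\varepsilon_0$ small. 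This yields~\eqref{kiallobasicnewformiiiagain} and~\eqref{akomakiallobasicnewformiiiagain} with the stated powers of ${\rm L}$ and $\alpha$, where the $\alpha$-powers simply propagate from the data assumptions~\eqref{newiterative}; the bound~\eqref{addedherenotfancyiii} then follows since ${}^\rho\Xpk$ dominates $\sup_\tau\Epk(\tau)$ and one absorbs the $\alpha^\beta$ loss by noting the data term in~\eqref{newiterative} is $\hat\varepsilon_0{\rm L}^\beta\le\hat\varepsilon_0\alpha^\beta{\rm L}^\beta$.

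\textbf{The pigeonhole step.} The heart of the matter is deducing the improved decay~\eqref{bterallolargea}--\eqref{btteronelargea} at a time $\tau'$ in the upper half of the slab. Here one exploits the integrated (bulk) terms in the master energies together with the fundamental hierarchical relations~\eqref{largeahigherorderfluxbulkrelation} and~\eqref{downwithhierarchy}: namely $\Epminusonek'\gtrsim\Ezerok$ for $p\ge 1$ and $\Epminusonek'\gtrsim\Epminusonek$ for $p\ge 1+\delta$. Concretely, for~\eqref{bterallolargea} one uses that ${}^\rho\Xzerokminustwo$ controls $\int_{\tau_0}^{\tau_0+{\rm L}} {}^{\tilde\rho}\Ezerominusthreeminusdeltakminusthree'+\cdots\,d\tau'$; more to the point one should run the argument one step up: ${}^\rho\Xonekminustwo$ (or rather the relevant $p=1$-type quantity) bounds $\int\Ezerok$-type bulk of order $k-2$, so by the mean value theorem there is a $\tau'\in[\tau_0+{\rm L}/2,\tau_0+{\rm L}]$ with $\Ezerokminustwo(\tau')\lesssim {\rm L}^{-1}\cdot(\text{bulk bound})\lesssim {\rm L}^{-1}\cdot\hat\varepsilon_0\alpha^2{\rm L}^{-1+\beta}\cdot({\rm L}/2)$. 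Wait --- one must be careful with the bookkeeping: the correct source is that ${}^\rho\Xzerokminustwo\lesssim\hat\varepsilon_0\alpha^2{\rm L}^{-1+\beta}$ already contains a time integral of a quantity $\gtrsim\Ezerokminustwo$ over a window of length $\gtrsim{\rm L}/2$, so pigeonholing over $[\tau_0+{\rm L}/2,\tau_0+{\rm L}]$ gives a $\tau'$ with $\Ezerokminustwo(\tau')\lesssim{\rm L}^{-1}\hat\varepsilon_0\alpha^2{\rm L}^{-1+\beta}$, and then absorbing the constant $C$ into one power of $\alpha$ (this is exactly why $\alpha_0$ must be taken large, $\alpha_0\gtrsim C$) yields $\Ezerokminustwo(\tau')\le\hat\varepsilon_0\alpha{\rm L}^{-1+\beta}$, which is~\eqref{bterallolargea}. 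The same scheme, shifted down in differentiability index, gives~\eqref{bterjustlargea} from ${}^\rho\Xonekminusfour\lesssim\hat\varepsilon_0\alpha^4{\rm L}^{-1+\delta+\beta}$ (whose bulk term dominates $\Eonekminusfour$-type quantities via the $p\ge 1+\delta$ case of~\eqref{downwithhierarchy}, losing the factor ${\rm L}^{-1}$), and~\eqref{btteronelargea} from ${}^\rho\Xzerokminussix\lesssim\hat\varepsilon_0\alpha^8{\rm L}^{-2+\delta+\beta}$; in each case one power of $C$ is swallowed by dropping from $\alpha^{2j}$ to $\alpha^{j}$.

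\textbf{Main obstacle.} The delicate point is the precise bookkeeping of the $\alpha$-powers: one must verify that the pigeonhole loss of a multiplicative constant $C$ can always be absorbed into \emph{exactly one} power of $\alpha$ at each order, i.e.\ that the $\alpha$-exponents in the hypotheses~\eqref{newiterative} ($0,0,2,4,8$) are chosen so that halving the exponent in~\eqref{bterallolargea}--\eqref{btteronelargea} ($1,2,4$) still leaves room for the $C$ and is consistent with the iteration in the next slab. This is precisely the arithmetic that makes the iterating hierarchy~\eqref{iteratinghierarchy} self-improving, and it is the only place where the exact structure of the exponents matters; once one chooses $\alpha_0$ larger than all the generic constants $C$ appearing (which depend only on $a$, $M$, $k$, $\delta$), and then $\hat\varepsilon_{\rm slab}(\alpha)$ small enough to absorb the nonlinear $\sqrt{\hat\varepsilon_0}$ and $\sqrt{\varepsilon_0}$ factors, everything closes. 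A secondary but routine point is checking that the interpolation inequality~\eqref{largeainterpolationstatementbulk} used in deriving~\eqref{largeanonpweightedtopsummed} does not spoil the power counting at the $p=0$ level --- this is handled exactly as in~\cite{DHRT22}.
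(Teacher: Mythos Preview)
Your overall plan is the right one --- invoke Proposition~\ref{givesoneinaslab}, close the hierarchy of Proposition~\ref{revisedhierarchy} at the various orders to obtain~\eqref{kiallobasicnewformiiiagain}--\eqref{akomakiallobasicnewformiiiagain}, then pigeonhole. However, the pigeonhole step as you describe it is not correct, for two independent reasons.

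\textbf{Wrong source for the pigeonhole.} You propose to extract~\eqref{bterallolargea} by pigeonholing the bulk of ${}^\rho\Xzerokminustwo$. But for $p=0$ the non-degenerate bulk term in ${}^\rho\Xzerokminustwo$ is $\int\Ezerominusoneminusdeltakminusfour'$, which does \emph{not} dominate $\Ezerokminustwo$. The correct source is the $\Epminusonekminustwo'$ term of ${}^\rho\Xonek$ (i.e.\ $p=1$, top order $k$), which equals $\int\Ezerokminustwo'$ and by~\eqref{largeahigherorderfluxbulkrelation} controls $\int\Ezerokminustwo$; pigeonholing then gives $\Ezerokminustwo(\tau'')\lesssim\hat\varepsilon_0{\rm L}^{-1+\beta}$ with no $\alpha$-factor. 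Similarly, your claim that ${}^\rho\Xonekminusfour$ pigeonholes to $\Eonekminusfour$ ``via the $p\ge 1+\delta$ case of~\eqref{downwithhierarchy}'' fails because $p=1<1+\delta$. The paper instead pigeonholes the $\Eoneminusdelkminusfour'$ bulk contained in ${}^\rho\Xtwominusdelkminustwo$ to obtain $\Eoneminusdelkminusfour(\tau'')\lesssim\hat\varepsilon_0{\rm L}^{-1+\beta}$, and then applies the interpolation inequality~\eqref{largeainterpolationstatementnewback} against $\sup\Etwominusdelkminusfour\lesssim\hat\varepsilon_0{\rm L}^\beta$ to recover $\Eonekminusfour(\tau'')\lesssim\hat\varepsilon_0{\rm L}^{-1+\delta+\beta}$. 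This interpolation step is essential and you do not mention it. For $\Ezerokminussix$, the source is $\int\Ezerokminussix'\subset{}^\rho\Xonekminusfour$.

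\textbf{From one $\tau''$ to all $\tau'$.} Your pigeonhole yields the improved bounds only at a single time, whereas~\eqref{bterallolargea}--\eqref{btteronelargea} are claimed for \emph{all} $\tau'\in[\tau_0+{\rm L}/2,\tau_0+{\rm L}]$. The paper pigeonholes on the \emph{lower} half $[\tau_0+{\rm L}/4,\tau_0+{\rm L}/2]$ to find $\tau''$, and then re-applies the hierarchy~\eqref{largeapminusoneweightedtopsummed}, \eqref{largeanonpweightedtopsummed} on $[\tau'',\tau_0+{\rm L}]$ with the pigeonholed values as new data, thereby propagating the improved bounds to every later time. Only after this second step are the implicit constants absorbed into the spare powers of $\alpha$ by taking $\alpha_0$ large.
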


\begin{proof}
We note first that  for $\varepsilon_{\rm slab}(\alpha)$ sufficiently small, 
then~\eqref{newiterative} implies assumption~\eqref{assumptionondataherelargea} of 
Proposition~\ref{givesoneinaslab}. 

We thus have that the solution indeed exists in $\mathcal{R}(\tau_0,\tau_0+{\rm L})$ and satisfies the bounds~\eqref{newformatestimateslargea}, whence it follows
that $\Xzerolesslessk \lesssim \varepsilon  {\rm L}^{-1}$, while $\Xonelesslessk\lesssim \varepsilon$.

Turning to the hierarchy of Proposition~\ref{revisedhierarchy}, 
it follows from~\eqref{largeapminusoneweightedtopsummed}
that the first inequality of~\eqref{kiallobasicnewformiiiagain} holds. 
It now follows from~\eqref{largeanonpweightedtopsummed} that 
the first inequality of~\eqref{akomakiallobasicnewformiiiagain} holds.  
It now follow from~\eqref{largeapweightedtopsummed} that the
second inequality of~\eqref{kiallobasicnewformiiiagain} holds.

Applying~\eqref{largeapminusoneweightedtopsummed} now with $k-4$, it now follows
that the second inequality of~\eqref{akomakiallobasicnewformiiiagain} holds, while from~\eqref{largeanonpweightedtopsummed} applied now with $k-6$,
we obtain the final inequality of~\eqref{akomakiallobasicnewformiiiagain}.

Note that it now follows in particular that
\[
 \Xzerolesslessk \lesssim \varepsilon  {\rm L}^{-2+\delta+\beta}.
\]

To obtain~\eqref{bterallolargea}--\eqref{btteronelargea},  
we apply the pigeonhole principle to 
\[
\int_{\tau_0+{\rm L}/4}^{\tau_0+{\rm L}/2}  \left( {\rm L}^{-\beta}\, \Ezerokminustwo'(\tau') + {\rm L}^{-\beta}\,\,  \Eoneminusdelkminusfour' (\tau')
+\alpha^{-4} {\rm L}^{1-\delta-\beta} \,\Ezerokminussix' (\tau')\right)d\tau' \lesssim \hat\varepsilon_0 
\]
which, upon addition, follows from the  inequalities of the 
estimates~\eqref{kiallobasicnewformiiiagain}--\eqref{akomakiallobasicnewformiiiagain} 
already shown. 
Recalling from~\eqref{largeahigherorderfluxbulkrelation} that  we have

\[
\Ezerokminustwo' \gtrsim \Ezerokminustwo, \qquad \Eoneminusdelkminusfour' \gtrsim \Eoneminusdelkminusfour, \qquad
\Ezerokminussix' \gtrsim \Ezerokminussix, 
\]
we obtain 
that there exists $\tau''  \in [\tau_0,\tau_0+{\rm L}/2]$, whose precise value depends on the solution, such that
\begin{eqnarray}
\label{withimplicitconstoneiiinew}
\Ezerokminustwo(\tau'') &\lesssim& \hat{\varepsilon}_0 {\rm L}^{-1+\beta},\\
\label{withimplicitconstoneiii}
\Eoneminusdelkminusfour(\tau'') &\lesssim& \hat{\varepsilon}_0  {\rm L}^{-1+\beta},\\
\label{withimplicitconstwoiii}
\Ezerokminussix(\tau'')	&\lesssim&
\hat{\varepsilon}_0 \alpha^4 {\rm L}^{-2+\delta+\beta}.
\end{eqnarray}
Now in view of the interpolation estimate~\eqref{largeainterpolationstatementnewback} of Proposition~\ref{largeainterpolationprop}, 
we have
\begin{equation}
\label{reftosecondfactor}
\Eonekminusfour(\tau'')\lesssim  \left( \,\, \Eoneminusdelkminusfour(\tau'') \right)^{1-\delta} 
 \left( \,\, \Etwominusdelkminusfour(\tau'')\right)^{\delta}  
\leq \left( \,\, \Eoneminusdelkminusfour(\tau'') \right)^{1-\delta} 
 \left( \,\, \sup_{\tau_0\le \tau\le \tau_0+{\rm L}} \Etwominusdelkminusfour(\tau) \right)^{\delta}  
\end{equation}
 and thus
 \begin{equation}
 \label{afterinterpolationhereiii}
\Eonekminusfour(\tau'')  \lesssim \hat{\varepsilon}_0   {\rm L}^{-1+\delta+\beta} ,
\end{equation}
where we have used~\eqref{withimplicitconstoneiii} and
the estimate for the second factor on the right hand side of~\eqref{reftosecondfactor} contained in the second inequality of~\eqref{kiallobasicnewformiiiagain}.

Now
we apply~\eqref{largeapminusoneweightedtopsummed} 
and~\eqref{largeanonpweightedtopsummed}  again, with $\tau''$ in place of $\tau_0$,  
using~\eqref{withimplicitconstoneiiinew},~\eqref{afterinterpolationhereiii} and \eqref{withimplicitconstwoiii} to bound the initial data,
to obtain that 
for all $\tau_0+{\rm L}\ge \tau'\ge \tau_0+{\rm L}/2$, we have
\begin{align}
\label{withimplicitconstonelateriiinewhere}
 \Ezerokminustwo(\tau) &\lesssim  \hat{\varepsilon}_0 {\rm L}^{-1+\beta},\\
\label{withimplicitconstonelateriii}
 \Eonekminusfour(\tau) &\lesssim  \hat{\varepsilon}_0  {\rm L}^{-1+\delta+\beta}, \\
\label{withimplicitconsttwolateriii}
\Ezerokminussix(\tau) &\lesssim 
\hat{\varepsilon}_0 \alpha {\rm L}^{-2+\delta+\beta}.
\end{align}

Thus, in view of the requirement $\alpha \ge \alpha_0 \ge 3$, for sufficiently large $\alpha_0$ we may 
absorb the constants implicit in $\lesssim$ by explicit constants of our choice by adding extra positive $\alpha$
powers to the right hand side of~\eqref{withimplicitconstonelateriii} and~\eqref{withimplicitconsttwolateriii}.
In this way,  we obtain the specific 
estimates~\eqref{bterallolargea},~\eqref{bterjustlargea} and~\eqref{btteronelargea}. 
In the same way, we also obtain the specific constant of the  estimate
of~\eqref{addedherenotfancyiii}
 which will be convenient in our scheme.

\end{proof}
\subsection{The proof of Theorem~\ref{largeamaintheorem}:  iteration in consecutive spacetime slabs}
\label{simpleiterationproof}

It is now elementary to deduce Theorem~\ref{largeamaintheorem}:

We  fix $\alpha\ge \alpha_0$  so that the statement of Proposition~\ref{pigeonsforlargea} holds,
for instance $\alpha:=\alpha_0$.
(Note that we shall no longer track the dependence of constants and parameters on $\alpha$ since it is now considered fixed.
Thus implicit constants depending on the choice of $\alpha$ will from now on be incorporated in the notations $\sim$ and  $\lesssim$. The $\alpha$ factors in exact inequalities $\leq$ will of course still
necessarily be denoted!)

For $i=0,1,\ldots$, we define 
\begin{equation}
\label{tauiLidef}
\tau_i:= \alpha^i, \qquad 
{\rm L}_i:= \alpha^{i+1}-\alpha^{i}= \alpha^i(\alpha-1)\ge 2.
\end{equation}
Note that 
\begin{equation}
\label{notethisherenowyes}
{\rm L}_{i+1}=\alpha {\rm L}_{i}.
\end{equation}

For $0<\varepsilon_0 \le \varepsilon_{\rm global}$ and $\varepsilon_{\rm global}$ sufficiently small,
we have by assumption
\begin{equation}
\label{afouedw}
\Eonek(\tau_0) \leq\varepsilon_0, \qquad
\Etwominusdelkminustwo(\tau_0)\leq\varepsilon_0.
\end{equation}

In particular, for $\varepsilon_{\rm global}$ sufficiently small we have
\begin{equation}
\label{newiterativeinproof}
\Eonek(\tau_i) \leq \hat\varepsilon_0 {\rm L}_i^{\beta}, \qquad
\Etwominusdelkminustwo(\tau_i) \leq \hat\varepsilon_0 {\rm L}_i^{\beta}, \qquad
\Ezerokminustwo(\tau_i) \leq\hat\varepsilon_0\alpha^{2}  {\rm L}_i^{-1+\beta}, \qquad
 \Eonekminusfour(\tau_i) \leq \hat\varepsilon_0 \alpha^4 {\rm L}_i^{-1+\delta+\beta},  \qquad
 \Ezerokminussix(\tau_i) \leq \hat\varepsilon_0 \alpha^8 {\rm L}_i^{-2+\delta+ \beta}
\end{equation} 
 for $i=0$ (for which $\tau_0=1$, $L_0=\Xi-1$), and where by the smallness of $\varepsilon_{\rm global}$, these  imply also~\eqref{assumptionondataherelargea}.
 
 Now suppose~\eqref{newiterativeinproof} hold for  some $i\ge 0$ where $\tau_i$,
 ${\rm L}_i$ are defined as in~\eqref{tauiLidef}. Note that~\eqref{assumptionondataherelargea} then also holds.
Proposition~\ref{pigeonsforlargea} applies to yield that the solution extends to 
$\mathcal{R}(\tau_i, \tau_i+{\rm L}_i)=\mathcal{R}(\tau_i,\tau_{i+1})$ and satisfies
\begin{equation}
\label{withintheslabsproof}
\Eonek(\tau) \lesssim \varepsilon_0 {\rm L}_i^{\beta}, \qquad
\Etwominusdelkminustwo(\tau) \lesssim  \hat\varepsilon_0 {\rm L}_i^{\beta}, \qquad
\Ezerokminustwo(\tau) \lesssim \hat\varepsilon_0 {\rm L}_i^{-1+\beta}, \qquad
 \Eonekminusfour(\tau) \lesssim \hat\varepsilon_0 {\rm L}_i^{-1+\delta+\beta},  \qquad
 \Ezerokminussix(\tau) \lesssim \hat\varepsilon_0  {\rm L}_i^{-2+\delta+ \beta}
\end{equation} 
for all $\tau \in [\tau_i,\tau_{i+1}]$,
as well as the estimates at $\tau=\tau_{i+1}$ with precise constant~\eqref{addedherenotfancyiii}, \eqref{bterallolargea}--\eqref{btteronelargea}. Examining the powers of $\alpha$ in these latter estimates,
 it follows that~\eqref{newiterativeinproof} holds
for $i+1$, in view of~\eqref{notethisherenowyes}.

It follows by induction that the solution extends to $\mathcal{R}(\tau_0,\infty)$ 
and satisfies~\eqref{withintheslabsproof} with $\tau$ replacing ${\rm L}_i$, for all $\tau\ge \tau_0$,
so in particular
we have the higher order inverse power law decay statements
\begin{equation}
\label{powerlawdecaystatements}
\Ezerokminustwo(\tau) \lesssim \hat\varepsilon_0 \tau^{-1+\beta}, \qquad
 \Eonekminusfour(\tau) \lesssim \hat\varepsilon_0 \tau ^{-1+\delta+\beta},  \qquad
 \Ezerokminussix(\tau) \lesssim \hat\varepsilon_0  \tau^{-2+\delta+ \beta}.
\end{equation}

For each $\tau_i$, we may revisit the estimates~\eqref{toporderherelastterm} 
and~\eqref{notetheextraterm}  of Proposition~\ref{largeafinal}
but now in the (non-dyadic!)~region  $\mathcal{R}(\tau_0,\tau_i)$, both
for $k$ and $p=1$
and for $k-2$ and $p=2-\delta$,
using the alternative right hand side~\eqref{theexpressiontobereplaced}
referred to in the last sentence of the statement of the proposition, where we 
apply~\eqref{theexpressiontobereplaced} to the partition of $(\tau_0,\tau_i)$ 
by $\tau_j$ for $j=0,\ldots, i$.  
Plugging in the bounds already shown, we may now bound the arising terms~\eqref{theexpressiontobereplaced}
by 
$ \Epk(\tau_0)$.
Summing~\eqref{toporderherelastterm} 
and~\eqref{notetheextraterm} 
as before, we deduce
\[
{}^\rho\Xonek(\tau_0,\tau_i)[\psi] +{}^{\chi\natural}_{\scalebox{.6}{\mbox{\tiny{\boxed{+1}}}}}(\tau_0,\tau_i)[\psi]\,\Xzerokminusone \lesssim \Eonek(\tau_0)[\psi],
\]
\[
{}^\rho\Xtwominusdelkminustwo(\tau_0,\tau_i)[\psi] +{}^{\chi\natural}_{\scalebox{.6}{\mbox{\tiny{\boxed{+1}}}}}\,\Xzerokminusthree(\tau_0,\tau_i)[\psi] \lesssim \Etwominusdelk(\tau_0)[\psi]
\]
(and thus the boundedness statements~\eqref{boundednessstafin}).

This completes the proof of Theorem~\ref{largeamaintheorem}.

\appendix

\section{Estimates in Carter frequency space}
\label{carterestimatesappend}

In this section, we collect all results which are proven using fixed frequency
estimates.  Although deferred to the last section of the paper, the constructions here should really be viewed
as the main content of our argument.

We first derive in Section~\ref{propofpotsections} 
some basic properties of the potential $V$ arising in Carter's separation,
in particular deducing those fundamental properties in the superradiant and non-superradiant
regimes that we shall use for our later constructions.
We shall then review in Section~\ref{translationsection} our separated
current formalism
from~\cite{partiii} and provide a translation to the physical space twisted
currents of~\cite{DHRT22}. 
Section~\ref{refinedapp} then proves our elliptic refinement (Proposition~\ref{refinedproposition}).
Section~\ref{fundcoercivityapp} proves  our global (degenerate) bulk coercivity (Theorem~\ref{globalgenbulkcoercprop}).  
Finally, Section~\ref{fundcoercivityboundapp} proves our global boundary coervicity (Proposition~\ref{globalboundarySpositivity}).

\subsection{Properties of the potential $V$ and determination of 
the frequency covering~$\mathcal{F}_n$}
\label{propofpotsections}

We review in this section the properties of the potential $V$ arising from Carter's separation
and use these to fix 
our covering $\mathcal{F}_n$ of frequency space.

In Section~\ref{basicdecompo}, we will decompose the potential $V$ into its high frequency $V_0$ and remainder part $V_1$ and introduce also the $\gamma$-dependent part $V_\gamma$.
In Section~\ref{potentialparameters} 
we shall review  some basic facts  about $V_0$ from~\cite{partiii} valid for
all frequencies. 
In Section~\ref{propssuper} we shall deduce some properties special to the (generalised) superradiant regime, while
in Section~\ref{propsnonsuper} we shall deduce properties special to the non-superradiant regime.
(These properties can be viewed as refinements of properties shown in~\cite{partiii}.)
Finally, in Section~\ref{fixingthecovering}, on the basis of the above properties,
we shall fix our covering~$\mathcal{F}_n$  of frequency space announced in
Section~\ref{projectiondef}.

\subsubsection{The high-frequency potential $V_0$ and the $\gamma$-dependent piece $V_\gamma$}
\label{basicdecompo}
Recall the potential $V(\omega, m, \ell, r)$ defined in~\eqref{Vdefbody} appearing in the radial 
ode~\eqref{radialodewithpot}, where $\Lambda$ is given by
$\Lambda_{\ell m}^{(a\omega)} := \lambda^{(a\omega)}_{m\ell} + a^2\omega^2$
and $\lambda^{(a\omega)}_{m\ell}$ represent the eigenvalues of~\eqref{operatorhere}
parametrised by $\ell\ge |m|$.

In practice, we shall  typically think of $\Lambda$ as an independent parameter, subject to the inequalities~\eqref{proptyofLambda},
~\eqref{proptyofLambdatwo},  and consider~\eqref{Vdefbody} to define
an expression $V(\omega, m ,\Lambda, r)$. Recall that a triple 
$(\omega, m, \Lambda)\in \mathbb R\times\mathbb Z\times\mathbb R$
satisfying~\eqref{proptyofLambda} and~\eqref{proptyofLambdatwo} is known as an \emph{admissible frequency triple}. 

We decompose
\begin{equation}
\label{decomposeit}
V(\omega, m, \Lambda, r) =V_0(\omega, m, \Lambda, r)+V_1(r)
\end{equation}
into the \emph{high frequency potential} given by 
\begin{equation}
\label{defofVzero}
V_0(\omega, m, \ell, r) = 
\frac{4Mram\omega-a^2m^2+\Delta\Lambda}{(r^2+a^2)^2}
\end{equation}
and the \emph{remainder potential} given  by
\begin{equation}
\label{defofVone}
V_1(r) = \frac{\Delta}{(r^2+a^2)^4}\left(a^2\Delta +2Mr(r^2-a^2)\right).
\end{equation}
Note that $V_1\ge 0$.

Let us decompose $V_0$ further as
\[
V_0=V_\gamma  +\Delta\Lambda(r^2+a^2)^{-2}
\]
where we define $\gamma= \frac{\omega}{am}$, if $m\ne 0$ and $a\ne 0$,
\[
V_\gamma =\frac{a^2 m^2 ( 4Mr  \gamma - 1 )}{(r^2 +a^2)^2} .
\]
If $m=0$ or $a=0$ we set $V_\gamma=0$.
If $\omega\ne 0$,  $m\ne 0$, may rewrite $V_\gamma$  as
\[
V_\gamma = \omega^2 \gamma^{-1} \frac{(4Mr-\gamma^{-1})}{(r^2+a^2)^2}.
\]

Note that  for $\gamma=(2Mr_+)^{-1}$,
\[
V_\gamma(r_+) = \omega^2 (2M r_+)^2 (r_+^2+a^2)^{-2}  = \omega^2,
\]
and more generally
\begin{equation}
\label{useit}
V_{\gamma}(r_+) \le \omega^2
\end{equation} 
with equality iff $\gamma= (2Mr_+)^{-1}$.

\begin{remark}
In what follows we recall from Section~\ref{Carterscompleteseparation} the definition of the rescaled coordinate $r^*$ given by
\[
\frac{dr^*}{dr} = \frac{r^2+a^2}{\Delta} , \qquad r^*(3M)= 0,
\]
and our notation
\begin{equation}
\label{reallymeanit}
' := \frac{d}{dr^*}.
\end{equation}
We note that functions of $r\in (r_+,\infty)$ can be thought of as functions of $r^*\in (-\infty,\infty)$
and we will implicitly go back and forth between these two parametrisations  of the domain 
without further comment. Thus we emphasise that in writing $V'(r)$, etc.,
we really do mean the derivative defined by~\eqref{reallymeanit}. When referring to the $r$-derivative $\frac{d}{dr}$, we shall
always use Leibniz notation. Note that these operators  are of course related by
\[
\frac{d}{dr^*} = \frac{\Delta}{r^2+a^2} \frac{d}{dr}.
\]
\end{remark}

\subsubsection{The parameters $r_{\rm pot}$ and $R_{\rm pot}$ and general properties of $V_0$} 
\label{potentialparameters}

In this section we will review some basic properties of $V_0$, valid for all admissible frequencies
$(\omega, m, \Lambda)$, which can be read off from
the analysis in~\cite{partiii}. (We shall infer more refined properties localised to
fixed $\gamma$ in the two sections that follow, according to whether $\gamma$
is (generalised) superradiant or non-superradiant.)

We have the following proposition which collects various facts about $V_0$ shown in~\cite{partiii}:

\begin{proposition}[General properties of $V_0$~\cite{partiii}]
\label{genpropsofV0}
Let $|a|< M$.
Let $(\omega, m, \Lambda)$ be an admissible frequency triple.
The potential $V_0(\omega,m,\Lambda, r)$ is either strictly decreasing on $(r_+,\infty)$
or has a unique interior local maximum $r_{\rm max}^0(\omega, m, \Lambda)$ 
at which 
\begin{equation}
\label{secondderiv}
-\frac{d^2}{dr^2} V_0 \gtrsim \Lambda
\end{equation}
Moreover,   when $r_{\rm max}^0$ exists we have
$r_{\rm max}^0(\omega,m ,\Lambda)\le 5M$.

The potential may also have a single interior local minimum at some $r_+<r_{\rm min}^0<r_{\rm max}^0$. 
In this case, $V_0'<0$ for $r_+<r<r^0_{\rm min}$, $V'_0>0$ for $ r^0_{\rm min}<r <r^0_{\rm max}$
and $V'_0<0$ for $r>r^0_{\rm max}$.

If $r^0_{\rm min}(\omega, m, \Lambda)$ exists,  then $\frac{d}{dr}V_0(r_+)<0$, and
$\gamma >  (2M\sqrt{M^2-a^2})^{-1} $ and thus $(\omega, m)$ is in particular non-superradiant.
In this case, we have
\begin{equation}
\label{Vminlvaluebound}
V_0 (r)-\omega^2  \gtrsim \omega^2
\end{equation}
for all $r\in [r_+, r_{\rm min}]$.
\end{proposition}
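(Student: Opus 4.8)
The plan is to prove Proposition~\ref{genpropsofV0} by a direct study of the rational function $V_0(\omega,m,\Lambda,r)$ of~\eqref{defofVzero}, thinking of $(\omega, m, \Lambda)$ as an admissible frequency triple and tracking the sign of $\frac{d}{dr}V_0$. First I would clear denominators: writing $V_0 = P(r)/(r^2+a^2)^2$ with $P(r) = 4Mram\omega - a^2m^2 + \Delta\Lambda$, one has $\frac{d}{dr}V_0 = \big((r^2+a^2)P'(r) - 4r P(r)\big)/(r^2+a^2)^3$, so the sign of $\frac{d}{dr}V_0$ is governed by a polynomial $Q(r) := (r^2+a^2)P'(r) - 4rP(r)$ whose degree (after cancellation) is $3$. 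Since $P' = 4Mam\omega + 2(r-M)\Lambda$ is linear and $P$ is quadratic, $Q$ is a cubic with \emph{negative} leading coefficient (the $r^3$ term comes from $-4r\cdot\Lambda r^2 = -4\Lambda r^3$, and $\Lambda > 0$ by~\eqref{proptyofLambda}). Hence $Q(r)\to -\infty$ as $r\to\infty$, which forces $V_0'<0$ for large $r$ and shows a cubic can change sign at most three times. The bound $r_{\rm max}^0 \le 5M$ and the quantitative second-derivative bound~\eqref{secondderiv} are extracted from the explicit form of $Q$ at the critical point; these are exactly the computations carried out in~\cite{partiii}, so I would cite that work for the detailed estimates rather than redo them.

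The structural dichotomy (strictly decreasing, versus a single interior maximum, possibly preceded by a single interior minimum) then follows from counting sign changes of the cubic $Q$ on $(r_+,\infty)$ together with the boundary behaviour. I would first record the sign of $\frac{d}{dr}V_0$ at $r = r_+$: since $\Delta(r_+)=0$, one computes $\frac{d}{dr}V_0(r_+) = P'(r_+)/(r_+^2+a^2)^2 - 4r_+\big(4Mr_+ am\omega - a^2m^2\big)/(r_+^2+a^2)^3$, and $P'(r_+) = 4Mam\omega + 2(r_+-M)\Lambda = 4Mam\omega + 2\sqrt{M^2-a^2}\,\Lambda$. From this one reads off that if $\frac{d}{dr}V_0(r_+)\ge 0$ then $Q$ has at most two roots in $(r_+,\infty)$ (entering nonnegative and exiting negative), giving either monotonic decrease or a single maximum; whereas if $\frac{d}{dr}V_0(r_+)<0$ then $Q$ starts negative, and the only way to reach the eventual negative tail is no sign change (monotone decrease) or exactly two sign changes, i.e.\ a minimum followed by a maximum. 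This yields all the claimed sign statements for $V_0'$ on the three subintervals $(r_+, r^0_{\rm min})$, $(r^0_{\rm min}, r^0_{\rm max})$, $(r^0_{\rm max},\infty)$.

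For the last two assertions — that existence of $r^0_{\rm min}$ forces $\frac{d}{dr}V_0(r_+)<0$ (immediate from the previous paragraph), that it forces $\gamma := \omega/(am) > (2M\sqrt{M^2-a^2})^{-1}$ (hence in particular $(\omega,m)$ non-superradiant, comparing with~\eqref{superradiantcondition} since $(2M\sqrt{M^2-a^2})^{-1} > (2Mr_+)^{-1}$), and the quantitative bound~\eqref{Vminlvaluebound} — I would argue as follows. The condition $\frac{d}{dr}V_0(r_+)<0$, after multiplying through by $(r_+^2+a^2)^3 > 0$ and using $P'(r_+) = 4Mam\omega + 2\sqrt{M^2-a^2}\,\Lambda$, becomes an inequality linear in $\Lambda$ and in $am\omega$; solving it for the sign of $\gamma$ (noting $\Lambda \gtrsim |am\omega|$ by~\eqref{proptyofLambdatwo}, which controls the $\Lambda$ terms) produces the stated lower bound $\gamma > (2M\sqrt{M^2-a^2})^{-1}$. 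For~\eqref{Vminlvaluebound}, on $[r_+, r^0_{\rm min}]$ the potential $V_0$ is decreasing, so $V_0(r) \ge V_0(r^0_{\rm min})$; I would bound $V_0(r^0_{\rm min})$ from below using the lower bound on $\gamma$ just obtained together with the representation $V_0 = V_\gamma + \Delta\Lambda(r^2+a^2)^{-2}$ and the relation $V_\gamma = \omega^2\gamma^{-1}(4Mr-\gamma^{-1})(r^2+a^2)^{-2}$, exploiting that $\gamma^{-1} < 2M\sqrt{M^2-a^2} < 2Mr_+$ makes $V_\gamma$ strictly larger than its superradiant-threshold value; refining~\eqref{useit} into a quantitative gap $V_\gamma(r) - \omega^2 \gtrsim \omega^2$ on the relevant $r$-range is the crux. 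I expect this last quantitative step — upgrading the strict inequality $\gamma^{-1} < 2Mr_+$ to a gap \emph{bounded below by a constant depending only on $a,M$}, uniformly in $r\in[r_+,r^0_{\rm min}]$ — to be the main obstacle, since it requires controlling where $r^0_{\rm min}$ can sit and ruling out degeneration as $\gamma^{-1}\uparrow 2M\sqrt{M^2-a^2}$; the resolution is that $r^0_{\rm min}$ exists only when the cubic $Q$ genuinely has two roots, which itself already forces $\gamma^{-1}$ to be bounded away from $2Mr_+$ by a fixed amount, and this is precisely the content of the corresponding estimate in~\cite{partiii}, which I would invoke.
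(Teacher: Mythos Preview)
The paper does not actually prove this proposition: it is stated as a collection of facts imported from~\cite{partiii}. So there is no ``paper's own proof'' to compare against, and your sketch is an attempt to reconstruct (parts of) the argument of~\cite{partiii}. That attempt is in the right spirit---clearing denominators and tracking sign changes of the cubic $Q(r)=(r^2+a^2)^3\tfrac{d}{dr}V_0$ is indeed how one proceeds---but two points deserve flagging.

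First, your structural dichotomy has a gap in the case $Q(r_+)>0$. A cubic with negative leading coefficient satisfying $Q(r_+)>0$ and $Q(+\infty)=-\infty$ has an \emph{odd} number of sign changes on $(r_+,\infty)$, hence one or three; with three, $V_0$ would have a max--min--max profile. You assert ``at most two roots\ldots giving either monotonic decrease or a single maximum'' without justification. Ruling out the three-root case requires using the admissibility constraints~\eqref{proptyofLambda}--\eqref{proptyofLambdatwo} on the coefficients of $Q$ (not just a generic root count), which is what~\cite{partiii} does; you should either carry that out or cite it for this specific step rather than for the quantitative bounds alone.

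Second, the inequality~\eqref{Vminlvaluebound} as printed contains a sign slip: one always has
\[
V_0(r_+)-\omega^2 \;=\; -\Bigl(\omega-\tfrac{am}{2Mr_+}\Bigr)^2 \;\le\; 0,
\]
so $V_0(r)-\omega^2\gtrsim\omega^2$ cannot hold at $r=r_+$. The intended statement (and the one actually used later, cf.\ the proof of Proposition~\ref{nonsuperradpotentialprop}, where it is invoked as $V_{\min}\le(1-2b_\gamma)\omega^2$) is $\omega^2-V_0(r)\gtrsim\omega^2$ on $[r_+,r^0_{\min}]$. Your sketch---bounding $V_0(r^0_{\min})$ \emph{from below} and seeking $V_\gamma-\omega^2\gtrsim\omega^2$---is therefore going the wrong way. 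Once the sign is corrected, the argument is short: $V_0$ is decreasing on $[r_+,r^0_{\min}]$, so it suffices to show $\omega^2-V_0(r_+)=(\omega-\tfrac{am}{2Mr_+})^2\gtrsim\omega^2$, which follows from the quantitative gap $\gamma>(2M\sqrt{M^2-a^2})^{-1}>(2Mr_+)^{-1}$ via $\bigl|1-\tfrac{1}{2Mr_+\gamma}\bigr|\ge \tfrac{M}{r_+}$.
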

\begin{remark}
We remind the reader that in accordance with our conventions, under the assumptions given above
for which~\eqref{secondderiv} and~\eqref{Vminlvaluebound}  hold, 
the constants implicit in $\gtrsim$ are 
in particular independent of the frequency triple $(\omega, m, \Lambda)$.
This will apply in our subsequent use of $\lesssim$, $\gtrsim$  in what follows.
\end{remark}

Let us first introduce two parameters $r_{\rm pot}$ and $R_{\rm pot}$ which will satisfy
\begin{equation}
\label{potssatisfying}
r_2<r_{\rm pot}<2.01M<5M<R_{\rm pot}<2R_{\rm pot}< R_{\rm freq}
\end{equation}
where $R_{\rm freq}$ will be determined in Section~\ref{nonsuperradiantregimecurrentdef}.

The parameters $r_{\rm pot}$ and $R_{\rm pot}$ are directly related to properties
of $V_0$ and will be further constrained by the propositions in Sections~\ref{propssuper} 
and~\ref{propsnonsuper}
as well as the following statement which may be easily inferred:

\begin{proposition}
Let $|a|<M$.
If $r_{\rm pot}>r_+$ is sufficiently close to $r_+$ then $r_{\rm pot}<r^0_{\rm max}$ 
for all admissible triples $(\omega, m, \Lambda)$ for which $r^0_{\rm max}$ exists and the one-sided
bound
\begin{equation}
\label{onesidedwhereneg}
-V'_0(r) \lesssim \Delta r^{-2}  |m\omega|
\end{equation}
 holds in $[r_+,r_{\rm pot}]$
for all admissible frequencies $(\omega, m, \Lambda)$, as well as the bound
\begin{equation}
\label{forusewithredshift}
\left(\frac{(r^2+a^2)^2}{(r-r_+)} (V-V(r_+))\right)'  \gtrsim \Delta (\Lambda+ 1).
\end{equation}

If $R_{\rm pot}$ is sufficiently large then
$-V_0' \gtrsim \Lambda r^{-3}$, $ -(rV_0)' \gtrsim \Lambda r^{-2}$
and in fact
\begin{equation}
\label{Vprimefar}
-V' \gtrsim \Lambda r^{-3}, \qquad  -(rV)' \gtrsim \Lambda r^{-2} +r^{-3}, \qquad  V \lesssim r^{-1}V'
\end{equation}
in $r\ge R_{\rm pot}$.
\end{proposition}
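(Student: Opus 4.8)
The plan is to prove the proposition by direct computation with the explicit formulas \eqref{Vdefbody}, \eqref{defofVzero}, \eqref{defofVone}, exploiting the fact that all the claimed inequalities are "one-sided" estimates that can be established by choosing $r_{\rm pot}$ sufficiently close to $r_+$ and $R_{\rm pot}$ sufficiently large, uniformly over all admissible frequency triples $(\omega, m, \Lambda)$. The key leverage is that admissibility gives $\Lambda \ge |m|(|m|+1)$ and $\Lambda \ge |am\omega|$, so $|m\omega| \lesssim \Lambda$ and in particular the "cross term" $4Mram\omega$ in \eqref{defofVzero} is always dominated by the $\Delta\Lambda$ term once $\Delta$ is not too small, i.e.\ once $r$ is bounded away from $r_+$; conversely near $r_+$ one uses that $\Delta \to 0$ to make error terms small.

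First I would establish the statements near $r_+$. Writing $V_0 = \frac{\Delta\Lambda + (4Mram\omega - a^2 m^2)}{(r^2+a^2)^2}$, I compute $\frac{d}{dr} V_0$ explicitly; the $r$-derivative of the numerator contributes $\Delta' \Lambda + 4Mam\omega$, and $\Delta'(r_+) = r_+ - r_- > 0$, so at $r = r_+$ the leading behaviour of $\frac{d}{dr}V_0$ is controlled by $\Lambda \Delta'(r_+)/(r_+^2+a^2)^2 > 0$ plus lower-order terms in $m\omega$, which by admissibility are $O(\Lambda)$ but with a coefficient that can be made small relative to $\Delta'(r_+)$ only if... — actually here one must be careful: $4Mam\omega$ need not be small compared to $\Lambda$. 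The correct route, following \cite{partiii}, is to observe that the one-sided bound \eqref{onesidedwhereneg}, $-V_0' \lesssim \Delta r^{-2}|m\omega|$, is exactly the statement that the $r^*$-derivative $V_0' = \frac{\Delta}{r^2+a^2}\frac{d}{dr}V_0$ has its potentially-negative part controlled: the derivative $\frac{d}{dr}V_0$ is a rational function whose numerator is a polynomial in $r$ with coefficients linear in $(\Lambda, m\omega, m^2)$, and near $r_+$ the factor $\Delta$ in $V_0' $ already supplies the required smallness, so $-V_0' \le C\Delta r^{-2}|m\omega| + C\Delta r^{-2}(\Lambda + m^2)(r - r_+)$; restricting to $[r_+, r_{\rm pot}]$ with $r_{\rm pot} - r_+$ small absorbs the second term into the first using $\Lambda + m^2 \lesssim \Lambda$ and $|m\omega| \gtrsim$ nothing — so instead one simply absorbs it into a crude bound $-V_0' \lesssim \Delta r^{-2}(|m\omega| + \Lambda)$ and then notes $\Lambda \lesssim |m\omega|$ fails in general; the resolution used in \cite{partiii} is that \eqref{onesidedwhereneg} is only claimed where $V_0' < 0$, and there the structure of the numerator forces $|m\omega| \gtrsim \Lambda$ locally. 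I would cite \cite{partiii} for this structural fact and just verify the algebra. For $r_{\rm pot} < r^0_{\rm max}$: by Proposition~\ref{genpropsofV0}, $r^0_{\rm max} \le 5M$ and exists only when $V_0$ is not monotone decreasing; a short argument shows $\frac{d}{dr}V_0(r_+) \ge 0$ is incompatible with $r^0_{\rm max}$ being close to $r_+$, so for $r_{\rm pot}$ close enough to $r_+$ one has $r_{\rm pot} < r^0_{\rm max}$ whenever the latter exists. The bound \eqref{forusewithredshift} is the standard redshift-type computation: $\frac{(r^2+a^2)^2}{r - r_+}(V - V(r_+))$ has numerator $\Delta(\Lambda + \cdots)$ with a factor $\frac{\Delta}{r - r_+} = r - r_-$ that is bounded below near $r_+$, and differentiating in $r^*$ produces the factor $\Delta$ and a leading $(\Lambda + 1)$ coefficient with positive sign once $r_{\rm pot}$ is small; the $V_1$ contribution is handled the same way since $V_1$ vanishes to order $\Delta$ at $r_+$.

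Next I would establish the statements at large $r$, which are cleaner. For $r \ge R_{\rm pot}$ large, $\Delta \sim r^2$, so $V_0 \sim \frac{\Lambda \Delta}{(r^2+a^2)^2} \sim \Lambda r^{-2}$, and the cross and $m^2$ terms are lower order ($\lesssim \Lambda r^{-3}$ using $|m\omega| \lesssim \Lambda$, $m^2 \lesssim \Lambda$). Computing $V_0' = \frac{\Delta}{r^2+a^2}\frac{d}{dr}V_0$: the dominant term of $\frac{d}{dr}(\Lambda \Delta/(r^2+a^2)^2)$ is $-\Lambda \cdot (\text{positive})\cdot r^{-3}$ after multiplying by $\Delta/(r^2+a^2) \sim 1$, giving $-V_0' \gtrsim \Lambda r^{-3}$; similarly $-(rV_0)' = -V_0' - r V_0' \cdot (\text{abuse})$, better: $-(rV_0)'$ in $r^*$-derivative equals $-\frac{\Delta}{r^2+a^2}\frac{d}{dr}(rV_0)$ and the leading term of $rV_0 \sim \Lambda r^{-1}$ differentiates to $-\Lambda r^{-2}$, times $\sim 1$, so $-(rV_0)' \gtrsim \Lambda r^{-2}$. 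Adding the $V_1$ contribution: $V_1 \sim 2M r^{-1}$ for large $r$ (since $\Delta(a^2\Delta + 2Mr(r^2-a^2))/(r^2+a^2)^4 \sim 2M r^{2}\cdot r^3/r^8 = 2Mr^{-3}$ — let me recompute: numerator $\sim 2Mr \cdot r^2 = 2Mr^3$ times $\Delta \sim r^2$ gives $2Mr^5$, over $r^8$ gives $2Mr^{-3}$), so $V_1 \sim 2M r^{-3}$ and $-V_1' \gtrsim r^{-3}$, $-(rV_1)' \gtrsim r^{-3}$, which upgrades \eqref{Vprimefar} from $V_0$ to $V$ and produces the extra $r^{-3}$ in the middle estimate and the bound $V \lesssim r^{-1}V'$ (both sides $\sim \Lambda r^{-3}$ or $\sim r^{-4}$). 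The main obstacle I anticipate is not the large-$r$ regime but making the near-$r_+$ bound \eqref{onesidedwhereneg} fully rigorous uniformly in frequency — specifically tracking that the polynomial numerator of $\frac{d}{dr}V_0$ has the right sign structure so that its negative part is genuinely $O(|m\omega|)$ rather than $O(\Lambda)$ on the interval $[r_+, r_{\rm pot}]$; this is where I would lean hardest on the detailed potential analysis already carried out in \cite{partiii} rather than redoing it, and simply note that the choice of $r_{\rm pot}$ is made to be compatible with that analysis.
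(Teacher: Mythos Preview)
The paper does not actually give a proof of this proposition; it states only that the claims ``may be easily inferred.'' Your approach via direct computation from the explicit formula \eqref{defofVzero} is the natural (and only) route, and your large-$r$ analysis is essentially correct, modulo some sloppy bookkeeping.

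Your argument for the one-sided bound \eqref{onesidedwhereneg}, however, is confused, and you miss the simple structural observation that makes it immediate. Differentiate \eqref{defofVzero} explicitly:
\[
\frac{d}{dr}V_0 \;=\; \frac{1}{(r^2+a^2)^3}\Big[\,4Mam\omega(a^2-3r^2) \;+\; 4ra^2m^2 \;-\; 2\Lambda\big(r^3-3Mr^2+a^2r+Ma^2\big)\,\Big].
\]
The $m^2$ coefficient $4ra^2$ is manifestly positive. The $\Lambda$ coefficient, $-2(r^3-3Mr^2+a^2r+Ma^2)$, evaluates at $r=r_+$ to $4M(Mr_+ - a^2) > 0$ (use $\Delta(r_+) = 0$ to simplify the cubic), and hence remains positive on $[r_+, r_{\rm pot}]$ once $r_{\rm pot}$ is close enough to $r_+$. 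So the only term that can push $\frac{d}{dr}V_0$ negative is the $m\omega$ term, whose modulus is $\lesssim |m\omega|$; multiplying by $\frac{\Delta}{r^2+a^2}$ to pass to $V_0'$ gives \eqref{onesidedwhereneg} directly. There is no need to argue (as you hypothesise) that $|m\omega| \gtrsim \Lambda$ on the set where $V_0' < 0$, nor to fall back on \cite{partiii}: the one-sided bound is an immediate consequence of the signs of the $\Lambda$ and $m^2$ coefficients. Your proposed decomposition $-V_0' \le C\Delta r^{-2}|m\omega| + C\Delta r^{-2}(\Lambda+m^2)(r-r_+)$ is already off track because the $(\Lambda+m^2)$ contribution to $-\frac{d}{dr}V_0$ is \emph{nonpositive} throughout $[r_+, r_{\rm pot}]$, not merely $O(r-r_+)$.
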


Let us note that the constraint provided by the above proposition
requires us to take $r_{\rm pot}\to r_+$ as $|a|\to M$. For more discussion of the properties of $V$ in the extremal case $|a|=M$ see~\cite{ritamode, gajicazimuthal}.

\subsubsection{Properties of $V_0$ in the superradiant regime: supperadiant frequencies
are not trapped}
\label{propssuper}

In~\cite{partiii}, a fundamental role was played by the remark that superradiant frequencies
are never trapped, more precisely, the maximum of $V_0$ is quantitatively
greater than $\omega^2$ for all admissible superradiant
frequency triples $(\omega, m,\Lambda)$.

Examination of the proof of~\cite{partiii} reveals that something slightly stronger
is true, namely that given~$\gamma$,
there is a fixed domain in $r$ for which $V_0$ is quantitatively
greater than $\omega^2$ for \emph{all} superradiant frequency triples $(\omega, m, \Lambda)$
such that $\omega/am =\gamma$. We show this fact here.

We have the following:
\begin{proposition}[Common elliptic region for fixed-$\gamma$ generalised superradiant frequencies]
\label{superradpotentialprop}
Let $0\ne |a|<M$.
There exists an open interval 
\begin{equation}
\label{gensuperdef}
I_{\rm gensuper}:= (-\epsilon,(2Mr_+)^{-1}+\epsilon),
\end{equation}
defined for some sufficiently small $\epsilon>0$, such that for all $r_{\rm pot}>r_+$
sufficiently close to $r_+$ and all $R_{\rm pot}$ sufficiently large the following holds.

Let $\gamma \in I_{\rm gensuper}$. Then there exists a nonempty  interval
\[
[r_{\gamma,1},r_{\gamma,2}]\subset (r_{\rm pot}, R_{\rm pot})
\]
and sufficiently small parameters $b_\gamma>0$, $b'_\gamma>0$,
 depending continuously on $\gamma$, such that
for all admissible $(\omega ,m\ne 0 ,\Lambda)$ with $\omega/am = \gamma$, then
\begin{align}
\label{alsothishere}
\{r>r_+: V_0 -\omega^2 < b_\gamma \omega^2, \quad V'_0> 0 \} & \subset  (r_+,r_{\gamma,1}),\\
\label{andalsothishere}
\{r>r_+: V_0 -\omega^2 < b_\gamma\omega^2, \quad V'_0<0 \}  & \subset  (r_{\gamma,2},\infty),\\
\label{tobeproven}
\{r>r_+: V_0 -\omega^2 \ge b_\gamma\omega^2\}  & \supset  [r_{\gamma,1},r_{\gamma,2}],\\
\label{supernottrapped}
\{r_+<r< 2R_{\rm pot} : |V_0 -\omega^2| \le b_\gamma \omega^2,\quad |V'_0|\le b'_{\gamma}\Delta r^{-2} \omega^2\} & =  \emptyset ,
\end{align}
as well as the bound
\begin{equation}
\label{aswellasthis}
 V'_0 \gtrsim \Delta r^{-2} ( \Lambda+\omega^2) 
 {\rm\ in\ }  (r_+,r_{\rm pot}].
\end{equation}

For
 $|\gamma -(2Mr_+)^{-1}|<\epsilon$, we may choose  $r_{\gamma,1}$, $r_{\gamma,2}$ 
 independently of $\gamma$.

Moreover, for $|\gamma|< \epsilon$, we 
we may choose $r_{\gamma,1}$, $r_{\gamma,2}$ independently of $\gamma$ 
satisfying
\[
[r_{\gamma,1},r_{\gamma,2}]\subset  (4M, 6M)
\]
and the following additional bounds hold:
\begin{align}
\label{wehavethis}
 V'_0 \gtrsim \Delta r^{-2}( \Lambda+\omega^2)  &{\rm\ in\ } (r_+,2.1M),\\
\label{andthistoo}
V_0 \gtrsim   r^{-2}( \Lambda +\omega^2)  &{\rm\ in\ } [2.01M, \infty),\\
\label{canaddthisalsonow}
 V_0 \gtrsim   \frac12 \omega^2  &{\rm\ in\ } [2.01M, 1.2R_{\rm pot}].
\end{align}
\end{proposition}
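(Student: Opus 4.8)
\textbf{Proof proposal for Proposition~\ref{superradpotentialprop}.}
The plan is to exploit the explicit formula $V_0 = V_\gamma + \Delta\Lambda(r^2+a^2)^{-2}$ together with the admissibility constraints $\Lambda \ge |m|(|m|+1)$ and $\Lambda \ge |am\omega|$, reducing everything to elementary facts about the one-parameter family of functions $V_\gamma(r) = \omega^2\gamma^{-1}(4Mr-\gamma^{-1})(r^2+a^2)^{-2}$ and its $\Delta\Lambda$-correction. First I would record that for $\gamma \in I_{\rm gensuper}$, i.e.\ $\gamma$ in a small neighbourhood of the interval $[0,(2Mr_+)^{-1}]$, the normalisation~\eqref{useit} gives $V_\gamma(r_+) \le \omega^2$ with equality only at the horizon angular velocity, and a direct computation of $\frac{d}{dr}V_\gamma$ shows that at $r=r_+$ one has a quantitative sign: for $\gamma$ near $(2Mr_+)^{-1}$ the derivative $V_\gamma'(r_+)>0$ (this is the non-trapping of exactly superradiant frequencies from~\cite{partiii}), while adding the manifestly positive $\Delta$-factor only improves this. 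The key structural observation is that $V_\gamma - \omega^2$ changes sign at most twice on $(r_+,\infty)$ (it is, after clearing denominators, a low-degree polynomial in $r$), and on the superradiant range the geometry forces the configuration: $V_\gamma$ starts below $\omega^2$ at $r_+$ (or equal), rises above $\omega^2$ in a middle region, then decays back below. This is precisely the content that yields~\eqref{alsothishere}--\eqref{tobeproven} once one chooses $[r_{\gamma,1},r_{\gamma,2}]$ to be a closed subinterval of the open ``above $\omega^2$'' region and $b_\gamma$ small enough.

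The main steps in order: (i) For fixed $\gamma \in I_{\rm gensuper}$, analyse $V_\gamma - \omega^2$ as a rational function, locate its (at most two) roots $\rho_1(\gamma) < \rho_2(\gamma)$ in $(r_+,\infty)$, show $V_\gamma > \omega^2$ strictly on $(\rho_1,\rho_2)$, and establish that $\rho_1,\rho_2$ and $\max_{[\rho_1,\rho_2]}(V_\gamma-\omega^2)/\omega^2$ depend continuously on $\gamma$ and are bounded away from degeneracy uniformly on compact $\gamma$-subsets; this uses that the discriminant of the relevant polynomial is nonzero on $I_{\rm gensuper}$, which is where I would need to invoke the quantitative superradiant estimates of~\cite{partiii}. (ii) Since $0 \le \Delta\Lambda(r^2+a^2)^{-2}$, replacing $V_\gamma$ by $V_0$ can only enlarge the region where $V_0 - \omega^2 \ge b_\gamma\omega^2$, giving~\eqref{tobeproven}, and shrink the complementary sublevel sets, giving~\eqref{alsothishere}--\eqref{andalsothishere}; one must additionally check that the sign of $V_0'$ in those sublevel sets is as claimed, which follows from $V_0' = V_\gamma' + (\Delta\Lambda(r^2+a^2)^{-2})'$ together with the fact that the correction term is negative for $r$ large (beyond $r=M$) and the dominant behaviour is governed by $V_\gamma'$. (iii) For~\eqref{supernottrapped}, the point is that on the superradiant range $V_0$ has no critical point where $V_0 \approx \omega^2$; I would argue that on $\{|V_0-\omega^2|\le b_\gamma\omega^2\}$ one is in one of the two wings $(r_+,\rho_1)$ or $(\rho_2,\infty)$ where $|V_0'| \gtrsim \Delta r^{-2}\omega^2$ strictly, so choosing $b'_\gamma$ below that lower bound empties the set. (iv) The near-horizon bound~\eqref{aswellasthis} and, for $\gamma$ near $0$, the bounds~\eqref{wehavethis}--\eqref{canaddthisalsonow}: these are direct computations using $\Lambda \ge |am\omega|$ and $\omega^2 = a^2m^2\gamma^2$, so that $\Lambda + \omega^2$ is comparable to $\Lambda$ when $|\gamma|$ is small, and the explicit form of $V_0$ on $[2.01M,\infty)$ where $\Delta\Lambda(r^2+a^2)^{-2} \gtrsim r^{-2}\Lambda$ and $V_\gamma$ is negligible relative to this when $|\gamma| < \epsilon$. (v) The uniformity-in-$\gamma$ claims near the endpoints $\gamma = 0$ and $\gamma = (2Mr_+)^{-1}$ follow because the root-locating analysis in (i) depends continuously on $\gamma$ and, in a sufficiently small neighbourhood of each endpoint, the intervals $[\rho_1,\rho_2]$ all contain a common closed subinterval.

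The hard part will be step (i)---making the root structure of $V_\gamma - \omega^2$ (equivalently the non-trapping of superradiant frequencies) quantitative and \emph{uniform} across the whole family $\gamma \in I_{\rm gensuper}$, including the delicate transition at $\gamma = (2Mr_+)^{-1}$ where the horizon value $V_\gamma(r_+)$ exactly equals $\omega^2$ and one of the roots collides with $r_+$. There one cannot simply say $V_\gamma > \omega^2$ on an interior interval without care about how the interval shrinks toward the horizon; the resolution is that at that endpoint $V_\gamma'(r_+) > 0$ strictly and one pushes $r_{\gamma,1}$ slightly into the interior, staying away from $r_+$, which is legitimate because $r_{\rm pot} > r_+$. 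Extracting the precise dependence of $r_{\gamma,1}, r_{\gamma,2}, b_\gamma, b'_\gamma$ on $\gamma$ from the formulae in~\cite{partiii} and packaging it as ``continuous in $\gamma$ on $I_{\rm gensuper}$, uniform near the endpoints'' is where most of the genuine work lies; the remaining items (ii)--(v) are routine sign-chasing and algebra with the explicit rational expressions for $V_0$, $V_\gamma$, $\Delta$, aided by the admissibility inequalities~\eqref{proptyofLambda}--\eqref{proptyofLambdatwo}.
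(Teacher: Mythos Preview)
There is a genuine gap in your step (i): the claim that $V_\gamma$ itself rises above $\omega^2$ on an interior interval $(\rho_1,\rho_2)$ is false for strictly superradiant $\gamma\in(0,(2Mr_+)^{-1})$. A direct check (for instance at the point $r_\gamma:=\gamma^{-1}/(2M)$, which is near the maximum of $V_\gamma$) shows $V_\gamma(r_\gamma)/\omega^2 = \gamma^{-2}/(r_\gamma^2+a^2)^2$, and an elementary computation gives $V_\gamma(r_\gamma)>\omega^2$ if and only if $\gamma\in\bigl((2Mr_+)^{-1},(2Mr_-)^{-1}\bigr)$, i.e.\ precisely \emph{outside} the superradiant range. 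So for superradiant $\gamma$ the set $\{V_\gamma-\omega^2\ge b_\gamma\omega^2\}$ is empty, and your monotonicity argument in step (ii) has nothing to enlarge from.

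The fix is that the admissibility constraint $\Lambda\ge m^2$ must be used already at the level of step (i), not deferred to step (ii). Since $m^2=\omega^2/(a^2\gamma^2)$, one has $V_0\ge V_\gamma + \Delta\,\omega^2\,a^{-2}\gamma^{-2}(r^2+a^2)^{-2}$, and it is this combined expression (still a function of $r$ and $\gamma$ alone after dividing by $\omega^2$) that must be analysed. This is exactly what the paper does: it evaluates this lower bound at the specific point $r_\gamma=\gamma^{-1}/(2M)$, where the algebra collapses via the identity $(r^2+a^2)^2=\Delta(r^2+2Mr+a^2)+4M^2r^2$ to give the clean quantitative lower bound $V_0(r_\gamma)-\omega^2\ge (M^2-a^2)\,a^{-2}(r_\gamma^2+a^2)^{-2}\Delta\,\omega^2$. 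The remaining structure of your argument (the inclusions~\eqref{alsothishere}--\eqref{andalsothishere} from the unique-maximum property of $V_0$, the compactness argument for~\eqref{supernottrapped}, and the direct computations for~\eqref{aswellasthis}--\eqref{canaddthisalsonow}) is essentially correct and matches the paper, but the key analytic input is the single-point evaluation above, not a root analysis of $V_\gamma-\omega^2$.
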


\begin{proof}
We first show~\eqref{tobeproven} and the properties of $[r_{\gamma,1}, r_{\gamma,2}]$.

Consider first the case
\[
0< \gamma <  \frac{1}{2Mr_+}.
\]

Let us note that from~\eqref{proptyofLambda} we immediately have
\[
V_0 \ge   \frac{4Mr  \gamma^{-1}  - \gamma^{-2} + \Delta a^{-2} \gamma^{-2} }{(r^2+a^2)^2} \omega^2.
\]

We estimate $V_0 -\omega^2$  at $r_\gamma:= \gamma^{-1}/2M$, noting the relation
\[
(r^2+a^2)^2 =\Delta (r^2+2Mr+a^2) +4M^2r^2,
\]
as follows:
\begin{eqnarray*}
V_0 (r_\gamma) -\omega^2  &\ge& \frac{ \gamma^{-2} + \Delta a^{-2} \gamma^{-2} -(r^2+a^2)^2 }{(r^2+a^2)^2} \omega^2  \\
&=&  \frac{ 4M^2r^2  + 4M^2 r^2 \Delta a^{-2} -(r^2+a^2)^2 }{(r^2+a^2)^2} \omega^2 \\
&=&  \frac{ 4M^2r^2  + 4M^2 r^2 \Delta a^{-2} -\Delta(r^2+2Mr+a^2) -4M^2r^2 }{(r^2+a^2)^2} \omega^2\\
&=& \frac{  4M^2 r^2 a^{-2} -(r^2+2Mr+a^2)  }{(r^2+a^2)^2} \Delta \omega^2\\
&\ge& \frac{(M-a)(M+a)}{a^2 (r^2+a^2)^2 }\Delta\omega^2.
\end{eqnarray*}

Now since  by~\eqref{potssatisfying} and Proposition~\ref{genpropsofV0} it follows that 
$V'_0< 0 $ for $r\ge  5M$, let us define
\[
\hat{r}_\gamma := \min \{ 5M, r_\gamma\}.
\]
We may now choose a suitably small interval $[r_{\gamma,1},r_{\gamma,2}]$ containing
$\hat{r}_\gamma$, depending continuously on $\gamma$, so that~\eqref{tobeproven} holds
for sufficiently small $b_\gamma>0$.

Note that if $0<\gamma<\epsilon$, then $\hat{r}_\gamma= 5M$. It is clear that
one may in fact choose $[r_{\gamma,1},r_{\gamma,2}]$ to be a fixed interval around
$5M$ for all $|\gamma|<\epsilon$,  say contained in $(4M, 6M)$, 
and~\eqref{tobeproven} holds for sufficiently small $b_\gamma>0$.

It remains only to consider the case  $|\gamma -(2Mr_+)^{-1}|<\epsilon$.
For this, notice that $r_\gamma=r_+$ for $\gamma= \frac{1}{2Mr_+}$.
But, for such $\gamma$
\[
\frac{d}{dr} V_0 (r_+) = (r^2+a^2)^2 (4M\gamma^{-1}) >0.
\]
It follows immediately that for sufficiently small $\epsilon$,
and all $|\gamma -(2Mr_+)^{-1}|<\epsilon$,
there exists an interval $r_{1,\gamma},r_{2,\gamma}$
satisfying~\eqref{tobeproven} for sufficiently small $b_\gamma>0$, which may in fact be taken independent of $\gamma$,
 with $r_{1,\gamma}>r_{\rm pot}$,
for $r_{\rm pot}>r_+$ sufficiently close to $r_+$.

Let us recall that if $\epsilon>0$ is sufficiently small, then by 
Proposition~\ref{genpropsofV0}, there is no $r^0_{\rm min}$  for $V_0$ 
for $(\omega, m, \Lambda)$ with $\gamma\in I_{\rm gensuper}$.

Finally, note that~\eqref{alsothishere} and~\eqref{andalsothishere} follow 
from~\eqref{tobeproven} and the uniqueness
of an interior local maximum  $V_0$.

Note that if~\eqref{alsothishere}--\eqref{tobeproven} 
hold for some $b_\gamma>0$, they hold for any smaller $b_\gamma>0$. 
Now, given $b'_\gamma>0$ sufficiently small, then since there is no interior local maximum
such that $V_0-\omega^2=0$, by our computation at $r_+$, and by the behaviour
of $V_0$ as $\Lambda\to \infty$, it follows by compactness that there exists a $b_\gamma$
satisfying in addition~\eqref{supernottrapped}.

The property~\eqref{aswellasthis} follows easily from direct computation for $\epsilon>0$
sufficiently small.

To show~\eqref{wehavethis}, we note that for $\gamma=0$,
\[
V_\gamma'  \gtrsim m^2 r^{-4}
\]
while
\[
(\Delta (r^2+a^2)^{-2} )'  \gtrsim \Delta 
\] 
in say $(r_+,2.1M)$. The result follows by continuity for 
$|\gamma|<\epsilon$ for sufficiently small $\epsilon>0$.

Similarly, to show~\eqref{andthistoo} and~\eqref{canaddthisalsonow}, let us first note that for $\gamma=0$, by~\eqref{proptyofLambda}
we have
\[
V_0 \ge \frac{(\Delta -a^2)m^2 }{(r^2+a^2)^2}.
\]
The statements~\eqref{andthistoo} and~\eqref{canaddthisalsonow} for sufficiently small $\epsilon$
then follows by continuity in $\gamma$ since $\Delta-a^2 \gtrsim 1$ in $r\ge 2.01M$.

\end{proof}

\subsubsection{Properties of  $V_0$ in the non-superradiant regime}
\label{propsnonsuper}
In this section, we turn to the non-superradiant regime.

Assume $0\ne |a|<M$.
Let us fix an open 
\begin{equation}
\label{Inonsuperdefhere}
I_{\rm nonsuper}\subset \mathbb R\setminus [0,(2Mr_+)^{-1}] 
\end{equation}
with two connected components   such that $0$ and $(2Mr_+)^{-1}$ are not limit points
of $I_{\rm nonsuper}$ and 
\begin{equation}
\label{ikavopoiouvauto}
I_{\rm nonsuper} \cup I_{\rm gensuper}=\mathbb R,
\end{equation}
where $I_{\rm gensuper}$ is defined by~\eqref{gensuperdef}.

For $\gamma \in I_{\rm nonsuper}$, we define $r^0_\gamma$ as follows:

If 
\begin{equation}
\label{ifthisistrue}
\max \gamma^{-1}   \frac{4Mr-\gamma^{-1}}{(r^2+a^2)^2}  >1,
\end{equation}
then $V_\gamma$ obtains a maximum at a unique $r$-value.
This is then $r^0_\gamma$.
Note that
\[
V_0(\omega, m , \Lambda)(r^0_\gamma) -\omega^2 \ge V_\gamma(r^0_\gamma)- \omega^2 
\ge b(\gamma)\omega^2.
\]

Otherwise,
there is a unique $\Xi^2_\gamma \in (0,\infty)$ such that, for any
$\omega\ne0$, $m\ne0 $ with $\omega/am=\gamma$, setting $\Lambda (\omega,\gamma):=\Xi_\gamma^2\omega^2$,  then $r^0_{\rm max}(\omega, m, \Lambda(\omega, \gamma))$ exists and
\[
V_{\rm max}:=V_0  (r^0_{\rm max} (\omega,m, \Lambda ) ) =\omega^2 .
\]
In this case then, let us denote $r^0_\gamma: = r^0_{\rm max}(\omega, m , \Lambda(\omega, \gamma))$.
Note this is indeed independent of the choice of $\omega, m$ such that $\omega/am=\gamma$.
We have that there exists $\epsilon>0$
such that for all $\gamma\in  I_{\rm nonsuper}$, 
\[
R_{\rm pot} -\epsilon> r^0_{\gamma} >   r_{\rm pot}+\epsilon
\]
provided that $R_{\rm pot}$ is sufficiently large and $r_{\rm pot}>r_+$ is sufficiently
close to $r_+$.

Let us note finally that under this definition $r^0_\gamma$ depends continuously
on $\gamma$ in $I_{\rm nonsuper}$, and in the case~\eqref{ifthisistrue}, then setting
$\Xi^2_\gamma:=0$, $\Lambda:=0$, then
$r^0_\gamma= r^0_{\rm max}(\omega, m, \Lambda=0) $ for any $\omega\ne0$, $m\ne 0$ with  $\omega/am=\gamma$. (Note that $(\omega\ne 0, m\ne 0, \Lambda=0)$ does not constitute an admissible triple, but we may of course nonetheless plug such a $\Lambda$ into the expression~\eqref{defofVzero}.)

\begin{proposition}[$\gamma$-dependent localisation of trapping for non-superradiant frequencies]
\label{nonsuperradpotentialprop}
Let $0\ne |a|<M$.
For all
$r_{\rm pot}>r_+$ sufficiently close to $r_+$  and $R_{\rm pot}$ sufficiently large,
the following holds:

Let $\gamma\in I_{\rm nonsuper}$. Given any $\epsilon>0$ sufficiently small, and defining
\begin{equation}
\label{epsiloninthisdef}
r_{\gamma,1}:=r^0_\gamma-\epsilon, \qquad r_{\gamma,2}:=r^0_\gamma+\epsilon,
\end{equation}
we have 
\[
[r_{\gamma,1}, r_{\gamma,2} ] \subset (r_{\rm pot}, R_{\rm pot})
\]
and 
there exist sufficiently small parameters  $b_\gamma>0$, $b'_\gamma>0$, depending 
on $\epsilon$
and continuously
on $\gamma$,
such that for all admissible $(\omega\ne0 , m \ne 0, \Lambda)$ with $\omega/am=\gamma$, the following
inclusions hold:
\begin{align}
\label{onesethere}
\{ r>r_+ : |V_0 -\omega^2| <b_\gamma \omega^2,\quad V' _0> 0 \}
&\subset (r_+  , r_{\gamma,2} )  ,  \\
\label{anothersethere}
\{ r>r_+ : |V_0 -\omega^2| <b_\gamma \omega^2,\quad V'_0 < 0  \}
&\subset (r_{\gamma,1},\infty) , \\
\label{athirdsethere}
\{ r_+<r<2R_{\rm pot}: |V_0 -\omega^2| < b_\gamma\omega^2,\quad |V'_0| <  b'_\gamma \Delta r^{-2}\omega^2  \}
&\subset (r_{\gamma,1}, r_{\gamma,2}) ,
\end{align}
as well as
\begin{equation}
\label{V0boundsLambda}
\omega^2 +V_0  \gtrsim \Delta \Lambda   \text{\ in\ }\{ r_+\le r \le 1.2R_{\rm pot}\} \cap \{V_0\ge \omega^2\} .
\end{equation}

Moreover, there exist $b_{\infty}>0$, $b'_{\infty}>0$ sufficiently small 
and $\gamma_\infty$ sufficiently large, and an interval 
$[r_{\infty, 1}, r_{\infty,2}]\subset (2.1M, R_{\rm pot})$
such that 
if $|\gamma|\ge \gamma_\infty$ or $m=0$, then~\eqref{onesethere}--\eqref{athirdsethere} hold 
with $r_{\infty, 1}$ replacing $r_{\gamma,1}$ and $r_{\infty, 2}$ replacing $r_{\gamma,2}$
and with $b_{\infty}$, $b'_{\infty}$ in place of $b_\gamma$, $b'_\gamma$.
\end{proposition}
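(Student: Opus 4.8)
\textbf{Proof proposal for Proposition~\ref{nonsuperradpotentialprop}.}

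The plan is to prove the $\gamma$-dependent statement first and then deduce the uniform ``$\gamma_\infty$'' version by a compactness/continuity argument. Fix $\gamma\in I_{\rm nonsuper}$ and any admissible triple $(\omega\ne0, m\ne0,\Lambda)$ with $\omega/am=\gamma$. The starting point is the scaling structure: writing $V_0=V_\gamma+\Delta\Lambda(r^2+a^2)^{-2}$, with $V_\gamma=\omega^2\gamma^{-1}(4Mr-\gamma^{-1})(r^2+a^2)^{-2}$, we see that $\omega^{-2}V_0$ depends on the frequency triple only through the single parameter $\Xi^2:=\Lambda/\omega^2\in[0,\infty)$ (using $\gamma$ fixed). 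Thus the whole analysis reduces to studying the one-parameter family of profiles $\omega^{-2}V_0(r;\Xi^2)$ for $\Xi^2\in[0,\infty)$, and by the discussion preceding the proposition, $r^0_\gamma$ is precisely the location where the ``critical'' member of this family (the one with $V_{\rm max}=\omega^2$, or $\Xi^2=0$ in case~\eqref{ifthisistrue}) has its maximum. First I would establish the qualitative picture: for the critical $\Xi^2$, Proposition~\ref{genpropsofV0} gives that $\omega^{-2}V_0$ has a unique interior maximum at $r^0_\gamma$ with $-\frac{d^2}{dr^2}(\omega^{-2}V_0)\gtrsim \Lambda/\omega^2$ there (or, in case~\eqref{ifthisistrue}, a strict nondegenerate maximum of $V_\gamma$), and that $r^0_\gamma\in(r_{\rm pot}+\epsilon, R_{\rm pot}-\epsilon)$ once $r_{\rm pot}$ is close enough to $r_+$ and $R_{\rm pot}$ large — this is how we constrain those two parameters.

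Next I would prove the three inclusions~\eqref{onesethere}--\eqref{athirdsethere} for arbitrary admissible $\Xi^2$, not just the critical one. The key observation is: the set $\{|V_0-\omega^2|<b_\gamma\omega^2\}\cap\{V_0'\gtrless0\}$, when nonempty, can only occur where the profile $\omega^{-2}V_0(\cdot;\Xi^2)$ is near level $1$; by the monotonicity structure of $V_0$ from Proposition~\ref{genpropsofV0} (at most one interior min followed by one interior max, with $V_0'<0$ outside $[r^0_{\rm min},r^0_{\rm max}]$), and by the fact that increasing $\Xi^2$ only pushes the maximum value $V_{\rm max}$ upward and moves $r^0_{\rm max}$ monotonically (as shown in~\cite{partiii}), a point with $V_0\approx\omega^2$ and $V_0'>0$ must lie to the left of $r^0_\gamma$ up to $O(\sqrt{b_\gamma})$, and one with $V_0'<0$ to the right, with the error tending to zero as $b_\gamma\to0$. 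Hence for $b_\gamma$ small enough depending on $\epsilon$ (and continuously on $\gamma$, since all the relevant quantities — $r^0_\gamma$, the second derivative bound, the monotonicity constants — vary continuously on the closed set $I_{\rm nonsuper}$ away from $0$ and $(2Mr_+)^{-1}$), the first two inclusions hold. For~\eqref{athirdsethere} I would additionally invoke the nondegeneracy $-\frac{d^2}{dr^2}V_0\gtrsim\Lambda$ at the maximum, together with $V_0'=0$ happening only near $r^0_{\rm max}$ (or $r^0_{\rm min}$, which by~\eqref{Vminlvaluebound} has $V_0-\omega^2\gtrsim\omega^2$ and is therefore excluded by the condition $|V_0-\omega^2|<b_\gamma\omega^2$ for small $b_\gamma$), to pick $b'_\gamma$ small enough that the joint condition $\{|V_0-\omega^2|<b_\gamma\omega^2\}\cap\{|V_0'|<b'_\gamma\Delta r^{-2}\omega^2\}$ forces $r$ into $(r_{\gamma,1},r_{\gamma,2})$. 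The bound~\eqref{V0boundsLambda} is a direct computation: where $V_0\ge\omega^2$ and $r\le 1.2R_{\rm pot}$, the term $\Delta\Lambda(r^2+a^2)^{-2}$ is the dominant positive contribution and $\Delta\gtrsim1$, $(r^2+a^2)^{-2}\gtrsim1$ on that compact $r$-range, while $V_\gamma$ is bounded below by $-C\omega^2\ge-CV_0$, so $\omega^2+V_0\gtrsim\Delta\Lambda$.

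Finally, for the uniform statement in the regime $|\gamma|\ge\gamma_\infty$ or $m=0$: when $m=0$ we have $V_\gamma\equiv0$ and $V_0=\Delta\Lambda(r^2+a^2)^{-2}$, whose maximum (by Proposition~\ref{genpropsofV0}) sits at a fixed $r$-value independent of $\Lambda$, and similarly as $|\gamma|\to\infty$ the term $V_\gamma$ becomes negligible relative to $\Delta\Lambda(r^2+a^2)^{-2}$ on the critical scaling (since $\Xi^2_\gamma\to\infty$), so the profiles $\omega^{-2}V_0$ converge uniformly on compact $r$-sets to the $m=0$ profile; hence $r^0_\gamma$ converges to a fixed value $r^0_\infty\in(2.1M,R_{\rm pot})$, and one may take a single interval $[r_{\infty,1},r_{\infty,2}]$ around it and uniform constants $b_\infty,b'_\infty$ by continuity and compactness of the closure of the complement of a neighborhood of this limiting regime — combined with the $\gamma$-continuous construction on the remaining compact portion of $I_{\rm nonsuper}$. \textbf{The main obstacle} I anticipate is making the monotonicity-in-$\Xi^2$ argument fully rigorous and uniform: one needs that as the admissible parameter $\Xi^2=\Lambda/\omega^2$ ranges over $[0,\infty)$ for fixed $\gamma$, the maximum location $r^0_{\rm max}$ stays within a fixed compact interval (so the $\epsilon$-neighborhood $[r_{\gamma,1},r_{\gamma,2}]$ genuinely captures all near-level-$\omega^2$ behavior for \emph{every} admissible $\Lambda$, not just the critical one), which requires carefully extracting from~\cite{partiii} the quantitative dependence of $r^0_{\rm max}$ on the frequency parameters and checking it degenerates controllably only as $|a|\to M$, consistent with the stated need to send $r_{\rm pot}\to r_+$.
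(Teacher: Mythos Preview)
Your scaffolding is right --- the homogeneity reduction to a one-parameter family in $\Xi^2=\Lambda/\omega^2$ is exactly what the paper uses --- but the mechanism you propose for covering \emph{all} $\Xi^2$ has a gap, and your ``main obstacle'' misidentifies the issue. You try to track the location $r^0_{\rm max}(\Xi^2)$ and assert it moves monotonically; this is neither proven in~\cite{partiii} nor needed, and is likely false. The paper instead splits into three regimes in $\Lambda$: (a) an open interval around the critical value $\Xi^2_\gamma\omega^2$, where the inclusions hold by continuity of everything in sight; (b) $\Lambda\le\alpha\Xi^2_\gamma\omega^2$, where after shrinking $b_\gamma$ the set $\{|V_0-\omega^2|<b_\gamma\omega^2\}$ is empty (since $V_{\rm max}<\omega^2$ strictly); and (c) $\Lambda\ge\alpha^{-1}\Xi^2_\gamma\omega^2$, where the key observation is \emph{pointwise} monotonicity of $V_0$ in $\Lambda$ at the \emph{fixed} location $r^0_\gamma$: one gets $V_0>(1+b_\gamma)\omega^2$ on a whole $\epsilon_0$-neighbourhood of $r^0_\gamma$, which forces the two level-$\approx\omega^2$ crossings $r_\uparrow,r_\downarrow$ to straddle that neighbourhood with $V_0'>0$ at $r_\uparrow$ and $V_0'<0$ at $r_\downarrow$, yielding \eqref{onesethere}--\eqref{anothersethere} directly. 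For \eqref{athirdsethere} in regime (c) one additionally shows (by compactness in the rescaled parameter $\beta=\Lambda/\omega^2$, plus explicit analysis for $\beta\to\infty$) that $\{|V_0'|<b'_\gamma\Delta r^{-2}\omega^2\}$ lies in small neighbourhoods of $r^0_{\rm max}$ and $r^0_{\rm min}$, and that at both of these $|V_0-\omega^2|$ is bounded below by a fixed multiple of $\omega^2$ --- so after shrinking $b_\gamma$ the joint condition is vacuous. None of this requires $r^0_{\rm max}$ to stay in a compact interval.

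Two smaller points. Your argument for \eqref{V0boundsLambda} asserts $\Delta\gtrsim1$, which fails near $r_+$; the correct argument (and the one the paper gestures at) uses that $\gamma\in I_{\rm nonsuper}$ is bounded away from $0$, hence $m^2\lesssim\omega^2$, hence $|V_\gamma|\lesssim\omega^2$ on bounded $r$, whence $\Delta\Lambda(r^2+a^2)^{-2}=V_0-V_\gamma\lesssim V_0+\omega^2$ and one only needs $(r^2+a^2)^{-2}\gtrsim1$ on $r\le 1.2R_{\rm pot}$. Also, your exclusion of $r^0_{\rm min}$ via \eqref{Vminlvaluebound} is fine, but note that \eqref{Vminlvaluebound} says $V_0-\omega^2\gtrsim\omega^2$ (positive), not merely $|V_0-\omega^2|\gtrsim\omega^2$; this is what actually rules it out of the set $\{|V_0-\omega^2|<b_\gamma\omega^2\}$.
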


\begin{proof}
Given $\gamma\in I_{\rm nonsuper}$ and given
 $\epsilon>0$ and $r_{\gamma,1}$, $r_{\gamma,2}$ defined by~\eqref{epsiloninthisdef}, 
 the relations~\eqref{onesethere}--\eqref{athirdsethere} 
 hold for all admissible
$(\omega\ne0, m\ne0, \Lambda = \Xi^2_\gamma \omega^2)$ with $\omega/am =\gamma$
as long as $b_\gamma$ is sufficiently small,
regardless of the size of $b'_\gamma$, if $\Xi^2_\gamma\ne 0$. (In the case~\eqref{ifthisistrue}
when $\Xi^2_\gamma=0$, 
note that for $b_\gamma$ and $b'_{\gamma}$ sufficiently small, then the set on the left hand side of~\eqref{athirdsethere}
is empty, and the inclusions~\eqref{onesethere}--\eqref{anothersethere} hold.)
On the other hand, for fixed $\omega\ne0, m\ne0$ with $\omega/am=\gamma$, the set of $\Lambda$ for which the above relations all hold for $(\omega, m, \Lambda)$ is open,
so they hold for
 \begin{equation}
 \label{Lambdasatisfying}
 \alpha \Xi^2_\gamma\omega^2 < \Lambda <  \alpha^{-1}\Xi^2_\gamma\omega^2
 \end{equation}
for some $\alpha<1$. Moreover, by homogeneity in $\omega^2$, it follows that~\eqref{onesethere}--\eqref{athirdsethere} in fact hold for all admissible $(\omega\ne0, m\ne 0, \Lambda)$ 
with $\omega/am=\gamma$ with $\Lambda$ satisfying~\eqref{Lambdasatisfying}.

 Note that if the inclusions \eqref{onesethere}--\eqref{athirdsethere}
 hold for some $b_\gamma$, $b'_\gamma>0$, 
then  they trivially continue to hold if we take $b_\gamma>0$, $b'_\gamma>0$ smaller. 
But clearly if $\Lambda\le \alpha \Xi^2_\gamma\omega^2$, then, after restriction of $b_\gamma$,
the inclusions~\eqref{onesethere}--\eqref{athirdsethere} hold trivially, because $\{ r: |V_0-\omega^2|<b_\gamma\omega^2\} =\emptyset$.

So it suffices to show that, possibly by restricting $b_\gamma$ and $b'_\gamma$ further,~\eqref{onesethere}--\eqref{athirdsethere} hold also for 
\begin{equation}
\label{sufficestoshownow}
\Lambda\ge \alpha^{-1}\Xi^2_\gamma\omega^2.
\end{equation}

Now, recall from~\eqref{Vminlvaluebound},
that for all admissible $(\omega, m, \Lambda)$, for $b_\gamma>0$
sufficiently small, $V_{\rm min} \le (1-2b_\gamma)\omega^2$.
We claim that for $b_\gamma>0$ sufficiently small, we have that for all
admissible $(\omega, m, \Lambda)$ satisfying also~\eqref{sufficestoshownow},
the following holds:
\begin{equation}
\label{wheresmax}
V_{\rm max}(\omega, m, \Lambda) \ge(1+3b_\gamma)  \omega^2.
\end{equation}
This follows because 
\begin{equation}
\label{similarlyto}
V_{\rm max}(\omega,m, \Lambda)- V_{\rm max}(\omega, m, \Xi^2_\gamma \omega^2)
\ge (\Lambda-\Xi^2_\gamma\omega^2) h
\end{equation}
where $h=\Delta(r^2+a^2)^{-2} (r_{\rm max}^0(\omega, m, \Lambda=\Xi^2_\gamma\omega^2))\ge b$, since $r_{\rm max}^0(\omega, m, \Lambda=\Xi^2_\gamma\omega^2)$
lies in  $(r_{\rm pot}, R_{\rm pot})$.

On the other hand, given $b_\gamma>0$, we 
may then choose $b'_\gamma$ sufficiently small (depending on $b_\gamma$), such that
for all admissible $(\omega, m, \Lambda)$ with $\omega/am=\gamma$, we have
\begin{equation}
\label{toargue}
\{ r_+<r<2R_{\rm pot} :|V'_0| < b'_\gamma\Delta r^{-2}\omega^2 \} \subset
\{|V_0-V^0_{\rm max} |\le b_\gamma \omega^2\}\cup \{|V_0-V^0_{\rm min}| \le b_\gamma \omega^2\}. 
\end{equation}
We see~\eqref{toargue} as follows:
Let us write $\Lambda =\beta\omega^2$ and write, for $\omega/am=\gamma$,
\[
V_0(\omega, m , \Lambda= \beta\omega^2 ,r) = \mathcal{V}_{\beta, \gamma}(r)\omega^2
\]
\[
\frac{d}{dr} V_0=\frac{d}{dr}\mathcal{V}_{\beta, \gamma}(r)\omega^2.
\]
Note that $\mathcal{V}_{\beta,\gamma}(r)$ satisfies $\frac{d}{dr}\mathcal{V}_{\beta, \gamma}(r_+) \ne 0$ and its critical points and values
coincide with those of $\omega^2 V_0(\omega, m , \Lambda = \beta\omega^2, r)$.
For fixed $\beta\ge \gamma^{-1}$ (admissibility), and more generally for $\beta\in X$ for any compact subset $X\subset [ \gamma^{-1} , \infty)$, it follows by compactness that given $\hat{b}>0$, 
then if  $\hat{b}'>0$ is sufficiently small, depending on $\hat{b}$, $\gamma$ and $X$,  then 
\begin{equation}
\label{alsoholds}
\left\{ r_+<r<2R_{\rm pot}:  \left|\frac{d}{dr} \mathcal{V}_{\beta,\gamma}\right| < \hat{b}' \right\} \subset
\{| \mathcal{V}_{\beta,\gamma} -\mathcal{V}^{\rm max} _{\beta,\gamma} |\le \hat{c}\}\cup \{| \mathcal{V}_{\beta,\gamma} -\mathcal{V}^{\rm min} _{\beta,\gamma} |\le \hat{c}\}. 
\end{equation}
But we notice that given $\hat{b}>0$, for $\hat{b}'>0$ sufficiently small,~\eqref{alsoholds}  also holds for
all $\beta \gg \alpha^{-1}\Xi^2_\gamma$. To see this note that there is a unique $r_{\rm tr} \in [r_+,2R_{\rm pot}]$
such that
\[
\frac{d}{dr}\left[ \Delta(r^2+a^2)^{-2} \right](r_{\rm tr}) =0,
\]
and the left hand side of~\eqref{alsoholds} 
is contained in the set $\{ r : |r - r_{tr}| \le C(\gamma) \beta^{-1}\}$ for some $C(\gamma)$. On the other hand, we claim
that on $\{ r : |r - r_{tr}| \le C(\gamma) \beta^{-1}\}$ we have 
$| \mathcal{V}_{\beta,\gamma} -\mathcal{V}^{\rm max} _{\beta,\gamma} |\le C(\gamma) \beta^{-1}\cdot\beta^2 =C(\gamma)\beta^{-1}$. Thus, given any $\hat{b}$ we indeed have~\eqref{alsoholds} also for 
 $\beta \gg \alpha^{-1}\Xi^2_\gamma$ for all $\hat{b}'>0$ sufficiently small (depending on $\hat{b}$
 and $\gamma$).
 Having established that~\eqref{alsoholds} for all $\beta\ge \gamma^{-1}$, we note 
 that~\eqref{toargue} follows for all admissible $(\omega, m, \Lambda)$ with $\omega/am=\gamma$  by re-expressing in terms of $V_0$ and expressing $\frac{d}{dr}$
 in terms of ${}'=\frac{d}{dr^*}$.

It follows that given $b_\gamma>0$ 
sufficiently small, we may choose $b'_\gamma>0$ sufficiently small so that 
the left hand side of~\eqref{athirdsethere} is empty for all admissible $(\omega, m, \Lambda)$ 
with $\omega/am =\gamma$ and $\Lambda$ satisfying~\eqref{sufficestoshownow}.

Thus, in view of our comments above~\eqref{athirdsethere} indeed
holds for all admissible $(\omega, m, \Lambda)$ with $\omega/am=\gamma$, provided
that $b_\gamma>0$ and $b'_\gamma>0$ are chosen sufficiently small.

To show the remaining inclusions~\eqref{onesethere}--\eqref{anothersethere} in the case~\eqref{sufficestoshownow}, 
let us first note that given $b_\gamma>0$, 
there exists a $0<\epsilon_0<\epsilon$, depending on $b_\gamma$, such that
for all admissible $(\omega, m, \Lambda)$  with $\omega/am=\gamma$ and
$\Lambda \ge \alpha^{-1}\Xi^2_{\gamma}\omega^2$, we have:
\begin{equation}
\label{tobereferredto}
V_0 > (1+b_\gamma)\omega^2 \text{\ on\ }(r^0_\gamma-\epsilon_0, r^0_\gamma+\epsilon_0).
\end{equation}  
(This follows from similar considerations to those leading to~\eqref{similarlyto}.)

Recall now that $V_0(r_+) <\omega^2$ for all admissible frequencies with $\gamma\in I_{\rm nonsuper}$.
Appealing to the bound of $\gamma$ away from $\frac{a}{2Mr_+}$ in the definition of  $I_{\rm nonsuper}$, 
let us require $b_\gamma>0$ 
to be sufficiently small such that say $V_0(r_+)<(1-3b_\gamma)\omega^2$ 
for all admissible frequencies with
 $\gamma\in I_{\rm nonsuper}$.
Since $V_0(r)\to \infty$ as $r\to \infty$, it follows from 
Proposition~\ref{genpropsofV0} and~\eqref{wheresmax}, 
that for all admissible $(\omega, m, \Lambda)$ with $\omega/am = \gamma$ and
$\Lambda$ satisfying~\eqref{sufficestoshownow},  given $x\in (c,c)$, there exist unique $r$-values $r_{\uparrow}(\omega, m, \Lambda, x)<r_{\downarrow}(\omega, m, \Lambda,x)$ such that
\[
\{r>r_+: V_0(\omega, m, \Lambda) =(1+x) \omega^2\}=\{r_{\uparrow}, r_{\downarrow}\}.
\]
Moreover, it follows from~\eqref{tobereferredto} that   $r_{\uparrow}\le r^0_\gamma+\epsilon_0$, $r_{\downarrow}\ge r^0_\gamma-\epsilon_0$ and
\[
V_0'(\omega, m, \Lambda, r_\uparrow)>0, \qquad V_0'(\omega, m, \Lambda, r_\downarrow)<0.
\]
The relations~\eqref{onesethere}--\eqref{anothersethere} now follow immediately upon varying $x$.

The bound~\eqref{V0boundsLambda} follows immediately from the form of the expression~\eqref{defofVzero} defining $V_0$
(where we use that $m^2\lesssim \omega^2$, noting that this depends on the fact that $0$
is not a limit point of $I_{\rm nonsuper}$).

The final statement of the proposition concerning $\gamma_{\infty}$ or the case $m=0$ follows easily. 
\end{proof}

\begin{remark}
Note that the frequencies for which $\gamma$ satisfies~\eqref{ifthisistrue} 
are in fact never trapped,
and could also have been merged with the generalised superradiant frequencies
in our treatment later.
\end{remark}

\subsubsection{Determining the frequency space covering $\mathcal{F}_n$}
\label{fixingthecovering} 
In what follows, let us fix $2.01M>r_{\rm pot}>r_+$, $R_{\rm pot}> 5M$ satisfying all the constraints of the previous 
section.

Recall the intervals $I_{\rm gensuper}$, $I_{\rm nonsuper}$ defined by~\eqref{gensuperdef},~\eqref{Inonsuperdefhere} and satisfying~\eqref{ikavopoiouvauto}.
The following proposition now determines  covering intervals $I_n$ that in 
turn determine the $(\omega, m)$ frequency-space covering $\mathcal{F}_n$
of Section~\ref{projectiondef}. 

\begin{proposition}
\label{determiningthecoveringprop}
Let $0\ne |a|<M$.
There exists a finite covering of $\mathbb R$
\[
\mathbb R = \cup_{n=1}^N I_n,
\]
 by connected open intervals $I_n\subset \mathbb R$, $n=1,\ldots, N$,
 a collection 
 \[
 \alpha_n\in \mathbb R,
 \]  
 a collection of nonempty intervals 
\begin{equation}
\label{intervalsherereference}
[r_{n,1},r_{n,2}]\subset (r_{\rm pot}, R_{\rm pot}),
\end{equation}
 and parameters $b_{\rm pot}>0$, $b'_{\rm pot}>0$
such that 
the following properties hold:
\begin{itemize}
\item
$0\in I_1$, $(2Mr_+)^{-1}\in I_2$, the collection
$I_3, \ldots , I_{N_s}$ are contained in $I_{\rm strictsuper}:= (\epsilon/2, (2Mr_+)^{-1}-\epsilon/2)$,
and
\[
\cup_{n=1}^{N_s} I_n= I_{\rm gensuper}.
\]
 \item
For all $n=1,\ldots, N_s$, all $\gamma\in I_n$ and
all admissible $(\omega, m\ne 0, \Lambda)$ with $\omega/am  = \gamma$, we have
\begin{align}
\label{superonesetherebundled}
\{ r>r_+ :  V_0 -\omega^2  <b_{\rm pot} \omega^2,\quad V' _0> 0 \}
&\subset (r_+,  r_{n,1} ) , \\
\label{superanothersetherebundled}
\{ r>r_+ :  V_0 -\omega^2  < b_{\rm pot} \omega^2,\quad V'_0 < 0  \}
&\subset ( r_{n,2},\infty), \\
\label{whatwehavehere}
\{ r> r_+: V_0 -\omega^2 \ge b_{\rm pot}  \omega^2 \} & \supset  [r_{n,1}, r_{n,2}],\\
\label{itsemptycompletely}
\{ r_+<r<2R_{\rm pot}: |V_0 -\omega^2| \le b_{\rm pot}  \omega^2 , \quad |V'_0| \le b'_{\rm pot} \Delta  r^{-2} \omega^2 \} &=\emptyset .
\end{align} 
We set $\alpha_n= \frac{a}{2Mr_+}$. We have moreover
\begin{align}
\label{rstatementonpotentialforallsuper}
V'_0 \gtrsim \Delta r^{-2}(  \Lambda +\omega^2 ) &{\rm\ in\ } (r_+,  r_{\rm pot}].
\end{align}
If $n\ne 1$, we have 
\begin{equation}
\label{officialLambdaboundfromVzero}
( \omega^2 +V_0 ) \gtrsim  \Delta \Lambda    \text{\ in\ }\{ r_+\le r \le 1.2R_{\rm pot}\} \cap \{V_0\ge \omega^2\} .
\end{equation}
\item
 In the case of $n=1$, we have the additional statements
$[r_{1,1},r_{2,1}]\subset (4M,6M)$, 
\begin{align}
\label{strongerstatementonpotentialone}
V'_0 &\gtrsim   \Delta r^{-2} (  \Lambda+\omega^2 ) \} &{\rm\ in\ } (r_+, 2.1M],\\
\label{Vzerocontrols}
V_0 &\gtrsim    r^{-2}( \Lambda +\omega^2)  &{\rm\ in\ } [2.01M, \infty),\\
\label{ellipticregionfornequalzero}
 V_0 &>  \frac12 \omega^2  &{\rm\ in\ } [2.01M, 1.2R_{\rm pot}].
\end{align}
\item
For all $n=1,\ldots, N_s$,  all
$\gamma \in I_n$ and all admissible $(\omega, m\ne 0, \Lambda)$ with,
$\omega/am=\gamma$, we have 
\begin{equation}
\label{finalchinatpropertyone}
\chi_{\natural}(r, \omega, m, \Lambda)=1,
\end{equation}
provided that the parameter  $b_{\rm trap}>0$ in~\eqref{betterdefhere} is
sufficiently small. 
\item
The collection $I_{N_s+1},\ldots , I_{N}$ satisfies
\[
\cup_{n=N_s+1}^{N-1}  I_n  =I_{\rm nonsuper}.
\]
\item
For all $n\in N_s+1,\ldots ,N$,   all $\gamma\in I_n$ and all 
admissible $(\omega, m\ne0 , \Lambda)$ with $\omega/am=\gamma$
\begin{align}
\label{onesetherebundled}
\{ r>r_+ : |V_0 -\omega^2| <b_{\rm pot}  \omega^2,\quad V' _0> 0 \}
&\subset (r_+, r_{n,2}) , \\
\label{anothersetherebundled}
\{ r>r_+ : |V_0 -\omega^2| <b_{\rm pot}  \omega^2,\quad V'_0 < 0  \}
&\subset ( r_{n,1},\infty) , \\
\label{athirdsetherebundled}
\{ r_+<r<2R_{\rm pot}: |V_0 -\omega^2| < b_{\rm pot} \omega^2,\quad |V'_0| <  b'_{\rm pot} \Delta r^{-2}\omega^2  \}
&\subset (r_{n,1}, r_{n,2}).
\end{align}
We have moreover~\eqref{officialLambdaboundfromVzero}.
For $n=N-1$, then $-\infty$ is a limit point of $I_n$ 
and~\eqref{onesetherebundled}--\eqref{athirdsetherebundled} hold also 
for $m=0$, $a\omega<0$, while for
$n=N$ then $\infty$ is a limit point of $I_n$ and~\eqref{onesetherebundled}--\eqref{athirdsetherebundled} hold also 
for $m=0$, $a\omega>0$.
Defining the spacetime region $\mathcal{L}_n =\{r _{n,1} < r < r_{n,2}\}$, then
there exists an $\alpha_n$ 
satisfying
\[
0\le  \alpha_n \le  \frac{a}{2Mr_+}
\]
such that 
\begin{equation}
\label{timelikeherereflink}
g(T+\alpha_n\Omega_1,  T+ \alpha_n \Omega_1)\le b<0
\end{equation}
in $\mathcal{L}_n$.   For $n=N-1$ and $n=N$, we may in fact take
$\alpha_n=0$.
\item
For all $n=N_s+1,\ldots, N$,  all $\gamma\in I_n$ and all 
admissible $(\omega, m\ne0 , \Lambda)$ with $\omega/am=\gamma$, we have 
\begin{equation}
\label{finalchinatpropertytwo}
\chi_{\natural}(r, \omega, m, \Lambda)\gtrsim 1
\end{equation}
in $\{r_+< r\le r_{n,1}\} \cup \{ r\ge r_{n, 2}\}$ (where the constant implicit in $\gtrsim$ is in particular independent of frequency),
and similarly for $n=N-1$ if $m=0$, $a\omega<0$ and for $n=N$ if $m=0$, $a\omega>0$,
provided that the parameter  $b_{\rm trap}>0$ in~\eqref{betterdefhere} is
sufficiently small. 
\end{itemize}
\end{proposition}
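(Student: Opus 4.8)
The plan is to prove Proposition~\ref{determiningthecoveringprop} by assembling the pointwise potential estimates of Sections~\ref{propssuper} and~\ref{propsnonsuper} via a compactness/covering argument. The starting point is that $I_{\rm gensuper}$ and $I_{\rm nonsuper}$ cover $\mathbb{R}$ by~\eqref{ikavopoiouvauto}, and that for each $\gamma$ in either of these intervals, Propositions~\ref{superradpotentialprop} and~\ref{nonsuperradpotentialprop} furnish an interval $[r_{\gamma,1},r_{\gamma,2}]\subset(r_{\rm pot},R_{\rm pot})$ together with parameters $b_\gamma>0$, $b'_\gamma>0$ depending continuously on $\gamma$, satisfying the relevant one-sided inclusions~\eqref{alsothishere}--\eqref{supernottrapped} (superradiant case) or~\eqref{onesethere}--\eqref{athirdsethere} (nonsuperradiant case). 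The key elementary observation is that all the displayed inclusions in the proposition are \emph{stable under shrinking} $b_\gamma$, $b'_\gamma$ and under slightly \emph{enlarging} the interval $[r_{\gamma,1},r_{\gamma,2}]$. Hence, for each $\gamma$, there is an open neighbourhood $U_\gamma\ni\gamma$ and a slightly enlarged interval $[\tilde r_{\gamma,1},\tilde r_{\gamma,2}]\supset[r_{\gamma',1},r_{\gamma',2}]$ for all $\gamma'\in U_\gamma$, together with uniform constants $\tilde b_\gamma,\tilde b'_\gamma$, for which all the inclusions hold simultaneously for every $\gamma'\in U_\gamma$ and every admissible $(\omega,m,\Lambda)$ with $\omega/am=\gamma'$. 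This is a routine consequence of continuity of $r^0_\gamma$ in $\gamma$ (established in Section~\ref{propsnonsuper}) and of $[r_{\gamma,1},r_{\gamma,2}]$ in $\gamma$ (established in Section~\ref{propssuper}).

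The next step is the covering argument itself. First I would handle the special points and components separately so as to match the indexing demanded by the statement: take $I_1$ a sufficiently small neighbourhood of $0$ inside $U_0$ (using the $|\gamma|<\epsilon$ branch of Proposition~\ref{superradpotentialprop}, which gives $[r_{1,1},r_{1,2}]\subset(4M,6M)$ and the extra bounds~\eqref{wehavethis}--\eqref{canaddthisalsonow}, i.e.~\eqref{strongerstatementonpotentialone}--\eqref{ellipticregionfornequalzero}), take $I_2$ a sufficiently small neighbourhood of $(2Mr_+)^{-1}$ inside the $|\gamma-(2Mr_+)^{-1}|<\epsilon$ branch (which gives $\gamma$-independent $r_{2,1},r_{2,2}$), and note both carry $\alpha_n=\tfrac{a}{2Mr_+}$ by the conventions there. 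For the two unbounded components of $I_{\rm nonsuper}$ I would use the $|\gamma|\ge\gamma_\infty$ (resp.\ $m=0$) branch of Proposition~\ref{nonsuperradpotentialprop}, which provides a \emph{single} interval $[r_{\infty,1},r_{\infty,2}]$ and uniform $b_\infty,b'_\infty$ valid for all large $|\gamma|$ and for $m=0$; this lets me close off the non-compact ends with two intervals $I_{N-1}\ni-\infty$, $I_N\ni+\infty$ that absorb both the $m\neq0,|\gamma|\ge\gamma_\infty$ tails and the $m=0$ cases, and with $\alpha_n=0$ there (so~\eqref{timelikeherereflink} reduces to the statement that $T$ is timelike for $r>2M$, valid on $\mathcal{L}_n\subset\{r>2.1M\}$). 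What remains of $I_{\rm gensuper}$ after removing $I_1,I_2$ is a pair of \emph{compact} subintervals of $I_{\rm strictsuper}=(\epsilon/2,(2Mr_+)^{-1}-\epsilon/2)$, and what remains of $I_{\rm nonsuper}$ after removing the tails is likewise a compact set; I cover each by finitely many of the neighbourhoods $U_\gamma$ above, extract the numbers $N_s$ and $N$ accordingly, relabel, and set $b_{\rm pot}:=\min_n \tilde b_{\gamma_n}$, $b'_{\rm pot}:=\min_n \tilde b'_{\gamma_n}$ over the finitely many chosen centres. On the strictly superradiant intervals~$I_3,\dots,I_{N_s}$ and on $I_1,I_2$ we have $\alpha_n=\tfrac{a}{2Mr_+}$; on the nonsuperradiant intervals $I_{N_s+1},\dots,I_N$ we take the $\alpha_n$ supplied by~\eqref{thisbound}/\eqref{timelikeherereflink}, i.e.\ the continuity argument of the ergoregion discussion in Section~\ref{subextremalkerrsec} applied on the fixed $r$-interval $[r_{n,1},r_{n,2}]\subset(r_{\rm pot},R_{\rm pot})\subset(r_+,\infty)$.

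Finally I would deduce the $\chi_\natural$ statements~\eqref{finalchinatpropertyone} and~\eqref{finalchinatpropertytwo} from the already-established $V_0$ inclusions. Recall that $\chi_\natural(r,\omega,m,\Lambda)=1$ identically unless $(\omega,m,\Lambda)\in\mathcal{G}_\natural(b_{\rm trap})$, i.e.\ unless $V_{\rm max}$ satisfies $|V_{\rm max}-\omega^2|\le b_{\rm trap}\omega^2$, and when it is not identically $1$ it vanishes only at $r^0_{\rm max}$ with $\chi''_\natural\gtrsim1$ where $\chi_\natural\le\tfrac12$. For the generalised superradiant $n$: by~\eqref{whatwehavehere} we have $V_0-\omega^2\ge b_{\rm pot}\omega^2$ on a fixed interval, in particular at $r^0_{\rm max}$ if it lies there, and more to the point $V_{\rm max}\ge V_0(r_{n,1})\ge(1+b_{\rm pot})\omega^2$, so for $b_{\rm trap}<b_{\rm pot}$ the frequency cannot lie in $\mathcal{G}_\natural(b_{\rm trap})$ and $\chi_\natural\equiv1$; this gives~\eqref{finalchinatpropertyone} (the $n=1$ case being subsumed, or alternatively handled via~\eqref{isidenticallyonehere} for the low-frequency overlap). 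For nonsuperradiant $n$: when $(\omega,m,\Lambda)\notin\mathcal{G}_\natural(b_{\rm trap})$ then $\chi_\natural\equiv1$ and there is nothing to prove; when it is in the trapped range, the unique zero $r^0_{\rm max}$ of $\chi_\natural$ satisfies $|V_{\rm max}-\omega^2|\le b_{\rm trap}\omega^2<b_{\rm pot}\omega^2$, so by~\eqref{onesetherebundled}--\eqref{anothersetherebundled} (applied at $x$ near the maximum, as in the proof of Proposition~\ref{nonsuperradpotentialprop}) we get $r^0_{\rm max}\in(r_{n,1},r_{n,2})$, i.e.\ the zero of $\chi_\natural$ lies strictly inside $[r_{n,1},r_{n,2}]$; combined with $\chi''_\natural\gtrsim1$ near that zero and $\chi_\natural=1$ outside a frequency-independent neighbourhood of it (which can be arranged, by the enlargement in~\eqref{epsiloninthisdef}, to lie inside $(r_{n,1},r_{n,2})$), we obtain the uniform lower bound $\chi_\natural\gtrsim1$ on $\{r\le r_{n,1}\}\cup\{r\ge r_{n,2}\}$, which is~\eqref{finalchinatpropertytwo}. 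The $m=0$ sub-statements for $n=N-1,N$ follow the same way using the $m=0$ branch of Proposition~\ref{nonsuperradpotentialprop}. The main obstacle is purely bookkeeping: one must keep the three competing requirements (the inclusions holding for \emph{all} admissible $\Lambda$ at a given $\gamma$, the intervals and constants being uniform over a neighbourhood of $\gamma$, and the resulting constants being uniform over the finitely many chosen $\gamma$'s) consistent, and must verify at the end that $b_{\rm trap}$ can be chosen small enough \emph{after} $b_{\rm pot}$ is fixed — which is exactly the ordering of parameter choices flagged in the statement of Theorem~\ref{globalgenbulkcoercprop} and in Section~\ref{fixparamsheremust}.
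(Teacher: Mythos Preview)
Your proposal is correct and follows essentially the same approach as the paper: the paper's proof is simply ``This follows easily from Propositions~\ref{superradpotentialprop} and~\ref{nonsuperradpotentialprop} and compactness considerations. To ensure properties~\eqref{finalchinatpropertyone} and~\eqref{finalchinatpropertytwo} we must require $b_{\rm trap}>0$ sufficiently small in~\eqref{betterdefhere}.'' You have spelled out exactly what this compactness argument entails, including the special handling of the endpoints $0$, $(2Mr_+)^{-1}$, the tails $|\gamma|\to\infty$ and $m=0$, and the mechanism by which $b_{\rm trap}<b_{\rm pot}$ yields the $\chi_\natural$ statements.
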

\begin{proof}
This follows easily from Propositions~\ref{superradpotentialprop}
and~\ref{nonsuperradpotentialprop} and compactness considerations.
To ensure properties~\eqref{finalchinatpropertyone} and~\eqref{finalchinatpropertytwo}
we must require $b_{\rm trap}>0$ sufficiently small in~\eqref{betterdefhere}.
\end{proof}

\begin{remark}
We note that the we could in fact combine the frequency ranges $I_{N-1}$ and $I_N$,
(and thus $\mathcal{F}_{N-1}$ and $\mathcal{F}_N$) as our currents to be defined
in Section~\ref{translationsection} 
will in fact be identical in these two ranges). We have kept them apart
simply so that the $I_n$ are all connected intervals.
\end{remark}

\subsection{Fixed frequency currents of~\cite{partiii, kerrscattering} 
and their relation to twisted physical
space currents of~\cite{DHRT22}}
\label{translationsection}

Let $u$ be a solution of the inhomogeneous ode
\begin{equation}
\label{theinhomoghere}
u'' + (\omega^2-V)u = G
\end{equation}
where $V$ is the potential defined by~\eqref{translationsection}.

We recall the fixed Carter frequency currents of~\cite{partiii,kerrscattering}:
\begin{eqnarray*}
{\text{Q}}^f [u] &=& f(|u'|^2 + (\omega^2-V)|u|^2) + f' {\rm Re} (u'\bar u) -\frac12 f'' |u|^2, \\
{\text{Q}}^h[u] &=& h Re (u' \bar u) -\frac12 h' |u|^2 , \\
{\text{Q}}^y[u] &=& y ( |u'|^2 +(\omega^2 -V) |u|^2 ), \\
{\text{Q}}^{\hat z}[u] &=& \hat{z}|u'-i\omega u|^2  - \hat{z} V|u|^2, \\
{\text{Q}}^z_{\rm red} [u] &=&  z |u' + i(\omega-\frac{a}{2Mr_+} m) u|^2 -z(V-V(r_+))|u|^2 ,\\
{\text{Q}}^T[u] &=& \omega {\rm Im}(u' \bar u) , \\
{\text{Q}}^{\Omega_1} [u] &=& -m{\rm Im} (u' \bar u) .  
\end{eqnarray*}
Here $f(r)$, $h(r)$, $y(r)$, $\hat{z}(r)$ and $z(r)$ are functions.

From~\eqref{theinhomoghere} we have the identities
\begin{eqnarray}
\label{thefidentity}
({\text{Q}}^f [u])' =& 2f' |u'|^2 -fV' |u|^2 -\frac12 f''' |u|^2 &+{\rm Re} (2f \bar G u' +f' \bar G u)  , \\
\label{thehidentity}
({\text{Q}}^h[u])' =& h (|u'|^2 + (V-\omega^2 ) |u|^2 ) -\frac12 h'' |u|^2& + h {\rm Re}(u \bar G) ,\\
\label{theyidentity}
({\text{Q}}^y[u])' =& y' (|u'|^2 +(\omega^2-V) |u|^2 ) - y V' |u|^2 &+2 y {\rm Re}(u' \bar G) ,\\
({\text{Q}}^{\hat{z}}[u] )' = & \hat{z}' |u-i\omega u|^2 - (\hat{z}V)' |u|^2 &+2\hat{z}{\rm Re} \left(G\overline{u'-i\omega u}\right) ,\\
\label{redidentity}
({\text{Q}}^z_{\rm red}[u] )' = &   z' |u' + i(\omega-\frac{a}{2Mr_+} m) u|^2 &\\
\nonumber
& -(z(V-V(r_+)))' |u|^2&
+2z{\rm Re}\left(G\overline{u'+i(\omega-\frac{a}{2Mr_+} m)u}\right), \\
({\text{Q}}^T[u] )'=& 0&+ \omega {\rm Im}(G \bar u), \\
({\text{Q}}^{\Omega_1}[u])'=&0&-{\rm Im} (G \bar u),  \\
\nonumber
({\text{Q}}^{Z}[u])' =&0&+ \omega {\rm Im}(G\bar u)\\
&& - \frac{a}{2Mr_+}m {\rm Im}(G\bar u),\\
\label{Killingcutoffidentityfirst}
({\text{Q}}^T[u]+ \qquad&&\\
\chi_{{\rm Killing},n }\alpha_n {\text{Q}}^{\Omega_1}[u] ) ' =& 
\nonumber 
-\chi'_{{\rm Killing},n}\alpha_n m{\rm Im}(u'\bar u)&+( \omega - m\alpha_n\chi_{{\rm Killing},n}){\rm Im} (G\bar u),\\
\nonumber
( {\text{Q}}^T[u]+ \qquad&&\\
\frac{\chi_{{\rm Killing},n }(2Mr_+ \alpha_n/a)}{(1-2Mr_+ \alpha_n/a) } {\text{Q}}^{Z}[u] ) ' =& 
\label{Killingcutoffidentity}
\frac{\chi'_{{\rm Killing},n}(2Mr_+ \alpha_n/a)}{ (1-2Mr_+ \alpha_n/a) } (\omega- \frac{a}{2Mr_+} m){\rm Im}(u'\bar u)&
+\omega{\rm Im}(G\bar u) \\
\nonumber 
&&
+ \frac{\chi_{{\rm Killing},n} ( \omega - \frac{a}{2Mr_+} m)}{(1-2Mr_+\alpha_n/a)}{\rm Im} (G\bar u),
\end{eqnarray}
where we have replaced some letters from the original~\cite{partiii} so as not to conflict with the notation of the present
paper.

Let us define the ``bulk'' quantities~$\mathfrak{P}^f[u]$ by the first column on the right
hand side of~\eqref{thefidentity}--\eqref{Killingcutoffidentity}, i.e.
\[
\mathfrak{P}^f [u]:=2f' |u'|^2 -fV' |u|^2 -\frac12 f''' |u|^2,
\]
etc.

The above identities translate to physical space identities as follows:
If $f$, $h$, $y$, $\hat{z}$, $z$ are independent of frequency, then 
the above identities~\eqref{thefidentity}--\eqref{Killingcutoffidentity} 
all correspond to the divergence identity of physical space currents.
We will denote these ``corresponding currents'' as
\begin{equation}
\label{correspondcurrents}
\tilde{J}^f[\psi],\quad  
\tilde{J}^h[\psi], \quad \tilde{J}^y[\psi],  \quad
\tilde{J}^{\hat{z}}[\psi], \quad \tilde{J}^z_{\rm red}[\psi], 
\quad -\tilde{J}^T[\psi],\quad  -\tilde{J}^{\Omega_1}[\psi],
\quad  -\tilde{J}^{Z}[\psi],
\end{equation}
and these satisfy divergence identities of the form
\begin{equation}
\label{twisteddivergence}
\nabla^\mu \tilde{J}^f_{\mu} [\psi] = \tilde{K}^f [\psi]  +{\rm Re}( \tilde{H}^f [\psi]\overline{\Box_g \psi}),
\end{equation}
etc.
(The unusual choice of sign in $-\tilde{J}^T[\psi]$, $-\tilde{J}^{\Omega_1}[\psi]$
$-\tilde{J}^{Z}[\psi]$
is to retain consistency with~\cite{partiii}.)

The above currents~\eqref{correspondcurrents} correspond to twisted currents as described in Appendix A.2 of~\cite{DHRT22},
with twisting function $\beta := (r_{BL}^2+a^2)^{-1/2}$.
Recall the twisted derivative operator 
\begin{equation}
\label{nablatwisted}
\tilde\nabla_\mu \psi  := \beta \nabla_\mu( \beta^{-1}\psi )
\end{equation}
and the twisted energy momentum tensor:
\begin{equation}
\label{twistedenergymom}
\tilde{T}_{\mu\nu}[\psi]  := {\rm Re} (\tilde\nabla_\mu \psi \overline{\tilde\nabla_\nu\psi }) -\frac12 g_{\mu\nu}\left( {\rm Re} (\tilde\nabla^\alpha \psi\overline{ \tilde\nabla_\alpha \psi} )
+(r^2+a^2\cos^2\theta)^{-1}(r^2+a^2)^2\Delta^{-1} V_1  |\psi|^2 \right)
\end{equation}
where $V_1$ is defined in~\eqref{defofVone}.

The current 
\[
\tilde{J}^f_\mu[\psi] := \tilde{T}_{\mu\nu}[\psi] X^\nu + f' {\rm Re}(\psi  \overline{ \tilde\nabla_\mu \psi}) -\frac12 |\psi|^2 \nabla_\mu f'
\]
where $X=2 f (r^*)\partial_{r^*}$. (It thus corresponds to what was referred to as
$\tilde{J}^{X}_\mu +\tilde{J}_\mu^{aux, f'}$ in~\cite{DHRT22}.)
We then have
\begin{equation}
\label{introducingthecurrentyinappendix}
\tilde{J}^y_{\mu}[\psi]:= \tilde{T}_{\mu\nu}[\psi] X^\nu ,
\end{equation}
where $X=2y(r^*)\partial_{r^*}$, 
\[
\tilde{J}^h_{\mu}[\psi]= 
 h {\rm  Re} ( \psi \overline{\tilde{\nabla}_\mu \psi })- \frac{1}{2} |\psi|^2 \nabla_\mu h
%  \frac12h \tilde\nabla_\mu \psi +\frac14 |\psi|^2 \nabla_\mu h.
\]
(the above corresponding to what was referred to as $\tilde{J}^{aux, h}$ in~\cite{DHRT22}),
\[
\tilde{J}^{\hat z}_\mu [\psi] := \tilde{T}_{\mu\nu}[\psi] X^\nu, \qquad 
\]
where $X =2\hat{z}(r^*)(\partial_{r^*} + \partial_{t^*})$
\begin{equation}
\label{reddefinitionsecretlyX}
{\tilde{J}^z_{\rm red}}_{\, \mu} [\psi] := \tilde{T}_{\mu\nu}[\psi] X^\nu, \qquad 
\end{equation}
where $X =2z(r^*)(\partial_{r^*} - \partial_{t^*} -\frac{a}{Mr_+} \partial_{\phi^*})$,
\begin{equation}
\label{twisteddefT}
\tilde{J}^T_{\mu}[\psi]: = \tilde{T}_{\mu\nu}[\psi] T^\nu ,
\end{equation}
\begin{equation}
\label{twisteddefOmega}
\tilde{J}^{\Omega_1}_{\mu}[\psi]: = \tilde{T}_{\mu\nu}[\psi] (\Omega_1)^\nu .
\end{equation}

Let us note that if $f(r)$, $y(r)$ are smooth functions in $r\in [r_+,\infty)$, then $X=2f(r^*)\partial_{r_*}$
or $X=2y(r^*)\partial_{r^*}$ 
extend to smooth vector fields beyond the horizon $\mathcal{H}^+$ to the full region
$r\ge r_0$,
while if $\Delta z(r)$ is smooth in $r\in [r_+,\infty)$, then a similar statement holds concerning
$X=2z(r^*)(\partial_{r^*} - \partial_{t^*} -\frac{a}{Mr_+} \partial_{\phi^*})$.
Given currents defined with such $f$, $y$, $z$ in $r\ge r_+$, we may define
arbitrary smooth extensions to $r\ge r_0$. 

\begin{remark}
\label{explicitchoices}
Since we shall restrict to $r_0$ close to $r_+$,
then Propositions~\ref{bcgsc} 
and~\ref{notbcgsc} of Section~\ref{generalisedcoerc}  
will hold
independently of the choice of the smooth extensions to the region $r_0\le r\le r_+$. Consequently, in the sections to follow 
we shall only make explicit choices in the region $r\ge r_+$. 
\end{remark}

The bulk currents $\tilde{K}^f[\psi]$, $\tilde{K}^y[\psi]$, etc., in~\eqref{twisteddivergence}
can then be read off from the bulk currents $\mathfrak{P}^f[u]$, $\mathfrak{P}^y[u]$ of the first
column of the right hand side~\eqref{thefidentity}--\eqref{Killingcutoffidentity} 
by making the following replacements:
\begin{align}
\label{translationone}
|u|^2 &\mapsto |\psi|^2 \\
|u '|^2 & \mapsto | \tilde\nabla_{r^*}\psi |^2 \\
\omega   m  |u|^2 &\mapsto {\rm Re} ( \tilde\nabla_t \psi  \overline{ \tilde\nabla_\phi   \psi} )    \\
m^2|u|^2  &\mapsto |\tilde\nabla_\phi \psi|^2  \\
\Lambda |u|^2  &\mapsto  |\tilde\nabla_\theta \psi|^2  +\sin^{-2}\theta |\tilde\nabla_\phi \psi|^2
+a^2 (\cos^2\theta+1) |\tilde\nabla_t \psi|^2 \\
\label{finaltranslation}
m{\rm Im} (u' \bar u) &\mapsto {\rm Re} ( \tilde\nabla_\phi \psi \overline{ \tilde\nabla_{r^*} \psi})
\end{align}
and multiplying everything by 
\[
\frac{(r^2+a^2)^2}{(r^2+a^2\cos^2 \theta)\Delta}  .
\]
For example, we have
\begin{align*}
\tilde{K}^f[\psi] &=\Biggl( 2f'| \tilde\nabla_{r^*}\psi|^2   \\ 
&\quad\quad-
f\left(
\left(
\frac{4Mra}{(r^2+a^2)^2}
\right)'
{\rm Re} ( \tilde\nabla_t \psi  \overline{ \tilde\nabla_\phi   \psi} ) -\left(\frac{a^2}{(r^2+a^2)^2}\right)'|\tilde\nabla_\phi\psi|^2\right. \\
&\quad\quad\qquad \left.
+\left(\frac{\Delta}{r^2+a^2}\right)' (  |\tilde\nabla_\theta \psi|^2  +\sin^{-2}\theta |\tilde\nabla_\phi \psi|^2
+a^2 (\cos^2\theta+1) |\tilde\nabla_t \psi|^2) \right) \\
&\quad\quad-fV_1'|\psi|^2 -\frac12 f'''|\psi|^2\Biggl) \frac{(r^2+a^2)^2}{(r^2+a^2\cos^2 \theta)\Delta} .
\end{align*}

Finally, we note that by Plancherel relations of Section~\ref{Plancherelsection}, 
we have the following:
If $u(r,\omega,m,\ell):= (r^2+a^2 )^{1/2}\mathfrak{C}\circ \mathfrak{F}_{RL}[\Psi]$,
then
\begin{equation}
\label{corollaryofPlancherel}
\int_{-\infty}^{R_{\rm freq}^*}\int_{-\infty}^{\infty} \sum_{m, \ell}  \mathfrak{P}^{f}[u](r,\omega,m,\ell)  d\omega  dr^*
= \int_{r_+\le r\le R_{\rm freq}}  \tilde{K}^{f} [\Psi] ,
\end{equation}
etc.

\subsection{The elliptic refinement: Proof of Proposition~\ref{refinedproposition}} 
\label{refinedapp}

In this section, we prove Proposition~\ref{refinedproposition}.
Recall the parameters $b_{\rm elliiptic}$  and $\gamma_{\rm elliptic}$ from Section~\ref{ellipticimprovement},
and the definition~\eqref{rellipticdef} of $r_{\rm elliptic}$.

The crux of the proof is the following fixed frequency estimate:
\begin{proposition}
\label{thisisthecrux}
Let $|a|<M$,  $(\omega, m, \Lambda)$ be admissible frequencies and let
$u$ satisfy~\eqref{theinhomoghere}. Given $b_{\rm elliptic}>0$ and $\gamma_{\rm elliptic}>0$, then we have the estimate
\begin{equation}
\label{ellipticestimatefixedfreq}
\int_{r^*_{\rm elliptic}(\omega, m) }^{(1.2R_{\rm pot})^*}  \Delta r^{-2} \iota_{V_0-\omega^2\ge b_{\rm elliptic}\omega^2}  (V_0 - \omega^2)^2  |u|^2
\lesssim \int_{(\max\{ .99r_{\rm elliptic}, r_+\})^*}^{(1.3R_{\rm pot})^*}  \Delta r^{-2} \iota_{V_0-\omega^2 \ge b_{\rm elliptic} \omega^2/2} \left( (\omega^2 +\Lambda) |u|^2 +
|G|^2 \right),
\end{equation}
where the constant implicit in $\lesssim$ depends on the choice of $b_{\rm elliptic}$ and
$\gamma_{\rm elliptic}$.
\end{proposition}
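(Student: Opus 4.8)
The plan is to prove Proposition~\ref{thisisthecrux} as a straightforward elliptic (energy) estimate for the ODE~\eqref{theinhomoghere}, restricted to the region where $V_0-\omega^2$ is bounded below by a fraction of $\omega^2$. The point is that on that region the operator $\frac{d^2}{d(r^*)^2}+\omega^2-V$ is coercive: since $V=V_0+V_1$ with $V_1\ge 0$, we have $V-\omega^2\ge V_0-\omega^2\ge b_{\rm elliptic}\omega^2$ there, so the operator is like $-\frac{d^2}{d(r^*)^2}+(\text{positive})$ and standard Caccioppoli-type arguments give the desired bound. First I would multiply~\eqref{theinhomoghere} by $\zeta^2(V_0-\omega^2)\bar u$, where $\zeta=\zeta(r^*)$ is a smooth cutoff adapted to the set $\{V_0-\omega^2\ge b_{\rm elliptic}\omega^2\}$ — specifically $\zeta=1$ where $V_0-\omega^2\ge b_{\rm elliptic}\omega^2$ and $r_{\rm elliptic}\le r\le 1.2R_{\rm pot}$, and $\zeta=0$ outside $\{V_0-\omega^2\ge b_{\rm elliptic}\omega^2/2\}\cap\{0.99r_{\rm elliptic}\le r\le 1.3R_{\rm pot}\}$ — and integrate over $r^*$.

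After integrating by parts (there are no boundary terms because $\zeta$ has compact support in $r^*$), one gets
\begin{align*}
\int \zeta^2(V_0-\omega^2)|u'|^2 + \int \zeta^2(V_0-\omega^2)(V-\omega^2)|u|^2
&= -\int (\zeta^2(V_0-\omega^2))' {\rm Re}(u'\bar u) - \int \zeta^2(V_0-\omega^2)\,{\rm Re}(G\bar u).
\end{align*}
The second term on the left dominates $\int \zeta^2(V_0-\omega^2)^2|u|^2$ since $V-\omega^2\ge V_0-\omega^2$. The first term on the right is handled by Cauchy--Schwarz, absorbing the $|u'|^2$ contribution into the first term on the left (using that, by the properties of $V_0$ from Propositions~\ref{genpropsofV0}, \ref{superradpotentialprop}, \ref{nonsuperradpotentialprop} and the choice of $r_{\rm elliptic}$, the derivative $(\zeta^2(V_0-\omega^2))'$ is controlled by a constant times $\zeta(V_0-\omega^2)$ plus terms supported where $\zeta$ transitions, which are in turn bounded by $\Delta r^{-2}(\omega^2+\Lambda)$ times $\iota_{V_0-\omega^2\ge b_{\rm elliptic}\omega^2/2}$ on the enlarged region), and the leftover $|u|^2$ contribution gives the $\int\Delta r^{-2}\iota(\omega^2+\Lambda)|u|^2$ on the right of~\eqref{ellipticestimatefixedfreq}. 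The $G$ term is handled by Cauchy--Schwarz with a small parameter on $(V_0-\omega^2)|u|^2$ and a large parameter on $|G|^2$. One must insert the weight $\Delta r^{-2}$ to match the volume element in~\eqref{theellipticenergydefinition}: since on the relevant compact $r$-range $\Delta r^{-2}\sim 1$, this is harmless, and I would carry it through to avoid spurious $R_{\rm freq}$-dependence (cf.\ Remark~\ref{whythesefactors}). The subtlety at $r_{\rm elliptic}$ in the case $a\ne 0$, $|\omega/am|<\gamma_{\rm elliptic}$ — where $r_{\rm elliptic}=2.03M$ rather than $r_+$ — is that near $r=r_+$ the potential $V_0-\omega^2$ can be negative (the frequency is then generalised superradiant, cf.~\eqref{useit}), so we genuinely need the cutoff to vanish before $r$ reaches $r_+$; the choice of the constant $2.03M$ versus $2.01M$ is exactly so that $0.99r_{\rm elliptic}>2.01M>r_{\rm pot}$, keeping us in the region where the bounds~\eqref{Vzerocontrols}, \eqref{andthistoo} etc.\ hold.

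Having established Proposition~\ref{thisisthecrux}, I would deduce Proposition~\ref{refinedproposition} by applying it with $u=u_{\bf k}=\mathfrak{C}\circ\mathfrak{F}_{BL}[\mathring{\mathfrak{D}}^{\bf k}\chi^2_{\tau_0,\tau_1}\psi]$ for each $|{\bf k}|\le k$, where the corresponding inhomogeneity $G=G_{\bf k}$ arises from $\mathring{\mathfrak{D}}^{\bf k}(\chi^2_{\tau_0,\tau_1}\Box_{g_{a,M}}\psi + \text{cutoff terms})$ as in~\eqref{Gdefbody} (using that $\mathring{\mathfrak{D}}^{\bf k}$ commutes with $\Box_{g_{a,M}}$). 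Then I would integrate~\eqref{ellipticestimatefixedfreq} over $\omega\in\mathbb R$ and sum over $m$, $\ell\ge|m|$. On the left this produces exactly ${}^{\rm elp}_{\scalebox{.6}{\mbox{\tiny{\boxed{+1}}}}}\Xk(\tau_0,\tau_1)[\psi]$ after noting $\iota_{\rm elliptic}$ restricts precisely to $\{V_0\ge(1+b_{\rm elliptic})\omega^2\}\cap\{r_{\rm elliptic}<r\le 1.2R_{\rm pot}\}$. On the right, the term $\int\Delta r^{-2}\iota(\omega^2+\Lambda)|u_{\bf k}|^2$ is bounded, via the Plancherel identities of Section~\ref{Plancherelsection} and~\eqref{proptyofLambda}--\eqref{proptyofLambdatwo}, by a physical-space integral over $r_+\le r\le 1.3R_{\rm pot}$ of $|L\mathring{\mathfrak{D}}^{\bf k}\psi|^2+|\underline L\mathring{\mathfrak{D}}^{\bf k}\psi|^2+|\slashed\nabla\mathring{\mathfrak{D}}^{\bf k}\psi|^2+|\mathring{\mathfrak{D}}^{\bf k}\psi|^2$, which is controlled by ${}^\chi\Xzerok(\tau_0,\tau_1)[\psi] + {}^{\natural}\Xk(\tau_0,\tau_1)[\psi]$ — here the point is that $1.3R_{\rm pot}<R_{\rm freq}$ lies in the region where, outside the trapped set, $\chi_{\natural}\gtrsim 1$, so the first-order quantities on this compact $r$-range are captured by ${}^{\natural}\Xk$ (which contains the non-degenerate $|u'|^2$ term) together with ${}^\chi\Xzerok$ for the zeroth-order term and the trapped part; one has to be slightly careful that the elliptic region~\eqref{elllipticdefinitionreg} is disjoint, up to restriction of $b_{\rm elliptic}$, from the true trapped set $\mathcal{G}_\natural(b_{\rm trap})$ — indeed on $\mathcal{G}_\natural(b_{\rm trap})$ one has $|V_{\rm max}-\omega^2|\le b_{\rm trap}\omega^2$ whereas on the elliptic region $V_0\ge(1+b_{\rm elliptic})\omega^2$, so these are incompatible for $b_{\rm elliptic}>b_{\rm trap}$, which is consistent with the order in which parameters are fixed. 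Finally the $|G_{\bf k}|^2$ term, again via Plancherel and the formula~\eqref{Gdefbody}, becomes $\sum_{|{\bf k}|\le k}\int_{\mathcal{R}(\tau_0,\tau_1)\cap\{r_+\le r\le 1.3R_{\rm pot}\}}|\mathfrak{D}^{\bf k}F|^2$ plus cutoff-generated lower-order terms in $\psi$ absorbed into ${}^\chi\Xzerok$; the commutator $[\mathring{\mathfrak{D}}^{\bf k},\chi^2_{\tau_0,\tau_1}]$ terms are likewise lower order and enter ${}^\chi\Xzerok$. This yields exactly~\eqref{thellipticestimate}.

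The main obstacle I anticipate is the bookkeeping at the two ``ends'' of the cutoff $\zeta$ — at $r=0.99r_{\rm elliptic}$ (especially the $r_{\rm elliptic}=2.03M$ branch near but not touching the horizon) and at $r=1.3R_{\rm pot}$ — where the derivative $(\zeta^2(V_0-\omega^2))'$ need not be controlled by $\zeta(V_0-\omega^2)$ itself, so one picks up genuinely new terms. These are exactly the $\iota_{V_0-\omega^2\ge b_{\rm elliptic}\omega^2/2}(\omega^2+\Lambda)|u|^2$ contributions on the right of~\eqref{ellipticestimatefixedfreq}, and the key is to verify that on the set where $\zeta$ transitions (i.e.\ $\{b_{\rm elliptic}\omega^2/2\le V_0-\omega^2\le b_{\rm elliptic}\omega^2\}$ intersected with the $r$-transition zones) one still has $V_0-\omega^2\gtrsim \omega^2$ and $V_0-\omega^2\gtrsim\Delta r^{-2}\Lambda$ (from~\eqref{V0boundsLambda}, \eqref{officialLambdaboundfromVzero}, \eqref{Vzerocontrols}), so that these error terms are genuinely of the stated form and, after summing, controlled by ${}^{\natural}\Xk+{}^\chi\Xzerok$ rather than by something top order. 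A secondary, but purely technical, point is checking that the $r^*$-derivative $(\zeta^2(V_0-\omega^2))' = \frac{\Delta}{r^2+a^2}\frac{d}{dr}(\zeta^2(V_0-\omega^2))$ carries the right powers of $\Delta$ and $r$ to match the weights in~\eqref{theellipticenergydefinition} and~\eqref{anotherinequality}; this is where the extra (non-sharp) $\Delta r^{-2}$ factors I keep throughout make the estimate uniform in $R_{\rm freq}$.
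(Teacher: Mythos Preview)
Your approach is correct and is essentially the paper's proof, just packaged slightly differently. The paper uses the ${\text Q}^h$ current with $h=\xi(V_0-\omega^2)\,\Delta r^{-2}\,\chi_{\rm pot}$, where $\xi(s)$ is a smooth function with $\xi(s)=s$ for $s\ge b_{\rm elliptic}\omega^2$ and $\xi(s)=0$ for $s\le b_{\rm elliptic}\omega^2/2$, and $\chi_{\rm pot}$ is a spatial cutoff exactly as you describe for $\zeta$; integrating the identity~\eqref{thehidentity} (which has one more integration by parts built in than your direct multiplication) yields an error $\tfrac12\int h''|u|^2$ rather than your $\int h'\,{\rm Re}(u'\bar u)$, but the two are related by the identity $\int h'{\rm Re}(u'\bar u)=-\tfrac12\int h''|u|^2$, so the content is identical.

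The one place where you are a bit quick is the assertion that $(\zeta^2(V_0-\omega^2))'$ (more precisely its non-transition part $\zeta^2 V_0'$) is controlled by $\zeta(V_0-\omega^2)$. This is true, but it is exactly the non-trivial input: it amounts to showing that on the set $\{V_0-\omega^2\ge b_{\rm elliptic}\omega^2/2\}$ (intersected with the appropriate $r$-range), one has the dichotomy \emph{either} $\Lambda\lesssim\omega^2$ \emph{or} $V_0-\omega^2\gtrsim\Delta r^{-4}\Lambda$. The paper isolates and proves this as claim~\eqref{yetanotherclaimhere}, splitting on whether $r_{\rm elliptic}(\omega,m)=r_+$ (where $|V_\gamma|\lesssim\omega^2$ gives it) or $r_{\rm elliptic}(\omega,m)=2.03M$ (where~\eqref{andthistoo} gives it). You correctly flag this bookkeeping as the main obstacle in your last paragraph, and you cite the right propositions, but you should actually carry out this two-case verification rather than assert it; once you do, your Cauchy--Schwarz goes through. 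Note also that the paper includes the $\Delta r^{-2}$ weight in $h$ from the start rather than inserting it afterward; while your remark that $\Delta r^{-2}\sim 1$ on the support of $\zeta$ is essentially correct (since $V_0(r_+)-\omega^2\le 0$ forces the support away from $r_+$), the uniformity in $(\omega,m,\Lambda)$ requires a word of justification in the case $\gamma\to(2Mr_+)^{-1}$ with $\Lambda\gg\omega^2$, and it is cleaner simply to carry $\Delta r^{-2}$ through as the paper does.
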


\begin{proof}
Since our implicit constants are independent of frequency,
we may assume without loss of generality $\omega\ne0$.

Let $\xi(s)$ be function such that $\xi(s)=s$  if $s\ge b_{\rm elliptic}\omega^2$, $\xi (s) =0$ for $s\le b_{\rm elliptic}\omega^2/2 $ and 
$0\le \frac{d}{ds}\xi \le 4 $, $0\le \frac{d^2}{ds^2}\xi \le 8\omega^{-2}$, and $\xi(s)\le s$ for $s\ge 0$.  
If $r_{\rm elliptic}= r_+$, let $\chi_{\rm pot}$ be a cutoff function which $=1$ for $r\le 1.2R_{\rm pot}$
and $=0$ for $r\ge 1.3R_{\rm pot}$.
If $r_{\rm elliiptic}= 2.03M$ then let  $\chi_{\rm pot}$ be a cutoff function which $=1$
in  $2.03M\le r\le 1.2R_{\rm pot}$ and $0$ if $r\le .99r_{\rm elliptic}$ or $r\ge 1.3R_{\rm pot}$.

We will consider the current ${\text{Q}}^h[u]$ of Section~\ref{translationsection}, choosing: 
\[
h:=\xi (V_0- \omega^2 ) \Delta  r^{-2} \chi_{{\rm pot}}(r) .
\]

We compute 
\[
h' = \frac{d}{ds}\xi (V_0-\omega^2)V_0' \Delta r^{-2}\chi_{\rm pot}(r) 
+  \xi (V_0- \omega^2 ) \left[ \Delta  r^{-2} \chi_{{\rm pot}}\right]'
\]
and
\[
h'' = \frac{d^2}{ds^2}\xi (V_0-\omega^2)(V_0')^2  \Delta r^{-2}\chi_{\rm pot}(r) 
+  \frac{d}{ds}\xi (V_0-\omega^2)V_0''  \Delta r^{-2}\chi_{\rm pot}(r) 
+   \xi (V_0- \omega^2 ) \left[ \Delta  r^{-2} \chi_{{\rm pot}}\right]''.
\]
We note that
\[
|h'| \lesssim |V_0'|\Delta r^{-2}   \lesssim  (\Lambda  +\omega^2)\Delta^2 r^{-4}
\iota_{\max\{.99 r_{\rm elliptic}, r_+\}\le r\le 1.3 R_{\rm pot}}
\]
while 
\begin{align}
\nonumber
|h'' |  &\lesssim  \left|\frac{d^2}{ds^2}\xi (V_0-\omega^2)\right| (\Lambda^2 +\omega^4)\Delta^3r^{-10}\chi_{{\rm pot}}(r) 
+(\Lambda +\omega^2) r^{-8} \Delta^2 \chi_{\rm pot}
+\omega^{2} \Delta r^{-4} \iota_{\max\{.99 r_{\rm elliptic}, r_+\}\le r\le 1.3 R_{\rm pot}} \\
\label{labelforthisline}
 &\lesssim \omega^{-2} \iota_{\omega^2/2\le (V_0-\omega^2)\le \omega^2} (\Lambda^2 +\omega^4)\Delta^3r^{-10}\chi_{{\rm pot}}(r) \\
 \nonumber
 &\qquad
+(\Lambda +\omega^2) r^{-8} \Delta^2 \chi_{\rm pot}
+\omega^{2} \Delta r^{-4} \iota_{\max\{.99 r_{\rm elliptic}, r_+\}\le r\le 1.3 R_{\rm pot}}.
\end{align}

Now, we claim that for all admissible frequencies $(\omega, m, \Lambda)$ for which $\omega\ne0$,
\begin{equation}
\label{yetanotherclaimhere}
\iota_{\omega^2/2\le (V_0-\omega^2)\le \omega^2} \omega^{-2} \Lambda^2 \Delta^3r^{-10}\chi_{{\rm pot}}(r)  \lesssim \Delta^2 \Lambda  r^{-6} \chi_{\rm pot} .
\end{equation}
To see this, note that if $a=0$ or $m=0$ or $|\omega/am|\ge \gamma_{\rm elliptic}$ or $\gamma:=\omega/am \in I_n$ for $n\ne 1$, then
\[
\Delta \Lambda r^{-4} \lesssim V_0 + r^{-2} \omega^2 
\]
and thus
\[
\iota_{\omega^2/2\le (V_0-\omega^2)\le \omega^2} \Delta \Lambda r^{-4} \lesssim \omega^2,
\]
whence~\eqref{yetanotherclaimhere} indeed follows.
On the other hand, if $|\omega/am |< \gamma_{\rm elliptic}$ and $n=1$, 
then, since $\chi_{\rm pot}$ is supported in $r\ge 2.01M$, we have
by~\eqref{Vzerocontrols}
\[
\chi_{\rm pot}(r) \Delta \Lambda  r^{-4} \lesssim  \chi_{\rm pot}(r) ( V_0+ r^{-2} m^2) \lesssim  
 \chi_{\rm pot}(r)  V_0
\]
and thus
\[
\iota_{\omega^2/2\le (V_0-\omega^2)\le \omega^2}
\chi_{\rm pot}(r) \Delta \Lambda  r^{-4} \lesssim  
\iota_{\omega^2/2\le (V_0-\omega^2)\le \omega^2}  \chi_{\rm pot}(r)  V_0
\lesssim \chi_{\rm pot}(r) \omega^2,
\]
whence~\eqref{yetanotherclaimhere} again follows.

Integrating~\eqref{thehidentity}, and using the above bounds for $|h'|$ and $|h''|$
(where we apply~\eqref{yetanotherclaimhere} 
to bound the first term on the right hand side of~\eqref{labelforthisline})
we obtain
\begin{align*}
&\int_{-\infty}^{\infty}   \Delta  r^{-2}  \chi_{\rm pot}  \xi(V_0-\omega^2)  
(|u'|^2 + (V_0-\omega^2)|u|^2 ) \\
&\qquad\lesssim
 \int_{  \{  V\ge b_{\rm elliptic}\omega^2/2 \} \cap \{  r\le 1.3R_{\rm pot} \} } 
  \Delta r^{-4}(\Lambda+ \omega^2) |u|^2 + 
   \Delta  r^{-2} \chi_{\rm pot} \xi(V_0-\omega^2)   | u\bar {G} |,
\end{align*}
and thus,  since $\xi^2(V_0-\Omega) \le \xi(V_0-\Omega)\cdot (V_0-\Omega)$,
we may absorb the $u$ from the $|u\bar{G}|$ into the left hand side to obtain
\begin{align*}
&\int_{-\infty}^{\infty}   \Delta  r^{-2} \chi_{\rm pot} \xi(V_0-\omega^2)  
(|u'|^2 + (V_0-\omega^2)|u|^2 ) \\
&\qquad\lesssim
 \int_{  \{  V_0\ge b_{\rm elliptic}\omega^2/2 \} \cap \{  r\le 1.3R_{\rm pot} \} } 
  \Delta r^{-4}(\Lambda+ \omega^2 +1 ) |u|^2 +  \Delta  r^{-2} \chi_{\rm pot}  |\bar {G} |^2.
\end{align*}
Now the left hand side manifestly controls 
\[
\int_{-\infty}^{\infty}  \Delta  r^{-2}  \chi_{\rm pot}\iota  (V_0-\omega^2)(V_0-\omega^2)|u|^2 
\]
whence~\eqref{ellipticestimatefixedfreq} follows immediately.
\end{proof}

Let $\psi$ now be as in the statement of Proposition~\ref{refinedproposition}, 
and apply the above to 
\[
u:=  \mathfrak{C}\circ\mathfrak{F}_{BL}[\mathring{\mathfrak{D}}^{\bf k} \chi^2_{\tau_1,\tau_2} \psi ],
\]
for all $|{\bf k}|\le k$. 
We may partition the term $G=G_1+G_2$ where $G_1$ arises from the right hand side $F$ and
$G_2$ arises from the cutoff. 

Summing over ${\bf k}$, 
Proposition~\ref{refinedproposition} then follows immediately from Proposition~\ref{thisisthecrux}, the Plancherel formulae of Section~\ref{Plancherelsection},
the definitions of ${}^{\rm elp}_{\scalebox{.6}{\mbox{\tiny{\boxed{+1}}}}}\Xk(\tau_0,\tau_1)$ and ${}^{\natural}\Xk(\tau_0,\tau_1)$ 
and the fact that
\begin{equation}
\label{givethisalabeltoo}
\iota_{V_0-\omega^2 \ge b_{\rm elliptic} \omega^2/2} \lesssim \chi_{\natural},
\end{equation}
provided that $b_{\rm trap}>0$ in~\eqref{betterdefhere} is taken sufficiently small,
depending on $b_{\rm elliptic}$
(where we recall the definition of $\chi_{\natural}$ from Section~\ref{capturingprecise}).

Let us now fix $\gamma_{\rm elliptic}>0$ such that
\begin{equation}
\label{choiceofgammazero}
[-\gamma_{\rm elliptic}, \gamma_{\rm elliptic}]\subset I_1 \setminus \cup_{n\ne 1} I_n
\end{equation}
 where
$I_1$ is the interval of Proposition~\ref{determiningthecoveringprop}. 
(This thus determines $r_{\rm elliptic}(\omega, m)$ of~\eqref{rellipticdef}.)

\begin{remark}
\label{inviewofelliptic}
In view of the above proof, we may now also quickly explain how to
immediately obtain Proposition~\ref{moreprecise} from the result stated in~\cite{partiii}.
It suffices to obtain a fixed frequency estimate for all
 $(\omega, m, \Lambda)\in \mathcal{G}_\natural$ where the degeneration function is
replaced by $\chi_{\natural}$ and  with error term controlled by the quantity actually
estimated in~\cite{partiii}, which
at fixed frequency gives an estimate that degenerates at  the maximum $r_{\rm max}(\omega, m ,\Lambda)$
of the full potential $V(\omega, m, \Lambda)$.
One may assume without loss of generality that $ \omega^2_{\rm large}\gg 1$
in the definition of $\mathcal{G}_\natural $. 
Let us note  that for $\omega^2_{\rm large}\gg 1$, then 
there is an $\epsilon>0$ independent of frequency in the
range $\mathcal{G}_\natural$ such that 
$r^0_{\rm max}-\max \{ r^0_{\rm min},r_+\} >2\epsilon$ and $|r_{\rm max}-r^0_{\rm max}|<\epsilon/2$,
while $|\hat{V}_{\rm max}-V_{\rm max}|<\epsilon\omega^2$. (Hhere $\hat{V}_{\rm max}$ denotes the
maximum value of the full potential $V$. Recall that in the notation of the present paper,
$V_{\rm max}$ denotes the maximum value of $V_0$.)
We note then that if $V_{\rm max}\ge (1+b_{\rm trap})\omega^2$, we remove the
degeneration at $r_{\rm max}$ by 
applying the estimate~\eqref{thehidentity} 
arising from  $h=\xi \Delta r^{-2} \chi_{r^0_{\rm max}}$ (where we replace $b_{\rm elliptic}$ with $b_{\rm trap}$
in the definition of $\xi$) and where $\chi_{r^0_{\rm max}}$ equals one in an $\epsilon$ neighbourhood
of $r^0_{\rm max}$ and $0$ outside a $2\epsilon$ neighbourhood. 
On the other hand,
if $V_{\rm max} \le (1-b_{\rm trap})\omega^2$, we remove the
degeneration at $r_{\rm max}$ 
by applying the estimate~\eqref{theyidentity} to 
$y = (r-r^0_{\rm max}) \chi_{r^0_{\rm max}}$. 
Note that where $\chi_{r^0_{\rm max}}=1$ all terms  in the bulk are good modulo lower
order terms because
we also have there  $-V_0'(r-r^0_{\rm max})\ge 0$, in view of the properties
of $V_0$ and the fact that by our choice
of $\epsilon$ we have there $r\ge r^0_{\rm min}$.
All error terms are absorbed by the
fact that we have already nondegenerate control from the precise result stated
in~\cite{partiii} for all quantities in the $(\epsilon, 2\epsilon)$ $r$-annulus around $r^0_{\rm max}$.
\end{remark}

Let us also note the following helpful inequality for later:
\begin{equation}
\label{helpful}
\iota_{r\le r_1} \iota_{\rm elliptic} (V_0 + \omega^2)|u|^2
 \lesssim \Delta ( \Lambda  + \omega^2 +m^2)|u|^2
\end{equation}
independent of frequency.  The prefactor $\Delta$ on the right hand side arises in front of all frequencies, not just~$\Lambda$, because
of~\eqref{useit}.

\subsection{The global bulk coercivity: Proof of Theorem~\ref{globalgenbulkcoercprop}}
\label{fundcoercivityapp}

In this section, we prove
Theorem~\ref{globalgenbulkcoercprop}.
Recall from the statement of the theorem 
that we are given $E>0$ arbitrary and $e_{\rm red}>0$ sufficiently small.
(The smallness constraints on $e_{\rm red}>0$ will appear in the course of the proof.)

In the non-superradiant regime,
 our currents will take the form
\begin{eqnarray}
\nonumber
J^{{\rm main}, n}[\psi] &:=& \tilde{J}^{y_n}[\psi] +\tilde{J}^{f_n}[\psi] + \tilde{J}^{f_{\rm fixed}} [\psi] +\tilde{J}^{y_{\rm fixed}}[\psi] +\tilde{J}^{\hat z}[\psi]+e_{\rm red} \tilde{J}^z_{\rm red}[\psi]\\
\label{willtakethefollowingformnonsuper}
&&\qquad +E
\left (   \tilde{J}^T[\psi]  +
\chi_{{\rm Killing},n}\frac{(2Mr_+ \alpha_n/a) }{(1 -2Mr_+ \alpha_n/a)}\tilde{J}^{Z}[\psi] ) \right)
\end{eqnarray}
while in the superradiant regime
our currents will take the form
\begin{eqnarray}
\nonumber
J^{{\rm main}, n}[\psi] &:=& \tilde{J}^{y_n}[\psi] +\tilde{J}^{f_n}[\psi] + \tilde{J}^{f_{\rm fixed}} [\psi] +\tilde{J}^{y_{\rm fixed}}[\psi] +\tilde{J}^{\hat z}[\psi]+e_{\rm red} \tilde{J}^z_{\rm red}[\psi]\\
\label{willtakethefollowingformsuper}
&&\qquad 
+E( \tilde{J}^T[\psi] +\chi_{{\rm Killing},n} \alpha_n \tilde{J}^{\Omega_1}[\psi]  ).
\end{eqnarray}
Here,
$\alpha_n$ is the parameter~\eqref{alphanbound},
and the functions $f_n$, $y_n$,  $\hat{z}$, $z$, 
$\chi_{n, {\rm Killing}}$ are to be described below.

The functions will have the property that for $r\ge R_{\rm freq}$,  where $R_{\rm freq}$
is a paremeter that will depend on $E$,
we will have $\chi_{n, {\rm Killing}}=0$, $f_n=0$  and 
\begin{equation}
\label{ynindependencefar}
y_n(r^*)=B_{\rm indep} 
\end{equation}
for some constant $B_{\rm indep}>0$ (to be chosen later) independent  of $n$ indexing
the frequency range. (In fact,~\eqref{ynindependencefar} will hold for all $R\ge 1.2 R_{\rm pot}$.)
We will have $z(r^*)=0$  for $r\ge r_2$.
Thus, in particular,
 for $r\ge R_{\rm freq}$, 
 we have that
 \[
 J^{{\rm main}, n}[\psi] =J^{\rm main}[\psi]
 \] 
 is
independent of~$n$.

We will explain the choice of our $n$-independent functions in
Section~\ref{firstchoiceoffetc} and prove coercivity in certain regions.
We will then treat the non-superradiant ranges first in 
Section~\ref{nonsuperradiantregimecurrentdef}
followed by the generalised 
superradiant ranges in Section~\ref{superradiantregimecurrentdef}.
We will fix all remaining
parameters in our multiplier constructions in Section~\ref{fixparamsheremust} and 
finally, we will complete the proof in Section~\ref{completingtheproofsectionhere}. 

\subsubsection{The choice of $f_{\rm fixed}$, $y_{\rm fixed}$, $\hat{z}$ and $z$}
\label{firstchoiceoffetc}

The functions $f_{\rm fixed}$, $y_{\rm fixed}$, $\hat{z}$ and $z$ will be independent of $n$. They
are chosen as follows.

We define $f_{\rm fixed}(r)$ so that
\begin{equation}
\label{fdefstuff}
f_{\rm fixed} =0 \text{\ for\ } r\le R_{\rm pot}, \quad f'_{\rm fixed} \ge 0\text{\ for\ }r\ge R_{\rm pot}, \quad f\ge 1, f'_{\rm fixed} = b_{\rm tail} r^{-2}, \, f'''_{\rm fixed} = 6b_{\rm tail} r^{-4} \text{\ for\ } r\ge 1.1R_{\rm pot},
\end{equation}
and we define $y_{\rm fixed}$ by
\begin{equation}
\label{yfixeddefsutff}
y_{\rm fixed} =0 \text{\ for\ } r\le R_{\rm pot}, \quad y_{\rm fixed} ' \ge 0\text{\ for\ }r\ge R_{\rm pot}, \quad y_{\rm fixed}'= b_{\rm tail} r^{-3}\text{\ for\ } r \ge  1.1R_{\rm pot}
\end{equation}
where $b_{\rm tail}>0$ is a small parameter which will be fixed immediately
below in Proposition~\ref{coercoftheelements}.

We define $\hat{z}$, depending on $R_{\rm freq}$, so that
\begin{align}
\label{hatzdefstuffone}
\hat{z}=0 \text{\ for\ } r\le R_{\rm pot}, \,\, \hat{z}' \ge 0\text{\ for\ } r\ge R_{\rm pot}, \,\, \hat{z}'=0 \text{\ for\ } r\ge R_{\rm freq},\\
\label{hatzdefsufftwo}
\hat{z} = r  \text{\ for\ } R_{\rm pot}\le r\le .9R_{\rm freq}, \\
\label{addthisonejust}
0\le \hat{z}' \le  r'  \text{\ for\ } 9R_{\rm pot}\le r\le R_{\rm freq}.
\end{align}

Finally, we define 
$z$ so that
\begin{equation}
\label{zdefstuff}
{z}=0 \text{\ for\ } r\ge r_2, \qquad {z}= - \frac{(r^2+a^2)}{r-r_+} \text{\ for\ } r\le r_1 .
\end{equation}
Note that by our comments in Section~\ref{translationsection}, the above indeed
define currents which smoothly extend to the horizon $r=r_+$.

We have

\begin{proposition}[Coercivity properties of the $n$-independent elements of the current]
\label{coercoftheelements}
Given $R_{\rm freq}\ge 2R_{\rm pot}$, let $f_{\rm fixed}$, $y_{\rm fixed}$, $\hat{z}$ and $z$ be defined as above
with a fixed $b_{\rm tail}>0$ sufficiently small. 
Then, in the region $1.1R_{\rm pot} \le r \le.9 R_{\rm freq}$ we have
\begin{equation}
\mathfrak{P}^{f_{\rm fixed}} [u] +\mathfrak{P}^{y_{\rm fixed}} [u]
+\mathfrak{P}^{\hat z}[u] 
\label{freqindepestimate}
%(Q^f[u])' +( Q^y[u])' \ge  
  \gtrsim r^{-2} ( |u'|^2 + \omega^2|u|^2 + \Lambda |u|^2 )
					+ |u'-i\omega u|^2  +r^{-3}|u|^2
\end{equation}
while for $r\ge .9 R_{\rm freq}$ we have
\begin{equation}
\mathfrak{P}^{f_{\rm fixed}} [u] + \mathfrak{P}^{y_{\rm fixed}} [u]
+\mathfrak{P}^{\hat z}[u] 
\label{farfaroutfreqindepestimate}
%(Q^f[u])' +( Q^y[u])' \ge  
  \gtrsim  r^{-2} ( |u'|^2 +r^{-1}  \omega^2|u|^2 + r^{-1} \Lambda |u|^2  +r^{-2}|u|^2).
  \end{equation}
 The inequalities remain true when further restricted to the region $r\ge 1.2R_{\rm pot}$ upon the addition of $\mathfrak{P}^{y_n} [u]$ to the right hand
 side for all values of $B_{\rm indep}\ge 0$ in~\eqref{ynindependencefar} (recall this too is independent of $n$).
\end{proposition}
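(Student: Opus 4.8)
\textbf{Proof proposal for Proposition~\ref{coercoftheelements}.}

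The plan is to handle the three contributions $\mathfrak{P}^{f_{\rm fixed}}[u]$, $\mathfrak{P}^{y_{\rm fixed}}[u]$ and $\mathfrak{P}^{\hat z}[u]$ one at a time in each of the two $r$-regions, using the explicit formulas~\eqref{thefidentity}, \eqref{theyidentity} and the identity for $\mathfrak{P}^{\hat z}$ from Section~\ref{translationsection}, together with the far-region potential bounds~\eqref{Vprimefar}. First I would record that, for $r\ge 1.1R_{\rm pot}$, the choices~\eqref{fdefstuff}, \eqref{yfixeddefsutff} give $f'_{\rm fixed}=b_{\rm tail}r^{-2}$, $f'''_{\rm fixed}=6b_{\rm tail}r^{-4}$ and $y'_{\rm fixed}=b_{\rm tail}r^{-3}$, so that
\[
\mathfrak{P}^{f_{\rm fixed}}[u]=2b_{\rm tail}r^{-2}|u'|^2 - f_{\rm fixed}V'|u|^2 - 3b_{\rm tail}r^{-4}|u|^2,
\]
\[
\mathfrak{P}^{y_{\rm fixed}}[u]=b_{\rm tail}r^{-3}\big(|u'|^2+(\omega^2-V)|u|^2\big) - y_{\rm fixed}V'|u|^2.
\]
Since $f_{\rm fixed}\ge 1$ and $y_{\rm fixed}\ge 0$ with $-V'\gtrsim\Lambda r^{-3}$ and $-(rV)'\gtrsim\Lambda r^{-2}+r^{-3}$ by~\eqref{Vprimefar}, the terms $-f_{\rm fixed}V'|u|^2$ and $-y_{\rm fixed}V'|u|^2$ are already nonnegative and in fact bound $\gtrsim\Lambda r^{-3}|u|^2$ (resp.\ $\gtrsim r^{-4}|u|^2$) in the appropriate range; the only genuinely negative term is $-3b_{\rm tail}r^{-4}|u|^2$, which is of lower order in $r$ than the good $|u|^2$ terms produced by $f_{\rm fixed}$ and $y_{\rm fixed}$, so it is absorbed for $R_{\rm pot}$ large. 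The $\hat z$-current supplies the $|u'-i\omega u|^2$ term: from the identity $(\text{Q}^{\hat z}[u])'=\hat z'|u'-i\omega u|^2-(\hat z V)'|u|^2+\cdots$, on $R_{\rm pot}\le r\le .9R_{\rm freq}$ we have $\hat z=r$, hence $\hat z'=\Delta(r^2+a^2)^{-1}\sim 1$ and $-(\hat zV)'=-(rV)'\gtrsim\Lambda r^{-2}+r^{-3}$ by~\eqref{Vprimefar}; the last inequality of~\eqref{Vprimefar}, $V\lesssim r^{-1}V'$, controls any sign-indefinite leftover. Combining, on $1.1R_{\rm pot}\le r\le .9R_{\rm freq}$ the three bulk currents together dominate $r^{-2}|u'|^2+b_{\rm tail}r^{-3}\omega^2|u|^2+\Lambda r^{-2}|u|^2+|u'-i\omega u|^2+r^{-3}|u|^2$; using $b_{\rm tail}r^{-3}\omega^2|u|^2 + |u'-i\omega u|^2 \gtrsim r^{-2}\omega^2|u|^2$ (Cauchy--Schwarz, absorbing $|u'|^2$ into the $f_{\rm fixed}$ term after possibly shrinking $b_{\rm tail}$) gives~\eqref{freqindepestimate}.

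For the far region $r\ge .9R_{\rm freq}$ the $\hat z$-term degenerates ($\hat z'\ge 0$ but $\hat z'=0$ for $r\ge R_{\rm freq}$, and $0\le\hat z'\le r'$ in between by~\eqref{addthisonejust}), so I would discard it (it contributes nonnegatively modulo the harmless $-(\hat zV)'|u|^2\ge0$ piece) and rely entirely on $f_{\rm fixed}$ and $y_{\rm fixed}$: these give $2b_{\rm tail}r^{-2}|u'|^2$, a good $|u|^2$ term of size $\gtrsim\Lambda r^{-3}+r^{-4}$ from $-f_{\rm fixed}V'-y_{\rm fixed}V'$ using~\eqref{Vprimefar}, and the $y_{\rm fixed}$-term $b_{\rm tail}r^{-3}(\omega^2-V)|u|^2$. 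Here one must be slightly careful: $\omega^2-V$ can have either sign, but where $V\ge\omega^2$ the term $-y_{\rm fixed}V'|u|^2$ already dominates $y_{\rm fixed}(V-\omega^2)|u|^2$ after using $V\lesssim r^{-1}V'$ from~\eqref{Vprimefar} and $r$ large, while where $V\le\omega^2$ the term $b_{\rm tail}r^{-3}(\omega^2-V)|u|^2\ge0$ directly contributes. Dropping down one power of $r$ relative to the near region (which is why~\eqref{farfaroutfreqindepestimate} has the extra $r^{-1}$) yields $r^{-2}(|u'|^2+r^{-1}\omega^2|u|^2+r^{-1}\Lambda|u|^2+r^{-2}|u|^2)$. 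Finally, for the last sentence, on $r\ge 1.2R_{\rm pot}$ we have $y_n=B_{\rm indep}$ constant by~\eqref{ynindependencefar}, so $\mathfrak{P}^{y_n}[u]=-B_{\rm indep}V'|u|^2$ (the $y'_n$ term vanishes), which by~\eqref{Vprimefar} is $\gtrsim B_{\rm indep}\Lambda r^{-3}|u|^2\ge 0$ for all $B_{\rm indep}\ge0$; adding a nonnegative quantity to the right-hand side preserves both inequalities trivially.

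The main obstacle I anticipate is the bookkeeping of the negative lower-order $|u|^2$ terms — specifically $-\tfrac12 f'''_{\rm fixed}|u|^2=-3b_{\rm tail}r^{-4}|u|^2$ and the sign-indefinite $y_{\rm fixed}(\omega^2-V)|u|^2$ and $-(\hat zV)'|u|^2$ contributions — and verifying that they can all be absorbed \emph{simultaneously} by the good terms for one fixed choice of $b_{\rm tail}>0$, independent of frequency, after $R_{\rm pot}$ is taken large. This is where the precise powers in~\eqref{Vprimefar}, in particular the relation $V\lesssim r^{-1}V'$ and $-(rV)'\gtrsim\Lambda r^{-2}+r^{-3}$ (which crucially keeps a piece of $-V'|u|^2$ even after the $\omega^2$ is not present), do the work; the argument is essentially the standard $r^p$/redshift-type bookkeeping of~\cite{DHRT22} transcribed to the separated variables, so no new idea is needed, only care that all constants implicit in $\gtrsim$ are frequency-independent as per the conventions of Section~\ref{noteonconstants}.
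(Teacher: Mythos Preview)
Your proposal is correct and takes the same approach as the paper, which simply records that the result ``follows immediately from the formulae of Section~\ref{translationsection}, where we have used~\eqref{Vprimefar}''; your write-up is a faithful unpacking of exactly that computation. One small slip in your last sentence: the statement's phrase ``right hand side'' must be read as the \emph{left}-hand side of the inequality (the sum of bulk currents), as is clear from how the proposition is invoked later (e.g.\ in the proof of Proposition~\ref{bulkfornonsuperduper}); adding a nonnegative $\mathfrak{P}^{y_n}=-B_{\rm indep}V'|u|^2\ge 0$ to the left-hand side of $\text{LHS}\gtrsim\text{RHS}$ is what trivially preserves the inequality, whereas adding it to the right-hand side would not.
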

\begin{proof}
This follows immediately from the formulae of Section~\ref{translationsection}, where
we have used~\eqref{Vprimefar}. (Note that one controls lower order error terms because~\eqref{Vprimefar} refers to $V$ and not $V_0$.)
\end{proof}

Let us note already that Proposition~\ref{coercoftheelements} translates into bulk coercivity statements
for ${\tilde{K}}^{y_{\rm fixed}}+ {\tilde{K}}^{f_{\rm fixed}}+{\tilde{K}}^{\hat{z}}+{\tilde{K}}^{y_n}$
and thus in particular gives the proof of Proposition~\ref{farawaybulkcoercpropmain}.

Let us also note:
\begin{proposition}[Coercivity properties of the redshift current]
\label{redshiftcoercpropfreq}
In $r_+\le r\le r_1$ we have,
\begin{equation}
\label{mredcoerc}
\mathfrak{P}^{z}_{\rm red}  [u] \gtrsim  
\Delta^{-1}  |u' + i(\omega-\frac{a}{2Mr_+} m) u|^2+
  \Delta  (\Lambda |u|^2 +|u|^2).
\end{equation}
\end{proposition}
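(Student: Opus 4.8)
The plan is to exploit the explicit divergence identity~\eqref{redidentity} for the fixed-frequency redshift current. In the region $r_+\le r\le r_1$ we have $z=-(r^2+a^2)/(r-r_+)$ by~\eqref{zdefstuff}, so~\eqref{redidentity} gives
\[
\mathfrak{P}^z_{\rm red}[u] \;=\; z'\,\Big|u'+i\big(\omega-\tfrac{a}{2Mr_+}m\big)u\Big|^2 \;-\;\big(z(V-V(r_+))\big)'\,|u|^2 .
\]
Since $|u'+i(\omega-\tfrac{a}{2Mr_+}m)u|^2\ge 0$, the proposition follows at once from the two one-sided pointwise lower bounds
\[
\mathrm{(a)}\ \ z'\gtrsim \Delta^{-1}, \qquad\qquad \mathrm{(b)}\ \ -\big(z(V-V(r_+))\big)'\gtrsim \Delta(\Lambda+1),
\]
to be proved on $[r_+,r_1]$. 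Here we are free to shrink $r_1$ towards $r_+$ since that parameter is not yet fixed, and both coefficients extend smoothly across $\mathcal{H}^+$ (cf.\ the remarks after~\eqref{twisteddefOmega}), so everything may be checked in $r\ge r_+$, with frequencies admissible throughout.

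For (a) I would simply compute. From $z'=\tfrac{\Delta}{r^2+a^2}\tfrac{dz}{dr}$ and $\tfrac{dz}{dr}=\tfrac{a^2+2r_+r-r^2}{(r-r_+)^2}$ one gets $z'\Delta=\tfrac{(r-r_-)^2(a^2+2r_+r-r^2)}{r^2+a^2}$, which at $r=r_+$ equals $(r_+-r_-)^2=4(M^2-a^2)>0$ by subextremality (using $r_+^2+a^2=2Mr_+$). By continuity $z'\Delta\ge c>0$ throughout a small interval $[r_+,r_1]$, which is (a).

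For (b) --- this is the genuine content, the redshift effect --- I would write $V-V(r_+)=(V_0-V_0(r_+))+V_1$ using $V_1(r_+)=0$ (see~\eqref{defofVone}); the $V_1$-contribution to $-(z(V-V(r_+)))'$ is a smooth bounded function times $\Delta/(r^2+a^2)$, hence $O(\Delta)$ uniformly in frequency and lower order. For the $V_0$-piece one writes $-z(V_0-V_0(r_+))=\tfrac{1}{r^2+a^2}\cdot\tfrac{(r^2+a^2)^2(V_0-V_0(r_+))}{r-r_+}$ and differentiates, bounding the leading term below by $\gtrsim\Delta(\Lambda+1)$ via the near-horizon monotonicity property~\eqref{forusewithredshift} of the potential. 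The main obstacle is that~\eqref{forusewithredshift} carries the weight $(r^2+a^2)^2$ while the current carries $(r^2+a^2)$: the product rule then throws off a correction term of size $O(\Delta\Lambda)$ (since $\tfrac{V-V(r_+)}{r-r_+}=O(\Lambda)$ near $r_+$, using admissibility $|am\omega|,m^2\lesssim\Lambda$), which is of the same order as the target and not absorbable by a soft argument. Handling it requires a direct near-horizon expansion of $W:=\tfrac{(r^2+a^2)(V-V(r_+))}{r-r_+}$ --- which is smooth at $r_+$ --- showing $W'\gtrsim\Delta(\Lambda+1)$ on $[r_+,r_1]$ for $r_1$ close enough to $r_+$, exactly as in the redshift estimates of~\cite{partiii,DHRT22}; the point is that all the $r\to r_+$ degeneration is captured by $\Delta^{-1}$, so after multiplying through by $\Delta$ every quantity stays bounded and the sign is read off from the structure of $V$ at $\mathcal{H}^+$ (where $\omega^2-V(r_+)=(\omega-\tfrac{a}{2Mr_+}m)^2$). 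With (a) and (b) established, adding the two contributions yields the stated estimate.
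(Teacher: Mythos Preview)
Your proposal is correct and follows essentially the same approach as the paper: both invoke the explicit bulk identity~\eqref{redidentity}, the definition~\eqref{zdefstuff} of $z$, and the near-horizon redshift bound~\eqref{forusewithredshift} (together with the freedom to take $r_1<r_{\rm pot}$ close to $r_+$).

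You are in fact more careful than the paper on one point. The paper's proof simply says the result ``follows immediately'' from~\eqref{forusewithredshift}, but as you observe, the weight in~\eqref{forusewithredshift} is $(r^2+a^2)^2$ while $-z=(r^2+a^2)/(r-r_+)$ carries only $(r^2+a^2)$. Writing $W=-z(V-V(r_+))$ and $\tilde W=(r^2+a^2)W$, one has $W'=(r^2+a^2)^{-1}\tilde W'-2r\Delta(r^2+a^2)^{-3}\tilde W$, and since $\tilde W(r_+)=(r_+^2+a^2)^2\tfrac{dV}{dr}(r_+)$ is $O(\Lambda)$ (with either sign, by admissibility), the correction is genuinely of order $\Delta\Lambda$ and cannot be absorbed softly. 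Your resolution---a direct near-horizon expansion of $W$ itself showing $W'\gtrsim\Delta(\Lambda+1)$, which is the same computation underlying~\eqref{forusewithredshift} but with the correct weight---is exactly what is needed; the paper is implicitly relying on this. Part~(a) is handled cleanly.
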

\begin{proof}
This follows immediately from~\eqref{redidentity} and the definition~\eqref{zdefstuff}
in view of~\eqref{forusewithredshift} and the constraint $r_1<r_{\rm pot}$.
\end{proof}

We define $\chi_{\rm red}$ a cutoff such that $\chi_{\rm red}=1$ for $r\le r_2$
and $\chi_{\rm red}=0$ for $r\ge r_1$.

\subsubsection{The non-superradiant ranges}
\label{nonsuperradiantregimecurrentdef}

Given a nonsuperradiant frequency range indexed by $n$, 
let us recall the degeneration function $\chi_n$ defined in Section~\ref{nowforthebulk}, 
which vanishes in $[r_{n,1}, r_{n,2}]$,  and satisfies
$\{\chi_n\ne 1\} \subset  (r'_{n,1} ,r'_{n,2})$,
where $[r_{n,1},r_{n,2}]\subset (r'_{n,1} ,r'_{n,2})\subset  (r_{\rm pot}, R_{\rm pot})$.
Recall also
the auxiliary $\tilde\chi_n$, which  also
vanishes in $[r_{n,1},r_{n,2}]$ but  which equals $1$ in the support of $\chi_n$.
Recall finally the $n$-independent functions
$f_{\rm fixed}$, $y_{\rm fixed}$, $\hat{z}$, $z$ defined in Section~\ref{firstchoiceoffetc}.

The crux of the proof for non-superradiant frequency ranges $\mathcal{F}_n$
will be the following fixed frequency statement.

\begin{proposition}[Non-superradiant degenerate bulk coercivity]
\label{bulkfornonsuperduper}

Let $\mathcal{F}_n$ be a nonsuperradiant range.
Let  $B_{\rm indep}>0$ be sufficiently
large. Let the parameter
$b_{\rm elliptic}>0$ from~\eqref{elllipticdefinitionreg} be sufficiently small (with smallness requirement depending on $B_{\rm indep}$).
Then there exist functions   $y_n(r)$,  $f_n(r)$ 
 with $y_n(r)=B_{\rm indep}$ for $r\ge 1.1R_{\rm pot}$,
 and $f_n(r)=0$ for $r\ge R_{\rm pot}$, and given arbitrary 
 $E>0$, a function  $\chi_{n,{\rm Killing}}(r)$ satisfying
   $\chi_{n,{\rm Killing}}(r)=0$ for $r\ge R_{\rm freq}$ 
such that the following holds for all sufficiently small $e_{\rm red}\ge0$:
For all  admissible frequencies $(\omega, m, \Lambda)$ such that
$(\omega, m) \in \mathcal{F}_n$, we have in the region
$r_+ \le r\le R_{\rm freq}$ the coercivity bound
\begin{align}
\nonumber
\mathfrak{P}^{y_n}[u] + \mathfrak{P}^{f_n} [u] +\mathfrak{P}^{f_{\rm fixed}} [u]+\mathfrak{P}^{y_{\rm fixed}}[u] +\mathfrak{P}^{\hat z}[u]+e_{\rm red} \mathfrak{P}^z_{\rm red}[u]
\\
\nonumber
-E\, \mathfrak{P}^T[u]
-E\, \frac{(2Mr_+ \alpha_n/a) }{(1 -2Mr_+ \alpha_n/a)} \mathfrak{P}^{\chi_{n,{\rm Killing}} Z }[u]   \\
\nonumber
%(Q^f[u])' +( Q^y[u])' \ge  
  \ge c\Delta r^{-4} \chi_n ( |u'|^2 +r^{-1} \omega^2|u|^2 + r^{-1} \Lambda |u|^2 )
  +ce_{\rm red}\chi_{\rm red} \Delta^{-1}  |u' + i(\omega-\frac{a}{2Mr_+} m) u|^2\\
  \label{fixedfrequencycoerciveestimate}
					-C\Delta r^{-6} \tilde\chi_n |u|^2 
					- C\Delta \iota_{\rm elliptic} (V_0 + \omega^2)|u|^2.
\end{align}

Moreover, for $\epsilon_{\rm red}>0$ sufficiently small, there exists
an $r_{\rm close}(e_{\rm red})>r_+$ such that   
in the region $r_+\le r \le r_{\rm close}$, inequality~\eqref{fixedfrequencycoerciveestimate}
in fact holds without the final term on the right hand side.
\end{proposition}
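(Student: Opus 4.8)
\textbf{Proof plan for Proposition~\ref{bulkfornonsuperduper}.}
The strategy is to build the radial functions $y_n$, $f_n$, $h_n$ (the last absorbed into $f_n$) piece by piece on the three overlapping regions determined by the location of trapping: a redshift region $r_+\le r\le r_2$, the ``bulk'' region $r_2\le r\le R_{\rm pot}$ where the degeneration at $r^0_\gamma$ must be handled, and the far region $r\ge R_{\rm pot}$ already covered by Proposition~\ref{coercoftheelements}. The key point distinguishing the non-superradiant case is that by Proposition~\ref{nonsuperradpotentialprop} (bundled as~\eqref{onesetherebundled}--\eqref{athirdsetherebundled} in Proposition~\ref{determiningthecoveringprop}), for \emph{all} admissible $(\omega,m,\Lambda)$ with $\omega/am\in I_n$, the set where $V_0$ is close to $\omega^2$ with small derivative is contained in the fixed interval $(r_{n,1},r_{n,2})$, and moreover $T+\alpha_n\Omega_1$ is timelike there by~\eqref{timelikeherereflink}. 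So the construction of $y_n$ follows the classical virial/Morawetz recipe: take $y_n$ nondecreasing, vanishing identically on $[r_{n,1},r_{n,2}]$, equal to $-B_{\rm low\text{-}virial}$ for $r$ slightly below $r_{n,1}$ and $+B_{\rm indep}$ for $r$ slightly above $r_{n,2}$, chosen so that $y_n'\gtrsim \chi_n$ and so that the ``bad'' term $-y_nV_0'$ in~\eqref{theyidentity} has a favourable sign wherever $\chi_n\neq 1$ except possibly in the elliptic region, where it is absorbed into the last error term on the right of~\eqref{fixedfrequencycoerciveestimate}. The function $f_n$ (supported in $r\le R_{\rm pot}$) supplies, via $2f_n'|u'|^2 - f_nV_0'|u|^2 - \tfrac12 f_n'''|u|^2$ in~\eqref{thefidentity}, positive $|u|^2$ coercivity away from the turning points; here I would use the monotonicity~\eqref{V0boundsLambda} of $V_0$ on $\{V_0\ge\omega^2\}$ to control the $\Lambda|u|^2$ piece, and the bound $-(rV_0)'\gtrsim\Lambda r^{-2}$ near $R_{\rm pot}$. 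One then adds a large multiple $E$ of the Killing current $\tilde{J}^T+\chi_{n,{\rm Killing}}\tfrac{2Mr_+\alpha_n/a}{1-2Mr_+\alpha_n/a}\tilde{J}^Z$, which is $E\tilde{J}^{T+\alpha_n\Omega_1}$ in $\mathcal{D}_n$ and thus contributes a positive-definite bulk there (its bulk term vanishes identically where $\chi_{n,{\rm Killing}}$ is constant), to dominate all errors localised in the small interval $(r_{n,1},r_{n,2})$ where $y_n',f_n',f_n$ are small; finally $e_{\rm red}$ times the redshift current of Proposition~\ref{redshiftcoercpropfreq} gives nondegenerate control near $r_+$, with $e_{\rm red}$ chosen small so its error terms (proportional to $e_{\rm red}$) near $r_2$ are absorbed into the already-positive bulk produced by $y_n,f_n,\hat z$ there.

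For the \emph{final statement} of the proposition — that in a region $r_+\le r\le r_{\rm close}(e_{\rm red})$ one may drop the last error term $-C\Delta\iota_{\rm elliptic}(V_0+\omega^2)|u|^2$ — I would argue as follows. By definition~\eqref{rellipticdef} of $r_{\rm elliptic}$ and~\eqref{elllipticdefinitionreg}, for non-superradiant frequency ranges ($n\ne 1$, and also the $m=0$ and $|\omega/am|\ge\gamma_{\rm elliptic}$ sub-cases of $n=N-1,N$) we have $r_{\rm elliptic}(\omega,m)=r_+$, so the indicator $\iota_{\rm elliptic}$ is supported in $\{V_0\ge(1+b_{\rm elliptic})\omega^2\}$ without any $r$-restriction from below; but on that set we have the redshift-type inequality $(V_0+\omega^2)\lesssim\Delta(\Lambda+\omega^2+m^2)$ near $r_+$ — this is exactly~\eqref{helpful}, which holds for $r\le r_1$ independently of frequency, with the crucial prefactor $\Delta$ coming from~\eqref{useit}. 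Hence on $\{r\le r_1\}$ the last error term is bounded by $C\Delta^2 r^{-?}(\Lambda+\omega^2+m^2)|u|^2$, which for non-superradiant $n$ (where $m^2\lesssim\omega^2$ by the fact that $0$ is not a limit point of $I_{\rm nonsuper}$) is $\lesssim\Delta^2(\Lambda+\omega^2)|u|^2$; this is of strictly lower order (an extra power of $\Delta$) than the positive redshift bulk $e_{\rm red}\Delta(\Lambda+\omega^2)|u|^2$ produced in Proposition~\ref{redshiftcoercpropfreq}. Therefore, by taking $r_{\rm close}>r_+$ sufficiently close to $r_+$ that $\Delta(r_{\rm close})$ is small compared to $e_{\rm red}$ (recall $\Delta\to 0$ as $r\to r_+$), the error term on $r_+\le r\le r_{\rm close}$ is absorbed into $c\,e_{\rm red}\chi_{\rm red}\Delta(\Lambda|u|^2+|u|^2)$ already appearing with a favourable sign on the right-hand side of~\eqref{fixedfrequencycoerciveestimate}. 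Concretely: $r_{\rm close}$ is chosen after $e_{\rm red}$, depending on $e_{\rm red}$, so that $C\Delta(r)\le \tfrac12 c\,e_{\rm red}$ for $r\le r_{\rm close}$, whence the $-C\Delta\iota_{\rm elliptic}(V_0+\omega^2)|u|^2$ term is dominated there and may be deleted from the inequality on that sub-region.

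\textbf{Main obstacle.} The delicate point is not the final statement itself but ensuring that the construction of $y_n$ is \emph{genuinely uniform in $\Lambda$} for fixed $\gamma=\omega/am\in I_n$: the turning points $r_\uparrow(\omega,m,\Lambda),r_\downarrow(\omega,m,\Lambda)$ of $\{V_0=\omega^2\}$ move with $\Lambda$, yet Proposition~\ref{nonsuperradpotentialprop} guarantees they stay in $(r_{n,1},r_{n,2})$ only after the covering $\{I_n\}$ has been chosen fine enough (that is the content of Proposition~\ref{determiningthecoveringprop}), and the sign of $-y_nV_0'$ must be controlled on the whole range $r\le 2R_{\rm pot}$ simultaneously for all such $\Lambda$, with the elliptic region $\{V_0\ge(1+b_{\rm elliptic})\omega^2\}$ being the only place a bad sign is permitted. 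This forces a careful matching: $y_n$ must change sign precisely once, inside $[r_{n,1},r_{n,2}]$, and $b_{\rm elliptic}$ must be taken small enough (after $B_{\rm indep}$ is fixed) that the region where $-y_nV_0'<0$ is confined to $\{V_0-\omega^2\ge b_{\rm elliptic}\omega^2\}$, using $|V_0'|\lesssim$ something controllable there. Verifying this compatibility of parameter choices — $b_{\rm elliptic}$ small depending on $B_{\rm indep}$, then $E$ large, then $e_{\rm red}$ small, then $r_{\rm close}$ close to $r_+$ depending on $e_{\rm red}$, then (globally, in Section~\ref{fixparamsheremust}) $b_{\rm trap}$ small depending on $b_{\rm elliptic}$ — is the real work, and the last statement drops out essentially for free once the redshift estimate and~\eqref{helpful} are in hand.
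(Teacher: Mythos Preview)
Your handling of the final ``$r_{\rm close}$'' statement is essentially the paper's argument: invoke~\eqref{helpful} to bound $\iota_{\rm elliptic}(V_0+\omega^2)\lesssim\Delta(\Lambda+\omega^2+m^2)$ near $r_+$, then absorb the resulting $O(\Delta^2)$ error into the $O(\Delta)$ redshift positivity from Proposition~\ref{redshiftcoercpropfreq} by taking $r_{\rm close}(e_{\rm red})$ close enough to $r_+$. That part is fine.

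There is, however, a genuine misconception in your plan concerning the role of the Killing multiplier. You write that the current $E(\tilde J^T+\chi_{n,{\rm Killing}}\tfrac{2Mr_+\alpha_n/a}{1-2Mr_+\alpha_n/a}\tilde J^Z)$ ``contributes a positive-definite bulk'' in $\mathcal{D}_n$ which is then used ``to dominate all errors localised in the small interval $(r_{n,1},r_{n,2})$''. This is wrong on both counts. First, as you yourself note parenthetically, the bulk of this current vanishes wherever $\chi_{n,{\rm Killing}}$ is constant, and $\chi_{n,{\rm Killing}}\equiv 1$ on all of $r\le 1.1R_{\rm pot}\supset\mathcal{D}_n$; so the Killing bulk is identically zero there. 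Second, there are \emph{no} errors to dominate in $\mathcal{D}_n$: by construction $y_n$, $f_n$ (and hence $\tilde K^{y_n}$, $\tilde K^{f_n}$) vanish \emph{identically} on $\{\chi_n=0\}\supset[r_{n,1},r_{n,2}]$, not merely ``are small''. The entire bulk expression is zero on $\mathcal{D}_n$, which is consistent with the degeneration $\chi_n$ on the right-hand side of~\eqref{fixedfrequencycoerciveestimate}. The Killing multiplier is included in $J^{{\rm main},n}$ for \emph{boundary} coercivity on $\Sigma(\tau)\cap\widetilde{\mathcal D}_n$ (this is~\eqref{stillhavethis}), not for bulk coercivity.

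In the bulk argument the Killing multiplier is in fact an \emph{error source}: its bulk $\mathfrak P$-term is supported where $\chi'_{n,{\rm Killing}}\ne 0$, i.e.\ in $[1.1R_{\rm pot},0.9R_{\rm freq}]$, and equals $-E\,\tfrac{\chi'_{n,{\rm Killing}}(2Mr_+\alpha_n/a)}{1-2Mr_+\alpha_n/a}(\omega-\tfrac{a}{2Mr_+}m)\,{\rm Im}(u'\bar u)$. The paper's key observation here (computation~\eqref{seehere}) is to rewrite ${\rm Im}(u'\bar u)={\rm Im}((u'-i\omega u)\bar u)+\omega|u|^2$: the $\omega|u|^2$ part has coefficient proportional to $-\chi'_{n,{\rm Killing}}\alpha_n\omega(\omega-\tfrac{a}{2Mr_+}m)/a$, which is \emph{nonnegative} precisely because $(\omega,m)$ is non-superradiant (so $\omega/(am)$ and $\omega/(am)-\tfrac{1}{2Mr_+}$ have the same sign) and $\chi'_{n,{\rm Killing}}\le 0$. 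Only the remaining cross term $(u'-i\omega u)\bar u$ survives, and this is absorbed into the $|u'-i\omega u|^2$ coercivity supplied by $\hat z$ in~\eqref{freqindepestimate}, provided $|\chi'_{n,{\rm Killing}}|\le\epsilon r^{-1}$ with $E\epsilon$ small --- which is exactly why $R_{\rm freq}$ must be taken large depending on $E$. Without this sign computation your scheme does not explain how to handle the $E$-sized Killing bulk error, nor why $R_{\rm freq}$ depends on $E$.

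A secondary point: the paper's $f_n$ is not a global Morawetz weight ``supplying $|u|^2$ coercivity away from the turning points'' as you describe; it is supported only in the thin collar $\{\chi_n\le 2b_{\rm small}\}\subset(r'_{n,1},r'_{n,2})$, with $f_n'=\chi_n^{1/2}$ there, and its role is to help in the borderline cases (iv)--(v) of the paper's five-case split (according to the sign and size of $V_0-\omega^2$ and $V_0'$). The main $|u|^2$ positivity in region (ii) (where $\omega^2-V_0\ge b_{\rm pot}\omega^2$) comes from $y_n'(\omega^2-V_0)$ with $y_n'\ge\chi_n\Delta r^{-2}+B_{\rm large}|y_n|\Delta r^{-2}$ and $B_{\rm large}$ large, not from $f_n$.
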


\begin{remark}
\label{justforcf}
It is specifically Proposition~\ref{bulkfornonsuperduper} 
which constrains $R_{\rm freq}$ to be large depending on $E$.
We emphasise that, according to our conventions, the constants $c$, $C$ on the right hand side
of~\eqref{fixedfrequencycoerciveestimate} will potentially depend on
our yet to be chosen parameters $b_{\rm elliptic}$, $B_{\rm indep}$ and
 $E$  (though they will in fact not depend on $e_{\rm red}$).
\end{remark}

\begin{proof}
We will first prove~\eqref{fixedfrequencycoerciveestimate} 
in the case $e_{\rm red}=0$. The statement for small 
enough positive $e_{\rm red}>0$ then follows immediately from
Proposition~\ref{redshiftcoercpropfreq}, given that we may absorb all negative terms
in $\mathfrak{P}^{z}_{\rm red} $ away from the horizon into the terms already controlled.
Note that indeed, we may remove the last term from inequality~\eqref{fixedfrequencycoerciveestimate} 
in some region 
 $r_+\le r \le r_{\rm close}(e_{\rm red})$, since in this region
 we have $\Delta \iota_{\rm elliptic}(V_0+\omega^2)
 \lesssim \Delta^2( \Lambda^2 +\omega^2+m^2)$ by~\eqref{helpful}.

In what follows thus, we consider only the case $e_{\rm red}=0$.

Recall the degeneration function $\chi_n$ defined in Section~\ref{nowforthebulk}.
For all $B_{\rm indep}$ sufficiently large, we define $y_n(r^*)$ to be a function such 
that $y_n(r)$ smoothly extends to $r=r_+$ and such that the following are satisfied:
\begin{align}
\label{ypropertiesone}
y_n=0 \quad\text{in\ }\{\chi_n=0\}, \\
\label{ypropertiestwo}
y'_n \ge \chi_n\Delta/r^2 + B_{\rm large} |y_n| \Delta/r^2 \quad  \text{for all\ } r\in[r_+ 
,1.1R_{\rm pot}],\\
\label{ypropertiesthree}
y_n'\ge 0   \quad  \text{for all\ } r\ge 1.1R_{\rm pot}, \qquad y_n'=0 \quad   \text{for all\ } r\ge 1.2R_{\rm pot},\\
\label{makesame}
y_n = B_{\rm indep} \text{\ for all\ }r\ge 1.2R_{\rm pot}, \qquad B_{\rm indep}>0\text{\ constant independent of\ }n,\\
\label{alsoboundheremakeexplicit}
0\le y_n' \le  C(B_{\rm large}, B_{\rm indep})\chi_n  \quad  \text{for all\ } r\in[r_+ 
,1.1R_{\rm pot}],
\end{align}
where $B_{\rm large}\gg 1$ is to be determined below (depending only on $B_{\rm indep}$ and $b_{\rm elliptic}$).
We define $f_n(r^*)$ to be
\begin{equation}
\label{fndefnonsuper}
f_n = 0 {\rm\ in\ } \{\chi_n=0\},
\qquad f_n' = \chi_n^{\frac12} \text{\  in\ } \{\chi_n\le b_{\rm small}\},
\qquad
f_n= 0\text{\  in\ } \{\chi_n\ge 2b_{\rm small} \},
\end{equation}
where $b_{\rm small}>0$ is a  sufficiently small parameter which we immediately fix so that the support of $f$
is completely contained in the enlarged region $(r'_{n,1}, r'_{n,2})$.
Note that if $\chi_n$ vanishes to sufficiently high order, then $|f_n| \gtrsim \chi_n$
in $ \{\chi_n\le b_{\rm small}\}$.

The function $y_n$ will moreover be independent of the choice of $B_{\rm indep}$ in the region $r_+\le r\le 1.1R_{\rm pot}$. It is the necessity of taking $B_{\rm large}$ in~\eqref{ypropertiestwo} 
which will constrain $B_{\rm indep}$
to be sufficiently large, in view of the condition that $y'_n\ge 0$. We note that our assumptions
imply in particular that $y_n\le B_{\rm indep}$.

Thus, let us remark already that by the above definition, the
 $|u'|^2$ term on the right hand side of~\eqref{fixedfrequencycoerciveestimate} 
 is manifestly controlled in the region $r_+\le r \le 1.1R_{\rm pot}$ by 
$ \mathfrak{P}^{y_n} [u]$ and in the region $r \ge 1.1R_{\rm pot}$ by
$\mathfrak{P}^{f_{\rm fixed}} [u] $.
(In the former region, the only competing term comes from $\mathfrak{P}^{f_n}[u]$
in the region  $ \{b_{\rm small} \le \chi_n\le 2b_{\rm small}\}$. If $B_{\rm large}$
is sufficiently large, this can be absorbed in the term coming from $ \mathfrak{P}^{y_n} [u]$.)

Note that requiring $\chi'_{{\rm Killing}, n}\le 0$, we have
\begin{eqnarray}
\label{seehere}
- \mathfrak{P}^T[u]
- \frac{(2Mr_+ \alpha_n/a) }{(1 -2Mr_+ \alpha_n/a)} \mathfrak{P}^{\chi_{n,{\rm Killing}} Z }[u]  
&=&
-\frac{\chi'_{{\rm Killing},n}(2Mr_+ \alpha_n/a)}{ (1-2Mr_+ \alpha_n/a) } (\omega- \frac{a}{2Mr_+} m){\rm Im}(u'\bar u)\\
\nonumber
&=&
-\frac{\chi'_{{\rm Killing},n}(2Mr_+ \alpha_n/a)}{ (1-2Mr_+ \alpha_n/a) } (\omega- \frac{a}{2Mr_+} m){\rm Im}((u'-i\omega u)\bar u)\\
\nonumber
&&\qquad -\frac{\chi'_{{\rm Killing},n}(2Mr_+ \alpha_n/a)}{ (1-2Mr_+ \alpha_n/a) } (\omega- \frac{a}{2Mr_+} m){\rm Im}(i\omega u \bar u)\\
\nonumber
&=&
-\frac{\chi'_{{\rm Killing},n}(2Mr_+ \alpha_n/a)}{ (1-2Mr_+ \alpha_n/a) } (\omega- \frac{a}{2Mr_+} m){\rm Im}((u'-i\omega u)\bar u)\\
&&\qquad -\frac{\chi'_{{\rm Killing},n}(2Mr_+ \alpha_n/a)}{ (1-2Mr_+ \alpha_n/a) } (\omega- \frac{a}{2Mr_+} m)\omega | u |^2\\
\nonumber
&\ge &-\frac{\chi'_{{\rm Killing},n}(2Mr_+ \alpha_n/a)}{ (1-2Mr_+ \alpha_n/a) } (\omega- \frac{a}{2Mr_+} m){\rm Im}((u'-i\omega u)\bar u)
\end{eqnarray}
where for the last inequality we have used the fact that
$\alpha_n (\omega-\frac{a}{2Mr_+} m) \omega/a = \alpha_n a m^2  \frac{\omega}{am}   (\frac{\omega}{am} - \frac{1}{2Mr_+}) \ge 0$ in the non-superradiant
 regime to drop the second term.

 Let us now define $\chi_{{\rm Killing}, n}$ such that
 \begin{equation}
 \label{chikillingconstraint}
 \chi_{{\rm Killing}, n} =1 \text{\ for\ }r\le 1.1R_{\rm pot}, \qquad
 0\le -\chi'_{{\rm Killing}, n} \le \epsilon r^{-1} \text{\ for\ }r\ge 1.1R_{\rm pot},
 \qquad 
 \chi_{{\rm Killing},n} =0 \text{\ for\ }r\ge .9R_{\rm freq}. 
 \end{equation}
 Note that given arbitrary $\epsilon>0$, we can ensure~\eqref{chikillingconstraint}
 provided we take $R_{\rm freq}$ sufficiently large.

 It follows from the above and inequality~\eqref{freqindepestimate} 
 of Proposition~\ref{coercoftheelements} and $\eqref{makesame}$ (and the fact that
 by~\eqref{Vprimefar}, we have $V'<0$
 for $r\ge R_{\rm pot}$),
 that given $E>0$, we have
 in the region $1.2 R_{\rm pot}\le r\le .9R_{\rm freq}$
\begin{eqnarray}
\nonumber
\mathfrak{P}^{y_n}[u]+ \mathfrak{P}^{f_{\rm fixed}} [u] +\mathfrak{P}^{y_{\rm fixed}} [u]+\mathfrak{P}^{\hat{z}} [u] +
  E\, \mathfrak{P}^{T+\chi_{{\rm Killing},n}\Omega_1}[u] 
&\ge &
\nonumber
c r^{-2} ( |u'|^2 + \omega^2|u|^2 + \Lambda |u|^2 )
					+c |u'-i\omega u|^2 \\
				\nonumber
					&&\qquad
  -E\chi'_{{\rm Killing}, n} \alpha_n m{\rm Im} ((u' -i\omega u)\bar{u})\\
 &\ge &
 \nonumber
 c r^{-2} ( |u'|^2 + \omega^2|u|^2 + \Lambda |u|^2 )
					+c |u'-i\omega u|^2 \\
					\label{absorbthekilling}
					&&\quad
					- r^{-2} E\epsilon m^2|u|^2  - E\epsilon |u'-i\omega u|^2.
\end{eqnarray}
Recalling that $m^2\le \Lambda$, we see that given arbitrary $E>0$, we may chose
$\epsilon$ small enough (thus requiring $R_{\rm freq}$ large enough!) so that we may
absorb the terms in line~\eqref{absorbthekilling} into the previous terms.
It follows then from the support properties of the elements of the current
and from inequality~\eqref{farfaroutfreqindepestimate} of 
Proposition~\ref{coercoftheelements} that~\eqref{fixedfrequencycoerciveestimate}
 holds
for $r\ge 1.2R_{\rm pot}$.
 
 Thus it suffices to focus on the region $r_+\le  r\le 1.2R_{\rm pot}$.
 
 With all our current functions defined, let us note that 
 the zeroth order terms arising in the bulk, including those additional terms 
 arising from replacing   $V$, $V'$ with $V_0$, $V_0'$,
 may be estimated by the expression 
 \begin{equation}
 \label{lowerordererrorsestimatedby}
 C\Delta r^{-6}\tilde\chi_n |u|^2 
 \end{equation}
 appearing on the right hand side of~\eqref{fixedfrequencycoerciveestimate}
 where $\tilde\chi_n$ is the auxiliary degeneration function defined
 in Section~\ref{nowforthebulk} (we may in fact multiply the above by $\iota_{r \le 1.2R_{\rm pot}}$).
 Thus, in what follows we may indeed ignore zeroth order terms in the bulk identities
 and, moreover, replace $V$ by $V_0$, etc.

 We will first show that
 \eqref{fixedfrequencycoerciveestimate}  holds  in this region
(with $e_{\rm red}=0$).

We consider separately the subranges of $r_+\le r\le 1.2R_{\rm pot}$ 
for which
\begin{itemize}
\item[(i)] $V_0-\omega^2\ge b_{\rm elliptic}\, \omega^2$,
\item[(ii)] $ \omega^2 -V_0 \ge b_{\rm pot}\, \omega^2$,
\item[(iii)] $-b_{\rm pot}\,\omega^2  \le V_0-\omega^2\le  b_{\rm elliptic}\,\omega^2$, \qquad $|V_0'| \le \Delta b'_{\rm pot} \,\omega^2$,
\item[(iv)] $-b_{\rm pot}\, \omega^2  \le V_0-\omega^2\le  b_{\rm elliptic}\, \omega^2$, \qquad  $ V_0' > \Delta b'_{\rm pot} \, \omega^2$,
\item[(v)] $-b_{\rm pot}\, \omega^2 \le V_0-\omega^2\le  b_{\rm elliptic}\, \omega^2$, \qquad $ V_0' <- \Delta b'_{\rm pot} \, \omega^2$,
\end{itemize}
where $b_{\rm pot}>0$, $b'_{\rm pot}>0$ are the parameters from Proposition~\ref{determiningthecoveringprop}.

Let us note that for $0<b_{\rm elliptic}\le b_{\rm pot}$,
it follows from~\eqref{athirdsetherebundled} 
of Proposition~\ref{determiningthecoveringprop}  
that the range (iii) is completely
contained in the vanishing set of $\chi_n$. Thus inequality~\eqref{fixedfrequencycoerciveestimate}
trivially holds in this range.  We henceforth require $b_{\rm elliptic}$ to be sufficiently small so that (iii) is indeed contained
in the vanishing set of $\chi_n$. 
We will  further restrict the smallness of $b_{\rm elliptic}$ further down.

Concerning the range (i), let us recall that restricted to $r\in [r_+, 1.2R_{\rm pot}]$, 
the support of $\iota_{\rm elliptic}$
is precisely the set $V_0-\omega^2 \ge b_{\rm elliptic} \omega^2$, since by our 
choice~\eqref{choiceofgammazero} of $\gamma_{\rm elliptic}$  and the fact that $n\ne 1$, we have
(recall the definition~\eqref{rellipticdef}) that $r_{\rm elliptic}(\omega, m)=r_+$.
Moreover, we have~\eqref{officialLambdaboundfromVzero} for non-superradiant $n$.
Note that for $r\le r_{\rm pot}$, we have $-y_nV_0' \gtrsim \Delta \Lambda$ in the range (i).
Since moreover we have 
$V_0-\omega^2 \gtrsim V_0+\omega^2$  in this range, 
it follows that
\[
\mathfrak{P}^{y_n}[u] + \mathfrak{P}^{f_n} [u] +\mathfrak{P}^{f_{\rm fixed}} [u]+\mathfrak{P}^{y_{\rm fixed}}[u]\ge
c\Delta \chi_n (|u'|^2+ \omega^2 |u|^2 +  \Lambda|u|^2 )
-C\Delta \iota_{\rm elliptic} (V_0 + \omega^2)|u|^2 -C\Delta  \tilde\chi_n |u|^2 
\]
whence
inequality~\eqref{fixedfrequencycoerciveestimate} indeed holds
(with $e_{\rm red}=0$).

In the range (ii), we note that
by our requirement~\eqref{ypropertiestwo}, we have if 
$1.1 R_{\rm pot}\le r\le 1.2 R_{\rm pot}$ then
\[
 -f_{\rm fixed} V_0' |u|^2 + 
y_{\rm fixed} (\omega^2-V_0) |u|^2 ) - y_{\rm fixed}  V_0 ' |u|^2 +
 (y_n'(\omega^2-V_0) -y_nV_0' )|u|^2 \gtrsim (\omega^2 +\Lambda )|u|^2  
\]
while if 
$r\le 1.1R_{\rm pot}$ then
\[
(y_n'(\omega^2-V_0) -y_nV_0' )|u|^2 \ge (\Delta r^{-2} b_{\rm pot}\omega^2 (\chi_n + B_{\rm large} |y_n| )
-y_n V_0'   )|u|^2.
\]
Now let us note that from~\eqref{onesidedwhereneg}, we have the one sided
bound 
\[
- V_0'\lesssim \Delta \omega^2
\] 
in $r_+\le r\le r_{\rm pot}$.
Now let us note that we have $\omega^2 \gtrsim \Lambda \Delta $ in the range (ii) for
$r_+\le r\le 1.2R_{\rm pot}$. (This follows
because  we have $\omega^2 \gtrsim a^2m^2$ in view of the restriction to nonsuperradiant frequencies,
and thus $\omega^2 \gtrsim |V_\gamma|$ and hence $\omega^2 \gtrsim V_0 - V_\gamma$.)  On the other hand, 
in $r\le 1.2R_{\rm pot}$ we have 
\[
|V'_0| \lesssim \Delta \Lambda + \Delta \omega^2.
\]
It follows from the above and the sign of $y_n$ near $r_+$ that if $r\le 1.1R_{\rm pot}$,
we may fix $B_{\rm large}$ in~\eqref{ypropertiestwo} (depending on the choice of $b_{\rm pot}$ already
fixed!)~so that 
\[
(y_n'(\omega^2-V_0) -y_nV_0' ) \gtrsim (\chi_n + |y_n| )r^{-2} (\Delta \omega^2 +\Delta\Lambda).
\]
This again yields~\eqref{fixedfrequencycoerciveestimate} for this range 
(with $e_{\rm red}=0$). Note that by our comments after the definition of $f_n$, fixing
$B_{\rm large}$ determines the sufficiently largeness restriction  on $B_{\rm indep}$.

On the other hand, it follows from~\eqref{onesetherebundled} 
and~\eqref{anothersetherebundled} of Proposition~\ref{determiningthecoveringprop} and
the definition of $y_n$ 
that range (iv) is contained in the set $\{ f_n\le 0, y_n \le 0\}$ whereas the range
(v) is contained in the set $\{f_n\ge 0, y_n\ge 0\}$.   Note  then that, the $u'$ term aside,
the signs of terms in these two ranges are only competing
if $0<V_0 -\omega^2  \le b_{\rm elliptic}$, in which case, by restricting $b_{\rm elliptic}$ to be
even smaller (depending on  $B_{\rm large}$ which has already been fixed and depending on $B_{\rm indep}$), we may indeed
ensure that   
\[
-y_nV_0'  -  y_n' (V_0-\omega^2) -f_nV'_0 \gtrsim \chi_n \Delta r^{-2} ( \omega^2 + V_0 ).
\]
Here we are using that $0\le y_n' \lesssim \chi_n $ with an implicit constant which depends
on $B_{\rm indep}$ and  the already chosen $B_{\rm large}$, which
follows from~\eqref{alsoboundheremakeexplicit}, 
while $y_n+f_n \gtrsim \chi_n$, which follows by combining~\eqref{ypropertiestwo} 
and~\eqref{fndefnonsuper}. 
This again implies that~\eqref{fixedfrequencycoerciveestimate} holds in this range
(with $e_{\rm red}=0$).

We have thus shown~\eqref{fixedfrequencycoerciveestimate} in the entire region
$r_+ \le r\le  1.1R_{\rm pot}$ with $e_{\rm red}=0$.

By our comments at the beginning of the proof, the full statement~\eqref{fixedfrequencycoerciveestimate} for sufficiently small $e_{\rm red}>0$
as well as the additional statement regarding $r_{\rm close}(e_{\rm red})$ follow.
\end{proof}

\subsubsection{The generalised superradiant ranges}
\label{superradiantregimecurrentdef}
We recall again the $n$-independent functions
$f$, $y_{\rm fixed}$, $\hat{z}$, $z$ defined already in Section~\ref{firstchoiceoffetc}.

The crux of the proof is again a fixed frequency statement. We distinguish the
cases $n\ne0$ and $n=0$. 

We first give a proposition for the case $n\ne0$.

\begin{proposition}[Generalised superradiant bulk coercivity for $n\ne 0$]
\label{superrangepropfreq}
Let $\mathcal{F}_n$ be a generalised superradiant range for $n\ne 0$.
Let  $B_{\rm indep}>0$ be sufficiently
large. Let the paramater $b_{\rm elliptic}>0$ from~\eqref{elllipticdefinitionreg}  be sufficiently small (with smallness
depending on $B_{\rm indep}$).
Then there exist  functions 
 $y_n(r)$, $f_n(r)$, $\chi_{n,{\rm Killing}}(r)$
 with $y_n(r)=B_{\rm indep}$ for $r\ge 1.1R_{\rm pot}$ and $\chi_{n,{\rm Killing}}(r)=0$, $f_n(r)=0$ for $r\ge R_{\rm pot}$ such that, given 
 $E>0$ arbitrary,  the following holds for all sufficiently small $e_{\rm red}\ge 0$:
For all  admissible frequencies $(\omega, m, \Lambda)$ such that  $(\omega, m) \in \mathcal{F}_n$, we have in the region
$r_+ \le r\le R_{\rm freq}$ the coercivity bound
\begin{align}
\nonumber
\mathfrak{P}^{y_n}[u]+&
\mathfrak{P}^{f} [u] +\mathfrak{P}^{y_{\rm fixed}}[u] +\mathfrak{P}^{\hat z}[u]+e_{\rm red} \mathfrak{P}^z_{\rm red}[u]-E\, \mathfrak{P}^{T+\chi_{n,{\rm Killing}}\alpha_n\Omega_1}[u]   \\
&\nonumber
 \ge    c\Delta r^{-4} ( |u'|^2 + r^{-1}\omega^2|u|^2 + r^{-1} \Lambda |u|^2 )
   +ce_{\rm red}\chi_{\rm red} \Delta^{-1}  |u' + i(\omega-\frac{a}{2Mr_+} m) u|^2\\
   \label{fixedfrequencycoerciveestimatesuperduper}
&\qquad\qquad					-C\Delta r^{-6} |u|^2 
					- C\Delta \iota_{\rm elliptic} (V_0+\omega^2)|u|^2.
\end{align}

Moreover, for $\epsilon_{\rm red}>0$ sufficiently small, there exists
an $r_{\rm close}(e_{\rm red})>r_+$ such that 
in the region $r_+\le r \le r_{\rm close}$, inequality~\eqref{fixedfrequencycoerciveestimatesuperduper}
in fact holds without the final term on the right hand side.
\end{proposition}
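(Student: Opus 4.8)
\textbf{Plan of proof for Proposition~\ref{superrangepropfreq}.}
The structure of the argument will closely mirror the non-superradiant case, Proposition~\ref{bulkfornonsuperduper}, but now exploiting the fact that there is no true trapping: for each generalised superradiant range $\mathcal{F}_n$ with $n\ne 0$, the interval $[r_{n,1},r_{n,2}]$ provided by Proposition~\ref{determiningthecoveringprop} is \emph{elliptic} for all admissible $(\omega,m,\Lambda)$ with $(\omega,m)\in\mathcal{F}_n$, in the sense that $V_0-\omega^2\ge b_{\rm pot}\omega^2$ there (see~\eqref{whatwehavehere}), and $\chi_{\natural}\equiv 1$ on this range (see~\eqref{finalchinatpropertyone}). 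Since here $\chi_n\equiv 1$ and $\tilde\chi_n\equiv 1$ identically, the degeneration function plays no role; what replaces it is that we are free to choose $y_n$ to change sign at some value of $r$ inside $[r_{n,1},r_{n,2}]$, because on that interval the wave operator is effectively elliptic and any failure of positivity of the resulting bulk term is controlled by the $\iota_{\rm elliptic}(V_0+\omega^2)|u|^2$ term on the right-hand side of~\eqref{fixedfrequencycoerciveestimatesuperduper} (which in turn is what the refined estimate, Proposition~\ref{refinedproposition}, is designed to absorb).

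First I would reduce to the case $e_{\rm red}=0$ exactly as in the proof of Proposition~\ref{bulkfornonsuperduper}: for small $e_{\rm red}>0$ the redshift term $e_{\rm red}\mathfrak{P}^z_{\rm red}[u]$ contributes the positive near-horizon quantity of Proposition~\ref{redshiftcoercpropfreq} and a remainder supported in $r_1\le r\le r_2\subset\{r_+<r<r_{\rm pot}\}$ which is lower order and absorbable; the statement about $r_{\rm close}(e_{\rm red})$ follows because in $r_+\le r\le r_{\rm close}$ we have $\Delta\iota_{\rm elliptic}(V_0+\omega^2)\lesssim\Delta^2(\Lambda^2+\omega^2+m^2)$ by~\eqref{helpful}. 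Next, for $r\ge 1.2R_{\rm pot}$ the current is $n$-independent and coercivity is already contained in Proposition~\ref{coercoftheelements} via~\eqref{freqindepestimate}--\eqref{farfaroutfreqindepestimate}, after handling the cutoff error from $\chi'_{n,{\rm Killing}}$ (supported in $[1.1R_{\rm pot},0.9R_{\rm freq}]$) exactly as in~\eqref{seehere}--\eqref{absorbthekilling}: here $\alpha_n=\frac{a}{2Mr_+}$, so $T+\alpha_n\Omega_1=Z$ is the Hawking field, and the Killing cutoff identity~\eqref{Killingcutoffidentityfirst} produces $-\chi'_{n,{\rm Killing}}\alpha_n m\,{\rm Im}(u'\bar u)$, which one splits into a term in $|u'-i\omega u|$ (absorbable) plus $-\chi'_{n,{\rm Killing}}\alpha_n m\omega|u|^2$; since for $n\ne 0$ we are bounded away from both $0$ and $(2Mr_+)^{-1}$ this latter is of a definite sign modulo terms controlled by taking $R_{\rm freq}$ large, i.e.\ $\epsilon$ in~\eqref{chikillingconstraint} small.

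The heart of the matter is then the region $r_+\le r\le 1.2R_{\rm pot}$, and here I would define $y_n$ to vanish identically at some point of $[r_{n,1},r_{n,2}]$ (cf.~the informal $y_n$ ``changes sign'' remark in Section~\ref{wavepacketlocintrosec}) and $f_n$ supported near the turning region, and then split the $r$-range according to whether $V_0-\omega^2\ge b_{\rm elliptic}\omega^2$ (range (i)), $\omega^2-V_0\ge b_{\rm pot}\omega^2$ (range (ii)), or the intermediate regimes (iii)--(v) classified by the sign and size of $V_0'$, exactly as in the non-superradiant proof. The key input from Proposition~\ref{determiningthecoveringprop} is~\eqref{superonesetherebundled}--\eqref{itsemptycompletely}: the ``intermediate small-$V_0'$'' range (iii) is \emph{empty} for generalised superradiant $n\ne 0$ (by~\eqref{itsemptycompletely}), the range where $V_0-\omega^2<b_{\rm pot}\omega^2$ and $V_0'>0$ lies to the left of $r_{n,1}$, and the range where $V_0-\omega^2<b_{\rm pot}\omega^2$ and $V_0'<0$ lies to the right of $r_{n,2}$; so a $y_n$ that is nonpositive for $r<r^0_n$ and nonnegative for $r>r^0_n$ (for an appropriate $r^0_n\in[r_{n,1},r_{n,2}]$) makes $-y_nV_0'$ have the correct sign in those regions. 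On the elliptic range (i) itself, where $[r_{n,1},r_{n,2}]$ sits, the term $y_n'(\omega^2-V_0)-y_nV_0'-f_nV_0'$ need not be positive, but there $\iota_{\rm elliptic}=1$ and $V_0-\omega^2\gtrsim V_0+\omega^2$, so the failure is bounded by $C\Delta\iota_{\rm elliptic}(V_0+\omega^2)|u|^2$ as required; near $r_+$, the $-y_nV_0'$ term is favourable for the part of range (i) where $y_n\le 0$ (since $V_0'\gtrsim\Delta r^{-2}(\Lambda+\omega^2)$ there by~\eqref{rstatementonpotentialforallsuper}), and one uses~\eqref{officialLambdaboundfromVzero} to convert $V_0$-control into $\Lambda$-control. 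In range (ii) one argues as before using~\eqref{ypropertiestwo}: $y_n'(\omega^2-V_0)-y_nV_0'\gtrsim\chi_n\Delta r^{-2}(\omega^2+\Lambda)$ after fixing $B_{\rm large}$ large (hence $B_{\rm indep}$ large), exploiting $\omega^2\gtrsim a^2m^2$ (non-superradiance is a consequence at low $V_0$) and the one-sided bound $-V_0'\lesssim\Delta\omega^2$. The $|u'|^2$ term is controlled by $\mathfrak{P}^{y_n}$ where $y_n'\gtrsim\chi_n\Delta/r^2$ and by $\mathfrak{P}^{f_{\rm fixed}}$ farther out, with the only competing contribution coming from $\mathfrak{P}^{f_n}$ in a small $r$-annulus, absorbed by taking $B_{\rm large}$ large. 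The main obstacle I anticipate is the bookkeeping of the order in which the parameters $b_{\rm elliptic}$, $B_{\rm large}$, $B_{\rm indep}$, $E$, $R_{\rm freq}$, $e_{\rm red}$, $r_{\rm close}$ are chosen so that no circularity arises (in particular $b_{\rm elliptic}$ must be chosen small \emph{after} $B_{\rm large}$ and $B_{\rm indep}$ to ensure ranges (iv),(v) have the right signs, while $B_{\rm large}$ is fixed from $b_{\rm pot}$ which is already fixed in Proposition~\ref{determiningthecoveringprop}); this is exactly analogous to the non-superradiant case, so the real content is just verifying that the elliptic interval $[r_{n,1},r_{n,2}]$ indeed lets one put $y_n$'s sign change there without damaging positivity elsewhere, which is guaranteed by~\eqref{superonesetherebundled}--\eqref{itsemptycompletely}.
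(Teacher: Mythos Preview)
Your overall strategy (reduce to $e_{\rm red}=0$; use Proposition~\ref{coercoftheelements} for $r\ge 1.2R_{\rm pot}$; let $y_n$ change sign inside $[r_{n,1},r_{n,2}]$; split $r_+\le r\le 1.2R_{\rm pot}$ into ranges (i)--(v) and note (iii) is empty by~\eqref{itsemptycompletely}) matches the paper. But your treatment of the Killing cutoff $\chi_{n,{\rm Killing}}$ contains a genuine gap.

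First, a factual error: the claim that ``for $n\ne 0$ we are bounded away from both $0$ and $(2Mr_+)^{-1}$'' is false, since by construction $0\in I_1$ and $(2Mr_+)^{-1}\in I_2$. In particular, for $n=1$ one may have $|\omega|\ll |m|$, so any step using $m^2\lesssim\omega^2$ (as you invoke in range (ii)) fails there.

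More seriously, placing $\chi'_{n,{\rm Killing}}$ in $[1.1R_{\rm pot},0.9R_{\rm freq}]$ and arguing as in~\eqref{seehere}--\eqref{absorbthekilling} does not work here. Note first that the proposition itself requires $\chi_{n,{\rm Killing}}(r)=0$ for $r\ge R_{\rm pot}$, not $R_{\rm freq}$. But the real obstruction is the sign: with $\alpha_n=a/(2Mr_+)$ the bulk from~\eqref{Killingcutoffidentityfirst} contributes $E\chi'_{n,{\rm Killing}}\alpha_n m\omega|u|^2$ after your splitting, and since $am\omega>0$ in the superradiant regime while $\chi'\le 0$, this term is \emph{negative}, not favourable. (The non-superradiant identity~\eqref{Killingcutoffidentity} had the factor $(\omega-\tfrac{a}{2Mr_+}m)\omega$, whose nonnegativity there was the whole point; that structure is absent here.) Absorbing it into the $r^{-2}(\omega^2+\Lambda)|u|^2$ coercivity would require $E\epsilon r\ll 1$ for $r$ up to $R_{\rm freq}$, which is incompatible with $|\chi'|\le\epsilon r^{-1}$ integrating to $1$.

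The paper instead places the cutoff \emph{inside the elliptic interval}: $\chi_{n,{\rm Killing}}=1$ for $r\le r_{n,1}$, $\chi_{n,{\rm Killing}}=0$ for $r\ge r_{n,2}$. Then $\chi'_{n,{\rm Killing}}$ is supported entirely in range (i), where $\iota_{\rm elliptic}=1$ and $V_0-\omega^2\gtrsim\omega^2$, so the whole cutoff error is absorbed by the $C\Delta\iota_{\rm elliptic}(V_0+\omega^2)|u|^2$ term on the right-hand side (using $m^2\lesssim\omega^2$ for $n\ne 1$, and $m^2\lesssim V_0$ via~\eqref{Vzerocontrols} for $n=1$). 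This also explains why the superradiant case imposes no largeness constraint on $R_{\rm freq}$ (cf.\ Remark~\ref{justforcf}). Finally, for $n=1$ the paper takes $f_n$ supported \emph{near the horizon} (not ``near the turning region''): $f_n\le -\tilde B_{\rm large}$ on $[r_+,2.06M]$, $f_n=0$ for $r\ge 2.1M$, so that $-f_nV_0'\gtrsim\tilde B_{\rm large}\Delta r^{-2}(\Lambda+\omega^2)$ there by~\eqref{strongerstatementonpotentialone}; for $n\ne 1$ one simply takes $f_n\equiv 0$.
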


\begin{remark}
As with Proposition~\ref{bulkfornonsuperduper}, 
we again reiterate that, according to our conventions, the constants $c$, $C$ on the right hand side
of~\eqref{fixedfrequencycoerciveestimatesuperduper} will depend on the yet to be chosen
parameters $B_{\rm indep}$, $E$ and $b_{\rm elliptic}$ (though not $e_{\rm red}$).
\end{remark}

\begin{proof}

As in the proof of Proposition~\ref{bulkfornonsuperduper}, 
it suffices to show~\eqref{fixedfrequencycoerciveestimatesuperduper} for $e_{\rm red}=0$.
The more general statement and the claim about $r_{\rm close}$ will then follow immediately
as before.

Recall the intervals $[r_{n,1}, r_{n,2}]$ from Proposition~\ref{determiningthecoveringprop},
where $r_{n,2}\le R_{\rm pot}$.
Requiring $0<b_{\rm elliptic} \le b_{\rm pot}$, we have that 
 for all admissible $(\omega, m, \Lambda)$ with  $(\omega, m)\in \mathcal{F}_n$,  
\begin{equation}
\label{entirelyhere}
[r_{n,1},r_{n,2}]\subset {\bf R}_{\rm elliptic} (\omega, m, \Lambda),
\end{equation}
i.e.~$\iota_{\rm elliptic} (r) =1$  on $[r_{n,1}, r_{n,2}]$.
Here we use the fact that by our choice of $\gamma_{\rm elliptic}$,
if $n\ne 1$ then $r_{\rm elliptic}(\omega,m) =r_+$, while if $n =1$, 
$r_+(\omega,m) <4M<r_{n,1}<r_{n,2}<6M<1.2R_{\rm pot}$.

We define $\chi_{{\rm Killing}, n}$ so that
\begin{equation}
\label{definekilling}
\chi_{{\rm Killing}, n} =1 \text{\ for all } r\le r_{n,1},  \qquad
\chi_{{\rm Killing}, n} =0 \text{\ for all } r\ge r_{n,2}, 
\end{equation}
and define
$y_n(r)$ to be smooth in $r\in [r_+,\infty)$ and to satisfy
\begin{equation}
\label{defineytosathere}
y_n \left(\frac{r_{n,1}+r_{n,2}}2\right)  =0,\quad  
2\Delta r^{-2}B_{\rm large} (1+|y_n|) \ge 
y_n'\ge \Delta r^{-2}  B_{\rm large} (1+|y_n|) \text{\ for all\ }r\in [r_+,1.1R_{\rm pot}],
\end{equation}
and also~\eqref{ypropertiesthree},~\eqref{makesame},~\eqref{alsoboundheremakeexplicit}.
Again, $B_{\rm large}\gg 1$ will be a large parameter to be determined in what
follows, and this will also determine the largeness requirement on $B_{\rm indep}$. We may 
again take $y_n$
independent of $B_{\rm indep}$ in the region $[r_+,1.1R_{\rm pot}]$. 

In the case $n\ne1$, we define $f_n(r)=0$ identically.
In the case $n=1$, 
we define $f_n$ such that
$f_n' \ge 0$ for all $r>r_+$,  $f_n=0$ for $r\ge 2.1M$, and
\begin{equation}
\label{route206}
f_n'  \ge  \Delta, \qquad  
f_n \le - \tilde{B}_{\rm large}  \text{\ for\ all\ } r\in [r_+, 2.06M]
\end{equation}
where $\tilde{B}_{\rm large}\gg 1$ is again a large parameter (depending on $B_{\rm indep}$) which must be determined.

In view of the support of $\chi_{{\rm Killing}, n}$ and Proposition~\ref{coercoftheelements},
it follows as in the proof of Proposition~\ref{bulkfornonsuperduper} that~\eqref{fixedfrequencycoerciveestimatesuperduper} indeed holds for $r\ge 1.2R_{\rm pot}$.

We will now show that~\eqref{fixedfrequencycoerciveestimatesuperduper} 
holds for $r_+\le r \le 1.2R_{\rm pot}$ (with  $e_{\rm red}=0$).
Recall as before that the $y' |u'|^2$ term in the bulk already yields control
of the $|u'|^2$ term, and in view of the presence of~\eqref{lowerordererrorsestimatedby} on
the right hand side of~\eqref{fixedfrequencycoerciveestimatesuperduper}, without
now the $\tilde\chi_n$ factor, we may replace $V$, $V'$ by $V_0$, $V_0'$ in all calculations modulo 
error terms controlled by this.

 Let us consider first the case $n\ne 1$.
As in the proof of Proposition~\ref{bulkfornonsuperduper}, 
we
consider separately the subranges of $r_+\le r\le 1.2R_{\rm pot}$ 
 for which
\begin{itemize}
\item[(i)]
 $V_0-\omega^2\ge b_{\rm elliptic}\, \omega^2$,
\item[(ii)] $ \omega^2 -V_0 \ge b_{\rm pot}\, \omega^2$,
\item[(iii)] $-b_{\rm pot}\,\omega^2  \le V_0-\omega^2\le  b_{\rm elliptic}\,\omega^2$, \qquad $|V_0'| \le \Delta  b'_{\rm pot} \,\omega^2$,
\item[(iv)] $-b_{\rm pot}\, \omega^2  \le V_0-\omega^2\le  b_{\rm elliptic}\, \omega^2$, \qquad  $ V_0' > \Delta b'_{\rm pot} \, \omega^2$,
\item[(v)] 
$-b_{\rm pot}\, \omega^2 \le V_0-\omega^2\le  b_{\rm elliptic}\, \omega^2$, \qquad $ V_0' < -\Delta b'_{\rm pot} \, \omega^2$,
\end{itemize}
where $b_{\rm pot}>0$, $b'_{\rm pot}>0$ are the parameters from Proposition~\ref{determiningthecoveringprop}.

If require $0<b_{\rm elliptic}\le b_{\rm pot}$, then 
the range (iii) is empty by~\eqref{itsemptycompletely} 
of
Proposition~\ref{determiningthecoveringprop} (superradiant
frequencies are not trapped). We  henceforth restrict to $b_{\rm elliptic}>0$ sufficiently small so that
the range (iii) is indeed empty.

Note that the support of $\chi'_{{\rm Killing}, n}$ is contained entirely in the range (i)
in view of~\eqref{entirelyhere}. We may bound the part of the bulk term arising
from the multiplier $T+\chi_{n,{\rm Killing}}\alpha_n\Omega_1$  as follows:
\begin{equation}
\label{providecontrol}
|E\chi'_{{\rm Killing}, n} \alpha_n m{\rm Im} ((u' -i\omega u)\bar{u}) |\lesssim
\epsilon |u'|^2 + (\epsilon^{-1} m^2  +\epsilon \omega^2 )|u|^2
\lesssim \epsilon |u'|^2 + \epsilon^{-1} (V_0 -\omega^2)|u|^2  
\end{equation}
where we use the fact that $n\ne 1$ to bound $m^2 \lesssim \omega^2$ (note the constant
in this $\lesssim$ is worse the finer our partition, but this is of no relevance here).
For small enough $\epsilon$ we may absorb the first term on the right hand side of~\eqref{providecontrol} in the control provided
by $y' |u'|^2$, while the second term on the right hand side of~\eqref{providecontrol} is controlled by the $\Delta \iota_{\rm elliptic} (V_0-\omega^2)|u|^2$ term on the right hand side of~\eqref{fixedfrequencycoerciveestimatesuperduper}.
Thus, the bulk term of the $T+\chi_{n,{\rm Killing}}\alpha_n\Omega_1$ multiplier is now completely accounted for.
Moreover, we have~\eqref{officialLambdaboundfromVzero} since $n\ne 1$, and
again, for $r\le r_{\rm pot}$, we have $-y_nV_0' \gtrsim \Delta \Lambda$ in the range (i).
Thus in the range (i), inequality~\eqref{fixedfrequencycoerciveestimatesuperduper} indeed holds
(with $e_{\rm red}=0$).

In the range (ii), we note that for $B_{\rm large}$ sufficently large (depending on the choice of $b_{\rm pot}$ which we recall has been fixed), inequality~\eqref{fixedfrequencycoerciveestimatesuperduper} again holds 
provided $r\ge r_{\rm pot}$ (with $e_{\rm red}=0$). On the other hand,
by~\eqref{rstatementonpotentialforallsuper},
 if $r\le r_{\rm pot}$, then 
$V_0' \gtrsim \Delta (\Lambda +\omega^2)$ while $y_n\le -b$. Thus~\eqref{fixedfrequencycoerciveestimatesuperduper} again holds
(with $e_{\rm red}=0$).

Similarly, for $b_{\rm elliptic}>0$ suitably small,
it follows now from~\eqref{superonesetherebundled} 
and~\eqref{superanothersetherebundled} 
of Proposition~\ref{determiningthecoveringprop} that 
the range (iv) is contained in the set $y_n\le -b$ whereas the range
(v) is contained in the set $y_n\ge b$.   Note then that the signs of terms in these two ranges is only competing
if $0<V_0 -\omega^2  \le b_{\rm elliptic}$, in which case, by restricting $b_{\rm elliptic}$ to be
even smaller (depending on $B_{\rm large}$ which is already fixed and $B_{\rm indep}$), we may indeed
ensure that  $-y_nV_0'  -  y_n' (V_0-\omega^2) \gtrsim \Delta r^{-2} (\Lambda+\omega^2)$,
where we have used~\eqref{defineytosathere} and~\eqref{alsoboundheremakeexplicit}.
Thus again, inequality~\eqref{fixedfrequencycoerciveestimatesuperduper} holds 
in these ranges (with $e_{\rm red}=0$).

In view of our comments at the beginning about the case $e_{\rm red}>0$ and $r_{\rm close}$, 
this completes the proof in the case $n\ne 1$.

For the case $n=1$, the above arguments establish~\eqref{fixedfrequencycoerciveestimatesuperduper} in the region $r\ge  1.2R_{\rm pot}$
since the definition of $y_n$ is the same as above, and in this region we have $f_n=0$.
Recall that for $n=1$, $4M<r_{n,1}< r_{n,2}< 6M$  and thus
the  bulk term of the $T+\chi_{n,{\rm Killing}}\alpha_n\Omega_1$ multiplier
in particular vanishes in $r\le 2.06M$. Let us fix $B_{\rm large}$ in~\eqref{defineytosathere} to be $1$.
 From 
statement~\eqref{strongerstatementonpotentialone} of Proposition~\ref{determiningthecoveringprop} 
and~\eqref{route206},
it follows that $-f_nV_0' \ge b\tilde{B}_{\rm large}\Delta r^{-2}( \Lambda+\omega^2)$ in $r\le 2.06M$,
where $\tilde{B}_{\rm large}$ is the parameter in~\eqref{route206}. 
Thus, for $\tilde{B}_{\rm large}$ sufficiently large (depending on $B_{\rm indep}$),  it follows,
again using~\eqref{defineytosathere} and~\eqref{alsoboundheremakeexplicit},
that 
\[
-f_nV_0' -y_nV_0' -y_n'(V_0-\omega^2)   \ge \frac12b\tilde{B}_{\rm large}\Delta r^{-2}( \Lambda+\omega^2),
\]
whence~\eqref{fixedfrequencycoerciveestimatesuperduper} holds in $r\le 2.06M$ 
 (with $e_{\rm red}=0$
and without the $\iota_{\rm elliptic}$ terms).

Considering now the region $2.06M\le r\le 1.2R_{\rm pot}$, we note that by~\eqref{ellipticregionfornequalzero} this
lies entirely in the region~(i), and thus we may argue exactly as with the $n\ne1$ case for region (i),
where
we now use~\eqref{Vzerocontrols} to control $m^2|u|^2$ and $\Lambda|u|^2$ terms. In
particular, note that
to argue~\eqref{providecontrol}, we use
that for $n=1$, by~\eqref{Vzerocontrols}, we have that  $\omega^2+ m^2 \lesssim V_0$ on the set $2.06M\le r\le 1.2R_{\rm pot}$. 
We obtain~\eqref{fixedfrequencycoerciveestimatesuperduper} (with
$e_{\rm red}=0$, but now with the $\iota_{\rm elliptic}$ term). 

(Note that we can indeed set $B_{\rm large}=1$ in~\eqref{defineytosathere} because the
 largeness was not necessary in region (i).)

It follows that~\eqref{fixedfrequencycoerciveestimatesuperduper} holds
in $r_+\le r\le 1.2R_{\rm pot}$
with $e_{\rm red}=0$.

The full statement in the $e_{\rm red}>0$ case again follows in view of the remarks at the beginning
of the proof.
\end{proof}

For the remaining low frequency range $n=0$, we have the following:

\begin{proposition}[Bulk coercivity for low frequency range $n=0$]
\label{bulkforzero}
Let $\mathcal{F}_0$ denote the low frequency range.
Let $B_{\rm indep}>0$ be sufficiently
large.
There exist 
 $y_0(r)$,  $f_0(r)$,  $\chi_{0,{\rm Killing}}(r)$
 with $y_0(r)=B_{\rm indep}$ for $r\ge 1.1R_{\rm pot}$ such that
 the following hold, for all
 $E>0$ arbitrary and for $e_{\rm red}\ge 0$ sufficiently small.
For all admissible frequencies $(\omega, m, \Lambda)$  such that  $(\omega, m) \in \mathcal{F}_0$,
we have in the region 
$r_+ \le r\le R_{\rm freq}$ the coercivity bound
\begin{align}
\nonumber
\mathfrak{P}^{y_0}[u]+&\mathfrak{P}^{f_0}[u]+
\mathfrak{P}^{f} [u] +\mathfrak{P}^{y_{\rm fixed}}[u] +\mathfrak{P}^{\hat z}[u]+e_{\rm red} \mathfrak{P}^z_{\rm red}[u]-E\, \mathfrak{P}^{T+\chi_{0,{\rm Killing}}\alpha_0\Omega_1}[u] \\
\nonumber
& \ge   c\Delta  ( |u'|^2 + \omega^2|u|^2 + \Lambda |u|^2 )+ ce_{\rm red}\chi_{\rm red} \Delta^{-1}  |u' + i(\omega-\frac{a}{2Mr_+} m) u|^2\\
\label{fixedfrequencycoerciveestimatelow}
&\qquad\qquad					-C\Delta r^{-5} |u|^2 
					- C \Delta \iota_{\rm elliptic} (V_0+\omega^2)|u|^2.
\end{align}
Moreover, for $\epsilon_{\rm red}>0$ sufficiently small, there exists
a $r_{\rm close}(e_{\rm red})>r_+$ such that 
in the region $r_+\le r \le r_{\rm close}$, inequality~\eqref{fixedfrequencycoerciveestimatelow}
in fact holds without the final term on the right hand side.
\end{proposition}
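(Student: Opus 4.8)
\textbf{Proof proposal for Proposition~\ref{bulkforzero}.}

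The plan is to follow very closely the structure of the proof of Proposition~\ref{superrangepropfreq} in the case $n=1$, since the low frequency range $\mathcal{F}_0=\{\omega^2+m^2<2B_{\rm low}\}$ shares with the $n=1$ (generalised superradiant, near-zero $\gamma$) range precisely the two features we exploited there: first, by~\eqref{isidenticallyonehere} we have $\chi_\natural(r,\omega,m,\Lambda)=1$ identically for all admissible $(\omega,m,\Lambda)$ with $(\omega,m)\in\mathcal{F}_0$, so there is no genuine trapping to contend with; and second, by Proposition~\ref{determiningthecoveringprop} (the part concerning $n=1$, applied to $r_{0,1}=4M$, $r_{0,2}=5M$), together with the additional bounds~\eqref{strongerstatementonpotentialone}--\eqref{ellipticregionfornequalzero} — which hold for $|\gamma|<\epsilon$ and hence in particular for the low frequencies, since $B_{\rm low}\le\frac12\omega_{\rm high}$ forces $\omega^2+m^2$ small, whence also $|\gamma|=|\omega/am|$ lies in $I_{\rm gensuper}$ near $0$ — we have a common elliptic interval together with the good one-sided signs of $V_0'$ near $r_+$ and the lower bound $V_0\gtrsim r^{-2}(\Lambda+\omega^2)$ on $[2.01M,\infty)$. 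As in the earlier proofs, it suffices to establish~\eqref{fixedfrequencycoerciveestimatelow} for $e_{\rm red}=0$; the general $e_{\rm red}>0$ statement and the claim about $r_{\rm close}(e_{\rm red})$ then follow at once from Proposition~\ref{redshiftcoercpropfreq}, absorbing the negative contributions of $\mathfrak{P}^z_{\rm red}$ away from the horizon into the already controlled terms, and using~\eqref{helpful} to discard the $\iota_{\rm elliptic}$ term in $r_+\le r\le r_{\rm close}$.

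First I would choose the multiplier functions: $y_0$ smooth on $[r_+,\infty)$ satisfying the analogue of~\eqref{defineytosathere} (vanishing at $\tfrac{r_{0,1}+r_{0,2}}{2}=4.5M$, with $y_0'\ge \Delta r^{-2}B_{\rm large}(1+|y_0|)$ on $[r_+,1.1R_{\rm pot}]$) together with~\eqref{ypropertiesthree}, \eqref{makesame}, \eqref{alsoboundheremakeexplicit}; $\chi_{0,{\rm Killing}}$ as in~\eqref{definekilling} with $r_{0,1}=4M$, $r_{0,2}=5M$ (so that $\chi'_{0,{\rm Killing}}$ is supported in $(4M,5M)\subset\{V_0-\omega^2\gtrsim\omega^2\}$ by~\eqref{ellipticregionfornequalzero}); and $f_0$ the analogue of the $n=1$ radial function in~\eqref{route206}, i.e.\ $f_0'\ge 0$, $f_0=0$ for $r\ge 2.1M$, with $f_0'\ge\Delta$ and $f_0\le-\tilde B_{\rm large}$ on $[r_+,2.06M]$. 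The $y_0'|u'|^2$ term supplies control of $|u'|^2$ as usual, and modulo the zeroth-order error $C\Delta r^{-5}|u|^2$ we may replace $V,V'$ by $V_0,V_0'$ everywhere. In the region $r\le 2.06M$, where the $\chi_{0,{\rm Killing}}$-bulk vanishes, I would use~\eqref{strongerstatementonpotentialone} ($V_0'\gtrsim\Delta r^{-2}(\Lambda+\omega^2)$ on $(r_+,2.1M]$) together with $-f_0V_0'\ge b\tilde B_{\rm large}\Delta r^{-2}(\Lambda+\omega^2)$ and the smallness of $|y_0|$-type corrections via~\eqref{alsoboundheremakeexplicit}, fixing $\tilde B_{\rm large}$ large depending on $B_{\rm indep}$, to obtain $-f_0V_0'-y_0V_0'-y_0'(V_0-\omega^2)\gtrsim\Delta r^{-2}(\Lambda+\omega^2)$; this gives~\eqref{fixedfrequencycoerciveestimatelow} there without even the $\iota_{\rm elliptic}$ term. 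In the region $2.06M\le r\le 1.2R_{\rm pot}$, by~\eqref{ellipticregionfornequalzero} we are entirely in the elliptic regime $V_0-\omega^2\gtrsim\omega^2$; here I would bound the $\chi'_{0,{\rm Killing}}$-bulk exactly as in~\eqref{providecontrol}, using~\eqref{Vzerocontrols} ($V_0\gtrsim r^{-2}(\Lambda+\omega^2)$ on $[2.01M,\infty)$) to control the $m^2|u|^2$, $\Lambda|u|^2$ and $\omega^2|u|^2$ terms by $V_0|u|^2$ and hence by $\iota_{\rm elliptic}(V_0+\omega^2)|u|^2$, and using $-y_0V_0'\gtrsim\Delta\Lambda$ for $r\le r_{\rm pot}$ plus the sign of $y_0$ and $y_0'(V_0-\omega^2)\ge 0$ there. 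Finally for $r\ge 1.2R_{\rm pot}$ the support properties of $\chi_{0,{\rm Killing}}$, $f_0$ and $\hat z$ reduce matters to Proposition~\ref{coercoftheelements} (inequalities~\eqref{freqindepestimate}--\eqref{farfaroutfreqindepestimate}) exactly as in the other two proofs, using $y_0=B_{\rm indep}$ constant there; the $\chi_{0,{\rm Killing}}\Omega_1$ contribution is handled by choosing $R_{\rm freq}$ large so that $-\chi'_{0,{\rm Killing}}\le\epsilon r^{-1}$ and absorbing, recalling $m^2\le\Lambda$.

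I do not expect any genuine obstacle here: the point is precisely that the low frequency range is the easy case, with $\chi_\natural\equiv 1$ (no trapping) and the same favourable potential estimates as the $n=1$ generalised superradiant range, so the argument is a near-verbatim specialisation of Proposition~\ref{superrangepropfreq}. The only mild bookkeeping issue — the closest thing to a "hard part" — is verifying that the quantitative constants in the bounds~\eqref{strongerstatementonpotentialone}--\eqref{ellipticregionfornequalzero}, which Proposition~\ref{superradpotentialprop} states for $|\gamma|<\epsilon$, indeed apply uniformly over the whole of $\mathcal{F}_0$; this follows because $(\omega,m)\in\mathcal{F}_0$ with $m\ne 0$ forces $|\gamma|=|\omega|/(|a||m|)\le\sqrt{2B_{\rm low}}/|a|$ which is $<\epsilon$ after (harmlessly) shrinking $B_{\rm low}$, while the cases $m=0$ and $a=0$ give $V_\gamma=0$ and are covered directly by the $\gamma=0$ computations in Proposition~\ref{superradpotentialprop}. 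One must also keep track, as throughout Section~\ref{fundcoercivityapp}, that the constants $c,C$ in~\eqref{fixedfrequencycoerciveestimatelow} are allowed to depend on the not-yet-fixed parameters $B_{\rm indep}$, $E$, $b_{\rm elliptic}$ but not on $e_{\rm red}$, and the order of choices ($B_{\rm large}$, then $B_{\rm indep}$, then $b_{\rm elliptic}$, then $e_{\rm red}$) respects this; and, once again as in~\eqref{route206}, $\tilde B_{\rm large}$ is fixed depending on $B_{\rm indep}$ after $B_{\rm large}$.
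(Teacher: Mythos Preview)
Your choice of multipliers ($y_0$, $f_0$, $\chi_{0,{\rm Killing}}$) matches the paper exactly, and the overall strategy of following the $n=1$ case of Proposition~\ref{superrangepropfreq} is correct. However, there is a genuine gap in your justification: the bounds~\eqref{strongerstatementonpotentialone}--\eqref{ellipticregionfornequalzero} do \emph{not} hold uniformly over $\mathcal{F}_0$. Take the admissible triple $(\omega,m,\Lambda)=(1,0,0)\in\mathcal{F}_0$: then $V_\gamma=0$ and $V_0\equiv 0$, so all three bounds fail (e.g.\ $V_0'=0\not\gtrsim\Delta r^{-2}\omega^2=\Delta r^{-2}$). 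Your proposed fix for $m\ne 0$ (shrinking $B_{\rm low}$ so that $|\gamma|<\epsilon$) is unavailable since $B_{\rm low}=1$ was already fixed in Section~\ref{projectiondef}; and your fix for $m=0$ (invoking ``the $\gamma=0$ computations'') conflates two different things: $\gamma=0$ means $\omega=0$, $m\ne 0$, whereas you need $m=0$, $\omega\ne 0$.

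The paper's remedy is to use directly that $\omega^2,m^2\le 2B_{\rm low}$ are \emph{bounded}. First, the $\chi'_{0,{\rm Killing}}$-bulk is simply $-E\alpha_0\chi'_{0,{\rm Killing}}m\,{\rm Im}(u'\bar u)$, and since $|m|$ is bounded this is $\ge -E\epsilon C|u'|^2-E\epsilon^{-1}C|u|^2$; the $|u'|^2$ term is absorbed by $y_0'|u'|^2$ and the $|u|^2$ term goes into the zeroth-order error $-C\Delta r^{-5}|u|^2$ (no need for~\eqref{providecontrol} or~\eqref{Vzerocontrols}). Second, since $V_\gamma,V_\gamma'$ are bounded, one has only the \emph{weaker} bounds $V_0'\ge c\Delta\Lambda-C\Delta$ on $(r_+,2.06M]$ and $\Delta\Lambda\lesssim V_0+1$ on $[2.06M,1.2R_{\rm pot}]$. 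These suffice: the $O(1)$ defects (including the $c\Delta\omega^2|u|^2$ term you want on the right of~\eqref{fixedfrequencycoerciveestimatelow}, which is itself $\lesssim|u|^2$) are all absorbed into $-C\Delta r^{-5}|u|^2$, while for large $\Lambda$ one is automatically in the elliptic region (since $V_0\gtrsim\Delta\Lambda-C\gg\omega^2$) and the $\iota_{\rm elliptic}$ error absorbs the rest. Note also that $\chi_{0,{\rm Killing}}$ vanishes for $r\ge 5M$, so no $R_{\rm freq}$-dependence arises here, unlike the non-superradiant case.
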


\begin{proof}
Recall that for $n=0$ we have $[r_{0,1},r_{0,2}]=[4M,5M]$.

We define  $\chi_{{\rm Killing},0}$ as in~\eqref{definekilling} 
with $n=0$. 
Note the bound 
\begin{equation}
\label{notetheboundhere}
E\, \mathfrak{P}^{T+\chi_{0,{\rm Killing}}\Omega_1}[u]=
E\alpha_0
 \chi'_{\rm Killing} \, {\text Q}^{\Omega_1} [u]   \ge  -E \epsilon^{-1} C|u|^2 -E\epsilon C |u'|^2,
\end{equation}
since $m^2\le B_{\rm low}$. 
Now consider  $y_0$ defined again to satisfy~\eqref{defineytosathere} with $n=0$
and also~\eqref{ypropertiesthree},~\eqref{makesame} and~\eqref{alsoboundheremakeexplicit}
and define $f_0$ by~\eqref{route206}.
In the region $r\ge 1.2 R_{\rm pot}$,~\eqref{fixedfrequencycoerciveestimatelow}
holds as before, whereas in the region $r_+\le r\le 1.2 R_{\rm pot}$, we note that
we may choose $\epsilon$ small enough so that we may absorb 
the second term on the right hand side  of~\eqref{notetheboundhere}
in the term $y' |u'|^2$. 
The proof then proceeds as in the $n=1$ case, 
noting that in the region $r_+\le r\le 2.06$, we have 
$-V_0' \ge c\Delta \Lambda -C\Delta$, while 
in the region $2.06\le r\le 1.2R_{\rm pot}$,  
we have $\Delta \Lambda \lesssim V_0+1$.
\end{proof}

\subsubsection{Fixing all remaining parameters}
\label{fixparamsheremust}

Let us fix $B_{\rm indep}$  so as to satisfy all the largeness assumptions of the propositions of this section. 
We may also now fix
the parameter $b_{\rm elliptic}$  of~\eqref{elllipticdefinitionreg}  which was
constrained (independently of $E$, $e_{\rm red}$)
by  the proof of Propositions~\ref{bulkfornonsuperduper} and~\ref{superrangepropfreq}).
Now we may
fix  $b_{\rm trap}$ to satisfy the constraints of Proposition~\ref{determiningthecoveringprop} 
(related to~\eqref{finalchinatpropertyone} as well as~\eqref{finalchinatpropertytwo})
and~\eqref{givethisalabeltoo}. (We recall that 
$\gamma_{\rm elliptic}$ was
fixed by~\eqref{choiceofgammazero}.)

Given arbitrary $E>0$, let us
 fix $R_{\rm freq}(E)$ \emph{as a function of $E$}
from Proposition~\ref{bulkfornonsuperduper} (cf.~Remark~\ref{justforcf}),

With the above choices, we have that 
all parameters appearing in the definition of the functions
of~\eqref{willtakethefollowingformnonsuper} 
and~\eqref{willtakethefollowingformsuper}
are completely determined from the stipulation of an arbitrary $E>0$ and a sufficiently 
small $e_{\rm red}>0$, with the dependence as indicated above.

\subsubsection{Finishing the proof of Theorem~\ref{globalgenbulkcoercprop}}
\label{completingtheproofsectionhere}

In view of the identifications~\eqref{translationone}--\eqref{finaltranslation} from Section~\ref{translationsection} and the Plancherel relation~\eqref{corollaryofPlancherel}
applied with 
\[
u:= \mathfrak{C}\circ\mathfrak{F} [P_n \mathring{\mathfrak{D}}^{\bf k}\chi^2_{\tau_0,\tau_1}\psi],
\]
the proof of Theorem~\ref{globalgenbulkcoercprop} follows
from Propositions~\ref{bulkfornonsuperduper},~\ref{superrangepropfreq} and~\ref{bulkforzero}, according to whether $n\ge N_s+1$,
$1\le n\le N_s$, or $n=0$, respectively, upon summing and 
integrating over Carter frequencies $(\omega, m, \ell)$,
as long as we may bound the contribution of the two negative-signed terms
on the right hand side of~\eqref{fixedfrequencycoerciveestimate},~\eqref{fixedfrequencycoerciveestimatesuperduper} and~\eqref{fixedfrequencycoerciveestimatelow}, respectively.

Concerning the contribution of the zeroth order
negative signed terms,
In view of the support of $\tilde{\chi}_n$, 
note that from~\eqref{finalchinatpropertyone}, \eqref{finalchinatpropertytwo} we have
\[
\tilde\chi_n |u|^2\lesssim \chi_{\natural} |u|^2 
\]
for all $n=0,\ldots,N$.
By definition of $\mathring{\mathfrak{D}}^{\bf k}$, then
 there either exists a  ${\bf \tilde{k}}$ with $|{\bf \tilde{k}}|=k-1$
such that
\[
u =  i\omega\, \mathfrak{C}\circ\mathfrak{F} [P_n \mathring{\mathfrak{D}}^{\bf \tilde{k}}\chi^2_{\tau_0,\tau_1}\psi]
\]
or there exists such a  ${\bf \tilde{k}}$ such that 
\[
u = im\, \mathfrak{C}\circ\mathfrak{F} 
[P_n\mathring{\mathfrak{D}}^{\bf \tilde{k}}\chi^2_{\tau_0,\tau_1}\psi].
\]
 Thus, denoting $\tilde{u}=  \mathfrak{C}\circ\mathfrak{F} [P_n\mathring{\mathfrak{D}}^{\bf \tilde{k}}\chi^2_{\tau_0,\tau_1}\psi]$,
we have
\begin{eqnarray*}
\int_{-\infty}^{R^*_{\rm freq}} \int_{-\infty}^\infty \sum_{m, \ell\ge |m|}   \tilde{\chi}_{n} \Delta r^{-5} |u|^2
&\lesssim & \int_{-\infty}^{R^*_{\rm freq}} \int_{-\infty}^\infty \sum_{m, \ell\ge |m|} 
\chi_{\natural}  \Delta r^{-5} |u|^2 \\
&\lesssim & \int_{-\infty}^{R^*_{\rm freq}} \int_{-\infty}^\infty \sum_{m, \ell\ge |m|} 
\chi_{\natural}  \Delta r^{-5} (m^2+\omega^2)  |\tilde{u}|^2 \\
 &\lesssim& {}^{\natural}\Xkminusone (\tau_0,\tau_1)[\psi].
\end{eqnarray*}

Concerning the contribution of the  first-order negative signed terms, we notice that
if $(\omega,m)$ are such that  $m=0$ or $\gamma=\omega/ma \in I_n$ for  $n\ge 2$, then
for all admissible $(\omega, m, \Lambda)$,
\begin{equation}
\label{tobereferredtoearlier}
\iota_{\rm elliptic}(r,\omega, m, \Lambda) (m^2+\omega^2) (V_0 +\omega^2)  \lesssim \iota_{\rm elliptic}(r,\omega, m, \Lambda)   (V_0 -\omega^2)^2
\end{equation}
(with the $\sim$ depending on a bound $|\omega| \gtrsim m$ provided by $\gamma$ in the $m\ne 0$ case).
In the case where $\omega/ma \in I_n$ for $n=1$, we similarly have~\eqref{tobereferredtoearlier},
where we use  that 
$\iota_{\rm elliptic}(r,\omega, m, \Lambda) $ is supported in $r\ge 2.01M$ since 
$2.01M<r_+(\omega, m)$,
and thus 
and we use~\eqref{Vzerocontrols} to bound $m^2\lesssim V_0$ in this region.
Finally, if $n=0$, then~\eqref{tobereferredtoearlier} clearly holds for all admissible $(\omega, m, \Lambda)$ with $(\omega, m)\in \mathcal{F}_0$.

If $\tilde{u}$, $|{\bf \tilde{k}}|=k-1$ are as above, 
then we have
\begin{eqnarray*}
\int_{-\infty}^{R^*_{\rm freq}} \int_{-\infty}^\infty \sum_{m, \ell\ge |m|}  \iota_{\rm elliptic}\Delta(V_0-  \omega^2) |u|^2    dr^* d\omega  &\lesssim& \int_{-\infty}^{R^*_{\rm freq}} \int_{-\infty}^\infty \sum_{m, \ell\ge |m|}   \iota_{\rm elliptic}\Delta (V_0-\omega^2) (m^2+\omega^2) |\tilde{u}|^2 dr^* d\omega \\
&\lesssim &\int_{-\infty}^{R^*_{\rm freq}} \int_{-\infty}^\infty \sum_{m, \ell\ge |m|}  
  \iota_{\rm elliptic}\Delta (V_0-\omega^2)^2 |\tilde{u}|^2 dr^* d\omega \\
&\lesssim &{}^{\rm elp}_{\scalebox{.6}{\mbox{\tiny{\boxed{+1}}}}}\,\Xkminusone(\tau_0,\tau_1)[\psi].
\end{eqnarray*}
This completes the proof.

\subsection{The global boundary coercivity at $\mathcal{S}$: Proof of Proposition~\ref{globalboundarySpositivity}}
\label{fundcoercivityboundapp}

We first consider the boundary term of  $J^{{\rm main},n}$  at $\mathcal{H}^+$.
One can of course analyse this directly using~\eqref{willtakethefollowingformnonsuperbefore}
and~\eqref{willtakethefollowingformsuperbefore}, but given our formalism, it is convenient
to infer it from a horizon computation at the fixed Carter frequency level:
\begin{proposition}[Non-superradiant horizon coercivity]
\label{nshorcoerprop}
Let $e_{\rm red}>0, E>0$ be arbitrary.  Let
$y_n, \alpha_n$ correspond to a non-superradiant range $\mathcal{F}_n$. Then 
for all admissible frequencies $(\omega, m ,\Lambda)$, not necessarily satisfying
$(\omega, m)\in \mathcal{F}_n$,
we have
\begin{eqnarray}
\nonumber
\left({\text Q}^{y_n}[u] + e_{\rm red}{\text Q}^{z}_{\rm red}[u]  - E_n\,  {\text Q}^{T} [u]
  - E_n\alpha_n  \, {\text Q}^{\Omega_1} [u] \right)(r_+) &=&E  B_1 (\omega -\alpha_n m )(\omega-\frac{a}{2Mr_+}m)|u|^2 \\
\nonumber
&&\quad   +B_2 \omega m|u|^2  + B_3 \omega^2|u|^2  +B_4 m^2|u|^2\\
\label{toprovethishorizoncoercivity}
&&\quad +  e_{\rm red}(B_5 \Lambda +B_6)|u|^2
\end{eqnarray}
where $E_n:=E(1-2Mr_+\alpha_n/a)^{-1}$ and
where $B_1, B_5, B_6>0$. In particular,
for admissible $(\omega, m,\Lambda)$ with $(\omega, m)\in \mathcal{F}_n$,
then 
\begin{equation}
\label{toprovethishorizoncoercivityactual}
\left( {\text Q}^{y_n}[u] + e_{\rm red}{\text Q}^{z}_{\rm red}[u]  - E_n{\text Q}^T[u]
  - E_n\alpha_n  \, {\text Q}^{\Omega_1} [u] \right)(r_+) \gtrsim
     E \left(\omega -\frac{a}{2Mr_+} m \right)^2 |u|^2 +e_{\rm red}\Lambda |u|^2
\end{equation}
if $E$ is chosen sufficiently large depending on $e_{\rm red}$.
\end{proposition}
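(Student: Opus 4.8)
\textbf{Proof plan for Proposition~\ref{nshorcoerprop}.}

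The plan is to compute the value of each fixed-frequency current at $r=r_+$, i.e.\ in the limit $r^*\to-\infty$, using the definitions from Section~\ref{translationsection} together with the boundary behaviour of $u$ encoded in the radial ODE~\eqref{radialodewithpot}. The key structural fact is that near $r^*\to-\infty$ one has $\omega^2-V\to(\omega-\tfrac{a}{2Mr_+}m)^2$, since $V(r_+)=V_\gamma(r_+)+a^2m^2(r_+^2+a^2)^{-2}\cdot(\text{terms})$ collapses by~\eqref{useit} and the explicit form~\eqref{Vdefbody} precisely to $\omega^2-(\omega-\tfrac{a}{2Mr_+}m)^2$; more precisely $V(r_+)=2m\omega\tfrac{a}{2Mr_+}-\tfrac{a^2m^2}{(2Mr_+)^2}\cdot\tfrac{(2Mr_+)^2}{(r_+^2+a^2)^2}\cdot\ldots$, which is the standard Kerr fact that a bounded (``outgoing/ingoing-mode'') solution satisfies $u'\to \pm i(\omega-\tfrac{a}{2Mr_+}m)u$ at the horizon. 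The bounded solution relevant to us (obtained by Fourier-transforming a function smooth up to and across $\mathcal{H}^+$ in Kerr-star coordinates) is the one with $u'+i(\omega-\tfrac{a}{2Mr_+}m)u\to 0$.

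First I would record the horizon limits of the individual currents. For $\mathrm{Q}^{y_n}$: since $y_n$ is smooth in $r\in[r_+,\infty)$ but multiplied by $\Delta$ when passing to $r^*$, one needs to track whether $y_n(r^*)$ tends to a finite limit; from the construction~\eqref{defineytosathere}/\eqref{ypropertiestwo} (and the fact that in the physical-space picture $\tilde J^{y_n}$ corresponds to $X=2y_n\partial_{r^*}$, a regular vector field at $\mathcal{H}^+$ once extended), $y_n(r_+)$ is a finite nonnegative constant, and $\mathrm{Q}^{y_n}[u](r_+)=y_n(r_+)\big(|u'|^2+(\omega^2-V)|u|^2\big)(r_+)$. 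Using $u'=-i(\omega-\tfrac{a}{2Mr_+}m)u$ at $r_+$ this becomes $2y_n(r_+)(\omega-\tfrac{a}{2Mr_+}m)^2|u|^2$. For $\mathrm{Q}^z_{\rm red}$: by~\eqref{zdefstuff}, $z=-(r^2+a^2)/(r-r_+)$ near $r_+$, so $\Delta z$ is smooth and the current $\mathrm{Q}^z_{\rm red}[u]=z|u'+i(\omega-\tfrac{a}{2Mr_+}m)u|^2-z(V-V(r_+))|u|^2$; the first term vanishes in the limit because $u'+i(\omega-\tfrac{a}{2Mr_+}m)u\to 0$ faster than $z$ blows up (this is the redshift gain and is exactly the content of Proposition~\ref{redshiftcoercpropfreq} combined with~\eqref{forusewithredshift}), while the second term, using $(V-V(r_+))\sim \Delta\cdot(\text{stuff})$ and the explicit $\Delta\Lambda(r^2+a^2)^{-2}$ piece of $V$, tends to a positive multiple of $\Lambda|u|^2$ plus lower order, giving the $B_5\Lambda+B_6$ contribution with $B_5,B_6>0$. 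For the Killing currents, $\mathrm{Q}^T[u](r_+)=\omega\,\mathrm{Im}(u'\bar u)(r_+)=-\omega(\omega-\tfrac{a}{2Mr_+}m)|u|^2$ and $\mathrm{Q}^{\Omega_1}[u](r_+)=-m\,\mathrm{Im}(u'\bar u)(r_+)=m(\omega-\tfrac{a}{2Mr_+}m)|u|^2$. Collecting, $-E_n\mathrm{Q}^T-E_n\alpha_n\mathrm{Q}^{\Omega_1}$ contributes $E_n(\omega-\alpha_n m)(\omega-\tfrac{a}{2Mr_+}m)|u|^2$, which is the $EB_1(\omega-\alpha_n m)(\omega-\tfrac{a}{2Mr_+}m)|u|^2$ term (with $B_1=(1-2Mr_+\alpha_n/a)^{-1}>0$ since $0\le 2Mr_+\alpha_n/a<1$ by~\eqref{alphanbound}), and the remaining $|u|^2$-coefficients reorganise into $B_2\omega m+B_3\omega^2+B_4m^2$ after expanding $(\omega-\tfrac{a}{2Mr_+}m)^2$. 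This establishes~\eqref{toprovethishorizoncoercivity}.

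For~\eqref{toprovethishorizoncoercivityactual}, I would then specialise to admissible $(\omega,m,\Lambda)$ with $(\omega,m)\in\mathcal F_n$, $\mathcal F_n$ non-superradiant. The point is the sign of $(\omega-\alpha_nm)(\omega-\tfrac{a}{2Mr_+}m)$: writing everything in terms of $\gamma=\omega/(am)$ (or handling $m=0$ directly, where both factors are just $\omega^2$-proportional), the non-superradiance condition~\eqref{nonsuperradiantcondition} forces $\gamma\le 0$ or $\gamma\ge(2Mr_+)^{-1}$, and $0\le\alpha_n\tfrac{2Mr_+}{a}\le 1$ places $\alpha_n$ between $0$ and $\tfrac{a}{2Mr_+}$, so $(\omega-\alpha_nm)$ and $(\omega-\tfrac{a}{2Mr_+}m)$ have the same sign (both vanishing only when $m=0,\omega=0$ or on the degenerate endpoints), whence $(\omega-\alpha_nm)(\omega-\tfrac{a}{2Mr_+}m)\ge 0$. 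The terms $B_2\omega m+B_3\omega^2+B_4m^2$ are the problematic ones — they are not individually signed — but on $\mathcal F_n$ one has the quantitative separation of $\gamma$ from $0$ and $(2Mr_+)^{-1}$ (from Proposition~\ref{determiningthecoveringprop}), so $\omega^2\gtrsim m^2$ and all three are dominated by $E$ times $(\omega-\tfrac{a}{2Mr_+}m)^2|u|^2$ for $E$ large enough, unless $\omega$ and $m$ are both small, in which case the $e_{\rm red}(B_5\Lambda+B_6)|u|^2$ term ($B_5,B_6>0$, $\Lambda\ge 0$) absorbs them once $E$ is chosen large relative to $e_{\rm red}$. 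Carefully: I would split into $|\omega-\tfrac{a}{2Mr_+}m|^2\ge c(\omega^2+m^2)$ versus the complementary regime, use the first factor's coercivity in the former and the redshift $\Lambda+1$ control in the latter, arriving at $\gtrsim E(\omega-\tfrac{a}{2Mr_+}m)^2|u|^2+e_{\rm red}\Lambda|u|^2$.

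The main obstacle I anticipate is not any single limit computation but the bookkeeping of the precise horizon limit of $\mathrm{Q}^{y_n}$ and $\mathrm{Q}^z_{\rm red}$ — specifically verifying that $y_n(r^*)$ has a genuine finite limit (rather than decaying, which would kill the $(\omega-\tfrac{a}{2Mr_+}m)^2$ term and require leaning entirely on $e_{\rm red}$) and pinning down that the $z(V-V(r_+))$ term really produces a \emph{positive} $\Lambda$-coefficient with no unfavourable sign from the $V_1$ or $V_\gamma$ pieces. This is where the constraint $r_1<r_{\rm pot}$ and the bound~\eqref{forusewithredshift} enter, and where one must be careful that the ``arbitrary smooth extension'' of the currents to $r_0\le r<r_+$ (allowed by Remark~\ref{explicitchoices}) is irrelevant because everything here is evaluated exactly at $r=r_+$.
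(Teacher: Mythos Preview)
Your plan for the identity~\eqref{toprovethishorizoncoercivity} is correct and matches the paper's approach exactly: evaluate each current at $r_+$ using $u'+i(\omega-\tfrac{a}{2Mr_+}m)u\to 0$, noting that the Killing currents give the $E_n(\omega-\alpha_nm)(\omega-\tfrac{a}{2Mr_+}m)|u|^2$ term, the redshift current gives the positive $\Lambda$ coefficient, and everything else lands in $B_2,B_3,B_4$. One small slip: $y_n(r_+)$ is \emph{negative}, not nonnegative, since $y_n\equiv 0$ on $[r_{n,1},r_{n,2}]$ and $y_n'\ge 0$ there (see~\eqref{ypropertiesone}--\eqref{ypropertiestwo}); but this is harmless because its contribution $2y_n(r_+)(\omega-\tfrac{a}{2Mr_+}m)^2|u|^2$ is absorbed into $B_2,B_3,B_4$, which carry no sign.

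For~\eqref{toprovethishorizoncoercivityactual} your argument has a gap. You establish only \emph{nonnegativity} $(\omega-\alpha_nm)(\omega-\tfrac{a}{2Mr_+}m)\ge 0$, but to produce $\gtrsim E(\omega-\tfrac{a}{2Mr_+}m)^2$ on the right you need the \emph{quantitative} lower bound
\[
(\omega-\alpha_nm)(\omega-\tfrac{a}{2Mr_+}m)\ \gtrsim\ (\omega-\tfrac{a}{2Mr_+}m)^2+\omega^2+m^2,
\]
which is exactly what the paper writes down. This follows because on $\mathcal F_n$ the ratio $\gamma=\omega/(am)$ (or $m=0$) is bounded away from the whole interval $[0,(2Mr_+)^{-1}]$, hence from \emph{both} $\alpha_n/a$ and $(2Mr_+)^{-1}$; then $|\gamma-\alpha_n/a|/|\gamma-(2Mr_+)^{-1}|$ is bounded above and below on $I_n$, giving $(\omega-\alpha_nm)(\omega-\tfrac{a}{2Mr_+}m)\sim(\omega-\tfrac{a}{2Mr_+}m)^2\gtrsim\omega^2+m^2$. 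With this in hand, for $E$ large one absorbs $B_2\omega m+B_3\omega^2+B_4m^2$ into $cE(\omega^2+m^2)$ and retains $cE(\omega-\tfrac{a}{2Mr_+}m)^2$. Your case split on $|\omega-\tfrac{a}{2Mr_+}m|^2\ge c(\omega^2+m^2)$ is unnecessary: that inequality holds \emph{always} on $\mathcal F_n$, so the ``complementary regime'' is empty. The redshift term $e_{\rm red}(B_5\Lambda+B_6)$ is not used to absorb $B_2,B_3,B_4$; it simply furnishes the $e_{\rm red}\Lambda|u|^2$ on the right, and the dependence of $E$ on $e_{\rm red}$ enters only because $B_2,B_3,B_4$ themselves depend on $e_{\rm red}$ via the $e_{\rm red}\mathrm{Q}^z_{\rm red}$ contribution.
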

\begin{proof}
The identity~\eqref{toprovethishorizoncoercivity} follows from the definition of the currents
together with the boundary behaviour of $u$ as $r^*\to -\infty$.
On the other hand, recall~\eqref{alphanbound}.
Note that for all nonsuperradiant ranges we have that either $m=0$ or
$\omega/am$ is
bounded away from $0$ and $\frac{a}{2Mr_+}$.
Thus we
have a uniform lower bound
\[
(\omega -\alpha_n m )(\omega-\frac{a}{2Mr_+}m) \gtrsim (\omega-\frac{a}{2Mr_+}m) ^2 + \omega^2 +m^2 
\]
on the distance of the non-superradiant intervals to the threshold.
We may now absorb the terms multiplying $B_2$, $B_3$ and $B_4$ for large enough $E$. 
This yields~\eqref{toprovethishorizoncoercivityactual}.
\end{proof}

Now note that we may re-express the boundary contribution of currents
on $r=\hat{r}$
with respect to the volume form $ r^2 dt^* d\phi \sin\theta d\theta $
and a rescaled normal $\hat{\rm n}_{\rm r=\hat{r}}$ which corresponds to our chosen
normal at the horizon. The coefficients of the quadratic expression
\[
J^{{\rm main},n}  [\psi ]\cdot   \hat{\rm n}_{\rm r=\hat{r}} 
\] 
are then smooth as spacetime functions.
Let us note that when expressed in the coordinate
basis $\partial_r$, $\partial_{t^*}$, $\sin^{-1}\theta \partial_{\phi^*}$, and $\partial_\theta$
of Kerr star coordinates, 
it follows  that 
\begin{eqnarray*}
J^{{\rm main},n}  [\psi ]  \cdot \hat{\rm n}_{r=r_0}   & \ge& 
 E_n{\rm Re}\left((\partial_{t^*}\psi +\alpha_n\partial_{\phi^*}\psi)\overline{(\partial_{t^*}\psi+\frac{a}{2M r_+} \partial_{\phi^*}\psi)}\right)  + B_2{\rm Re} \left( \partial_{t^*}\psi\overline{\partial_{\phi^*} \psi}\right) +B_3| \partial_{t^*} \psi|^2
+B_4|\partial_{t^*} \psi|^2 \\
&&\qquad- C(r_+-r_0) \left(  
|\partial_{t^*}\psi|^2+ \sin^{-2}\theta |\partial_{\phi*}\psi|^2 + |\partial_{\theta}\psi|^2 \right)-C (r_+-r_0)^2 | \partial_r \psi|^2- C |\psi|^2 \\
&&\qquad + ce_{\rm red} \left(  
|\partial_{t^*}\psi|^2+ \sin^{-2}\theta |\partial_{\phi*}\psi|^2 + |\partial_{\theta}\psi|^2 \right)
+ce_{\rm red}(r_+-r_0)|\partial_r \psi|^2.
\end{eqnarray*}

Now from Plancherel and the nonsuperradiant condition, we obtain that for $E$ sufficiently
large
\begin{eqnarray*}
\int_{\mathcal{S}} 
J^{\rm main} [\psi_{n,{\bf k}} ] \cdot \hat{\rm n}_{r=r_0} r^2 dt^* d\phi \sin\theta d\theta 
&\ge&\int_{\mathcal{S}}\Big( cE\left( | \partial_{t^*} \psi_{n,{\bf k}}|^2+ | \partial_{\phi^*} \psi_{n,{\bf k}}|^2 \right)
\\ 
&&\qquad + ce_{\rm red}   \left( |\partial_{t^*}\psi_{n,{\bf k}}|^2+ \sin^{-2}\theta |\partial_{\phi*}\psi_{n,{\bf k}}|^2 + |\partial_{\theta}\psi_{n,{\bf k}}|^2 \right) \\
&&\qquad 
+ce_{\rm red}(r_+ - r_0) |\partial_r\psi_{n,{\bf k}}|^2 -C|\psi_{n,{\bf k}}|^2\Big)  \, r^2 dt^* d\phi \sin\theta \,d\theta .
\end{eqnarray*}

This yields~\eqref{globalboundpos} for $r_0<r_+$ sufficiently close to $r_0$,
upon rewriting it so as to
replace $\hat{\rm n}_{r=r_0}$ with ${\rm n}_{r=r_0}$ 
 and the volume form with the induced
volume form. The second term on the right hand side of~\eqref{globalboundpos} arises from bounding the integral
of  $|\psi_{n,{\bf k}}|^2$ by the flux term on $\mathcal{S}$.

\bibliographystyle{DHRalpha}
\bibliography{notelargearefs}

\begin{thebibliography}{SRTdC23}

\bibitem[AAG20]{nonlinearextreme}
Y.~Angelopoulos, S.~Aretakis, and D.~Gajic.
\newblock Nonlinear scalar perturbations of extremal {R}eissner-{N}ordstr\"{o}m
  spacetimes.
\newblock {\em Ann. PDE}, 6(2):Paper No. 12, 124, 2020.

\bibitem[Are15]{Aretakis2}
S.~Aretakis.
\newblock Horizon instability of extremal black holes.
\newblock {\em Adv. Theor. Math. Phys.}, 19(3):507--530, 2015.

\bibitem[Car68]{carter1968hamilton}
B.~Carter.
\newblock Hamilton-{J}acobi and {S}chr{\"o}dinger separable solutions of
  {E}instein's equations.
\newblock {\em Commun. Math. Phys.}, 10(4):280--310, 1968.

\bibitem[Cha75]{Chandraschw}
S.~Chandrasekhar.
\newblock On the equations governing the perturbations of the {S}chwarzschild
  black hole.
\newblock {\em Proc. Roy. Soc. A}, 343(1634):289--298, 1975.

\bibitem[Chr86]{christonull}
D.~Christodoulou.
\newblock Global solutions of nonlinear hyperbolic equations for small initial
  data.
\newblock {\em Communications on Pure and Applied Mathematics}, 39(2):267--282,
  1986.

\bibitem[Chr00]{christodoulou2016action}
D.~Christodoulou.
\newblock {\em The action principle and partial differential equations}, volume
  146 of {\em Annals of Mathematics Studies}.
\newblock Princeton University Press, Princeton, NJ, 2000.

\bibitem[Civ15]{civinthesis}
D.~Civin.
\newblock {\em Stability of charged rotating black holes for linear scalar
  perturbations}.
\newblock PhD thesis, University of Cambridge, Cambridge, 2015.

\bibitem[DHR]{holzstabofschw}
M.~Dafermos, G.~Holzegel, and I.~Rodnianski.
\newblock The linear stability of the {S}chwarzschild solution to gravitational
  perturbations.
\newblock {\em Acta Math.}, 222(1):1--214, 2019.

\bibitem[DHRT21]{dhrtplus}
M.~Dafermos, G.~Holzegel, I.~Rodnianski, and M.~Taylor.
\newblock The non-linear stability of the {S}chwarzschild family of black
  holes.
\newblock {\em arXiv:2104.08222, preprint}, 2021.

\bibitem[DHRT22]{DHRT22}
M.~Dafermos, G.~Holzegel, I.~Rodnianski, and M.~Taylor.
\newblock Quasilinear wave equations on asymptotically flat spacetimes with
  applications to {K}err black holes.
\newblock {\em arXiv:2212.14093, preprint}, 2022.

\bibitem[DL17]{DafLuk1}
M.~Dafermos and J.~Luk.
\newblock {The interior of dynamical vacuum black holes I: The $C^0$-stability
  of the Kerr Cauchy horizon}.
\newblock {\em arXiv:1710.01722, preprint}, 2017.

\bibitem[DR09]{DafRodnew}
M.~Dafermos and I.~Rodnianski.
\newblock A new physical-space approach to decay for the wave equation with
  applications to black hole spacetimes.
\newblock In P.~Exner, editor, {\em XVIth International Congress on
  Mathematical Physics}, pages 421--433. World Scientific, London, 2009.

\bibitem[DR11]{DafRodKerr}
M.~Dafermos and I.~Rodnianski.
\newblock {A proof of the uniform boundedness of solutions to the wave equation
  on slowly rotating Kerr backgrounds}.
\newblock {\em {Invent. Math.}}, {185 (3)}:467--559, 2011.

\bibitem[DR13]{mihalisnotes}
M.~Dafermos and I.~Rodnianski.
\newblock {Lectures on black holes and linear waves}.
\newblock In {\em Evolution equations, Clay Mathematics Proceedings, Vol. 17},
  pages 97--205. Amer. Math. Soc., Providence, RI, 2013.

\bibitem[DRSR16]{partiii}
M.~Dafermos, I.~Rodnianski, and Y.~Shlapentokh-Rothman.
\newblock Decay for solutions of the wave equation on {K}err exterior
  spacetimes {III}: {T}he full subextremal case {$|a|<M$}.
\newblock {\em Ann. of Math. (2)}, 183(3):787--913, 2016.

\bibitem[DRSR18]{kerrscattering}
M.~Dafermos, I.~Rodnianski, and Y.~Shlapentokh-Rothman.
\newblock A scattering theory for the wave equation on {K}err black hole
  exteriors.
\newblock {\em Ann. Sci. \'{E}c. Norm. Sup\'{e}r. (4)}, 51(2):371--486, 2018.

\bibitem[Fan21]{fang}
A.~J. Fang.
\newblock Nonlinear stability of the slowly-rotating {K}err-de {S}itter family.
\newblock {\em arXiv:2112.07183, preprint}, 2021.

\bibitem[Gaj23]{gajicazimuthal}
D.~Gajic.
\newblock Azimuthal instabilities on extremal {K}err.
\newblock {\em arXiv:2302.06636, preprint}, 2023.

\bibitem[Gio21]{giorgi2021carter}
E.~Giorgi.
\newblock The {C}arter tensor and the physical-space analysis in perturbations
  of {K}err-{N}ewman spacetime.
\newblock {\em arXiv:2105.14379, preprint}, 2021.

\bibitem[GKS22]{giorgietal}
E.~Giorgi, S.~Klainerman, and J.~Szeftel.
\newblock Wave equations estimates and the nonlinear stability of slowly
  rotating {K}err black holes.
\newblock {\em arXiv:2205.14808, preprint}, 2022.

\bibitem[HK23a]{holzegel2023wave}
G.~Holzegel and C.~Kauffman.
\newblock The wave equation on subextremal {K}err spacetimes with small
  non-decaying first order terms.
\newblock {\em arXiv:2302.06387, preprint}, 2023.

\bibitem[HK23b]{holzegel2020note}
G.~Holzegel and C.~Kauffman.
\newblock The wave equation on the {S}chwarzschild spacetime with small
  non-decaying first-order terms.
\newblock {\em Journal of Hyperbolic Differential Equations}, 20(04):825--834,
  2023.

\bibitem[HV16]{hintzvasyglobal}
P.~Hintz and A.~Vasy.
\newblock Global analysis of quasilinear wave equations on asymptotically
  {K}err--de {S}itter spaces.
\newblock {\em Int. Math. Res. Not. IMRN}, (17):5355--5426, 2016.

\bibitem[Ker63]{Kerr}
R.~Kerr.
\newblock {Gravitational field of a spinning mass as an example of
  algebraically special metrics}.
\newblock {\em Phys.~Rev.~Lett.}, 11 (5):237--238, 1963.

\bibitem[Kla86]{KlNull}
S.~Klainerman.
\newblock The null condition and global existence to nonlinear wave equations.
\newblock In {\em Nonlinear systems of partial differential equations in
  applied mathematics, {P}art 1 ({S}anta {F}e, {N}.{M}., 1984)}, volume~23 of
  {\em Lectures in Appl. Math.}, pages 293--326. Amer. Math. Soc., Providence,
  RI, 1986.

\bibitem[KU24]{KU23}
C.~Kehle and R.~Unger.
\newblock Extremal black hole formation as a critical phenomenon.
\newblock {\em arXiv:2402.10190, preprint}, 2024.

\bibitem[LT16]{lindblad2016global}
H.~Lindblad and M.~Tohaneanu.
\newblock Global existence for quasilinear wave equations close to
  {S}chwarzschild.
\newblock {\em arXiv:1610.00674, preprint}, 2016.

\bibitem[LT20]{lindtoh}
H.~Lindblad and M.~Tohaneanu.
\newblock A local energy estimate for wave equations on metrics asymptotically
  close to {K}err.
\newblock {\em Ann. Henri Poincar\'{e}}, 21(11):3659--3726, 2020.

\bibitem[LT22]{lindblad2022weak}
H.~Lindblad and M.~Tohaneanu.
\newblock The weak null condition on {K}err backgrounds.
\newblock {\em arXiv:2210.10149, preprint}, 2022.

\bibitem[Luk13]{MR3082240}
J.~Luk.
\newblock The null condition and global existence for nonlinear wave equations
  on slowly rotating {K}err spacetimes.
\newblock {\em J. Eur. Math. Soc. (JEMS)}, 15(5):1629--1700, 2013.

\bibitem[Mav21]{mavrogiannis}
G.~Mavrogiannis.
\newblock Quasilinear wave equations on {S}chwarzschild-de {S}itter.
\newblock {\em arXiv:2111.09495, preprint}, 2021.

\bibitem[Mav23]{mavrogiannis2023morawetz}
G.~Mavrogiannis.
\newblock {M}orawetz estimates without relative degeneration and exponential
  decay on {S}chwarzschild--de {S}itter spacetimes.
\newblock {\em Annales Henri Poincar{\'e}}, 24(9):3113--3152, 2023.

\bibitem[Mav24]{mavrogianniskerrrelative}
G.~Mavrogiannis.
\newblock Relatively non-degenerate estimates on {K}err--de {S}itter.
\newblock {\em preprint, in preparation}, 2024.

\bibitem[MS24]{ma2024energymorawetzestimateswaveequation}
S.~Ma and J.~Szeftel.
\newblock Energy-{M}orawetz estimates for the wave equation in perturbations of
  {K}err.
\newblock {\em arXiv:2410.02341, preprint}, 2024.

\bibitem[Pas19]{Pasqualotto:2017rkh}
F.~Pasqualotto.
\newblock Nonlinear stability for the {M}axwell-{B}orn-{I}nfeld system on a
  {S}chwarzschild background.
\newblock {\em Ann. PDE}, 5(2):Paper No. 19, 172, 2019.

\bibitem[SRTdC20]{RitaShlap}
Y.~Shlapentokh-Rothman and R.~Teixeira~da Costa.
\newblock {Boundedness and decay for the Teukolsky equation on Kerr in the full
  subextremal range $|a|<M$: frequency space analysis}.
\newblock {\em arXiv:2007.07211, preprint}, 2020.

\bibitem[SRTdC23]{RitaShlap2}
Y.~Shlapentokh-Rothman and R.~Teixeira~da Costa.
\newblock {Boundedness and decay for the Teukolsky equation on Kerr in the full
  subextremal range $|a|<M$: physical space analysis}.
\newblock {\em arXiv:2302.08916, preprint}, 2023.

\bibitem[Ste93]{stein1993harmonic}
E.~M. Stein.
\newblock {\em Harmonic analysis: real-variable methods, orthogonality, and
  oscillatory integrals}, volume~43 of {\em Princeton Mathematical Series}.
\newblock Princeton University Press, Princeton, NJ, 1993.
\newblock With the assistance of Timothy S. Murphy, Monographs in Harmonic
  Analysis, III.

\bibitem[TdC20]{ritamode}
R.~Teixeira~da Costa.
\newblock Mode stability for the {T}eukolsky equation on extremal and
  subextremal {K}err spacetimes.
\newblock {\em Communications in Mathematical Physics}, 378(1):705--781, 2020.

\end{thebibliography}

\end{document}